\documentclass[11pt,a4paper]{article}

\usepackage{fontspec}
\usepackage[titletoc,title,header]{appendix}

\usepackage{amsmath}
\usepackage{amssymb}
\usepackage{amsthm}
\usepackage{mathtools}
\usepackage{xparse}    

\usepackage[margin=1in]{geometry}

\usepackage{graphicx}
\usepackage{xcolor}

\usepackage{hyperref}
\hypersetup{
  colorlinks=true,
  linkcolor=blue,
  citecolor=blue,
  urlcolor=blue,
  pdfauthor={Yoshitsugu Sekine},
  pdftitle={The Buchholz Algebra from the Universal Resolvent Algebra}
}

\usepackage[numbers]{natbib}
\theoremstyle{plain}
\newtheorem{thm}{Theorem}[section]
\AtEndEnvironment{thm}{\qed}
\newtheorem{lem}[thm]{Lemma}
\AtEndEnvironment{lem}{\qed}
\newtheorem{prop}[thm]{Proposition}
\AtEndEnvironment{prop}{\qed}
\newtheorem{cor}[thm]{Corollary}
\AtEndEnvironment{cor}{\qed}

\AtEndEnvironment{conj}{\qed}

\AtEndEnvironment{fact}{\qed}

\theoremstyle{definition}
\newtheorem{defn}[thm]{Definition}
\AtEndEnvironment{defn}{\qed}
\newtheorem{ex}[thm]{Example}
\AtEndEnvironment{ex}{\qed}

\theoremstyle{remark}
\newtheorem{rem}[thm]{Remark}
\AtEndEnvironment{rem}{\qed}

\makeatletter
\providecommand*{\dashv}{\mathrel{\mathpalette\@Dashv\vDash}}
\newcommand*{\@dashv}[2]{\reflectbox{$\m@th#1#2$}}
\makeatother
\newcommand{\abscard}[1]{\abs{#1}} 
\newcommand{\abs}[1]{\left| #1 \right|} 
\NewDocumentCommand{\weaknorm}{O{\dbk} m}{#1{#2}} 
\newcommand{\norm}[1]{\left\Vert #1 \right\Vert} 

\newcommand{\bkt}[2]{\left\langle #1,\,#2 \right\rangle} 
\newcommand{\rbkt}[2]{\left( #1,\,#2 \right)} 
\newcommand{\isomto}{\mathrel{\rightarrowtail\kern-1.9ex\twoheadrightarrow}} 
\newcommand{\monoto}{\rightarrowtail} 
\newcommand{\cbk}[1]{\left\{ #1 \right\}} 
\newcommand{\dbk}[1]{\left\langle #1 \right\rangle} 
\newcommand{\rbk}[1]{\left( #1 \right)} 
\newcommand{\sqbk}[1]{\left[ #1 \right]} 
\newcommand{\fun}[2]{#1 \rbk{#2}} 
\newcommand{\sqfun}[2]{#1 \sqbk{#2}} 
\newcommand{\closedinterval}[2]{\sqbk{#1,\,#2}} 
\newcommand{\rightopeninterval}[2]{\left[#1, \, #2 \right)} 
\newcommand{\unitclosedinterval}{\closedinterval{0}{1}} 
\newcommand{\commutator}[2]{\sqbk{#1,\,#2}} 

\NewDocumentCommand{\imunit}{O{\mathsf{i}}}{#1} 
\NewDocumentCommand{\placeholder}{O{\bullet}}{#1} 
\NewDocumentCommand{\trace}{O{\operatorname{Tr}}}{#1} 
\newcommand{\Ker}{\operatorname{Ker}} 
\newcommand{\bigmiddleslash}[2]{\left. #1 \middle/ #2 \right.} 
\newcommand{\clos}[1]{\overline{#1}} 
\newcommand{\cmpconj}[1]{\overline{#1}} 
\newcommand{\diag}{\operatorname{diag}} 
\newcommand{\dom}{\operatorname{dom}} 
\newcommand{\dual}[1]{#1^{\ast}} 
\newcommand{\eqcsq}[1]{\sqbk{#1}} 
\newcommand{\idone}{1} 
\newcommand{\id}{\mathrm{id}} 
\newcommand{\napiernum}{\mathsf{e}} 
\newcommand{\net}[2]{\rbk{#1}_{#2}} 
\newcommand{\opchern}[1]{\operatorname{ch}} 
\newcommand{\opimag}{\operatorname{Im}} 
\newcommand{\setSymbolDownLeft}[2]{{\vphantom{#2}}_{#1}{#2}} 
\newcommand{\setSymbolUpLeft}[2]{{\vphantom{#2}}^{#1}{#2}} 
\newcommand{\supp}{\operatorname{supp}} 
\NewDocumentCommand{\agvariety}{O{\mathcal}}{#1} 
\NewDocumentCommand{\cmdrel}{O{\omega}}{#1} 
\NewDocumentCommand{\dfsp}{O{A}}{#1} 
\NewDocumentCommand{\eqcpointed}{O{\eqcsq} m}{#1{#2}_{\ast}} 
\NewDocumentCommand{\fnheaviside}{O{H}}{#1} 
\NewDocumentCommand{\fthol}{O{\mathcal{O}}}{#1} 
\NewDocumentCommand{\ftmero}{O{\mathcal{M}}}{#1} 
\NewDocumentCommand{\grcentralizer}{O{Z}}{#1} 
\NewDocumentCommand{\grmetform}{O{2} m m}{\grmet[#1] \! \rbkt{#2}{#3}} 
\NewDocumentCommand{\grmet}{O{2}}{\setSymbolDownLeft{#1}{g}} 
\NewDocumentCommand{\grnormalizer}{O{N}}{#1} 
\NewDocumentCommand{\gropasym}{O{A}}{#1} 
\NewDocumentCommand{\gropsym}{O{S}}{#1} 
\NewDocumentCommand{\grpermorderedpair}{O{\mathcal{P}}}{#1} 
\NewDocumentCommand{\grsym}{O{\mathfrak{S}} m}{#1_{#2}} 
\NewDocumentCommand{\gtbase}{O{\mathcal}}{#1} 
\NewDocumentCommand{\gtfilter}{O{\mathcal}}{#1} 
\NewDocumentCommand{\gtfmlclosed}{O{\mathcal}}{#1} 
\NewDocumentCommand{\gtfmlopen}{O{\mathcal}}{#1} 
\NewDocumentCommand{\gtopenball}{O{U}}{#1} 
\NewDocumentCommand{\gtopencover}{O{\mathcal}}{#1} 
\NewDocumentCommand{\gtopennbh}{O{\mathcal}}{#1} 
\NewDocumentCommand{\gtpreopencover}{O{\mathcal}}{#1} 
\NewDocumentCommand{\gtsubbase}{O{\mathcal}}{#1} 
\NewDocumentCommand{\gtvicinity}{O{\mathcal}}{#1} 
\NewDocumentCommand{\lasp}{O{\mathcal}}{#1} 
\NewDocumentCommand{\latprightrbk}{O{\top} m}{\rbk{#2}^{#1}} 
\NewDocumentCommand{\latpright}{O{\top} m}{#2^{#1}} 
\NewDocumentCommand{\latp}{O{t} m}{\setSymbolUpLeft{#1}{#2}} 
\NewDocumentCommand{\lpdistribution}{O{\mu} m}{#2_{\ast,#1}} 
\NewDocumentCommand{\lpmollifier}{O{\rho}}{#1} 
\NewDocumentCommand{\lpofpositive}{O{\chi}}{#1} 
\NewDocumentCommand{\manliederiv}{O{L}}{#1} 
\NewDocumentCommand{\mansmoothnbh}{O{\mathcal}}{#1} 
\NewDocumentCommand{\mblfmldsysgenerated}{O{d} m}{\fun{#1}{#2}} 
\NewDocumentCommand{\mblfmlgenerated}{O{\sigma} m}{\fun{#1}{#2}} 
\NewDocumentCommand{\oacorrfn}{O{\Gamma}}{#1} 
\NewDocumentCommand{\oagnsvector}{O{\Omega}}{#1} 
\NewDocumentCommand{\oaideal}{O{\mathcal}}{#1} 
\NewDocumentCommand{\oanumberoperator}{O{A}}{#1} 
\NewDocumentCommand{\oaposcone}{O{\mathcal{P}}}{#1} 
\NewDocumentCommand{\oapressure}{O{P}}{#1} 
\NewDocumentCommand{\oarepn}{O{\pi}}{#1} 
\NewDocumentCommand{\oaspkmsstate}{O{\txtkms}}{#1} 
\NewDocumentCommand{\oaspnormalstate}{O{N}}{#1} 
\NewDocumentCommand{\oasppurestate}{O{P}}{#1} 
\NewDocumentCommand{\oaspstate}{O{E}}{#1} 
\NewDocumentCommand{\oastatevector}{O{\Omega}}{#1} 
\NewDocumentCommand{\oastate}{O{\omega}}{#1} 
\NewDocumentCommand{\opdilation}{O{\delta}}{#1} 
\NewDocumentCommand{\opdmat}{O{\rho}}{#1} 
\NewDocumentCommand{\opfockan}{O{a}}{#1} 
\NewDocumentCommand{\opfockcran}{O{a}}{#1^{\#}} 
\NewDocumentCommand{\opfockcrdagger}{O{a}}{#1^{\dagger}} 
\NewDocumentCommand{\opfockcr}{O{a}}{#1^{\ast}} 
\NewDocumentCommand{\opfocknumber}{O{N}}{#1} 
\NewDocumentCommand{\opfocksegalconj}{O{\pi}}{#1} 
\NewDocumentCommand{\opfocksegal}{O{\phi}}{#1} 
\NewDocumentCommand{\opspecmeas}{O{E}}{#1} 
\NewDocumentCommand{\opspec}{O{} m}{\fun{\sigma_{#1}}{#2}} 
\NewDocumentCommand{\optransl}{O{\tau}}{#1} 
\NewDocumentCommand{\opvarspec}{O{} m}{\operatorname{spec}_{#1} #2} 
\NewDocumentCommand{\physaction}{O{\mathcal{A}}}{#1} 
\NewDocumentCommand{\physcharge}{O{e}}{\mathrm{#1}} 
\NewDocumentCommand{\physcplconst}{O{\mathsf{g}}}{#1} 
\NewDocumentCommand{\physelectrostaticcapasity}{O{\mathrm{Cap}}}{#1} 
\NewDocumentCommand{\physenergy}{O{E}}{#1} 
\NewDocumentCommand{\physgse}{O{E}}{#1_{0}} 
\NewDocumentCommand{\physham}{O{H}}{#1} 
\NewDocumentCommand{\physlagdensity}{O{\mathcal{L}}}{#1} 
\NewDocumentCommand{\physlag}{O{L}}{#1} 
\NewDocumentCommand{\physliouvilean}{O{L}}{#1} 
\NewDocumentCommand{\physmass}{O{m}}{#1} 
\NewDocumentCommand{\prbbrownmv}{O{B}}{#1} 
\NewDocumentCommand{\prbcharfun}{O{\chi}}{#1} 
\NewDocumentCommand{\prbdist}{O{\mathcal{P}}}{#1} 
\NewDocumentCommand{\prbgaussianmeasure}{O{\msrcal{N}}}{#1} 
\NewDocumentCommand{\prbnormaldist}{O{N}}{#1} 
\NewDocumentCommand{\prbpoissonprocess}{O{N}}{#1} 
\NewDocumentCommand{\prbprocess}{O{X}}{#1} 
\NewDocumentCommand{\prbqspace}{O{\mathcal{Q}}}{#1} 
\NewDocumentCommand{\prbspsample}{O{\Omega}}{#1} 
\NewDocumentCommand{\prbspstate}{O{\mathsf{S}}}{#1} 
\NewDocumentCommand{\psh}{O{\mathfrak}}{#1} 
\NewDocumentCommand{\qtquantumchannel}{O{\mathcal{L}}}{#1} 
\NewDocumentCommand{\repn}{O{\pi}}{#1} 
\NewDocumentCommand{\schattencls}{O{\mathbb{K}}}{#1} 
\NewDocumentCommand{\setfmlcylinder}{O{\mathcal{C}}}{#1} 
\NewDocumentCommand{\setfml}{O{\mathcal}}{#1} 
\NewDocumentCommand{\setindex}{O{\mathcal} m}{#1{#2}} 
\NewDocumentCommand{\setlattice}{O{\Gamma}}{#1} 
\NewDocumentCommand{\setspecial}{O{\mathcal} m}{#1{#2}} 
\NewDocumentCommand{\shdiffform}{O{\sheaf{A}}}{#1} 
\NewDocumentCommand{\sheaf}{O{\mathfrak}}{#1} 
\NewDocumentCommand{\smchemicalpotential}{O{\mu}}{#1} 
\NewDocumentCommand{\smenergydensity}{O{\varrho}}{#1} 
\NewDocumentCommand{\smfluctuationwithdmat}{O{\beta} m}{\smuncertaintywithdmat[#1]{#2}^2} 
\NewDocumentCommand{\sminvtemperature}{O{\beta}}{#1} 
\NewDocumentCommand{\smlocaldensityoperator}{O{\rho}}{#1} 
\NewDocumentCommand{\smmicrocanonicalstate}{O{\beta} m}{\physmean{#2}_{#1}} 
\NewDocumentCommand{\smnumberdensity}{O{\rho}}{#1} 
\NewDocumentCommand{\smooth}{O{\mathcal{E}}}{#1} 
\NewDocumentCommand{\smparticlenumber}{O{N}}{#1} 
\NewDocumentCommand{\smpressure}{O{p}}{#1} 
\NewDocumentCommand{\smspecificfreeenergy}{O{\bar{f}}}{#1} 
\NewDocumentCommand{\smthermalvac}{O{\beta}}{\Omega_{#1}} 
\NewDocumentCommand{\smuncertaintywithdmat}{O{\beta} m}{\rbk{\triangle #2}_{#1}} 
\NewDocumentCommand{\sphilbfrak}{O{\mathfrak}}{#1} 
\NewDocumentCommand{\sphilb}{O{\mathcal}}{#1} 
\NewDocumentCommand{\splowerhalf}{O{\mathbb{H}}}{#1_{\txtneg}} 
\NewDocumentCommand{\spupperhalf}{O{\mathbb{H}}}{#1_{\txtnonneg}} 
\NewDocumentCommand{\topmetric}{O{d}}{#1} 
\NewDocumentCommand{\vaoutnormal}{O{\widehat}}{#1} 
\newcommand{\category}[1]{\mathop{\mathsf{#1}}} 

\newcommand{\catpresheaf}[1]{\category{PSh}} 
\newcommand{\faadjrbk}[1]{\rbk{#1}^{\ast}} 
\newcommand{\faadj}[1]{#1^{\ast}} 
\newcommand{\fldcmp}{\fld{C}} 
\newcommand{\fldrealnn}{\fld{R}_{\geq 0}} 
\newcommand{\fldreal}{\fld{R}} 
\newcommand{\fld}[1]{\mathbb{#1}} 
\newcommand{\fndef}[1]{\boldsymbol{1}_{#1}} 
\newcommand{\fnrestr}[2]{\left. #1 \right|_{#2}} 
\newcommand{\gtclos}[1]{\overline{#1}} 
\newcommand{\larank}{\operatorname{rank}} 
\newcommand{\lpseq}{\ell} 
\newcommand{\lp}{L} 
\newcommand{\mblfmlfrak}[1]{\mathfrak{#1}} 
\newcommand{\monnat}{\mathbb{N}} 
\newcommand{\msrcal}[1]{\mathcal{#1}} 
\newcommand{\msrprb}{\mathrm{Pr}} 

\newcommand{\oacenter}{Z} 
\newcommand{\oacommutant}[1]{#1^{\prime}} 
\newcommand{\oacstar}{C^{\ast}} 
\newcommand{\oadoublecommutant}[1]{#1^{\prime \prime}} 
\newcommand{\oaresolventalgebra}{\oa{R}} 
\newcommand{\oaresolvent}{R} 
\newcommand{\oa}[1]{\mathcal{#1}} 
\newcommand{\opdmsr}[1]{\mathop{d #1}} 
\newcommand{\opspbddlin}[1]{\fun{\mathbb{B}}{#1}} 
\newcommand{\opspecint}[1]{\mathcal{E}} 
\newcommand{\physmean}[1]{\dbk{#1}} 
\newcommand{\physpoissonbkt}[1]{\cbk{#1}} 
\newcommand{\prbexp}{\mathbb{E}} 
\newcommand{\ringratint}{\mathbb{Z}} 
\newcommand{\semigrposint}{\monnat_1} 

\newcommand{\seq}[2]{\if\relax\detokenize{#1}\relax \rbk{#1} \else \rbk{#1}_{#2} \fi} 
\newcommand{\setisomorphism}[1]{\operatorname{Iso}} 
\newcommand{\setone}[1]{\cbk{#1}}
\newcommand{\setpowfin}[1]{\fun{\mathrm{Pow}_{\txtfin}}{#1}} 
\newcommand{\setquot}[2]{\bigmiddleslash{#1}{#2}} 
\newcommand{\set}[2]{\left\{#1 \, \middle| \, #2\right\}}
\newcommand{\spclosedlinspan}{\mathop{\clos{\splinspan}}} 
\newcommand{\spfockgamma}{\Gamma} 
\newcommand{\splinspan}{\operatorname{span}} 
\newcommand{\topdist}{\operatorname{dist}} 
\newcommand{\txtI}{\mathrm{I}} 
\newcommand{\txtbsn}{\mathrm{b}} 
\newcommand{\txtclassical}{\mathrm{cl}} 
\newcommand{\txtfin}{\mathrm{fin}} 
\newcommand{\txtfock}{\mathrm{F}} 
\newcommand{\txtkms}{\mathrm{KMS}} 
\newcommand{\txtloc}{\mathrm{loc}} 
\newcommand{\txtneg}{\mathrm{-}} 
\newcommand{\txtnonneg}{\mathrm{+}} 
\newcommand{\txtsym}{\mathrm{s}} 
\newcommand{\txtvac}{\mathrm{vac}} 

\title{The Buchholz Algebra from the Universal Resolvent Algebra\\\vspace{0.5em}{\large Particle Structure, Gibbs States, and Closed-Path Expansions without Fock Representations}}

\author{%
Yoshitsugu Sekine\\{\small\texttt{4429sekine@gmail.com}}%
}

\date{\today}

\begin{document}

\maketitle

\begin{abstract}
We construct the Buchholz algebra without a Fock representation, as the bounded inverse limit of intrinsic particle-cutoff quotients of the gauge-fixed resolvent algebra, with isometric particle-degree coordinates. Its finite matrix corners carry canonical traces, KMS states, and trace-norm product formulas, together with closed-path expansions that lay the groundwork for path-integral representations. The known Fock-space description and a proper quasilocal subalgebra arise as realizations.

\noindent\textbf{Keywords:} resolvent algebra, abstract Buchholz algebra, compact ideal, particle-coefficient coordinates, inverse limit, quasilocal algebra
\end{abstract}

\setcounter{tocdepth}{3}
\tableofcontents

\section{Introduction}\label{introduction}

The universal resolvent algebra separates the algebra of observables from any preferred Hilbert-space representation \cite[Definitions 3.1 and 3.4]{BuchholzGrundling002}. The enlarged algebra used for interacting bosons, however, is commonly described after the bosonic Fock representation has been selected; see, for example, the concrete Buchholz algebra in \cite[Definition 2.2]{DeuchertLampartLemm001}. This leaves a structural question: which properties of that concrete algebra can be extracted as representation-independent data and used to define the algebra before a representation is chosen?

The concrete Fock-space structure needed below is already known. The exact-particle algebra is identified with the finite sum of compact-on-\(m\)-particles blocks in \cite[Lemma 3.3]{DetlevBuchholz002}, the connecting map obtained by removing one identity factor is computed in \cite[Lemma 3.4]{DetlevBuchholz002}, and the bounded inverse-limit description is \cite[Theorem 3.5]{DetlevBuchholz002}. For particles on a countable graph, the corresponding statements are \cite[Lemma 2.3 and Proposition 2.4]{DeuchertLampartLemm001}; the failure of uniform quasilocal approximation is exhibited in \cite[Example 2.5]{DeuchertLampartLemm001}. These facts are not reproved here. Later work uses the same Fock-space algebra to study sectors and dynamics \cite{DetlevBuchholz003} and proper condensates \cite{DetlevBuchholz004,DetlevBuchholz005}. Our question is instead whether the cutoff system, its particle-degree coordinates, and its spatial embeddings can be recovered from representation-independent data in the universal resolvent algebra.

We answer this question by starting only with a separable infinite-dimensional complex Hilbert space \(\sphilbfrak{h}\) and the symplectic form given by the imaginary part of its inner product. Finite-dimensional resolvent subalgebras contain canonical minimal nonzero ideals. The particle-block condition is extracted from the properties verified in Proposition \ref{prop:paper-particle-block-independence}. It determines particle-cutoff ideals and quotients, whose bounded inverse limit is defined to be the abstract Buchholz algebra. The compression, coefficient-embedding, and overlap-product conditions, which are motivated by the finite-difference identity \eqref{eq:paper-newton-name} and extracted from the standard Fock representation, provide an additional coefficient realization: the particle-coefficient maps identify this inverse limit isometrically with the bounded particle-coefficient algebra. No Fock space or symmetric tensor power occurs in the resulting definition. The justification for adopting these conditions as the appropriate abstract conditions is that Proposition \ref{prop:paper-particle-block-independence}, Proposition \ref{prop:paper-coefficient-connecting-maps}, Theorem \ref{thm:paper-abstract-concrete-comparison}, and Theorem \ref{thm:paper-concrete-particle-coefficient-reconstruction-data} identify them with the established Fock system and supply the additional coefficient structure not contained in the inverse-limit description alone.

Path integrals are the method that we intend to use for formulating automorphism groups and equilibrium states of interacting bosons on this algebra. In the Fock representation, a discrete functional integral has already been used effectively for the Bose--Hubbard model: it gives volume-uniform local-number bounds from which equilibrium states on the concrete Buchholz algebra are obtained \cite{YoshitsuguSekine015}. Whether this mechanism can itself be generalized or abstracted is therefore an important task, and Section \ref{sec:intrinsic-finite-volume} is devoted to it. That section derives traces, compact heat kernels, trace-norm product formulas, and closed-path sums from finite-dimensional full matrix corners without a Fock representation and without imposing a graph. For the heat-kernel statements, polynomial corner growth and Hamiltonian superstability, a sector-uniform lower bound by a quadratic polynomial in the particle number with positive leading coefficient, give trace summability. For the split product formulas, the general analytic input is summability of the kinetic heat traces with the Gaussian particle-number weight supplied by potential superstability. This gives a majorant uniform over the approximating products even when the kinetic operators are unbounded. Polynomial corner growth, potential superstability, and a linear kinetic norm bound provide one finite-dimensional lattice criterion for this summability. Closed-path expansions require a separate absolute path-weight majorant when they are summed over all particle sectors. The arguments require these summability conclusions themselves. Superstability is a familiar sufficient condition, not a necessary one.

The comparison with the established bosonic Fock realization of the concrete Buchholz algebra takes the known exact-particle and inverse-limit formulas as input. Theorem \ref{thm:paper-abstract-concrete-comparison} identifies the abstract particle-cutoff quotient \(\oa{B}_n(\sphilbfrak{h})\) with the represented image of the gauge-fixed resolvent algebra specified in that theorem. Theorem \ref{thm:paper-concrete-particle-coefficient-reconstruction-data} identifies the degree-\(m\) particle coefficient with the algebra of compact operators on \(\sphilbfrak{h}^{\otimes_{\txtsym}m}\).

Spatial localization is additional structure rather than part of the representation-free construction. Definition \ref{def:paper-main-spatial-localization-data} formulates it as compatible injective embeddings of finite-mode block algebras into the full inverse limit, indexed by the finite subsets of a countable set \(\Gamma\). For \(\sphilbfrak{h}=\fun{\lpseq^{2}}{\Gamma}\), Proposition \ref{prop:concrete-spatial-localization-system} constructs such embeddings within the Fock realization and verifies their compatibility, their commutation for disjoint regions, and their density at every fixed particle cutoff. The construction uses only the standard orthonormal basis indexed by \(\Gamma\). Taking \(\Gamma\) to be the vertex set of a lattice graph prepares the application to lattice boson models, in particular to the Bose--Hubbard model treated on the concrete Buchholz algebra in \cite{YoshitsuguSekine015}. The norm-closed union of the local algebras is a quasilocal subalgebra of the full inverse limit. Theorem \ref{thm:paper-rank-one-tail-geometry} in the Appendix shows that this subalgebra is proper: for a unit vector \(f\) with infinite coordinate support, the element realized by the one-mode particle-number resolvent has distance \(1/2\) from the quasilocal subalgebra, although each of its fixed particle-cutoff coordinates is a norm limit of coordinates of local elements.

The paper is organized according to this logical boundary. Section \ref{sec:main-results} defines the objects needed to state the main results, including the additional index set used for spatial localization, and records their mathematical content before the technical constructions begin. Section \ref{sec:general-separable-theory} constructs the abstract algebra and its particle-coefficient coordinates for a general separable Hilbert space. Section \ref{sec:intrinsic-finite-volume} separates norm-level block results, matrix identities, bounded-functional evaluations, and trace-dependent equilibrium formulas before deriving closed-path sums from finite-dimensional corners. Sections \ref{sec:paper-bosonic-fock-representations}, \ref{sec:concrete-particle-coefficient-realization}, and \ref{sec:concrete-spatial-particle-cutoff-representations} then compare the abstract data with the bosonic Fock realization, taking the known exact-particle and inverse-limit formulas as input. The principal concrete identification is summarized in Theorem \ref{thm:paper-main-fock-realization}. The appendices collect the symmetry and Poisson refinements, Busby theory, the nonlocal part of the full abstract Buchholz algebra, and the remaining particle-cutoff representation theory.

\section{Main Results}\label{sec:main-results}

In this section, the principal results separate the representation-independent algebra, the intrinsic finite-mode analysis, spatial localization, and the concrete realizations in Sections \ref{sec:paper-bosonic-fock-representations}, \ref{sec:concrete-particle-coefficient-realization}, and \ref{sec:concrete-spatial-particle-cutoff-representations}. Only the data needed to state these results are defined here. Their complete axioms and constructions are given in the referenced definitions and theorems.

\subsection{Basic objects and notation}\label{sec:paper-main-basic-objects}

In this subsection, we introduce the Hilbert-space, resolvent-algebra, particle-block, quotient, and localization notation used to state the main results.

Let \(\sphilbfrak{h}\) be a separable infinite-dimensional complex Hilbert space with inner product linear in the second variable. For a self-adjoint operator \(A\) on a complex Hilbert space, write \(\msrprb_{A}\) for its projection-valued spectral measure on the Borel subsets of \(\fldreal\). Thus \(\fun{\msrprb_{A}}{E}\) is the spectral projection associated with a Borel set \(E\subset\fldreal\); the notation \(\fndef{E}\) denotes an ordinary indicator function of a set. The underlying real space carries the symplectic form \begin{equation}
\begin{aligned}
\sigma
\colon
\sphilbfrak{h}\times\sphilbfrak{h}
\to
\fldreal;\quad
\fun{\sigma}{f,g}
&=
\opimag\bkt{f}{g}_{\sphilbfrak{h}}.
\end{aligned}
\label{eq:abstract-symplectic-form}
\end{equation} Conjugate symmetry and complex linearity give \[\begin{aligned}
\fun{\sigma}{g,f}
&=
\opimag\bkt{g}{f}_{\sphilbfrak{h}}
=
\opimag\cmpconj{\bkt{f}{g}_{\sphilbfrak{h}}}
=
-\fun{\sigma}{f,g},
\\
\fun{\sigma}{f,\imunit f}
&=
\opimag\bkt{f}{\imunit f}_{\sphilbfrak{h}}
=
\norm{f}_{\sphilbfrak{h}}^{2}
>
0
&&\rbk{f\neq0}.
\end{aligned}\] Thus \(\sigma\) is a nondegenerate real bilinear antisymmetric form. For \(\lambda,\mu,\nu\in\fldreal\setminus\setone{0}\) and \(f,g\in\sphilbfrak{h}\), let the symbols \(\oaresolvent(\lambda,f)\) satisfy \begin{equation}
\begin{aligned}
\oaresolvent(\lambda,0)
&=
-\frac{\imunit}{\lambda}1,
\\
\faadj{\oaresolvent(\lambda,f)}
&=
\oaresolvent(-\lambda,f),
\\
\nu\oaresolvent(\nu\lambda,\nu f)
&=
\oaresolvent(\lambda,f),
\\
\oaresolvent(\lambda,f)-\oaresolvent(\mu,f)
&=
\imunit(\mu-\lambda)\oaresolvent(\lambda,f)\oaresolvent(\mu,f),
\\
\commutator{\oaresolvent(\lambda,f)}{\oaresolvent(\mu,g)}
&=
\imunit
\fun{\sigma}{f,g}
\oaresolvent(\lambda,f)
\oaresolvent(\mu,g)^2
\oaresolvent(\lambda,f).
\end{aligned}
\label{eq:paper-main-resolvent-relations}
\end{equation} For \(\lambda+\mu\neq0\), the additional relation is \begin{equation}
\begin{aligned}
\oaresolvent(\lambda,f)\oaresolvent(\mu,g)
&=
\oaresolvent(\lambda+\mu,f+g)
\rbk{
\oaresolvent(\lambda,f)
+
\oaresolvent(\mu,g)
+
\imunit\fun{\sigma}{f,g}
\oaresolvent(\lambda,f)^2
\oaresolvent(\mu,g)}.
\end{aligned}
\label{eq:paper-main-resolvent-product-relation}
\end{equation} The universal resolvent algebra \(\oaresolventalgebra(\sphilbfrak{h},\sigma)\) is the completion of the quotient of this universal unital \(\ast\)-algebra by the kernel of its enveloping \(\oacstar\)-seminorm \cite[Definitions 3.1 and 3.4]{BuchholzGrundling002}. For a nonzero complex subspace \(\sphilbfrak{k}\subset\sphilbfrak{h}\), the restriction of \(\sigma\) to \(\sphilbfrak{k}\times\sphilbfrak{k}\) is nondegenerate because \(\imunit\sphilbfrak{k}=\sphilbfrak{k}\) and \(\fun{\sigma}{f,\imunit f}=\norm{f}_{\sphilbfrak{h}}^{2}\). Write \(\oaresolventalgebra(\sphilbfrak{k})\) for the universal resolvent algebra of \(\sphilbfrak{k}\) with this restricted form; the symplectic form is omitted from the notation whenever it is the restriction of \eqref{eq:abstract-symplectic-form}. For \(\theta\in\fldreal\), multiplication by \(\napiernum^{\imunit\theta}\) preserves \(\sigma\) and is therefore a symplectic transformation. Its gauge action on the generators is \begin{equation}
\begin{aligned}
\gamma
\colon
\fldreal
\curvearrowright
\oaresolventalgebra(\sphilbfrak{h},\sigma);\quad
\fun{\gamma_{\theta}}{\oaresolvent(\lambda,f)}
&=
\oaresolvent(\lambda,\napiernum^{\imunit\theta}f).
\end{aligned}
\label{eq:paper-main-gauge-action}
\end{equation} The fixed-point algebra of this action is \begin{equation}
\begin{aligned}
\oaresolventalgebra(\sphilbfrak{h},\sigma)^{\gamma}
&=
\set{A\in\oaresolventalgebra(\sphilbfrak{h},\sigma)}
{\fun{\gamma_\theta}{A}=A\text{ for every }\theta\in\fldreal}.
\end{aligned}
\label{eq:paper-main-gauge-fixed-algebra}
\end{equation} The generator formula holds for \(\theta\in\fldreal\), \(\lambda\in\fldreal\setminus\setone{0}\), and \(f\in\sphilbfrak{h}\).

For every nonzero finite-dimensional complex subspace \(\sphilbfrak{f}\subset\sphilbfrak{h}\), define the intrinsic minimal ideal by \begin{equation}
\begin{aligned}
\oa{K}(\sphilbfrak{f})
&=
\bigcap
\set{\oaideal{I}\subset\oaresolventalgebra(\sphilbfrak{f})}
{\oaideal{I}\text{ is a nonzero closed two-sided ideal}}.
\end{aligned}
\label{eq:paper-main-intrinsic-minimal-ideal}
\end{equation} Its gauge-fixed part is \begin{equation}
\begin{aligned}
\oa{K}(\sphilbfrak{f})^{\gamma}
&=
\set{K\in\oa{K}(\sphilbfrak{f})}
{\fun{\gamma_{\theta}}{K}=K\text{ for every }\theta\in\fldreal}.
\end{aligned}
\label{eq:paper-main-gauge-fixed-minimal-ideal}
\end{equation} A finite-dimensional full matrix algebra means a unital \(\oacstar\)-algebra \(\oa{A}\) for which there are \(d\in\semigrposint\) and a unital \(\ast\)-isomorphism \(\oa{A}\to\fun{M_d}{\fldcmp}\). The gauge-fixed minimal ideals will be decomposed into ordered families of full matrix corners.

\begin{defn}[Intrinsic particle-block system]\label{def:intrinsic-particle-block-system}
An intrinsic particle-block system consists of projections
$e_{\sphilbfrak{f},m}\in\oa{K}(\sphilbfrak{f})^{\gamma}$, indexed by nonzero finite-dimensional complex subspaces
$\sphilbfrak{f}\subset\sphilbfrak{h}$ and $m\in\monnat$, such that distinct projections for a fixed $\sphilbfrak{f}$ are orthogonal and
every corner $e_{\sphilbfrak{f},m}\oa{K}(\sphilbfrak{f})^{\gamma}e_{\sphilbfrak{f},m}$ is a nonzero finite-dimensional full matrix algebra.
Every $K\in\oa{K}(\sphilbfrak{f})^{\gamma}$ is required to have a unique norm-convergent decomposition
\begin{equation}
\begin{aligned}
K
&=
\sum_{m\in\monnat}K_m,
&
K_m
&\in
e_{\sphilbfrak{f},m}\oa{K}(\sphilbfrak{f})^{\gamma}e_{\sphilbfrak{f},m},
&
\lim_{m\to\infty}\norm{K_m}
&=
0.
\end{aligned}
\label{eq:paper-main-intrinsic-particle-block-decomposition}
\end{equation}
\end{defn}

The order on \(m\in\monnat\) selects the initial segment \(0\leq m\leq n\) in the cutoff projection \eqref{eq:abstract-high-particle-projection}; the same index selects the corner \(e_{\sphilbfrak{f},m}\oa{K}(\sphilbfrak{f})^{\gamma}e_{\sphilbfrak{f},m}\) and the inductive limit \(\oa{C}_m(\sphilbfrak{h})\) in \eqref{eq:paper-main-degree-coefficient-algebra}. Proposition \ref{prop:paper-particle-block-independence} shows that, in every faithful regular irreducible representation \(\pi_{\sphilbfrak{f}}\), the projection \(e_{\sphilbfrak{f},m}\) is represented by \(\fun{\msrprb_{N_{\sphilbfrak{f}}}}{\setone{m}}\), where \(N_{\sphilbfrak{f}}\) is the self-adjoint generator of the normalized gauge-implementing unitary group. Under the Stone--von Neumann identification with the regular Fock representation, this spectral subspace is the exact \(m\)-particle subspace, as recorded in Theorem \ref{thm:paper-main-fock-realization}. We supplement this system with compression maps for inclusions of finite-dimensional subspaces. For \(\sphilbfrak{f}\subset \sphilbfrak{g}\), the canonical inclusion is \begin{equation}
\begin{aligned}
\iota_{\sphilbfrak{f},\sphilbfrak{g}}
\colon
\oaresolventalgebra(\sphilbfrak{f})
\to
\oaresolventalgebra(\sphilbfrak{g});\quad
\fun{\iota_{\sphilbfrak{f},\sphilbfrak{g}}}{\oaresolvent(\lambda,f)}
&=
\oaresolvent(\lambda,f).
\end{aligned}
\label{eq:paper-canonical-resolvent-inclusion}
\end{equation} The inclusion into the resolvent algebra over \(\sphilbfrak{h}\) is \begin{equation}
\begin{aligned}
\iota_{\sphilbfrak{f}}
\colon
\oaresolventalgebra(\sphilbfrak{f})
\to
\oaresolventalgebra(\sphilbfrak{h},\sigma);\quad
\fun{\iota_{\sphilbfrak{f}}}{\oaresolvent(\lambda,f)}
&=
\oaresolvent(\lambda,f).
\end{aligned}
\label{eq:paper-global-resolvent-inclusion}
\end{equation} The compressed corner map is \begin{equation}
\begin{aligned}
j_{\sphilbfrak{f},\sphilbfrak{g},m}
\colon
e_{\sphilbfrak{f},m}\oa{K}(\sphilbfrak{f})^{\gamma}e_{\sphilbfrak{f},m}
\to
e_{\sphilbfrak{g},m}\oa{K}(\sphilbfrak{g})^{\gamma}e_{\sphilbfrak{g},m};\quad
\fun{j_{\sphilbfrak{f},\sphilbfrak{g},m}}{K}
&=
e_{\sphilbfrak{g},m}\fun{\iota_{\sphilbfrak{f},\sphilbfrak{g}}}{K}e_{\sphilbfrak{g},m}.
\end{aligned}
\label{eq:paper-coefficient-connecting-map}
\end{equation} We assume that every \(j_{\sphilbfrak{f},\sphilbfrak{g},m}\) is an injective \(\ast\)-homomorphism and that \begin{equation}
\begin{aligned}
j_{\sphilbfrak{g},\sphilbfrak{k},m}\circ j_{\sphilbfrak{f},\sphilbfrak{g},m}
&=
j_{\sphilbfrak{f},\sphilbfrak{k},m}
\end{aligned}
\label{eq:paper-coefficient-connecting-compatibility}
\end{equation} for all \(m\in\monnat\) and all nonzero finite-dimensional complex subspaces \(\sphilbfrak{f}\subset \sphilbfrak{g}\subset \sphilbfrak{k}\) of \(\sphilbfrak{h}\).

For \(n\in\monnat\), the projection onto degrees greater than \(n\) is \begin{equation}
\begin{aligned}
Q_{\sphilbfrak{f},>n}
&=
1-
\sum_{m=0}^{n}e_{\sphilbfrak{f},m}.
\end{aligned}
\label{eq:abstract-high-particle-projection}
\end{equation} The corresponding ideal of the global gauge-fixed resolvent algebra is \begin{equation}
\begin{aligned}
\oaideal{I}_{>n}(\sphilbfrak{h})
&=
\gtclos{
\sum_{\substack{\sphilbfrak{f}\subset\sphilbfrak{h}\\0<\dim_{\fldcmp}\sphilbfrak{f}<\infty}}
\oaresolventalgebra(\sphilbfrak{h},\sigma)^{\gamma}
\fun{\iota_{\sphilbfrak{f}}}{Q_{\sphilbfrak{f},>n}}
\oaresolventalgebra(\sphilbfrak{h},\sigma)^{\gamma}
}^{\norm{\mathord{\cdot}}}.
\end{aligned}
\label{eq:abstract-high-particle-ideal}
\end{equation} The particle-cutoff quotient is

\begin{equation}
\begin{aligned}
\oa{B}_n(\sphilbfrak{h})
&=
\setquot{\oaresolventalgebra(\sphilbfrak{h},\sigma)^{\gamma}}
{\oaideal{I}_{>n}(\sphilbfrak{h})}.
\end{aligned}
\label{eq:paper-main-particle-cutoff-quotient}
\end{equation} For \(n\geq1\), the quotient map between consecutive cutoffs is \begin{equation}
\begin{aligned}
r_n
\colon
\oa{B}_n(\sphilbfrak{h})
\to
\oa{B}_{n-1}(\sphilbfrak{h});\quad
\fun{r_n}{A+\oaideal{I}_{>n}(\sphilbfrak{h})}
&=
A+\oaideal{I}_{>n-1}(\sphilbfrak{h}).
\end{aligned}
\label{eq:paper-main-cutoff-connecting-map}
\end{equation} The bounded inverse limit is \begin{equation}
\begin{aligned}
\fun{\oa{B}}{\sphilbfrak{h}}
&=
\set{\seq{A_n}{n\in\monnat}}
{A_n\in\oa{B}_n(\sphilbfrak{h})\rbk{n\in\monnat},
\fun{r_n}{A_n}=A_{n-1}\rbk{n\geq1},
\sup_{n\in\monnat}\norm{A_n}<\infty}.
\end{aligned}
\label{eq:paper-main-cutoff-inverse-limit}
\end{equation} The norm on \(\fun{\oa{B}}{\sphilbfrak{h}}\) is the supremum of the coordinate norms. For \(n\in\monnat\), define the coordinate map by \begin{equation}
\begin{aligned}
p_n
\colon
\fun{\oa{B}}{\sphilbfrak{h}}
\to
\oa{B}_n(\sphilbfrak{h});\quad
\fun{p_n}{\seq{A_m}{m\in\monnat}}
&=
A_n.
\end{aligned}
\label{eq:paper-main-coordinate-map}
\end{equation}

\begin{defn}[Essential ideal and essential extension]\label{def:paper-essential-ideal-extension}
Let $\oa{E}$ be a $\oacstar$-algebra.
A closed two-sided ideal $\oaideal{I}\subset\oa{E}$ is essential if
$\oaideal{I}\cap\oa{L}\neq0$ for every nonzero closed two-sided ideal
$\oa{L}\subset\oa{E}$.
Let $\oa{J}$ and $\oa{A}$ be $\oacstar$-algebras, and let
\begin{equation}
\begin{aligned}
0
\to
\oa{J}
\xrightarrow{\iota}
\oa{E}
\xrightarrow{p}
\oa{A}
\to
0
\end{aligned}
\label{eq:paper-main-essential-extension}
\end{equation}
be a sequence of $\ast$-homomorphisms in which $\iota$ is injective, $p$ is surjective, and
$\fun{\iota}{\oa{J}}=\Ker p$.
A sequence satisfying these three conditions is called short exact.
The sequence in \eqref{eq:paper-main-essential-extension} is an essential extension of $\oa{A}$ by $\oa{J}$
if the closed two-sided ideal $\fun{\iota}{\oa{J}}\subset\oa{E}$ is essential.
\end{defn}

\begin{defn}[Simple $\oacstar$-algebra and minimal projection]\label{def:paper-simple-algebra-minimal-projection}
A nonzero $\oacstar$-algebra $\oa{A}$ is simple if its only closed two-sided ideals are $0$ and $\oa{A}$.
For a $\oacstar$-algebra $\oa{A}$, a nonzero projection $p\in\oa{A}$ is minimal if
$$\begin{aligned}
p\oa{A}p
&=
\fldcmp p.
\end{aligned}$$
\end{defn}

The scalar model for the coefficient reconstruction is the Newton forward-difference expansion. Indeed, for a scalar sequence \(s\colon\monnat\to\fldcmp\), define \[\begin{aligned}
\fun{\Delta s}{k}
&=
\fun{s}{k+1}-\fun{s}{k}.
\end{aligned}\] Repeated binomial inversion gives \begin{equation}
\begin{aligned}
\fun{s}{n}
&=
\sum_{m=0}^{n}\binom{n}{m}\fun{\Delta^{m}s}{0},
&
\fun{\Delta^{m}s}{0}
&=
\sum_{j=0}^{m}(-1)^{m-j}\binom{m}{j}\fun{s}{j}.
\end{aligned}
\label{eq:paper-newton-name}
\end{equation} Throughout this paper, \(\sphilbfrak{f}\nearrow\sphilbfrak{h}\) means that \(\sphilbfrak{f}\) runs through the directed set of all nonzero finite-dimensional subspaces of \(\sphilbfrak{h}\), ordered by inclusion.

For fixed \(m\in\monnat\), the symbol \(\varinjlim\) in \eqref{eq:paper-main-degree-coefficient-algebra} denotes a \(\oacstar\)-algebra with compatible injective \(\ast\)-homomorphisms from the corner algebras in \eqref{eq:paper-main-degree-coefficient-algebra}. Their ranges have dense union, and every compatible family of \(\ast\)-homomorphisms from these corner algebras factors uniquely. For every \(m\in\monnat\), define the degree-\(m\) coefficient algebra by \begin{equation}
\begin{aligned}
\oa{C}_m(\sphilbfrak{h})
&=
\varinjlim_{\sphilbfrak{f}\nearrow\sphilbfrak{h}}
e_{\sphilbfrak{f},m}\oa{K}(\sphilbfrak{f})^{\gamma}e_{\sphilbfrak{f},m}.
\end{aligned}
\label{eq:paper-main-degree-coefficient-algebra}
\end{equation} For every nonzero finite-dimensional complex subspace \(\sphilbfrak{f}\subset\sphilbfrak{h}\) and every \(m\in\monnat\), let \begin{equation}
\begin{aligned}
j_{\sphilbfrak{f},m}
\colon
e_{\sphilbfrak{f},m}\oa{K}(\sphilbfrak{f})^{\gamma}e_{\sphilbfrak{f},m}
&\to
\oa{C}_m(\sphilbfrak{h}),
\end{aligned}
\label{eq:paper-main-degree-coefficient-canonical-injection}
\end{equation} be the canonical injection induced by the connecting maps in \eqref{eq:paper-coefficient-connecting-map}.

\begin{rem}[Simplicity of the degree coefficient algebras]\label{rem:paper-degree-coefficient-algebra-simplicity}
For every $m\in\monnat$, the corners in the inductive system
\eqref{eq:paper-main-degree-coefficient-algebra} are nonzero finite-dimensional full matrix algebras by
Definition \ref{def:intrinsic-particle-block-system} and are therefore simple.
The connecting maps in \eqref{eq:paper-coefficient-connecting-map} are injective.
The standard permanence of simplicity under inductive limits of simple $\oacstar$-algebras with injective
$\ast$-homomorphisms therefore shows that $\oa{C}_m(\sphilbfrak{h})$ is simple.
\end{rem}

\begin{defn}[Particle-coefficient embedding data]\label{def:paper-particle-coefficient-embedding-data}
The degree-zero coefficient algebra is required to satisfy
$\oa{C}_0(\sphilbfrak{h})=\fldcmp$.
Particle-coefficient embedding data consist of bounded linear maps
$$\begin{aligned}
E_{m,n}
\colon
\oa{C}_m(\sphilbfrak{h})
\to
\oa{B}_n(\sphilbfrak{h})
\end{aligned}$$
for $n\in\monnat$ and $0\leq m\leq n$.
For $n\in\monnat$, $0\leq m\leq n$, $a_m\in\oa{C}_m(\sphilbfrak{h})$, and
$\lambda\in\fldcmp$, the maps preserve adjoints and satisfy
\begin{equation}
\begin{aligned}
\oa{B}_n(\sphilbfrak{h})
&=
\bigoplus_{m=0}^{n}\fun{E_{m,n}}{\oa{C}_m(\sphilbfrak{h})},
\\ 
\fun{E_{m,n}}{\faadj{a_m}}
&=
\faadjrbk{\fun{E_{m,n}}{a_m}}
\quad\rbk{0\leq m\leq n},
\\ 
\fun{E_{0,n}}{\lambda}
&=
\lambda.
\end{aligned}
\label{eq:paper-abstract-particle-coefficient-embedding-identities}
\end{equation}
For $n\geq1$, the coefficient maps and the connecting maps satisfy
\begin{equation}
\begin{aligned}
r_n\circ E_{m,n}
&=
E_{m,n-1}
\quad\rbk{0\leq m<n},
\\ 
r_n\circ E_{n,n}
&=
0.
\end{aligned}
\label{eq:paper-abstract-particle-coefficient-connecting-identities}
\end{equation}
For $n\geq1$, the top-degree image
$$\begin{aligned}
\fun{E_{n,n}}{\oa{C}_n(\sphilbfrak{h})}
&=
\Ker r_n
\end{aligned}$$
is an essential ideal of $\oa{B}_n(\sphilbfrak{h})$.
\end{defn}

Injectivity of the coefficient maps is not imposed in Definition \ref{def:paper-particle-coefficient-embedding-data}. Proposition \ref{prop:paper-particle-coefficient-map-injectivity} derives injectivity from the reconstruction conditions in Definition \ref{def:paper-abstract-particle-coefficient-reconstruction-data}.

\begin{defn}[Abstract particle-coefficient reconstruction data]\label{def:paper-abstract-particle-coefficient-reconstruction-data}
Abstract particle-coefficient reconstruction data consist of the embedding data in Definition
\ref{def:paper-particle-coefficient-embedding-data} together with bounded bilinear maps
$$\begin{aligned}
\mu_{p,q}^{r}
\colon
\oa{C}_p(\sphilbfrak{h})\times\oa{C}_q(\sphilbfrak{h})
\to
\oa{C}_r(\sphilbfrak{h})
\end{aligned}$$
for $p,q\in\monnat$ and the output degrees
\begin{equation}
\max\setone{p,q}\leq r\leq p+q.
\label{eq:paper-abstract-particle-coefficient-output-degree-range}
\end{equation}
For $p,q,m\in\monnat$, $n\geq\max\setone{p,q}$,
$a_p\in\oa{C}_p(\sphilbfrak{h})$, $b_q\in\oa{C}_q(\sphilbfrak{h})$, and
$a_m,b_m\in\oa{C}_m(\sphilbfrak{h})$, the overlap products satisfy
\begin{equation}
\begin{aligned}
\fun{E_{p,n}}{a_p}\fun{E_{q,n}}{b_q}
&=
\sum_{r=\max\setone{p,q}}^{\min\setone{p+q,n}}
\fun{E_{r,n}}{\fun{\mu_{p,q}^{r}}{a_p,b_q}},
\\ 
\fun{\mu_{p,q}^{p+q}}{a_p,b_q}
&=
\fun{\mu_{q,p}^{p+q}}{b_q,a_p},
\\ 
\fun{\mu_{m,m}^{m}}{a_m,b_m}
&=
a_mb_m.
\end{aligned}
\label{eq:paper-abstract-particle-coefficient-overlap-products}
\end{equation}
\end{defn}

Write \begin{equation}
\begin{aligned}
\oa{C}(\sphilbfrak{h})
&=
\prod_{m\in\monnat}\oa{C}_m(\sphilbfrak{h}).
\end{aligned}
\label{eq:paper-main-particle-coefficient-space}
\end{equation} For \(a=\seq{a_m}{m\in\monnat}\in\oa{C}(\sphilbfrak{h})\), its reconstruction at cutoff \(n\) is \begin{equation}
\begin{aligned}
\Pi_n
\colon
\oa{C}(\sphilbfrak{h})
\to
\oa{B}_n(\sphilbfrak{h});\quad
\fun{\Pi_n}{a}
&=
\sum_{m=0}^{n}\fun{E_{m,n}}{a_m}.
\end{aligned}
\label{eq:paper-main-particle-coefficient-reconstruction}
\end{equation} For \(a,b\in\oa{C}(\sphilbfrak{h})\), the coefficient product and adjoint are \begin{equation}
\begin{aligned}
\star
\colon
\oa{C}(\sphilbfrak{h})\times\oa{C}(\sphilbfrak{h})
\to
\oa{C}(\sphilbfrak{h});\quad
\rbk{a\star b}_r
&=
\sum_{\substack{0\leq p,q\leq r\\p+q\geq r}}
\fun{\mu_{p,q}^{r}}{a_p,b_q},
\\
\faadj{\mathord{\cdot}}
\colon
\oa{C}(\sphilbfrak{h})
\to
\oa{C}(\sphilbfrak{h});\quad
\rbk{\faadj{a}}_r
&=
\faadj{a_r}
\end{aligned}
\label{eq:paper-main-particle-coefficient-operations}
\end{equation} for every \(r\in\monnat\). Thus the values \(\fun{s}{n}\) in \eqref{eq:paper-newton-name} are reconstructed triangularly from the uniquely determined coefficients \(\fun{\Delta^{m}s}{0}\). The coefficient embeddings \(E_{m,n}\) of Definition \ref{def:paper-particle-coefficient-embedding-data} and the maps \(\Pi_n\) in \eqref{eq:paper-main-particle-coefficient-reconstruction} are the operator-algebraic counterparts of multiplication by \(\binom{n}{m}\) and of the finite sum in \eqref{eq:paper-newton-name}, respectively. The index \(m\) is the particle degree, and \(\Pi_n\) reconstructs the cutoff quotient from the degree-\(m\) coefficients. We call \(a_m\) the degree-\(m\) particle coefficient. The embeddings \(E_{m,n}\) control linear reconstruction and compatibility with the particle cutoffs, while the maps \(\mu_{p,q}^{r}\) determine multiplication of the coefficients. The bounded particle-coefficient space is \begin{equation}
\begin{aligned}
\oa{C}_{\mathrm{b}}(\sphilbfrak{h})
&=
\set{a\in\oa{C}(\sphilbfrak{h})}
{\sup_{n\in\monnat}\norm{\fun{\Pi_n}{a}}<\infty}.
\end{aligned}
\label{eq:paper-main-bounded-particle-coefficients}
\end{equation} For \(n\in\monnat\) and \(0\leq k\leq n+1\), define the degree-tail subspaces by \begin{equation}
\begin{aligned}
\oa{J}_k^{(n)}
&=
\begin{cases}
\oa{B}_n(\sphilbfrak{h}),&k=0,\\
\Ker\rbk{r_k\circ r_{k+1}\circ\cdots\circ r_n},&1\leq k\leq n,\\
0,&k=n+1.
\end{cases}
\end{aligned}
\label{eq:paper-main-degree-tail-ideals}
\end{equation}

For the finite-mode results, fix a nonzero finite-dimensional complex subspace \(\sphilbfrak{k}\subset\sphilbfrak{h}\). Every \(A\in\oa{K}(\sphilbfrak{k})^{\gamma}\) has the unique norm-convergent decomposition \begin{equation}
\begin{aligned}
A
&=
\sum_{n\in\monnat}A_n,
&
A_n
&\in
e_{\sphilbfrak{k},n}\oa{K}(\sphilbfrak{k})^{\gamma}e_{\sphilbfrak{k},n},
&
\lim_{n\to\infty}\norm{A_n}
&=
0.
\end{aligned}
\label{eq:paper-main-finite-mode-block-algebra}
\end{equation} Every corner \(e_{\sphilbfrak{k},n}\oa{K}(\sphilbfrak{k})^{\gamma}e_{\sphilbfrak{k},n}\) is a nonzero finite-dimensional full matrix algebra. The standard trace on the \(n\)th corner is the unique linear functional \begin{equation}
\begin{aligned}
\trace_{n}
\colon
e_{\sphilbfrak{k},n}\oa{K}(\sphilbfrak{k})^{\gamma}e_{\sphilbfrak{k},n}
\to
\fldcmp;\quad
\sqfun{\trace_{n}}{p}
&=
1,
\end{aligned}
\label{eq:paper-standard-corner-trace}
\end{equation} such that the equality in \eqref{eq:paper-standard-corner-trace} holds for every minimal projection \(p\) in the \(n\)th corner. Thus \(\trace_{n}\) is the ordinary, unnormalized matrix trace; in particular, \(\sqfun{\trace_{n}}{e_{\sphilbfrak{k},n}}\) is the matrix size of the \(n\)th corner rather than \(1\). The extended positive trace is \begin{equation}
\begin{aligned}
\trace
\colon
\rbk{\oa{K}(\sphilbfrak{k})^{\gamma}}_{+}
\to
\closedinterval{0}{\infty};\quad
\sqfun{\trace}{A}
&=
\sum_{n\in\monnat}\sqfun{\trace_{n}}{A_n}
\end{aligned}
\label{eq:paper-intrinsic-standard-trace}
\end{equation} for every positive \(A=\sum_{n\in\monnat}A_n\), where the series is the supremum of its finite partial sums. A positive \(A\in\oa{K}(\sphilbfrak{k})^{\gamma}\) is trace summable when \(\sum_{n\in\monnat}\sqfun{\trace_{n}}{A_n}<\infty\). For \(T\in e_{\sphilbfrak{k},n}\oa{K}(\sphilbfrak{k})^{\gamma}e_{\sphilbfrak{k},n}\), the trace norm is \[\begin{aligned}
\norm{T}_{1,\trace_{n}}
&=
\sqfun{\trace_{n}}{\abs{T}}.
\end{aligned}\] A family \(\seq{A_{n}}{n\in\monnat}\) of self-adjoint elements \(A_{n}\in e_{\sphilbfrak{k},n}\oa{K}(\sphilbfrak{k})^{\gamma}e_{\sphilbfrak{k},n}\) is called superstable when there are constants \(a>0\) and \(b,c\geq0\) such that \begin{equation}
\begin{aligned}
A_{n}
&\geq
an^2-bn-c
&&\rbk{n\in\monnat}.
\end{aligned}
\label{eq:paper-main-superstability-bound}
\end{equation} In particular, this definition applies to the Hamiltonian family \(\seq{\physham_{n}}{n\in\monnat}\). The corners have polynomial dimension growth when there are constants \(C>0\) and \(s\geq0\) such that \begin{equation}
\begin{aligned}
\sqfun{\trace_{n}}{e_{\sphilbfrak{k},n}}
&\leq
C\rbk{1+n}^{s}
&&\rbk{n\in\monnat}.
\end{aligned}
\label{eq:paper-corner-dimension-growth}
\end{equation} For the split product formulas, let \(\seq{T_{n}}{n\in\monnat}\) and \(\seq{V_{n}}{n\in\monnat}\) be families of self-adjoint elements \(T_{n},V_{n}\in e_{\sphilbfrak{k},n}\oa{K}(\sphilbfrak{k})^{\gamma}e_{\sphilbfrak{k},n}\), called the kinetic and potential families. The potential family is superstable when there are constants \(a>0\) and \(b,c\geq0\) such that \begin{equation}
\begin{aligned}
V_{n}
&\geq
\rbk{an^2-bn-c}e_{\sphilbfrak{k},n}
&&\rbk{n\in\monnat}.
\end{aligned}
\label{eq:paper-main-corner-potential-superstability}
\end{equation} For these constants and \(\sminvtemperature>0\), the kinetic heat traces are Gaussian-summable when \begin{equation}
\begin{aligned}
\sum_{n\in\monnat}
\napiernum^{-\sminvtemperature\rbk{an^2-bn-c}}
\sqfun{\trace_{n}}{\napiernum^{-\sminvtemperature T_n}}
&<
\infty.
\end{aligned}
\label{eq:paper-main-kinetic-heat-trace-summability}
\end{equation} Every corner is a finite-dimensional full matrix algebra, so every kinetic heat kernel \(\napiernum^{-\sminvtemperature T_n}\) has finite standard trace. On infinite-dimensional sectors, the corresponding trace-class property is the kinetic Gibbs condition \eqref{eq:paper-kinetic-gibbs-condition}, which is an assumption. The kinetic family has a linear norm bound when there are constants \(b_T,c_T\geq0\) such that \begin{equation}
\begin{aligned}
\norm{T_{n}}
&\leq
b_Tn+c_T
&&\rbk{n\in\monnat}.
\end{aligned}
\label{eq:paper-main-linear-kinetic-bound}
\end{equation} Together with polynomial corner growth, Hamiltonian superstability is sufficient for heat-kernel trace summability. On a lattice, polynomial corner growth \eqref{eq:paper-corner-dimension-growth} and the linear kinetic norm bound \eqref{eq:paper-main-linear-kinetic-bound} imply \eqref{eq:paper-main-kinetic-heat-trace-summability} by Lemma \ref{lem:paper-summable-sector-majorant}; these conditions are sufficient and not necessary. A family \(\seq{\physham_{n}}{n\in\monnat}\) of self-adjoint elements \(\physham_{n}\in e_{\sphilbfrak{k},n}\oa{K}(\sphilbfrak{k})^{\gamma}e_{\sphilbfrak{k},n}\) has a compact heat kernel when, for every \(\sminvtemperature>0\), \begin{equation}
\begin{aligned}
\seq{\napiernum^{-\sminvtemperature \physham_{n}}}{n\in\monnat}
&\in
\oa{K}(\sphilbfrak{k})^{\gamma}.
\end{aligned}
\label{eq:paper-main-compact-heat-kernel}
\end{equation} Its compact heat-kernel family is trace summable at inverse temperature \(\sminvtemperature>0\) when \begin{equation}
\begin{aligned}
\sum_{n\in\monnat}
\sqfun{\trace_{n}}{\napiernum^{-\sminvtemperature \physham_{n}}}
&<
\infty.
\end{aligned}
\label{eq:paper-main-heat-kernel-trace-summability}
\end{equation}

For every nonzero finite-dimensional subspace \(\sphilbfrak{f}\subset\sphilbfrak{h}\), define the intrinsic finite-mode algebra \begin{equation}
\begin{aligned}
\oa{M}(\oa{K}(\sphilbfrak{f})^{\gamma})
&=
\set{\seq{K_m}{m\in\monnat}}
{K_m\in e_{\sphilbfrak{f},m}\oa{K}(\sphilbfrak{f})^{\gamma}e_{\sphilbfrak{f},m},
\sup_{m\in\monnat}\norm{K_m}<\infty},
\end{aligned}
\label{eq:paper-main-finite-mode-algebra}
\end{equation} with coordinatewise operations and the supremum norm.

Spatial support is additional structure and is not part of the data used to define the abstract inverse limit or the intrinsic finite-mode traces and closed-path sums. The abstract spatial theory uses only the inclusion maps and their stated algebraic properties. For the concrete bosonic system, Proposition \ref{prop:concrete-spatial-localization-system} constructs these data from the Fock realization.

For the spatial statements, let \(\Gamma\) be the countably infinite vertex set of a graph. The constructions below use only the vertex set \(\Gamma\).

Let \begin{equation}
\begin{aligned}
\setpowfin{\Gamma}
&=
\set{\Lambda\subset\Gamma}{\Lambda\text{ is finite}},
\end{aligned}
\label{eq:paper-main-finite-region-directed-set}
\end{equation} ordered by inclusion. We write \(\Lambda\nearrow\Gamma\) when \(\Lambda\) runs through the nonempty members of \(\setpowfin{\Gamma}\), directed by inclusion. For every \(\Lambda\subset\Gamma\), define the global and local one-particle spaces by \begin{equation}
\begin{aligned}
\sphilbfrak{h}
&=
\fun{\lpseq^{2}}{\Gamma},
&
\sphilbfrak{h}_{\Lambda}
&=
\fun{\lpseq^{2}}{\Lambda}
\subset
\sphilbfrak{h}
\end{aligned}
\label{eq:paper-main-local-one-particle-subspace}
\end{equation} by zero extension. Write \(\delta_x\) for the standard basis vector at \(x\in\Gamma\). For every \(\Lambda\in\setpowfin{\Gamma}\), define the finite-region one-particle projection by \begin{equation}
\begin{aligned}
P_{\Lambda}
\colon
\sphilbfrak{h}
\to
\sphilbfrak{h}_{\Lambda};\quad
\fun{P_{\Lambda}}{\xi}
&=
\sum_{x\in \Lambda}
\bkt{\delta_x}{\xi}_{\sphilbfrak{h}}\delta_x.
\end{aligned}
\label{eq:paper-finite-region-one-particle-projection}
\end{equation}

\begin{defn}[Spatial localization data]\label{def:paper-main-spatial-localization-data}
Let the finite regions and local one-particle subspaces be those defined in
\eqref{eq:paper-main-finite-region-directed-set}--\eqref{eq:paper-main-local-one-particle-subspace}.
For nonempty finite sets $\Lambda\subset \Lambda'\subset\Gamma$, spatial localization data consist of
injective unital $\ast$-homomorphisms
\begin{equation}
\begin{aligned}
i_{\Lambda,\Lambda'}
&\colon
\oa{M}(\oa{K}(\sphilbfrak{h}_{\Lambda})^{\gamma})
\to
\oa{M}(\oa{K}(\sphilbfrak{h}_{\Lambda'})^{\gamma}),
&
i_{\Lambda,\Gamma}
&\colon
\oa{M}(\oa{K}(\sphilbfrak{h}_{\Lambda})^{\gamma})
\to
\fun{\oa{B}}{\sphilbfrak{h}}.
\end{aligned}
\label{eq:paper-main-spatial-inclusion-types}
\end{equation}
These maps satisfy
\begin{equation}
\begin{aligned}
i_{\Lambda',\Lambda''}\circ i_{\Lambda,\Lambda'}
&=
i_{\Lambda,\Lambda''},
&
i_{\Lambda',\Gamma}\circ i_{\Lambda,\Lambda'}
&=
i_{\Lambda,\Gamma}
&&\rbk{\Lambda\subset \Lambda'\subset \Lambda''}.
\end{aligned}
\label{eq:paper-main-spatial-inclusion-compatibility}
\end{equation}
For disjoint nonempty $\Lambda,\Lambda'\in\setpowfin{\Gamma}$, their ranges satisfy
\begin{equation}
\begin{aligned}
\commutator{
\fun{i_{\Lambda,\Gamma}}{\oa{M}(\oa{K}(\sphilbfrak{h}_{\Lambda})^{\gamma})}}
{
\fun{i_{\Lambda',\Gamma}}{\oa{M}(\oa{K}(\sphilbfrak{h}_{\Lambda'})^{\gamma})}}
&=
0
&&\rbk{\Lambda\cap \Lambda'=\emptyset}.
\end{aligned}
\label{eq:paper-main-disjoint-local-ranges-commute}
\end{equation}
For every $n\in\monnat$, the local coordinate ranges satisfy
\begin{equation}
\begin{aligned}
\gtclos{
\bigcup_{\substack{\Lambda\in\setpowfin{\Gamma}\\\Lambda\neq\emptyset}}
\fun{p_n\circ i_{\Lambda,\Gamma}}{\oa{M}(\oa{K}(\sphilbfrak{h}_{\Lambda})^{\gamma})}
}^{\norm{\mathord{\cdot}}_{\oa{B}_n(\sphilbfrak{h})}}
&=
\oa{B}_n(\sphilbfrak{h}).
\end{aligned}
\label{eq:paper-main-local-coordinate-density}
\end{equation}
\end{defn}

The particle-coefficient embeddings must respect these local ranges.

\subsection{Algebraic main results}\label{algebraic-main-results}

In this subsection, we state the structure of the abstract Buchholz algebra, which is defined in \eqref{eq:paper-main-cutoff-inverse-limit} without a Fock representation.

\begin{thm}[Main algebraic result]\label{thm:paper-main-algebraic-result}
Assume the intrinsic particle-block system of Definition \ref{def:intrinsic-particle-block-system}, the connecting maps
\eqref{eq:paper-coefficient-connecting-map} satisfying \eqref{eq:paper-coefficient-connecting-compatibility},
and the reconstruction data of Definition \ref{def:paper-abstract-particle-coefficient-reconstruction-data}.
Then the following statements hold.
\begin{enumerate}
\item
The coordinatewise operations make $\fun{\oa{B}}{\sphilbfrak{h}}$ a unital $\oacstar$-algebra, the operations
\eqref{eq:paper-main-particle-coefficient-operations} make $\oa{C}_{\mathrm{b}}(\sphilbfrak{h})$ a unital
$\oacstar$-algebra, and
\begin{equation}
\begin{aligned}
\Upsilon
\colon
\oa{C}_{\mathrm{b}}(\sphilbfrak{h})
\to
\fun{\oa{B}}{\sphilbfrak{h}};\quad
\fun{\Upsilon}{a}
&=
\seq{\fun{\Pi_n}{a}}{n\in\monnat}
\end{aligned}
\label{eq:paper-main-particle-coefficient-isomorphism}
\end{equation}
is an isometric unital $\ast$-isomorphism.
\item
The closed two-sided ideals of $\oa{B}_0(\sphilbfrak{h})$ are $0$ and
$\oa{B}_0(\sphilbfrak{h})$.
For every $n\geq1$, the closed two-sided ideals of $\oa{B}_n(\sphilbfrak{h})$ are exactly
$$\begin{aligned}
0
&=
\oa{J}_{n+1}^{(n)}
\subset
\oa{J}_{n}^{(n)}
\subset
\cdots
\subset
\oa{J}_{1}^{(n)}
\subset
\oa{J}_{0}^{(n)}
=
\oa{B}_n(\sphilbfrak{h}),
\end{aligned}$$
and, for $0\leq k\leq n$,
$$\begin{aligned}
\setquot{\oa{J}_k^{(n)}}{\oa{J}_{k+1}^{(n)}}
&\simeq
\oa{C}_k(\sphilbfrak{h}).
\end{aligned}$$
\item
For every $n\geq1$, the ideal $\oa{J}_n^{(n)}$ is essential in $\oa{B}_n(\sphilbfrak{h})$, and
$$\begin{aligned}
0
\to
\oa{J}_n^{(n)}
\monoto
\oa{B}_n(\sphilbfrak{h})
\xrightarrow{r_n}
\oa{B}_{n-1}(\sphilbfrak{h})
\to
0
\end{aligned}$$
is an essential extension.
\end{enumerate}
\end{thm}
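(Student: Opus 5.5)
We will prove the three statements in the order (1), (2), (3). Part (3) is essentially contained in Definition \ref{def:paper-particle-coefficient-embedding-data}, and part (2) uses the coefficient structure together with the simplicity in Remark \ref{rem:paper-degree-coefficient-algebra-simplicity}.

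For (1), the $\oacstar$-property of $\fun{\oa{B}}{\sphilbfrak{h}}$ is routine. It is a closed $\ast$-subalgebra of the $\ell^\infty$-product $\prod_n\oa{B}_n(\sphilbfrak{h})$, closed because each $r_n$ is a continuous $\ast$-homomorphism, and it is unital since $r_n(1)=1$. For $\Upsilon$ we proceed as follows.
\begin{enumerate}
\item Compatibility: by \eqref{eq:paper-abstract-particle-coefficient-connecting-identities}, $r_n\circ\Pi_n=\Pi_{n-1}$. Hence $\Upsilon(a)$ lies in the inverse limit exactly when $a\in\oa{C}_{\mathrm{b}}(\sphilbfrak{h})$.
\item Surjectivity: given $(A_n)_n\in\fun{\oa{B}}{\sphilbfrak{h}}$, use the direct sum in \eqref{eq:paper-abstract-particle-coefficient-embedding-identities} to write $A_n=\sum_{m\le n}E_{m,n}(a^{(n)}_m)$. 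Applying $r_n$ and using uniqueness of the decomposition in $\oa{B}_{n-1}$ gives $E_{m,n-1}(a^{(n)}_m)=E_{m,n-1}(a^{(n-1)}_m)$.
\item Injectivity of the $E_{m,n}$: the previous step only gives $a^{(n)}_m=a^{(n-1)}_m$ once the maps $E_{m,n}$ are injective. We invoke Proposition \ref{prop:paper-particle-coefficient-map-injectivity}, which derives this from the reconstruction data. If needed, a direct route is available: take $E_{m,n}(a)=0$, multiply by $E_{m,n}(b)$, and use $\mu^m_{m,m}(a,b)=ab$ together with the triangular structure to get $E_{m,m}(ab)=0$. Then, since $\oa{C}_m$ is simple and $E_{m,m}(\oa{C}_m)$ is an essential ideal, conclude $a=0$. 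With injectivity, $a_m:=a^{(n)}_m$ is independent of $n$, so $\Upsilon$ is bijective, and directness of the sum gives injectivity of $\Upsilon$.
\item Transport of structure: define the operations on $\oa{C}_{\mathrm{b}}$ by pulling back along $\Upsilon$. The overlap formula \eqref{eq:paper-abstract-particle-coefficient-overlap-products} shows $\Pi_n(a)\Pi_n(b)=\Pi_n(a\star b)$, because the degree-$r$ terms with $r\le n$ collect exactly the pairs $p,q\le r\le p+q$. Adjoint compatibility follows from $E_{m,n}(a^*)=E_{m,n}(a)^*$.
\item Conclusion: with the norm $\sup_n\norm{\Pi_n(a)}$, $\Upsilon$ is isometric by definition, $\oa{C}_{\mathrm{b}}$ inherits the $\oacstar$-structure, and the unit is $(1,0,0,\dots)$ by $E_{0,n}(1)=1$.
\end{enumerate}

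For (3), $\Ker r_n=E_{n,n}(\oa{C}_n)=\oa{J}_n^{(n)}$ is essential by Definition \ref{def:paper-particle-coefficient-embedding-data}. Moreover, $r_n$ is surjective as a quotient map between the quotients of $\oaresolventalgebra(\sphilbfrak{h},\sigma)^\gamma$ by the nested ideals $\oaideal{I}_{>n}\subset\oaideal{I}_{>n-1}$; this nesting holds because $Q_{\sphilbfrak{f},>n}\le Q_{\sphilbfrak{f},>n-1}$.

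For (2), we argue by induction on $n$.
\begin{enumerate}
\item Base case: $\oa{B}_0=E_{0,0}(\oa{C}_0)=\fldcmp$, which is simple.
\item Quotients of the chain: each $\oa{J}^{(n)}_k$ is the kernel of the surjection $r_k\circ\cdots\circ r_n\colon\oa{B}_n\to\oa{B}_{k-1}$, hence a closed ideal. Using the direct sum, $\oa{J}^{(n)}_k=\bigoplus_{m\ge k}E_{m,n}(\oa{C}_m)$. The map $a\mapsto E_{k,n}(a)+\oa{J}^{(n)}_{k+1}$ is then a linear bijection $\oa{C}_k\to\oa{J}^{(n)}_k/\oa{J}^{(n)}_{k+1}$. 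It is multiplicative modulo $\oa{J}^{(n)}_{k+1}$, since in $E_{k,n}(a)E_{k,n}(b)$ only the degree-$k$ term $\mu^k_{k,k}(a,b)=ab$ survives. An injective $\ast$-homomorphism of $\oacstar$-algebras is isometric, so this is the claimed isomorphism.
\item Classification of ideals: let $\oa{L}\subset\oa{B}_n$ be a nonzero closed ideal. By the inductive hypothesis, $r_n(\oa{L})$ is one of the $\oa{J}^{(n-1)}_k$, and the preimages of these are $\oa{J}^{(n)}_k$ for $k\le n$. Since $\oa{J}^{(n)}_n$ is essential, $\oa{L}\cap\oa{J}^{(n)}_n$ is a nonzero ideal of $\oa{J}^{(n)}_n\simeq\oa{C}_n$, which is simple, so $\oa{J}^{(n)}_n\subset\oa{L}$. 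Therefore $\oa{L}=r_n^{-1}(r_n(\oa{L}))$ is some $\oa{J}^{(n)}_k$ with $k\le n$. The case $r_n(\oa{L})=0$ gives $\oa{L}=\oa{J}^{(n)}_n$.
\end{enumerate}

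The main obstacle is the injectivity of the coefficient maps, since the uniqueness of coefficients in (1) and the identification $\oa{J}^{(n)}_n\simeq\oa{C}_n$ both depend on it. Beyond that, one must check that the truncation $r\le\min\{p+q,n\}$ in the overlap formula matches the definition of $\star$ when passing from the cutoff-$n$ products to the coefficient product.
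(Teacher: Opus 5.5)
Your proposal is correct and follows essentially the paper's route. For (1) it uses Proposition \ref{prop:bounded-inverse-limit-cstar}, the injectivity of Proposition \ref{prop:paper-particle-coefficient-map-injectivity} and the coefficient-uniqueness/surjectivity argument of Theorem \ref{thm:paper-particle-coefficient-buchholz-isomorphism}; (3) is the essentiality clause of Definition \ref{def:paper-particle-coefficient-embedding-data}; and (2) is the induction of Theorem \ref{thm:paper-finite-stage-ideal-chain}, driven by essentiality of $\Ker r_n$ and simplicity of $\oa{C}_n$. The only deviations are minor simplifications: you get the closed two-sided ideal property of $\oa{J}_k^{(n)}$ for free as the kernel of the $\ast$-homomorphism $r_k\circ\cdots\circ r_n$, where the paper's Proposition \ref{prop:paper-degree-tail-ideals} argues via the lower output-degree bound, and you transport the $\oacstar$-structure along the bijection $\Upsilon$ instead of checking the algebra axioms for $\star$ one at a time.
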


The first item is Proposition \ref{prop:bounded-inverse-limit-cstar} and Theorem \ref{thm:paper-particle-coefficient-buchholz-isomorphism}, the second is Theorem \ref{thm:paper-finite-stage-ideal-chain}, and the essentiality in the third is part of Definition \ref{def:paper-particle-coefficient-embedding-data}. Thus the particle-degree coefficients are coordinates on the abstract Buchholz algebra, in which products are computed by the overlap maps and the particle cutoffs are read from the degree filtration.

\subsection{Intrinsic trace and path-integral results}\label{intrinsic-trace-and-path-integral-results}

In this subsection, we state the equilibrium, product-formula, and closed-path results for the matrix corners of one finite-dimensional one-particle space; they use neither a Fock representation nor a graph.

\begin{thm}[Intrinsic Gibbs states]\label{thm:paper-main-intrinsic-gibbs-states}
Fix a nonzero finite-dimensional complex subspace $\sphilbfrak{k}\subset\sphilbfrak{h}$ and
$\sminvtemperature>0$, and let $\seq{\physham_{n}}{n\in\monnat}$ be a Hamiltonian family of self-adjoint corner
elements.
Assume the polynomial corner-dimension growth \eqref{eq:paper-corner-dimension-growth} and the superstability
\eqref{eq:paper-main-superstability-bound} of $\seq{\physham_{n}}{n\in\monnat}$.
Then the heat kernel $\seq{\napiernum^{-\sminvtemperature\physham_{n}}}{n\in\monnat}$ belongs to the gauge-fixed
minimal ideal $\oa{K}(\sphilbfrak{k})^{\gamma}$ and is trace summable, and the normalized trace functional
$$\begin{aligned}
A
\mapsto
\frac{
\sum_{n\in\monnat}
\sqfun{\trace_{n}}{A_n\napiernum^{-\sminvtemperature \physham_{n}}}
}{
\sum_{n\in\monnat}
\sqfun{\trace_{n}}{\napiernum^{-\sminvtemperature \physham_{n}}}
}
\end{aligned}$$
is a $\sminvtemperature$-KMS state on $\oa{K}(\sphilbfrak{k})^{\gamma}$ for the block dynamics generated by
$\seq{\physham_{n}}{n\in\monnat}$.
\end{thm}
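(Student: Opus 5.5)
The plan is to work entirely inside the block decomposition \eqref{eq:paper-main-finite-mode-block-algebra}, so that every statement reduces to finite-dimensional matrix facts on each corner plus one summability estimate. First, superstability \eqref{eq:paper-main-superstability-bound} gives, on the $n$th corner, $\physham_n\geq (an^2-bn-c)e_{\sphilbfrak{k},n}$. Hence, by the spectral theorem in the full matrix algebra $e_{\sphilbfrak{k},n}\oa{K}(\sphilbfrak{k})^{\gamma}e_{\sphilbfrak{k},n}$,
$$\begin{aligned}
\norm{\napiernum^{-\sminvtemperature\physham_n}}\leq \napiernum^{-\sminvtemperature(an^2-bn-c)},
\qquad
\sqfun{\trace_n}{\napiernum^{-\sminvtemperature\physham_n}}\leq C(1+n)^s\napiernum^{-\sminvtemperature(an^2-bn-c)},
\end{aligned}$$
where the second bound also uses \eqref{eq:paper-corner-dimension-growth}. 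The Gaussian factor dominates the polynomial factor. The first bound therefore tends to $0$, and the second is summable over $n$.

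Next, I need to show that $\seq{\napiernum^{-\sminvtemperature\physham_n}}{n\in\monnat}$ lies in $\oa{K}(\sphilbfrak{k})^{\gamma}$. The partial sums $\sum_{n\leq N}\napiernum^{-\sminvtemperature\physham_n}$ are finite sums of corner elements, so they belong to the closed ideal $\oa{K}(\sphilbfrak{k})^{\gamma}$. Their norm is Cauchy: the corners are mutually orthogonal, so the norm of a tail equals $\sup_{n>N}\norm{\napiernum^{-\sminvtemperature\physham_n}}$. The limit therefore exists in $\oa{K}(\sphilbfrak{k})^{\gamma}$, and by uniqueness in \eqref{eq:paper-main-intrinsic-particle-block-decomposition} its blocks are exactly $\napiernum^{-\sminvtemperature\physham_n}$. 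Trace summability \eqref{eq:paper-main-heat-kernel-trace-summability} is the second bound above. The denominator $Z$ of the proposed functional is strictly positive, because each $\napiernum^{-\sminvtemperature\physham_n}$ is invertible in its corner and so has positive trace.

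For the state, write $\omega(A)=Z^{-1}\sum_n\sqfun{\trace_n}{A_n\napiernum^{-\sminvtemperature\physham_n}}$. Since $\abs{\sqfun{\trace_n}{A_n\napiernum^{-\sminvtemperature\physham_n}}}\leq\norm{A_n}\,\sqfun{\trace_n}{\napiernum^{-\sminvtemperature\physham_n}}\leq \norm{A}\,\sqfun{\trace_n}{\napiernum^{-\sminvtemperature\physham_n}}$, the series converges absolutely and $\abs{\omega(A)}\leq\norm{A}$. Positivity holds blockwise by the matrix identity $\sqfun{\trace_n}{B^\ast B\,\napiernum^{-\sminvtemperature\physham_n}}=\sqfun{\trace_n}{\napiernum^{-\sminvtemperature\physham_n/2}B^\ast B\,\napiernum^{-\sminvtemperature\physham_n/2}}\geq0$. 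The algebra is nonunital, so normalization means $\norm{\omega}=1$. I would check this on the approximate unit $\sum_{n\leq N}e_{\sphilbfrak{k},n}$, for which $\omega$ of the approximate unit tends to $1$.

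The dynamics is $\tau_t(A)=\seq{\napiernum^{\imunit t\physham_n}A_n\napiernum^{-\imunit t\physham_n}}{n}$. It preserves each corner and block norms, so by uniqueness of the decomposition it maps $\oa{K}(\sphilbfrak{k})^{\gamma}$ to itself. It is strongly continuous: the tails are uniformly small, and each block is continuous. For the KMS condition, I would use the dense $\ast$-subalgebra of finitely supported block sequences, which is $\tau$-invariant and consists of entire analytic elements. On it, the identity $\omega(A\tau_{\imunit\sminvtemperature}(B))=\omega(BA)$ is the finite sum of the cyclic-trace identities $\sqfun{\trace_n}{A_n\napiernum^{-\sminvtemperature\physham_n}B_n\napiernum^{\sminvtemperature\physham_n}\napiernum^{-\sminvtemperature\physham_n}}=\sqfun{\trace_n}{B_nA_n\napiernum^{-\sminvtemperature\physham_n}}$. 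Cross-block products vanish because the corners are orthogonal. The KMS condition then extends to the whole algebra by the standard criterion that it suffices to verify it on a norm-dense $\tau$-invariant $\ast$-subalgebra of entire elements.

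The main obstacle is this final density and extension step, together with the nonunital normalization. These are where the unique norm-convergent decomposition of Definition \ref{def:intrinsic-particle-block-system} must be used carefully. In particular, I have to show that finitely supported sequences are dense, which follows from $\norm{K_m}\to0$ and orthogonality, and that the usual KMS definition on a nonunital $\oacstar$-algebra is met.
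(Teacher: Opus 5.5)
Your proposal is correct and follows essentially the paper's route. Superstability and corner growth give the norm and trace bounds, and membership in $\oa{K}(\sphilbfrak{k})^{\gamma}$ follows from the block decomposition. Statehood uses the trace ideal estimate, cyclicity and the approximate unit $\sum_{n\le N}e_{\sphilbfrak{k},n}$, and the KMS identity comes from blockwise cyclicity of $\trace_n$.

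The only difference is the last step, which you flagged as the main obstacle. Instead of appealing to the density criterion for KMS states, the paper truncates an arbitrary entire analytic $A$ to $A_{\leq N}=\sum_{n\le N}A_n$. Because the block dynamics $\alpha$ fixes each $e_{\sphilbfrak{k},n}$, one has $\alpha_{\imunit\sminvtemperature}(A_{\leq N})=(\alpha_{\imunit\sminvtemperature}(A))_{\leq N}$. Passing to the norm limit then verifies \eqref{eq:paper-kms-boundary-condition} for every entire analytic element directly.
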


The heat-kernel assertions are Proposition \ref{prop:paper-compact-heat-kernel} and Corollary \ref{cor:paper-trace-summable-heat-kernel}. The Gibbs state and the block dynamics are \eqref{eq:paper-intrinsic-gibbs-state} and \eqref{eq:paper-intrinsic-block-dynamics}, and the KMS property is Proposition \ref{prop:paper-intrinsic-gibbs-kms-state}.

\begin{thm}[Main intrinsic finite-mode product formula]\label{thm:paper-main-intrinsic-finite-mode-product-formula}
Fix a nonzero finite-dimensional complex subspace $\sphilbfrak{k}\subset\sphilbfrak{h}$ and
$\sminvtemperature>0$.
Let $\trace_n$ and $\trace$ be the traces in \eqref{eq:paper-standard-corner-trace} and
\eqref{eq:paper-intrinsic-standard-trace}, respectively.
Assume that the kinetic and potential families satisfy the potential superstability
\eqref{eq:paper-main-corner-potential-superstability} and the Gaussian summability
\eqref{eq:paper-main-kinetic-heat-trace-summability}.
Then the heat-kernel family $\seq{\napiernum^{-\sminvtemperature\rbk{T_n+V_n}}}{n\in\monnat}$ and, for every
$M\in\semigrposint$, the split-product family
$\seq{\rbk{\napiernum^{-\frac{\sminvtemperature}{M}V_n}\napiernum^{-\frac{\sminvtemperature}{M}T_n}}^{M}}{n\in\monnat}$
are trace summable, and
$$\begin{aligned}
\lim_{M\to\infty}
\sum_{n\in\monnat}
\norm{
\rbk{
\napiernum^{-\frac{\sminvtemperature}{M}V_n}
\napiernum^{-\frac{\sminvtemperature}{M}T_n}}^{M}
-
\napiernum^{-\sminvtemperature\rbk{T_n+V_n}}
}_{1,\trace_n}
&=
0.
\end{aligned}$$
\end{thm}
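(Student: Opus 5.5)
The plan works sector by sector and then sums over \(n\) with a majorant that is uniform in \(M\).

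For fixed \(n\), the corner \(e_{\sphilbfrak{k},n}\oa{K}(\sphilbfrak{k})^{\gamma}e_{\sphilbfrak{k},n}\) is a finite-dimensional full matrix algebra with unit \(e_{\sphilbfrak{k},n}\), and \(\trace_n\) is the ordinary matrix trace. So \(T_n,V_n\) are Hermitian matrices, and the finite-dimensional Lie--Trotter formula gives
\[
\rbk{\napiernum^{-\frac{\sminvtemperature}{M}V_n}\napiernum^{-\frac{\sminvtemperature}{M}T_n}}^{M}\to\napiernum^{-\sminvtemperature\rbk{T_n+V_n}}
\]
in operator norm as \(M\to\infty\). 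Since all norms on a finite-dimensional space are equivalent, the convergence also holds in \(\norm{\cdot}_{1,\trace_n}\). Hence each individual summand tends to \(0\). The statement then follows from dominated convergence for series, provided we find a summable majorant \(w_n\), independent of \(M\), for both terms of the difference.

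For the split products, set \(\phi_n=an^2-bn-c\), so that \(V_n\geq\phi_n e_{\sphilbfrak{k},n}\) by \eqref{eq:paper-main-corner-potential-superstability}. The bound \(\norm{\napiernum^{-\frac{\sminvtemperature}{M}V_n}}\leq\napiernum^{-\frac{\sminvtemperature}{M}\phi_n}\) follows by spectral calculus. Write \(X=\napiernum^{-\frac{\sminvtemperature}{M}V_n}\) and \(Y=\napiernum^{-\frac{\sminvtemperature}{M}T_n}\). Apply the Hölder inequality for the Schatten norms of the matrix trace to the product \(XYXY\cdots XY\) with \(M\) factors \(Y\), each taken in the \(M\)-Schatten norm. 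This gives
\[
\norm{\rbk{XY}^M}_{1,\trace_n}\leq\norm{X}^M\,\norm{Y}_{M}^{M}=\napiernum^{-\sminvtemperature\phi_n}\,\sqfun{\trace_n}{\napiernum^{-\sminvtemperature T_n}},
\]
because \(\norm{Y}_M^M=\sqfun{\trace_n}{Y^M}\).

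For the Gibbs term, the Golden--Thompson inequality \(\sqfun{\trace_n}{\napiernum^{-\sminvtemperature(T_n+V_n)}}\leq\sqfun{\trace_n}{\napiernum^{-\sminvtemperature V_n}\napiernum^{-\sminvtemperature T_n}}\) gives the same bound. Alternatively, use \(T_n+V_n\geq T_n+\phi_n\) together with monotonicity of \(\sqfun{\trace_n}{\napiernum^{-\sminvtemperature(\cdot)}}\). Both routes yield the majorant
\[
w_n=\napiernum^{-\sminvtemperature\phi_n}\,\sqfun{\trace_n}{\napiernum^{-\sminvtemperature T_n}},
\]
which is summable by exactly the Gaussian summability \eqref{eq:paper-main-kinetic-heat-trace-summability}.

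This simultaneously gives the trace summability of the heat-kernel family and of every split-product family, since \(\sum_n\norm{\cdot}_{1,\trace_n}\) dominates \(\sum_n\sqfun{\trace_n}{\cdot}\) for positive elements. Note that the split product is not positive, so its trace summability should be read as finiteness of the summed trace norms. Membership in \(\oa{K}(\sphilbfrak{k})^{\gamma}\) also follows: the operator norms are bounded by \(w_n\), hence tend to \(0\), so the decomposition \eqref{eq:paper-main-finite-mode-block-algebra} converges.

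Finally, the difference in sector \(n\) has trace norm at most \(2w_n\). The summands tend to \(0\) for each \(n\), and \(\sum_n w_n<\infty\), so dominated convergence gives
\[
\lim_{M\to\infty}\sum_{n\in\monnat}\norm{\rbk{\napiernum^{-\frac{\sminvtemperature}{M}V_n}\napiernum^{-\frac{\sminvtemperature}{M}T_n}}^{M}-\napiernum^{-\sminvtemperature\rbk{T_n+V_n}}}_{1,\trace_n}=0.
\]

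The main obstacle is the \(M\)-uniform trace-norm majorant for the nonpositive split products, where \(T_n\) may have large norm as \(n\) grows. The Hölder step is where care is needed. The first attempt is to check the Schatten--Hölder inequality \(\norm{\prod_i A_i}_1\leq\prod_i\norm{A_i}_{p_i}\) with \(\sum_i 1/p_i=1\), applied with operator-norm factors \(X\) (\(p=\infty\)) and \(M\) factors \(Y\) in \(\norm{\cdot}_M\). This is standard for matrices and avoids any need for a kinetic norm bound.
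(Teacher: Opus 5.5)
Your proposal is correct and is essentially the paper's proof of Proposition~\ref{prop:intrinsic-trace-norm-product-formula}: sectorwise Lie--Trotter convergence transferred to $\|\cdot\|_{1,\operatorname{Tr}_n}$ by finite dimensionality, the Schatten--H\"older majorant $\mathrm{e}^{-\beta(an^2-bn-c)}\operatorname{Tr}_n[\mathrm{e}^{-\beta T_n}]$ uniform in $M$, and dominated convergence over $n$. The one difference is the bound for the heat kernel itself: you get it from Golden--Thompson or trace monotonicity, whereas the paper lets $M\to\infty$ in the split-product majorant using the sectorwise trace-norm limit, so it needs no further inequality.
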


This theorem is Proposition \ref{prop:intrinsic-trace-norm-product-formula}. On a lattice, Lemma \ref{lem:paper-summable-sector-majorant} derives the Gaussian summability from \eqref{eq:paper-corner-dimension-growth} and \eqref{eq:paper-main-linear-kinetic-bound}. For infinite-dimensional sectors with possibly unbounded kinetic operators, Proposition \ref{prop:paper-gibbs-sector-product-formula} gives the same conclusion under the kinetic Gibbs condition and sectorwise trace-norm Trotter convergence.

\begin{thm}[Closed-path expansion]\label{thm:paper-main-closed-path-expansion}
Assume the hypotheses of Theorem \ref{thm:paper-main-intrinsic-finite-mode-product-formula}, and assume that each
$V_n$ is diagonal in a fixed decomposition of the corner unit into mutually orthogonal minimal projections.
Then, for every number $M$ of time steps, the trace of the split product is the particle-number cutoff limit of
a sum over closed paths of length $M$ in the labels of these projections.
The weight of a closed path is the product of the potential factors along the path and the matrix elements of
the kinetic factors between consecutive labels.
As $M\to\infty$, these finite-time partition functions converge to
$\sum_{n\in\monnat}\sqfun{\trace_{n}}{\napiernum^{-\sminvtemperature\rbk{T_n+V_n}}}$.
If the path weights are nonnegative, their normalization is a probability measure on the closed paths, and
this measure is reflection positive under a reflection-invariance and positivity condition on the weights.
\end{thm}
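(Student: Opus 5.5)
The plan is to work entirely inside the finite matrix corners and then pass to the particle-number cutoff. Fix $M\in\semigrposint$ and $n\in\monnat$. By hypothesis $e_{\sphilbfrak{k},n}=\sum_{x\in X_n}p_x$, where the $p_x$ are mutually orthogonal minimal projections of the $n$th corner, the label set $X_n$ is finite (the corner is a full matrix algebra), and $V_n=\sum_{x}v_n(x)p_x$ with $v_n(x)\in\fldreal$. Minimality gives $p_xAp_y$ a one-dimensional space for each $A$ in the corner. Choose matrix units $u_{xy}$ with $u_{xx}=p_x$ and write $p_xAp_y=A_{xy}u_{xy}$. Then $\sqfun{\trace_n}{u_{xy}}=\delta_{xy}$ by \eqref{eq:paper-standard-corner-trace}.

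Insert the resolution $e_{\sphilbfrak{k},n}=\sum_x p_x$ between all $2M$ factors of $\rbk{\napiernum^{-\frac{\sminvtemperature}{M}V_n}\napiernum^{-\frac{\sminvtemperature}{M}T_n}}^{M}$. Taking $\trace_n$ gives the finite sum
$$\sum_{x_0,\dots,x_{M-1}\in X_n}\prod_{k=0}^{M-1}\napiernum^{-\frac{\sminvtemperature}{M}v_n(x_k)}\rbk{\napiernum^{-\frac{\sminvtemperature}{M}T_n}}_{x_kx_{k+1}},\qquad x_M=x_0.$$
This is exactly the sum over closed paths of length $M$ with the stated weight. Summing over $0\leq n\leq N$ gives the cutoff partition function on paths in $\bigsqcup_{n\le N}X_n$; labels in different sectors never connect, because the factors are block diagonal.

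By Theorem \ref{thm:paper-main-intrinsic-finite-mode-product-formula}, the split-product family is trace summable. Its traces are therefore absolutely summable in $n$, since $\abs{\sqfun{\trace_n}{S}}\le\norm{S}_{1,\trace_n}$, so the cutoff limit $N\to\infty$ exists and equals $\sum_n\sqfun{\trace_n}{\cdot}$ of the split product. This is why I phrase the result as a cutoff limit rather than an unconditional sum over all paths. The same inequality and the trace-norm convergence in that theorem give
$$\abs{\sum_n\sqfun{\trace_n}{S_{n,M}}-\sum_n\sqfun{\trace_n}{\napiernum^{-\sminvtemperature(T_n+V_n)}}}\le\sum_n\norm{S_{n,M}-\napiernum^{-\sminvtemperature(T_n+V_n)}}_{1,\trace_n}\to0,$$
which is the convergence of the finite-time partition functions.

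For the probabilistic part, assume that all weights are nonnegative. Then the total $Z_M$ equals the sum of the weights, which is finite by the previous step. It is positive because the traces of the positive operators $\napiernum^{-\frac{\sminvtemperature}{2M}V_n}\napiernum^{-\frac{\sminvtemperature}{M}T_n}\napiernum^{-\frac{\sminvtemperature}{2M}V_n}$, whose product has the same trace by cyclicity, are positive. Normalizing the weights by $Z_M$ therefore defines a probability measure on the countable set of closed paths.

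Reflection positivity is the step where I expect the main obstacle, because the condition has to be formulated. I would take $M$ even and the time reflection $\vartheta\colon k\mapsto M-k$ modulo $M$, and require that the kinetic matrix elements be real and symmetric, $\rbk{\napiernum^{-\frac{\sminvtemperature}{M}T_n}}_{xy}=\rbk{\napiernum^{-\frac{\sminvtemperature}{M}T_n}}_{yx}\ge0$, which is the reflection invariance of the weights.

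For a function $F$ of the path at times $0,\dots,M/2$, I would split the weight at the times $0$ and $M/2$. This gives $\prbexp[\cmpconj{\vartheta F}\,F]=Z_M^{-1}\sum_{x_0,x_{M/2}}\cmpconj{G(x_0,x_{M/2})}G(x_0,x_{M/2})\ge0$, where $G$ is the half-path sum of $F$ times the half weight. Here the potential factors at the two fixed times $0$ and $M/2$ are split as square roots $\napiernum^{-\frac{\sminvtemperature}{2M}v_n}$ on each side. This splitting is the delicate bookkeeping point I would verify carefully.
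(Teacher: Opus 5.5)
Your proof is correct. The path expansion, the particle-number cutoff limit and the convergence of the finite-time partition functions follow the paper's argument (Propositions \ref{prop:paper-matrix-open-path-expansion} and \ref{prop:paper-abstract-path-integral}). Writing the weights through matrix units instead of the trace formula \eqref{eq:paper-closed-path-weight} is harmless, because a cyclic product of matrix elements does not depend on the phases of the chosen units.

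Where you genuinely depart from the paper is the probabilistic part.
\begin{itemize}
\item In Proposition \ref{prop:path-weight-probability-criterion} the paper \emph{assumes} $Z_M>0$, the invariance $W_M\circ\vartheta_M=W_M$, and the weighted inequality \eqref{eq:path-weight-reflection-input} for an arbitrary sub-$\sigma$-algebra. Reflection positivity then follows just by dividing by $Z_M$. This covers every $M$ and every sub-$\sigma$-algebra, but the positivity content is inserted by hypothesis.
\item You prove $Z_M>0$ instead. With $A=e^{-\beta V_n/(2M)}$ and $B=e^{-\beta T_n/M}$, the trace of $(A^2B)^M$ equals that of $(ABA)^M$, and $ABA$ is an invertible positive element of a nonzero corner. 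So hypothesis \eqref{eq:paper-finite-time-path-partition-function} is redundant.
\item For even $M$ and the $\sigma$-algebra generated by $(n,\omega_0,\ldots,\omega_{M/2})$, you derive the reflection inequality by splitting the weight at the fixed times $0$ and $M/2$, and that bookkeeping is correct. What this buys is a verification from a checkable condition, at the price of restricting to even $M$ and one specific $\sigma$-algebra.
\end{itemize}

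Your real-symmetric hypothesis on the kinetic matrix elements is more than you need. Self-adjointness of $T_n$ already gives $K_{yx}=\overline{K_{xy}}$ for $K=e^{-\beta T_n/M}$. Two consequences follow.
\begin{itemize}
\item For every $M$, $W_M\circ\vartheta_M=\overline{W_M}$, so \eqref{eq:path-weight-reflection-invariance} is automatic once the weights are real.
\item In your splitting, the second half-weight is the complex conjugate of the first half-weight evaluated on the reflected half-path. The sector-wise finite sums therefore equal $\sum_{x_0,x_{M/2}}|G_n(x_0,x_{M/2})|^2\geq 0$ with no further assumption. Absolute convergence of the total sum then follows from boundedness of $F$ and summability of the nonnegative weights.
\end{itemize}
Hence, for even $M$, both \eqref{eq:path-weight-reflection-invariance} and \eqref{eq:path-weight-reflection-input} follow from nonnegativity of the weights alone. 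Also, calling your symmetry condition ``the reflection invariance of the weights'' is a mislabel: it is a sufficient condition for that invariance, not the invariance itself.
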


The open-path operator identity is Proposition \ref{prop:paper-matrix-open-path-expansion}, the closed-path expansion and the partition-function limit are Proposition \ref{prop:paper-abstract-path-integral} with the weights \eqref{eq:paper-closed-path-weight}, and the probabilistic statements are Proposition \ref{prop:path-weight-probability-criterion}. These closed-path expansions are the finite-mode starting point for path integrals on the abstract Buchholz algebra.

\subsection{Spatial localization}\label{spatial-localization}

In this subsection, we define the local and quasilocal subalgebras and state their properties for the lattice one-particle space \(\sphilbfrak{h}=\fun{\lpseq^{2}}{\Gamma}\).

For \(\Lambda\in\setpowfin{\Gamma}\), define the local algebras, using the local inclusions \(i_{\Lambda,\Gamma}\) in \eqref{eq:paper-main-spatial-inclusion-types} when \(\Lambda\neq\emptyset\), by \begin{equation}
\begin{aligned}
\oa{B}_{\Lambda}
&=
\begin{cases}
\fldcmp 1_{\fun{\oa{B}}{\sphilbfrak{h}}},&\Lambda=\emptyset,\\
\fun{i_{\Lambda,\Gamma}}{\oa{M}(\oa{K}(\sphilbfrak{h}_{\Lambda})^{\gamma})},
&\emptyset\neq\Lambda\in\setpowfin{\Gamma}.
\end{cases}
\end{aligned}
\label{eq:paper-main-local-buchholz-algebra}
\end{equation} The quasilocal subalgebra is \begin{equation}
\begin{aligned}
\oa{B}_{\mathrm{ql}}
&=
\gtclos{
\bigcup_{\Lambda\in\setpowfin{\Gamma}}
\oa{B}_{\Lambda}
}^{\norm{\mathord{\cdot}}_{\fun{\oa{B}}{\sphilbfrak{h}}}}.
\end{aligned}
\label{eq:paper-main-quasilocal-algebra}
\end{equation}

\begin{thm}[Spatial localization on a lattice]\label{thm:paper-main-spatial-localization}
In the Fock realization, spatial localization data in the sense of Definition
\ref{def:paper-main-spatial-localization-data} exist: the local algebra of a finite region consists of the
gauge-invariant operators acting on the modes in the region and trivially on the other modes, local algebras of
nested regions are compatible, those of disjoint regions commute, and their union is dense at every fixed
particle cutoff.
The quasilocal algebra $\oa{B}_{\mathrm{ql}}$ is a proper subalgebra of $\fun{\oa{B}}{\sphilbfrak{h}}$: for a
unit vector $f$ with infinitely many nonzero coordinates in the standard basis, the element realized by the
resolvent $\rbk{1+N_f}^{-1}$ of the number $N_f$ of particles in the mode $f$ has distance $1/2$ from
$\oa{B}_{\mathrm{ql}}$, although each of its fixed particle-cutoff coordinates is approximated by coordinates of
local elements.
\end{thm}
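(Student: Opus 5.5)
The proof has two parts: first construct the localization data inside the Fock realization, then produce an explicit nonlocal element and compute its distance from the local algebras.

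\textbf{Construction of the localization data.} Work in the bosonic Fock space over \(\sphilbfrak{h}=\fun{\lpseq^{2}}{\Gamma}\) and use the factorization \(\spfockgamma(\sphilbfrak{h})\simeq\spfockgamma(\sphilbfrak{h}_{\Lambda})\otimes\spfockgamma(\sphilbfrak{h}_{\Gamma\setminus\Lambda})\). The Fock-realization theorems take the exact-particle and inverse-limit formulas as input and identify \(\oa{B}_n(\sphilbfrak{h})\) with the represented gauge-fixed resolvent image and \(\oa{C}_m(\sphilbfrak{h})\) with the compact operators on \(\sphilbfrak{h}^{\otimes_{\txtsym}m}\). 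By these identifications, \(\fun{\oa{B}}{\sphilbfrak{h}}\) is realized as the concrete Buchholz algebra. The finite-mode algebra \(\oa{M}(\oa{K}(\sphilbfrak{h}_{\Lambda})^{\gamma})\) is realized, through the Stone--von Neumann identification of Proposition \ref{prop:paper-particle-block-independence}, as the bounded block-diagonal operators \(\bigoplus_m \fun{\opspbddlin}{\sphilbfrak{h}_{\Lambda}^{\otimes_{\txtsym}m}}\), each block being a full matrix algebra. I will set \(i_{\Lambda,\Gamma}(K)=K\otimes 1\). I will then check three things.
\begin{enumerate}
\item This element lies in the concrete Buchholz algebra: its particle-cutoff coordinates are finite sums of compact-on-\(m\)-particle blocks, by \cite[Lemma 3.3]{DetlevBuchholz002}.
\item The supremum norm is preserved, since the cutoff coordinates exhaust the blocks.
\item The map is injective and unital.
\end{enumerate}
The maps \(i_{\Lambda,\Lambda'}\) are defined the same way on \(\spfockgamma(\sphilbfrak{h}_{\Lambda'})\). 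Compatibility \eqref{eq:paper-main-spatial-inclusion-compatibility} follows from associativity of the tensor factorization. Commutation for disjoint regions holds because the ranges act on different tensor factors. For density \eqref{eq:paper-main-local-coordinate-density}, note that \(P_\Lambda\to 1\) strongly on \(\sphilbfrak{h}\). Hence \(P_\Lambda^{\otimes m}\to1\) strongly on each \(m\)-particle space, and compressions of compact operators converge in norm. Each coefficient of \(\oa{B}_n\), which is compact on \(\sphilbfrak{h}^{\otimes_{\txtsym}m}\) for \(m\le n\), is therefore a norm limit of local coefficients. Combining this with the finite triangular reconstruction \(\Pi_n\) gives density at cutoff \(n\).

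\textbf{Nonlocal element.} Let \(R=(1+N_f)^{-1}\), where \(f\) has infinite coordinate support. On the exact \(m\)-particle space, \(R\) restricted there equals \(\sum_k (1+k)^{-1}\) times the projection onto the states with \(k\) particles in mode \(f\). Its coordinates are compact after the identity-factor subtractions, so \(R\in\fun{\oa{B}}{\sphilbfrak{h}}\). Fixed-cutoff approximability then follows from the density statement already proved.

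For the lower bound \(\topdist(R,\oa{B}_{\mathrm{ql}})\ge 1/2\), take \(L\in\oa{B}_\Lambda\) with \(\Lambda\) finite; this suffices because the union of the local algebras is dense in \(\oa{B}_{\mathrm{ql}}\). Pick a unit vector \(g\) orthogonal to \(\sphilbfrak{h}_\Lambda\) and a unit vector \(u\in\sphilbfrak{h}_\Lambda\). The plan is to compare vectors of the form \(\Omega_\Lambda\)-type local states tensored with coherent or number states in \(g\). \(L\) cannot distinguish these states, because it acts trivially on \(\spfockgamma(\sphilbfrak{h}_{\Gamma\setminus\Lambda})\). \(R\), however, takes expectation near \(1\) on the vacuum and near \(0\) on many-particle states whose \(g\)-component overlaps \(f\). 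Concretely:
\begin{enumerate}
\item Write \(f=f_\Lambda+f_\perp\) with \(f_\perp\ne0\), and set \(g=f_\perp/\norm{f_\perp}\).
\item Take \(\xi\otimes\Omega\) and \(\xi\otimes\psi_k\), where \(\psi_k\) is the \(k\)-particle state in \(g\).
\item Then \(L\) takes the same value on the local factor \(\xi\) in both, while \(\langle R\rangle\) differs by nearly \(1\) as \(k\to\infty\) with \(\xi=\Omega_\Lambda\).
\end{enumerate}
Hence \(\max\) of the two deviations is at least roughly \(1/2\), which gives \(\norm{R-L}\ge1/2\). The upper bound \(\le1/2\) comes from \(\norm{R-\tfrac12\cdot1}\le 1/2\), since \(0\le R\le1\) and \(1\in\oa{B}_\emptyset\).

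\textbf{Main obstacle.} The delicate step is the lower bound. The two test vectors must be handled so that \(\langle L\rangle\) genuinely coincides on them, which means controlling off-diagonal terms of \(L\) between particle sectors. I expect this to require choosing states that are product vectors with identical local factor, and then letting \(k\to\infty\) so that \(\langle\psi_k,(1+N_f)^{-1}\psi_k\rangle\to0\). The decay comes from \(N_f\) having, on \(\psi_k\), a binomial distribution with mean \(k\norm{f_\perp}^2\to\infty\).
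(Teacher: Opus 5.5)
Your proposal is correct. Its first half is essentially the paper's proof of Proposition~\ref{prop:concrete-spatial-localization-system}: tensor-factor embeddings, associativity for nested regions, different tensor factors for disjoint regions, and coefficient density through a common finite region. The genuine difference is in the distance statement.

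\textbf{Lower bound.} The paper's main lower bound in Theorem~\ref{thm:paper-rank-one-tail-geometry} uses the state-oscillation formula (Hahn--Banach and Jordan decomposition). It takes the escaping vectors $e_\Lambda^{\otimes n_\Lambda}$ with $n_\Lambda=\lfloor p_\Lambda^{-2}\rfloor$ and passes to a Banach--Alaoglu limit state $\psi_\infty$. This state agrees with the vacuum on all of $\mathcal{B}_{\mathrm{ql}}$ and satisfies $\psi_\infty(A_f)=0$.

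You instead fix $L=D\otimes1\in\mathcal{B}_\Lambda$ and compare the vacuum with $g^{\otimes k}$, where $g=f_\perp/\|f_\perp\|$. Both vectors give $L$ the value $\langle\Omega_\Lambda,D\Omega_\Lambda\rangle$. On $R$ they give $1$ and $\frac{1-(1-p)^{k+1}}{(k+1)p}\to0$, with $p=\|f_\perp\|^2>0$. Hence $\|R-L\|\geq\frac12$ for every local $L$.

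This is the mechanism of the paper's own quasilocal minimax step. There the test vectors $v_\Lambda^{\otimes n}$ and $e_\Lambda^{\otimes n}$ sit in the same sector, giving the finer bound $\sup_n d_{\Lambda,n}(A_f)\geq\frac12$. Your version needs no limit states or duality. The paper's version yields a single pair of states agreeing on the whole quasilocal algebra, which it reuses for the Banach quotient and the germ counterexamples.

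\textbf{Fixed-cutoff approximability.} You obtain this directly from the density property. The paper instead exhibits explicit local approximants $A_{f_\Lambda}$ with $\|p_n(A_{f_\Lambda}-A_f)\|\leq 2n\|f_\Lambda-f\|$.

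\textbf{Two small points.}
\begin{enumerate}
\item The off-diagonal obstacle you anticipate does not occur. $L$ is exactly $D\otimes1$, and your test vectors are products with local factor $\Omega_\Lambda$.
\item The membership claims need more than the citation of \cite[Lemma 3.3]{DetlevBuchholz002}, which only describes the target algebra. For $K\otimes1$ you still need the finite Newton inversion $c_m=D_m-\sum_{j<m}E_{\mathfrak{f},j,m}(c_j)$. For $R$ you need the identity $\frac{1}{k+1}=\sum_{m=0}^{k}\frac{(-1)^m}{m+1}\binom{k}{m}$. It gives the $n$-independent coefficients $\frac{(-1)^m}{m+1}P_{f^{\otimes m},f^{\otimes m}}$, and hence compatibility with $\kappa_n$.
\end{enumerate}
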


The existence and the three properties of the local algebras are Proposition \ref{prop:concrete-spatial-localization-system}, and their Fock images are Corollary \ref{cor:paper-fock-local-algebras}. The distance statement is Theorem \ref{thm:paper-rank-one-tail-geometry}, specialized to the standard basis of \(\fun{\lpseq^{2}}{\Gamma}\) in Appendix \ref{sec:paper-graph-spatial-specialization}, with the one-mode resolvent of Section \ref{sec:paper-one-mode-resolvent}.

\subsection{Fock-space realization of the abstract objects}\label{sec:paper-main-fock-realization}

In this subsection, we state how the regular bosonic Fock representation realizes the abstract objects; the known exact-particle and inverse-limit descriptions \cite[Lemmas 3.3 and 3.4 and Theorem 3.5]{DetlevBuchholz002} enter only as input.

\begin{thm}[Fock realization]\label{thm:paper-main-fock-realization}
In the regular bosonic Fock representation, the following statements hold.
\begin{enumerate}
\item
For every nonzero finite-dimensional complex subspace $\sphilbfrak{f}\subset\sphilbfrak{h}$, the minimal ideal
$\oa{K}(\sphilbfrak{f})$ is represented by the compact operators on the Fock space over $\sphilbfrak{f}$, and
$e_{\sphilbfrak{f},m}$ by the projection onto the $m$-particle space
$\sphilbfrak{f}^{\otimes_{\txtsym}m}$.
\item
The cutoff ideal $\oaideal{I}_{>n}(\sphilbfrak{h})$ is the kernel of the restriction to the $n$-particle space.
Hence $\oa{B}_n(\sphilbfrak{h})$ is the exact $n$-particle algebra of \cite[Lemma 3.3]{DetlevBuchholz002}, and
$\fun{\oa{B}}{\sphilbfrak{h}}$ is faithfully represented as the concrete Buchholz algebra of
\cite[Theorem 3.5]{DetlevBuchholz002}.
\item
The coefficient algebra $\oa{C}_m(\sphilbfrak{h})$ is the algebra of compact operators on
$\sphilbfrak{h}^{\otimes_{\txtsym}m}$.
The embedding $E_{m,n}$ places a degree-$m$ coefficient on every choice of $m$ of the $n$ tensor factors and
sums the results, and $\mu_{p,q}^{r}$ sums the products of two such placements whose union has $r$ factors.
\end{enumerate}
\end{thm}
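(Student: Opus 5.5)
We prove the three items of Theorem \ref{thm:paper-main-fock-realization} in order. Each item reduces to a finite-mode computation combined with the cited concrete results \cite[Lemmas 3.3 and 3.4 and Theorem 3.5]{DetlevBuchholz002}, which we take as input.

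For item 1, fix a nonzero finite-dimensional $\sphilbfrak{f}$. By the Stone--von Neumann uniqueness theorem, every regular irreducible representation of $\oaresolventalgebra(\sphilbfrak{f})$ is unitarily equivalent to the regular Fock representation $\pi_{\sphilbfrak{f}}$ on $\spfockgamma(\sphilbfrak{f})$. This representation is faithful on $\oaresolventalgebra(\sphilbfrak{f})$ in the finite-dimensional case, and its image contains the compact operators $\oa{K}(\spfockgamma(\sphilbfrak{f}))$ as a minimal ideal (the Buchholz--Grundling description of finite-dimensional resolvent algebras). We then argue as follows.
\begin{enumerate}
\item Every nonzero closed ideal of $\oaresolventalgebra(\sphilbfrak{f})$ has nonzero image under $\pi_{\sphilbfrak{f}}$, and that image meets the irreducible image algebra in an ideal containing the compacts, because the compacts are simple and essential in the image.
\item Hence the intersection \eqref{eq:paper-main-intrinsic-minimal-ideal} is exactly $\pi_{\sphilbfrak{f}}^{-1}(\oa{K}(\spfockgamma(\sphilbfrak{f})))$.
\item For the projections $e_{\sphilbfrak{f},m}$ we invoke Proposition \ref{prop:paper-particle-block-independence}: $\pi_{\sphilbfrak{f}}(e_{\sphilbfrak{f},m})=\msrprb_{N_{\sphilbfrak{f}}}(\setone{m})$.
\item In Fock space, $N_{\sphilbfrak{f}}$ is the number operator, since $\napiernum^{\imunit\theta N_{\sphilbfrak{f}}}=\spfockgamma(\napiernum^{\imunit\theta})$ implements $\gamma_\theta$. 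The normalization fixes the additive constant, because the vacuum is the unique gauge-invariant vector with eigenvalue $0$ and the $m$-particle space has eigenvalue $m$. So the spectral projection is the projection onto $\sphilbfrak{f}^{\otimes_{\txtsym}m}$.
\end{enumerate}

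For item 2, represent $\oaresolventalgebra(\sphilbfrak{h},\sigma)^{\gamma}$ on the full Fock space $\spfockgamma(\sphilbfrak{h})$ and compress it to $\bigoplus_{m\le n}\sphilbfrak{h}^{\otimes_{\txtsym}m}$.
\begin{enumerate}
\item \emph{Inclusion $\subset$.} Under $\spfockgamma(\sphilbfrak{h})\simeq\spfockgamma(\sphilbfrak{f})\otimes\spfockgamma(\sphilbfrak{f}^{\perp})$, the element $\iota_{\sphilbfrak{f}}(Q_{\sphilbfrak{f},>n})$ is represented by the projection onto states with more than $n$ particles in $\sphilbfrak{f}$, tensored with the identity. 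These states have total particle number greater than $n$. Gauge-fixed elements commute with the total number operator, so the whole ideal $\oaideal{I}_{>n}(\sphilbfrak{h})$ is killed by restriction to the $\le n$-particle space.
\item \emph{Inclusion $\supset$.} Let $A$ be gauge-fixed with zero restriction. We show $A\in\oaideal{I}_{>n}(\sphilbfrak{h})$ by approximating $A$ in norm by elements of $\oaresolventalgebra(\sphilbfrak{h})^{\gamma}\iota_{\sphilbfrak{f}}(Q_{\sphilbfrak{f},>n})\oaresolventalgebra(\sphilbfrak{h})^{\gamma}$. For this we use that the resolvent algebra is generated by finite-dimensional resolvent algebras together with the concrete exact-particle structure \cite[Lemma 3.3]{DetlevBuchholz002}. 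The cited lemma identifies the restricted image with the finite sum of compact-on-$m$-particles blocks. That image is already a quotient of $\oaresolventalgebra(\sphilbfrak{h})^{\gamma}$, and comparing kernels yields the claim.
\item The quotient $\oa{B}_n(\sphilbfrak{h})$ is then the exact $n$-particle algebra. By \cite[Lemma 3.4]{DetlevBuchholz002}, the connecting maps $r_n$ correspond to the concrete restriction maps, so \cite[Theorem 3.5]{DetlevBuchholz002} identifies the bounded inverse limit \eqref{eq:paper-main-cutoff-inverse-limit} with the concrete Buchholz algebra, with a faithful representation.
\end{enumerate}
This is the step I expect to be the main obstacle, specifically the reverse inclusion in step 2 ($\ker\subset\oaideal{I}_{>n}$). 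Here one must show that a single cutoff relation on finite-mode subalgebras generates the entire kernel, which requires an approximate-unit argument with $1-Q_{\sphilbfrak{f},>n}$ as $\sphilbfrak{f}\nearrow\sphilbfrak{h}$. I would first establish strong convergence of these complementary projections to the identity on the $\le n$-particle space, and then convert it into norm estimates using the compactness built into the exact-particle blocks.

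For item 3, item 1 represents each corner $e_{\sphilbfrak{f},m}\oa{K}(\sphilbfrak{f})^{\gamma}e_{\sphilbfrak{f},m}$ as $\oa{K}(\sphilbfrak{f}^{\otimes_{\txtsym}m})$, which is a full matrix algebra. Under this identification the connecting maps $j_{\sphilbfrak{f},\sphilbfrak{g},m}$ are the inclusions extending by zero on the orthocomplement. The inductive limit of these inclusions is the norm closure of the union, namely $\oa{K}(\sphilbfrak{h}^{\otimes_{\txtsym}m})$. It remains to identify the coefficient maps, which we verify on finite-rank coefficients and extend by continuity.
\begin{enumerate}
\item Define $E_{m,n}(a)$ as the symmetrized sum over $m$-element subsets of the $n$ tensor factors, placing $a$ on the chosen factors and the identity on the others. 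This matches \cite[Lemma 3.4]{DetlevBuchholz002}, where removing an identity factor gives $r_n\circ E_{m,n}=E_{m,n-1}$ and $r_n\circ E_{n,n}=0$.
\item Multiplying two such placements, on subsets $S$ and $T$, produces a placement on $S\cup T$ with $\abs{S\cup T}=r$ ranging over $\max\setone{p,q}\le r\le p+q$. Collecting the terms by $r$ defines $\mu_{p,q}^{r}$ and gives \eqref{eq:paper-abstract-particle-coefficient-overlap-products}.
\end{enumerate}
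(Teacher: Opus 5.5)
Your treatment of items 1 and 3, and of the inclusion $\mathcal{I}_{>n}(\mathfrak{h})\subset\operatorname{Ker}\rho_n$, is essentially the paper's argument: Stone--von Neumann with the minimal-ideal formula, the number-operator identification of $e_{\mathfrak{f},m}$, zero-extension connecting maps, and subset-sum placements. Restricting to the $\le n$-particle space instead of $\mathfrak{h}^{\otimes_{\mathrm{s}}n}$ is harmless, because $\operatorname{Ker}\rho_n\subset\operatorname{Ker}\rho_{n-1}$ by \cite[Lemma 3.4]{DetlevBuchholz002}.

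The gap is the reverse inclusion $\operatorname{Ker}\rho_n\subset\mathcal{I}_{>n}(\mathfrak{h})$, which is the only genuinely new content of item 2. ``Comparing kernels'' does not prove it. \cite[Lemma 3.3]{DetlevBuchholz002} describes the image $\rho_n(\mathcal{R}(\mathfrak{h},\sigma)^{\gamma})$, but says nothing about whether the specific elements $\iota_{\mathfrak{f}}(Q_{\mathfrak{f},>n})$ generate the whole kernel.

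Your fallback plan also targets the wrong quantity. In the Fock representation, $E_{\mathfrak{f}}=\iota_{\mathfrak{f}}(1-Q_{\mathfrak{f},>n})$ is the projection onto vectors with at most $n$ particles in $\mathfrak{f}$. It therefore already acts as the identity on the $\le n$-particle space for every $\mathfrak{f}$, so there is no strong convergence to establish. What you actually need is $\inf_{\mathfrak{f}}\|E_{\mathfrak{f}}A\|=0$ for $A\in\operatorname{Ker}\rho_n$, since $A-E_{\mathfrak{f}}A=Q_{\mathfrak{f},>n}A\in\mathcal{I}_{>n}(\mathfrak{h})$. That is a bound uniform over all sectors $N>n$. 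Compactness of the coefficients cannot supply it: replacing a coefficient $a_m$ by a finite-rank operator supported in $\mathfrak{f}^{\otimes_{\mathrm{s}}m}$ with error $\delta$ can change $\widetilde{E}_{m,N}(a_m)$ by as much as $\binom{N}{m}\delta$, which is unbounded in $N$.

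The missing ingredient is the locality of the resolvent algebra. You need the norm density of $\bigcup_{\mathfrak{f}}\mathcal{R}(\mathfrak{f})^{\gamma}$ in $\mathcal{R}(\mathfrak{h},\sigma)^{\gamma}$, which is the statement following \cite[Lemma 3.1]{DetlevBuchholz002}, combined with an exact norm identity for local elements. Take $C\in\mathcal{R}(\mathfrak{f})^{\gamma}$. Under $\Gamma_{\mathrm{b}}(\mathfrak{h})\simeq\Gamma_{\mathrm{b}}(\mathfrak{f})\otimes\Gamma_{\mathrm{b}}(\mathfrak{f}^{\perp})$, the element $C$ acts as $\pi_{\mathrm{F},\mathfrak{f}}(C)\otimes1$ and commutes with the local number projections. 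Hence
$\|(\sum_{m\le n}e_{\mathfrak{f},m})C\|=\max_{m\le n}\|\pi_{\mathrm{F},\mathfrak{f}}(C)|_{\mathfrak{f}^{\otimes_{\mathrm{s}}m}}\|=\|\rho_n(C)\|$.
The last equality holds because $\mathfrak{h}^{\otimes_{\mathrm{s}}n}\simeq\bigoplus_{m=0}^{n}\mathfrak{f}^{\otimes_{\mathrm{s}}m}\otimes(\mathfrak{f}^{\perp})^{\otimes_{\mathrm{s}}(n-m)}$ has no zero summand when $\mathfrak{f}^{\perp}$ is infinite-dimensional.

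Now let $A\in\operatorname{Ker}\rho_n$ and $\varepsilon>0$, and choose such a $C$ with $\|A-C\|<\varepsilon$. Then $Q_{\mathfrak{f},>n}C\in\mathcal{I}_{>n}(\mathfrak{h})$ and $\|A-Q_{\mathfrak{f},>n}C\|\le\|A-C\|+\|\rho_n(C-A)\|<2\varepsilon$. This is the paper's argument in Theorem \ref{thm:paper-abstract-concrete-comparison}. It also yields the bound $\|E_{\mathfrak{f}'}A\|<2\varepsilon$ for all $\mathfrak{f}'\supset\mathfrak{f}$, which is the uniform estimate your approximate-unit approach was after.

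A smaller omission in item 3: you should check that each product of two placements, viewed on the $r$ factors of their union, is compact. Otherwise $\mu_{p,q}^{r}$ need not land in $\mathbb{K}(\mathfrak{h}^{\otimes_{\mathrm{s}}r})$. The paper verifies this with a Hilbert--Schmidt estimate on rank-one inputs in Proposition \ref{prop:paper-overlap-product}.
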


The first item is Theorem \ref{thm:paper-fock-intrinsic-block-realization}, the second is Theorem \ref{thm:paper-abstract-concrete-comparison} and Corollary \ref{cor:paper-fock-cutoff-realization}, and the third is Theorem \ref{thm:paper-concrete-particle-coefficient-reconstruction-data} and Corollary \ref{cor:paper-fock-degree-tail-ideals}. The Fock objects are defined in Section \ref{sec:paper-standard-fock-objects}.

\section{Abstract Buchholz algebra and particle-degree coefficients}\label{sec:general-separable-theory}

In this section, the representation-independent construction starts from a separable infinite-dimensional complex Hilbert space and does not use a graph, a coordinate basis, or a Fock representation.

\subsection{Finite-dimensional resolvent subalgebras}\label{finite-dimensional-resolvent-subalgebras}

In this subsection, the finite-dimensional resolvent subalgebras used in the representation-independent construction are identified.

For every nonzero finite-dimensional complex subspace \(\sphilbfrak{f}\subset\sphilbfrak{h}\), the algebra \(\oaresolventalgebra(\sphilbfrak{f})\) is formed with the nondegenerate restriction of \eqref{eq:abstract-symplectic-form}, as fixed in Section \ref{sec:paper-main-basic-objects}. The universal inclusion \(\iota_{\sphilbfrak{f}}\) is defined in \eqref{eq:paper-global-resolvent-inclusion} and is injective by \cite[Proposition 2.6(i)]{DetlevBuchholz001}. We identify its domain with the subalgebra \(\oaresolventalgebra(\sphilbfrak{f})\).

\begin{thm}[Minimal compact ideal]\label{thm:abstract-minimal-compact-ideal}
For every nonzero finite-dimensional complex subspace $\sphilbfrak{f}\subset\sphilbfrak{h}$, the algebra
$\oaresolventalgebra(\sphilbfrak{f})$ is type I in the sense of
\cite[Theorem 3.8(ii)]{DetlevBuchholz001}.
In particular, it is postliminal: the image of every irreducible representation contains the compact operators
on its representation Hilbert space.
The algebra has the unique minimal nonzero closed two-sided ideal
$\oa{K}(\sphilbfrak{f})$ defined in \eqref{eq:paper-main-intrinsic-minimal-ideal}, and this ideal is essential.
For every faithful regular irreducible representation as in
\cite[Theorem 5.4(vi)]{BuchholzGrundling002},
$\pi_{\sphilbfrak{f}}\colon\oaresolventalgebra(\sphilbfrak{f})\to\opspbddlin{\sphilbfrak{H}_{\sphilbfrak{f}}}$, one has
\begin{equation}
\begin{aligned}
\fun{\pi_{\sphilbfrak{f}}}{\oa{K}(\sphilbfrak{f})}
&=
\fun{\schattencls}{\sphilbfrak{H}_{\sphilbfrak{f}}}.
\end{aligned}
\label{eq:paper-minimal-ideal-representation}
\end{equation}
\end{thm}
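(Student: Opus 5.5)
The proof rests on the structure theory of Buchholz--Grundling for finite-dimensional symplectic spaces. \(\sphilbfrak{f}\) is a finite-dimensional complex subspace with the nondegenerate restricted form \(\sigma\). The type I statement is \cite[Theorem 3.8(ii)]{DetlevBuchholz001} applied to \(\sphilbfrak{f}\), and we quote it directly. Type I \(\oacstar\)-algebras are postliminal, and the standard characterization (Glimm, or Dixmier 4.3) gives that every irreducible image contains the compact operators. So the first two sentences of the statement are citations. The real work is the minimal ideal and formula \eqref{eq:paper-minimal-ideal-representation}.

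First fix a faithful regular irreducible representation \(\pi_{\sphilbfrak{f}}\) as in \cite[Theorem 5.4(vi)]{BuchholzGrundling002}. By Stone--von Neumann it is unique up to unitary equivalence, and it can be realized as the Schrödinger/Fock representation of the Weyl system over \(\sphilbfrak{f}\). Set \(\oa{J}=\pi_{\sphilbfrak{f}}^{-1}(\schattencls(\sphilbfrak{H}_{\sphilbfrak{f}}))\). This is a closed two-sided ideal, and it is nonzero by postliminality, since \(\pi_{\sphilbfrak{f}}\) is irreducible and its image contains \(\schattencls\). Faithfulness makes \(\pi_{\sphilbfrak{f}}\) restrict to an isomorphism \(\oa{J}\cong\schattencls(\sphilbfrak{H}_{\sphilbfrak{f}})\), so \(\oa{J}\) is simple.

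Next show that \(\oa{J}\) is contained in every nonzero closed ideal \(\oaideal{I}\).
\begin{itemize}
\item Suppose \(\oaideal{I}\cap\oa{J}\neq0\). Then \(\oaideal{I}\cap\oa{J}\) is a nonzero ideal of the simple algebra \(\oa{J}\), hence equals \(\oa{J}\), and so \(\oa{J}\subset\oaideal{I}\).
\item Otherwise \(\oaideal{I}\cap\oa{J}=0\), so \(\oaideal{I}\oa{J}=0\). Then \(\pi_{\sphilbfrak{f}}(\oaideal{I})\) annihilates \(\schattencls(\sphilbfrak{H}_{\sphilbfrak{f}})\), which acts nondegenerately. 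Hence \(\pi_{\sphilbfrak{f}}(\oaideal{I})=0\), and faithfulness gives \(\oaideal{I}=0\), a contradiction.
\end{itemize}
Therefore \(\oa{J}\) lies in every nonzero closed ideal. It is itself such an ideal, so the intersection \eqref{eq:paper-main-intrinsic-minimal-ideal} equals \(\oa{J}\). This intersection is nonzero and is the unique minimal nonzero closed ideal. The same argument shows that \(\oa{J}\) meets every nonzero ideal, so it is essential, and \eqref{eq:paper-minimal-ideal-representation} holds by construction.

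The one nontrivial input is the existence of a \emph{faithful} irreducible representation. I expect this to be the main point to check, and I will take it from \cite[Theorem 5.4(vi)]{BuchholzGrundling002}, where the regular irreducible representation of \(\oaresolventalgebra(\sphilbfrak{f})\) for finite-dimensional nondegenerate \(\sphilbfrak{f}\) is faithful. If that citation only gives faithfulness on the Weyl/resolvent generators, the fallback is \cite[Theorem 3.8]{DetlevBuchholz001}: the Schrödinger representation contains the compacts, its kernel is an ideal, and the resolvent algebra of a finite-dimensional space has no nonzero ideal annihilated in that representation, since the ideals correspond to regular representations and all of them factor through it.
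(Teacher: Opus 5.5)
Your argument is correct, but it takes a different route from the paper's. The paper's proof consists only of citations:
\begin{itemize}
\item the type I property is \cite[Theorem 3.8(ii)]{DetlevBuchholz001};
\item existence, uniqueness and essentiality of the minimal ideal, together with \eqref{eq:paper-minimal-ideal-representation}, come from \cite[Theorem 5.4(ii), (v), (vi)]{BuchholzGrundling002};
\item the identification with the intersection \eqref{eq:paper-main-intrinsic-minimal-ideal} is \cite[Theorem 4.5(i)]{DetlevBuchholz001}.
\end{itemize}

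You instead derive every ideal-theoretic assertion from two inputs. The first is that some irreducible regular representation $\pi$ is faithful. The second is that $\pi(\mathcal{R}(\mathfrak{f}))\supset\mathbb{K}(\mathfrak{H}_{\mathfrak{f}})$, which is the only place you use postliminality. The rest is an elementary argument from the simplicity of the compacts and their nondegenerate action.

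This gives a self-contained proof and isolates exactly which external facts matter. It also shows that $\pi^{-1}(\mathbb{K}(\mathfrak{H}_{\pi}))$ equals the representation-independent intersection for \emph{every} faithful irreducible $\pi$. You should state this explicitly, because it is what yields \eqref{eq:paper-minimal-ideal-representation} for every faithful regular irreducible representation, not just the one you fixed. Alternatively, appeal to Stone--von Neumann uniqueness, as you already do. The paper's version is shorter but leans on three separate external theorems.

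One caution: drop your fallback justification of faithfulness. Nonzero closed ideals of $\mathcal{R}(\mathfrak{f})$ arise as kernels of non-regular representations, not regular ones. So the claim that ``the ideals correspond to regular representations'' is false, and that sketch does not show the Schr\"odinger representation is faithful. The correct input is that every regular representation of the resolvent algebra over a nondegenerate symplectic space is faithful, \cite[Theorem 4.10]{BuchholzGrundling002}. The paper itself uses this result in Remark~\ref{rem:paper-abstract-faithfulness-criterion}.
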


\begin{proof}
The type-I assertion is \cite[Theorem 3.8(ii)]{DetlevBuchholz001}.
The existence, uniqueness, essentiality, and representation formula for the ideal follow from
\cite[Theorem 5.4(ii), (v), and (vi)]{BuchholzGrundling002}.
The intrinsic intersection formula is \cite[Theorem 4.5(i)]{DetlevBuchholz001}.
\end{proof}

\begin{rem}[Characterization of compact-operator algebras]\label{rem:paper-compact-operator-algebra-characterization}
Let $\oa{A}$ be a nonzero $\oacstar$-algebra.
Then $\oa{A}$ is $\ast$-isomorphic to $\fun{\schattencls}{\sphilbfrak{H}}$ for some nonzero complex Hilbert
space $\sphilbfrak{H}$ if and only if $\oa{A}$ is simple and contains a minimal projection in the sense of
Definition \ref{def:paper-simple-algebra-minimal-projection}.
For the nontrivial implication, choose a minimal projection $p\in\oa{A}$ and give $\oa{A}p$ the inner product
determined by
$$\begin{aligned}
\bkt{ap}{bp}_{\oa{A}p}p
&=
p\faadj{a}bp.
\end{aligned}$$
Left multiplication defines $\pi\colon\oa{A}\to\opspbddlin{\oa{A}p}$, and
$\fun{\pi}{ap\faadj{b}}=P_{ap,bp}$.
Simplicity gives $\gtclos{\oa{A}p\oa{A}}^{\norm{\mathord{\cdot}}}=\oa{A}$, so
$\fun{\pi}{\oa{A}}=\fun{\schattencls}{\oa{A}p}$.
The kernel of $\pi$ is an ideal and $\fun{\pi}{p}\neq0$; hence simplicity makes $\pi$ faithful.
Conversely, compact-operator algebras are simple and every rank-one projection is minimal.
In the present construction, \eqref{eq:paper-main-intrinsic-minimal-ideal} defines
$\oa{K}(\sphilbfrak{f})$, while Theorem \ref{thm:abstract-minimal-compact-ideal} and
\eqref{eq:paper-minimal-ideal-representation} identify its image as a compact-operator algebra.
Thus simplicity and the existence of minimal projections are consequences of the representation formula,
not additional hypotheses on $\oa{K}(\sphilbfrak{f})$.
The later finite-particle construction separately assumes that each corner is a full matrix algebra and uses
its minimal projections to normalize the traces in \eqref{eq:paper-standard-corner-trace}.
\end{rem}

\subsection{Intrinsic particle blocks and particle-cutoff quotients}\label{intrinsic-particle-blocks-and-particle-cutoff-quotients}

In this subsection, the intrinsic particle-block system produces the particle-cutoff ideals, quotients, and connecting maps used in the inverse system.

The intrinsic particle-block system in Definition \ref{def:intrinsic-particle-block-system} provides the local projections used in the global cutoff ideals \eqref{eq:abstract-high-particle-ideal}. The particle blocks and compression maps are hypotheses of the abstract construction, whereas gauge invariance and the decreasing family of cutoff ideals are consequences.

\begin{rem}[Hypotheses on the particle blocks and compression maps]
The universal resolvent algebra and the intrinsic minimal ideals do not determine the particle-block
system in Definition \ref{def:intrinsic-particle-block-system}.
The particle-block system is additional structure on the gauge-fixed minimal ideals
\eqref{eq:paper-main-gauge-fixed-minimal-ideal}.
The injectivity of the compression maps \eqref{eq:paper-coefficient-connecting-map} and their compatibility
\eqref{eq:paper-coefficient-connecting-compatibility} are likewise additional assumptions.
The particle-cutoff quotients
$\oa{B}_n(\sphilbfrak{h})$ in \eqref{eq:paper-main-particle-cutoff-quotient} imply neither condition.
In a chosen concrete representation, one instead proves these conditions within that representation.
\end{rem}

\begin{prop}[Gauge invariance and decreasing particle-cutoff ideals]\label{prop:paper-gauge-invariance-decreasing-particle-cutoff-ideals}
For every $n\in\monnat$ and every nonzero finite-dimensional complex subspace
$\sphilbfrak{f}\subset\sphilbfrak{h}$, the projection $Q_{\sphilbfrak{f},>n}$ in
\eqref{eq:abstract-high-particle-projection} is gauge invariant.
The ideals in \eqref{eq:abstract-high-particle-ideal} satisfy
$$\begin{aligned}
\oaideal{I}_{>n}(\sphilbfrak{h})
&\subset
\oaideal{I}_{>n-1}(\sphilbfrak{h})
&&\rbk{n\geq1}.
\end{aligned}$$
Consequently, the map $r_n$ in \eqref{eq:paper-main-cutoff-connecting-map} is a well-defined surjective
$\ast$-homomorphism for every $n\geq1$.
\end{prop}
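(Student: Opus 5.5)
The proof has three steps. First I show that the projections $Q_{\sphilbfrak{f},>n}$ are gauge invariant and lie in the unitization. Next I prove the ideal inclusion through an identity between consecutive cutoff projections. Finally I deduce that $r_n$ is well defined and surjective.

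For the first step, each $e_{\sphilbfrak{f},m}$ lies in $\oa{K}(\sphilbfrak{f})^{\gamma}$ by Definition \ref{def:intrinsic-particle-block-system}, so it is fixed by every $\gamma_\theta$. Here $\gamma_\theta$ means the restriction to $\oaresolventalgebra(\sphilbfrak{f})$ of the gauge action. This action is well defined because multiplication by $\napiernum^{\imunit\theta}$ preserves $\sphilbfrak{f}$ and $\sigma$, and it is a unital automorphism by universality. Hence $\gamma_\theta(1)=1$, and the finite sum \eqref{eq:abstract-high-particle-projection} is fixed by every $\gamma_\theta$. It is a projection because the $e_{\sphilbfrak{f},m}$ are mutually orthogonal projections, so their finite sum $P$ is a projection and $1-P$ is one as well. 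The inclusion $\iota_{\sphilbfrak{f}}$ intertwines the gauge actions on generators, since $\iota_{\sphilbfrak{f}}(\oaresolvent(\lambda,\napiernum^{\imunit\theta}f))=\oaresolvent(\lambda,\napiernum^{\imunit\theta}f)$. By continuity it intertwines them on the whole algebra. Therefore $\iota_{\sphilbfrak{f}}(Q_{\sphilbfrak{f},>n})\in\oaresolventalgebra(\sphilbfrak{h},\sigma)^{\gamma}$, and the set in \eqref{eq:abstract-high-particle-ideal} is a closed two-sided ideal of the fixed-point algebra.

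For the inclusion, fix $\sphilbfrak{f}$ and $n\geq1$. Since $Q_{\sphilbfrak{f},>n}=Q_{\sphilbfrak{f},>n-1}-e_{\sphilbfrak{f},n}$ and $e_{\sphilbfrak{f},n}$ is orthogonal to $e_{\sphilbfrak{f},m}$ for $m\le n-1$, we get $e_{\sphilbfrak{f},n}Q_{\sphilbfrak{f},>n-1}=e_{\sphilbfrak{f},n}$. It follows that
$$Q_{\sphilbfrak{f},>n}=Q_{\sphilbfrak{f},>n-1}Q_{\sphilbfrak{f},>n}=Q_{\sphilbfrak{f},>n-1}\,Q_{\sphilbfrak{f},>n}\,Q_{\sphilbfrak{f},>n-1}.$$
Applying the homomorphism $\iota_{\sphilbfrak{f}}$, every generator $A\,\iota_{\sphilbfrak{f}}(Q_{\sphilbfrak{f},>n})\,B$ with $A,B$ gauge invariant equals $A\,\iota_{\sphilbfrak{f}}(Q_{\sphilbfrak{f},>n-1})\,C\,B$, where $C=\iota_{\sphilbfrak{f}}(Q_{\sphilbfrak{f},>n})\iota_{\sphilbfrak{f}}(Q_{\sphilbfrak{f},>n-1})$ is gauge invariant by the first step. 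Hence each generator lies in $\oaideal{I}_{>n-1}(\sphilbfrak{h})$. Taking sums and norm closures gives $\oaideal{I}_{>n}(\sphilbfrak{h})\subset\oaideal{I}_{>n-1}(\sphilbfrak{h})$. No case distinction is needed for $n-1=0$, because $Q_{\sphilbfrak{f},>0}=1-e_{\sphilbfrak{f},0}$ and the same identity applies.

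Finally, the inclusion of kernels shows that the quotient map $\oaresolventalgebra(\sphilbfrak{h},\sigma)^{\gamma}\to\oa{B}_{n-1}(\sphilbfrak{h})$ vanishes on $\oaideal{I}_{>n}(\sphilbfrak{h})$. It therefore factors through $\oa{B}_n(\sphilbfrak{h})$, and this factorization is exactly $r_n$. It is a $\ast$-homomorphism because both quotient maps are. It is surjective because the quotient map onto $\oa{B}_{n-1}(\sphilbfrak{h})$ is surjective; equivalently, $r_n$ is the canonical isomorphism $\oa{B}_n/(\oaideal{I}_{>n-1}/\oaideal{I}_{>n})\cong\oa{B}_{n-1}$. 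The only delicate point is the one settled in the first step: the factor $\iota_{\sphilbfrak{f}}(Q)$ must lie in the gauge-fixed algebra, so that the rewritten generators have coefficients inside $\oaresolventalgebra(\sphilbfrak{h},\sigma)^{\gamma}$.
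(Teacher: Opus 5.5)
Your proof is correct and follows the paper's argument essentially step for step. You derive gauge invariance of $Q_{\mathfrak{f},>n}$ from that of the $e_{\mathfrak{f},m}$ and the gauge equivariance of $\iota_{\mathfrak{f}}$, use orthogonality of the blocks to factor $A\,\iota_{\mathfrak{f}}(Q_{\mathfrak{f},>n})\,C=A\,\iota_{\mathfrak{f}}(Q_{\mathfrak{f},>n-1})\bigl(\iota_{\mathfrak{f}}(Q_{\mathfrak{f},>n})C\bigr)$, and then pass to the induced quotient map. The only difference is cosmetic: the paper uses $Q_{\mathfrak{f},>n}=Q_{\mathfrak{f},>n-1}Q_{\mathfrak{f},>n}$ directly, so the extra right factor $Q_{\mathfrak{f},>n-1}$ in your identity is unnecessary, though harmless.
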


\begin{proof}
The universal inclusion \eqref{eq:paper-global-resolvent-inclusion} is gauge equivariant:
$$\begin{aligned}
\gamma_\theta\circ\iota_{\sphilbfrak{f}}
&=
\iota_{\sphilbfrak{f}}\circ\gamma_\theta.
\end{aligned}$$
Every $e_{\sphilbfrak{f},m}$ is gauge invariant by Definition
\ref{def:intrinsic-particle-block-system}.
Substitution into \eqref{eq:abstract-high-particle-projection} gives
$$\begin{aligned}
\fun{\gamma_{\theta}}{Q_{\sphilbfrak{f},>n}}
&=
1-
\sum_{m=0}^{n}\fun{\gamma_{\theta}}{e_{\sphilbfrak{f},m}}
=
1-
\sum_{m=0}^{n}e_{\sphilbfrak{f},m}
=
Q_{\sphilbfrak{f},>n}.
\end{aligned}$$
For $n\geq1$, orthogonality gives the local identities
$$\begin{aligned}
Q_{\sphilbfrak{f},>n-1}
=
e_{\sphilbfrak{f},n}+Q_{\sphilbfrak{f},>n},
\quad
Q_{\sphilbfrak{f},>n-1}Q_{\sphilbfrak{f},>n}
=
Q_{\sphilbfrak{f},>n}
=
Q_{\sphilbfrak{f},>n}Q_{\sphilbfrak{f},>n-1}.
\end{aligned}$$
For $A,C\in\oaresolventalgebra(\sphilbfrak{h},\sigma)^{\gamma}$, the generators of the cutoff ideal satisfy
$$\begin{aligned}
A\fun{\iota_{\sphilbfrak{f}}}{Q_{\sphilbfrak{f},>n}}C
&=
A\fun{\iota_{\sphilbfrak{f}}}{Q_{\sphilbfrak{f},>n-1}}
\rbk{\fun{\iota_{\sphilbfrak{f}}}{Q_{\sphilbfrak{f},>n}}C}
\in
\oaideal{I}_{>n-1}(\sphilbfrak{h}).
\end{aligned}$$
The algebraic sum over $\sphilbfrak{f}$ and its norm closure therefore satisfy the asserted ideal inclusion.
This inclusion makes $r_n$ well-defined.
Every $A+\oaideal{I}_{>n-1}(\sphilbfrak{h})$ is the image under $r_n$ of
$A+\oaideal{I}_{>n}(\sphilbfrak{h})$, which proves surjectivity.
The quotient operations show that $r_n$ is a $\ast$-homomorphism.
\end{proof}

The ideals \eqref{eq:abstract-high-particle-ideal} lie in the single universal fixed-point algebra \(\oaresolventalgebra(\sphilbfrak{h},\sigma)^{\gamma}\). Their definition does not require inclusions \(\fun{\iota_{\sphilbfrak{f},\sphilbfrak{g}}}{\oa{K}(\sphilbfrak{f})}
\subset\oa{K}(\sphilbfrak{g})\) for \(\sphilbfrak{f}\subset\sphilbfrak{g}\).

\begin{rem}[Exact-particle and particle-cutoff representations]
Let $\sphilbfrak{k}$ be a one-dimensional complex Hilbert space.
For every $n\in\monnat$, its exact
$n$-particle space is one-dimensional and
$$\begin{aligned}
\opspbddlin{\sphilbfrak{k}^{\otimes_{\txtsym}n}}
&\cong
\fldcmp.
\end{aligned}$$
The direct sum of the sectors from zero through $n$ instead satisfies
$$\begin{aligned}
\opspbddlin{\bigoplus_{j=0}^{n}\sphilbfrak{k}^{\otimes_{\txtsym}j}}
&\cong
\fun{M_{n+1}}{\fldcmp}.
\end{aligned}$$
Thus, for $n\geq1$, the algebra on the cutoff direct sum and the algebra represented on the exact
$n$-particle subspace are not isomorphic.
\end{rem}

\begin{defn}[Abstract Buchholz algebra]\label{def:paper-abstract-buchholz-algebra}
The bounded inverse limit $\fun{\oa{B}}{\sphilbfrak{h}}$ defined in
\eqref{eq:paper-main-cutoff-inverse-limit} is called the abstract Buchholz algebra associated with
$\sphilbfrak{h}$ and the intrinsic particle-block system.
Its norm is
$$\begin{aligned}
\norm{\seq{A_n}{n\in\monnat}}_{\fun{\oa{B}}{\sphilbfrak{h}}}
&=
\sup_{n\in\monnat}\norm{A_n}.
\end{aligned}$$
Products and adjoints are defined coordinatewise.
\end{defn}

\begin{prop}[The bounded inverse limit is a $\oacstar$-algebra]\label{prop:bounded-inverse-limit-cstar}
The algebra $\fun{\oa{B}}{\sphilbfrak{h}}$ in Definition
\ref{def:paper-abstract-buchholz-algebra} is a unital $\oacstar$-algebra.
The quotient-coordinate map is the unital $\ast$-homomorphism
$$\begin{aligned}
\rho
\colon
\oaresolventalgebra(\sphilbfrak{h},\sigma)^{\gamma}
\to
\fun{\oa{B}}{\sphilbfrak{h}};\quad
\fun{\rho}{A}
&=
\seq{A+\oaideal{I}_{>n}(\sphilbfrak{h})}{n\in\monnat}.
\end{aligned}$$
Its kernel is
\begin{equation}
\begin{aligned}
\Ker\rho
&=
\bigcap_{n\in\monnat}\oaideal{I}_{>n}(\sphilbfrak{h}).
\end{aligned}
\label{eq:paper-quotient-coordinate-kernel}
\end{equation}
\end{prop}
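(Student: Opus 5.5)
We regard $\fun{\oa{B}}{\sphilbfrak{h}}$ as a subset of the $\ell^\infty$-product $\prod_{n\in\monnat}^{\mathrm{b}}\oa{B}_n(\sphilbfrak{h})$ of the quotient $\oacstar$-algebras and show that it is a closed unital $\ast$-subalgebra. First, each $\oa{B}_n(\sphilbfrak{h})$ is a unital $\oacstar$-algebra, because $\oaideal{I}_{>n}(\sphilbfrak{h})$ is a closed two-sided ideal of the unital $\oacstar$-algebra $\oaresolventalgebra(\sphilbfrak{h},\sigma)^{\gamma}$. The fixed-point algebra is itself a unital $\oacstar$-subalgebra: it is closed because each $\gamma_\theta$ is continuous, and $1$ is fixed. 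The $\ell^\infty$-product of unital $\oacstar$-algebras, with coordinatewise operations and the supremum norm, is a unital $\oacstar$-algebra. The $\oacstar$-identity $\norm{\faadj{x}x}=\norm{x}^2$ passes to the supremum coordinatewise, and completeness follows from uniform convergence in each coordinate.

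Next we check that $\fun{\oa{B}}{\sphilbfrak{h}}$ is a $\ast$-subalgebra. By Proposition \ref{prop:paper-gauge-invariance-decreasing-particle-cutoff-ideals}, each $r_n$ is a $\ast$-homomorphism, so compatible sequences are closed under sums, scalar multiples, products and adjoints. For example, $\fun{r_n}{A_nB_n}=\fun{r_n}{A_n}\fun{r_n}{B_n}=A_{n-1}B_{n-1}$. Boundedness is preserved by submultiplicativity of the norm. The unit $\seq{1+\oaideal{I}_{>n}(\sphilbfrak{h})}{n}$ is compatible because $r_n$ maps unit to unit, which is immediate from the quotient formula \eqref{eq:paper-main-cutoff-connecting-map}.

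For closedness we use that $\ast$-homomorphisms are contractive, so $\norm{r_n}\leq1$. If $x^{(k)}\to x$ in the supremum norm with each $x^{(k)}$ compatible, then
$$\norm{\fun{r_n}{x_n}-x_{n-1}}\leq\norm{\fun{r_n}{x_n-x^{(k)}_n}}+\norm{x^{(k)}_{n-1}-x_{n-1}}\to0,$$
and hence the limit is compatible. A closed $\ast$-subalgebra of a $\oacstar$-algebra containing the unit is a unital $\oacstar$-algebra. One caveat: if some $\oa{B}_n(\sphilbfrak{h})$ were zero, its unit would be $0$. The statement only requires the coordinatewise unit, so this causes no trouble.

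For $\rho$, each coordinate $A\mapsto A+\oaideal{I}_{>n}(\sphilbfrak{h})$ is a unital $\ast$-homomorphism of norm at most $1$. Hence $\sup_n\norm{\fun{\rho}{A}_n}\leq\norm{A}<\infty$. Compatibility holds because $\fun{r_n}{A+\oaideal{I}_{>n}}=A+\oaideal{I}_{>n-1}$ by definition. So $\rho$ lands in $\fun{\oa{B}}{\sphilbfrak{h}}$ and is a unital $\ast$-homomorphism, since the operations are coordinatewise. Finally, $\fun{\rho}{A}=0$ holds if and only if $A\in\oaideal{I}_{>n}(\sphilbfrak{h})$ for every $n$, which gives \eqref{eq:paper-quotient-coordinate-kernel}. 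By the decreasing inclusion, this intersection is also a decreasing limit of ideals.

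No step is a genuine obstacle. The only point needing care is the closedness of the compatibility condition under norm limits, and contractivity of the $r_n$ settles it.
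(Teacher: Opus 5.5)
Your proposal is correct and follows essentially the same route as the paper: compatibility is preserved because each $r_n$ is a unital $\ast$-homomorphism, closedness under supremum-norm limits follows from continuity of the $r_n$, the $\oacstar$-identity holds coordinatewise, and $\rho$ is handled through quotient contractivity, the defining formula for $r_n$, and a coordinatewise kernel computation. Your explicit embedding into the bounded product of the quotients $\oa{B}_n(\sphilbfrak{h})$ makes visible the completeness step that the paper leaves implicit when it passes from closedness to a $\oacstar$-algebra.
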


\begin{proof}
Let $A=\seq{A_n}{n\in\monnat}$ and $C=\seq{C_n}{n\in\monnat}$ belong to
$\fun{\oa{B}}{\sphilbfrak{h}}$.
Thus, $A_n$ and $C_n$ are equivalence classes modulo $\oaideal{I}_{>n}(\sphilbfrak{h})$, and
$\fun{r_n}{A_n}=A_{n-1}$ and $\fun{r_n}{C_n}=C_{n-1}$ for every $n\geq1$.
Multiplicativity and adjoint preservation of $r_n$ give
$$\begin{aligned}
\fun{r_n}{A_nC_n}
&=
\fun{r_n}{A_n}\fun{r_n}{C_n}
=
A_{n-1}C_{n-1},
&
\fun{r_n}{\faadj{A_n}}
&=
\faadj{\fun{r_n}{A_n}}
=
\faadj{A_{n-1}}.
\end{aligned}$$
Therefore, coordinatewise multiplication and adjoint preserve the defining relations in
\eqref{eq:paper-main-cutoff-inverse-limit}.
Linearity and unitality of $r_n$ show the same for linear combinations and the unit sequence.
The coordinatewise norm estimates are
$$\begin{aligned}
\sup_{n\in\monnat}\norm{A_nC_n}
&\leq
\rbk{\sup_{n\in\monnat}\norm{A_n}}
\rbk{\sup_{n\in\monnat}\norm{C_n}},
&
\sup_{n\in\monnat}\norm{\faadj{A_n}}
&=
\sup_{n\in\monnat}\norm{A_n}.
\end{aligned}$$

Suppose that $\seq{A_j}{j\in\semigrposint}$ is a sequence in
$\fun{\oa{B}}{\sphilbfrak{h}}$ such that $A_j=\seq{A_{j,n}}{n\in\monnat}$ for every
$j\in\semigrposint$ and converges to
$A=\seq{A_n}{n\in\monnat}$ in the supremum norm as $j\to\infty$.
Continuity of $r_n$ gives
$$\begin{aligned}
\fun{r_n}{A_n}
&=
\fun{r_n}{\lim_{j\to\infty}A_{j,n}}
=
\lim_{j\to\infty}\fun{r_n}{A_{j,n}}
=
\lim_{j\to\infty}A_{j,n-1}
=
A_{n-1}.
\end{aligned}$$
Hence the set in \eqref{eq:paper-main-cutoff-inverse-limit} is closed in the supremum norm.
The coordinatewise $\oacstar$-identities give
$$\begin{aligned}
\norm{\faadj{A}A}_{\fun{\oa{B}}{\sphilbfrak{h}}}
&=
\sup_{n\in\monnat}\norm{\faadj{A_n}A_n}
=
\sup_{n\in\monnat}\norm{A_n}^{2}
=
\norm{A}_{\fun{\oa{B}}{\sphilbfrak{h}}}^{2}.
\end{aligned}$$
This proves the first assertion.

For every $A\in\oaresolventalgebra(\sphilbfrak{h},\sigma)^{\gamma}$, quotient contractivity and the defining
relations in \eqref{eq:paper-main-cutoff-inverse-limit} are calculated by
$$\begin{aligned}
\norm{A+\oaideal{I}_{>n}(\sphilbfrak{h})}
&\leq
\norm{A},
&
\fun{r_n}{A+\oaideal{I}_{>n}(\sphilbfrak{h})}
&=
A+\oaideal{I}_{>n-1}(\sphilbfrak{h}).
\end{aligned}$$
The map $\rho$ is well-defined.
For $A,C\in\oaresolventalgebra(\sphilbfrak{h},\sigma)^{\gamma}$ and $n\in\monnat$, its quotient coordinates
satisfy
$$\begin{aligned}
\rbk{\fun{\rho}{AC}}_n
&=
AC+\oaideal{I}_{>n}(\sphilbfrak{h})
\\
&=
\rbk{A+\oaideal{I}_{>n}(\sphilbfrak{h})}
\rbk{C+\oaideal{I}_{>n}(\sphilbfrak{h})}
=
\rbk{\fun{\rho}{A}}_n\rbk{\fun{\rho}{C}}_n,
\\
\rbk{\fun{\rho}{\faadj{A}}}_n
&=
\faadj{A}+\oaideal{I}_{>n}(\sphilbfrak{h})
=
\faadjrbk{\rbk{\fun{\rho}{A}}_n},
\\
\rbk{\fun{\rho}{1}}_n
&=
1.
\end{aligned}$$
Thus $\rho$ is a unital $\ast$-homomorphism.
For $A\in\oaresolventalgebra(\sphilbfrak{h},\sigma)^{\gamma}$, the equality $\fun{\rho}{A}=0$ holds exactly
when $A+\oaideal{I}_{>n}(\sphilbfrak{h})=0$ for every $n\in\monnat$, which is equivalent to
$A\in\oaideal{I}_{>n}(\sphilbfrak{h})$ for every $n\in\monnat$.
This proves \eqref{eq:paper-quotient-coordinate-kernel}.
\end{proof}

\begin{rem}[Abstract faithfulness criterion and the Fock representation]\label{rem:paper-abstract-faithfulness-criterion}
For every $n\in\monnat$, let $\sphilbfrak{H}_n$ be a complex Hilbert space and let
$\pi_n\colon\oaresolventalgebra(\sphilbfrak{h},\sigma)^{\gamma}
\to\opspbddlin{\sphilbfrak{H}_n}$ be a $\ast$-representation satisfying
$\Ker\pi_n=\oaideal{I}_{>n}(\sphilbfrak{h})$.
Equation \eqref{eq:paper-quotient-coordinate-kernel} gives the abstract criterion
\begin{equation}
\begin{aligned}
\Ker\rho
&=
\bigcap_{n\in\monnat}\oaideal{I}_{>n}(\sphilbfrak{h})
=
\bigcap_{n\in\monnat}\Ker\pi_n
=
\Ker\rbk{\bigoplus_{n\in\monnat}\pi_n}.
\end{aligned}
\label{eq:paper-abstract-faithfulness-criterion}
\end{equation}
Thus $\rho$ is injective exactly when the direct-sum representation
$\bigoplus_{n\in\monnat}\pi_n$ is faithful.
This criterion contains no Fock-specific assumption.

The Fock realization supplies representations satisfying this criterion in two separate steps.
Theorem \ref{thm:paper-abstract-concrete-comparison} identifies the cutoff ideals with the kernels of the
exact-particle representations $\rho_n$, while the particle-number decomposition gives
$$\begin{aligned}
\fnrestr{\oarepn_{\txtfock}}
{\oaresolventalgebra(\sphilbfrak{h},\sigma)^{\gamma}}
&=
\bigoplus_{n\in\monnat}\rho_n.
\end{aligned}$$
The direct sum is faithful because the Fock representation is regular and every regular representation of the
resolvent algebra is faithful by \cite[Theorem 4.10]{BuchholzGrundling002}; the sector decomposition alone does
not imply faithfulness.
Corollary \ref{cor:paper-faithful-quotient-coordinate-embedding} records the resulting injectivity of $\rho$.
For the same system, the faithful Fock representation of the abstract Buchholz algebra constructed below in
Corollary \ref{cor:paper-fock-cutoff-realization} satisfies
$$\begin{aligned}
\fun{\oarepn_{\txtfock}}{\fun{\rho}{A}}
&=
\bigoplus_{n\in\monnat}\fun{\rho_n}{A}
=
\fun{\oarepn_{\txtfock}}{A}
&&\rbk{A\in\oaresolventalgebra(\sphilbfrak{h},\sigma)^{\gamma}}.
\end{aligned}$$
Thus $\rho$ identifies the universal fixed-point algebra with its original Fock image inside the represented
bounded inverse limit.
Corollary \ref{cor:paper-faithful-quotient-coordinate-embedding} does not assert that every bounded compatible
sequence in $\fun{\oa{B}}{\sphilbfrak{h}}$ is obtained from a single element of
$\oaresolventalgebra(\sphilbfrak{h},\sigma)^{\gamma}$.
\end{rem}

\subsection{Particle-degree coefficients and reconstruction}\label{particle-degree-coefficients-and-reconstruction}

In this subsection, we isolate the particle-degree coefficient spaces and formulate the data that reconstruct the bounded inverse limit from them.

Proposition \ref{prop:bounded-inverse-limit-cstar} supplies the ambient \(\oacstar\)-algebra for the coefficient calculation. The inverse-limit definition becomes computational after one isolates the coefficient data common to all particle-cutoff quotients \(\oa{B}_n(\sphilbfrak{h})\). For every \(m\in\monnat\), the degree-\(m\) coefficient algebra \(\oa{C}_m(\sphilbfrak{h})\) is defined in \eqref{eq:paper-main-degree-coefficient-algebra}. Definition \ref{def:paper-particle-coefficient-embedding-data} specifies the degree-zero identification, the embeddings \(E_{m,n}\), their direct-sum and adjoint properties, and their compatibility with the particle cutoffs. Definition \ref{def:paper-abstract-particle-coefficient-reconstruction-data} adds the overlap maps \(\mu_{p,q}^{r}\) that determine multiplication. The connecting maps in this inductive limit are given by \eqref{eq:paper-coefficient-connecting-map}.

Definition \ref{def:paper-essential-ideal-extension} and the sequence \eqref{eq:paper-main-essential-extension} apply to arbitrary \(\oacstar\)-algebras and arbitrary closed two-sided ideals in the particle-cutoff construction.

Neither definition includes a Fock representation. Their conditions abstract the linear decomposition, cutoff compatibility, and multiplication of the concrete particle-number blocks. Proposition \ref{prop:paper-particle-block-independence} verifies Definition \ref{def:intrinsic-particle-block-system}, Proposition \ref{prop:paper-coefficient-connecting-maps} verifies the compression compatibility \eqref{eq:paper-coefficient-connecting-compatibility}, and Theorem \ref{thm:paper-concrete-particle-coefficient-reconstruction-data} verifies Definitions \ref{def:paper-particle-coefficient-embedding-data} and \ref{def:paper-abstract-particle-coefficient-reconstruction-data}. Theorem \ref{thm:paper-abstract-concrete-comparison} then identifies the resulting particle-cutoff quotients and bounded inverse limit with their concrete Fock counterparts.

\begin{prop}[Injectivity of the particle-coefficient maps]\label{prop:paper-particle-coefficient-map-injectivity}
Assume the abstract particle-coefficient reconstruction data of Definition
\ref{def:paper-abstract-particle-coefficient-reconstruction-data}.
For every $n\in\monnat$ and $0\leq m\leq n$, the map $E_{m,n}$ is injective.
\end{prop}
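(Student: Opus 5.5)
We show that the kernel of $E_{m,n}$ is zero by combining the overlap-product identity with the relation $\mu_{m,m}^{m}(a_m,b_m)=a_mb_m$. The direct-sum decomposition in Definition \ref{def:paper-particle-coefficient-embedding-data} holds only at the level of images, and linear maps need not be injective. So the idea is to use the $C^{\ast}$-identity in $\oa{C}_m(\sphilbfrak{h})$: if $E_{m,n}(a)=0$, then $E_{m,n}(a^{\ast})=0$ by adjoint preservation. Our goal is to extract $a^{\ast}a$ from a product of images and conclude $a^{\ast}a=0$.

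First take $p=q=m\leq n$ and apply the first identity in \eqref{eq:paper-abstract-particle-coefficient-overlap-products} to $a_m=a^{\ast}$, $b_m=a$:
$$\begin{aligned}
0
&=
E_{m,n}(a^{\ast})E_{m,n}(a)
=
\sum_{r=m}^{\min\{2m,n\}}E_{r,n}\bigl(\mu_{m,m}^{r}(a^{\ast},a)\bigr).
\end{aligned}$$
Since the sum $\bigoplus_{r}E_{r,n}(\oa{C}_r(\sphilbfrak{h}))$ is direct, each summand vanishes separately. In particular, the $r=m$ term gives
$$E_{m,n}\bigl(\mu_{m,m}^{m}(a^{\ast},a)\bigr)=E_{m,n}(a^{\ast}a)=0.$$
This step alone is circular: it only shows that the kernel is closed under $a\mapsto a^{\ast}a$.

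To break the circularity, we multiply instead by an element whose coefficient product acts as a unit. Let $b\in\oa{C}_m(\sphilbfrak{h})$ be arbitrary. Then the same computation applied to $E_{m,n}(a)E_{m,n}(b)=0$ shows that $E_{m,n}(ab)=0$, and similarly $E_{m,n}(ba)=0$. Hence $\Ker E_{m,n}$ is a two-sided ideal of $\oa{C}_m(\sphilbfrak{h})$. It is closed, because $E_{m,n}$ is bounded. By Remark \ref{rem:paper-degree-coefficient-algebra-simplicity}, $\oa{C}_m(\sphilbfrak{h})$ is simple, so $\Ker E_{m,n}$ is either $0$ or all of $\oa{C}_m(\sphilbfrak{h})$.

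The main step is therefore to exclude $E_{m,n}=0$. We do this by downward induction using the connecting identities \eqref{eq:paper-abstract-particle-coefficient-connecting-identities}.
\begin{itemize}
\item For $m=0$, we have $E_{0,n}(1)=1\neq0$. Here we use that $\oa{B}_n(\sphilbfrak{h})\neq0$; this holds because otherwise the direct sum would force $\oa{C}_0(\sphilbfrak{h})=\fldcmp$ to map onto $0$, contradicting $E_{0,n}(\lambda)=\lambda$ only if $1=0$. We would need to check separately that the quotient is nonzero, for instance via the essential-ideal condition.
\item For $m=n\geq1$, the image $E_{n,n}(\oa{C}_n(\sphilbfrak{h}))=\Ker r_n$ is an essential ideal of $\oa{B}_n(\sphilbfrak{h})\neq0$. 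An essential ideal of a nonzero algebra is nonzero (it meets the whole algebra), so $E_{n,n}\neq0$.
\item For $0<m<n$, iterating $r_k\circ E_{m,k}=E_{m,k-1}$ gives $r_{m+1}\circ\cdots\circ r_n\circ E_{m,n}=E_{m,m}$. Since $E_{m,m}\neq0$ by the previous case, also $E_{m,n}\neq0$.
\end{itemize}
Nonvanishing of $\oa{B}_n(\sphilbfrak{h})$ follows from $\oa{B}_0(\sphilbfrak{h})\supset E_{0,0}(\fldcmp)$. The algebra $\oa{B}_0(\sphilbfrak{h})$ is unital, and $E_{0,0}(1)=1$; the only possible degenerate situation is $\oa{B}_0(\sphilbfrak{h})=0$. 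We expect to dismiss this using $\oa{C}_0(\sphilbfrak{h})=\fldcmp$, together with boundedness and the direct-sum condition, which should identify $E_{0,0}$ as an isomorphism onto $\oa{B}_0(\sphilbfrak{h})$. If a cleaner route is needed, surjectivity of the $r_k$ propagates nonvanishing from $\oa{B}_0(\sphilbfrak{h})$ up to every $\oa{B}_n(\sphilbfrak{h})$. The obstacle we anticipate is precisely this nondegeneracy bookkeeping; the ideal-plus-simplicity argument itself is routine.
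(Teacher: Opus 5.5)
Your argument is correct, but it is organized differently from the paper's. The paper uses the overlap identity only at the cutoff $n=m$, where the sum over $r$ collapses to the single term $E_{m,m}(\mu_{m,m}^{m}(a,b))=E_{m,m}(ab)$. Thus $E_{m,m}$ is itself a $\ast$-homomorphism. It is nonzero because $\Ker r_m$ is essential, hence injective by simplicity of $\oa{C}_m(\sphilbfrak{h})$. Then $r_{m+1}\circ\cdots\circ r_n\circ E_{m,n}=E_{m,m}$ gives $\Ker E_{m,n}\subset\Ker E_{m,m}=0$ directly.

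You instead show, at every level $n\geq m$, that $\Ker E_{m,n}$ is a closed two-sided ideal. You use the directness of $\bigoplus_{r}E_{r,n}(\oa{C}_r(\sphilbfrak{h}))$ to isolate the degree-$m$ component of $E_{m,n}(a)E_{m,n}(b)$, which is a correct use of that condition. You then transport only nonvanishing along the connecting maps. The paper's route is more economical: the inclusion $\Ker E_{m,n}\subset\Ker E_{m,m}$ makes your level-$n$ ideal argument superfluous. It also yields the stronger intermediate fact that $E_{m,m}$ is an injective $\ast$-homomorphism onto $\Ker r_m$, which the paper reuses later, for instance in Proposition \ref{prop:paper-bounded-particle-coefficient-poisson-closed-graph}. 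Your route treats all cutoffs uniformly, at the cost of invoking the direct-sum condition. Your opening paragraph on $a^{\ast}a$ can simply be dropped.

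Both proofs need the same nondegeneracy fact, $\oa{B}_n(\sphilbfrak{h})\neq0$, i.e.\ $1\neq0$ in the quotient. The paper simply asserts it (``contains the identity and is nonzero''). The routes you sketch for deriving it from the axioms cannot succeed. If every $\oa{B}_n(\sphilbfrak{h})=0$, all conditions in Definitions \ref{def:paper-particle-coefficient-embedding-data} and \ref{def:paper-abstract-particle-coefficient-reconstruction-data} that involve $\oa{B}_n(\sphilbfrak{h})$ are satisfied by the zero maps $E_{m,n}=0$. Essentiality is vacuous in the zero algebra. At $n=0$ the direct-sum condition only says $\oa{B}_0(\sphilbfrak{h})=\fldcmp 1$, which is compatible with $1=0$.

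Your reduction to $\oa{B}_0(\sphilbfrak{h})\neq0$ through the surjective $r_k$ is valid. However, $\oa{B}_0(\sphilbfrak{h})\neq0$ amounts to properness of $\oaideal{I}_{>0}(\sphilbfrak{h})$, which comes from the construction of the cutoff ideals, not from the listed axioms. For the standard blocks, Theorem \ref{thm:paper-abstract-concrete-comparison} gives $\oaideal{I}_{>0}(\sphilbfrak{h})=\Ker\rho_0$ with $\rho_0$ unital. So either cite such an argument or treat $1\neq0$ as a standing assumption, as the paper implicitly does, rather than leaving the point open.
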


\begin{proof}
The scalar identity in \eqref{eq:paper-abstract-particle-coefficient-embedding-identities} proves the assertion
for $m=0$.
Fix $m\geq1$.
The same identity shows that $\oa{B}_m(\sphilbfrak{h})$ contains the identity and is nonzero.
The top-degree condition in Definition \ref{def:paper-particle-coefficient-embedding-data} identifies
$\fun{E_{m,m}}{\oa{C}_m(\sphilbfrak{h})}$ with $\Ker r_m$ and requires this image to be an essential ideal of
$\oa{B}_m(\sphilbfrak{h})$.
The algebra $\oa{B}_m(\sphilbfrak{h})$ is a nonzero closed two-sided ideal of itself.
Definition \ref{def:paper-essential-ideal-extension} requires an essential ideal to have nonzero intersection
with every nonzero closed two-sided ideal of the ambient algebra.
Taking $\oa{B}_m(\sphilbfrak{h})$ itself as that ideal gives
$$\begin{aligned}
\fun{E_{m,m}}{\oa{C}_m(\sphilbfrak{h})}
=
\fun{E_{m,m}}{\oa{C}_m(\sphilbfrak{h})}\cap\oa{B}_m(\sphilbfrak{h})
\neq
0.
\end{aligned}$$
Therefore $E_{m,m}$ has nonzero range.
For $a_m,b_m\in\oa{C}_m(\sphilbfrak{h})$, the overlap-product identities
\eqref{eq:paper-abstract-particle-coefficient-overlap-products}, applied with $p=q=n=m$, give
$$\begin{aligned}
\fun{E_{m,m}}{a_m}\fun{E_{m,m}}{b_m}
=
\fun{E_{m,m}}{\fun{\mu_{m,m}^{m}}{a_m,b_m}}
=
\fun{E_{m,m}}{a_mb_m}.
\end{aligned}$$
Together with the adjoint identity in
\eqref{eq:paper-abstract-particle-coefficient-embedding-identities}, this proves that $E_{m,m}$ is a nonzero
$\ast$-homomorphism.
Its kernel is a proper closed two-sided ideal of $\oa{C}_m(\sphilbfrak{h})$.
Remark \ref{rem:paper-degree-coefficient-algebra-simplicity} records that this algebra is simple.
The definition of simplicity now gives $\Ker E_{m,m}=0$.

Let $n>m$ and suppose that $\fun{E_{m,n}}{a_m}=0$.
Repeated application of \eqref{eq:paper-abstract-particle-coefficient-connecting-identities} gives
$$\begin{aligned}
\fun{E_{m,m}}{a_m}
=
\fun{r_{m+1}\circ r_{m+2}\circ\cdots\circ r_n}{\fun{E_{m,n}}{a_m}}
=
0.
\end{aligned}$$
The injectivity of $E_{m,m}$ gives $a_m=0$.
Thus $E_{m,n}$ is injective for every $0\leq m\leq n$.
\end{proof}

Proposition \ref{prop:paper-particle-coefficient-map-injectivity} and the direct-sum identity in \eqref{eq:paper-abstract-particle-coefficient-embedding-identities} give uniqueness of the finite particle-degree coefficients.

\begin{lem}[Uniqueness of finite particle-degree coefficients]\label{lem:paper-finite-particle-coefficient-uniqueness}
Assume the abstract particle-coefficient reconstruction data of Definition
\ref{def:paper-abstract-particle-coefficient-reconstruction-data}.
For every $n\in\monnat$ and $A_n\in\oa{B}_n(\sphilbfrak{h})$, there are unique coefficients
$a_m\in\oa{C}_m(\sphilbfrak{h})$, $0\leq m\leq n$, such that
$$\begin{aligned}
A_n
&=
\sum_{m=0}^{n}\fun{E_{m,n}}{a_m}.
\end{aligned}$$
\end{lem}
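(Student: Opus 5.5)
Existence and uniqueness follow directly from the direct-sum identity in \eqref{eq:paper-abstract-particle-coefficient-embedding-identities} combined with Proposition \ref{prop:paper-particle-coefficient-map-injectivity}. I would separate the two assertions and read the symbol $\bigoplus_{m=0}^{n}$ in Definition \ref{def:paper-particle-coefficient-embedding-data} as an internal algebraic direct sum of linear subspaces of $\oa{B}_n(\sphilbfrak{h})$. That is, every element of $\oa{B}_n(\sphilbfrak{h})$ is a sum of elements of the ranges $\fun{E_{m,n}}{\oa{C}_m(\sphilbfrak{h})}$, and the decomposition into range components is unique.

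\emph{Existence.} Fix $n\in\monnat$ and $A_n\in\oa{B}_n(\sphilbfrak{h})$. The equality $\oa{B}_n(\sphilbfrak{h})=\bigoplus_{m=0}^{n}\fun{E_{m,n}}{\oa{C}_m(\sphilbfrak{h})}$ gives elements $X_m\in\fun{E_{m,n}}{\oa{C}_m(\sphilbfrak{h})}$ with $A_n=\sum_{m=0}^{n}X_m$. For each $m$, choose $a_m\in\oa{C}_m(\sphilbfrak{h})$ with $\fun{E_{m,n}}{a_m}=X_m$; this gives the required representation.

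\emph{Uniqueness.} Suppose $\sum_{m=0}^{n}\fun{E_{m,n}}{a_m}=\sum_{m=0}^{n}\fun{E_{m,n}}{b_m}$. Linearity of each $E_{m,n}$ gives $\sum_{m=0}^{n}\fun{E_{m,n}}{a_m-b_m}=0$. This is a decomposition of $0$ into components lying in the summands. Directness of the sum forces $\fun{E_{m,n}}{a_m-b_m}=0$ for each $m$. Proposition \ref{prop:paper-particle-coefficient-map-injectivity} then gives $a_m=b_m$ for $0\leq m\leq n$.

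The only point needing care is the interpretation of $\bigoplus$. I would state explicitly that the paper's direct-sum condition includes the independence of the ranges, not merely that they span $\oa{B}_n(\sphilbfrak{h})$. Under that reading, the range components $X_m$ are uniquely determined by $A_n$. Injectivity of $E_{m,n}$ is what converts uniqueness of the range components into uniqueness of the coefficients $a_m$ themselves, and that injectivity comes from Proposition \ref{prop:paper-particle-coefficient-map-injectivity}. No topological or norm argument is needed, since only finitely many summands occur.
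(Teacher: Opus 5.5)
Your proposal is correct and follows essentially the same route as the paper: existence comes from the direct-sum identity in Definition~\ref{def:paper-particle-coefficient-embedding-data}, and uniqueness comes from subtracting two representations, using directness of the sum to isolate each component and Proposition~\ref{prop:paper-particle-coefficient-map-injectivity} to conclude $a_m=b_m$. Your explicit remark that $\bigoplus$ must be read as an internal direct sum, meaning independence of the ranges and not merely that they span, states the reading the paper uses implicitly.
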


\begin{proof}
Existence is the direct-sum identity in
\eqref{eq:paper-abstract-particle-coefficient-embedding-identities}.
If two coefficient families represent $A_n$, subtraction gives
$$\begin{aligned}
0
&=
\sum_{m=0}^{n}\fun{E_{m,n}}{a_m-b_m}.
\end{aligned}$$
The sum is direct, and Proposition \ref{prop:paper-particle-coefficient-map-injectivity} proves that each
$E_{m,n}$ is injective.
Consequently $a_m=b_m$ for every $0\leq m\leq n$.
\end{proof}

The full coefficient space and its operations are defined in \eqref{eq:paper-main-particle-coefficient-space} and \eqref{eq:paper-main-particle-coefficient-operations}. The evaluation \(\Pi_n\) in the particle-cutoff quotient is defined in \eqref{eq:paper-main-particle-coefficient-reconstruction}. The inequalities \(p,q\leq r\leq p+q\) make the sum defining the \(r\)th coefficient finite. The overlap maps enter through the multiplicativity of the evaluations \(\Pi_n\).

\begin{lem}[Evaluation identities]\label{lem:paper-particle-coefficient-evaluation-identities}
Let $a,b$ belong to the particle-coefficient space $\oa{C}(\sphilbfrak{h})$ defined in
\eqref{eq:paper-main-particle-coefficient-space}, and let $n\in\monnat$.
For the coefficient product and adjoint in \eqref{eq:paper-main-particle-coefficient-operations} and the
evaluation $\Pi_n$ in \eqref{eq:paper-main-particle-coefficient-reconstruction}, one has
\begin{equation}
\begin{aligned}
\fun{\Pi_n}{a\star b}
&=
\fun{\Pi_n}{a}\fun{\Pi_n}{b},
&
\fun{\Pi_n}{\faadj{a}}
&=
\faadj{\fun{\Pi_n}{a}}.
\end{aligned}
\label{eq:paper-particle-coefficient-evaluation-product}
\end{equation}
For $n\geq1$, one also has
\begin{equation}
\begin{aligned}
\fun{r_n\circ\Pi_n}{a}
&=
\fun{\Pi_{n-1}}{a}.
\end{aligned}
\label{eq:paper-particle-coefficient-evaluation-compatibility}
\end{equation}
\end{lem}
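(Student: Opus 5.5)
The plan is to verify the three identities directly from the definitions of $\Pi_n$, $\star$, and the coefficient adjoint, expanding both sides into finite sums of terms $\fun{E_{r,n}}{\cdot}$ and matching them. All sums involved are finite: $\Pi_n$ only sees $a_0,\dots,a_n$, and each coordinate $\rbk{a\star b}_r$ is a finite sum over $0\leq p,q\leq r$.

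\emph{Adjoint identity.} I will use the adjoint clause of \eqref{eq:paper-abstract-particle-coefficient-embedding-identities} termwise:
$\fun{\Pi_n}{\faadj{a}}=\sum_{m=0}^{n}\fun{E_{m,n}}{\faadj{a_m}}=\sum_{m=0}^{n}\faadjrbk{\fun{E_{m,n}}{a_m}}=\faadj{\fun{\Pi_n}{a}}$.

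\emph{Compatibility identity.} For $n\geq1$, I will use linearity of $r_n$ and \eqref{eq:paper-abstract-particle-coefficient-connecting-identities}. The terms $0\leq m<n$ map to $\fun{E_{m,n-1}}{a_m}$, and the top term $m=n$ is killed. This leaves exactly $\fun{\Pi_{n-1}}{a}$.

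\emph{Product identity.} This is the only step needing care, namely a reindexing of a double sum. First, bilinearity of the product in $\oa{B}_n(\sphilbfrak{h})$ gives
$\fun{\Pi_n}{a}\fun{\Pi_n}{b}=\sum_{p=0}^{n}\sum_{q=0}^{n}\fun{E_{p,n}}{a_p}\fun{E_{q,n}}{b_q}$.
Since $n\geq\max\setone{p,q}$ for each term, the overlap identity \eqref{eq:paper-abstract-particle-coefficient-overlap-products} rewrites each summand as $\sum_{r=\max\setone{p,q}}^{\min\setone{p+q,n}}\fun{E_{r,n}}{\fun{\mu_{p,q}^{r}}{a_p,b_q}}$.

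The resulting triple sum runs over the index set of $(p,q,r)$ with $0\leq p,q\leq n$, $\max\setone{p,q}\leq r\leq p+q$, and $r\leq n$. I will check that this set equals the set of triples with $0\leq r\leq n$, $0\leq p,q\leq r$, and $p+q\geq r$. In the forward direction, $r\leq n$ and $r\geq\max\setone{p,q}$ give $p,q\leq r$. In the reverse direction, $p,q\leq r\leq n$ gives $p,q\leq n$.

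Exchanging the order of summation and using linearity of $E_{r,n}$ then turns the triple sum into $\sum_{r=0}^{n}\fun{E_{r,n}}{\rbk{a\star b}_r}=\fun{\Pi_n}{a\star b}$, by the definition \eqref{eq:paper-main-particle-coefficient-operations}. I expect no real obstacle beyond this index bookkeeping. Boundedness of the maps is not needed, since every sum is finite.
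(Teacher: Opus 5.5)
Your proposal is correct and follows essentially the same route as the paper. The adjoint identity is handled termwise, the connecting identities kill the top degree, and the product identity comes from expanding the double sum with the overlap-product formula and regrouping by output degree $r$. Your explicit check that the two index sets of triples $(p,q,r)$ coincide is the bookkeeping the paper performs silently in its one-line regrouping step.
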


\begin{proof}
The overlap-product formula in \eqref{eq:paper-abstract-particle-coefficient-overlap-products} yields the
finite calculation
$$\begin{aligned}
&
\fun{\Pi_n}{a}\fun{\Pi_n}{b}
=
\sum_{p=0}^{n}\sum_{q=0}^{n}
\fun{E_{p,n}}{a_p}\fun{E_{q,n}}{b_q}
\\
&=
\sum_{p=0}^{n}\sum_{q=0}^{n}
\sum_{r=\max\setone{p,q}}^{\min\setone{p+q,n}}
\fun{E_{r,n}}{\fun{\mu_{p,q}^{r}}{a_p,b_q}}
=
\sum_{r=0}^{n}
\fun{E_{r,n}}{
\sum_{\substack{0\leq p,q\leq r\\p+q\geq r}}
\fun{\mu_{p,q}^{r}}{a_p,b_q}}
=
\fun{\Pi_n}{a\star b}.
\end{aligned}$$
The adjoint identity in \eqref{eq:paper-abstract-particle-coefficient-embedding-identities} gives
$$\begin{aligned}
\fun{\Pi_n}{\faadj{a}}
&=
\sum_{m=0}^{n}\fun{E_{m,n}}{\faadj{a_m}}
=
\faadjrbk{\sum_{m=0}^{n}\fun{E_{m,n}}{a_m}}
=
\faadj{\fun{\Pi_n}{a}}.
\end{aligned}$$
The connecting-map identities in \eqref{eq:paper-abstract-particle-coefficient-connecting-identities} give
$$\begin{aligned}
\fun{r_n\circ\Pi_n}{a}
&=
\sum_{m=0}^{n-1}\fun{r_n\circ E_{m,n}}{a_m}
=
\sum_{m=0}^{n-1}\fun{E_{m,n-1}}{a_m}
=
\fun{\Pi_{n-1}}{a}.
\end{aligned}$$
\end{proof}

\begin{prop}[Surjectivity of the finite coefficient evaluation]\label{prop:pi-surjectivity}
For every $n\in\monnat$, the map $\Pi_n$ defined in
\eqref{eq:paper-main-particle-coefficient-reconstruction} is a surjective unital $\ast$-homomorphism.
\end{prop}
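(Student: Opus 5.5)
The plan is to read off the four required properties of $\Pi_n$ (linearity, multiplicativity, adjoint preservation, unitality) and surjectivity directly from the reconstruction data. Throughout, $\oa{C}(\sphilbfrak{h})$ carries the product $\star$ and adjoint of \eqref{eq:paper-main-particle-coefficient-operations}, and $\Pi_n$ is regarded as a map on this space.

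\textbf{Homomorphism properties.} Linearity of $\Pi_n$ is immediate from the linearity of each $E_{m,n}$ and the finite sum in \eqref{eq:paper-main-particle-coefficient-reconstruction}. Multiplicativity and adjoint preservation are exactly \eqref{eq:paper-particle-coefficient-evaluation-product} in Lemma \ref{lem:paper-particle-coefficient-evaluation-identities}, so nothing new is needed there.

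\textbf{Unitality.} I will take the unit of $\oa{C}(\sphilbfrak{h})$ to be $\mathbf{1}=\seq{\delta_{m,0}}{m\in\monnat}$, using $\oa{C}_0(\sphilbfrak{h})=\fldcmp$. The scalar identity $\fun{E_{0,n}}{1}=1$ in \eqref{eq:paper-abstract-particle-coefficient-embedding-identities} then gives
$$\fun{\Pi_n}{\mathbf{1}}=\fun{E_{0,n}}{1}=1.$$
I should also check that $\mathbf{1}$ really is a two-sided unit for $\star$. The term $\mathbf{1}\star b$ has $r$-th coefficient $\mu_{0,r}^{r}(1,b_r)$, the only admissible pair being $p=0,q=r$. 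To identify this with $b_r$, I apply the overlap formula with $p=0$, $q=r=n$. This gives $E_{r,r}(\mu_{0,r}^r(1,b_r))=E_{0,r}(1)E_{r,r}(b_r)=E_{r,r}(b_r)$, and injectivity from Proposition \ref{prop:paper-particle-coefficient-map-injectivity} yields $\mu_{0,r}^r(1,b_r)=b_r$. The right-hand case is symmetric. This verification belongs to later unital-algebra claims and is not strictly needed for the statement at hand, since unitality of $\Pi_n$ only requires $\Pi_n(\mathbf{1})=1$.

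\textbf{Surjectivity.} Given $A_n\in\oa{B}_n(\sphilbfrak{h})$, Lemma \ref{lem:paper-finite-particle-coefficient-uniqueness} (or just the direct-sum identity) supplies $a_0,\dots,a_n$ with $A_n=\sum_{m=0}^n E_{m,n}(a_m)$. I extend this family by $a_m=0$ for $m>n$ to get $a\in\oa{C}(\sphilbfrak{h})$ with $\Pi_n(a)=A_n$. Because the product space imposes no boundedness condition, this is all that is required. The only mild subtlety anywhere is the choice and verification of the unit, and that is handled by the injectivity argument above.
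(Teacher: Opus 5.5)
Your proposal is correct and follows the paper's proof: multiplicativity and adjoint preservation come from Lemma~\ref{lem:paper-particle-coefficient-evaluation-identities}, unitality comes from $E_{0,n}(1)=1$ applied to the sequence whose degree-zero entry is $1$ and whose other entries are $0$, and surjectivity comes from extending by zero the finite coefficients supplied by the direct-sum decomposition. Your additional check that this sequence is a two-sided unit for $\star$ (overlap formula with $p=0$, $q=r=n$, then injectivity of $E_{r,r}$) is valid but not needed for this statement; the paper proves it later in Theorem~\ref{thm:paper-particle-coefficient-buchholz-isomorphism} via finite-coefficient uniqueness instead.
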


\begin{proof}
Lemma \ref{lem:paper-particle-coefficient-evaluation-identities}, especially
\eqref{eq:paper-particle-coefficient-evaluation-product}, proves that $\Pi_n$ preserves products and adjoints.
If $e=\seq{e_m}{m\in\monnat}$ is given by $e_0=1$ and $e_m=0$ for $m\geq1$, then
$$\begin{aligned}
\fun{\Pi_n}{e}
&=
\fun{E_{0,n}}{1}
=
\idone_{\oa{B}_n(\sphilbfrak{h})},
\end{aligned}$$
so $\Pi_n$ is unital.

Fix $A\in\oa{B}_n(\sphilbfrak{h})$.
Lemma
\ref{lem:paper-finite-particle-coefficient-uniqueness} gives unique coefficients $a_m$, one for each integer
$m$ with $0\leq m\leq n$.
They satisfy
$$\begin{aligned}
a_m
&\in
\oa{C}_m(\sphilbfrak{h}),
\quad
A
=
\sum_{m=0}^{n}\fun{E_{m,n}}{a_m}.
\end{aligned}$$
Set $a_m=0$ for $m>n$.
Then
$a=\seq{a_m}{m\in\monnat}\in\oa{C}(\sphilbfrak{h})$ and
$$\begin{aligned}
\fun{\Pi_n}{a}
&=
\sum_{m=0}^{n}\fun{E_{m,n}}{a_m}
=
A.
\end{aligned}$$
Thus $\Pi_n$ is surjective.
\end{proof}

Proposition \ref{prop:pi-surjectivity} lifts every element of \(\oa{B}_n(\sphilbfrak{h})\) to a particle-coefficient sequence.

\begin{thm}[Particle-coefficient realization]\label{thm:paper-particle-coefficient-buchholz-isomorphism}
The space $\oa{C}_{\mathrm{b}}(\sphilbfrak{h})$ is defined in
\eqref{eq:paper-main-bounded-particle-coefficients} and has the norm
$$\begin{aligned}
\norm{a}_{\mathrm{b}}
&=
\sup_{n\in\monnat}\norm{\fun{\Pi_n}{a}}.
\end{aligned}$$
The space $\oa{C}_{\mathrm{b}}(\sphilbfrak{h})$ is a unital $\oacstar$-algebra.
The map $\Upsilon$ defined in \eqref{eq:paper-main-particle-coefficient-isomorphism} is an isometric unital
$\ast$-isomorphism.
\end{thm}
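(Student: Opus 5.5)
The plan is to transport the $\oacstar$-structure of $\fun{\oa{B}}{\sphilbfrak{h}}$, already established in Proposition \ref{prop:bounded-inverse-limit-cstar}, to $\oa{C}_{\mathrm{b}}(\sphilbfrak{h})$ along $\Upsilon$. I will not verify the $\oacstar$-axioms for $\star$ directly.

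First, I check that $\Upsilon$ is well defined and a unital $\ast$-homomorphism on the full coefficient space. For $a\in\oa{C}(\sphilbfrak{h})$, the compatibility identity \eqref{eq:paper-particle-coefficient-evaluation-compatibility} gives $\fun{r_n}{\fun{\Pi_n}{a}}=\fun{\Pi_{n-1}}{a}$. The sequence $\seq{\fun{\Pi_n}{a}}{n\in\monnat}$ is therefore compatible, and it is bounded exactly when $a\in\oa{C}_{\mathrm{b}}(\sphilbfrak{h})$. Hence $\Upsilon$ maps $\oa{C}_{\mathrm{b}}(\sphilbfrak{h})$ into $\fun{\oa{B}}{\sphilbfrak{h}}$, and by construction $\norm{\fun{\Upsilon}{a}}=\norm{a}_{\mathrm{b}}$. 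The evaluation identities \eqref{eq:paper-particle-coefficient-evaluation-product} and unitality from Proposition \ref{prop:pi-surjectivity} show coordinatewise that $\Upsilon$ preserves $\star$, adjoints and the unit $e$. The same identities show that $\oa{C}_{\mathrm{b}}(\sphilbfrak{h})$ is closed under $\star$ and $\ast$, since $\norm{\fun{\Pi_n}{a\star b}}\leq\norm{\fun{\Pi_n}{a}}\norm{\fun{\Pi_n}{b}}$ and $\norm{\fun{\Pi_n}{\faadj{a}}}=\norm{\fun{\Pi_n}{a}}$. Linearity of $\Pi_n$ makes $\oa{C}_{\mathrm{b}}(\sphilbfrak{h})$ a linear subspace.

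Second, and this is the main step, I prove that $\Upsilon$ is a bijection. For injectivity, suppose $\fun{\Pi_n}{a}=0$ for all $n$. By Lemma \ref{lem:paper-finite-particle-coefficient-uniqueness}, the coefficients $a_0,\dots,a_n$ of $0\in\oa{B}_n(\sphilbfrak{h})$ all vanish. Since this holds for every $n$, we get $a=0$. In particular $\norm{\cdot}_{\mathrm{b}}$ is a genuine norm.

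For surjectivity, take a compatible bounded sequence $\seq{A_n}{n\in\monnat}$. Lemma \ref{lem:paper-finite-particle-coefficient-uniqueness} gives unique coefficients $a^{(n)}_m$, $0\leq m\leq n$, with $A_n=\sum_{m\leq n}\fun{E_{m,n}}{a^{(n)}_m}$. Applying $r_n$ and using \eqref{eq:paper-abstract-particle-coefficient-connecting-identities}, in particular $r_n\circ E_{n,n}=0$, we obtain
$$A_{n-1}=\fun{r_n}{A_n}=\sum_{m\leq n-1}\fun{E_{m,n-1}}{a^{(n)}_m}.$$
Uniqueness at level $n-1$ then forces $a^{(n)}_m=a^{(n-1)}_m$ for $m\leq n-1$. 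So the coefficients stabilize, and $a_m:=a^{(m)}_m$ defines $a\in\oa{C}(\sphilbfrak{h})$ with $\fun{\Pi_n}{a}=A_n$ for every $n$. Boundedness of $\seq{A_n}{n\in\monnat}$ gives $a\in\oa{C}_{\mathrm{b}}(\sphilbfrak{h})$. This stabilization argument is where the triangular, Newton-type structure is used. It is the only point that needs care: one must use the exact vanishing $r_n\circ E_{n,n}=0$, together with uniqueness, not merely the direct-sum decomposition.

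Finally, $\Upsilon$ is a linear isometric bijection onto the $\oacstar$-algebra $\fun{\oa{B}}{\sphilbfrak{h}}$ that intertwines products, adjoints and units. Hence $\oa{C}_{\mathrm{b}}(\sphilbfrak{h})$ with $\star$, $\ast$ and $\norm{\cdot}_{\mathrm{b}}$ is a Banach $\ast$-algebra: completeness and the identity $\norm{\faadj{a}\star a}_{\mathrm{b}}=\norm{a}_{\mathrm{b}}^{2}$ pull back through $\Upsilon$. Associativity and bilinearity of $\star$ also follow, by injectivity of $\Upsilon$, from the corresponding properties in $\fun{\oa{B}}{\sphilbfrak{h}}$. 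Thus $\oa{C}_{\mathrm{b}}(\sphilbfrak{h})$ is a unital $\oacstar$-algebra with unit $e$, and $\Upsilon$ is an isometric unital $\ast$-isomorphism.
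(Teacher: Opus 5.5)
Your proposal is correct and is in substance the paper's proof. It uses the same compatibility and evaluation identities, the same coefficient-stabilization argument for surjectivity (via $r_n\circ E_{n,n}=0$ and Lemma~\ref{lem:paper-finite-particle-coefficient-uniqueness}), and the same finite-coefficient uniqueness for injectivity; the paper phrases the last point as an induction using injectivity of $E_{m,m}$. The only difference is organizational. The paper verifies associativity, $(a\star b)^{*}=b^{*}\star a^{*}$ and the unit directly, by applying $\Pi_n$ and invoking uniqueness. You pull these identities back through the injective map $\Upsilon$, which is the same mechanism.
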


\begin{proof}
Lemma \ref{lem:paper-particle-coefficient-evaluation-identities}, in particular
\eqref{eq:paper-particle-coefficient-evaluation-product}, and the linearity of $\Pi_n$ give
$$\begin{aligned}
\norm{a\star b}_{\mathrm{b}}
&\leq
\norm{a}_{\mathrm{b}}\norm{b}_{\mathrm{b}},
&
\norm{\faadj{a}}_{\mathrm{b}}
&=
\norm{a}_{\mathrm{b}}.
\end{aligned}$$
For every $n\in\monnat$, the separating-coordinate calculations are
$$\begin{aligned}
\fun{\Pi_n}{\rbk{a\star b}\star c}
&=
\fun{\Pi_n}{a}\fun{\Pi_n}{b}\fun{\Pi_n}{c}
=
\fun{\Pi_n}{a\star\rbk{b\star c}},
\\
\fun{\Pi_n}{\faadjrbk{a\star b}}
&=
\faadjrbk{\fun{\Pi_n}{b}}\faadjrbk{\fun{\Pi_n}{a}}
=
\fun{\Pi_n}{\faadj{b}\star\faadj{a}}.
\end{aligned}$$
Finite-coefficient uniqueness applied to the $\Pi_n$-evaluation identity for
$\rbk{a\star b}\star c$ and $a\star\rbk{b\star c}$ gives, for every $r\in\monnat$,
$$\begin{aligned}
\rbk{\rbk{a\star b}\star c}_r
&=
\rbk{a\star\rbk{b\star c}}_r.
\end{aligned}$$
Thus the coefficient product is associative.
Finite-coefficient uniqueness applied to the $\Pi_n$-evaluation identity for
$\faadjrbk{a\star b}$ and $\faadj{b}\star\faadj{a}$ gives
$$\begin{aligned}
\faadjrbk{a\star b}
&=
\faadj{b}\star\faadj{a}.
\end{aligned}$$
For $r\in\monnat$ and $\alpha,\lambda\in\fldcmp$, the coefficientwise adjoint satisfies
$$\begin{aligned}
\rbk{\faadjrbk{\alpha a+\lambda b}}_r
&=
\faadjrbk{\alpha a_r+\lambda b_r}
=
\cmpconj{\alpha}\faadj{a_r}+\cmpconj{\lambda}\faadj{b_r},
&
\rbk{\faadjrbk{\faadj{a}}}_r
&=
\faadjrbk{\faadj{a_r}}
=
a_r.
\end{aligned}$$
Hence the adjoint is conjugate linear and involutive.
The bilinearity of every $\mu_{p,q}^{r}$ gives
$$\begin{aligned}
\rbk{\rbk{\alpha a+\lambda b}\star c}_r
&=
\sum_{\substack{0\leq p,q\leq r\\p+q\geq r}}
\fun{\mu_{p,q}^{r}}{\alpha a_p+\lambda b_p,c_q}
=
\alpha\rbk{a\star c}_r+\lambda\rbk{b\star c}_r,
\\ 
\rbk{a\star\rbk{\alpha b+\lambda c}}_r
&=
\sum_{\substack{0\leq p,q\leq r\\p+q\geq r}}
\fun{\mu_{p,q}^{r}}{a_p,\alpha b_q+\lambda c_q}
=
\alpha\rbk{a\star b}_r+\lambda\rbk{a\star c}_r.
\end{aligned}$$
Thus $\star$ is bilinear.
Define $e=\seq{e_m}{m\in\monnat}$ by $e_0=1$ and $e_m=0$ for $m\geq1$.
Since $\fun{E_{0,n}}{1}=1$, one has $\fun{\Pi_n}{e}=1$, and
$$\begin{aligned}
\fun{\Pi_r}{e\star a}
&=
\fun{\Pi_r}{e}\fun{\Pi_r}{a}
=
\fun{\Pi_r}{a},
&
\fun{\Pi_r}{a\star e}
&=
\fun{\Pi_r}{a}\fun{\Pi_r}{e}
=
\fun{\Pi_r}{a}.
\end{aligned}$$
Lemma \ref{lem:paper-finite-particle-coefficient-uniqueness} therefore gives
$$\begin{aligned}
\rbk{e\star a}_r
&=
a_r,
&
\rbk{a\star e}_r
&=
a_r.
\end{aligned}$$
Thus $e\star a=a=a\star e$.
Equation \eqref{eq:paper-particle-coefficient-evaluation-compatibility} proves that the map $\Upsilon$ defined
in \eqref{eq:paper-main-particle-coefficient-isomorphism} takes values in the bounded inverse limit.
Equation \eqref{eq:paper-particle-coefficient-evaluation-product} proves that $\Upsilon$ is a unital
$\ast$-homomorphism.
The $\oacstar$-identity is calculated before completeness as
$$\begin{aligned}
\norm{\faadj{a}\star a}_{\mathrm{b}}
&=
\sup_{n\in\monnat}
\norm{\faadjrbk{\fun{\Pi_n}{a}}\fun{\Pi_n}{a}}
=
\sup_{n\in\monnat}\norm{\fun{\Pi_n}{a}}^2
=
\norm{a}_{\mathrm{b}}^2.
\end{aligned}$$

If $\fun{\Upsilon}{a}=0$, then $\fun{\Pi_0}{a}=0$ gives $a_0=0$.
Assume $a_0=\cdots=a_{m-1}=0$.
The $m$th coordinate gives
$$\begin{aligned}
0
&=
\fun{\Pi_m}{a}
=
\fun{E_{m,m}}{a_m}.
\end{aligned}$$
Injectivity of $E_{m,m}$ gives $a_m=0$.
Induction proves that $\Upsilon$ is injective.

Let $A=\seq{A_n}{n\in\monnat}\in\fun{\oa{B}}{\sphilbfrak{h}}$.
Lemma \ref{lem:paper-finite-particle-coefficient-uniqueness} gives unique coefficients $a_{m,n}$ with
$$\begin{aligned}
A_n
&=
\sum_{m=0}^{n}\fun{E_{m,n}}{a_{m,n}}.
\end{aligned}$$
The relation $\fun{r_n}{A_n}=A_{n-1}$ and the connecting identities give two decompositions of $A_{n-1}$,
$$\begin{aligned}
A_{n-1}
&=
\fun{r_n}{A_n}
=
\sum_{m=0}^{n-1}\fun{E_{m,n-1}}{a_{m,n}},
\\
A_{n-1}
&=
\sum_{m=0}^{n-1}\fun{E_{m,n-1}}{a_{m,n-1}}.
\end{aligned}$$
Uniqueness gives $a_{m,n}=a_{m,n-1}$ for $0\leq m<n$.
Define $a_m=a_{m,n}$ for any $n\geq m$.
The definition is independent of $n$, and $\fun{\Pi_n}{a}=A_n$.
The boundedness of $A$ gives $a\in\oa{C}_{\mathrm{b}}(\sphilbfrak{h})$, which proves surjectivity.
The norm of $\fun{\Upsilon}{a}$ is
$$\begin{aligned}
\norm{\fun{\Upsilon}{a}}_{\fun{\oa{B}}{\sphilbfrak{h}}}
&=
\sup_{n\in\monnat}\norm{\fun{\Pi_n}{a}}
=
\norm{a}_{\mathrm{b}}.
\end{aligned}$$
The map is isometric, and completeness follows from completeness of
$\fun{\oa{B}}{\sphilbfrak{h}}$.
\end{proof}

\subsection{Surjectivity of the particle-cutoff coordinate maps}\label{surjectivity-of-the-particle-cutoff-coordinate-maps}

In this subsection, we prove that, for every \(N\in\monnat\), the coordinate map \(p_N\) defined in \eqref{eq:paper-main-coordinate-map} is surjective by constructing a bounded compatible preimage of each element of \(\oa{B}_N(\sphilbfrak{h})\).

\begin{cor}[Surjectivity of the particle-cutoff coordinate maps]\label{cor:paper-particle-cutoff-coordinate-surjectivity}
For every $N\in\monnat$, the coordinate map $p_N$ defined in
\eqref{eq:paper-main-coordinate-map} is surjective.
\end{cor}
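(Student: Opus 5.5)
Given \(A\in\oa{B}_N(\sphilbfrak{h})\), we build a bounded compatible sequence through \(A\) in the particle-coefficient coordinates, where compatibility is automatic, and then transport it back with Theorem \ref{thm:paper-particle-coefficient-buchholz-isomorphism}. The natural candidate is to truncate the coefficients at degree \(N\).

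\textbf{Step 1: truncated coefficients.} By Lemma \ref{lem:paper-finite-particle-coefficient-uniqueness}, write \(A=\sum_{m=0}^{N}\fun{E_{m,N}}{a_m}\) with unique \(a_m\in\oa{C}_m(\sphilbfrak{h})\). Set \(a_m=0\) for \(m>N\) and let \(a=\seq{a_m}{m\in\monnat}\in\oa{C}(\sphilbfrak{h})\). Then \(\fun{\Pi_N}{a}=A\). By \eqref{eq:paper-particle-coefficient-evaluation-compatibility}, the sequence \(\seq{\fun{\Pi_n}{a}}{n\in\monnat}\) is compatible under the maps \(r_n\).

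\textbf{Step 2: boundedness, the main obstacle.} We must show \(\sup_n\norm{\fun{\Pi_n}{a}}<\infty\). The coordinates with \(n\leq N\) are images of \(A\) under the contractive maps \(r_{n+1}\circ\cdots\circ r_N\), so they satisfy \(\norm{\fun{\Pi_n}{a}}\leq\norm{A}\). For \(n>N\), the quantity \(\norm{\fun{\Pi_n}{a}}=\norm{\sum_{m\leq N}\fun{E_{m,n}}{a_m}}\) is not controlled by the axioms: each \(E_{m,n}\) is only bounded, with no uniformity in \(n\). (In the Fock picture, \(E_{m,n}\) has norm of order \(\binom{n}{m}\).) So naive truncation does not obviously give an element of \(\oa{C}_{\mathrm{b}}(\sphilbfrak{h})\).

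\textbf{Step 3: first fix to try, a positive lift.} Reduce to positive contractions. Every \(A\) is a combination of four positive elements, and surjective \(\ast\)-homomorphisms lift positive elements to positive elements of the same norm. It therefore suffices, for positive \(A\) with \(\norm{A}\leq1\), to find a compatible sequence whose coordinates are uniformly bounded, say bounded by \(1\). The plan is to lift \(A\) step by step. Given \(A_n\) with \(\norm{A_n}\leq\norm{A}\), use the surjectivity of \(r_{n+1}\) (Proposition \ref{prop:paper-gauge-invariance-decreasing-particle-cutoff-ideals}) together with the standard norm-preserving lifting for quotient maps of \(\oacstar\)-algebras. This lifting holds because \(r_{n+1}\) is a quotient map with kernel \(\Ker r_{n+1}\): for any lift \(B\), functional calculus applied to \(\faadj{B}B\) produces a lift of the same norm as \(A_n\). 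This yields \(A_{n+1}\) with \(\fun{r_{n+1}}{A_{n+1}}=A_n\) and \(\norm{A_{n+1}}=\norm{A_n}\). Induction gives a compatible sequence with \(\sup_n\norm{A_n}=\norm{A}\). The lifting in fact works for arbitrary \(A\), which makes the positivity reduction unnecessary. Combined with \(A_n=r_{n+1}\circ\cdots\circ r_N(A)\) for \(n<N\), this gives an element of \(\fun{\oa{B}}{\sphilbfrak{h}}\) with \(p_N\)-coordinate \(A\).

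The coefficient description enters only to identify this preimage with \(\fun{\Upsilon}{b}\) for some \(b\in\oa{C}_{\mathrm{b}}(\sphilbfrak{h})\), via Theorem \ref{thm:paper-particle-coefficient-buchholz-isomorphism}. The key point to verify carefully is that the quotient \(\oa{B}_{n+1}/\Ker r_{n+1}\cong\oa{B}_n\) is isometric, so that norm-preserving lifts exist. This holds because \(r_{n+1}\) is a surjective \(\ast\)-homomorphism between \(\oacstar\)-algebras, and the quotient norm is attained up to \(\varepsilon\). The exact attainment follows from the standard lemma that elements of a quotient \(\oacstar\)-algebra admit lifts of equal norm.
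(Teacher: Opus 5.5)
Your Step~3 is correct and is essentially the paper's proof: push $A$ down by $r_{n+1}\circ\cdots\circ r_N$ for $n<N$, then lift recursively through the surjective $\ast$-homomorphisms $r_n$ for $n>N$. The only difference is that you use exact norm-preserving lifts, giving $\sup_n\norm{A_n}=\norm{A}$, whereas the paper uses $2^{-n}\varepsilon$-approximate lifts from the quotient-norm formula and gets the bound $\norm{A}+\varepsilon$. The coefficient truncation in Steps~1--2, the positivity reduction, and the final appeal to $\Upsilon$ are unnecessary and can be dropped.
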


\begin{proof}
Proposition \ref{prop:paper-gauge-invariance-decreasing-particle-cutoff-ideals} shows that every $r_n$ is a
surjective $\ast$-homomorphism.
Fix $A_N\in\oa{B}_N(\sphilbfrak{h})$ and $\varepsilon>0$.
For $0\leq n<N$, define
$$\begin{aligned}
A_n
&=
\fun{r_{n+1}\circ\cdots\circ r_N}{A_N}.
\end{aligned}$$
Norm nonincrease for $\ast$-homomorphisms gives
$\norm{A_n}
\leq
\norm{A_N}$
for $0\leq n<N$.
For $n>N$, the quotient norm formula is
$$\begin{aligned}
\norm{A_{n-1}}
&=
\inf\set{\norm{B}}
{B \in \oa{B}_n(\sphilbfrak{h}), \fun{r_n}{B} = A_{n-1}}.
\end{aligned}$$
It permits a recursive choice of $A_n\in\oa{B}_n(\sphilbfrak{h})$ satisfying
$$\begin{aligned}
\fun{r_n}{A_n}
&=
A_{n-1},
&
\norm{A_n}
&\leq
\norm{A_{n-1}}+2^{-n}\varepsilon.
\end{aligned}$$
The resulting family is compatible and satisfies
$$\begin{aligned}
\sup_{n\in\monnat}\norm{A_n}
&\leq
\norm{A_N}+\sum_{n=N+1}^{\infty}2^{-n}\varepsilon
\leq
\norm{A_N}+\varepsilon.
\end{aligned}$$
The boundedness and compatibility show that $A=\seq{A_n}{n\in\monnat}$ belongs to
$\fun{\oa{B}}{\sphilbfrak{h}}$ and
$\fun{p_N}{A}=A_N$.
\end{proof}

\subsection{Ideals of the particle-cutoff quotients}\label{sec:paper-finite-stage-ideals}

In this subsection, the connecting maps and the lower output-degree bound \eqref{eq:paper-abstract-particle-coefficient-output-degree-range} control the ideal structure of each particle-cutoff quotient. They identify the degree-tail subspaces \eqref{eq:paper-main-degree-tail-ideals} as closed two-sided ideals and yield a complete ideal chain whose successive quotients are the particle-degree coefficient algebras.

\begin{prop}[Closed degree-tail ideals]\label{prop:paper-degree-tail-ideals}
For every $n\in\monnat$ and $0\leq k\leq n+1$, the degree-tail subspace $\oa{J}_k^{(n)}$ defined in
\eqref{eq:paper-main-degree-tail-ideals} is a norm-closed two-sided ideal of $\oa{B}_n(\sphilbfrak{h})$ and satisfies
\begin{equation}
\begin{aligned}
\oa{J}_k^{(n)}
&=
\bigoplus_{m=k}^{n}\fun{E_{m,n}}{\oa{C}_m(\sphilbfrak{h})}.
\end{aligned}
\label{eq:paper-degree-tail-coefficient-decomposition}
\end{equation}
The direct sum is understood to be zero when $k=n+1$.
\end{prop}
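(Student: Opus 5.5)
The plan is to dispose of the edge cases first and then prove the decomposition \eqref{eq:paper-degree-tail-coefficient-decomposition} directly; closedness and the ideal property will follow from it. For $k=n+1$ both sides are zero. For $k=0$ the claim is the direct-sum identity in \eqref{eq:paper-abstract-particle-coefficient-embedding-identities}, and $\oa{B}_n(\sphilbfrak{h})$ is trivially a closed ideal of itself. For $1\leq k\leq n$, write $R_{k,n}=r_k\circ r_{k+1}\circ\cdots\circ r_n\colon\oa{B}_n(\sphilbfrak{h})\to\oa{B}_{k-1}(\sphilbfrak{h})$. This is a composition of $\ast$-homomorphisms by Proposition \ref{prop:paper-gauge-invariance-decreasing-particle-cutoff-ideals}, so it is continuous and multiplicative. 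Hence $\oa{J}_k^{(n)}=\Ker R_{k,n}$ is automatically a norm-closed two-sided ideal, and the only real content is the identification of this kernel with the tail sum.

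Step one is the inclusion $\supset$. Iterating \eqref{eq:paper-abstract-particle-coefficient-connecting-identities} gives $R_{k,n}\circ E_{m,n}=E_{m,k-1}$ for $m\leq k-1$. For $k\leq m\leq n$, first apply $r_{m+1}\circ\cdots\circ r_n$ to obtain $E_{m,m}$. Then $r_m\circ E_{m,m}=0$ kills it, and this remains zero under the further maps. So $E_{m,n}(\oa{C}_m(\sphilbfrak{h}))\subset\Ker R_{k,n}$ whenever $m\geq k$.

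Step two is the inclusion $\subset$. Take $A\in\Ker R_{k,n}$ and decompose it as $A=\sum_{m=0}^{n}E_{m,n}(a_m)$ using Lemma \ref{lem:paper-finite-particle-coefficient-uniqueness}. Applying $R_{k,n}$ and using step one gives $0=\sum_{m=0}^{k-1}E_{m,k-1}(a_m)$. Uniqueness of coefficients at cutoff $k-1$ (again Lemma \ref{lem:paper-finite-particle-coefficient-uniqueness}, which relies on the injectivity in Proposition \ref{prop:paper-particle-coefficient-map-injectivity}) then forces $a_m=0$ for $m<k$. Hence $A$ lies in the tail sum. The sum is direct because it is a subsum of the direct sum in \eqref{eq:paper-abstract-particle-coefficient-embedding-identities}.

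I expect no serious obstacle. The only point needing care is the bookkeeping in the iterated compositions, in particular the case $m=k$, where the vanishing comes from the final map $r_k$ itself. As a consistency check, the ideal property can also be seen directly from the overlap formula \eqref{eq:paper-abstract-particle-coefficient-overlap-products}. Every output degree satisfies $r\geq\max\{p,q\}$, so multiplying a tail element of degree $\geq k$ by anything yields only degrees $\geq k$. This reflects the role of the lower output-degree bound \eqref{eq:paper-abstract-particle-coefficient-output-degree-range} announced at the start of this subsection, although the kernel description already suffices for the proof.
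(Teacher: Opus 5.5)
Your proposal is correct and follows the paper's proof in its main content: closedness from continuity of $r_k\circ\cdots\circ r_n$, and identification of the kernel with the tail sum by iterating the connecting identities and then applying coefficient uniqueness at cutoff $k-1$. The one difference is the ideal property: the paper checks two-sidedness separately through the overlap-product formula and the lower output-degree bound, whereas you note that $\oa{J}_k^{(n)}$ is by definition the kernel of the $\ast$-homomorphism $r_k\circ\cdots\circ r_n$, so two-sidedness comes for free and the overlap computation becomes only a consistency check.
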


\begin{proof}
For $k=0$ and $k=n+1$, both assertions follow from
\eqref{eq:paper-main-degree-tail-ideals} and the direct-sum identity in
\eqref{eq:paper-abstract-particle-coefficient-embedding-identities}.
Fix $1\leq k\leq n$.
If $x_j\in\oa{J}_k^{(n)}$ and $x_j\to x$ in $\oa{B}_n(\sphilbfrak{h})$, continuity of the connecting maps gives
$$\begin{aligned}
\fun{r_k\circ r_{k+1}\circ\cdots\circ r_n}{x}
&=
\lim_{j\to\infty}
\fun{r_k\circ r_{k+1}\circ\cdots\circ r_n}{x_j}
=
0.
\end{aligned}$$
Thus $\oa{J}_k^{(n)}$ is norm closed.
For $0\leq m\leq n$, the connecting identities in
\eqref{eq:paper-abstract-particle-coefficient-connecting-identities} give
$$\begin{aligned}
\fun{r_k\circ r_{k+1}\circ\cdots\circ r_n\circ E_{m,n}}{a_m}
&=
\begin{cases}
\fun{E_{m,k-1}}{a_m},&0\leq m<k,\\
0,&k\leq m\leq n.
\end{cases}
\end{aligned}$$
The composite restriction is
$$\begin{aligned}
\fun{r_k\circ r_{k+1}\circ\cdots\circ r_n}{
\sum_{m=0}^{n}\fun{E_{m,n}}{a_m}}
&=
\sum_{m=0}^{k-1}\fun{E_{m,k-1}}{a_m}.
\end{aligned}$$
Lemma \ref{lem:paper-finite-particle-coefficient-uniqueness} makes the right-hand side zero exactly when
$a_0=\cdots=a_{k-1}=0$, proving \eqref{eq:paper-degree-tail-coefficient-decomposition}.
For elements $x$ and $y$ of the forms
$$\begin{aligned}
x
&=
\sum_{p=k}^{n}\fun{E_{p,n}}{a_p},
&
y
&=
\sum_{q=0}^{n}\fun{E_{q,n}}{b_q},
\end{aligned}$$
the lower degree bound in \eqref{eq:paper-abstract-particle-coefficient-output-degree-range} implies that every
output degree $s$ in either product satisfies $s\geq\max\setone{p,q}\geq p\geq k$.
Consequently, the products are
$$\begin{aligned}
xy
&=
\sum_{p=k}^{n}\sum_{q=0}^{n}
\sum_{s=\max\setone{p,q}}^{\min\setone{p+q,n}}
\fun{E_{s,n}}{\fun{\mu_{p,q}^{s}}{a_p,b_q}}
\in
\oa{J}_k^{(n)},
\\
yx
&=
\sum_{q=0}^{n}\sum_{p=k}^{n}
\sum_{s=\max\setone{q,p}}^{\min\setone{q+p,n}}
\fun{E_{s,n}}{\fun{\mu_{q,p}^{s}}{b_q,a_p}}
\in
\oa{J}_k^{(n)}.
\end{aligned}$$
Thus $\oa{J}_k^{(n)}$ is a two-sided ideal.
\end{proof}

\begin{thm}[Ideal chain of a particle-cutoff quotient]\label{thm:paper-finite-stage-ideal-chain}
Under Definition \ref{def:paper-abstract-particle-coefficient-reconstruction-data}, the closed two-sided ideals of
$\oa{B}_n(\sphilbfrak{h})$ are exactly
\begin{equation}
\begin{aligned}
0
&=
\oa{J}_{n+1}^{(n)}
\subset
\oa{J}_{n}^{(n)}
\subset
\cdots
\subset
\oa{J}_{1}^{(n)}
\subset
\oa{J}_{0}^{(n)}
=
\oa{B}_n(\sphilbfrak{h}).
\end{aligned}
\label{eq:paper-finite-stage-ideal-chain}
\end{equation}
For $0\leq k\leq n$, the quotient is canonically isomorphic to the degree-$k$ coefficient algebra,
$$\begin{aligned}
\setquot{\oa{J}_k^{(n)}}{\oa{J}_{k+1}^{(n)}}
&\simeq
\oa{C}_k(\sphilbfrak{h}).
\end{aligned}$$
\end{thm}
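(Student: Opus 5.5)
We prove the quotient identification first, then show that every closed ideal lies in the chain. The chain inclusions $\oa{J}_{k+1}^{(n)}\subset\oa{J}_k^{(n)}$ and the fact that each member is a closed two-sided ideal are immediate from Proposition \ref{prop:paper-degree-tail-ideals} and the decomposition \eqref{eq:paper-degree-tail-coefficient-decomposition}.

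For the quotients, fix $0\leq k\leq n$ and define
$$\begin{aligned}
\phi_k\colon\oa{J}_k^{(n)}\to\oa{C}_k(\sphilbfrak{h});\quad
\fun{\phi_k}{\sum_{m=k}^{n}\fun{E_{m,n}}{a_m}}&=a_k .
\end{aligned}$$
By Lemma \ref{lem:paper-finite-particle-coefficient-uniqueness} this map is well defined and linear. It is surjective because $a_k=\phi_k(E_{k,n}(a_k))$, and its kernel is $\oa{J}_{k+1}^{(n)}$. It preserves adjoints by \eqref{eq:paper-abstract-particle-coefficient-embedding-identities}.

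Multiplicativity uses the overlap formula \eqref{eq:paper-abstract-particle-coefficient-overlap-products}. For $x=\sum_{p\geq k}E_{p,n}(a_p)$ and $y=\sum_{q\geq k}E_{q,n}(b_q)$, every output degree satisfies $s\geq\max\{p,q\}\geq k$. Degree exactly $k$ therefore requires $p=q=s=k$, which contributes $\mu_{k,k}^{k}(a_k,b_k)=a_kb_k$. Hence $\phi_k$ is a surjective $\ast$-homomorphism with kernel $\oa{J}_{k+1}^{(n)}$, and the induced map gives the canonical isomorphism $\setquot{\oa{J}_k^{(n)}}{\oa{J}_{k+1}^{(n)}}\simeq\oa{C}_k(\sphilbfrak{h})$. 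The case $n=0$ follows directly: $\oa{B}_0=E_{0,0}(\fldcmp)=\fldcmp$ has only the ideals $0$ and $\oa{B}_0$.

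The main obstacle is exhaustiveness: every nonzero closed ideal $\oa{L}\subset\oa{B}_n(\sphilbfrak{h})$ must equal some $\oa{J}_k^{(n)}$. We argue by induction on $n$. Assume the claim for $\oa{B}_{n-1}$. Since $r_n$ is surjective, $r_n(\oa{L})$ is a closed ideal of $\oa{B}_{n-1}$; closedness holds because images of closed ideals under surjective $\ast$-homomorphisms are closed. By hypothesis $r_n(\oa{L})=\oa{J}_k^{(n-1)}$ for some $k\leq n$.

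Next we show that $\oa{L}\supset\oa{J}_n^{(n)}=\Ker r_n$. The ideal $\oa{L}\cap\oa{J}_n^{(n)}$ is a closed ideal of $\oa{J}_n^{(n)}\cong\oa{C}_n$, where the isomorphism is given by the injective $\ast$-homomorphism $E_{n,n}$. The algebra $\oa{C}_n$ is simple by Remark \ref{rem:paper-degree-coefficient-algebra-simplicity}, and an ideal of an ideal is an ideal in a $\oacstar$-algebra. So $\oa{L}\cap\oa{J}_n^{(n)}$ is either $0$ or all of $\oa{J}_n^{(n)}$. It is nonzero because $\oa{J}_n^{(n)}$ is essential by Definition \ref{def:paper-particle-coefficient-embedding-data}. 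Hence $\oa{J}_n^{(n)}\subset\oa{L}$, and therefore $\oa{L}=r_n^{-1}(r_n(\oa{L}))=r_n^{-1}(\oa{J}_k^{(n-1)})$.

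It remains to check $r_n^{-1}(\oa{J}_k^{(n-1)})=\oa{J}_k^{(n)}$, including $k=n$, where $\oa{J}_n^{(n-1)}=0$. This follows from \eqref{eq:paper-abstract-particle-coefficient-connecting-identities}: $r_n$ sends $\sum_m E_{m,n}(a_m)$ to $\sum_{m<n}E_{m,n-1}(a_m)$. Coefficient uniqueness then shows that the image lies in $\oa{J}_k^{(n-1)}$ exactly when $a_m=0$ for all $m<k$.
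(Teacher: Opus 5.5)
Your proposal is correct and follows the paper's proof essentially step for step. Both arguments compute the degree-$k$ coefficient of a product using the lower output-degree bound and $\mu_{k,k}^{k}(a_k,b_k)=a_kb_k$. Both then run the same induction on $n$: essentiality and simplicity force $\Ker r_n\subset\oa{L}$, after which $\oa{L}=r_n^{-1}(\oa{J}_k^{(n-1)})=\oa{J}_k^{(n)}$.

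The only differences are cosmetic. You define the quotient isomorphism as coefficient extraction $\oa{J}_k^{(n)}\to\oa{C}_k(\sphilbfrak{h})$, which is the inverse of the paper's map $a_k\mapsto E_{k,n}(a_k)+\oa{J}_{k+1}^{(n)}$. You also justify simplicity of $\oa{J}_n^{(n)}$ explicitly, through the $\ast$-isomorphism $E_{n,n}$ and the simplicity of $\oa{C}_n(\sphilbfrak{h})$, whereas the paper leaves this implicit.
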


\begin{proof}
Proposition \ref{prop:paper-degree-tail-ideals} shows that every $\oa{J}_k^{(n)}$ in
\eqref{eq:paper-finite-stage-ideal-chain} is a closed two-sided ideal.
Take two elements $a_k$ and $b_k$ in the degree-$k$ coefficient algebra.
Their degree-$k$ product is
$$\begin{aligned}
\fun{\mu_{k,k}^{k}}{a_k,b_k}
&=
a_kb_k.
\end{aligned}$$
All other output degrees exceed $k$, and the exact remainder calculation is
\begin{equation}
\begin{aligned}
\fun{E_{k,n}}{a_k}\fun{E_{k,n}}{b_k}
-
\fun{E_{k,n}}{a_kb_k}
&=
\sum_{r=k+1}^{\min\setone{2k,n}}
\fun{E_{r,n}}{\fun{\mu_{k,k}^{r}}{a_k,b_k}}
\in
\oa{J}_{k+1}^{(n)}.
\end{aligned}
\label{eq:paper-degree-k-product-remainder}
\end{equation}
The map
$$\begin{aligned}
\chi_{k,n}
\colon
\oa{C}_k(\sphilbfrak{h})
\to
\setquot{\oa{J}_k^{(n)}}{\oa{J}_{k+1}^{(n)}};\quad
\fun{\chi_{k,n}}{a_k}
&=
\fun{E_{k,n}}{a_k}+\oa{J}_{k+1}^{(n)}
\end{aligned}$$
is multiplicative by \eqref{eq:paper-degree-k-product-remainder} and preserves adjoints because $E_{k,n}$ does.
Multiplicativity and adjoint preservation are expressed by
$$\begin{aligned}
\fun{\chi_{k,n}}{a_kb_k}
&=
\fun{\chi_{k,n}}{a_k}\fun{\chi_{k,n}}{b_k},
&
\fun{\chi_{k,n}}{\faadj{a_k}}
&=
\faadjrbk{\fun{\chi_{k,n}}{a_k}}.
\end{aligned}$$
If $\fun{\chi_{k,n}}{a_k}=0$, the direct-sum decomposition
\eqref{eq:paper-degree-tail-coefficient-decomposition} gives
$$\begin{aligned}
\fun{E_{k,n}}{a_k}
&\in
\fun{E_{k,n}}{\oa{C}_k(\sphilbfrak{h})}
\cap
\oa{J}_{k+1}^{(n)}
=
0.
\end{aligned}$$
Injectivity of $E_{k,n}$ gives $a_k=0$.
By \eqref{eq:paper-degree-tail-coefficient-decomposition}, every
$x=\sum_{m=k}^{n}\fun{E_{m,n}}{a_m}\in\oa{J}_k^{(n)}$ satisfies
$$\begin{aligned}
x+\oa{J}_{k+1}^{(n)}
&=
\fun{E_{k,n}}{a_k}+\oa{J}_{k+1}^{(n)}
=
\fun{\chi_{k,n}}{a_k},
\end{aligned}$$
which proves surjectivity and the quotient formula.

The ideal $\oa{J}_n^{(n)}=\Ker r_n$ is simple and essential by
Definition \ref{def:paper-particle-coefficient-embedding-data}.
For $n=0$, simplicity gives the asserted two-element ideal chain.
Assume the result at degree $n-1$.
Let $\oa{L}\subset\oa{B}_n(\sphilbfrak{h})$ be a nonzero closed two-sided ideal.
Essentiality and simplicity give
$$\begin{aligned}
0
&\neq
\oa{L}\cap\oa{J}_n^{(n)}
\subset
\oa{J}_n^{(n)},
&
\oa{L}\cap\oa{J}_n^{(n)}
&=
\oa{J}_n^{(n)}.
\end{aligned}$$
The equality implies $\oa{J}_n^{(n)}\subset\oa{L}$.
Since $\Ker r_n\subset\oa{L}$, the quotient correspondence makes
$\fun{r_n}{\oa{L}}$ a closed ideal of $\oa{B}_{n-1}(\sphilbfrak{h})$.
Induction in $n$ makes this image one of the tails $\oa{J}_k^{(n-1)}$.
For $x=\sum_{m=0}^{n}\fun{E_{m,n}}{a_m}$, one has
$$\begin{aligned}
\fun{r_n}{x}
&=
\sum_{m=0}^{n-1}\fun{E_{m,n-1}}{a_m},
\\
\fun{r_n}{x}\in\oa{J}_k^{(n-1)}
&\iff
a_0=\cdots=a_{k-1}=0
\iff
x\in\oa{J}_k^{(n)}.
\end{aligned}$$
The inverse-image calculation is
$$\begin{aligned}
\fun{r_n^{-1}}{\oa{J}_k^{(n-1)}}
&=
\oa{J}_k^{(n)}.
\end{aligned}$$
The final inverse-image calculation is
$$\begin{aligned}
\fun{r_n^{-1}}{\fun{r_n}{\oa{L}}}
&=
\oa{L}+\Ker r_n
=
\oa{L}.
\end{aligned}$$
Consequently $\oa{L}$ is one of the ideals in \eqref{eq:paper-finite-stage-ideal-chain}.
\end{proof}

\section{Intrinsic finite-mode traces and closed paths}\label{sec:intrinsic-finite-volume}

In this section, the principal application is the construction of finite-volume equilibrium states from the standard trace and, after a dynamics has been specified, the identification of the resulting state as a KMS state. The arguments are nevertheless separated according to the structure that they actually use. This separation determines which conclusions remain available for the continuous systems treated by Buchholz and which conclusions remain valid for nonequilibrium states, including states used in return-to-equilibrium problems.

The first level uses only the norm and functional calculus of a block \(\oacstar\)-algebra. The second level uses the finite-dimensional full matrix structure and expands products with fixed minimal projections before applying any functional. This operator identity is the appropriate starting point for a matrix-based path expansion, because subsequent matrix coefficients can be evaluated in a chosen basis without first taking a trace. The third level applies an arbitrary bounded linear functional. Positivity and normalization enter only when that functional is a state. The fourth level uses the standard trace, its trace norm, and its cyclicity. Trace-class estimates, partition functions, and the passage from open matrix paths to closed paths belong only to this fourth level.

The word KMS has an additional dynamical meaning in this hierarchy. For \(\sminvtemperature>0\) and a strongly continuous one-parameter group \(\alpha\) of automorphisms of a \(\oacstar\)-algebra, an element \(A\) is entire analytic when the orbit map \(t\mapsto\fun{\alpha_t}{A}\) extends to an entire algebra-valued function on \(\fldcmp\). A state \(\oastate[\psi]\) is a \(\sminvtemperature\)-KMS state when \begin{equation}
\begin{aligned}
\fun{\oastate[\psi]}{AB}
&=
\fun{\oastate[\psi]}{B\fun{\alpha_{\imunit\sminvtemperature}}{A}}
\end{aligned}
\label{eq:paper-kms-boundary-condition}
\end{equation} for every entire analytic element \(A\) and every \(B\). None of the norm-level, matrix-level, or bounded-functional statements below requires this boundary condition. It is needed only when an evaluated functional is asserted to describe equilibrium for the chosen dynamics. The trace Gibbs functional constructed below supplies that specialization, whereas the preceding levels remain available for general states.

\subsection{\texorpdfstring{Norm-level statements in block \(\oacstar\)-algebras}{Norm-level statements in block \textbackslash oacstar-algebras}}\label{norm-level-statements-in-block-oacstar-algebras}

In this subsection, the norm-level statements require neither a trace nor a state. They isolate the conclusions that can be transported to another block \(\oacstar\)-algebra after its sectorwise functional calculus and norm estimates have been verified.

Fix a nonzero finite-dimensional complex subspace \(\sphilbfrak{k}\subset\sphilbfrak{h}\). For every \(n\in\monnat\), let \(\physham_{n}\in e_{\sphilbfrak{k},n}\oa{K}(\sphilbfrak{k})^{\gamma}e_{\sphilbfrak{k},n}\) be self-adjoint. The block family \(\seq{\physham_{n}}{n\in\monnat}\) is affiliated with \(\oa{K}(\sphilbfrak{k})^{\gamma}\) when its two resolvent families satisfy \[\begin{aligned}
\seq{\rbk{\physham_{n}+\imunit}^{-1}}{n\in\monnat},
\seq{\rbk{\physham_{n}-\imunit}^{-1}}{n\in\monnat}
&\in
\oa{K}(\sphilbfrak{k})^{\gamma}.
\end{aligned}\] This condition does not identify the generally unbounded direct sum with an element of the \(\oacstar\)-algebra. It places its two resolvents in the fixed-point compact ideal.

\begin{prop}[Affiliation and heat kernels in the compact ideal from superstability]\label{prop:paper-compact-heat-kernel}\label{prop:intrinsic-compact-gibbs-semigroup}
Assume that the Hamiltonian family is superstable in the sense of
\eqref{eq:paper-main-superstability-bound}.
Then it is affiliated with $\oa{K}(\sphilbfrak{k})^{\gamma}$, and, for every $\sminvtemperature>0$,
$$\begin{aligned}
\seq{\napiernum^{-\sminvtemperature \physham_{n}}}{n\in\monnat}
&\in
\oa{K}(\sphilbfrak{k})^{\gamma}.
\end{aligned}$$
\end{prop}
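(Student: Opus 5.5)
The strategy is to use the defining decomposition \eqref{eq:paper-main-intrinsic-particle-block-decomposition} as a membership criterion. Suppose a family $\seq{K_n}{n\in\monnat}$ has $K_n\in e_{\sphilbfrak{k},n}\oa{K}(\sphilbfrak{k})^{\gamma}e_{\sphilbfrak{k},n}$ and $\norm{K_n}\to0$. Then the partial sums $S_N=\sum_{n=0}^{N}K_n$ lie in $\oa{K}(\sphilbfrak{k})^{\gamma}$. The corners are mutually orthogonal because the projections $e_{\sphilbfrak{k},n}$ are. For $N<N'$ this gives $\norm{S_{N'}-S_N}=\max_{N<n\leq N'}\norm{K_n}$, since an orthogonal sum of corner elements has the supremum of the block norms as its norm by the $\oacstar$-identity. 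Hence $(S_N)$ is Cauchy. Its limit lies in the closed ideal $\oa{K}(\sphilbfrak{k})^{\gamma}$, which is closed because it is the intersection of a closed ideal with the fixed-point set of the continuous automorphisms $\gamma_\theta$. This limit is the element denoted $\seq{K_n}{n\in\monnat}$. So the proof reduces to showing that the relevant block norms tend to zero.

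\textbf{Heat kernels.} Fix $\sminvtemperature>0$. Each $\physham_n$ is a self-adjoint element of the finite-dimensional full matrix corner, whose unit is $e_{\sphilbfrak{k},n}$. So $\napiernum^{-\sminvtemperature\physham_n}$ is defined by functional calculus in that unital corner and lies in it. From \eqref{eq:paper-main-superstability-bound}, read as $\physham_n\geq(an^2-bn-c)e_{\sphilbfrak{k},n}$ in the corner, the spectrum of $\physham_n$ lies in $[an^2-bn-c,\infty)$. Therefore
\[
\norm{\napiernum^{-\sminvtemperature\physham_n}}\leq\napiernum^{-\sminvtemperature(an^2-bn-c)},
\]
and the right-hand side tends to $0$ because $a>0$. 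The criterion then places the family in $\oa{K}(\sphilbfrak{k})^{\gamma}$.

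\textbf{Affiliation.} Since $\physham_n$ is self-adjoint, $\physham_n\pm\imunit$ is invertible in the corner, and the inverse has norm $\sup_{\lambda\in\opspec{\physham_n}}\abs{\lambda\pm\imunit}^{-1}$. For large $n$ the spectrum lies in $[an^2-bn-c,\infty)$ with $an^2-bn-c>0$, so this norm is at most $(1+(an^2-bn-c)^2)^{-1/2}$, which tends to $0$. The finitely many small $n$ do not affect convergence. The criterion again gives membership of both resolvent families.

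\textbf{Main obstacle.} The only delicate point is how the inequality \eqref{eq:paper-main-superstability-bound} is interpreted: the scalar must be read as a multiple of the corner unit $e_{\sphilbfrak{k},n}$, as in \eqref{eq:paper-main-corner-potential-superstability}. All spectra and functional calculus must be taken in the corner rather than in $\oa{K}(\sphilbfrak{k})^{\gamma}$, which may be nonunital. It also has to be checked that the norm of an orthogonal block sum is the supremum of the block norms. This follows because $S^{\ast}S$ is again block diagonal and its norm is the spectral radius, which is the maximum over blocks.
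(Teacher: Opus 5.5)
Your proof is correct and follows the same route as the paper. Both use the functional-calculus bounds $\norm{(\physham_n\pm\imunit)^{-1}}\leq\rbk{(an^2-bn-c)^2+1}^{-1/2}$ for large $n$ and $\norm{\napiernum^{-\sminvtemperature\physham_n}}\leq\napiernum^{-\sminvtemperature(an^2-bn-c)}$, followed by an appeal to the block decomposition \eqref{eq:paper-main-intrinsic-particle-block-decomposition}; you additionally make explicit two points the paper leaves tacit, namely that a norm-null family of orthogonal corner elements sums, via Cauchy partial sums, to an element of the closed algebra $\oa{K}(\sphilbfrak{k})^{\gamma}$, and that the superstability bound must be read relative to the corner unit $e_{\sphilbfrak{k},n}$.
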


\begin{proof}
Choose constants $a>0$ and $b,c\geq0$ witnessing
\eqref{eq:paper-main-superstability-bound}.
For all sufficiently large $n$, the scalar $an^2-bn-c$ is nonnegative.
Continuous functional calculus gives
$$\begin{aligned}
\norm{\rbk{\physham_{n}\pm\imunit}^{-1}}
&\leq
\frac{1}{\sqrt{\rbk{an^2-bn-c}^{2}+1}}
\to
0
&&\rbk{n\to\infty},
\\
\norm{\napiernum^{-\sminvtemperature \physham_{n}}}
&\leq
\napiernum^{-\sminvtemperature\rbk{an^2-bn-c}}
\to
0
&&\rbk{n\to\infty}.
\end{aligned}$$
Both conclusions follow from the unique norm-convergent decomposition
\eqref{eq:paper-main-finite-mode-block-algebra}.
\end{proof}

The Hamiltonian family defines a one-parameter group of \(\ast\)-automorphisms on \(\oa{K}(\sphilbfrak{k})^{\gamma}\) by \begin{equation}
\begin{aligned}
\alpha_t
\colon
\oa{K}(\sphilbfrak{k})^{\gamma}
\to
\oa{K}(\sphilbfrak{k})^{\gamma};\quad
\fun{\alpha_t}{A}
&=
\seq{
\napiernum^{\imunit t\physham_n}
A_n
\napiernum^{-\imunit t\physham_n}
}{n\in\monnat}
&&\rbk{t\in\fldreal}.
\end{aligned}
\label{eq:paper-intrinsic-block-dynamics}
\end{equation} The group in \eqref{eq:paper-intrinsic-block-dynamics} is strongly continuous. Indeed, for \(A=\seq{A_n}{n\in\monnat}\in\oa{K}(\sphilbfrak{k})^{\gamma}\) and \(A_{\leq N}=\sum_{n=0}^{N}A_n\), the isometry of every \(\alpha_t\) in \eqref{eq:paper-intrinsic-block-dynamics} gives \[\begin{aligned}
\norm{\fun{\alpha_t}{A}-A}
&\leq
2\norm{A-A_{\leq N}}
+
\norm{\fun{\alpha_t}{A_{\leq N}}-A_{\leq N}}.
\end{aligned}\] For fixed \(N\), the last term tends to zero as \(t\to0\) because it is the maximum of finitely many continuous block orbits. Equation \eqref{eq:paper-main-finite-mode-block-algebra} gives \(\norm{A-A_{\leq N}}\to0\) as \(N\to\infty\), and the norm bound for \(\fun{\alpha_t}{A}-A\) proves strong continuity.

The following criterion separates sectorwise convergence from convergence in the block direct sum.

\begin{prop}[Norm convergence in a block direct sum]\label{prop:paper-block-direct-sum-norm-convergence}
For $n\in\monnat$, let $\oa{A}_n$ be a $\oacstar$-algebra, and let
$X_{M,n},X_n\in\oa{A}_n$ for $M\in\semigrposint$.
Assume that $\norm{X_{M,n}-X_n}\to0$ as $M\to\infty$ for every fixed $n$.
Assume also that there is a sequence $\seq{q_n}{n\in\monnat}$ in
$\closedinterval{0}{\infty}$ satisfying $q_n\to0$ and
$$\begin{aligned}
\norm{X_{M,n}}+\norm{X_n}
&\leq
q_n
&&\rbk{M\in\semigrposint,\ n\in\monnat}.
\end{aligned}$$
Then $\seq{X_{M,n}}{n\in\monnat}$ and $\seq{X_n}{n\in\monnat}$ belong to
$\bigoplus_{n\in\monnat}\oa{A}_n$, and
$$\begin{aligned}
\lim_{M\to\infty}
\sup_{n\in\monnat}\norm{X_{M,n}-X_n}
&=
0.
\end{aligned}$$
\end{prop}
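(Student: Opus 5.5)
The proof is a standard $\varepsilon$–$N$ splitting of the supremum into a finite head and a uniformly small tail. First recall that $\bigoplus_{n\in\monnat}\oa{A}_n$ is the $c_0$-direct sum: the sequences $\seq{Y_n}{n\in\monnat}$ with $Y_n\in\oa{A}_n$ and $\norm{Y_n}\to0$, with the supremum norm. Membership is immediate from the hypothesis. For every fixed $M$ one has $\norm{X_{M,n}}\leq q_n$, and also $\norm{X_n}\leq q_n$. Since $q_n\to0$, both sequences tend to zero in norm, so $\seq{X_{M,n}}{n\in\monnat}$ and $\seq{X_n}{n\in\monnat}$ belong to the direct sum. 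Note that $q_n$ may be $\infty$ for finitely many $n$ only, because $q_n\to0$. The finite head uses only the pointwise convergence and never uses $q_n$, so these values cause no difficulty.

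For the convergence, fix $\varepsilon>0$ and choose $N$ with $q_n<\varepsilon$ for all $n>N$. Split
$$\begin{aligned}
\sup_{n\in\monnat}\norm{X_{M,n}-X_n}
&\leq
\max_{0\leq n\leq N}\norm{X_{M,n}-X_n}
+
\sup_{n>N}\norm{X_{M,n}-X_n}.
\end{aligned}$$
The triangle inequality bounds the tail term by $\sup_{n>N}\rbk{\norm{X_{M,n}}+\norm{X_n}}\leq\sup_{n>N}q_n\leq\varepsilon$, uniformly in $M$. The head is a maximum of finitely many sequences, each of which tends to zero as $M\to\infty$ by the sectorwise hypothesis. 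Hence the head is smaller than $\varepsilon$ for all sufficiently large $M$. Therefore $\limsup_{M\to\infty}\sup_{n}\norm{X_{M,n}-X_n}\leq\varepsilon$ for every $\varepsilon>0$, which gives the claim.

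There is no real obstacle here. The only point needing care is that the majorant $q_n$ be uniform in $M$, which is exactly what makes the tail estimate independent of $M$, and this is built into the hypothesis. This mirrors the strong-continuity argument given just before the statement for the block dynamics \eqref{eq:paper-intrinsic-block-dynamics}.
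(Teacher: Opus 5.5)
Your proof is correct and follows essentially the same route as the paper: choose $N$ with $q_n<\varepsilon$ for $n>N$, bound the tail uniformly in $M$ by the majorant, handle the finitely many head sectors by sectorwise convergence, and get membership in the direct sum from $q_n\to0$. Your remark that $q_n=\infty$ can occur only for finitely many $n$, and only in the head where $q_n$ is never used, is a small point of care that the paper leaves implicit.
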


\begin{proof}
Fix $\varepsilon>0$.
Choose $N\in\monnat$ such that $q_n<\varepsilon$ for every $n>N$.
The hypotheses give
$$\begin{aligned}
\sup_{n>N}\norm{X_{M,n}-X_n}
&\leq
\sup_{n>N}q_n
\leq
\varepsilon
\end{aligned}$$
for every $M\in\semigrposint$.
Sectorwise convergence on the finite set $\set{n\in\monnat}{0\leq n\leq N}$ gives an integer $M_0$ such that
$$\begin{aligned}
\max_{0\leq n\leq N}\norm{X_{M,n}-X_n}
&<
\varepsilon
\end{aligned}$$
for every $M\geq M_0$.
The two estimates prove the asserted supremum-norm convergence, and the bound by $q_n$ proves membership in
the direct sum.
\end{proof}

For every \(n\in\monnat\), let \(T_n\) and \(V_n\) be self-adjoint elements of the \(n\)th corner, set \(\physham_n=T_n+V_n\), and, for \(\sminvtemperature>0\) and \(M\in\semigrposint\), define \begin{equation}
\begin{aligned}
G_{\sminvtemperature,M,n}
&=
\rbk{
\napiernum^{-\frac{\sminvtemperature}{M}V_n}
\napiernum^{-\frac{\sminvtemperature}{M}T_n}
}^{M},
&
G_{\sminvtemperature,M}
&=
\seq{G_{\sminvtemperature,M,n}}{n\in\monnat}.
\end{aligned}
\label{eq:paper-trotter-product-family}
\end{equation}

\begin{prop}[Norm product formula for bounded $\oacstar$-algebra elements]\label{prop:paper-finite-corner-norm-product-formula}
Let $\oa{A}$ be a unital $\oacstar$-algebra, let $T,V\in\oa{A}$ be self-adjoint, and fix $\sminvtemperature>0$.
Then
$$\begin{aligned}
\lim_{M\to\infty}
\norm{
\rbk{
\napiernum^{-\frac{\sminvtemperature}{M}V}
\napiernum^{-\frac{\sminvtemperature}{M}T}
}^{M}
-
\napiernum^{-\sminvtemperature\rbk{T+V}}}
&=
0.
\end{aligned}$$
In particular, the products in \eqref{eq:paper-trotter-product-family} satisfy this conclusion on every fixed
particle block.
\end{prop}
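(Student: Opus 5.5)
The plan is the standard telescoping argument for the Lie--Trotter formula with bounded generators, carried out in the Banach algebra norm of $\oa{A}$. Set $h=\sminvtemperature/M$ and define
$$\begin{aligned}
X_h&=\napiernum^{-hV}\napiernum^{-hT},
&
Y_h&=\napiernum^{-h\rbk{T+V}}.
\end{aligned}$$
Every exponential here is the norm-convergent power series in $\oa{A}$. The first step is the telescoping identity
$$\begin{aligned}
X_h^{M}-Y_h^{M}
&=
\sum_{k=0}^{M-1}X_h^{M-1-k}\rbk{X_h-Y_h}Y_h^{k}.
\end{aligned}$$
This identity is purely algebraic and holds in any unital algebra.

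The second step is the uniform growth bound. Submultiplicativity together with $\norm{\napiernum^{-hA}}\leq\napiernum^{h\norm{A}}$ gives
$$\begin{aligned}
\norm{X_h^{j}}&\leq\napiernum^{jh\rbk{\norm{T}+\norm{V}}},
&
\norm{Y_h^{j}}&\leq\napiernum^{jh\norm{T+V}}.
\end{aligned}$$
Since $jh\leq\sminvtemperature$, every factor in the telescoping sum is bounded by $C=\napiernum^{\sminvtemperature\rbk{\norm{T}+\norm{V}}}$. Therefore
$$\norm{X_h^{M}-Y_h^{M}}\leq M\,C\,\norm{X_h-Y_h}.$$

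The third step is the local estimate $\norm{X_h-Y_h}=O\rbk{h^{2}}$. Expand both exponentials as power series. The terms of order $0$ and $1$ in $h$ agree: each equals $1-h\rbk{T+V}$. The remainder is an absolutely convergent series whose coefficients are controlled by powers of $\norm{T}$ and $\norm{V}$. Concretely, $\norm{X_h-Y_h}\leq h^{2}K$ for $0<h\leq\sminvtemperature$, where $K$ depends only on $\sminvtemperature$, $\norm{T}$ and $\norm{V}$. For instance, one can bound the tails by $\sum_{j\geq2}h^{j-2}\rbk{\norm{T}+\norm{V}}^{j}/j!$ for each factor and handle the cross term of the product similarly.

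Combining the three steps gives
$$\norm{X_h^{M}-Y_h^{M}}\leq MCK\sminvtemperature^{2}M^{-2}\to0.$$
The "in particular" clause follows by taking $\oa{A}$ to be the finite-dimensional full matrix corner $e_{\sphilbfrak{k},n}\oa{K}(\sphilbfrak{k})^{\gamma}e_{\sphilbfrak{k},n}$ with unit $e_{\sphilbfrak{k},n}$. This corner is a unital $\oacstar$-algebra by Definition \ref{def:intrinsic-particle-block-system}, and $T_n,V_n$ are self-adjoint elements of it. The exponentials must be computed in the corner's functional calculus, so that the constant term is $e_{\sphilbfrak{k},n}$. The only step needing care is the second-order remainder estimate. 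Even it is routine: it requires only boundedness, not self-adjointness, which is used merely to interpret the exponentials as the heat kernels of the statement.
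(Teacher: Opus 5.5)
Your proposal is correct and follows the paper's proof essentially step for step: the telescoping identity for $X_h^{M}-Y_h^{M}$, uniform exponential bounds on the powers, and a one-step estimate of order $M^{-2}$, which the paper makes explicit through the exponential remainders $R_{M,Q}$ and an explicit constant $C_{X,Y}$. Your combined bound $\|X_h^{M-1-k}\|\,\|Y_h^{k}\|\le \mathrm{e}^{\beta(\|T\|+\|V\|)}$ is slightly sharper than the paper's factor $\mathrm{e}^{\|X\|+\|Y\|+\|X+Y\|}$, but this does not change the argument.
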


\begin{proof}
Set $X=-\sminvtemperature V$ and $Y=-\sminvtemperature T$.
For $Q\in\oa{A}$ and $M\in\semigrposint$, the exponential series gives
$$\begin{aligned}
\norm{
\napiernum^{\frac{1}{M}Q}
-
1
-
\frac{1}{M}Q}
&\leq
\frac{\norm{Q}^{2}}{2M^{2}}
\napiernum^{\frac{1}{M}\norm{Q}}.
\end{aligned}$$
Define the exponential remainder by
$$\begin{aligned}
R_{M,Q}
&=
\napiernum^{\frac{1}{M}Q}
-
1
-
\frac{1}{M}Q.
\end{aligned}$$
The one-step difference has the exact expansion
$$\begin{aligned}
&\napiernum^{\frac{1}{M}X}
\napiernum^{\frac{1}{M}Y}
-
\napiernum^{\frac{1}{M}\rbk{X+Y}}
\\ 
&=
R_{M,X}
+
R_{M,Y}
-
R_{M,X+Y}
+
\frac{1}{M^{2}}XY
+
\frac{1}{M}XR_{M,Y}
+
\frac{1}{M}R_{M,X}Y
+
R_{M,X}R_{M,Y}.
\end{aligned}$$
For $M\geq1$, the remainder estimate applied to this expansion gives
$$\begin{aligned}
\norm{
\napiernum^{\frac{1}{M}X}
\napiernum^{\frac{1}{M}Y}
-
\napiernum^{\frac{1}{M}\rbk{X+Y}}}
&\leq
\frac{C_{X,Y}}{M^{2}},
\end{aligned}$$
where the $M$-independent constant is
$$\begin{aligned}
C_{X,Y}
&=
\frac{\norm{X}^{2}}{2}\napiernum^{\norm{X}}
+
\frac{\norm{Y}^{2}}{2}\napiernum^{\norm{Y}}
+
\frac{\norm{X+Y}^{2}}{2}\napiernum^{\norm{X+Y}}
+
\norm{X}\norm{Y}
\\
&\quad
+
\frac{\norm{X}\norm{Y}^{2}}{2}\napiernum^{\norm{Y}}
+
\frac{\norm{X}^{2}\norm{Y}}{2}\napiernum^{\norm{X}}
+
\frac{\norm{X}^{2}\norm{Y}^{2}}{4}
\napiernum^{\norm{X}+\norm{Y}}.
\end{aligned}$$
For
$A_M=\napiernum^{\frac{1}{M}X}\napiernum^{\frac{1}{M}Y}$ and
$B_M=\napiernum^{\frac{1}{M}\rbk{X+Y}}$, the telescopic identity gives
$$\begin{aligned}
A_M^{M}-B_M^{M}
&=
\sum_{j=0}^{M-1}
A_M^{M-1-j}
\rbk{A_M-B_M}
B_M^{j}.
\end{aligned}$$
The exponential norm bounds and the one-step estimate imply
$$\begin{aligned}
\norm{A_M^{M}-B_M^{M}}
&\leq
\frac{C_{X,Y}}{M}
\napiernum^{\norm{X}+\norm{Y}+\norm{X+Y}}
\to
0.
\end{aligned}$$
Substitution of $X=-\sminvtemperature V$ and $Y=-\sminvtemperature T$ proves the assertion.
\end{proof}

\begin{cor}[A finite-dimensional norm criterion over all particle blocks]\label{cor:paper-finite-mode-cstar-product-formula}
Assume the potential superstability \eqref{eq:paper-main-corner-potential-superstability} and the linear
kinetic norm bound \eqref{eq:paper-main-linear-kinetic-bound}.
Then $G_{\sminvtemperature,M}$ and $\seq{\napiernum^{-\sminvtemperature\physham_n}}{n\in\monnat}$ belong to
$\oa{K}(\sphilbfrak{k})^{\gamma}$ for every $M\in\semigrposint$, and
$$\begin{aligned}
\lim_{M\to\infty}
\norm{
G_{\sminvtemperature,M}
-
\seq{\napiernum^{-\sminvtemperature\physham_n}}{n\in\monnat}}
&=
0.
\end{aligned}$$
\end{cor}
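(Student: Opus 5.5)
The plan is to apply Proposition \ref{prop:paper-block-direct-sum-norm-convergence} with $\oa{A}_n=e_{\sphilbfrak{k},n}\oa{K}(\sphilbfrak{k})^{\gamma}e_{\sphilbfrak{k},n}$, $X_{M,n}=G_{\sminvtemperature,M,n}$ and $X_n=\napiernum^{-\sminvtemperature\physham_n}$. Sectorwise convergence $\norm{X_{M,n}-X_n}\to0$ for each fixed $n$ is exactly Proposition \ref{prop:paper-finite-corner-norm-product-formula}, applied in the unital corner algebra $\oa{A}_n$ with unit $e_{\sphilbfrak{k},n}$. Once the uniform majorant is available, that proposition places the two families in the block direct sum. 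The unique norm-convergent decomposition \eqref{eq:paper-main-finite-mode-block-algebra} then identifies this direct sum with elements of $\oa{K}(\sphilbfrak{k})^{\gamma}$: a family of corner elements whose norms tend to zero defines a norm-convergent series of its partial sums. The supremum of the blockwise differences is the norm in $\oa{K}(\sphilbfrak{k})^{\gamma}$, because the blocks are mutually orthogonal.

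The substantive step is the majorant $q_n\to0$, uniform in $M$. For the split product, functional calculus in the corner gives $\norm{\napiernum^{-\frac{\sminvtemperature}{M}V_n}}\leq\napiernum^{-\frac{\sminvtemperature}{M}(an^2-bn-c)}$ from \eqref{eq:paper-main-corner-potential-superstability}. It also gives $\norm{\napiernum^{-\frac{\sminvtemperature}{M}T_n}}\leq\napiernum^{\frac{\sminvtemperature}{M}\norm{T_n}}\leq\napiernum^{\frac{\sminvtemperature}{M}(b_Tn+c_T)}$ from \eqref{eq:paper-main-linear-kinetic-bound}. Submultiplicativity over the $M$ factors yields
$$\norm{G_{\sminvtemperature,M,n}}\leq\napiernum^{-\sminvtemperature\rbk{an^2-(b+b_T)n-c-c_T}},$$
which is independent of $M$. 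For the heat kernel, $\physham_n\geq(an^2-bn-c)-\norm{T_n}\geq an^2-(b+b_T)n-c-c_T$ in the corner, so the same bound holds for $\norm{\napiernum^{-\sminvtemperature\physham_n}}$. Taking $q_n=2\napiernum^{-\sminvtemperature(an^2-(b+b_T)n-c-c_T)}$ gives $q_n\to0$, since $a>0$ makes the quadratic term dominate.

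The main obstacle is only this uniformity in $M$. The linear kinetic bound is what makes it work: the possible growth $\napiernum^{\sminvtemperature(b_Tn+c_T)}$ of the kinetic factors is linear in $n$ in the exponent, while the potential supplies quadratic decay. Finally, I will note that the constant in Proposition \ref{prop:paper-finite-corner-norm-product-formula} depends on $n$. This is why the argument goes through Proposition \ref{prop:paper-block-direct-sum-norm-convergence}, truncating to finitely many sectors, rather than through a uniform rate.
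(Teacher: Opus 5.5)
Your proposal is correct and follows the paper's proof. It uses the same $M$-independent majorant, obtained from potential superstability and the linear kinetic norm bound, together with sectorwise convergence from Proposition \ref{prop:paper-finite-corner-norm-product-formula} in each unital corner and the lift through Proposition \ref{prop:paper-block-direct-sum-norm-convergence}. Your explicit identification of the $c_0$ block direct sum with $\oa{K}(\sphilbfrak{k})^{\gamma}$, via orthogonality of the blocks, spells out a step the paper leaves implicit.
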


\begin{proof}
Functional calculus and submultiplicativity give
$$\begin{aligned}
\norm{G_{\sminvtemperature,M,n}}
&\leq
\napiernum^{-\sminvtemperature\rbk{an^2-bn-c}+\sminvtemperature\rbk{b_Tn+c_T}},
\\
\norm{\napiernum^{-\sminvtemperature\physham_n}}
&\leq
\napiernum^{-\sminvtemperature\rbk{an^2-\rbk{b+b_T}n-\rbk{c+c_T}}}.
\end{aligned}$$
The sum of the two scalar bounds tends to zero as $n\to\infty$ and is independent of $M$.
Proposition \ref{prop:paper-finite-corner-norm-product-formula} supplies sectorwise convergence, and Proposition
\ref{prop:paper-block-direct-sum-norm-convergence} gives the conclusion.
\end{proof}

\subsection{Matrix identities before applying a functional}\label{matrix-identities-before-applying-a-functional}

In this subsection, the full matrix structure supplies finite algebraic expansions independently of every trace, state, or KMS condition. These identities are the level at which one can choose a basis and evaluate matrix entries in a subsequent path-integral construction.

For every \(n\in\monnat\), fix a finite set \(\Omega_n\) and mutually orthogonal minimal projections \[\begin{aligned}
\seq{p_{n,\omega}}{\omega\in\Omega_n}
&\subset
e_{\sphilbfrak{k},n}\oa{K}(\sphilbfrak{k})^{\gamma}e_{\sphilbfrak{k},n}
\end{aligned}\] whose sum is the corner unit: \begin{equation}
\begin{aligned}
\sum_{\omega\in\Omega_n}p_{n,\omega}
&=
e_{\sphilbfrak{k},n}.
\end{aligned}
\label{eq:paper-minimal-projection-resolution}
\end{equation} Products indexed by \(j\) below are ordered from left to right with increasing \(j\).

\begin{prop}[Open-path expansion in a matrix corner]\label{prop:paper-matrix-open-path-expansion}
Fix $n\in\monnat$ and $M\in\semigrposint$.
For elements $A_{0,n},\ldots,A_{M-1,n}$ of the $n$th corner, one has
\begin{equation}
\begin{aligned}
\prod_{j=0}^{M-1}A_{j,n}
&=
\sum_{\rbk{\omega_0,\ldots,\omega_M}\in\Omega_n^{M+1}}
p_{n,\omega_0}A_{0,n}p_{n,\omega_1}
A_{1,n}\cdots
p_{n,\omega_{M-1}}A_{M-1,n}p_{n,\omega_M}.
\end{aligned}
\label{eq:paper-matrix-open-path-expansion}
\end{equation}
\end{prop}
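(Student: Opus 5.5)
The identity follows from inserting the resolution of the corner unit \eqref{eq:paper-minimal-projection-resolution} at each of the $M+1$ positions of the product, once before each factor and once at the end. The only structural input needed is that $e_{\sphilbfrak{k},n}$ is the unit of the $n$th corner. Since every $A_{j,n}$ lies in $e_{\sphilbfrak{k},n}\oa{K}(\sphilbfrak{k})^{\gamma}e_{\sphilbfrak{k},n}$, we have
$$\begin{aligned}
A_{j,n}
&=
e_{\sphilbfrak{k},n}A_{j,n}
=
A_{j,n}e_{\sphilbfrak{k},n}
=
e_{\sphilbfrak{k},n}A_{j,n}e_{\sphilbfrak{k},n}.
\end{aligned}$$
This uses only that $e_{\sphilbfrak{k},n}$ is a projection: for $A=e_{\sphilbfrak{k},n}Ke_{\sphilbfrak{k},n}$, idempotence gives $e_{\sphilbfrak{k},n}A=A=Ae_{\sphilbfrak{k},n}$.

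The steps are as follows. First, write
$$\begin{aligned}
\prod_{j=0}^{M-1}A_{j,n}
&=
e_{\sphilbfrak{k},n}A_{0,n}e_{\sphilbfrak{k},n}A_{1,n}\cdots e_{\sphilbfrak{k},n}A_{M-1,n}e_{\sphilbfrak{k},n}.
\end{aligned}$$
This interleaves $M+1$ copies of the corner unit, which is justified by the absorption identities above, applied left to right. Second, replace each copy of $e_{\sphilbfrak{k},n}$ by $\sum_{\omega\in\Omega_n}p_{n,\omega}$, using \eqref{eq:paper-minimal-projection-resolution}. Third, expand the resulting product of $M+1$ finite sums by multilinearity of multiplication. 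Each choice of one index $\omega_i$ from the $i$th sum yields exactly one term $p_{n,\omega_0}A_{0,n}p_{n,\omega_1}\cdots A_{M-1,n}p_{n,\omega_M}$, and these choices range over $\Omega_n^{M+1}$. This gives \eqref{eq:paper-matrix-open-path-expansion}.

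All sums are finite because $\Omega_n$ is finite, so no convergence question arises. Orthogonality and minimality of the $p_{n,\omega}$ are not needed for the identity itself; they matter only for later evaluation of matrix entries. The only point needing care is bookkeeping: the ordering convention, that products indexed by $j$ are ordered left to right, matches the placement of $\omega_j$ immediately before $A_{j,n}$, with $\omega_M$ at the right end. I do not expect any genuine obstacle.
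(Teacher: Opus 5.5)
Your proposal is correct and is the same argument as the paper: insert the resolution $\sum_{\omega\in\Omega_n}p_{n,\omega}=e_{\sphilbfrak{k},n}$ before the first factor, between adjacent factors, and after the last factor, then expand the finite product of finite sums. Your absorption identity $A_{j,n}=e_{\sphilbfrak{k},n}A_{j,n}e_{\sphilbfrak{k},n}$ only makes explicit the step the paper leaves implicit when it inserts these units.
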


\begin{proof}
Inserting the resolution \eqref{eq:paper-minimal-projection-resolution} before the first factor, between every
pair of adjacent factors, and after the last factor gives
$$\begin{aligned}
\prod_{j=0}^{M-1}A_{j,n}
&=
\rbk{\sum_{\omega_0\in\Omega_n}p_{n,\omega_0}}
A_{0,n}
\rbk{\sum_{\omega_1\in\Omega_n}p_{n,\omega_1}}
\cdots
A_{M-1,n}
\rbk{\sum_{\omega_M\in\Omega_n}p_{n,\omega_M}}
\\
&=
\sum_{\rbk{\omega_0,\ldots,\omega_M}\in\Omega_n^{M+1}}
p_{n,\omega_0}A_{0,n}p_{n,\omega_1}
A_{1,n}\cdots
p_{n,\omega_{M-1}}A_{M-1,n}p_{n,\omega_M}.
\end{aligned}$$
This is \eqref{eq:paper-matrix-open-path-expansion}.
\end{proof}

Assume that \(V_n\) is diagonal in the fixed minimal-projection resolution, and define \begin{equation}
\begin{aligned}
v_n
\colon
\Omega_n
\to
\fldreal;\quad
V_np_{n,\omega}
&=
\fun{v_n}{\omega}p_{n,\omega}
&&\rbk{\omega\in\Omega_n}.
\end{aligned}
\label{eq:paper-minimal-projection-diagonalization}
\end{equation} Functional calculus and Proposition \ref{prop:paper-matrix-open-path-expansion} give the operator identity \begin{equation}
\begin{aligned}
G_{\sminvtemperature,M,n}
&=
\sum_{\rbk{\omega_0,\ldots,\omega_M}\in\Omega_n^{M+1}}
\prod_{j=0}^{M-1}
\napiernum^{-\frac{\sminvtemperature}{M}\fun{v_n}{\omega_j}}
p_{n,\omega_j}
\napiernum^{-\frac{\sminvtemperature}{M}T_n}
p_{n,\omega_{j+1}}.
\end{aligned}
\label{eq:paper-matrix-trotter-open-path-expansion}
\end{equation}

\subsection{Bounded linear functionals and states}\label{bounded-linear-functionals-and-states}

In this subsection, a bounded linear functional is applied to the preceding norm limits and finite matrix identities without being positive, tracial, or a KMS state. This level is therefore available for nonequilibrium states as well as for arbitrary bounded linear functionals.

\begin{cor}[Bounded-functional evaluation of matrix paths]\label{cor:paper-bounded-functional-path-expansion}
Fix $n\in\monnat$, $M\in\semigrposint$, and a bounded linear functional $\ell_n$ on the $n$th corner.
Under the diagonalization condition \eqref{eq:paper-minimal-projection-diagonalization}, one has
$$\begin{aligned}
\fun{\ell_n}{G_{\sminvtemperature,M,n}}
&=
\sum_{\rbk{\omega_0,\ldots,\omega_M}\in\Omega_n^{M+1}}
\fun{\ell_n}{
\prod_{j=0}^{M-1}
\napiernum^{-\frac{\sminvtemperature}{M}\fun{v_n}{\omega_j}}
p_{n,\omega_j}
\napiernum^{-\frac{\sminvtemperature}{M}T_n}
p_{n,\omega_{j+1}}}.
\end{aligned}$$
\end{cor}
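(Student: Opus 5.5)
The plan is to apply the bounded linear functional \(\ell_n\) directly to the operator identity \eqref{eq:paper-matrix-trotter-open-path-expansion}. No trace, positivity, or KMS property is involved; linearity of \(\ell_n\) on a finite sum is all that is needed.

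First, I will justify \eqref{eq:paper-matrix-trotter-open-path-expansion} itself from Proposition \ref{prop:paper-matrix-open-path-expansion}. Take the \(2M\) factors in \(G_{\sminvtemperature,M,n}\) to be alternately \(\napiernum^{-\frac{\sminvtemperature}{M}V_n}\) and \(\napiernum^{-\frac{\sminvtemperature}{M}T_n}\), and insert the resolution \eqref{eq:paper-minimal-projection-resolution} only around each potential factor. By the diagonalization \eqref{eq:paper-minimal-projection-diagonalization} and functional calculus in the finite-dimensional corner,
\[\napiernum^{-\frac{\sminvtemperature}{M}V_n}=\sum_{\omega\in\Omega_n}\napiernum^{-\frac{\sminvtemperature}{M}\fun{v_n}{\omega}}p_{n,\omega}.\]
This holds because \(V_n=\sum_{\omega}\fun{v_n}{\omega}p_{n,\omega}\) with orthogonal projections summing to the corner unit. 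Next, I will insert \(e_{\sphilbfrak{k},n}=\sum_{\omega}p_{n,\omega}\) after the final kinetic factor. Multiplying out and using \(p_{n,\omega}p_{n,\omega'}=\delta_{\omega\omega'}p_{n,\omega}\) to merge adjacent projections gives the displayed sum over \(\Omega_n^{M+1}\). In this sum each \(\omega_j\) labels the projection to the left of the \(j\)th kinetic factor, and \(\omega_{j+1}\) labels the one to its right. Adjacent factors \(p_{n,\omega_{j+1}}\cdot p_{n,\omega_{j+1}}\) collapse by idempotence. This is consistent with the product as written, since consecutive terms share the projection \(p_{n,\omega_{j+1}}\).

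Second, I will apply \(\ell_n\) to both sides. The index set \(\Omega_n^{M+1}\) is finite because \(\Omega_n\) is finite, so additivity and homogeneity of \(\ell_n\) pass it through the sum term by term. Boundedness of \(\ell_n\) is not really needed, since there is no limit here; it only guarantees that \(\ell_n\) is defined on the whole corner. That yields the asserted identity.

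The only step requiring care is the bookkeeping in the first step: checking that the repeated projections at the junctions collapse by idempotence and orthogonality, so that the index tuple has length \(M+1\) and not \(2M\). I expect no genuine obstacle.
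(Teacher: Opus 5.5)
Your proposal is correct and follows the paper's proof, which consists only of applying linearity of $\ell_n$ to the finite sum over $\Omega_n^{M+1}$ in \eqref{eq:paper-matrix-trotter-open-path-expansion}. Your rederivation of that operator identity is also correct: you expand the diagonal potential factor spectrally, collapse the repeated projections $p_{n,\omega_{j+1}}p_{n,\omega_{j+1}}$ by idempotence, and sum out $\omega_M$ against the corner unit, whereas the paper obtains the identity just before the corollary from Proposition~\ref{prop:paper-matrix-open-path-expansion} together with $p_{n,\omega}\,\mathrm{e}^{-\beta V_n/M}=\mathrm{e}^{-\beta v_n(\omega)/M}p_{n,\omega}$.
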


\begin{proof}
The sum in \eqref{eq:paper-matrix-trotter-open-path-expansion} is finite, so linearity gives
$$\begin{aligned}
\fun{\ell_n}{G_{\sminvtemperature,M,n}}
&=
\sum_{\rbk{\omega_0,\ldots,\omega_M}\in\Omega_n^{M+1}}
\fun{\ell_n}{
\prod_{j=0}^{M-1}
\napiernum^{-\frac{\sminvtemperature}{M}\fun{v_n}{\omega_j}}
p_{n,\omega_j}
\napiernum^{-\frac{\sminvtemperature}{M}T_n}
p_{n,\omega_{j+1}}}.
\end{aligned}$$
This is the asserted functional identity.
\end{proof}

For a bounded linear functional \(\ell\) on \(\oa{K}(\sphilbfrak{k})^{\gamma}\), the norm convergence in Corollary \ref{cor:paper-finite-mode-cstar-product-formula} gives \begin{equation}
\begin{aligned}
\lim_{M\to\infty}
\fun{\ell}{G_{\sminvtemperature,M}}
&=
\fun{\ell}{\seq{\napiernum^{-\sminvtemperature\physham_n}}{n\in\monnat}}.
\end{aligned}
\label{eq:paper-bounded-functional-product-limit}
\end{equation} Indeed, the absolute value of the difference is bounded by \(\norm{\ell}\) times the norm difference in that corollary: \[\begin{aligned}
\abs{
\fun{\ell}{G_{\sminvtemperature,M}}
-
\fun{\ell}{\seq{\napiernum^{-\sminvtemperature\physham_n}}{n\in\monnat}}}
&\leq
\norm{\ell}
\norm{
G_{\sminvtemperature,M}
-
\seq{\napiernum^{-\sminvtemperature\physham_n}}{n\in\monnat}}
\to
0.
\end{aligned}\] If \(\ell=\oastate[\psi]\) is a state, this conclusion additionally records positivity and normalization of the functional, but statehood neither closes the paths in Corollary \ref{cor:paper-bounded-functional-path-expansion} nor makes its individual summands nonnegative. Those two conclusions require additional structure below.

\subsection{Trace-dependent equilibrium formulas}\label{trace-dependent-equilibrium-formulas}

In this subsection, the remaining results use the standard trace through at least one of its special properties: its normalization on minimal projections, its trace norm, its cyclicity, or the trace-class summation over all particle sectors. They specialize the preceding hierarchy to the equilibrium functional that is central to this paper.

\subsubsection{Standard traces on particle blocks}\label{standard-traces-on-particle-blocks}

In this subsubsection, the standard trace on each particle block extends to a positive trace on \(\oa{K}(\sphilbfrak{k})^{\gamma}\) and to a linear functional on its trace-summable ideal.

The block decomposition is \eqref{eq:paper-main-finite-mode-block-algebra}, and each corner \(e_{\sphilbfrak{k},n}\oa{K}(\sphilbfrak{k})^{\gamma}e_{\sphilbfrak{k},n}\) is a nonzero finite-dimensional full matrix algebra. The corner trace \(\trace_{n}\) and the extended positive trace \(\trace\) are those defined by \eqref{eq:paper-standard-corner-trace} and \eqref{eq:paper-intrinsic-standard-trace}, respectively. Equation \eqref{eq:paper-standard-corner-trace} defines the ordinary, unnormalized matrix trace on each corner uniquely. Throughout this subsubsection, assume the polynomial corner-dimension growth \eqref{eq:paper-corner-dimension-growth} with constants \(C>0\) and \(s\geq0\).

\begin{rem}[Origin of the dimension bound]
The bound \eqref{eq:paper-corner-dimension-growth} controls the particle-sector multiplicities in the
trace-summability estimate and depends on neither graph geometry nor a particular Hamiltonian.
It yields trace summability when combined with Hamiltonian superstability.
For $d\in\semigrposint$, the standard bosonic realization on a $d$-dimensional complex Hilbert
space $\sphilbfrak{k}$ has particle blocks satisfying
\begin{equation}
\begin{aligned}
\sqfun{\trace_{n}}{e_{\sphilbfrak{k},n}}
&=
\dim_{\fldcmp}\sphilbfrak{k}^{\otimes_{\txtsym}n}
=
\binom{n+d-1}{d-1}.
\end{aligned}
\label{eq:paper-corner-dimension-graph-independent-realization}
\end{equation}
If $d=1$, the binomial coefficient in
\eqref{eq:paper-corner-dimension-graph-independent-realization} equals $1$ for every $n\in\monnat$.
Assume $d\geq2$.
For $n\in\monnat$ and $1\leq j\leq d-1$, the factorwise bound and the factorial formula give
$$\begin{aligned}
n+j
&\leq
n+d
\leq
d(1+n),
\\ 
\binom{n+d-1}{d-1}
&=
\frac{\prod_{j=1}^{d-1}(n+j)}{(d-1)!}
\leq
\frac{\prod_{j=1}^{d-1}d(1+n)}{(d-1)!}
=
\frac{d^{d-1}}{(d-1)!}(1+n)^{d-1}.
\end{aligned}$$
The same inequality holds for $d=1$, because both sides equal $1$.
This proves \eqref{eq:paper-corner-dimension-growth} with
$s=d-1$ and
$C=\frac{d^{d-1}}{\rbk{d-1}!}$.
This calculation verifies polynomial corner growth for the standard bosonic blocks without introducing graph
data.
\end{rem}

For \(A=\sum_{n\in\monnat}A_n\in\rbk{\oa{K}(\sphilbfrak{k})^{\gamma}}_{+}\), the extended positive trace is the sum in \eqref{eq:paper-intrinsic-standard-trace}. The series denotes the supremum of its finite partial sums. For \(B\in e_{\sphilbfrak{k},n}\oa{K}(\sphilbfrak{k})^{\gamma}e_{\sphilbfrak{k},n}\), define its trace norm by \[\begin{aligned}
\norm{B}_{1,\trace_{n}}
&=
\sqfun{\trace_{n}}{\abs{B}}.
\end{aligned}\] If \(\sum_{n\in\monnat}\sqfun{\trace_{n}}{\abs{A_n}}
<
\infty\), define \[\begin{aligned}
\norm{A}_{1,\trace}
&=
\sum_{n\in\monnat}
\sqfun{\trace_{n}}{\abs{A_n}}
=
\sum_{n\in\monnat}
\norm{A_n}_{1,\trace_{n}}.
\end{aligned}\] For every block family \(A=\seq{A_n}{n\in\monnat}\) satisfying \(\sum_{n\in\monnat}\sqfun{\trace_{n}}{\abs{A_n}}<\infty\), define \[\begin{aligned}
\sqfun{\trace}{A}
&=
\sum_{n\in\monnat}\sqfun{\trace_{n}}{A_n}.
\end{aligned}\] The trace series is absolutely convergent. An element satisfying this summability condition is called trace summable, and the absolutely convergent trace series is the linear extension of the positive trace. Every sum \(\sum_{n\in\monnat}A_n\) asserted to belong to \(\oa{K}(\sphilbfrak{k})^{\gamma}\) is understood as a norm limit. The polynomial bound in \eqref{eq:paper-corner-dimension-growth} controls the multiplicity of each particle sector but does not by itself imply trace summability.

\subsubsection{Trace summability and the Gibbs state}\label{trace-summability-and-the-gibbs-state}

In this subsubsection, the heat-kernel family in Proposition \ref{prop:paper-compact-heat-kernel} becomes trace summable under the polynomial corner-dimension and Hamiltonian superstability bounds.

\begin{cor}[Trace-summable heat kernels]\label{cor:paper-trace-summable-heat-kernel}
Assume the polynomial corner-dimension bound \eqref{eq:paper-corner-dimension-growth} and the Hamiltonian
superstability bound \eqref{eq:paper-main-superstability-bound}.
Then the heat-kernel family satisfies \eqref{eq:paper-main-compact-heat-kernel} and
\eqref{eq:paper-main-heat-kernel-trace-summability} for every $\sminvtemperature>0$.
\end{cor}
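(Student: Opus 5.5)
The compact heat-kernel property \eqref{eq:paper-main-compact-heat-kernel} is already contained in Proposition \ref{prop:paper-compact-heat-kernel}. It uses only the superstability bound \eqref{eq:paper-main-superstability-bound}, so I will simply invoke that proposition for every \(\sminvtemperature>0\). The remaining task is the summability \eqref{eq:paper-main-heat-kernel-trace-summability}. The plan is to bound each sector trace by the corner dimension times the operator norm of the heat kernel, and then sum a polynomial times a Gaussian.

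Fix \(\sminvtemperature>0\) and constants \(a>0\), \(b,c\geq0\) witnessing \eqref{eq:paper-main-superstability-bound}. In the \(n\)th corner, a full matrix algebra with unit \(e_{\sphilbfrak{k},n}\), the element \(\napiernum^{-\sminvtemperature\physham_n}\) is positive. The lower bound \(\physham_n\geq an^2-bn-c\), read in the corner, together with continuous functional calculus gives the operator inequality
\[
0\leq \napiernum^{-\sminvtemperature\physham_n}\leq \napiernum^{-\sminvtemperature(an^2-bn-c)}e_{\sphilbfrak{k},n}.
\]
The standard trace \(\trace_n\) is positive, because it is the ordinary matrix trace normalized on minimal projections. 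Applying it to this inequality gives
\[
\sqfun{\trace_{n}}{\napiernum^{-\sminvtemperature\physham_n}}
\leq
\napiernum^{-\sminvtemperature(an^2-bn-c)}\sqfun{\trace_{n}}{e_{\sphilbfrak{k},n}}
\leq
C(1+n)^s\napiernum^{-\sminvtemperature(an^2-bn-c)},
\]
where the last step uses \eqref{eq:paper-corner-dimension-growth}.

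It then suffices to check that \(\sum_{n}(1+n)^s\napiernum^{-\sminvtemperature(an^2-bn-c)}<\infty\). I will do this by comparison: for \(n\geq n_0\), with \(n_0\) chosen so that \(an^2-bn-c\geq \tfrac{a}{2}n^2\), the terms are bounded by \((1+n)^s\napiernum^{-\sminvtemperature a n^2/2}\). These are summable, for example by the ratio test or by domination by \(n^{-2}\) for large \(n\). The finitely many terms with \(n<n_0\) are finite.

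Since all terms are nonnegative, the series in \eqref{eq:paper-main-heat-kernel-trace-summability} is the supremum of its partial sums and is finite. This matches the definition of trace summability in \eqref{eq:paper-intrinsic-standard-trace}.

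The only point needing care is the operator inequality in the corner. The scalar bound must be read as a multiple of the corner unit \(e_{\sphilbfrak{k},n}\), not of the global unit, and monotonicity of \(\trace_n\) must be justified. Both follow because the corner is a unital \(\oacstar\)-algebra in its own right with unit \(e_{\sphilbfrak{k},n}\), where spectral calculus applies directly. The rest is routine.
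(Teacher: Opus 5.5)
Your proposal is correct and follows the paper's proof. Membership in $\oa{K}(\sphilbfrak{k})^{\gamma}$ comes from Proposition \ref{prop:paper-compact-heat-kernel}, and summability comes from the sectorwise bound $\sqfun{\trace_{n}}{\napiernum^{-\sminvtemperature\physham_n}}\leq C(1+n)^{s}\napiernum^{-\sminvtemperature(an^2-bn-c)}$. The only difference is that you spell out the corner operator inequality, the positivity of $\trace_n$, and the convergence of the polynomial-times-Gaussian series, all of which the paper leaves implicit.
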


\begin{proof}
Proposition \ref{prop:intrinsic-compact-gibbs-semigroup} gives membership in
$\oa{K}(\sphilbfrak{k})^{\gamma}$.
Choose constants $a>0$ and $b,c\geq0$ witnessing
\eqref{eq:paper-main-superstability-bound}.
The corner-dimension and superstability bounds give
$$\begin{aligned}
\norm{
\seq{\napiernum^{-\sminvtemperature \physham_{n}}}{n\in\monnat}
}_{1,\trace}
=
\sum_{n\in\monnat}
\sqfun{\trace_{n}}{\napiernum^{-\sminvtemperature \physham_{n}}}
\leq
\sum_{n\in\monnat}
C(1+n)^{s}
\napiernum^{-\sminvtemperature\rbk{an^2-bn-c}}
<
\infty.
\end{aligned}$$
This is \eqref{eq:paper-main-heat-kernel-trace-summability}.
\end{proof}

Specializing the bounded block product in \eqref{eq:paper-main-finite-mode-algebra} to \(\sphilbfrak{f}=\sphilbfrak{k}\) gives \begin{equation}
\begin{aligned}
\oa{M}(\oa{K}(\sphilbfrak{k})^{\gamma})
&=
\set{\seq{A_n}{n\in\monnat}\in
\prod_{n\in\monnat}e_{\sphilbfrak{k},n}\oa{K}(\sphilbfrak{k})^{\gamma}e_{\sphilbfrak{k},n}}
{\sup_{n\in\monnat}\norm{A_n}<\infty}.
\end{aligned}
\label{eq:paper-intrinsic-block-multiplier-algebra}
\end{equation} Its multiplication and adjoint are coordinatewise, and its norm is the supremum of the corner norms.

Fix \(\sminvtemperature>0\) and assume that the compact heat-kernel family is trace summable at inverse temperature \(\sminvtemperature\) in the sense of \eqref{eq:paper-main-heat-kernel-trace-summability}. Its trace sum is finite and strictly positive. The finite-mode Gibbs state on the bounded block product \eqref{eq:paper-intrinsic-block-multiplier-algebra} is \begin{equation}
\begin{aligned}
\oastate[\psi_{\sminvtemperature}]
\colon
\oa{M}(\oa{K}(\sphilbfrak{k})^{\gamma})
\to
\fldcmp;\quad
\fun{\oastate[\psi_{\sminvtemperature}]}{A}
&=
\frac{
\sum_{n\in\monnat}
\sqfun{\trace_{n}}{A_n\napiernum^{-\sminvtemperature \physham_{n}}}
}{
\sum_{n\in\monnat}
\sqfun{\trace_{n}}{\napiernum^{-\sminvtemperature \physham_{n}}}
}.
\end{aligned}
\label{eq:paper-intrinsic-gibbs-state}
\end{equation} For an arbitrary \(A=\seq{A_n}{n\in\monnat}\) in the bounded block product, the trace ideal estimate gives \[\begin{aligned}
\sum_{n\in\monnat}
\abs{\sqfun{\trace_n}{A_n\napiernum^{-\sminvtemperature\physham_n}}}
\leq
\sup_{n\in\monnat}\norm{A_n}
\sum_{n\in\monnat}
\sqfun{\trace_n}{\napiernum^{-\sminvtemperature\physham_n}}
=
\norm{A}
\sum_{n\in\monnat}
\sqfun{\trace_n}{\napiernum^{-\sminvtemperature\physham_n}}
<
\infty.
\end{aligned}\] Thus the numerator in \eqref{eq:paper-intrinsic-gibbs-state} is absolutely convergent for every \(A\). For \(A\geq0\), cyclicity gives \[\begin{aligned}
\sqfun{\trace_{n}}{A_n\napiernum^{-\sminvtemperature \physham_{n}}}
&=
\sqfun{\trace_{n}}{
\napiernum^{-\frac{\sminvtemperature}{2}\physham_{n}}
A_n
\napiernum^{-\frac{\sminvtemperature}{2}\physham_{n}}}
\geq
0.
\end{aligned}\] For \(A\geq0\), positivity and the trace ideal estimate give \[\begin{aligned}
0
&\leq
\sqfun{\trace_n}{A_n\napiernum^{-\sminvtemperature\physham_n}}
\leq
\norm{A_n}
\sqfun{\trace_n}{\napiernum^{-\sminvtemperature\physham_n}}
\leq
\norm{A}
\sqfun{\trace_n}{\napiernum^{-\sminvtemperature\physham_n}}.
\end{aligned}\] Thus the numerator in \eqref{eq:paper-intrinsic-gibbs-state} is at most the denominator multiplied by \(\norm{A}\). The numerator equals the denominator at \(A=1\). These calculations prove that \eqref{eq:paper-intrinsic-gibbs-state} defines a state.

\begin{prop}[KMS specialization of the trace Gibbs state]\label{prop:paper-intrinsic-gibbs-kms-state}
Assume the hypotheses of Corollary \ref{cor:paper-trace-summable-heat-kernel} for a fixed $\sminvtemperature>0$.
Then the restriction of the state $\oastate[\psi_{\sminvtemperature}]$ in
\eqref{eq:paper-intrinsic-gibbs-state} to $\oa{K}(\sphilbfrak{k})^{\gamma}$ is a $\sminvtemperature$-KMS state for the
dynamics in \eqref{eq:paper-intrinsic-block-dynamics}.
\end{prop}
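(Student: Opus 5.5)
I will verify the KMS boundary condition \eqref{eq:paper-kms-boundary-condition} first on a dense set of entire analytic elements where everything reduces to finitely many matrix blocks, and then extend it by continuity. Two facts are available from earlier: the dynamics \eqref{eq:paper-intrinsic-block-dynamics} is a strongly continuous group of $\ast$-automorphisms of $\oa{K}(\sphilbfrak{k})^{\gamma}$ (shown after its definition), and the Gibbs functional \eqref{eq:paper-intrinsic-gibbs-state} restricts to a state on $\oa{K}(\sphilbfrak{k})^{\gamma}$. Here one should check that the restriction still has norm one, e.g. by evaluating on the approximate unit $\sum_{n\le N}e_{\sphilbfrak{k},n}$, whose values tend to $1$ by trace summability (Corollary \ref{cor:paper-trace-summable-heat-kernel}).

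\textbf{Step 1: finite-block elements.} Let $A=\sum_{n=0}^{N}A_n$ have only finitely many nonzero blocks. Its orbit $\fun{\alpha_t}{A}=\sum_{n\le N}\napiernum^{\imunit t\physham_n}A_n\napiernum^{-\imunit t\physham_n}$ extends to the entire function $z\mapsto\sum_{n\le N}\napiernum^{\imunit z\physham_n}A_n\napiernum^{-\imunit z\physham_n}$. In particular,
$$\fun{\alpha_{\imunit\sminvtemperature}}{A}=\sum_{n\le N}\napiernum^{-\sminvtemperature\physham_n}A_n\napiernum^{\sminvtemperature\physham_n}.$$
For arbitrary $B$, the product $AB$ has blocks $A_nB_n$. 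Cyclicity of $\trace_n$ on the finite-dimensional full matrix corner gives
$$\sqfun{\trace_n}{A_nB_n\napiernum^{-\sminvtemperature\physham_n}}=\sqfun{\trace_n}{B_n\napiernum^{-\sminvtemperature\physham_n}A_n\napiernum^{\sminvtemperature\physham_n}\napiernum^{-\sminvtemperature\physham_n}}.$$
Summing over $n\le N$ and dividing by the partition function yields \eqref{eq:paper-kms-boundary-condition} for these $A$.

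\textbf{Step 2: the general entire analytic element.} This is the main obstacle. Definition \eqref{eq:paper-kms-boundary-condition} quantifies over all entire analytic $A$, not just finite-block ones. I will handle this directly. Let $A$ be entire analytic with extension $F(z)$. Since the block projection $P_n\colon A\mapsto A_n$ (compression by $e_{\sphilbfrak{k},n}$, a bounded map) commutes with $\alpha_t$, the function $P_nF(z)$ is entire. It agrees on $\fldreal$ with $z\mapsto\napiernum^{\imunit z\physham_n}A_n\napiernum^{-\imunit z\physham_n}$, so by uniqueness of analytic continuation
$$\rbk{\fun{\alpha_{\imunit\sminvtemperature}}{A}}_n=\napiernum^{-\sminvtemperature\physham_n}A_n\napiernum^{\sminvtemperature\physham_n}\quad\text{for every }n.$$

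Now apply the cyclicity identity of Step 1 blockwise for every $n$. Both series
$$\sum_n\sqfun{\trace_n}{A_nB_n\napiernum^{-\sminvtemperature\physham_n}}
\quad\text{and}\quad
\sum_n\sqfun{\trace_n}{B_n\rbk{\fun{\alpha_{\imunit\sminvtemperature}}{A}}_n\napiernum^{-\sminvtemperature\physham_n}}$$
converge absolutely. The trace ideal estimate bounds them by $\norm{AB}$ and $\norm{B\fun{\alpha_{\imunit\sminvtemperature}}{A}}$, respectively, times the partition function. Since they agree term by term, they are equal, and \eqref{eq:paper-kms-boundary-condition} holds for all entire analytic $A$ and all $B$.

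As a backup, if a density formulation of KMS is preferred, I will instead note two things. Finite-block elements are dense and $\alpha$-invariant, hence form a core. The standard equivalence then upgrades the identity of Step 1 to all entire analytic elements, but the direct blockwise argument above should suffice.
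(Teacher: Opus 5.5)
Your proposal is correct and follows essentially the same route as the paper: norm one via the approximate unit $\sum_{n\leq N}e_{\mathfrak{k},n}$, blockwise cyclicity of $\mathrm{Tr}_n$, and uniqueness of the entire extension, used to identify $(\alpha_{i\beta}(A))_n=e^{-\beta H_n}A_ne^{\beta H_n}$. The only difference is cosmetic: you compare the two absolutely convergent block series term by term, whereas the paper truncates to $A_{\leq N}$ and passes to the limit by norm convergence and continuity of the state, so your backup core argument is not needed.
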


\begin{proof}
For $N\in\monnat$, set
$E_N=\sum_{n=0}^{N}e_{\sphilbfrak{k},n}$.
For every $A=\seq{A_n}{n\in\monnat}\in\oa{K}(\sphilbfrak{k})^{\gamma}$,
\eqref{eq:paper-main-finite-mode-block-algebra} gives
$$\begin{aligned}
\norm{E_NA-A}
&=
\norm{AE_N-A}
=
\sup_{n>N}\norm{A_n}
\to
0.
\end{aligned}$$
Thus $\seq{E_N}{N\in\monnat}$ is a contractive approximate identity, and
$$\begin{aligned}
\fun{\oastate[\psi_{\sminvtemperature}]}{E_N}
=
\frac{
\sum_{n=0}^{N}\sqfun{\trace_n}{\napiernum^{-\sminvtemperature\physham_n}}
}{
\sum_{n\in\monnat}\sqfun{\trace_n}{\napiernum^{-\sminvtemperature\physham_n}}
}
\to
1,
\quad
\norm{
\fnrestr{\oastate[\psi_{\sminvtemperature}]}{\oa{K}(\sphilbfrak{k})^{\gamma}}
}
=
\sup_{N\in\monnat}\fun{\oastate[\psi_{\sminvtemperature}]}{E_N}
=
1.
\end{aligned}$$
Let $A$ be an entire analytic element for the dynamics in
\eqref{eq:paper-intrinsic-block-dynamics}, and let
$B\in\oa{K}(\sphilbfrak{k})^{\gamma}$.
For $N\in\monnat$, define the finite block truncation by
$$\begin{aligned}
A_{\leq N}
&=
\sum_{n=0}^{N}A_n.
\end{aligned}$$
The element $A_{\leq N}$ is entire analytic, and cyclicity of each matrix trace gives
$$\begin{aligned}
\fun{\oastate[\psi_{\sminvtemperature}]}{B\fun{\alpha_{\imunit\sminvtemperature}}{A_{\leq N}}}
&=
\frac{
\sum_{n=0}^{N}
\sqfun{\trace_n}{
B_n
\napiernum^{-\sminvtemperature\physham_n}
A_n
\napiernum^{\sminvtemperature\physham_n}
\napiernum^{-\sminvtemperature\physham_n}}}
{
\sum_{n\in\monnat}
\sqfun{\trace_n}{\napiernum^{-\sminvtemperature\physham_n}}}
\\
&=
\frac{
\sum_{n=0}^{N}
\sqfun{\trace_n}{A_nB_n\napiernum^{-\sminvtemperature\physham_n}}}
{
\sum_{n\in\monnat}
\sqfun{\trace_n}{\napiernum^{-\sminvtemperature\physham_n}}}
=
\fun{\oastate[\psi_{\sminvtemperature}]}{A_{\leq N}B}.
\end{aligned}$$
The dynamics \eqref{eq:paper-intrinsic-block-dynamics} fixes every block projection, so uniqueness of the
entire extension gives
$$\begin{aligned}
A_{\leq N}
&\to
A,
&
\fun{\alpha_{\imunit\sminvtemperature}}{A_{\leq N}}
&=
\rbk{\fun{\alpha_{\imunit\sminvtemperature}}{A}}_{\leq N}
\to
\fun{\alpha_{\imunit\sminvtemperature}}{A}
\end{aligned}$$
in norm by \eqref{eq:paper-main-finite-mode-block-algebra}.
Continuity of $\oastate[\psi_{\sminvtemperature}]$ and passage to the limit $N\to\infty$ give the $\sminvtemperature$-KMS boundary
identity \eqref{eq:paper-kms-boundary-condition} for $A$ and $B$.
\end{proof}

For a complex Hilbert space \(\sphilbfrak{H}\) and \(T\in\opspbddlin{\sphilbfrak{H}}_{+}\), define the canonical positive trace by \[\begin{aligned}
\sqfun{\trace_{\sphilbfrak{H}}}{T}
&=
\sup
\set{
\sum_{\xi\in F}\bkt{\xi}{T\xi}_{\sphilbfrak{H}}}
{F\subset\sphilbfrak{H}\text{ is a finite orthonormal set}}
\in
\closedinterval{0}{\infty}.
\end{aligned}\] For a complex Hilbert space \(\sphilbfrak{H}\) and \(1\leq p<\infty\), define the Schatten class and its norm by \[\begin{aligned}
\fun{\schattencls^{p}}{\sphilbfrak{H}}
=
\set{T\in\opspbddlin{\sphilbfrak{H}}}
{\sqfun{\trace_{\sphilbfrak{H}}}{\abs{T}^p}<\infty},
\quad
\norm{T}_{p,\trace_{\sphilbfrak{H}}}
=
\norm{T}_{p}
=
\sqfun{\trace_{\sphilbfrak{H}}}{\abs{T}^p}^{\frac{1}{p}}.
\end{aligned}\] For \(p=\infty\), set \(\fun{\schattencls^{\infty}}{\sphilbfrak{H}}=\opspbddlin{\sphilbfrak{H}}\) and \(\norm{T}_{\infty,\trace_{\sphilbfrak{H}}}=\norm{T}\). The Schatten--Hölder inequality is \cite[Theorem 2.8]{BarrySimon005}.

\subsubsection{Trace-norm product formula}\label{trace-norm-product-formula}

In this subsubsection, the trace-norm argument first establishes a majorant for arbitrary Hilbert-space sectors and then specializes the resulting all-sector criterion to the intrinsic finite-mode blocks.

For each \(n\in\monnat\), let \(\sphilbfrak{H}_n\) be a complex Hilbert space, and let \(T_n\) and \(V_n\) be self-adjoint operators on \(\sphilbfrak{H}_n\). Fix \(\sminvtemperature>0\). Assume that constants \(a>0\) and \(b,c\geq0\) satisfy the potential-superstability bound \begin{equation}
\begin{aligned}
V_n
&\geq
\rbk{an^2-bn-c}1_{\sphilbfrak{H}_n}
&&\rbk{n\in\monnat}
\end{aligned}
\label{eq:paper-general-potential-superstability}
\end{equation} in the quadratic-form sense. Assume also that \begin{equation}
\begin{aligned}
\napiernum^{-\sminvtemperature T_n}
&\in
\fun{\schattencls^{1}}{\sphilbfrak{H}_n}.
\end{aligned}
\label{eq:paper-kinetic-gibbs-condition}
\end{equation} For \(M\in\semigrposint\), define \begin{equation}
\begin{aligned}
F_{\sminvtemperature,M,n}
&=
\rbk{
\napiernum^{-\frac{\sminvtemperature}{M}V_n}
\napiernum^{-\frac{\sminvtemperature}{M}T_n}
}^{M}.
\end{aligned}
\label{eq:paper-general-sector-product}
\end{equation}

\begin{lem}[Gibbs-sector majorant]\label{lem:paper-gibbs-sector-majorant}
For every $n\in\monnat$ and $M\in\semigrposint$, one has
$F_{\sminvtemperature,M,n}\in\fun{\schattencls^{1}}{\sphilbfrak{H}_n}$ and
\begin{equation}
\begin{aligned}
\norm{F_{\sminvtemperature,M,n}}_{1,\trace_{\sphilbfrak{H}_n}}
&\leq
\napiernum^{-\sminvtemperature\rbk{an^2-bn-c}}
\sqfun{\trace_{\sphilbfrak{H}_n}}{\napiernum^{-\sminvtemperature T_n}}.
\end{aligned}
\label{eq:paper-general-sector-majorant}
\end{equation}
In particular, the majorant is summable in the particle number if
\begin{equation}
\begin{aligned}
\sum_{n\in\monnat}
\napiernum^{-\sminvtemperature\rbk{an^2-bn-c}}
\sqfun{\trace_{\sphilbfrak{H}_n}}{\napiernum^{-\sminvtemperature T_n}}
&<
\infty.
\end{aligned}
\label{eq:paper-general-sector-majorant-summability}
\end{equation}
\end{lem}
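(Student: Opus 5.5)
The plan is to factor the $M$-fold split product into Hilbert--Schmidt-type pieces and then apply the Schatten--Hölder inequality \cite[Theorem 2.8]{BarrySimon005}. Write $t=\sminvtemperature/M$ and
$$\begin{aligned}
X_n&=\napiernum^{-tV_n},
&
Y_n&=\napiernum^{-tT_n},
\end{aligned}$$
both defined by the spectral theorem. The bound \eqref{eq:paper-general-potential-superstability} in the quadratic-form sense makes $V_n$ bounded below, so $X_n$ is a bounded positive operator with
$$\begin{aligned}
\norm{X_n}&\leq\napiernum^{-t\rbk{an^2-bn-c}}.
\end{aligned}$$
The kinetic Gibbs condition \eqref{eq:paper-kinetic-gibbs-condition} first has to be used to see that $T_n$ is bounded below. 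Its spectrum is discrete with $\sum_\lambda\napiernum^{-\sminvtemperature\lambda}<\infty$, so $\napiernum^{-\sminvtemperature T_n}$ is bounded, and hence $Y_n$ is bounded. The functional calculus then gives
$$\begin{aligned}
Y_n&\in\fun{\schattencls^{M}}{\sphilbfrak{H}_n},
&
\norm{Y_n}_{M}^{M}&=\sqfun{\trace_{\sphilbfrak{H}_n}}{\napiernum^{-\sminvtemperature T_n}},
\end{aligned}$$
since $\abs{Y_n}^M=\napiernum^{-\sminvtemperature T_n}$.

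Next expand $F_{\sminvtemperature,M,n}=X_nY_nX_nY_n\cdots X_nY_n$, a product of $M$ factors $X_nY_n$. Since $\norm{X_nY_n}_M\leq\norm{X_n}\norm{Y_n}_M$, each factor lies in $\fun{\schattencls^{M}}{\sphilbfrak{H}_n}$. The generalized Hölder inequality with exponents $1/M+\cdots+1/M=1$ then puts the product in $\fun{\schattencls^{1}}{\sphilbfrak{H}_n}$, with
$$\begin{aligned}
\norm{F_{\sminvtemperature,M,n}}_{1}
&\leq
\prod_{j=1}^{M}\norm{X_n}\norm{Y_n}_{M}
\leq
\napiernum^{-Mt\rbk{an^2-bn-c}}\norm{Y_n}_{M}^{M}.
\end{aligned}$$
Substituting $Mt=\sminvtemperature$ and the trace identity for $\norm{Y_n}_M^M$ gives \eqref{eq:paper-general-sector-majorant}. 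The final assertion is then immediate: if \eqref{eq:paper-general-sector-majorant-summability} holds, the right-hand sides of \eqref{eq:paper-general-sector-majorant} form a summable majorant that does not depend on $M$.

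The only delicate point is the operator-theoretic bookkeeping when $T_n$ and $V_n$ are unbounded. The step needing real care is that the trace-class hypothesis forces $T_n$ to be bounded below, so that $\napiernum^{-tT_n}$ is bounded. This is what allows the $M$-th-root Schatten identity $\norm{Y_n}_M^M=\sqfun{\trace_{\sphilbfrak{H}_n}}{\napiernum^{-\sminvtemperature T_n}}$. I would prove it from the spectral resolution of $\napiernum^{-\sminvtemperature T_n}$, which is compact, positive and injective. In that resolution $T_n$ is diagonal with eigenvalues $\lambda_k\to+\infty$, and $Y_n$ has singular values $\napiernum^{-t\lambda_k}$.
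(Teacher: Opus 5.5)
Your proposal is correct and matches the paper's proof: you bound $\norm{\napiernum^{-\frac{\sminvtemperature}{M}V_n}}$ in operator norm via potential superstability, use $\norm{\napiernum^{-\frac{\sminvtemperature}{M}T_n}}_{M}^{M}=\sqfun{\trace_{\sphilbfrak{H}_n}}{\napiernum^{-\sminvtemperature T_n}}$, and apply the Schatten--Hölder inequality with $M$ exponents $1/M$. Your remark that the kinetic Gibbs condition forces $T_n$ to be bounded below, so that each $\napiernum^{-\frac{\sminvtemperature}{M}T_n}$ is bounded, makes explicit a point the paper leaves implicit.
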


\begin{proof}
Functional calculus and \eqref{eq:paper-general-potential-superstability} give
$$\begin{aligned}
\norm{\napiernum^{-\frac{\sminvtemperature}{M}V_n}}
&\leq
\napiernum^{-\frac{\sminvtemperature}{M}\rbk{an^2-bn-c}}.
\end{aligned}$$
The kinetic Gibbs condition \eqref{eq:paper-kinetic-gibbs-condition} gives
$$\begin{aligned}
\napiernum^{-\frac{\sminvtemperature}{M}T_n}
&\in
\fun{\schattencls^{M}}{\sphilbfrak{H}_n},
&
\norm{\napiernum^{-\frac{\sminvtemperature}{M}T_n}}_{M,\trace_{\sphilbfrak{H}_n}}^{M}
&=
\sqfun{\trace_{\sphilbfrak{H}_n}}{\napiernum^{-\sminvtemperature T_n}}.
\end{aligned}$$
Apply the Schatten--Hölder inequality to the $M$ potential factors in the operator norm and the $M$ kinetic
factors in the Schatten $M$-norm.
The resulting estimate is
$$\begin{aligned}
\norm{F_{\sminvtemperature,M,n}}_{1,\trace_{\sphilbfrak{H}_n}}
\leq
\norm{\napiernum^{-\frac{\sminvtemperature}{M}V_n}}^{M}
\norm{\napiernum^{-\frac{\sminvtemperature}{M}T_n}}_{M,\trace_{\sphilbfrak{H}_n}}^{M}
\leq
\napiernum^{-\sminvtemperature\rbk{an^2-bn-c}}
\sqfun{\trace_{\sphilbfrak{H}_n}}{\napiernum^{-\sminvtemperature T_n}},
\end{aligned}$$
which is \eqref{eq:paper-general-sector-majorant}.
\end{proof}

If \(\sphilbfrak{H}_n\) is finite dimensional, the kinetic Gibbs condition \eqref{eq:paper-kinetic-gibbs-condition} is automatic for every lower-semibounded \(T_n\). For an infinite-dimensional continuous-system sector, it replaces an unavailable bound on \(\norm{T_n}\) by the spectral heat-trace estimate that is actually needed in the product argument.

For every \(n\in\monnat\), assume that the quadratic-form domains of \(T_n\) and \(V_n\) have dense intersection and that the sum of their closed quadratic forms on this intersection is closed and semibounded. Denote by \(\physham_n\) the self-adjoint operator associated with this form sum.

\begin{prop}[All-sector Gibbs product criterion]\label{prop:paper-gibbs-sector-product-formula}
Assume \eqref{eq:paper-general-sector-majorant-summability}.
Assume that, for every $n\in\monnat$,
\begin{equation}
\begin{aligned}
\lim_{M\to\infty}
\norm{
F_{\sminvtemperature,M,n}
-
\napiernum^{-\sminvtemperature\physham_n}
}_{1,\trace_{\sphilbfrak{H}_n}}
&=
0.
\end{aligned}
\label{eq:paper-sectorwise-gibbs-product-limit}
\end{equation}
Then the direct sums of the products in \eqref{eq:paper-general-sector-product} and of the heat kernels are
trace class on $\bigoplus_{n\in\monnat}\sphilbfrak{H}_n$, and
\begin{equation}
\begin{aligned}
\lim_{M\to\infty}
\norm{
\bigoplus_{n\in\monnat}F_{\sminvtemperature,M,n}
-
\bigoplus_{n\in\monnat}\napiernum^{-\sminvtemperature\physham_n}
}_{1,\trace_{\bigoplus_{n\in\monnat}\sphilbfrak{H}_n}}
&=
0.
\end{aligned}
\label{eq:paper-all-sector-gibbs-product-limit}
\end{equation}
\end{prop}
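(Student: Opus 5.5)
The plan is a trace-norm analogue of Proposition \ref{prop:paper-block-direct-sum-norm-convergence}, with the majorant from Lemma \ref{lem:paper-gibbs-sector-majorant} in place of the operator-norm bound $q_n$. Write
$$\begin{aligned}
m_n
&=
\napiernum^{-\sminvtemperature\rbk{an^2-bn-c}}
\sqfun{\trace_{\sphilbfrak{H}_n}}{\napiernum^{-\sminvtemperature T_n}},
\end{aligned}$$
which is summable by \eqref{eq:paper-general-sector-majorant-summability}.

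The first step is to bound the heat kernels themselves by the same majorant. By \eqref{eq:paper-sectorwise-gibbs-product-limit}, $\napiernum^{-\sminvtemperature\physham_n}$ is the trace-norm limit of $F_{\sminvtemperature,M,n}$. Lemma \ref{lem:paper-gibbs-sector-majorant} gives $\norm{F_{\sminvtemperature,M,n}}_{1}\leq m_n$ uniformly in $M$. Lower semicontinuity, or simply continuity of the trace norm under trace-norm convergence, therefore gives
$$\begin{aligned}
\norm{\napiernum^{-\sminvtemperature\physham_n}}_{1}
&\leq
m_n .
\end{aligned}$$
This step avoids any Golden--Thompson or form-domain argument. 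It is the only point where the form-sum definition of $\physham_n$ enters, and only through the hypothesis \eqref{eq:paper-sectorwise-gibbs-product-limit}.

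The second step is membership in the trace class of the direct sum. For a block-diagonal operator $X=\bigoplus_n X_n$ with $\sum_n\norm{X_n}_1<\infty$, I will use that $\abs{X}=\bigoplus_n\abs{X_n}$. Taking the canonical trace of \ldots, with the supremum over finite orthonormal sets computed using orthonormal bases adapted to the decomposition, gives $\norm{X}_1=\sum_n\norm{X_n}_1$. I would verify this identity briefly: the inequality $\leq$ follows from the triangle inequality applied to the partial sums together with norm convergence, and $\geq$ follows by restricting to finite subfamilies of the blocks. Applying it to the two direct sums yields trace norms bounded by $\sum_n m_n<\infty$.

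The third step is the limit. Fix $\varepsilon>0$ and choose $N$ with $\sum_{n>N}m_n<\varepsilon$. The trace norm of the difference is
$$\begin{aligned}
\sum_{n\leq N}\norm{F_{\sminvtemperature,M,n}-\napiernum^{-\sminvtemperature\physham_n}}_1
+\sum_{n>N}\norm{F_{\sminvtemperature,M,n}-\napiernum^{-\sminvtemperature\physham_n}}_1 .
\end{aligned}$$
The tail is at most $2\sum_{n>N}m_n<2\varepsilon$ uniformly in $M$. The finite head tends to $0$ by \eqref{eq:paper-sectorwise-gibbs-product-limit}. This proves \eqref{eq:paper-all-sector-gibbs-product-limit}.

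The main obstacle is the first step: a uniform trace-class bound on the limiting heat kernel in each sector. Deriving it from the limit hypothesis circumvents the unbounded-operator issues.
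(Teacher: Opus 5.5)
Your proposal is correct and follows the paper's proof essentially step for step. The paper also obtains $\norm{\napiernum^{-\sminvtemperature\physham_n}}_{1}\leq m_n$ from the sectorwise limit together with Lemma \ref{lem:paper-gibbs-sector-majorant}, sums the sector bounds to get the trace-class assertions, and then concludes from the $M$-independent summable bound $2m_n$ by dominated convergence over $n$ (which is exactly your head--tail $\varepsilon$--$N$ argument); your only addition is the explicit check that the trace norm of a block-diagonal operator is the sum of the block trace norms, which the paper uses without comment.
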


\begin{proof}
The sectorwise trace-norm limit \eqref{eq:paper-sectorwise-gibbs-product-limit} and
\eqref{eq:paper-general-sector-majorant} give
$$\begin{aligned}
\norm{\napiernum^{-\sminvtemperature\physham_n}}_{1,\trace_{\sphilbfrak{H}_n}}
&\leq
\napiernum^{-\sminvtemperature\rbk{an^2-bn-c}}
\sqfun{\trace_{\sphilbfrak{H}_n}}{\napiernum^{-\sminvtemperature T_n}}.
\end{aligned}$$
The sector bounds and \eqref{eq:paper-general-sector-majorant-summability} give
$$\begin{aligned}
\sum_{n\in\monnat}
\norm{F_{\sminvtemperature,M,n}}_{1,\trace_{\sphilbfrak{H}_n}}
&\leq
\sum_{n\in\monnat}
\napiernum^{-\sminvtemperature\rbk{an^2-bn-c}}
\sqfun{\trace_{\sphilbfrak{H}_n}}{\napiernum^{-\sminvtemperature T_n}}
<
\infty,
\\
\sum_{n\in\monnat}
\norm{\napiernum^{-\sminvtemperature\physham_n}}_{1,\trace_{\sphilbfrak{H}_n}}
&\leq
\sum_{n\in\monnat}
\napiernum^{-\sminvtemperature\rbk{an^2-bn-c}}
\sqfun{\trace_{\sphilbfrak{H}_n}}{\napiernum^{-\sminvtemperature T_n}}
<
\infty.
\end{aligned}$$
These two estimates prove the trace-class assertions for the direct sums.
Consequently,
$$\begin{aligned}
\norm{
F_{\sminvtemperature,M,n}
-
\napiernum^{-\sminvtemperature\physham_n}
}_{1,\trace_{\sphilbfrak{H}_n}}
&\leq
2\napiernum^{-\sminvtemperature\rbk{an^2-bn-c}}
\sqfun{\trace_{\sphilbfrak{H}_n}}{\napiernum^{-\sminvtemperature T_n}}.
\end{aligned}$$
The right-hand side is summable by
\eqref{eq:paper-general-sector-majorant-summability} and independent of $M$.
Dominated convergence for the series over $n$ proves
\eqref{eq:paper-all-sector-gibbs-product-limit}.
\end{proof}

For unbounded generators, \eqref{eq:paper-sectorwise-gibbs-product-limit} is a separate sectorwise analytic input and does not follow from lower semiboundedness alone. After scalar shifts to nonnegative generators, the Trotter--Kato theory for Gibbs semigroups supplies such trace-norm convergence under its own hypotheses. The fixed-\(\sminvtemperature\) kinetic Gibbs condition \eqref{eq:paper-kinetic-gibbs-condition} alone is not asserted to imply \eqref{eq:paper-sectorwise-gibbs-product-limit}. Lemma \ref{lem:paper-gibbs-sector-majorant} then performs the additional lift from the individual sectors to the particle-number direct sum.

We now return to the intrinsic finite-mode full matrix corners. For each \(n\in\monnat\), let \(T_{n}\) and \(V_{n}\) be the kinetic and potential elements of \(e_{\sphilbfrak{k},n}\oa{K}(\sphilbfrak{k})^{\gamma}e_{\sphilbfrak{k},n}\) fixed before \eqref{eq:paper-main-corner-potential-superstability}, and set \(\physham_{n}=T_{n}+V_{n}\). Every corner is a finite-dimensional full matrix algebra. Hence the kinetic Gibbs condition \eqref{eq:paper-kinetic-gibbs-condition} holds in every particle block, and the sectorwise trace-norm limit follows from the finite-dimensional product formula. Fix a unital \(\ast\)-isomorphism \(\Phi_n\) from the \(n\)th corner onto \(\opspbddlin{\fldcmp^{d_n}}\). It maps minimal projections to rank-one projections, and the uniqueness in \eqref{eq:paper-standard-corner-trace} gives \(\trace_n=\trace_{\fldcmp^{d_n}}\circ\Phi_n\). Consequently, the Schatten--Hölder inequality \cite[Theorem 2.8]{BarrySimon005} holds for the norms \(\norm{A}_{p,\trace_n}=\sqfun{\trace_n}{\abs{A}^{p}}^{1/p}\) on each corner.

\begin{prop}[Product formula over all particle blocks]\label{prop:intrinsic-trace-norm-product-formula}
Fix $\sminvtemperature>0$, and assume the potential superstability \eqref{eq:paper-main-corner-potential-superstability}
and the Gaussian summability \eqref{eq:paper-main-kinetic-heat-trace-summability}.
Then $G_{\sminvtemperature,M}$ is trace summable for every $M\in\semigrposint$, the heat-kernel family
$\seq{\napiernum^{-\sminvtemperature \physham_{n}}}{n\in\monnat}$ is trace summable, and
\begin{equation}
\begin{aligned}
\lim_{M\to\infty}
\norm{
G_{\sminvtemperature,M}
-
\seq{\napiernum^{-\sminvtemperature \physham_{n}}}{n\in\monnat}
}_{1,\trace}
&=
0.
\end{aligned}
\label{eq:paper-trace-norm-product-limit}
\end{equation}
\end{prop}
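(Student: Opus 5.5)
The plan is to reduce the statement to Proposition \ref{prop:paper-gibbs-sector-product-formula} and Lemma \ref{lem:paper-gibbs-sector-majorant}. The reduction transports each matrix corner to a concrete finite-dimensional Hilbert space via the fixed unital $\ast$-isomorphism $\Phi_n$ onto $\opspbddlin{\fldcmp^{d_n}}$. Since $\trace_n=\trace_{\fldcmp^{d_n}}\circ\Phi_n$ and $\Phi_n$ commutes with continuous functional calculus and with absolute values, every trace-norm statement on the $n$th corner becomes the same statement on $\sphilbfrak{H}_n=\fldcmp^{d_n}$. On that space we take $\Phi_n(T_n)$ and $\Phi_n(V_n)$ as the kinetic and potential operators. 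We use these facts:
\begin{enumerate}
\item $\Phi_n\rbk{G_{\sminvtemperature,M,n}}=F_{\sminvtemperature,M,n}$.
\item $\Phi_n\rbk{\napiernum^{-\sminvtemperature\physham_n}}=\napiernum^{-\sminvtemperature\Phi_n(\physham_n)}$.
\item $\norm{X}_{1,\trace_n}=\norm{\Phi_n(X)}_{1,\trace_{\fldcmp^{d_n}}}$.
\end{enumerate}
The corner unit $e_{\sphilbfrak{k},n}$ maps to the identity, so \eqref{eq:paper-main-corner-potential-superstability} becomes \eqref{eq:paper-general-potential-superstability}.

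Next I verify the hypotheses of the general criterion. The kinetic Gibbs condition \eqref{eq:paper-kinetic-gibbs-condition} is automatic in finite dimensions. Because $\Phi_n(T_n)$ and $\Phi_n(V_n)$ are bounded, the form sum is the ordinary sum $\Phi_n(T_n+V_n)$. The summability \eqref{eq:paper-general-sector-majorant-summability} is exactly the assumed Gaussian summability \eqref{eq:paper-main-kinetic-heat-trace-summability} after applying $\trace_n=\trace_{\fldcmp^{d_n}}\circ\Phi_n$.

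For the sectorwise limit \eqref{eq:paper-sectorwise-gibbs-product-limit}, I invoke Proposition \ref{prop:paper-finite-corner-norm-product-formula} in the unital $\oacstar$-algebra $\opspbddlin{\fldcmp^{d_n}}$, which gives operator-norm convergence. The finite-dimensional comparison $\norm{X}_{1}\leq d_n\norm{X}$ with fixed $d_n$ then upgrades this to trace-norm convergence.

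Proposition \ref{prop:paper-gibbs-sector-product-formula} now yields both the trace summability of each $G_{\sminvtemperature,M}$ and of the heat-kernel family (through the sector bounds of Lemma \ref{lem:paper-gibbs-sector-majorant}) and the all-sector limit. The trace norm on $\bigoplus_n\fldcmp^{d_n}$ of a block-diagonal operator is the sum of the sectorwise trace norms, and this sum equals $\norm{\cdot}_{1,\trace}$ defined blockwise. Translating back gives \eqref{eq:paper-trace-norm-product-limit}.

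The only delicate point is the bookkeeping in the transfer through $\Phi_n$. The fact that functional calculus and $\abs{\cdot}$ commute with $\ast$-isomorphisms is standard. The real analytic content, dominated convergence against the $M$-independent summable majorant $2\napiernum^{-\sminvtemperature(an^2-bn-c)}\sqfun{\trace_n}{\napiernum^{-\sminvtemperature T_n}}$, is already supplied by the earlier proposition, so I expect no genuine obstacle. No lower bound on $T_n$ is needed, since each $T_n$ is bounded.
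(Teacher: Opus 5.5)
Your proof is correct. It uses the same ingredients as the paper: sectorwise operator-norm convergence from Proposition \ref{prop:paper-finite-corner-norm-product-formula}, upgraded to trace norm by the finite corner dimension; the Schatten--H\"older majorant $\napiernum^{-\sminvtemperature(an^2-bn-c)}\sqfun{\trace_n}{\napiernum^{-\sminvtemperature T_n}}$; and dominated convergence over $n$. The only difference is organizational: you transport through $\Phi_n$ and cite Lemma \ref{lem:paper-gibbs-sector-majorant} and Proposition \ref{prop:paper-gibbs-sector-product-formula}, whereas the paper uses $\trace_n=\trace_{\fldcmp^{d_n}}\circ\Phi_n$ only to justify H\"older on the corners and then reruns those two arguments directly there.
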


\begin{proof}
Proposition \ref{prop:paper-finite-corner-norm-product-formula} and finite dimensionality give, for every fixed
$n\in\monnat$,
$$\begin{aligned}
\norm{
G_{\sminvtemperature,M,n}
-
\napiernum^{-\sminvtemperature \physham_{n}}
}_{1,\trace_{n}}
&\leq
\sqfun{\trace_{n}}{e_{\sphilbfrak{k},n}}
\norm{
G_{\sminvtemperature,M,n}
-
\napiernum^{-\sminvtemperature \physham_{n}}
}
\to
0
&&\rbk{M\to\infty}.
\end{aligned}$$
Functional calculus and \eqref{eq:paper-main-corner-potential-superstability} give the potential factor bound
$$\begin{aligned}
\norm{\napiernum^{-\frac{\sminvtemperature}{M}V_n}}
&\leq
\napiernum^{-\frac{\sminvtemperature}{M}\rbk{an^2-bn-c}}.
\end{aligned}$$
Since $\napiernum^{-\frac{\sminvtemperature}{M}T_n}$ is positive, its Schatten $M$-norm is
$$\begin{aligned}
\norm{\napiernum^{-\frac{\sminvtemperature}{M}T_n}}_{M,\trace_n}^{M}
&=
\sqfun{\trace_n}{\rbk{\napiernum^{-\frac{\sminvtemperature}{M}T_n}}^{M}}
=
\sqfun{\trace_n}{\napiernum^{-\sminvtemperature T_n}}.
\end{aligned}$$
Apply the Schatten--Hölder inequality to the $M$ potential factors in the operator norm and the $M$ kinetic
factors in the Schatten $M$-norm, whose reciprocal exponents add up to $M\cdot\frac{1}{M}=1$.
The resulting sector majorant is
\begin{equation}
\begin{aligned}
\norm{G_{\sminvtemperature,M,n}}_{1,\trace_{n}}
\leq
\norm{\napiernum^{-\frac{\sminvtemperature}{M}V_n}}^{M}
\norm{\napiernum^{-\frac{\sminvtemperature}{M}T_n}}_{M,\trace_n}^{M}
\leq
\napiernum^{-\sminvtemperature\rbk{an^2-bn-c}}
\sqfun{\trace_{n}}{\napiernum^{-\sminvtemperature T_n}}.
\end{aligned}
\label{eq:paper-intrinsic-sector-majorant}
\end{equation}
The reverse triangle inequality for $\norm{\mathord{\cdot}}_{1,\trace_n}$ and the sectorwise limit give
$$\begin{aligned}
\norm{\napiernum^{-\sminvtemperature \physham_{n}}}_{1,\trace_{n}}
&=
\lim_{M\to\infty}\norm{G_{\sminvtemperature,M,n}}_{1,\trace_{n}}
\leq
\napiernum^{-\sminvtemperature\rbk{an^2-bn-c}}
\sqfun{\trace_{n}}{\napiernum^{-\sminvtemperature T_n}}.
\end{aligned}$$
Summing \eqref{eq:paper-intrinsic-sector-majorant} and the last bound over $n$ and using
\eqref{eq:paper-main-kinetic-heat-trace-summability} proves the two trace-summability assertions.
The triangle inequality gives, for every $M\in\semigrposint$ and $n\in\monnat$,
$$\begin{aligned}
\norm{
G_{\sminvtemperature,M,n}
-
\napiernum^{-\sminvtemperature \physham_{n}}
}_{1,\trace_{n}}
&\leq
\norm{G_{\sminvtemperature,M,n}}_{1,\trace_{n}}
+
\norm{\napiernum^{-\sminvtemperature \physham_{n}}}_{1,\trace_{n}}
\leq
2\napiernum^{-\sminvtemperature\rbk{an^2-bn-c}}
\sqfun{\trace_{n}}{\napiernum^{-\sminvtemperature T_n}}.
\end{aligned}$$
The right-hand side is summable in $n$ by \eqref{eq:paper-main-kinetic-heat-trace-summability} and is
independent of $M$.
Sectorwise convergence and dominated convergence for the series over $n$ give
$$\begin{aligned}
&\lim_{M\to\infty}
\norm{
G_{\sminvtemperature,M}
-
\seq{\napiernum^{-\sminvtemperature \physham_{n}}}{n\in\monnat}
}_{1,\trace}
=
\lim_{M\to\infty}
\sum_{n\in\monnat}
\norm{
G_{\sminvtemperature,M,n}
-
\napiernum^{-\sminvtemperature \physham_{n}}
}_{1,\trace_{n}}
\\
&=
\sum_{n\in\monnat}
\lim_{M\to\infty}
\norm{
G_{\sminvtemperature,M,n}
-
\napiernum^{-\sminvtemperature \physham_{n}}
}_{1,\trace_{n}}
=
0.
\end{aligned}$$
This is \eqref{eq:paper-trace-norm-product-limit}.
\end{proof}

On a lattice, the Gaussian summability \eqref{eq:paper-main-kinetic-heat-trace-summability} holds under explicit bounds on the corners.

\begin{lem}[Lattice criterion for the Gaussian summability]\label{lem:paper-summable-sector-majorant}
Assume the polynomial corner-dimension growth \eqref{eq:paper-corner-dimension-growth}, the potential
superstability \eqref{eq:paper-main-corner-potential-superstability}, and the linear kinetic norm bound
\eqref{eq:paper-main-linear-kinetic-bound}.
Then \eqref{eq:paper-main-kinetic-heat-trace-summability} holds for every $\sminvtemperature>0$, with the bound
$$\begin{aligned}
\sum_{n\in\monnat}
\napiernum^{-\sminvtemperature\rbk{an^2-bn-c}}
\sqfun{\trace_{n}}{\napiernum^{-\sminvtemperature T_n}}
&\leq
\sum_{n\in\monnat}
\sqfun{\trace_{n}}{e_{\sphilbfrak{k},n}}
\napiernum^{-\sminvtemperature\rbk{an^2-\rbk{b+b_T}n-\rbk{c+c_T}}}
&<
\infty.
\end{aligned}$$
\end{lem}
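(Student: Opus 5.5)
The plan is a sector-by-sector bound on the kinetic heat trace, followed by a comparison with a Gaussian-weighted polynomial series. Fix $\sminvtemperature>0$ and $n\in\monnat$. By the linear kinetic norm bound \eqref{eq:paper-main-linear-kinetic-bound}, the self-adjoint corner element $T_n$ satisfies $T_n\geq-\norm{T_n}e_{\sphilbfrak{k},n}\geq-(b_Tn+c_T)e_{\sphilbfrak{k},n}$ in the corner. Functional calculus in the finite-dimensional full matrix corner then gives
\begin{equation*}
\napiernum^{-\sminvtemperature T_n}\leq\napiernum^{\sminvtemperature(b_Tn+c_T)}e_{\sphilbfrak{k},n}.
\end{equation*}
Positivity of $\trace_n$ yields
\begin{equation*}
\sqfun{\trace_n}{\napiernum^{-\sminvtemperature T_n}}\leq\napiernum^{\sminvtemperature(b_Tn+c_T)}\sqfun{\trace_n}{e_{\sphilbfrak{k},n}}.
\end{equation*}
Positivity holds because, under a fixed unital $\ast$-isomorphism $\Phi_n$ onto $\opspbddlin{\fldcmp^{d_n}}$, we have $\trace_n=\trace_{\fldcmp^{d_n}}\circ\Phi_n$, as recorded before Proposition \ref{prop:intrinsic-trace-norm-product-formula}.

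Next, multiply by the Gaussian weight $\napiernum^{-\sminvtemperature(an^2-bn-c)}$ and combine the exponents. This produces exactly the termwise inequality
\begin{equation*}
\napiernum^{-\sminvtemperature(an^2-bn-c)}\sqfun{\trace_n}{\napiernum^{-\sminvtemperature T_n}}\leq\sqfun{\trace_n}{e_{\sphilbfrak{k},n}}\napiernum^{-\sminvtemperature(an^2-(b+b_T)n-(c+c_T))}.
\end{equation*}
Summing over $n$ gives the first inequality in the statement. The potential superstability \eqref{eq:paper-main-corner-potential-superstability} enters only through the constants $a,b,c$ that appear in the weight.

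For finiteness, invoke the polynomial corner growth \eqref{eq:paper-corner-dimension-growth}. This bounds the right-hand series by
\begin{equation*}
C\napiernum^{\sminvtemperature(c+c_T)}\sum_{n}(1+n)^{s}\napiernum^{-\sminvtemperature(an^2-(b+b_T)n)}.
\end{equation*}
Since $a>0$, we have $an^2-(b+b_T)n\geq\tfrac a2n^2$ for all $n\geq 2(b+b_T)/a$. The tail is therefore dominated by $\sum_n(1+n)^s\napiernum^{-\sminvtemperature an^2/2}$, which converges by comparison with a geometric series (or by the ratio test). The finitely many initial terms are finite.

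No step is a genuine obstacle. The only point needing care is the operator inequality $\napiernum^{-\sminvtemperature T_n}\leq\napiernum^{\sminvtemperature\norm{T_n}}e_{\sphilbfrak{k},n}$ inside the corner rather than in the ambient unital algebra. This is handled by working in the corner with unit $e_{\sphilbfrak{k},n}$, or equivalently through $\Phi_n$.
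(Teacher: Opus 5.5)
Your proof is correct and follows the same route as the paper's. Functional calculus and the linear kinetic bound give $\mathrm{Tr}_n[e^{-\beta T_n}]\le \mathrm{Tr}_n[e_{\mathfrak{k},n}]\,e^{\beta\|T_n\|}\le \mathrm{Tr}_n[e_{\mathfrak{k},n}]\,e^{\beta(b_Tn+c_T)}$; combining exponents and then using polynomial corner growth against the Gaussian decay finishes the argument, and your explicit tail estimate $an^2-(b+b_T)n\ge\tfrac a2 n^2$ only fills in the convergence step that the paper states without detail.
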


\begin{proof}
Functional calculus and \eqref{eq:paper-main-linear-kinetic-bound} give
$$\begin{aligned}
\sqfun{\trace_n}{\napiernum^{-\sminvtemperature T_n}}
\leq
\sqfun{\trace_n}{e_{\sphilbfrak{k},n}}
\napiernum^{\sminvtemperature\norm{T_n}}
\leq
\sqfun{\trace_n}{e_{\sphilbfrak{k},n}}
\napiernum^{\sminvtemperature\rbk{b_Tn+c_T}}.
\end{aligned}$$
Equation \eqref{eq:paper-corner-dimension-growth} gives
$$\begin{aligned}
\sum_{n\in\monnat}
\sqfun{\trace_{n}}{e_{\sphilbfrak{k},n}}
\napiernum^{-\sminvtemperature\rbk{an^2-\rbk{b+b_T}n-\rbk{c+c_T}}}
\leq
C
\sum_{n\in\monnat}
\rbk{1+n}^{s}
\napiernum^{-\sminvtemperature\rbk{an^2-\rbk{b+b_T}n-\rbk{c+c_T}}}
<
\infty.
\end{aligned}$$
\end{proof}

The product used on each finite-dimensional corner \(e_{\sphilbfrak{k},n}\oa{K}(\sphilbfrak{k})^{\gamma}e_{\sphilbfrak{k},n}\) is the finite-dimensional case of the Lie--Trotter formula \cite[Theorem VIII.31]{ReedSimon003}. It does not use a product formula for approximating unbounded generators or for sums of quadratic forms. The additional conclusion for the sequence indexed by all \(n\in\monnat\) is exactly the trace-norm lift \eqref{eq:paper-trace-norm-product-limit}, which follows from the summable majorant over \(n\).

\subsubsection{Trace closure and the abstract closed-path sum}\label{trace-closure-and-the-abstract-closed-path-sum}

In this subsubsection, the trace closes the open matrix paths from \eqref{eq:paper-matrix-open-path-expansion} by cyclicity and sums their values over all particle sectors. The resulting partition function requires trace summability in addition to the finite matrix identities.

\begin{prop}[Trace resolution by fixed minimal projections]\label{prop:intrinsic-path-projection-resolution}
The standard traces of the fixed minimal projections satisfy
$$\begin{aligned}
\sqfun{\trace_{n}}{p_{n,\omega}}
&=
1,
&
\sqfun{\trace_{n}}{e_{\sphilbfrak{k},n}}
&=
\abscard{\Omega_{n}}.
\end{aligned}$$
Let $A=\seq{A_n}{n\in\monnat}\in\oa{K}(\sphilbfrak{k})^{\gamma}$.
Assume that $A_n\in e_{\sphilbfrak{k},n}\oa{K}(\sphilbfrak{k})^{\gamma}e_{\sphilbfrak{k},n}$ for every
$n\in\monnat$ and that
$$\begin{aligned}
\sum_{n\in\monnat}\sqfun{\trace_{n}}{\abs{A_n}}
&<
\infty.
\end{aligned}$$
The extended trace has the diagonal resolution
\begin{equation}
\begin{aligned}
\sqfun{\trace}{A}
&=
\sum_{n\in\monnat}
\sum_{\omega\in\Omega_{n}}
\sqfun{\trace_{n}}{
p_{n,\omega}A_np_{n,\omega}}.
\end{aligned}
\label{eq:paper-intrinsic-trace-resolution}
\end{equation}
\end{prop}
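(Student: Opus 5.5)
The plan is to prove the two normalization identities first and then obtain the resolution \eqref{eq:paper-intrinsic-trace-resolution} sector by sector, summing over $n$ at the end.

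The first identity, $\sqfun{\trace_{n}}{p_{n,\omega}}=1$, is immediate from the defining normalization \eqref{eq:paper-standard-corner-trace}, since each $p_{n,\omega}$ is a minimal projection in the $n$th corner. For the second, apply linearity of $\trace_n$ to the resolution \eqref{eq:paper-minimal-projection-resolution}:
$$\begin{aligned}
\sqfun{\trace_n}{e_{\sphilbfrak{k},n}}
=\sum_{\omega\in\Omega_n}\sqfun{\trace_n}{p_{n,\omega}}
=\abscard{\Omega_n}.
\end{aligned}$$

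For the diagonal resolution, fix $n$ and use the unital $\ast$-isomorphism $\Phi_n$ onto $\opspbddlin{\fldcmp^{d_n}}$ with $\trace_n=\trace_{\fldcmp^{d_n}}\circ\Phi_n$, as recorded before Proposition \ref{prop:intrinsic-trace-norm-product-formula}. Alternatively, one can argue purely algebraically. Insert the resolution of $e_{\sphilbfrak{k},n}$ on both sides of $A_n=e_{\sphilbfrak{k},n}A_ne_{\sphilbfrak{k},n}$ to get
$$\begin{aligned}
A_n=\sum_{\omega,\omega'\in\Omega_n}p_{n,\omega}A_np_{n,\omega'}.
\end{aligned}$$
Cyclicity of $\trace_n$ together with $p_{n,\omega'}p_{n,\omega}=\delta_{\omega\omega'}p_{n,\omega}$ gives
$$\begin{aligned}
\sqfun{\trace_n}{p_{n,\omega}A_np_{n,\omega'}}
=\sqfun{\trace_n}{p_{n,\omega'}p_{n,\omega}A_n}
=0
\end{aligned}$$
for $\omega\neq\omega'$. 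Hence
$$\begin{aligned}
\sqfun{\trace_n}{A_n}=\sum_{\omega\in\Omega_n}\sqfun{\trace_n}{p_{n,\omega}A_np_{n,\omega}}.
\end{aligned}$$
This inner sum is finite, so no convergence issue arises within a single sector.

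Finally, sum over $n$. By the definition of the extended trace on trace-summable block families, $\sqfun{\trace}{A}=\sum_{n}\sqfun{\trace_n}{A_n}$, and this series converges absolutely because $\abs{\sqfun{\trace_n}{A_n}}\leq\sqfun{\trace_n}{\abs{A_n}}$ and the right-hand side is summable by hypothesis. Substituting the sectorwise identity into each term yields \eqref{eq:paper-intrinsic-trace-resolution}, understood as the iterated sum with the inner sum finite.

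The only point needing care is the inequality $\abs{\sqfun{\trace_n}{A_n}}\leq\norm{A_n}_{1,\trace_n}$. It follows from the polar decomposition inside the full matrix corner, or equivalently by transporting through $\Phi_n$ to the standard matrix trace. I expect this to be routine rather than a real obstacle.
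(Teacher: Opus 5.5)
Your proposal is correct and follows essentially the same route as the paper: the normalization comes from \eqref{eq:paper-standard-corner-trace}, the dimension count from summing over \eqref{eq:paper-minimal-projection-resolution}, and the sectorwise diagonal identity from inserting that resolution on both sides of $A_n$ and removing the off-diagonal terms by cyclicity, after which you sum over $n$. Your remark that $|\operatorname{Tr}_n[A_n]|\le\operatorname{Tr}_n[|A_n|]$ simply makes explicit the absolute convergence that the paper's proof cites as ``justified by trace summability.''
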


\begin{proof}
Equation \eqref{eq:paper-standard-corner-trace} gives
$$\begin{aligned}
\sqfun{\trace_{n}}{p_{n,\omega}}
&=
1.
\end{aligned}$$
Summing this identity over $\omega\in\Omega_{n}$ and using
\eqref{eq:paper-minimal-projection-resolution} gives
$\sqfun{\trace_{n}}{e_{\sphilbfrak{k},n}}=\abscard{\Omega_{n}}$.
Insert \eqref{eq:paper-minimal-projection-resolution} on both sides of $A_n$ and use cyclicity:
$$\begin{aligned}
\sqfun{\trace_{n}}{A_n}
&=
\sum_{\omega\in\Omega_{n}}
\sqfun{\trace_{n}}{
p_{n,\omega}A_np_{n,\omega}}.
\end{aligned}$$
Summing these finite identities over $n$ is justified by trace summability and gives
\eqref{eq:paper-intrinsic-trace-resolution}.
\end{proof}

Under its hypotheses, Proposition \ref{prop:intrinsic-trace-norm-product-formula} shows that Proposition \ref{prop:intrinsic-path-projection-resolution}, through \eqref{eq:paper-intrinsic-trace-resolution}, applies both to the heat-kernel family in \eqref{eq:paper-main-compact-heat-kernel} and to the product family \(G_{\sminvtemperature,M}\) defined in \eqref{eq:paper-trotter-product-family}.

Fix \(\sminvtemperature>0\) for the remainder of this subsection, and assume the diagonalization condition \eqref{eq:paper-minimal-projection-diagonalization} for every \(n\in\monnat\). For \(n\in\monnat\), \(M\in\semigrposint\), and \(\boldsymbol{\omega}=\seq{\omega_j}{0\leq j<M}\in\Omega_n^M\), define the periodic extension by \(\omega_{j+M}=\omega_j\) for every \(j\in\monnat\). The extended sequence is called the based closed path represented by \(\boldsymbol{\omega}\); in particular, \(\omega_M=\omega_0\). For every \(M\in\semigrposint\), define the closed-path set by \begin{equation}
\begin{aligned}
\prbspsample_{M,\txtclassical}
&=
\coprod_{n\in\monnat}\Omega_{n}^{M}.
\end{aligned}
\label{eq:paper-closed-path-space}
\end{equation} For \(M\in\semigrposint\) and \(\boldsymbol{\omega}=\seq{\omega_j}{0\leq j<M}\in\Omega_{n}^{M}\), define the path weight by \begin{equation}
\begin{aligned}
W_{M}
\colon
\prbspsample_{M,\txtclassical}
\to
\fldcmp;\quad
\fun{W_{M}}{n,\boldsymbol{\omega}}
&=
\sqfun{\trace_{n}}{
\rbk{\prod_{j=0}^{M-1}
\napiernum^{-\frac{\sminvtemperature}{M}\fun{v_{n}}{\omega_j}}
p_{n,\omega_j}
\napiernum^{-\frac{\sminvtemperature}{M}T_{n}}}
p_{n,\omega_0}}.
\end{aligned}
\label{eq:paper-closed-path-weight}
\end{equation}

\begin{prop}[Intrinsic discrete closed-path expansion]\label{prop:paper-abstract-path-integral}\label{prop:abstract-projection-path-integral}
Assume that \eqref{eq:paper-minimal-projection-diagonalization} holds for every $n\in\monnat$ and
$\omega\in\Omega_{n}$.
Assume also the hypotheses of Proposition \ref{prop:intrinsic-trace-norm-product-formula}.
Let the closed-path space and its weight be those in \eqref{eq:paper-closed-path-space} and
\eqref{eq:paper-closed-path-weight}.
For every $M\in\semigrposint$, one has
\begin{equation}
\begin{aligned}
\sqfun{\trace}{G_{\sminvtemperature,M}}
&=
\lim_{N\to\infty}
\sum_{n=0}^{N}
\sum_{\boldsymbol{\omega}\in\Omega_{n}^{M}}
\fun{W_{M}}{n,\boldsymbol{\omega}}.
\end{aligned}
\label{eq:paper-intrinsic-path-expansion}
\end{equation}
The partition functions satisfy
$$\begin{aligned}
\lim_{M\to\infty}
\sqfun{\trace}{G_{\sminvtemperature,M}}
&=
\sum_{n\in\monnat}
\sqfun{\trace_{n}}{\napiernum^{-\sminvtemperature \physham_{n}}}.
\end{aligned}$$
\end{prop}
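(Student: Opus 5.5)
The plan is to work sector by sector, using only the finite matrix identities, and then to pass to the limit over sectors by invoking the trace summability already proved in Proposition \ref{prop:intrinsic-trace-norm-product-formula}.

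\emph{Step 1: a fixed sector $n$.} Start from the operator identity \eqref{eq:paper-matrix-trotter-open-path-expansion}, which writes $G_{\sminvtemperature,M,n}$ as a finite sum over open paths $(\omega_0,\ldots,\omega_M)\in\Omega_n^{M+1}$. Apply $\trace_n$; since the sum is finite, linearity lets the trace pass through it. Each summand has the form $p_{n,\omega_0}X\,p_{n,\omega_M}$ for some $X$ in the corner. Cyclicity and orthogonality of the minimal projections give
\[
\sqfun{\trace_n}{p_{n,\omega_0}X\,p_{n,\omega_M}}=\sqfun{\trace_n}{X\,p_{n,\omega_M}p_{n,\omega_0}},
\]
which vanishes unless $\omega_M=\omega_0$. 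The open paths therefore reduce to based closed paths in $\Omega_n^M$. For a closed path, use cyclicity once more to move the leading $p_{n,\omega_0}$ to the end, where it merges with the final $p_{n,\omega_M}=p_{n,\omega_0}$ by idempotency. Inside the product, the interior projections $p_{n,\omega_{j+1}}$ already appear, so the summand becomes exactly $\fun{W_M}{n,\boldsymbol{\omega}}$ in \eqref{eq:paper-closed-path-weight}. The potential factor $\napiernum^{-\frac{\sminvtemperature}{M}\fun{v_n}{\omega_j}}$ sits in front of $p_{n,\omega_j}$ by the diagonalization \eqref{eq:paper-minimal-projection-diagonalization} together with functional calculus. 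Equivalently, one can apply Proposition \ref{prop:intrinsic-path-projection-resolution}. The outcome is
\[
\sqfun{\trace_n}{G_{\sminvtemperature,M,n}}=\sum_{\boldsymbol{\omega}\in\Omega_n^M}\fun{W_M}{n,\boldsymbol{\omega}}.
\]

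\emph{Step 2: summing over $n$.} Proposition \ref{prop:intrinsic-trace-norm-product-formula} shows that $G_{\sminvtemperature,M}$ is trace summable. Hence $\sum_n\sqfun{\trace_n}{G_{\sminvtemperature,M,n}}$ converges absolutely and equals $\sqfun{\trace}{G_{\sminvtemperature,M}}$ by definition. Its partial sums over $n\le N$ are the finite path sums from Step 1, and this gives \eqref{eq:paper-intrinsic-path-expansion}. Only the cutoff limit in $N$ is claimed; no absolute summability over paths is needed, because the inner sums are finite.

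\emph{Step 3: the limit $M\to\infty$.} Use $\abs{\sqfun{\trace}{A}}\le\norm{A}_{1,\trace}$, which follows from $\abs{\sqfun{\trace_n}{A_n}}\le\norm{A_n}_{1,\trace_n}$ summed over $n$. Then
\[
\abs{\sqfun{\trace}{G_{\sminvtemperature,M}}-\sum_n\sqfun{\trace_n}{\napiernum^{-\sminvtemperature\physham_n}}}\le\norm{G_{\sminvtemperature,M}-\seq{\napiernum^{-\sminvtemperature\physham_n}}{n\in\monnat}}_{1,\trace},
\]
and the right-hand side tends to zero by \eqref{eq:paper-trace-norm-product-limit}.

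The main obstacle is the bookkeeping in Step 1. One has to check that the boundary projection $p_{n,\omega_M}$, the interior insertions, and the cyclic move reproduce exactly the placement of projections in \eqref{eq:paper-closed-path-weight}, with the closing projection $p_{n,\omega_0}$ at the right-hand end. Everything else is linearity together with the previous proposition.
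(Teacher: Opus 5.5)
Your proof is correct and follows the paper's argument. You trace the open-path identity \eqref{eq:paper-matrix-trotter-open-path-expansion} sector by sector and discard the terms with $\omega_M\neq\omega_0$ by cyclicity and orthogonality; you then pass to the cutoff limit using the trace summability from Proposition \ref{prop:intrinsic-trace-norm-product-formula}, and obtain the $M\to\infty$ limit from $\abs{\sqfun{\trace}{A}}\leq\norm{A}_{1,\trace}$ together with \eqref{eq:paper-trace-norm-product-limit}. The extra cyclic move of the leading $p_{n,\omega_0}$ in Step 1 is harmless but unnecessary: after merging the repeated interior projections, the surviving term with $\omega_M=\omega_0$ is literally the argument of $\trace_n$ in \eqref{eq:paper-closed-path-weight}.
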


\begin{proof}
Fix $n$, and apply $\trace_n$ to the open-path operator identity
\eqref{eq:paper-matrix-trotter-open-path-expansion}.
For every corner element $A$ and every $\omega_0,\omega_M\in\Omega_n$, cyclicity and orthogonality give
$$\begin{aligned}
\sqfun{\trace_n}{p_{n,\omega_0}Ap_{n,\omega_M}}
&=
\sqfun{\trace_n}{p_{n,\omega_M}p_{n,\omega_0}A}
=
0
&&\rbk{\omega_M\neq\omega_0}.
\end{aligned}$$
Thus only the terms with $\omega_M=\omega_0$ remain, and each remaining term equals
$\fun{W_M}{n,\boldsymbol{\omega}}$ by \eqref{eq:paper-closed-path-weight}.
For fixed $n$, all sums over $\Omega_{n}$ are finite.
The trace-norm estimate
$$\begin{aligned}
\sum_{n\in\monnat}
\abs{\sqfun{\trace_{n}}{G_{\sminvtemperature,M,n}}}
&\leq
\sum_{n\in\monnat}
\norm{G_{\sminvtemperature,M,n}}_{1,\trace_{n}}
<
\infty
\end{aligned}$$
permits passage to the limit $N\to\infty$ in the partial sums over $0\leq n\leq N$ in
\eqref{eq:paper-intrinsic-path-expansion}.
This proves \eqref{eq:paper-intrinsic-path-expansion}.
The trace-norm continuity of $\trace$ and
\eqref{eq:paper-trace-norm-product-limit} prove the partition-function limit.
\end{proof}

Proposition \ref{prop:abstract-projection-path-integral} is the trace specialization of the algebraic open-path identity \eqref{eq:paper-matrix-trotter-open-path-expansion}. It is not a probability statement. Self-adjointness of \(T_{n}\) alone does not imply that the individual matrix-element weights are nonnegative or real. Assume first that every path weight in \eqref{eq:paper-closed-path-weight} is nonnegative: \begin{equation}
\begin{aligned}
\fun{W_{M}}{\omega}
&\in
\closedinterval{0}{\infty}
&&\rbk{\omega\in\prbspsample_{M,\txtclassical}}.
\end{aligned}
\label{eq:paper-finite-time-path-weight-nonnegativity}
\end{equation} Assume also that the partition function is strictly positive: \begin{equation}
\begin{aligned}
Z_{M}
&=
\sqfun{\trace}{G_{\sminvtemperature,M}}
>
0.
\end{aligned}
\label{eq:paper-finite-time-path-partition-function}
\end{equation} For \(M\in\semigrposint\), the full power-set \(\sigma\)-algebra on the closed-path space is \begin{equation}
\begin{aligned}
\mblfmlfrak{S}_{M}
&=
2^{\prbspsample_{M,\txtclassical}}.
\end{aligned}
\label{eq:paper-closed-path-sigma-algebra}
\end{equation} Then the probability measure is \begin{equation}
\begin{aligned}
\mu_{M}
\colon
\mblfmlfrak{S}_{M}
\to
\unitclosedinterval;\quad
\fun{\mu_{M}}{E}
&=
\frac{1}{Z_{M}}
\sum_{\omega\in E}\fun{W_{M}}{\omega}.
\end{aligned}
\label{eq:paper-normalized-finite-time-path-measure}
\end{equation} The conditions \eqref{eq:paper-finite-time-path-weight-nonnegativity} and \eqref{eq:paper-finite-time-path-partition-function} are additional model-dependent hypotheses. Under these two conditions, \eqref{eq:paper-intrinsic-path-expansion} gives the normalization of \eqref{eq:paper-normalized-finite-time-path-measure}. Reflection positivity likewise requires a specified time-reversal involution, a specified sub-\(\sigma\)-algebra, and the corresponding quadratic inequality for all bounded functions measurable with respect to that sub-\(\sigma\)-algebra.

\begin{defn}[Reflection positivity of a probability measure]\label{def:reflection-positive-probability-measure}
Let $\rbk{\prbspsample,\mblfmlfrak{S}}$ be a measurable space, let
$\mblfmlfrak{S}_{+}\subset\mblfmlfrak{S}$ be a sub-$\sigma$-algebra, and let
$\vartheta\colon\prbspsample\to\prbspsample$ be a measurable map satisfying
$\vartheta\circ\vartheta=\id_{\prbspsample}$.
A probability measure $\mu$ on $\rbk{\prbspsample,\mblfmlfrak{S}}$ is reflection positive with respect to
$\rbk{\mblfmlfrak{S}_{+},\vartheta}$ when the notation
$\sqfun{\prbexp_{\mu}}{G}=\int_{\prbspsample}G\opdmsr{\mu}$ is used for every bounded
$\mblfmlfrak{S}$-measurable function $G\colon\prbspsample\to\fldcmp$ and
\begin{equation}
\begin{aligned}
\sqfun{\prbexp_{\mu}}{
\cmpconj{F\circ\vartheta}F}
&\geq
0
\end{aligned}
\label{eq:reflection-positive-probability-measure}
\end{equation}
for every bounded $\mblfmlfrak{S}_{+}$-measurable function $F\colon\prbspsample\to\fldcmp$.
\end{defn}

For \(M\in\semigrposint\), define time reversal on the closed-path space \eqref{eq:paper-closed-path-space} by \begin{equation}
\begin{aligned}
\vartheta_{M}
\colon
\prbspsample_{M,\txtclassical}
\to
\prbspsample_{M,\txtclassical};\quad
\fun{\vartheta_{M}}{n,\rbk{\omega_0,\omega_1,\ldots,\omega_{M-1}}}
&=
\begin{cases}
\rbk{n,\rbk{\omega_0}},&M=1,\\
\rbk{n,\rbk{\omega_0,\omega_{M-1},\ldots,\omega_1}},&M\geq2.
\end{cases}
\end{aligned}
\label{eq:paper-closed-path-time-reversal}
\end{equation}

\begin{prop}[Criterion for a probability measure and reflection positivity]\label{prop:path-weight-probability-criterion}
Fix $M\in\semigrposint$.
Assume that the path weight $W_M$ in \eqref{eq:paper-closed-path-weight} satisfies
$\fun{W_{M}}{\omega}\geq0$ for every
$\omega\in\prbspsample_{M,\txtclassical}$ and that
$Z_{M}=\sqfun{\trace}{G_{\sminvtemperature,M}}>0$.
Then the set function $\mu_{M}$ in \eqref{eq:paper-normalized-finite-time-path-measure} is a probability measure.
For the time-reversal map $\vartheta_M$ in \eqref{eq:paper-closed-path-time-reversal}, fix a sub-$\sigma$-algebra
$\mblfmlfrak{S}_{M,+}\subset\mblfmlfrak{S}_{M}$, where $\mblfmlfrak{S}_{M}$ is defined in
\eqref{eq:paper-closed-path-sigma-algebra}.
Assume that
\begin{equation}
\begin{aligned}
\fun{W_{M}}{\fun{\vartheta_{M}}{\omega}}
&=
\fun{W_{M}}{\omega}.
\end{aligned}
\label{eq:path-weight-reflection-invariance}
\end{equation}
Assume also that
\begin{equation}
\begin{aligned}
\sum_{\omega\in\prbspsample_{M,\txtclassical}}
\cmpconj{\fun{F\circ\vartheta_{M}}{\omega}}
\fun{F}{\omega}
\fun{W_{M}}{\omega}
&\geq
0
\end{aligned}
\label{eq:path-weight-reflection-input}
\end{equation}
for every bounded $\mblfmlfrak{S}_{M,+}$-measurable function
$F
\colon
\prbspsample_{M,\txtclassical}
\to
\fldcmp$.
Then $\mu_{M}$ is invariant under $\vartheta_{M}$ and is reflection positive with respect to
$\rbk{\mblfmlfrak{S}_{M,+},\vartheta_{M}}$ in the sense of Definition
\ref{def:reflection-positive-probability-measure}.
\end{prop}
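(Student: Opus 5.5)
The plan has three steps: show that \(\mu_M\) is a probability measure, show that it is \(\vartheta_M\)-invariant, and then rewrite the reflection-positivity integral as the assumed weighted sum.

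\textbf{Probability measure.} The closed-path space \(\prbspsample_{M,\txtclassical}\) is a countable disjoint union of finite sets, so \(\mu_M\) is a countable sum of nonnegative point masses. Nonnegativity of \(\mu_M\) follows from \(W_M\geq0\) and \(Z_M>0\). Countable additivity on the power set \(\mblfmlfrak{S}_M\) follows from rearranging series of nonnegative terms, which is permitted for arbitrary countable families. For normalization, the nonnegative double series \(\sum_{n}\sum_{\boldsymbol{\omega}\in\Omega_n^M}W_M(n,\boldsymbol{\omega})\) equals the supremum of its partial sums over \(0\leq n\leq N\). By \eqref{eq:paper-intrinsic-path-expansion} in Proposition \ref{prop:paper-abstract-path-integral}, that supremum is \(\sqfun{\trace}{G_{\sminvtemperature,M}}=Z_M\). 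Hence \(\fun{\mu_M}{\prbspsample_{M,\txtclassical}}=1\).

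\textbf{Invariance.} First I verify from \eqref{eq:paper-closed-path-time-reversal} that \(\vartheta_M\) is an involution preserving each \(\Omega_n^M\). For \(M=1\) this is trivial. For \(M\geq2\), \(\vartheta_M\) fixes \(\omega_0\) and reverses \((\omega_1,\ldots,\omega_{M-1})\), and reversing twice is the identity. Measurability is automatic for the power-set \(\sigma\)-algebra. Since \(\vartheta_M\) is a bijection, reindexing the sum and applying \eqref{eq:path-weight-reflection-invariance} gives, for every \(E\),
\[
\fun{\mu_M}{\vartheta_M^{-1}(E)}=\frac{1}{Z_M}\sum_{\omega\in\vartheta_M(E)}W_M(\omega)=\frac{1}{Z_M}\sum_{\omega\in E}W_M(\vartheta_M(\omega))=\fun{\mu_M}{E}.
\]

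\textbf{Reflection positivity.} For bounded \(\mblfmlfrak{S}_{M,+}\)-measurable \(F\), the function \(\cmpconj{F\circ\vartheta_M}F\) is bounded. Its integral against the discrete measure \(\mu_M\) is therefore the absolutely convergent sum \(Z_M^{-1}\sum_\omega \cmpconj{F(\vartheta_M\omega)}F(\omega)W_M(\omega)\). Absolute convergence holds because \(\sum_\omega W_M(\omega)=Z_M<\infty\) and the integrand is bounded, so the order of summation is irrelevant. Hypothesis \eqref{eq:path-weight-reflection-input} says exactly that this sum is nonnegative, and dividing by \(Z_M>0\) gives \eqref{eq:reflection-positive-probability-measure}.

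The only step needing care is the normalization. There one must identify the limit of partial sums in \eqref{eq:paper-intrinsic-path-expansion} with the unordered sum defining \(\mu_M\), and this is justified by the nonnegativity of the weights.
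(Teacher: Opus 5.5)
Your proposal is correct and follows the paper's proof essentially step for step. You obtain normalization from the cutoff limit in \eqref{eq:paper-intrinsic-path-expansion} using monotonicity of the nonnegative partial sums, invariance by reindexing with the involution $\vartheta_M$ together with \eqref{eq:path-weight-reflection-invariance}, and reflection positivity by dividing \eqref{eq:path-weight-reflection-input} by $Z_M$; your remarks on absolute convergence and on identifying the unordered sum with the limit of partial sums only make explicit what the paper leaves implicit.
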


\begin{proof}
The measurable space $\prbspsample_{M,\txtclassical}$ carries the full power-set $\sigma$-algebra
\eqref{eq:paper-closed-path-sigma-algebra}, so
$\vartheta_{M}$ in \eqref{eq:paper-closed-path-time-reversal} is measurable.
Direct substitution in that equation gives
$$\begin{aligned}
\vartheta_{M}\circ\vartheta_{M}
&=
\id_{\prbspsample_{M,\txtclassical}}.
\end{aligned}$$
Nonnegativity makes the partial sums over particle numbers monotone.
The cutoff limit in \eqref{eq:paper-intrinsic-path-expansion}, together with the definition of
$Z_{M}$ in \eqref{eq:paper-finite-time-path-partition-function}, gives
$$\begin{aligned}
\sum_{\omega\in\prbspsample_{M,\txtclassical}}
\fun{W_{M}}{\omega}
&=
Z_{M}.
\end{aligned}$$
Thus $\mu_{M}$ is a nonnegative countably additive measure of total mass $1$.
The identity in \eqref{eq:path-weight-reflection-invariance} and involutivity give
$$\begin{aligned}
\fun{\mu_{M}}{\fun{\vartheta_{M}}{E}}
&=
\frac{1}{Z_{M}}
\sum_{\omega\in E}
\fun{W_{M}}{\fun{\vartheta_{M}}{\omega}}
=
\fun{\mu_{M}}{E}.
\end{aligned}$$
Dividing \eqref{eq:path-weight-reflection-input} by $Z_{M}$ proves reflection
positivity.
\end{proof}

Consequently, any probabilistic application of this algebraic formula must verify both nonnegativity and \eqref{eq:path-weight-reflection-input} for the heat kernel under consideration.

\section{Comparison with the standard Fock particle blocks}\label{sec:paper-bosonic-fock-representations}

In this section, we compare the intrinsic particle blocks and particle-cutoff quotients with the standard bosonic Fock realization.

\subsection{Standard Fock objects}\label{sec:paper-standard-fock-objects}

In this subsection, we define the bosonic Fock spaces, the regular Fock representation, and the tensor-factor operations used in the comparison, and we record the known exact-particle formulas that enter as input.

For a closed complex subspace \(\sphilbfrak{k}\subset\sphilbfrak{h}\), including \(\sphilbfrak{k}=\sphilbfrak{h}\), set \begin{equation}
\begin{aligned}
\fun{\spfockgamma_{\txtbsn}}{\sphilbfrak{k}}
&=
\bigoplus_{n\in\monnat}\sphilbfrak{k}^{\otimes_{\txtsym}n},
&
\fun{\spfockgamma_{\txtbsn,\txtfin}}{\sphilbfrak{k}}
&=
\bigcup_{N\in\monnat}\bigoplus_{n=0}^{N}
\sphilbfrak{k}^{\otimes_{\txtsym}n}.
\end{aligned}
\label{eq:paper-main-fock-space-definition}
\end{equation} Identify \(\sphilbfrak{k}^{\otimes_{\txtsym}0}\) with \(\fldcmp\) and write \(\Psi_{\txtfock,\txtvac}=1\in\fldcmp\subset\fun{\spfockgamma_{\txtbsn}}{\sphilbfrak{k}}\) for the vacuum vector; these canonical zero-sector identifications use the same symbol for every \(\sphilbfrak{k}\). For \(f\in\sphilbfrak{k}\), define on the finite-particle subspace \begin{equation}
\begin{aligned}
\fnrestr{\fun{a_{\txtfock,\sphilbfrak{k}}}{f}}
{\sphilbfrak{k}^{\otimes_{\txtsym}n}}
&=
\sqrt{n}\,
\fnrestr{
\rbk{\bkt{f}{\mathord{\cdot}}_{\sphilbfrak{k}}\otimes
1_{\sphilbfrak{k}^{\otimes\rbk{n-1}}}}}
{\sphilbfrak{k}^{\otimes_{\txtsym}n}}
&&\rbk{n\geq1},
&
\fun{a_{\txtfock,\sphilbfrak{k}}}{f}\Psi_{\txtfock,\txtvac}
&=0,
\end{aligned}
\label{eq:paper-main-fock-annihilation-definition}
\end{equation} and let \(\fun{\phi_{\txtfock,\sphilbfrak{k}}}{f}\) be the self-adjoint closure of \(2^{-1/2}\rbk{\faadj{\fun{a_{\txtfock,\sphilbfrak{k}}}{f}}+
\fun{a_{\txtfock,\sphilbfrak{k}}}{f}}\). Theorem X.41(a) of \cite{ReedSimon002} gives essential self-adjointness. The regular Fock representation is \begin{equation}
\begin{aligned}
\oarepn_{\txtfock,\sphilbfrak{k}}
\colon
\oaresolventalgebra(\sphilbfrak{k})
&\to
\opspbddlin{\fun{\spfockgamma_{\txtbsn}}{\sphilbfrak{k}}};
&
\fun{\oarepn_{\txtfock,\sphilbfrak{k}}}{\oaresolvent(\lambda,f)}
&=
\rbk{\imunit\lambda-\fun{\phi_{\txtfock,\sphilbfrak{k}}}{f}}^{-1}
\end{aligned}
\label{eq:paper-main-regular-fock-representation}
\end{equation} for \(\lambda\in\fldreal\setminus\setone{0}\) and \(f\in\sphilbfrak{k}\). This is the representation supplied by the regular representation--resolvent correspondence \cite[Corollary 4.4]{BuchholzGrundling002}. For \(\sphilbfrak{k}=\sphilbfrak{h}\), write \(a_{\txtfock}=a_{\txtfock,\sphilbfrak{h}}\), \(\phi_{\txtfock}=\phi_{\txtfock,\sphilbfrak{h}}\), and \(\oarepn_{\txtfock}=\oarepn_{\txtfock,\sphilbfrak{h}}\). For \(n\in\monnat\), define the exact-particle representation by \begin{equation}
\begin{aligned}
\rho_n
\colon
\oaresolventalgebra(\sphilbfrak{h},\sigma)^{\gamma}
&\to
\opspbddlin{\sphilbfrak{h}^{\otimes_{\txtsym}n}};
&
\fun{\rho_n}{A}
&=
\fnrestr{\fun{\oarepn_{\txtfock}}{A}}
{\sphilbfrak{h}^{\otimes_{\txtsym}n}}.
\end{aligned}
\label{eq:paper-main-exact-particle-representation}
\end{equation}

For a complex Hilbert space \(\sphilbfrak{H}\), write \(\fun{\schattencls}{\sphilbfrak{H}}\) for the norm closure in \(\opspbddlin{\sphilbfrak{H}}\) of the operators with finite-dimensional range.

For a nonzero finite-dimensional complex subspace \(\sphilbfrak{f}\subset\sphilbfrak{h}\) and \(m\in\monnat\), define \begin{equation}
\begin{aligned}
\pi_{\sphilbfrak{f},m}
\colon
e_{\sphilbfrak{f},m}\oa{K}(\sphilbfrak{f})^{\gamma}e_{\sphilbfrak{f},m}
&\to
\fun{\schattencls}{\sphilbfrak{f}^{\otimes_{\txtsym}m}};
&
\fun{\pi_{\sphilbfrak{f},m}}{K}
&=
\fnrestr{\fun{\oarepn_{\txtfock,\sphilbfrak{f}}}{K}}
{\sphilbfrak{f}^{\otimes_{\txtsym}m}}.
\end{aligned}
\label{eq:paper-main-finite-mode-block-representation}
\end{equation}

For a complex Hilbert space \(\sphilbfrak{H}\) and \(\xi,\eta\in\sphilbfrak{H}\), use \begin{equation}
\begin{aligned}
P_{\xi,\eta}
&\in
\opspbddlin{\sphilbfrak{H}},
&
\fun{P_{\xi,\eta}}{\zeta}
&=
\bkt{\eta}{\zeta}_{\sphilbfrak{H}}\xi
&&\rbk{\zeta\in\sphilbfrak{H}}.
\end{aligned}
\label{eq:paper-main-rank-one-operator}
\end{equation} For \(r\in\monnat\), let \begin{equation}
\begin{aligned}
P_r
\colon
\sphilbfrak{h}^{\otimes r}
&\to
\sphilbfrak{h}^{\otimes_{\txtsym}r};
&
\fun{P_r}{\xi_1\otimes\cdots\otimes\xi_r}
&=
\frac{1}{r!}\sum_{\pi\in\grsym{r}}
\xi_{\pi^{-1}(1)}\otimes\cdots\otimes\xi_{\pi^{-1}(r)}.
\end{aligned}
\label{eq:paper-main-symmetrizing-projection}
\end{equation} Let \(0\leq m\leq n\) and \(K\in\fun{\schattencls}{\sphilbfrak{h}^{\otimes_{\txtsym}m}}\). Let \(\widetilde{K}\in\opspbddlin{\sphilbfrak{h}^{\otimes m}}\) agree with \(K\) on \(\sphilbfrak{h}^{\otimes_{\txtsym}m}\) and vanish on its orthogonal complement. If \(S=\setone{s_1<\cdots<s_m}\subset\setone{1,\ldots,n}\) has complement \(\setone{t_1<\cdots<t_{n-m}}\), define \begin{equation}
\begin{aligned}
U_S
&\colon
\sphilbfrak{h}^{\otimes n}
\to
\sphilbfrak{h}^{\otimes m}\otimes\sphilbfrak{h}^{\otimes\rbk{n-m}},
\\
\fun{U_S}{\xi_1\otimes\cdots\otimes\xi_n}
&=
\xi_{s_1}\otimes\cdots\otimes\xi_{s_m}\otimes
\xi_{t_1}\otimes\cdots\otimes\xi_{t_{n-m}},
\\
K^{[S]}
&=
\faadj{U_S}\rbk{\widetilde{K}\otimes
1_{\sphilbfrak{h}^{\otimes\rbk{n-m}}}}U_S.
\end{aligned}
\label{eq:paper-main-tensor-factor-insertion}
\end{equation}

For \(0\leq m\leq n\), define \begin{equation}
\begin{aligned}
\widetilde{E}_{m,n}
\colon
\fun{\schattencls}{\sphilbfrak{h}^{\otimes_{\txtsym}m}}
&\to
\opspbddlin{\sphilbfrak{h}^{\otimes_{\txtsym}n}};
&
\fun{\widetilde{E}_{m,n}}{K}
&=
\fnrestr{
\sum_{\substack{S\subset\setone{1,\ldots,n}\\\abscard{S}=m}}K^{[S]}}
{\sphilbfrak{h}^{\otimes_{\txtsym}n}}.
\end{aligned}
\label{eq:paper-main-subset-sum-embedding}
\end{equation} Let \(p,q\in\monnat\), \(K\in\fun{\schattencls}{\sphilbfrak{h}^{\otimes_{\txtsym}p}}\), and \(L\in\fun{\schattencls}{\sphilbfrak{h}^{\otimes_{\txtsym}q}}\). For \(\max\setone{p,q}\leq r\leq p+q\), define \begin{equation}
\begin{aligned}
\widetilde{\mu}_{p,q}^{r}
&\colon
\fun{\schattencls}{\sphilbfrak{h}^{\otimes_{\txtsym}p}}
\times
\fun{\schattencls}{\sphilbfrak{h}^{\otimes_{\txtsym}q}}
\to
\fun{\schattencls}{\sphilbfrak{h}^{\otimes_{\txtsym}r}};
\quad
\fun{\widetilde{\mu}_{p,q}^{r}}{K,L}
&=
P_r
\sum_{\substack{A,B\subset\setone{1,\ldots,r}\\
\abscard{A}=p,\ \abscard{B}=q\\
A\cup B=\setone{1,\ldots,r}}}
K^{[A]}L^{[B]}P_r.
\end{aligned}
\label{eq:paper-main-subset-union-overlap-map}
\end{equation} Here \(K^{[A]}\) and \(L^{[B]}\) use \eqref{eq:paper-main-tensor-factor-insertion} with \(n=r\).

For a closed complex subspace \(\sphilbfrak{f}\subset\sphilbfrak{h}\), \(n\in\monnat\), \(0\leq k\leq n\), \(\xi\in\sphilbfrak{f}^{\otimes_{\txtsym}k}\), and \(\eta\in\rbk{\sphilbfrak{f}^{\perp}}^{\otimes_{\txtsym}\rbk{n-k}}\), define \begin{equation}
\begin{aligned}
W_{\sphilbfrak{f},n}
\colon
\sphilbfrak{h}^{\otimes_{\txtsym}n}
&\to
\bigoplus_{k=0}^{n}\sphilbfrak{f}^{\otimes_{\txtsym}k}\otimes
\rbk{\sphilbfrak{f}^{\perp}}^{\otimes_{\txtsym}\rbk{n-k}},
&
\fun{\faadj{W_{\sphilbfrak{f},n}}}{\xi\otimes\eta}
&=
\binom{n}{k}^{1/2}P_n\rbk{\xi\otimes\eta},
\end{aligned}
\label{eq:paper-main-particle-fock-decomposition-unitary}
\end{equation} and \begin{equation}
\begin{aligned}
U_{\sphilbfrak{f}}
\colon
\fun{\spfockgamma_{\txtbsn}}{\sphilbfrak{h}}
&\to
\fun{\spfockgamma_{\txtbsn}}{\sphilbfrak{f}}\otimes
\fun{\spfockgamma_{\txtbsn}}{\sphilbfrak{f}^{\perp}};
&
\fun{U_{\sphilbfrak{f}}}{\bigoplus_{n\in\monnat}\xi_n}
&=
\bigoplus_{n\in\monnat}\fun{W_{\sphilbfrak{f},n}}{\xi_n}.
\end{aligned}
\label{eq:paper-main-fock-decomposition-unitary}
\end{equation}

For \(n\geq1\), the exact-particle representations define the connecting map \begin{equation}
\begin{aligned}
\kappa_n
\colon
\fun{\rho_n}{\oaresolventalgebra(\sphilbfrak{h},\sigma)^{\gamma}}
\to
\fun{\rho_{n-1}}{\oaresolventalgebra(\sphilbfrak{h},\sigma)^{\gamma}};\quad
\fun{\kappa_n\circ\rho_n}{A}
&=
\fun{\rho_{n-1}}{A},
\end{aligned}
\label{eq:paper-main-fock-connecting-map}
\end{equation} which is well defined because \(\Ker\rho_n\subset\Ker\rho_{n-1}\) \cite[Lemma 3.4]{DetlevBuchholz002}. For \(n\in\monnat\) and \(0\leq m\leq n\), the concrete exact-particle algebra on \(\sphilbfrak{h}^{\otimes_{\txtsym}n}\) and, when \(n\geq1\), its connecting map satisfy the following formulas for \(K\in\fun{\schattencls}{\sphilbfrak{h}^{\otimes_{\txtsym}m}}\): \begin{equation}
\begin{aligned}
\fun{\rho_n}{\oaresolventalgebra(\sphilbfrak{h},\sigma)^\gamma}
&=
\sum_{r=0}^{n}
\fun{\widetilde{E}_{r,n}}{
\fun{\schattencls}{\sphilbfrak{h}^{\otimes_{\txtsym}r}}},
&
\fun{\kappa_n\circ\widetilde{E}_{m,n}}{K}
&=
\begin{cases}
\fun{\widetilde{E}_{m,n-1}}{K},&m<n,\\
0,&m=n.
\end{cases}
\end{aligned}
\label{eq:paper-literature-fock-input}
\end{equation} These are \cite[Lemmas 3.3 and 3.4]{DetlevBuchholz002}; in the normalization used here the binomial factors are absorbed into \(\widetilde{E}_{m,n}\). The compatible bounded sequences form the concrete Buchholz algebra by \cite[Theorem 3.5]{DetlevBuchholz002}. The same statements for \(\fun{\lpseq^{2}}{\Gamma}\) are \cite[Lemma 2.3 and Proposition 2.4]{DeuchertLampartLemm001}; the formulation for a general separable infinite-dimensional one-particle space is obtained by transport with a one-particle unitary. What remains to prove is the identification of this known concrete system with the intrinsic projections, ideals, and quotient maps constructed above.

\subsection{Standard particle blocks}\label{standard-particle-blocks}

In this subsection, the finite-dimensional Stone--von Neumann representation identifies the intrinsic particle blocks with the standard exact-particle Fock blocks.

Let \(\sphilbfrak{f}\subset\sphilbfrak{h}\) be a nonzero finite-dimensional complex subspace. Choose a faithful regular irreducible representation \(\pi_{\sphilbfrak{f}}\colon\oaresolventalgebra(\sphilbfrak{f})\to\opspbddlin{\sphilbfrak{H}_{\sphilbfrak{f}}}\). Finite-dimensional Stone--von Neumann uniqueness \cite[Corollary 4.4]{BuchholzGrundling002} identifies this representation with the regular Fock representation \(\oarepn_{\txtfock,\sphilbfrak{f}}\). Choose a unitary \[\begin{aligned}
S_{\sphilbfrak{f}}
\colon
\sphilbfrak{H}_{\sphilbfrak{f}}
\to
\fun{\spfockgamma_{\txtbsn}}{\sphilbfrak{f}};\quad
S_{\sphilbfrak{f}}\fun{\pi_{\sphilbfrak{f}}}{A}
&=
\fun{\oarepn_{\txtfock,\sphilbfrak{f}}}{A}S_{\sphilbfrak{f}}
\end{aligned}\] for \(A\in\oaresolventalgebra(\sphilbfrak{f})\). The second-quantization unitary is introduced only in this comparison representation, \[\begin{aligned}
\fun{\Gamma_{\sphilbfrak{f}}}{\theta}
\colon
\fun{\spfockgamma_{\txtbsn}}{\sphilbfrak{f}}
\to
\fun{\spfockgamma_{\txtbsn}}{\sphilbfrak{f}};\quad
\fnrestr{\fun{\Gamma_{\sphilbfrak{f}}}{\theta}}{\sphilbfrak{f}^{\otimes_{\txtsym}m}}
&=
\napiernum^{\imunit m\theta}.
\end{aligned}\] The transported unitary group and its action on the representation are \[\begin{aligned}
\fun{V_{\sphilbfrak{f}}}{\theta}
&=
\faadj{S_{\sphilbfrak{f}}}\fun{\Gamma_{\sphilbfrak{f}}}{\theta}S_{\sphilbfrak{f}},
&
\fun{V_{\sphilbfrak{f}}}{\theta}\fun{\pi_{\sphilbfrak{f}}}{A}\faadj{\fun{V_{\sphilbfrak{f}}}{\theta}}
&=
\fun{\pi_{\sphilbfrak{f}}\circ\gamma_{\theta}}{A}
&&\rbk{\theta\in\fldreal,\ A\in\oaresolventalgebra(\sphilbfrak{f})}.
\end{aligned}\] The strongly continuous unitary group \(\theta\mapsto \fun{V_{\sphilbfrak{f}}}{\theta}\) has, by Stone's theorem \cite[Theorem VIII.4]{ReedSimon003}, a unique self-adjoint generator \[\begin{aligned}
N_{\sphilbfrak{f}}
\colon
\dom N_{\sphilbfrak{f}}\subset\sphilbfrak{H}_{\sphilbfrak{f}}
\to
\sphilbfrak{H}_{\sphilbfrak{f}},
\end{aligned}\] which satisfies \(\fun{V_{\sphilbfrak{f}}}{\theta}=\napiernum^{\imunit\theta N_{\sphilbfrak{f}}}\) for every \(\theta\in\fldreal\). Normalize this generator by \(\min\opvarspec{N_{\sphilbfrak{f}}}=0\). For each \(m\in\monnat\), the spectral projection \(\fun{\msrprb_{N_{\sphilbfrak{f}}}}{\setone{m}}\) has finite rank and belongs to \(\fun{\pi_{\sphilbfrak{f}}}{\oa{K}(\sphilbfrak{f})}=\fun{\schattencls}{\sphilbfrak{H}_{\sphilbfrak{f}}}\). Faithfulness gives a unique \(e_{\sphilbfrak{f},m}\in\oa{K}(\sphilbfrak{f})\) with \(\fun{\pi_{\sphilbfrak{f}}}{e_{\sphilbfrak{f},m}}
=\fun{\msrprb_{N_{\sphilbfrak{f}}}}{\setone{m}}\).

\begin{prop}[Representation independence of the blocks]\label{prop:paper-particle-block-independence}
The projection $e_{\sphilbfrak{f},m}$ is independent of the faithful regular irreducible representation used in its
construction.
The family $\seq{e_{\sphilbfrak{f},m}}{m\in\monnat}$ satisfies Definition
\ref{def:intrinsic-particle-block-system}.
\end{prop}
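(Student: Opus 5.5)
Two things have to be proved: that $e_{\sphilbfrak{f},m}$ does not depend on the representation, and that the family satisfies Definition \ref{def:intrinsic-particle-block-system}.

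\emph{Independence.} I would compare two faithful regular irreducible representations $\pi_{\sphilbfrak{f}}$ and $\pi'_{\sphilbfrak{f}}$. By finite-dimensional Stone--von Neumann uniqueness \cite[Corollary 4.4]{BuchholzGrundling002}, both are unitarily equivalent to $\oarepn_{\txtfock,\sphilbfrak{f}}$. So there is a unitary $W$ with $W\pi_{\sphilbfrak{f}}(A)=\pi'_{\sphilbfrak{f}}(A)W$.

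The two implementing groups $V_{\sphilbfrak{f}}$ and $WV_{\sphilbfrak{f}}'W^{\ast}$ implement the same automorphism $\pi_{\sphilbfrak{f}}\circ\gamma_\theta\circ\pi_{\sphilbfrak{f}}^{-1}$. By irreducibility and Schur's lemma they therefore differ by a scalar $\napiernum^{\imunit c(\theta)}$. The function $c$ is a continuous character, so the generators differ by a real constant $c$. The normalization $\min\opvarspec N_{\sphilbfrak{f}}=0$ forces $c=0$, hence the spectral projections correspond under $W$.

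A cleaner route is to check the representation-free fact that $\pi_{\sphilbfrak{f}}(e_{\sphilbfrak{f},m})$ equals $S_{\sphilbfrak{f}}^{\ast}(\text{projection onto }\sphilbfrak{f}^{\otimes_{\txtsym}m})S_{\sphilbfrak{f}}$ for every choice of $S_{\sphilbfrak{f}}$. Two choices of $S_{\sphilbfrak{f}}$ differ by a scalar, again by irreducibility. Faithfulness then gives uniqueness of the preimage in $\oa{K}(\sphilbfrak{f})$.

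\emph{Definition \ref{def:intrinsic-particle-block-system}.} Everything transports to Fock space via $\pi_{\sphilbfrak{f}}$ and $S_{\sphilbfrak{f}}$.

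\begin{itemize}
\item \emph{Gauge invariance.} The $e_{\sphilbfrak{f},m}$ are gauge invariant because $\pi_{\sphilbfrak{f}}(\gamma_\theta(e_{\sphilbfrak{f},m}))=V(\theta)\msrprb_{N}(\{m\})V(\theta)^{\ast}=\msrprb_N(\{m\})$, and $\pi_{\sphilbfrak{f}}$ is faithful.
\item \emph{Orthogonality.} This holds because the spectral projections for distinct points are orthogonal.
\item \emph{Fixed-point algebra.} I would characterize $\pi_{\sphilbfrak{f}}(\oa{K}(\sphilbfrak{f})^{\gamma})$ as the compact operators commuting with all $V(\theta)$. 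These are exactly the compact operators that are block diagonal in the number decomposition $\bigoplus_m \sphilbfrak{f}^{\otimes_{\txtsym}m}$. Here I use \eqref{eq:paper-minimal-ideal-representation} together with the fact that $\gamma_\theta$ preserves the characteristic ideal $\oa{K}(\sphilbfrak{f})$.
\item \emph{Corners.} The corner $e_{\sphilbfrak{f},m}\oa{K}(\sphilbfrak{f})^{\gamma}e_{\sphilbfrak{f},m}$ then maps onto $\opspbddlin{\sphilbfrak{f}^{\otimes_{\txtsym}m}}$, since every operator on the corner range commutes with $V$. This algebra is a nonzero full matrix algebra $M_{d}(\fldcmp)$ with $d=\binom{m+\dim\sphilbfrak{f}-1}{m}$.
\item \emph{Block decomposition.} A block-diagonal compact operator $K=\bigoplus_m K_m$ has $\norm{K_m}\to0$: otherwise one picks unit vectors in distinct blocks that are orthonormal and for which $\norm{K\xi_m}$ is bounded below, contradicting compactness. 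The partial sums then converge in norm, since the norm of the tail is $\sup_{m>N}\norm{K_m}$. Uniqueness follows by compressing with $e_{\sphilbfrak{f},m}$.
\end{itemize}

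\emph{Main obstacle.} The delicate step is the scalar-ambiguity argument. Its two ingredients are showing that the implementing group is determined up to a continuous character, and identifying the spectrum of the generator (via the transported Fock number operator) as $\monnat$ with finite-rank spectral projections, so that the normalization fixes the character. The rest is bookkeeping through the faithful representation.
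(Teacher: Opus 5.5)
Your proposal is correct and follows essentially the paper's route. Schur's lemma reduces the two gauge implementations to a continuous scalar character, Stone's theorem together with the normalization $\min\opvarspec{N_{\sphilbfrak{f}}}=0$ removes it, and the block-system properties are read off from the spectral decomposition of the compact operators commuting with the number operator. One cosmetic point: with $W\pi_{\sphilbfrak{f}}(A)=\pi'_{\sphilbfrak{f}}(A)W$, the transported group acting on $\sphilbfrak{H}_{\sphilbfrak{f}}$ is $W^{\ast}V'_{\sphilbfrak{f}}W$, not $WV'_{\sphilbfrak{f}}W^{\ast}$.
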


\begin{proof}
Let $\pi'_{\sphilbfrak{f}}$ be another faithful regular irreducible representation, let
$\fun{V'_{\sphilbfrak{f}}}{\theta}=\napiernum^{\imunit\theta N'_{\sphilbfrak{f}}}$ be its normalized gauge implementation, and choose a unitary
$W\colon\sphilbfrak{H}_{\sphilbfrak{f}}\to\sphilbfrak{H}'_{\sphilbfrak{f}}$ with
$$\begin{aligned}
W\fun{\pi_{\sphilbfrak{f}}}{A}\faadj{W}
&=
\fun{\pi'_{\sphilbfrak{f}}}{A}
\end{aligned}$$
for every $A\in\oaresolventalgebra(\sphilbfrak{f})$.
The two gauge implementations differ, after transport by $W$, by a strongly continuous scalar character.
Consequently Stone's theorem \cite[Theorem VIII.4]{ReedSimon003} gives a real $c$ such that
$$\begin{aligned}
\fun{V'_{\sphilbfrak{f}}}{\theta}
&=
\napiernum^{\imunit c\theta}W\fun{V_{\sphilbfrak{f}}}{\theta}\faadj{W},
&
N'_{\sphilbfrak{f}}
&=
WN_{\sphilbfrak{f}}\faadj{W}+c.
\end{aligned}$$
Both generators have minimum spectral value $0$, and therefore $c=0$.
Their spectral projections satisfy
$$\begin{aligned}
\fun{\msrprb_{N'_{\sphilbfrak{f}}}}{\setone{m}}
&=
W\fun{\msrprb_{N_{\sphilbfrak{f}}}}{\setone{m}}\faadj{W}.
\end{aligned}$$
The inverse image of the projection is consequently independent of the representation.

In the Fock realization, $N_{\sphilbfrak{f}}$ is the particle-number operator.
Its spectral projections are
therefore mutually orthogonal and finite rank, and functional calculus gives
$e_{\sphilbfrak{f},m}\in\oa{K}(\sphilbfrak{f})^\gamma$.
Since
$\fun{\pi_{\sphilbfrak{f}}}{\oa{K}(\sphilbfrak{f})}$ is the compact-operator algebra, its gauge-fixed part is
$$\begin{aligned}
\fun{\pi_{\sphilbfrak{f}}}{\oa{K}(\sphilbfrak{f})^\gamma}
&=
\set{\bigoplus_{m\in\monnat}T_m}
{T_m\in\opspbddlin{\fun{\msrprb_{N_{\sphilbfrak{f}}}}{\setone{m}}\sphilbfrak{H}_{\sphilbfrak{f}}},
\lim_{m\to\infty}\norm{T_m}=0}.
\end{aligned}$$
This standard spectral decomposition proves orthogonality, the full-matrix corner property, and the unique
norm-convergent decomposition.
The appearance of the same blocks in the exact-particle algebra is the
known decomposition \eqref{eq:paper-literature-fock-input}; no further Fock-space calculation is needed.
\end{proof}

\begin{prop}[Verification of the coefficient connecting maps]\label{prop:paper-coefficient-connecting-maps}
Let $m\in\monnat$, and let
$$\begin{aligned}
0
&\neq
\sphilbfrak{f}
\subset
\sphilbfrak{g}
\subset
\sphilbfrak{k}
\subset
\sphilbfrak{h}
\end{aligned}$$
be finite-dimensional complex subspaces.
The compression map $j_{\sphilbfrak{f},\sphilbfrak{g},m}$ in
\eqref{eq:paper-coefficient-connecting-map} is an injective $\ast$-homomorphism and satisfies
$$\begin{aligned}
j_{\sphilbfrak{g},\sphilbfrak{k},m}\circ j_{\sphilbfrak{f},\sphilbfrak{g},m}
&=
j_{\sphilbfrak{f},\sphilbfrak{k},m}.
\end{aligned}$$
\end{prop}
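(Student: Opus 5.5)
We will prove Proposition \ref{prop:paper-coefficient-connecting-maps} in a faithful regular Fock picture rather than abstractly. The regular Fock representation \(\oarepn_{\txtfock,\sphilbfrak{g}}\) of \(\oaresolventalgebra(\sphilbfrak{g})\) is faithful by \cite[Theorem 4.10]{BuchholzGrundling002}. By Proposition \ref{prop:paper-particle-block-independence}, \(e_{\sphilbfrak{g},m}\) is represented there by the projection onto \(\sphilbfrak{g}^{\otimes_{\txtsym}m}\). So all three properties of \(j_{\sphilbfrak{f},\sphilbfrak{g},m}\) may be checked after applying \(\oarepn_{\txtfock,\sphilbfrak{g}}\).

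\emph{Step 1: transport the inclusion.} Write \(\sphilbfrak{g}=\sphilbfrak{f}\oplus\sphilbfrak{f}'\), with \(\sphilbfrak{f}'\) the orthogonal complement of \(\sphilbfrak{f}\) in \(\sphilbfrak{g}\). Use the exponential-law unitary \(U\colon\fun{\spfockgamma_{\txtbsn}}{\sphilbfrak{g}}\to\fun{\spfockgamma_{\txtbsn}}{\sphilbfrak{f}}\otimes\fun{\spfockgamma_{\txtbsn}}{\sphilbfrak{f}'}\), defined exactly as in \eqref{eq:paper-main-fock-decomposition-unitary} with \(\sphilbfrak{g}\) in place of \(\sphilbfrak{h}\). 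Under \(U\), the field \(\phi_{\txtfock,\sphilbfrak{g}}(f)\) for \(f\in\sphilbfrak{f}\) becomes \(\phi_{\txtfock,\sphilbfrak{f}}(f)\otimes1\). Hence its resolvents agree, and by continuity
\[
U\fun{\oarepn_{\txtfock,\sphilbfrak{g}}}{\fun{\iota_{\sphilbfrak{f},\sphilbfrak{g}}}{A}}\faadj{U}=\fun{\oarepn_{\txtfock,\sphilbfrak{f}}}{A}\otimes1
\qquad\rbk{A\in\oaresolventalgebra(\sphilbfrak{f})}.
\]
Under \(U\), the \(m\)-particle space \(\sphilbfrak{g}^{\otimes_{\txtsym}m}\) becomes \(\bigoplus_{k=0}^{m}\sphilbfrak{f}^{\otimes_{\txtsym}k}\otimes\sphilbfrak{f}'^{\otimes_{\txtsym}(m-k)}\).

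\emph{Step 2: compute the compression.} Take \(K\) in the \(m\)th corner. Then \(\oarepn_{\txtfock,\sphilbfrak{f}}(K)=\pi_{\sphilbfrak{f},m}(K)\oplus0\) with respect to the particle grading. Consequently
\[
\fun{\oarepn_{\txtfock,\sphilbfrak{g}}}{j_{\sphilbfrak{f},\sphilbfrak{g},m}(K)}
=\faadj{U}\rbk{\pi_{\sphilbfrak{f},m}(K)\otimes P_{0}}U,
\]
where \(P_0\) is the vacuum projection of \(\fun{\spfockgamma_{\txtbsn}}{\sphilbfrak{f}'}\). The reason is that the only summand of the graded \(m\)-particle decomposition on which \(\pi_{\sphilbfrak{f},m}(K)\otimes1\) acts nontrivially is \(k=m\), and there the \(\sphilbfrak{f}'\) part is the vacuum. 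The case \(m=0\) is included. Equivalently, the compression is the identification \(\sphilbfrak{f}^{\otimes_{\txtsym}m}\subset\sphilbfrak{g}^{\otimes_{\txtsym}m}\), extended by zero on the orthogonal complement.

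Multiplicativity and the \(\ast\)-property follow from this formula, since \(K\mapsto\pi_{\sphilbfrak{f},m}(K)\otimes P_0\) is a \(\ast\)-homomorphism. This also shows the right-hand side lies in \(\fun{\oarepn_{\txtfock,\sphilbfrak{g}}}{e_{\sphilbfrak{g},m}\oa{K}(\sphilbfrak{g})^{\gamma}e_{\sphilbfrak{g},m}}\), because by Proposition \ref{prop:paper-particle-block-independence} that image is all of \(\opspbddlin{\sphilbfrak{g}^{\otimes_{\txtsym}m}}\). Injectivity follows because \(\pi_{\sphilbfrak{f},m}\) is injective on the corner, by faithfulness, and tensoring with the nonzero projection \(P_0\) preserves injectivity.

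\emph{Step 3: compatibility.} Under the formula of Step 2, \(j_{\sphilbfrak{f},\sphilbfrak{g},m}\) is conjugation-free extension by zero along the isometric inclusion \(\sphilbfrak{f}^{\otimes_{\txtsym}m}\hookrightarrow\sphilbfrak{g}^{\otimes_{\txtsym}m}\) induced by \(\sphilbfrak{f}\subset\sphilbfrak{g}\). Inclusions \(\sphilbfrak{f}^{\otimes_{\txtsym}m}\hookrightarrow\sphilbfrak{g}^{\otimes_{\txtsym}m}\hookrightarrow\sphilbfrak{k}^{\otimes_{\txtsym}m}\) compose to the inclusion \(\sphilbfrak{f}^{\otimes_{\txtsym}m}\hookrightarrow\sphilbfrak{k}^{\otimes_{\txtsym}m}\), so extension by zero is functorial. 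The identity then holds in the faithful representation \(\oarepn_{\txtfock,\sphilbfrak{k}}\), and faithfulness lifts it to the algebra.

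The main obstacle is Step 1: checking carefully that \(U\) intertwines the fields, with the \(\binom{n}{k}^{1/2}\) normalization, on the finite-particle core. After that, the rest is bookkeeping with the particle grading.
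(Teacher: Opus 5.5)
Your proposal is correct and is essentially the paper's own argument: pass to a faithful Fock picture, and use the decomposition $\mathfrak{g}^{\otimes_{\mathrm{s}}m}\simeq\bigoplus_{r=0}^{m}\mathfrak{f}^{\otimes_{\mathrm{s}}r}\otimes(\mathfrak{g}\ominus\mathfrak{f})^{\otimes_{\mathrm{s}}(m-r)}$ to see that compression by $e_{\mathfrak{g},m}$ keeps only the $r=m$ summand, i.e.\ is zero extension along $\mathfrak{f}^{\otimes_{\mathrm{s}}m}\subset\mathfrak{g}^{\otimes_{\mathrm{s}}m}$; then read off multiplicativity, adjoints, injectivity and the composition law, and pull them back by faithfulness. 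The only difference is that you make the exponential-law intertwining and the vacuum projection $P_0$ explicit, where the paper simply asserts that the compression selects the top summand; also, membership of $j_{\mathfrak{f},\mathfrak{g},m}(K)$ in the target corner follows directly from $e_{\mathfrak{g},m}\in\mathcal{K}(\mathfrak{g})^{\gamma}$ and the gauge equivariance of $\iota_{\mathfrak{f},\mathfrak{g}}$.
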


\begin{proof}
Define the symmetric-tensor inclusion by
$$\begin{aligned}
I_{\sphilbfrak{f},\sphilbfrak{g},m}
\colon
\sphilbfrak{f}^{\otimes_{\txtsym}m}
\to
\sphilbfrak{g}^{\otimes_{\txtsym}m};\quad
\fun{I_{\sphilbfrak{f},\sphilbfrak{g},m}}{\xi}
&=
\xi.
\end{aligned}$$
The decomposition of the $m$-particle space induced by $\sphilbfrak{g}=\sphilbfrak{f}\oplus\rbk{\sphilbfrak{g}\ominus \sphilbfrak{f}}$ is
$$\begin{aligned}
\sphilbfrak{g}^{\otimes_{\txtsym}m}
&\simeq
\bigoplus_{r=0}^{m}
\sphilbfrak{f}^{\otimes_{\txtsym}r}\otimes
\rbk{\sphilbfrak{g}\ominus \sphilbfrak{f}}^{\otimes_{\txtsym}\rbk{m-r}}.
\end{aligned}$$
For $K\in e_{\sphilbfrak{f},m}\oa{K}(\sphilbfrak{f})^{\gamma}e_{\sphilbfrak{f},m}$, compression by $e_{\sphilbfrak{g},m}$ selects the $r=m$ summand and gives
$$\begin{aligned}
\fnrestr{
S_{\sphilbfrak{g}}\fun{\pi_{\sphilbfrak{g}}\circ j_{\sphilbfrak{f},\sphilbfrak{g},m}}{K}\faadj{S_{\sphilbfrak{g}}}}
{\sphilbfrak{g}^{\otimes_{\txtsym}m}}
&=
I_{\sphilbfrak{f},\sphilbfrak{g},m}
\fnrestr{S_{\sphilbfrak{f}}\fun{\pi_{\sphilbfrak{f}}}{K}\faadj{S_{\sphilbfrak{f}}}}
{\sphilbfrak{f}^{\otimes_{\txtsym}m}}
\faadjrbk{I_{\sphilbfrak{f},\sphilbfrak{g},m}}.
\end{aligned}$$
The right-hand side is zero extension from $\sphilbfrak{f}^{\otimes_{\txtsym}m}$ to
$\sphilbfrak{g}^{\otimes_{\txtsym}m}$.
For $K,L\in e_{\sphilbfrak{f},m}\oa{K}(\sphilbfrak{f})^{\gamma}e_{\sphilbfrak{f},m}$, zero extension gives
$$\begin{aligned}
\fun{j_{\sphilbfrak{f},\sphilbfrak{g},m}}{KL}
&=
\fun{j_{\sphilbfrak{f},\sphilbfrak{g},m}}{K}\fun{j_{\sphilbfrak{f},\sphilbfrak{g},m}}{L},
&
\fun{j_{\sphilbfrak{f},\sphilbfrak{g},m}}{\faadj{K}}
&=
\faadjrbk{\fun{j_{\sphilbfrak{f},\sphilbfrak{g},m}}{K}},
\\
\norm{\fun{j_{\sphilbfrak{f},\sphilbfrak{g},m}}{K}}
&=
\norm{K}.
\end{aligned}$$
Faithfulness transfers these identities back to the resolvent algebra.
Zero extension through $\sphilbfrak{g}^{\otimes_{\txtsym}m}$ equals direct zero extension to
$\sphilbfrak{k}^{\otimes_{\txtsym}m}$, which gives
$j_{\sphilbfrak{g},\sphilbfrak{k},m}\circ j_{\sphilbfrak{f},\sphilbfrak{g},m}=j_{\sphilbfrak{f},\sphilbfrak{k},m}$.
\end{proof}

The finite-mode block representation is the map \(\pi_{\sphilbfrak{f},m}\) defined in \eqref{eq:paper-main-finite-mode-block-representation}. The unitary \(S_{\sphilbfrak{f}}\) satisfying \(S_{\sphilbfrak{f}}\fun{\pi_{\sphilbfrak{f}}}{A}
=\fun{\oarepn_{\txtfock,\sphilbfrak{f}}}{A}S_{\sphilbfrak{f}}\) for every \(A\in\oaresolventalgebra(\sphilbfrak{f})\) shows that this representation is a \(\ast\)-isomorphism.

\begin{thm}[Fock realization of the intrinsic particle blocks]\label{thm:paper-fock-intrinsic-block-realization}
Let $0\neq\sphilbfrak{f}\subset\sphilbfrak{h}$ be finite-dimensional.
For $m,n\in\monnat$, the Fock images of $e_{\sphilbfrak{f},m}$ and
$Q_{\sphilbfrak{f},>n}$ are the orthogonal projections onto
$\sphilbfrak{f}^{\otimes_{\txtsym}m}$ and
$\bigoplus_{r>n}\sphilbfrak{f}^{\otimes_{\txtsym}r}$, respectively.
The intrinsic minimal ideal and its gauge-fixed part satisfy
\begin{equation}
\begin{aligned}
\fun{\oarepn_{\txtfock,\sphilbfrak{f}}}{\oa{K}(\sphilbfrak{f})}
&=
\fun{\schattencls}{\fun{\spfockgamma_{\txtbsn}}{\sphilbfrak{f}}},
\\ 
\fun{\oarepn_{\txtfock,\sphilbfrak{f}}}{\oa{K}(\sphilbfrak{f})^{\gamma}}
&=
\set{\bigoplus_{m\in\monnat}K_m}
{K_m\in\opspbddlin{\sphilbfrak{f}^{\otimes_{\txtsym}m}},
\lim_{m\to\infty}\norm{K_m}=0}.
\end{aligned}
\label{eq:paper-main-fock-intrinsic-block-identification}
\end{equation}
The block isomorphism in \eqref{eq:paper-main-finite-mode-block-representation} identifies
$e_{\sphilbfrak{f},m}\oa{K}(\sphilbfrak{f})^{\gamma}e_{\sphilbfrak{f},m}$ with
$\fun{\schattencls}{\sphilbfrak{f}^{\otimes_{\txtsym}m}}$.
\end{thm}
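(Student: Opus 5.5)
The plan is to transport everything through the intertwining unitary $S_{\sphilbfrak{f}}$ and then read the statements off in the Fock space $\fun{\spfockgamma_{\txtbsn}}{\sphilbfrak{f}}$.

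\textbf{Step 1: the minimal ideal.} Since $\oarepn_{\txtfock,\sphilbfrak{f}}$ is regular and irreducible, it is faithful by \cite[Theorem 4.10]{BuchholzGrundling002}. The formula \eqref{eq:paper-minimal-ideal-representation} of Theorem \ref{thm:abstract-minimal-compact-ideal} therefore applies with $\sphilbfrak{H}_{\sphilbfrak{f}}=\fun{\spfockgamma_{\txtbsn}}{\sphilbfrak{f}}$. Equivalently, conjugation by $S_{\sphilbfrak{f}}$ carries $\fun{\schattencls}{\sphilbfrak{H}_{\sphilbfrak{f}}}$ onto $\fun{\schattencls}{\fun{\spfockgamma_{\txtbsn}}{\sphilbfrak{f}}}$. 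This gives the first line of \eqref{eq:paper-main-fock-intrinsic-block-identification}.

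\textbf{Step 2: the projections $e_{\sphilbfrak{f},m}$ and $Q_{\sphilbfrak{f},>n}$.} By construction, $S_{\sphilbfrak{f}}\fun{V_{\sphilbfrak{f}}}{\theta}\faadj{S_{\sphilbfrak{f}}}=\fun{\Gamma_{\sphilbfrak{f}}}{\theta}$. Its generator is the number operator, with spectrum $\monnat$ and minimum $0$, so the normalization is already satisfied. Hence $S_{\sphilbfrak{f}}\fun{\msrprb_{N_{\sphilbfrak{f}}}}{\setone{m}}\faadj{S_{\sphilbfrak{f}}}$ is the projection onto $\sphilbfrak{f}^{\otimes_{\txtsym}m}$. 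By Proposition \ref{prop:paper-particle-block-independence}, $e_{\sphilbfrak{f},m}$ does not depend on the chosen representation, so we may take $\pi_{\sphilbfrak{f}}=\oarepn_{\txtfock,\sphilbfrak{f}}$ and $S_{\sphilbfrak{f}}=1$ directly. Then $\fun{\oarepn_{\txtfock,\sphilbfrak{f}}}{e_{\sphilbfrak{f},m}}$ is the $m$-particle projection. Applying this to $Q_{\sphilbfrak{f},>n}=1-\sum_{m\le n}e_{\sphilbfrak{f},m}$ gives the projection onto $\bigoplus_{r>n}\sphilbfrak{f}^{\otimes_{\txtsym}r}$.

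\textbf{Step 3: the gauge-fixed part.} First I would check that $\fun{\oarepn_{\txtfock,\sphilbfrak{f}}}{\gamma_\theta(A)}=\fun{\Gamma_{\sphilbfrak{f}}}{\theta}\fun{\oarepn_{\txtfock,\sphilbfrak{f}}}{A}\faadj{\fun{\Gamma_{\sphilbfrak{f}}}{\theta}}$. This is the standard covariance $\fun{\Gamma_{\sphilbfrak{f}}}{\theta}\fun{\phi_{\txtfock,\sphilbfrak{f}}}{f}\faadj{\fun{\Gamma_{\sphilbfrak{f}}}{\theta}}=\fun{\phi_{\txtfock,\sphilbfrak{f}}}{\napiernum^{\imunit\theta}f}$ on generators, extended by continuity. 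By faithfulness, $K\in\oa{K}(\sphilbfrak{f})$ is gauge invariant exactly when its image is a compact operator commuting with every $\fun{\Gamma_{\sphilbfrak{f}}}{\theta}$. Such an operator commutes with the spectral projections of the number operator, so it is block diagonal, $\bigoplus_m K_m$. Compactness is equivalent to each $K_m$ being compact and $\norm{K_m}\to0$. Each $\sphilbfrak{f}^{\otimes_{\txtsym}m}$ is finite dimensional, so compactness of $K_m$ is automatic. Conversely, any such norm-null block sequence is a norm limit of finite-rank gauge-invariant operators and lies in the image. This gives the second line of \eqref{eq:paper-main-fock-intrinsic-block-identification}.

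\textbf{Step 4: the corners.} Compressing the second line by the $m$-particle projection shows that $\pi_{\sphilbfrak{f},m}$ maps $e_{\sphilbfrak{f},m}\oa{K}(\sphilbfrak{f})^{\gamma}e_{\sphilbfrak{f},m}$ onto $\opspbddlin{\sphilbfrak{f}^{\otimes_{\txtsym}m}}=\fun{\schattencls}{\sphilbfrak{f}^{\otimes_{\txtsym}m}}$. Injectivity follows from faithfulness.

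The only step requiring real care is the covariance identity in Step 3, namely verifying $\Gamma\phi(f)\Gamma^\ast=\phi(\napiernum^{\imunit\theta}f)$ on the finite-particle core and passing to resolvents. I would do this directly from \eqref{eq:paper-main-fock-annihilation-definition} via $\Gamma a(f)\Gamma^\ast=a(\napiernum^{\imunit\theta}f)$.
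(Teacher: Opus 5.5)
Your proposal is correct and follows essentially the same route as the paper: apply \eqref{eq:paper-minimal-ideal-representation} to the faithful regular irreducible Fock representation, use Proposition~\ref{prop:paper-particle-block-independence} to take $\pi_{\sphilbfrak{f}}=\oarepn_{\txtfock,\sphilbfrak{f}}$ and read off $e_{\sphilbfrak{f},m}$ as the $m$-particle projection, substitute into $Q_{\sphilbfrak{f},>n}$, and obtain the corner identification from $\pi_{\sphilbfrak{f},m}$. The only difference is that you verify explicitly two facts the paper takes from Proposition~\ref{prop:paper-particle-block-independence}: the covariance $\Gamma_{\sphilbfrak{f}}(\theta)\,\phi_{\txtfock,\sphilbfrak{f}}(f)\,\Gamma_{\sphilbfrak{f}}(\theta)^{\ast}=\phi_{\txtfock,\sphilbfrak{f}}(\napiernum^{\imunit\theta}f)$, and the block-diagonal, norm-null description of the gauge-fixed compact operators.
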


\begin{proof}
The regular Fock representation $\oarepn_{\txtfock,\sphilbfrak{f}}$ is faithful, regular, and irreducible
\cite[Corollary 4.4]{BuchholzGrundling002}.
Equation \eqref{eq:paper-minimal-ideal-representation} in Theorem \ref{thm:abstract-minimal-compact-ideal}
therefore gives the first identity in \eqref{eq:paper-main-fock-intrinsic-block-identification}.
Proposition \ref{prop:paper-particle-block-independence} identifies the Fock image of $e_{\sphilbfrak{f},m}$
with the projection onto $\sphilbfrak{f}^{\otimes_{\txtsym}m}$ and gives the gauge-fixed block decomposition,
which is the second identity.
Substitution of these projections into \eqref{eq:abstract-high-particle-projection} gives the Fock image of
$Q_{\sphilbfrak{f},>n}$.
The block identification is the $\ast$-isomorphism $\pi_{\sphilbfrak{f},m}$ in the preceding paragraph.
\end{proof}

Proposition \ref{prop:paper-particle-block-independence} resolves the representation choice but does not by itself fix the additive normalization of the generator \(N_{\sphilbfrak{f}}\). The condition \(\min\opvarspec{N_{\sphilbfrak{f}}}=0\) and the use of \(\fun{\msrprb_{N_{\sphilbfrak{f}}}}{\setone{m}}\) to construct \(e_{\sphilbfrak{f},m}\) fix the vacuum normalization and the nonnegative spectral labeling. The abstract Buchholz algebra in Definition \ref{def:paper-abstract-buchholz-algebra} is therefore canonical relative to these conditions and the projections \(e_{\sphilbfrak{f},m}\) of Proposition \ref{prop:paper-particle-block-independence}.

\subsection{Comparison with the concrete exact-particle algebras}\label{comparison-with-the-concrete-exact-particle-algebras}

In this subsection, we identify the abstract particle-cutoff quotients and connecting maps with their concrete exact-particle Fock counterparts.

Use the global regular Fock representation in \eqref{eq:paper-main-regular-fock-representation} and its exact-particle restriction \(\rho_n\) in \eqref{eq:paper-main-exact-particle-representation}.

\begin{thm}[Comparison of abstract and concrete particle-cutoff quotients]\label{thm:paper-abstract-concrete-comparison}
For the projections $e_{\sphilbfrak{f},m}$ constructed in Proposition
\ref{prop:paper-particle-block-independence} as inverse images of
$\fun{\msrprb_{N_{\sphilbfrak{f}}}}{\setone{m}}$ and every $n\in\monnat$, the cutoff ideal and the induced map satisfy
\begin{equation}
\begin{aligned}
\oaideal{I}_{>n}(\sphilbfrak{h})
&=
\Ker\rho_n,
\\ 
\varpi_n
\colon
\oa{B}_n(\sphilbfrak{h})
&\to
\fun{\rho_n}{\oaresolventalgebra(\sphilbfrak{h},\sigma)^{\gamma}};
\quad
\fun{\varpi_n}{A+\oaideal{I}_{>n}(\sphilbfrak{h})}
=
\fun{\rho_n}{A}.
\end{aligned}
\label{eq:paper-main-fock-cutoff-identification}
\end{equation}
The map $\varpi_n$ is an isometric $\ast$-isomorphism onto the concrete exact-particle Buchholz algebra
$\fun{\rho_n}{\oaresolventalgebra(\sphilbfrak{h},\sigma)^\gamma}$.
For $n\geq1$, the connecting map \eqref{eq:paper-main-fock-connecting-map} satisfies
$\kappa_n=\varpi_{n-1}\circ r_n\circ\varpi_n^{-1}$, and the bounded inverse limits are isometrically
$\ast$-isomorphic.
\end{thm}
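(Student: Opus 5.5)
The heart of the theorem is the ideal identity \(\oaideal{I}_{>n}(\sphilbfrak{h})=\Ker\rho_n\). I will prove the two inclusions separately. Everything else then follows formally from quotient theory and from the literature input \eqref{eq:paper-literature-fock-input}.

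\emph{Inclusion \(\oaideal{I}_{>n}(\sphilbfrak{h})\subset\Ker\rho_n\).} Since \(\Ker\rho_n\) is a closed two-sided ideal of \(\oaresolventalgebra(\sphilbfrak{h},\sigma)^{\gamma}\), it suffices to show \(\fun{\rho_n}{\fun{\iota_{\sphilbfrak{f}}}{Q_{\sphilbfrak{f},>n}}}=0\) for every nonzero finite-dimensional \(\sphilbfrak{f}\). For this I would use the factorization \(U_{\sphilbfrak{f}}\) in \eqref{eq:paper-main-fock-decomposition-unitary}. Under \(U_{\sphilbfrak{f}}\), the Fock representation restricted to \(\fun{\iota_{\sphilbfrak{f}}}{\oaresolventalgebra(\sphilbfrak{f})}\) becomes \(\oarepn_{\txtfock,\sphilbfrak{f}}\otimes 1\). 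This holds because, for \(f\in\sphilbfrak{f}\), the field \(\fun{\phi_{\txtfock}}{f}\) factors as \(\fun{\phi_{\txtfock,\sphilbfrak{f}}}{f}\otimes1\), so the resolvents factor as well. By Theorem \ref{thm:paper-fock-intrinsic-block-realization}, \(\oarepn_{\txtfock,\sphilbfrak{f}}(Q_{\sphilbfrak{f},>n})\) is the projection onto \(\bigoplus_{r>n}\sphilbfrak{f}^{\otimes_{\txtsym}r}\). After tensoring, it projects onto the summands \(\sphilbfrak{f}^{\otimes_{\txtsym}r}\otimes(\sphilbfrak{f}^\perp)^{\otimes_{\txtsym}s}\) with \(r>n\). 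These summands meet the total \(n\)-particle space \(\bigoplus_{k\le n}\sphilbfrak{f}^{\otimes_{\txtsym}k}\otimes(\sphilbfrak{f}^\perp)^{\otimes_{\txtsym}(n-k)}\) trivially, so the restriction vanishes.

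\emph{Inclusion \(\Ker\rho_n\subset\oaideal{I}_{>n}(\sphilbfrak{h})\).} This is the main obstacle. My plan is to show that \(\ker\) of the quotient representation is trivial: construct a faithful representation of \(\oa{B}_n(\sphilbfrak{h})\) that factors through \(\rho_n\). Fix \(A\in\Ker\rho_n\) and an increasing sequence \(\sphilbfrak{f}_j\nearrow\sphilbfrak{h}\), and set \(P_j=\sum_{m\le n}e_{\sphilbfrak{f}_j,m}=1-Q_{\sphilbfrak{f}_j,>n}\). Then \(A\equiv P_jAP_j\) modulo \(\oaideal{I}_{>n}(\sphilbfrak{h})\), since \(A-P_jAP_j=Q_jA+P_jAQ_j\) lies in the ideal.

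It therefore suffices to show \(\norm{P_jAP_j+\oaideal{I}_{>n}}\to0\). I would first reduce to \(A\) in the dense \(\ast\)-algebra generated by gauge-invariant polynomials in resolvents. For such \(A\), I would try to show that \(P_jAP_j\), modulo the ideal, is approximated by elements whose norm is controlled by \(\norm{\fun{\rho_n}{A}}\) together with the \(\rho_k\) for \(k\le n\). By \cite[Lemma 3.4]{DetlevBuchholz002} these \(\rho_k\) also vanish on \(A\), since \(\Ker\rho_n\subset\Ker\rho_{k}\).

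The cleanest route I would attempt is the following. Every irreducible representation \(\pi\) of \(\oaresolventalgebra(\sphilbfrak{h},\sigma)^{\gamma}\) that annihilates \(\oaideal{I}_{>n}\) must factor through \(\bigoplus_{k\le n}\rho_k\). To see this, note that such a \(\pi\) kills all \(Q_{\sphilbfrak{f},>n}\). Its restriction to each finite-mode algebra therefore lives on the range of \(P_{\sphilbfrak{f}}\), where the gauge-fixed finite-mode algebra acts through finitely many full matrix blocks. Hence \(\pi\) is "locally Fock with at most \(n\) particles". Using the known structure \eqref{eq:paper-literature-fock-input}, such a representation is equivalent to a subrepresentation of some \(\rho_k\) with \(k\le n\). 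Given this, \(\norm{A+\oaideal{I}_{>n}}=\sup_\pi\norm{\pi(A)}=0\).

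\emph{Formal consequences.} Once the identity holds, \(\varpi_n\) is well-defined and injective. An injective \(\ast\)-homomorphism between \(\oacstar\)-algebras is isometric, and surjectivity onto \(\fun{\rho_n}{\oaresolventalgebra(\sphilbfrak{h},\sigma)^\gamma}\) holds by construction. The identity \(\kappa_n\circ\varpi_n=\varpi_{n-1}\circ r_n\) is checked on cosets: both sides send \(A+\oaideal{I}_{>n}\) to \(\fun{\rho_{n-1}}{A}\). Finally, the coordinatewise map \((A_n)_n\mapsto(\varpi_n(A_n))_n\) intertwines \(r_n\) with \(\kappa_n\) and preserves all coordinate norms. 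It is therefore an isometric \(\ast\)-isomorphism from \(\fun{\oa{B}}{\sphilbfrak{h}}\) onto the concrete bounded inverse limit of \cite[Theorem 3.5]{DetlevBuchholz002}.
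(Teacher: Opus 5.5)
\textbf{Verdict: the reverse inclusion has a genuine gap.} Your inclusion $\mathcal{I}_{>n}(\mathfrak{h})\subset\Ker\rho_n$, via the factorization $\pi_{\mathrm{F}}\circ\iota_{\mathfrak{f}}\simeq\pi_{\mathrm{F},\mathfrak{f}}\otimes 1$, is correct, and so are the formal consequences: well-definedness and isometry of $\varpi_n$, the coset check of $\kappa_n\circ\varpi_n=\varpi_{n-1}\circ r_n$, and the coordinatewise identification of the inverse limits. The reverse inclusion $\Ker\rho_n\subset\mathcal{I}_{>n}(\mathfrak{h})$ is not proved.

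\emph{The reductions.} The congruence $A\equiv P_jAP_j$ is true but vacuous, since both sides define the same class. Reducing to a dense subalgebra is legitimate only once you have the bound $\|C+\mathcal{I}_{>n}(\mathfrak{h})\|\le\|\rho_n(C)\|$ on the dense set, because the approximants of $A$ need not lie in $\Ker\rho_n$. That bound is exactly what you leave as something you ``would try to show''. Moreover, the set you name (gauge-invariant polynomials in resolvents) is not known to be dense. The gauge action is not norm continuous on resolvents, so polynomials cannot simply be averaged.

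\emph{The ``cleanest route''.} It rests on the unproved claim that a representation which is locally Fock with at most $n$ particles is equivalent to some $\rho_k$. Equation \eqref{eq:paper-literature-fock-input} describes the concrete algebra $\rho_n(\mathcal{R}(\mathfrak{h},\sigma)^{\gamma})$. You can apply it to $\pi$ only after you know $\pi$ factors through $\rho_n$, which is what is to be proved. Passing from local to global information is the missing step.

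\emph{How the paper closes it.} The paper uses two ingredients: the density of $\bigcup_{\mathfrak{f}}\mathcal{R}(\mathfrak{f})^{\gamma}$ in $\mathcal{R}(\mathfrak{h},\sigma)^{\gamma}$ (the statement following \cite[Lemma 3.1]{DetlevBuchholz002}), and a finite-mode norm identity. Take $A\in\Ker\rho_n$, $\varepsilon>0$, and $C\in\mathcal{R}(\mathfrak{f})^{\gamma}$ with $\|A-C\|<\varepsilon$. Set $C_{\le n}=(\sum_{m=0}^{n}e_{\mathfrak{f},m})C$. Then $C-C_{\le n}=Q_{\mathfrak{f},>n}C\in\mathcal{I}_{>n}(\mathfrak{h})$, and
\[
\|C_{\le n}\|=\max_{0\le m\le n}\|\fnrestr{\pi_{\mathrm{F},\mathfrak{f}}(C)}{\mathfrak{f}^{\otimes_{\mathrm{s}}m}}\|=\|\rho_n(C)\|=\|\rho_n(C-A)\|\le\|C-A\|<\varepsilon .
\]
The first equality uses faithfulness of $\pi_{\mathrm{F},\mathfrak{f}}$ and gauge invariance of $C$. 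The second uses $\mathfrak{h}^{\otimes_{\mathrm{s}}n}\simeq\bigoplus_{m=0}^{n}\mathfrak{f}^{\otimes_{\mathrm{s}}m}\otimes(\mathfrak{f}^{\perp})^{\otimes_{\mathrm{s}}(n-m)}$, in which no summand vanishes because $\dim\mathfrak{h}=\infty$. Hence $\operatorname{dist}(A,\mathcal{I}_{>n}(\mathfrak{h}))\le\|A-C\|+\|C_{\le n}\|<2\varepsilon$, so $A\in\mathcal{I}_{>n}(\mathfrak{h})$.

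Your own local observation already gives $\|\pi(C)\|\le\max_{m\le n}\|e_{\mathfrak{f},m}Ce_{\mathfrak{f},m}\|$ for every representation $\pi$ annihilating $\mathcal{I}_{>n}(\mathfrak{h})$. Identifying the right-hand side with $\|\rho_n(C)\|$ and invoking this density would close your argument without any classification of representations.
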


\begin{proof}
The concrete range of $\rho_n$, the deletion map $\kappa_n$, and the bounded inverse limit are already
\eqref{eq:paper-literature-fock-input} and
\cite[Theorem 3.5]{DetlevBuchholz002}.
Thus only the equality between the new intrinsic cutoff ideal and the
kernel of the known exact-particle representation requires proof.

The comparison representation sends $e_{\sphilbfrak{f},m}$ to the local $m$-particle projection.
Every generator $Q_{\sphilbfrak{f},>n}$ of $\oaideal{I}_{>n}(\sphilbfrak{h})$ is annihilated in the exact
$n$-particle representation, and
$$\begin{aligned}
\oaideal{I}_{>n}(\sphilbfrak{h})
&\subset
\Ker\rho_n.
\end{aligned}$$

For the reverse inclusion, take $A\in\Ker\rho_n$ and $\varepsilon>0$.
The norm-density statement following \cite[Lemma 3.1]{DetlevBuchholz002}
gives a finite-dimensional $\sphilbfrak{f}\subset\sphilbfrak{h}$ and
$C\in\oaresolventalgebra(\sphilbfrak{f})^\gamma$ such that $\norm{A-C}<\varepsilon$.
Define
$$\begin{aligned}
C_{\leq n}
&=
\rbk{\sum_{m=0}^{n}e_{\sphilbfrak{f},m}}C.
\end{aligned}$$
The decomposition
$$\begin{aligned}
\sphilbfrak{h}^{\otimes_{\txtsym}n}
&\simeq
\bigoplus_{m=0}^{n}
\sphilbfrak{f}^{\otimes_{\txtsym}m}
\otimes
\rbk{\sphilbfrak{f}^\perp}^{\otimes_{\txtsym}\rbk{n-m}}
\end{aligned}$$
has no zero summand because $\sphilbfrak{f}$ is finite-dimensional and $\sphilbfrak{h}$ is infinite-dimensional.
The block norms therefore satisfy
$$\begin{aligned}
\norm{C_{\leq n}}
=
\max_{0\leq m\leq n}
\norm{\fnrestr{\fun{\oarepn_{\txtfock}}{C}}{\sphilbfrak{f}^{\otimes_{\txtsym}m}}}
=
\norm{\fun{\rho_n}{C}}
=
\norm{\fun{\rho_n}{C-A}}
\leq
\norm{C-A}
<
\varepsilon.
\end{aligned}$$
The complementary part satisfies
$$\begin{aligned}
C-C_{\leq n}
&=
Q_{\sphilbfrak{f},>n}C
\in
\oaideal{I}_{>n}(\sphilbfrak{h}).
\end{aligned}$$
The two estimates give
$$\begin{aligned}
\fun{\topdist}{A,\oaideal{I}_{>n}(\sphilbfrak{h})}
\leq
\norm{A-\rbk{C-C_{\leq n}}}
\leq
\norm{A-C}+\norm{C_{\leq n}}
<
2\varepsilon.
\end{aligned}$$
Closedness of the ideal and arbitrariness of $\varepsilon$ give
$A\in\oaideal{I}_{>n}(\sphilbfrak{h})$.
The equality $\Ker\rho_n=\oaideal{I}_{>n}(\sphilbfrak{h})$ makes the map $\varpi_n$ in
\eqref{eq:paper-main-fock-cutoff-identification} a well-defined injective
$\ast$-homomorphism from $\oa{B}_n(\sphilbfrak{h})$ onto the corresponding concrete exact-$n$-particle algebra.
It is therefore bijective.
Both $\varpi_n$ in \eqref{eq:paper-main-fock-cutoff-identification} and its
algebraic inverse are $\ast$-homomorphisms and hence
contractive, so
$$\begin{aligned}
\norm{\fun{\varpi_n}{A_n}}
&\leq
\norm{A_n}
\leq
\norm{\fun{\varpi_n}{A_n}}
\end{aligned}$$
for every $A_n\in\oa{B}_n(\sphilbfrak{h})$.
Under these quotient isomorphisms, $r_n$ becomes the already known $\kappa_n$ in
\eqref{eq:paper-literature-fock-input}.
The bounded inverse limits and their supremum norms therefore agree by
\cite[Theorem 3.5]{DetlevBuchholz002}.
\end{proof}

\begin{cor}[Faithful quotient-coordinate embedding]\label{cor:paper-faithful-quotient-coordinate-embedding}
For the projections $e_{\sphilbfrak{f},m}$ constructed in Proposition
\ref{prop:paper-particle-block-independence}, the quotient-coordinate map $\rho$ in Proposition
\ref{prop:bounded-inverse-limit-cstar} is an injective unital $\ast$-homomorphism.
It is therefore an isometric $\ast$-embedding.
\end{cor}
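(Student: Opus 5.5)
The plan is to combine the abstract faithfulness criterion of Remark \ref{rem:paper-abstract-faithfulness-criterion} with the kernel identification of Theorem \ref{thm:paper-abstract-concrete-comparison}. Proposition \ref{prop:bounded-inverse-limit-cstar} already shows that $\rho$ is a unital $\ast$-homomorphism. It also gives $\Ker\rho=\bigcap_{n\in\monnat}\oaideal{I}_{>n}(\sphilbfrak{h})$, so only injectivity needs proof.

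First, I will substitute $\oaideal{I}_{>n}(\sphilbfrak{h})=\Ker\rho_n$ from \eqref{eq:paper-main-fock-cutoff-identification} into \eqref{eq:paper-quotient-coordinate-kernel}. This gives $\Ker\rho=\bigcap_{n}\Ker\rho_n=\Ker\rbk{\bigoplus_{n}\rho_n}$, exactly as in \eqref{eq:paper-abstract-faithfulness-criterion}.

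Second, I will identify $\bigoplus_n\rho_n$ with the restriction of $\oarepn_{\txtfock}$ to $\oaresolventalgebra(\sphilbfrak{h},\sigma)^{\gamma}$. Each gauge-invariant element commutes with the second-quantized gauge unitaries. Hence its Fock image leaves every sector $\sphilbfrak{h}^{\otimes_{\txtsym}n}$ invariant, and the Fock space is the orthogonal sum of these sectors. The representation $\oarepn_{\txtfock}$ is regular, and every regular representation of the resolvent algebra is faithful \cite[Theorem 4.10]{BuchholzGrundling002}. Its restriction to the fixed-point subalgebra is therefore injective, which gives $\Ker\rho=0$.

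Finally, an injective $\ast$-homomorphism between $\oacstar$-algebras is isometric. This is the standard spectral-radius argument applied to $\faadj{A}A$, after noting that an injective $\ast$-homomorphism has closed range and its inverse is a $\ast$-homomorphism. That yields the isometric $\ast$-embedding.

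I expect the main obstacle to be the second step. The care needed is in checking that the sector decomposition of the restricted Fock representation is exact, rather than merely an inclusion of kernels, since $\Ker\rho_n$ is defined through restriction to an invariant sector. The argument also relies on faithfulness coming from regularity and not from the decomposition itself, as Remark \ref{rem:paper-abstract-faithfulness-criterion} emphasizes.
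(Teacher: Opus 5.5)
Your proposal is correct and is essentially the paper's own proof. You substitute $\oaideal{I}_{>n}(\sphilbfrak{h})=\Ker\rho_n$ from Theorem \ref{thm:paper-abstract-concrete-comparison} into \eqref{eq:paper-quotient-coordinate-kernel}, identify $\bigoplus_n\rho_n$ with the restricted Fock representation (faithful by regularity, not by the sector decomposition), and finish with the standard fact that injective $\ast$-homomorphisms of $\oacstar$-algebras are isometric.

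One small point about that last fact: closed range is usually deduced from isometry rather than used to prove it, so the cleaner justification is the functional-calculus argument. If $\|\phi(b)\|<\|b\|$ for $b=a^{*}a$, take a continuous $g$ vanishing on $[0,\|\phi(b)\|]$ with $g(\|b\|)=1$; then $\phi(g(b))=g(\phi(b))=0$ while $g(b)\neq0$, contradicting injectivity.
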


\begin{proof}
Theorem \ref{thm:paper-abstract-concrete-comparison} identifies
$\oaideal{I}_{>n}(\sphilbfrak{h})$ with $\Ker\rho_n$ for every $n\in\monnat$.
The restricted Fock representation is the faithful direct sum of these exact-particle representations, as
explained in Remark \ref{rem:paper-abstract-faithfulness-criterion}.
Equation \eqref{eq:paper-abstract-faithfulness-criterion} gives $\Ker\rho=0$.
An injective $\ast$-homomorphism between $\oacstar$-algebras is isometric.
\end{proof}

\begin{cor}[Fock realization of the abstract Buchholz algebra]\label{cor:paper-fock-cutoff-realization}
Denote also by $\oarepn_{\txtfock}$ the map
$$\begin{aligned}
\oarepn_{\txtfock}
&\colon
\fun{\oa{B}}{\sphilbfrak{h}}
\to
\opspbddlin{\fun{\spfockgamma_{\txtbsn}}{\sphilbfrak{h}}};
&
\fun{\oarepn_{\txtfock}}{A}
&=
\bigoplus_{n\in\monnat}\fun{\varpi_n\circ p_n}{A}.
\end{aligned}$$
It is a faithful representation of $\fun{\oa{B}}{\sphilbfrak{h}}$, and its image is
\begin{equation}
\begin{aligned}
\fun{\oarepn_{\txtfock}}{\fun{\oa{B}}{\sphilbfrak{h}}}
&=
\set{\bigoplus_{n\in\monnat}T_n}
{\parbox{23em}{
$T_n\in\fun{\varpi_n}{\oa{B}_n(\sphilbfrak{h})}\rbk{n\in\monnat}$,
$\fun{\kappa_n}{T_n}=T_{n-1}\rbk{n\in\semigrposint}$,
$\sup_{n\in\monnat}\norm{T_n}<\infty$.}}.
\end{aligned}
\label{eq:paper-main-fock-bounded-inverse-limit-identification}
\end{equation}
For $A\in\fun{\oa{B}}{\sphilbfrak{h}}$ and $n\in\monnat$, its $n$-particle restriction is
\begin{equation}
\begin{aligned}
\fnrestr{\fun{\oarepn_{\txtfock}}{A}}{\sphilbfrak{h}^{\otimes_{\txtsym}n}}
&=
\fun{\varpi_n\circ p_n}{A}.
\end{aligned}
\label{eq:paper-main-fock-coordinate-identification}
\end{equation}
\end{cor}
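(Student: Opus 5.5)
The plan is to transport the bounded inverse limit through the quotient isomorphisms $\varpi_n$ of Theorem \ref{thm:paper-abstract-concrete-comparison}. I will verify that the resulting direct sum is a well-defined bounded operator, that $\pi_{\txtfock}$ is a faithful representation, and that its image is the concrete compatible-sequence algebra.

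First, well-definedness and the coordinate formula. For $A=\seq{A_n}{n\in\monnat}\in\fun{\oa{B}}{\sphilbfrak{h}}$, each $\fun{\varpi_n}{A_n}$ is an operator on $\sphilbfrak{h}^{\otimes_{\txtsym}n}$. Since $\varpi_n$ is isometric,
\[
\norm{\fun{\varpi_n}{A_n}}=\norm{A_n}\le\norm{A}_{\fun{\oa{B}}{\sphilbfrak{h}}}.
\]
Hence the direct sum over the orthogonal decomposition $\fun{\spfockgamma_{\txtbsn}}{\sphilbfrak{h}}=\bigoplus_{n}\sphilbfrak{h}^{\otimes_{\txtsym}n}$ is a bounded operator with norm $\sup_n\norm{A_n}$. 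The restriction formula \eqref{eq:paper-main-fock-coordinate-identification} then holds by the definition of a block-diagonal direct sum.

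Second, the representation property. Products, adjoints, linear combinations and the unit in $\fun{\oa{B}}{\sphilbfrak{h}}$ are coordinatewise, and each $\varpi_n\circ p_n$ is a unital $\ast$-homomorphism. Block-diagonal operators multiply blockwise, so $\pi_{\txtfock}$ is a unital $\ast$-homomorphism. The norm computation above gives $\norm{\fun{\pi_{\txtfock}}{A}}=\sup_n\norm{A_n}=\norm{A}$. Thus $\pi_{\txtfock}$ is isometric and in particular faithful.

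Third, the image. If $A\in\fun{\oa{B}}{\sphilbfrak{h}}$, set $T_n=\fun{\varpi_n}{A_n}$. The relation $\kappa_n=\varpi_{n-1}\circ r_n\circ\varpi_n^{-1}$ from Theorem \ref{thm:paper-abstract-concrete-comparison} gives
\[
\fun{\kappa_n}{T_n}=\fun{\varpi_{n-1}}{\fun{r_n}{A_n}}=T_{n-1},
\]
and the sequence $\seq{T_n}{n\in\monnat}$ is bounded. Conversely, take $\bigoplus_nT_n$ with $T_n\in\fun{\varpi_n}{\oa{B}_n(\sphilbfrak{h})}$, compatible under $\kappa_n$ and bounded. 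Set $A_n=\varpi_n^{-1}(T_n)$. The same intertwining relation gives $\fun{r_n}{A_n}=A_{n-1}$, and isometry gives $\sup_n\norm{A_n}<\infty$. So $A\in\fun{\oa{B}}{\sphilbfrak{h}}$ and $\fun{\pi_{\txtfock}}{A}=\bigoplus_nT_n$, which proves \eqref{eq:paper-main-fock-bounded-inverse-limit-identification}.

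No step is a real obstacle. The only point needing care is that the intertwining $\kappa_n\circ\varpi_n=\varpi_{n-1}\circ r_n$ holds on all of $\oa{B}_n(\sphilbfrak{h})$, not merely on classes of single elements. This is exactly what Theorem \ref{thm:paper-abstract-concrete-comparison} provides, because $\varpi_n$ is surjective and both sides are computed from a representative $A$ via $\fun{\rho_{n-1}}{A}$.
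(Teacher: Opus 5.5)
Your proposal is correct and follows essentially the same route as the paper: boundedness from the supremum bound, faithfulness from the isometry of each $\varpi_n$, and the image identification from the intertwining $\kappa_n\circ\varpi_n=\varpi_{n-1}\circ r_n$ of Theorem \ref{thm:paper-abstract-concrete-comparison}, with the restriction formula holding by the definition of the direct sum. You also spell out the unital $\ast$-homomorphism property and the isometry $\norm{\fun{\oarepn_{\txtfock}}{A}}=\norm{A}$ explicitly; the paper leaves these implicit.
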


\begin{proof}
The supremum bound in \eqref{eq:paper-main-cutoff-inverse-limit} makes the direct sum bounded.
Theorem \ref{thm:paper-abstract-concrete-comparison} shows that every $\varpi_n$ is an isometric
$\ast$-isomorphism, which makes the representation faithful.
The same theorem gives $\kappa_n\circ\varpi_n=\varpi_{n-1}\circ r_n$.
A bounded family $\seq{A_n}{n\in\monnat}$ with $A_n\in\oa{B}_n(\sphilbfrak{h})$ therefore satisfies
$\fun{r_n}{A_n}=A_{n-1}$ for every $n\geq1$ exactly when $T_n=\fun{\varpi_n}{A_n}$ satisfies
$\fun{\kappa_n}{T_n}=T_{n-1}$, which proves \eqref{eq:paper-main-fock-bounded-inverse-limit-identification}.
The restriction formula \eqref{eq:paper-main-fock-coordinate-identification} is the definition of the direct sum.
\end{proof}

\section{Particle coefficients in the standard Fock realization}\label{sec:concrete-particle-coefficient-realization}

In this section, the Fock representation identifies the abstract particle-degree coefficients and the finite-mode occupation projections. The same realization supplies the concrete examples used to analyze the quasilocal subalgebra and its quotient.

The regular Fock representation of the resolvent algebra is defined in \eqref{eq:paper-main-regular-fock-representation}; its faithful extension to the abstract Buchholz algebra is defined in Corollary \ref{cor:paper-fock-cutoff-realization} by the coordinate formula \eqref{eq:paper-main-fock-coordinate-identification}.

\subsection{Particle-degree coefficients in the comparison representation}\label{particle-degree-coefficients-in-the-comparison-representation}

In this subsection, subset-sum embeddings realize the abstract particle-degree coefficients in the concrete Fock representation.

Use the tensor-factor insertion \(K^{[S]}\) and subset-sum map \(\widetilde{E}_{m,n}\) defined in \eqref{eq:paper-main-tensor-factor-insertion} and \eqref{eq:paper-main-subset-sum-embedding}.

\begin{ex}[Action of the tensor-factor permutation]
For $n=5$, $m=2$, and $S=\setone{2,5}$, the complement is $\setone{1,3,4}$ and
$$\begin{aligned}
\fun{U_{\setone{2,5}}}{\xi_1\otimes\xi_2\otimes\xi_3\otimes\xi_4\otimes\xi_5}
&=
\xi_2\otimes\xi_5\otimes\xi_1\otimes\xi_3\otimes\xi_4.
\end{aligned}$$
The operator acting on the second and fifth tensor factors is consequently
$$\begin{aligned}
K^{[\setone{2,5}]}
&=
\faadj{U_{\setone{2,5}}}
\rbk{\widetilde{K}\otimes1_{\sphilbfrak{h}^{\otimes3}}}
U_{\setone{2,5}}.
\end{aligned}$$
\end{ex}

Let \(p,q,r\in\monnat\) satisfy \(\max\setone{p,q}\leq r\leq p+q\), and let \(K\in\fun{\schattencls}{\sphilbfrak{h}^{\otimes_{\txtsym}p}}\) and \(L\in\fun{\schattencls}{\sphilbfrak{h}^{\otimes_{\txtsym}q}}\). Use the symmetrizing projection and subset-union map defined in \eqref{eq:paper-main-symmetrizing-projection} and \eqref{eq:paper-main-subset-union-overlap-map}.

\begin{prop}[Overlap product]\label{prop:paper-overlap-product}
For $n\geq\max\setone{p,q}$,
\begin{equation}
\begin{aligned}
\fun{\widetilde{E}_{p,n}}{K}\fun{\widetilde{E}_{q,n}}{L}
&=
\sum_{r=\max\setone{p,q}}^{\min\setone{n,p+q}}
\fun{\widetilde{E}_{r,n}\circ\widetilde{\mu}_{p,q}^{r}}{K,L}.
\end{aligned}
\label{eq:paper-exact-overlap-product}
\end{equation}
The overlap map also satisfies the norm bound
\begin{equation}
\begin{aligned}
\norm{\fun{\widetilde{\mu}_{p,q}^{r}}{K,L}}
&\leq
\frac{r!}{\rbk{r-p}!\rbk{r-q}!\rbk{p+q-r}!}
\norm{K}\norm{L}.
\end{aligned}
\label{eq:paper-overlap-product-bound}
\end{equation}
\end{prop}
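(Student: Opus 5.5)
The plan is a direct combinatorial computation on the unsymmetrized space $\sphilbfrak{h}^{\otimes n}$, followed by restriction to $\sphilbfrak{h}^{\otimes_{\txtsym}n}$. The key observation concerns the lifted operators $K^{[S]}$ from \eqref{eq:paper-main-tensor-factor-insertion}. Since $K^{[S]}$ is built from $\widetilde{K}$, it commutes with the permutations of $\setone{1,\ldots,n}$ that preserve $S$. More importantly, the sum $\sum_{\abscard{S}=m}K^{[S]}$ commutes with every permutation unitary. Hence it commutes with $P_n$ and leaves $\sphilbfrak{h}^{\otimes_{\txtsym}n}$ invariant. This gives
$$\fun{\widetilde{E}_{p,n}}{K}\fun{\widetilde{E}_{q,n}}{L}
=\fnrestr{\sum_{\abscard{A}=p,\ \abscard{B}=q}K^{[A]}L^{[B]}}{\sphilbfrak{h}^{\otimes_{\txtsym}n}},$$
with $A,B\subset\setone{1,\ldots,n}$.

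Next, I group the pairs $(A,B)$ by the union $U=A\cup B$, whose size $r$ satisfies $\max\setone{p,q}\leq r\leq\min\setone{n,p+q}$. For fixed $U$, the operator $\sum_{A\cup B=U}K^{[A]}L^{[B]}$ acts only on the factors in $U$, as the identity on the complement. After transporting $U$ to $\setone{1,\ldots,r}$, it has the form $X^{[U]}$, where
$$X=\sum_{A\cup B=\setone{1,\ldots,r}}K^{[A]}L^{[B]}\in\opspbddlin{\sphilbfrak{h}^{\otimes r}}.$$
This $X$ commutes with the permutations of $\setone{1,\ldots,r}$, because the summation set is permutation invariant.

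The main obstacle is to justify replacing $X$ by $\widetilde{\mu}_{p,q}^{r}(K,L)=P_rXP_r$, that is, by the zero extension of its symmetric compression. The difficulty is that $X^{[U]}$ is generally not equal to $(P_rXP_r)^{[U]}$ on all of $\sphilbfrak{h}^{\otimes n}$. However, on the symmetric subspace, $X^{[U]}$ only sees vectors that are symmetric in the factors of $U$. I plan to verify that for symmetric $\Psi$ one has $(1_{U}\text{-symmetrizer})\Psi=\Psi$, where the partial symmetrizer $P_r^{[U]}$ commutes with $X^{[U]}$. It then follows that $X^{[U]}\Psi=X^{[U]}P_r^{[U]}\Psi=(P_rXP_r)^{[U]}\Psi$. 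Summing over $\abscard{U}=r$ then gives $\fun{\widetilde{E}_{r,n}}{\widetilde{\mu}_{p,q}^{r}(K,L)}$, which is \eqref{eq:paper-exact-overlap-product}. Along the way I also need compactness of $\widetilde{\mu}_{p,q}^{r}(K,L)$. This holds because each $K^{[A]}L^{[B]}$ with $A\cup B=\setone{1,\ldots,r}$ is a product of two operators whose supports cover all factors. Approximating $K$ and $L$ by finite-rank operators then shows the product is a norm limit of finite-rank operators.

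For the norm bound \eqref{eq:paper-overlap-product-bound}, I count the pairs $(A,B)$ with $A\cup B=\setone{1,\ldots,r}$, $\abscard{A}=p$ and $\abscard{B}=q$. Such a pair is determined by splitting $\setone{1,\ldots,r}$ into three blocks: $A\setminus B$ of size $r-q$, $B\setminus A$ of size $r-p$, and $A\cap B$ of size $p+q-r$. The number of such splittings is the multinomial coefficient $r!/((r-p)!(r-q)!(p+q-r)!)$. Each term satisfies $\norm{K^{[A]}L^{[B]}}\leq\norm{K}\norm{L}$, since $U_S$ is unitary and $\norm{\widetilde{K}}=\norm{K}$. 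Finally, $P_r$ is a contraction, so the triangle inequality gives the bound.
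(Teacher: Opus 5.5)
Your route matches the paper's. You expand the product of the two subset sums, group the pairs $(A,B)$ by $U=A\cup B$, identify the inner sum with the insertion of $\widetilde{\mu}_{p,q}^{r}(K,L)$ into the factors of $U$, and count ordered pairs via the blocks $A\setminus B$, $B\setminus A$, $A\cap B$. Your handling of the symmetric subspace is more careful than the paper's. The paper writes $\widetilde{E}_{p,n}(K)\widetilde{E}_{q,n}(L)=\sum_{A,B}K^{[A]}L^{[B]}$ and asserts that the inner sum over $A\cup B=U$ \emph{is} the insertion of $\widetilde{\mu}_{p,q}^{r}(K,L)$. As you observe, this holds only after restriction to $\mathfrak{h}^{\otimes_{\mathrm{s}}n}$. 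Your permutation invariance of the subset sums, together with $X^{[U]}\Psi=(XP_r)^{[U]}\Psi=(P_rXP_r)^{[U]}\Psi$ for symmetric $\Psi$, supplies exactly the missing justification.

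The one step that fails as written is the compactness sketch. For arbitrary finite-rank $K_j,L_j$, the product $K_j^{[A]}L_j^{[B]}$ need not have finite rank when $A\setminus B$, $A\cap B$ and $B\setminus A$ are all nonempty. The reason is that the overlap factors are contracted, not annihilated. For example, take $p=q=2$, $r=3$, $A=\{1,2\}$, $B=\{2,3\}$, $K=P_{\xi,\eta}$, $L=P_{\zeta,\theta}$, and $\eta=\zeta=\sum_k c_k\,e_k\otimes e_k$ with $(c_k)$ square-summable and all $c_k\neq0$. Then $K^{[A]}L^{[B]}(e_a\otimes e_b\otimes e_c)=\langle\theta,e_b\otimes e_c\rangle\,|c_a|^2\,\xi\otimes e_a$, which has infinite rank. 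The paper handles this by showing that such rank-one products factor through a Hilbert--Schmidt operator, hence are compact, and only then approximating.

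Within your approach, the simplest repair is to choose the approximants of the form $K_j=Q_pK_jQ_p$ and $L_j=Q_qL_jQ_q$. Here $Q_m$ is the projection onto $\mathfrak{f}^{\otimes m}$ for a finite-dimensional $\mathfrak{f}\subset\mathfrak{h}$; this is possible because $Q_mKQ_m\to K$ in norm for compact $K$ as $\mathfrak{f}\nearrow\mathfrak{h}$. Since $A\cup B$ covers all $r$ factors, $K_j^{[A]}L_j^{[B]}$ then maps into $\mathfrak{f}^{\otimes r}$ and is genuinely finite rank. This point only concerns whether $\widetilde{\mu}_{p,q}^{r}(K,L)$ lies in its stated codomain. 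Your proof of the product identity and of the multinomial norm bound is unaffected.
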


\begin{proof}
Multiplying the two subset sums produces terms $K^{[A]}L^{[B]}$.
Their union size is
$$\begin{aligned}
\abscard{A\cup B}
&=
p+q-\abscard{A\cap B}.
\end{aligned}$$
Grouping first by $r=\abscard{A\cup B}$ and then by the union
$U=A\cup B\subset\setone{1,\ldots,n}$ gives exactly the right-hand side of
\eqref{eq:paper-exact-overlap-product}.

We verify compactness before completing the regrouping.
Fix vectors
$$\begin{aligned}
\xi,\eta
&\in
\sphilbfrak{h}^{\otimes\abscard{A\setminus B}}\otimes
\sphilbfrak{h}^{\otimes\abscard{A\cap B}},
&
\zeta,\theta
&\in
\sphilbfrak{h}^{\otimes\abscard{A\cap B}}\otimes
\sphilbfrak{h}^{\otimes\abscard{B\setminus A}}.
\end{aligned}$$
We first prove compactness for the rank-one inputs $P_{\xi,\eta}$ and $P_{\zeta,\theta}$.
For orthonormal bases $\seq{e_i}{i\in I}$, $\seq{h_a}{a\in I_0}$, and
$\seq{g_j}{j\in J}$ of $\sphilbfrak{h}^{\otimes\abscard{A\setminus B}}$,
$\sphilbfrak{h}^{\otimes\abscard{A\cap B}}$, and
$\sphilbfrak{h}^{\otimes\abscard{B\setminus A}}$, respectively, define
$$\begin{aligned}
R_{\eta,\zeta}
\colon
\sphilbfrak{h}^{\otimes\abscard{A\setminus B}}
\to
\sphilbfrak{h}^{\otimes\abscard{B\setminus A}};\quad
\bkt{g_j}{\fun{R_{\eta,\zeta}}{e_i}}
&=
\sum_{a\in I_0}
\cmpconj{\bkt{e_i\otimes h_a}{\eta}}
\bkt{h_a\otimes g_j}{\zeta}.
\end{aligned}$$
The Cauchy--Schwarz inequality gives the Hilbert--Schmidt estimate
$$\begin{aligned}
\norm{R_{\eta,\zeta}}_2^2
&=
\sum_{i\in I}\sum_{j\in J}
\abs{
\sum_{a\in I_0}
\cmpconj{\bkt{e_i\otimes h_a}{\eta}}
\bkt{h_a\otimes g_j}{\zeta}}^2
\\
&\leq
\sum_{i\in I}\sum_{j\in J}
\rbk{
\sum_{a\in I_0}
\abs{\bkt{e_i\otimes h_a}{\eta}}^2}
\rbk{
\sum_{a\in I_0}
\abs{\bkt{h_a\otimes g_j}{\zeta}}^2}
\\
&=
\norm{\eta}^2\norm{\zeta}^2.
\end{aligned}$$
Define the bounded operators $Y_{\theta}$ and $V_{\xi}$ by
$$\begin{aligned}
Y_{\theta}
&\colon
\sphilbfrak{h}^{\otimes\abscard{A\setminus B}}
\otimes\sphilbfrak{h}^{\otimes\abscard{A\cap B}}
\otimes\sphilbfrak{h}^{\otimes\abscard{B\setminus A}}
\to
\sphilbfrak{h}^{\otimes\abscard{A\setminus B}};
&
\fun{Y_{\theta}}{\alpha\otimes\sminvtemperature}
&=
\bkt{\theta}{\sminvtemperature}\alpha,
\\
V_{\xi}
&\colon
\sphilbfrak{h}^{\otimes\abscard{B\setminus A}}
\to
\sphilbfrak{h}^{\otimes\abscard{A\setminus B}}
\otimes\sphilbfrak{h}^{\otimes\abscard{A\cap B}}
\otimes\sphilbfrak{h}^{\otimes\abscard{B\setminus A}};
&
\fun{V_{\xi}}{\delta}
&=
\xi\otimes\delta,
\end{aligned}$$
where $\alpha\in\sphilbfrak{h}^{\otimes\abscard{A\setminus B}}$,
$\sminvtemperature\in\sphilbfrak{h}^{\otimes\abscard{A\cap B}}
\otimes\sphilbfrak{h}^{\otimes\abscard{B\setminus A}}$, and
$\delta\in\sphilbfrak{h}^{\otimes\abscard{B\setminus A}}$.
After the tensor factors are permuted into the order
$A\setminus B,A\cap B,B\setminus A$, direct evaluation on elementary tensors gives
$$\begin{aligned}
P_{\xi,\eta}^{[A]}P_{\zeta,\theta}^{[B]}
&=
V_{\xi}R_{\eta,\zeta}Y_{\theta},
&
\norm{P_{\xi,\eta}^{[A]}P_{\zeta,\theta}^{[B]}}_2
&\leq
\norm{\xi}\norm{\eta}\norm{\zeta}\norm{\theta}.
\end{aligned}$$
The rank-one product $P_{\xi,\eta}^{[A]}P_{\zeta,\theta}^{[B]}$ is Hilbert--Schmidt.
We verify directly that $R_{\eta,\zeta}$ is compact, meaning that the closure of the image of the unit ball is
compact.
Enumerate an orthonormal basis of
$\sphilbfrak{h}^{\otimes\abscard{A\setminus B}}$ as $\seq{e_i}{i\in\semigrposint}$ and let
$$\begin{aligned}
P_N
\colon
\sphilbfrak{h}^{\otimes\abscard{A\setminus B}}
\to
\sphilbfrak{h}^{\otimes\abscard{A\setminus B}};\quad
\fun{P_N}{x}
&=
\sum_{i=1}^{N}\bkt{e_i}{x}e_i.
\end{aligned}$$
For every unit vector $x$,
$$\begin{aligned}
&\norm{\fun{R_{\eta,\zeta}-R_{\eta,\zeta}P_N}{x}}
=
\norm{\sum_{i>N}\bkt{e_i}{x}\fun{R_{\eta,\zeta}}{e_i}}
\\ 
&\leq
\rbk{\sum_{i>N}\abs{\bkt{e_i}{x}}^2}^{1/2}
\rbk{\sum_{i>N}\norm{\fun{R_{\eta,\zeta}}{e_i}}^2}^{1/2}
\leq
\rbk{\sum_{i>N}\norm{\fun{R_{\eta,\zeta}}{e_i}}^2}^{1/2}.
\end{aligned}$$
Consequently $\norm{R_{\eta,\zeta}-R_{\eta,\zeta}P_N}\to0$.
Each $R_{\eta,\zeta}P_N$ has finite-dimensional range.
Given $\varepsilon>0$, choose $N$ with
$\norm{R_{\eta,\zeta}-R_{\eta,\zeta}P_N}<\varepsilon/2$.
The image under
$R_{\eta,\zeta}P_N$ of the unit ball has a finite
$\varepsilon/2$-net because it is bounded in a finite-dimensional space, and that net is an
$\varepsilon$-net for the image under $R_{\eta,\zeta}$ of the unit ball.
This image is therefore
relatively compact.
Moreover, $V_{\xi}R_{\eta,\zeta}P_NY_{\theta}$ has finite-dimensional range and
$$\begin{aligned}
\norm{
V_{\xi}R_{\eta,\zeta}Y_{\theta}
-
V_{\xi}R_{\eta,\zeta}P_NY_{\theta}}
&\leq
\norm{V_{\xi}}
\norm{R_{\eta,\zeta}-R_{\eta,\zeta}P_N}
\norm{Y_{\theta}}
\to
0.
\end{aligned}$$
Thus $P_{\xi,\eta}^{[A]}P_{\zeta,\theta}^{[B]}$ is compact.
Choose finite-rank approximants $K_j\to K$ and $L_j\to L$.
Each $K_j^{[A]}L_j^{[B]}$ is a finite sum of such compact rank-one products.
The approximation error satisfies
$$\begin{aligned}
\norm{K^{[A]}L^{[B]}-K_j^{[A]}L_j^{[B]}}
&\leq
\norm{K-K_j}\norm{L}
+
\norm{K_j}\norm{L-L_j}
\to
0.
\end{aligned}$$
It follows that every product $K^{[A]}L^{[B]}$ with $A\cup B=U$ is compact because $U$ contains all $r$
output factors.

For each fixed $U\subset\setone{1,\ldots,n}$ with $\abscard{U}=r$, the internal sum over
$A\cup B=U$ is the insertion of $\fun{\widetilde{\mu}_{p,q}^{r}}{K,L}$ into the factors indexed by $U$.
Consequently the complete regrouping is
$$\begin{aligned}
&\fun{\widetilde{E}_{p,n}}{K}\fun{\widetilde{E}_{q,n}}{L}
=
\sum_{\substack{A,B\subset\setone{1,\ldots,n}\\
\abscard{A}=p\\
\abscard{B}=q}}
K^{[A]}L^{[B]}
\\
&=
\sum_{r=\max\setone{p,q}}^{\min\setone{n,p+q}}
\sum_{\substack{U\subset\setone{1,\ldots,n}\\\abscard{U}=r}}
\rbk{\fun{\widetilde{\mu}_{p,q}^{r}}{K,L}}^{[U]}
=
\sum_{r=\max\setone{p,q}}^{\min\setone{n,p+q}}
\fun{\widetilde{E}_{r,n}\circ\widetilde{\mu}_{p,q}^{r}}{K,L}.
\end{aligned}$$

For a fixed union with $r$ elements, the three disjoint sets
$A\setminus B$, $B\setminus A$, and $A\cap B$ have cardinalities
$r-q$, $r-p$, and $p+q-r$.
The number of ordered pairs $(A,B)$ is
$$\begin{aligned}
\frac{r!}{\rbk{r-p}!\rbk{r-q}!\rbk{p+q-r}!}.
\end{aligned}$$
Each summand has norm at most $\norm{K}\norm{L}$.
The triangle inequality proves \eqref{eq:paper-overlap-product-bound}.
\end{proof}

\begin{thm}[Concrete realization of the particle-coefficient reconstruction data]\label{thm:paper-concrete-particle-coefficient-reconstruction-data}
For every $m\in\monnat$ and every nonzero finite-dimensional $\sphilbfrak{f}\subset\sphilbfrak{h}$, let
$I_{\sphilbfrak{f},m}\colon\sphilbfrak{f}^{\otimes_{\txtsym}m}\to\sphilbfrak{h}^{\otimes_{\txtsym}m}$ be the
canonical inclusion, and let $j_{\sphilbfrak{f},m}$ be the canonical map in
\eqref{eq:paper-main-degree-coefficient-canonical-injection}.
There are a unique $\ast$-isomorphism $\Theta_m$ and unique maps $E_{m,n}$ and $\mu_{p,q}^{r}$ such that
\begin{equation}
\begin{aligned}
\Theta_m
\colon
\oa{C}_m(\sphilbfrak{h})
&\to
\fun{\schattencls}{\sphilbfrak{h}^{\otimes_{\txtsym}m}},
\\
\fun{\Theta_m\circ j_{\sphilbfrak{f},m}}{K}
&=
I_{\sphilbfrak{f},m}
\fun{\pi_{\sphilbfrak{f},m}}{K}
\faadjrbk{I_{\sphilbfrak{f},m}},
\\
\varpi_n\circ E_{m,n}
&=
\widetilde{E}_{m,n}\circ\Theta_m
&&\rbk{0\leq m\leq n},
\\
\Theta_r\circ\mu_{p,q}^{r}
&=
\widetilde{\mu}_{p,q}^{r}\circ\rbk{\Theta_p\times\Theta_q}
&&\rbk{\max\setone{p,q}\leq r\leq p+q},
\end{aligned}
\label{eq:paper-main-fock-coefficient-identification}
\end{equation}
where $K\in e_{\sphilbfrak{f},m}\oa{K}(\sphilbfrak{f})^\gamma e_{\sphilbfrak{f},m}$ and
$\widetilde{E}_{m,n}$ and $\widetilde{\mu}_{p,q}^{r}$ are defined in
\eqref{eq:paper-main-subset-sum-embedding} and
\eqref{eq:paper-main-subset-union-overlap-map}, respectively.
The abstract overlap map is
\begin{equation}
\begin{aligned}
\mu_{p,q}^{r}
\colon
\oa{C}_p(\sphilbfrak{h})\times\oa{C}_q(\sphilbfrak{h})
\to
\oa{C}_r(\sphilbfrak{h});\quad
\mu_{p,q}^{r}
&=
\Theta_r^{-1}\circ\widetilde{\mu}_{p,q}^{r}\circ
\rbk{\Theta_p\times\Theta_q}.
\end{aligned}
\label{eq:paper-abstract-overlap-map-concrete-realization}
\end{equation}
The maps $E_{m,n}$ satisfy Definition \ref{def:paper-particle-coefficient-embedding-data}, and together with
$\mu_{p,q}^{r}$ they satisfy Definition \ref{def:paper-abstract-particle-coefficient-reconstruction-data}.
\end{thm}

\begin{proof}
Proposition \ref{prop:paper-coefficient-connecting-maps} identifies the connecting maps in the inductive limit
with zero extension.
The defining formula for $\Theta_m$ is therefore independent of the representative and defines an isometric
$\ast$-homomorphism.
Its range contains
$$\begin{aligned}
\bigcup_{\substack{\sphilbfrak{f}\subset\sphilbfrak{h}\\
\dim_{\fldcmp}\sphilbfrak{f}<\infty}}
\fun{\schattencls}{\sphilbfrak{f}^{\otimes_{\txtsym}m}},
\end{aligned}$$
whose norm closure is
$\fun{\schattencls}{\sphilbfrak{h}^{\otimes_{\txtsym}m}}$.
Thus $\Theta_m$ is a $\ast$-isomorphism.

The standard exact-particle decomposition is \cite[Lemma 3.3]{DetlevBuchholz002}; its graph formulation is
\cite[Lemma 2.3 and equation (12)]{DeuchertLampartLemm001}.
In our notation it is
\begin{equation}
\begin{aligned}
\fun{\rho_n}{\oaresolventalgebra(\sphilbfrak{h},\sigma)^{\gamma}}
&=
\bigoplus_{m=0}^{n}
\fun{\widetilde{E}_{m,n}}{
\fun{\schattencls}{\sphilbfrak{h}^{\otimes_{\txtsym}m}}}.
\end{aligned}
\label{eq:paper-concrete-particle-coefficient-decomposition}
\end{equation}
The compatible restriction computed in \cite[Lemma 3.4]{DetlevBuchholz002} and used in
\cite[Proposition 2.4]{DeuchertLampartLemm001} is the map $\kappa_n$ defined in
\eqref{eq:paper-main-fock-connecting-map}.
It acts on the summands by
\begin{equation}
\begin{aligned}
\kappa_n\circ\widetilde{E}_{m,n}
&=
\widetilde{E}_{m,n-1}
&&\rbk{0\leq m<n},
\\
\kappa_n\circ\widetilde{E}_{n,n}
&=
0.
\end{aligned}
\label{eq:paper-concrete-deletion-actions}
\end{equation}
Transport by $\varpi_n^{-1}$ in \eqref{eq:paper-main-fock-cutoff-identification} and the maps $\Theta_m$
yields the required direct-sum decomposition for
$\oa{B}_n(\sphilbfrak{h})$, the deletion identities for $E_{m,n}$, and the equality
$$\begin{aligned}
\Ker r_n
&=
\fun{E_{n,n}}{\oa{C}_n(\sphilbfrak{h})}.
\end{aligned}$$
The degree-zero and adjoint conditions follow directly from the subset-sum definition of
$\widetilde{E}_{m,n}$.

It remains to verify essentiality of the top-degree kernel.
Under the map $\varpi_n$ in \eqref{eq:paper-main-fock-cutoff-identification}, this kernel is
$\fun{\schattencls}{\sphilbfrak{h}^{\otimes_{\txtsym}n}}$.
If $0\neq A$ belongs to an ideal of the concrete exact-particle algebra, choose $\xi$ with $A\xi\neq0$.
The rank-one compact operator
$K=P_{A\xi,A\xi}$ satisfies $KA\xi=\norm{A\xi}^{2}A\xi\neq0$.
Thus every nonzero ideal meets the compact operators, and transport by $\varpi_n$ in
\eqref{eq:paper-main-fock-cutoff-identification} proves that $\Ker r_n$ is essential.
This proves every condition in Definition \ref{def:paper-particle-coefficient-embedding-data}.

Proposition \ref{prop:paper-overlap-product} and the map $\varpi_n$ in
\eqref{eq:paper-main-fock-cutoff-identification} give
\begin{equation}
\begin{aligned}
\fun{\varpi_n}{\fun{E_{p,n}}{a_p}\fun{E_{q,n}}{b_q}}
&=
\sum_{r=\max\setone{p,q}}^{\min\setone{p+q,n}}
\fun{\widetilde{E}_{r,n}\circ\widetilde{\mu}_{p,q}^{r}}
{\fun{\Theta_p}{a_p},\fun{\Theta_q}{b_q}}
\\
&=
\fun{\varpi_n}{
\sum_{r=\max\setone{p,q}}^{\min\setone{p+q,n}}
\fun{E_{r,n}}{\fun{\mu_{p,q}^{r}}{a_p,b_q}}}.
\end{aligned}
\label{eq:paper-transported-overlap-product}
\end{equation}
Injectivity of $\varpi_n$ from \eqref{eq:paper-main-fock-cutoff-identification} in
\eqref{eq:paper-transported-overlap-product} gives the first identity in
\eqref{eq:paper-abstract-particle-coefficient-overlap-products}.

For $r=p+q$, every pair $A,B$ in \eqref{eq:paper-main-subset-union-overlap-map} satisfies
$A\cap B=\emptyset$.
Interchanging $A$ and $B$ and using commutativity of operators acting on disjoint tensor factors gives
$$\begin{aligned}
\fun{\widetilde{\mu}_{p,q}^{p+q}}{K,L}
&=
\fun{\widetilde{\mu}_{q,p}^{p+q}}{L,K}.
\end{aligned}$$
Equation \eqref{eq:paper-abstract-overlap-map-concrete-realization} gives the second identity in
\eqref{eq:paper-abstract-particle-coefficient-overlap-products}.

For $p=q=r=m$, the only pair in \eqref{eq:paper-main-subset-union-overlap-map} is
$A=B=\setone{1,\ldots,m}$, and hence
$$\begin{aligned}
\fun{\widetilde{\mu}_{m,m}^{m}}{K,L}
&=
KL.
\end{aligned}$$
Equation \eqref{eq:paper-abstract-overlap-map-concrete-realization} gives the third identity in
\eqref{eq:paper-abstract-particle-coefficient-overlap-products}.
These three identities prove the remaining conditions in Definition
\ref{def:paper-abstract-particle-coefficient-reconstruction-data}.
\end{proof}

\begin{cor}[Degree-tail ideals in the Fock realization]\label{cor:paper-fock-degree-tail-ideals}
For $0\leq k\leq n$, the map $\varpi_n$ in \eqref{eq:paper-main-fock-cutoff-identification} identifies the
degree-tail ideal by
\begin{equation}
\begin{aligned}
\fun{\varpi_n}{\oa{J}_k^{(n)}}
&=
\bigoplus_{m=k}^{n}
\fun{\widetilde{E}_{m,n}}{
\fun{\schattencls}{\sphilbfrak{h}^{\otimes_{\txtsym}m}}}.
\end{aligned}
\label{eq:paper-main-fock-degree-tail-ideal-identification}
\end{equation}
\end{cor}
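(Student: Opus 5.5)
The plan is to combine the abstract decomposition of Proposition \ref{prop:paper-degree-tail-ideals} with the intertwining identity $\varpi_n\circ E_{m,n}=\widetilde{E}_{m,n}\circ\Theta_m$ from Theorem \ref{thm:paper-concrete-particle-coefficient-reconstruction-data}. The latter theorem shows that the concrete maps $E_{m,n}$ satisfy Definitions \ref{def:paper-particle-coefficient-embedding-data} and \ref{def:paper-abstract-particle-coefficient-reconstruction-data}. Proposition \ref{prop:paper-degree-tail-ideals} therefore applies to them and gives
\[
\oa{J}_k^{(n)}=\bigoplus_{m=k}^{n}\fun{E_{m,n}}{\oa{C}_m(\sphilbfrak{h})}
\qquad(0\leq k\leq n).
\]

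We apply the linear map $\varpi_n$ of Theorem \ref{thm:paper-abstract-concrete-comparison} to this identity. Linearity of $\varpi_n$ turns the image of the sum into the sum of the images. The intertwining identity then gives
\[
\fun{\varpi_n}{\fun{E_{m,n}}{\oa{C}_m(\sphilbfrak{h})}}=\fun{\widetilde{E}_{m,n}}{\fun{\Theta_m}{\oa{C}_m(\sphilbfrak{h})}}
\]
for each $m$. Since $\Theta_m$ is surjective onto $\fun{\schattencls}{\sphilbfrak{h}^{\otimes_{\txtsym}m}}$, this equals $\fun{\widetilde{E}_{m,n}}{\fun{\schattencls}{\sphilbfrak{h}^{\otimes_{\txtsym}m}}}$.

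It remains to check that the image sum is still direct. This follows because $\varpi_n$ is injective: an injective linear map carries a direct sum of subspaces to a direct sum of their images. Alternatively, directness is already part of \eqref{eq:paper-concrete-particle-coefficient-decomposition}. Either way the right-hand side of \eqref{eq:paper-main-fock-degree-tail-ideal-identification} is obtained.

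No step is a real obstacle. The only point needing care is to invoke Proposition \ref{prop:paper-degree-tail-ideals} for the concrete data, which is legitimate precisely because Theorem \ref{thm:paper-concrete-particle-coefficient-reconstruction-data} verifies the required definitions. For $k=0$ the statement reduces to \eqref{eq:paper-concrete-particle-coefficient-decomposition} itself, which serves as a consistency check.
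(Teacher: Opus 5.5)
Your proposal is correct and follows essentially the same route as the paper. You invoke Proposition \ref{prop:paper-degree-tail-ideals}, which is legitimate because Theorem \ref{thm:paper-concrete-particle-coefficient-reconstruction-data} verifies the reconstruction data, then use injectivity of $\varpi_n$ to keep the sum direct, and the intertwining $\varpi_n\circ E_{m,n}=\widetilde{E}_{m,n}\circ\Theta_m$ together with surjectivity of $\Theta_m$ to identify each summand.
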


\begin{proof}
Proposition \ref{prop:paper-degree-tail-ideals} gives
$\oa{J}_k^{(n)}=\bigoplus_{m=k}^{n}\fun{E_{m,n}}{\oa{C}_m(\sphilbfrak{h})}$.
The injective map $\varpi_n$ preserves this direct sum, and the relation
$\varpi_n\circ E_{m,n}=\widetilde{E}_{m,n}\circ\Theta_m$ in \eqref{eq:paper-main-fock-coefficient-identification},
together with the surjectivity of $\Theta_m$, identifies each summand with
$\fun{\widetilde{E}_{m,n}}{\fun{\schattencls}{\sphilbfrak{h}^{\otimes_{\txtsym}m}}}$.
\end{proof}

\begin{lem}[Uniqueness of the concrete exact-particle coefficient decomposition]\label{lem:paper-concrete-finite-particle-coefficient-uniqueness}
For $0\leq m\leq n$, the linear map $\widetilde{E}_{m,n}$ is injective.
Moreover, every
$A\in\fun{\rho_n}{\oaresolventalgebra(\sphilbfrak{h},\sigma)^{\gamma}}$
has unique coefficients
$$\begin{aligned}
a_m
&\in
\fun{\schattencls}{\sphilbfrak{h}^{\otimes_{\txtsym}m}}
&&\rbk{0\leq m\leq n}
\end{aligned}$$
such that
$$\begin{aligned}
A
&=
\sum_{m=0}^{n}\fun{\widetilde{E}_{m,n}}{a_m}.
\end{aligned}$$
\end{lem}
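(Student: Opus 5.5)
The plan is to transport the abstract statements already proved back through the isomorphisms of Theorem \ref{thm:paper-concrete-particle-coefficient-reconstruction-data}, so that no new Fock-space computation is needed. By that theorem, $\varpi_n\circ E_{m,n}=\widetilde{E}_{m,n}\circ\Theta_m$ for $0\leq m\leq n$. Here $\Theta_m$ is a $\ast$-isomorphism onto $\fun{\schattencls}{\sphilbfrak{h}^{\otimes_{\txtsym}m}}$, and $\varpi_n$ is an injective (indeed isometric) $\ast$-isomorphism by Theorem \ref{thm:paper-abstract-concrete-comparison}. The same theorem shows that the maps $E_{m,n}$ and $\mu_{p,q}^{r}$ satisfy Definition \ref{def:paper-abstract-particle-coefficient-reconstruction-data}. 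Hence Proposition \ref{prop:paper-particle-coefficient-map-injectivity} applies and every $E_{m,n}$ is injective.

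\textbf{Injectivity.} Suppose $\fun{\widetilde{E}_{m,n}}{K}=0$ for some $K\in\fun{\schattencls}{\sphilbfrak{h}^{\otimes_{\txtsym}m}}$. Set $a=\Theta_m^{-1}(K)$. Then $\fun{\varpi_n}{\fun{E_{m,n}}{a}}=0$. Injectivity of $\varpi_n$ gives $\fun{E_{m,n}}{a}=0$, injectivity of $E_{m,n}$ gives $a=0$, and therefore $K=0$. Thus $\widetilde{E}_{m,n}=\varpi_n\circ E_{m,n}\circ\Theta_m^{-1}$ is a composite of injective maps.

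\textbf{Existence and uniqueness of the coefficients.} Take $A\in\fun{\rho_n}{\oaresolventalgebra(\sphilbfrak{h},\sigma)^{\gamma}}$ and put $A'=\varpi_n^{-1}(A)\in\oa{B}_n(\sphilbfrak{h})$. Lemma \ref{lem:paper-finite-particle-coefficient-uniqueness} gives unique $b_m\in\oa{C}_m(\sphilbfrak{h})$ with $A'=\sum_{m=0}^{n}\fun{E_{m,n}}{b_m}$. Applying $\varpi_n$ yields $A=\sum_{m=0}^n\fun{\widetilde{E}_{m,n}}{a_m}$ with $a_m=\Theta_m(b_m)$; alternatively, existence is the decomposition \eqref{eq:paper-concrete-particle-coefficient-decomposition}. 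For uniqueness, suppose another family $a'_m$ represents $A$. Pull both families back by $\Theta_m^{-1}$ and $\varpi_n^{-1}$ to obtain two decompositions of $A'$. Lemma \ref{lem:paper-finite-particle-coefficient-uniqueness} then forces $\Theta_m^{-1}(a_m)=\Theta_m^{-1}(a'_m)$, hence $a_m=a'_m$. Equivalently, the direct-sum statement \eqref{eq:paper-concrete-particle-coefficient-decomposition} together with the injectivity just shown gives uniqueness directly.

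\textbf{Main obstacle.} The only delicate point is to avoid circularity. Theorem \ref{thm:paper-concrete-particle-coefficient-reconstruction-data} already cites directness of the concrete sum from \cite[Lemma 3.3]{DetlevBuchholz002}, so injectivity of $\widetilde{E}_{m,n}$ must come from the abstract argument in Proposition \ref{prop:paper-particle-coefficient-map-injectivity}: essentiality of the top-degree ideal, simplicity of $\oa{C}_m(\sphilbfrak{h})$, and the deletion identities. It must not be assumed as an extra input. I will check that every hypothesis of that proposition was verified in the proof of Theorem \ref{thm:paper-concrete-particle-coefficient-reconstruction-data}, in particular the essentiality of $\Ker r_n$ and the identity $\mu_{m,m}^m(a,b)=ab$, before invoking it.
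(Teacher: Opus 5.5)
Your argument is correct relative to what the paper has already established, but it takes a genuinely different and heavier route than the paper.

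The paper works entirely at the concrete level. The only $m$-element subset of $\{1,\ldots,m\}$ is the whole set, so $\widetilde{E}_{m,m}(K)=K$. Iterating \eqref{eq:paper-concrete-deletion-actions} then gives $\kappa_{m+1}\circ\cdots\circ\kappa_n\circ\widetilde{E}_{m,n}=\widetilde{E}_{m,m}=\mathrm{id}$, so $\widetilde{E}_{m,n}$ has a left inverse. Existence and uniqueness of the coefficients then follow from directness of \eqref{eq:paper-concrete-particle-coefficient-decomposition}.

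You instead pull Proposition \ref{prop:paper-particle-coefficient-map-injectivity} and Lemma \ref{lem:paper-finite-particle-coefficient-uniqueness} back through $\varpi_n$ and $\Theta_m$. This replaces the trivial identity $\widetilde{E}_{m,m}=\mathrm{id}$ by essentiality of $\Ker r_m$ and simplicity of the inductive limit $\mathcal{C}_m(\mathfrak{h})$. Your version shows that the concrete lemma is a formal corollary of the abstract theory once Theorem \ref{thm:paper-concrete-particle-coefficient-reconstruction-data} is available. The paper's version needs neither that theorem nor the simplicity and essentiality arguments.

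Your remark that injectivity ``must'' come from the abstract argument is mistaken. The elementary argument uses only the definition of $\widetilde{E}_{m,m}$ and \eqref{eq:paper-concrete-deletion-actions}, which Theorem \ref{thm:paper-concrete-particle-coefficient-reconstruction-data} already takes as input, so it is neither an extra assumption nor circular.

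In fact the dependence runs the other way. Suppose the identification $\Ker r_n=E_{n,n}(\mathcal{C}_n(\mathfrak{h}))$, which underlies the essentiality you invoke, is derived from the direct sum and the deletion identities rather than quoted from \cite[Lemma 3.4]{DetlevBuchholz002}. Then one must pass from $\widetilde{E}_{m,n-1}(a_m)=0$ to $\widetilde{E}_{m,n}(a_m)=0$, and that step is supplied precisely by the elementary injectivity. So your route avoids circularity only because that kernel identification can be taken from the literature.
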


\begin{proof}
The direct-sum decomposition in \cite[Lemma 3.3]{DetlevBuchholz002}, equivalently
\cite[Lemma 2.3 and equation (12)]{DeuchertLampartLemm001}, gives a unique component in each image of
$\widetilde{E}_{m,n}$.
Moreover, \eqref{eq:paper-concrete-deletion-actions} reduces $\widetilde{E}_{m,n}$ to
$\fun{\widetilde{E}_{m,m}}{K}=K$, so every $\widetilde{E}_{m,n}$ is injective.
The component decomposition therefore has unique coefficients.
\end{proof}

\begin{cor}[Associativity of overlap contractions]\label{cor:paper-overlap-associativity}
Let
$K\in\fun{\schattencls}{\sphilbfrak{h}^{\otimes_{\txtsym}p}}$,
$L\in\fun{\schattencls}{\sphilbfrak{h}^{\otimes_{\txtsym}q}}$, and
$M\in\fun{\schattencls}{\sphilbfrak{h}^{\otimes_{\txtsym}t}}$.
For every output degree $r\in\monnat$,
\begin{equation}
\begin{aligned}
\sum_{\substack{\max\setone{p,q}\leq s\leq p+q\\
\max\setone{s,t}\leq r\leq s+t}}
\fun{\widetilde{\mu}_{s,t}^{r}}{
\fun{\widetilde{\mu}_{p,q}^{s}}{K,L},M}
=
\sum_{\substack{\max\setone{q,t}\leq s\leq q+t\\
\max\setone{p,s}\leq r\leq p+s}}
\fun{\widetilde{\mu}_{p,s}^{r}}{
K,\fun{\widetilde{\mu}_{q,t}^{s}}{L,M}}.
\end{aligned}
\label{eq:paper-overlap-associativity}
\end{equation}
Each sum is finite and runs only over degrees $s$ for which both nested overlap maps in each summand of
\eqref{eq:paper-overlap-associativity} are defined.
\end{cor}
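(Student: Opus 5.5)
The plan is to deduce the associativity from the associativity of operator multiplication in the concrete exact-particle algebra, using the overlap product formula of Proposition \ref{prop:paper-overlap-product} and the uniqueness of concrete coefficients in Lemma \ref{lem:paper-concrete-finite-particle-coefficient-uniqueness}. No direct combinatorial regrouping of subset triples will be needed.

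First, finiteness. In the left sum, $s$ ranges over $\max\setone{p,q}\leq s\leq p+q$ and $r$ is fixed. In the right sum, $s$ ranges over $\max\setone{q,t}\leq s\leq q+t$. Both ranges are finite, and the side conditions on $r$ are exactly the domains of definition in \eqref{eq:paper-main-subset-union-overlap-map}. I would also note that each $\widetilde{\mu}$ maps compacts to compacts, by the compactness part of the proof of Proposition \ref{prop:paper-overlap-product}. Hence all nested terms are defined.

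Next, fix $n\geq r$ (any $n\geq\max\setone{p,q,t,r}$ works). Compute $\widetilde{E}_{p,n}(K)\widetilde{E}_{q,n}(L)\widetilde{E}_{t,n}(M)$ in two ways using \eqref{eq:paper-exact-overlap-product} twice.

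\emph{Bracketing on the left.} Expanding the first product gives $\sum_{s}\widetilde{E}_{s,n}(\widetilde{\mu}_{p,q}^{s}(K,L))$, with $s\leq\min\setone{n,p+q}$. Multiplying by $\widetilde{E}_{t,n}(M)$ and applying the formula again gives
\[
\sum_{s}\sum_{r'}\widetilde{E}_{r',n}\bigl(\widetilde{\mu}_{s,t}^{r'}(\widetilde{\mu}_{p,q}^{s}(K,L),M)\bigr).
\]

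\emph{Bracketing on the right.} The same procedure yields
\[
\sum_{s}\sum_{r'}\widetilde{E}_{r',n}\bigl(\widetilde{\mu}_{p,s}^{r'}(K,\widetilde{\mu}_{q,t}^{s}(L,M))\bigr).
\]

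Both expressions lie in $\rho_n(\oaresolventalgebra(\sphilbfrak{h},\sigma)^{\gamma})$, by \eqref{eq:paper-literature-fock-input}. Lemma \ref{lem:paper-concrete-finite-particle-coefficient-uniqueness} then lets me compare the degree-$r'$ coefficients for every $r'\leq n$. With $n\geq r$, this gives exactly \eqref{eq:paper-overlap-associativity} at degree $r$.

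The main obstacle is the truncation at $n$. An intermediate degree $s$ in the left expansion is cut off at $\min\setone{n,p+q}$, and $s$ may exceed $r$. If $n<p+q$, some intermediate terms with $s>n$ are dropped even though they could feed into output degree $r$. However, $\widetilde{\mu}_{s,t}^{r}$ requires $s\leq r$. So for output degree $r$ only intermediates with $s\leq r\leq n$ contribute, and none of them is lost by the cutoff. The same holds on the right. I would make this bookkeeping explicit: collect the degree-$r$ coefficient from each double sum and check that its index set coincides with the index set in \eqref{eq:paper-overlap-associativity}.
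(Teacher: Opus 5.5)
Your proposal is correct and follows the paper's proof: compute $\widetilde{E}_{p,n}(K)\widetilde{E}_{q,n}(L)\widetilde{E}_{t,n}(M)$ in both bracketings by applying Proposition \ref{prop:paper-overlap-product} twice, then compare degree-$r$ coefficients using Lemma \ref{lem:paper-concrete-finite-particle-coefficient-uniqueness}. The one difference is that the paper fixes $n\geq p+q+t$, which makes every cutoff $\min\{n,\cdot\}$ vacuous and so removes the truncation bookkeeping you flag as the main obstacle; your resolution via $s\leq r\leq n$ is also valid, provided $n\geq\max\{p,q,t,r\}$ as in your parenthetical (when $r<\max\{p,q,t\}$ both sums are empty anyway).
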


\begin{proof}
Fix $n\geq p+q+t$.
For the coefficient on the left of
\eqref{eq:paper-overlap-associativity}, Proposition
\ref{prop:paper-overlap-product} applied twice gives
$$\begin{aligned}
&\sum_{r=0}^{p+q+t}\fun{\widetilde{E}_{r,n}}{
\sum_{\substack{\max\setone{p,q}\leq s\leq p+q\\
\max\setone{s,t}\leq r\leq s+t}}
\fun{\widetilde{\mu}_{s,t}^{r}}{
\fun{\widetilde{\mu}_{p,q}^{s}}{K,L},M}}
\\
&=
\sum_{s=\max\setone{p,q}}^{p+q}
\fun{\widetilde{E}_{s,n}}{\fun{\widetilde{\mu}_{p,q}^{s}}{K,L}}
\fun{\widetilde{E}_{t,n}}{M}
\\
&=
\rbk{
\fun{\widetilde{E}_{p,n}}{K}
\fun{\widetilde{E}_{q,n}}{L}}
\fun{\widetilde{E}_{t,n}}{M}.
\end{aligned}$$
For the coefficient on the right, the same calculation in the other order gives
$$\begin{aligned}
&\sum_{r=0}^{p+q+t}\fun{\widetilde{E}_{r,n}}{
\sum_{\substack{\max\setone{q,t}\leq s\leq q+t\\
\max\setone{p,s}\leq r\leq p+s}}
\fun{\widetilde{\mu}_{p,s}^{r}}{
K,\fun{\widetilde{\mu}_{q,t}^{s}}{L,M}}}
\\
&=
\sum_{s=\max\setone{q,t}}^{q+t}
\fun{\widetilde{E}_{p,n}}{K}
\fun{\widetilde{E}_{s,n}}{\fun{\widetilde{\mu}_{q,t}^{s}}{L,M}}
\\
&=
\fun{\widetilde{E}_{p,n}}{K}
\rbk{
\fun{\widetilde{E}_{q,n}}{L}
\fun{\widetilde{E}_{t,n}}{M}}.
\end{aligned}$$
Associativity of operator multiplication equates the two parenthesizations of the product of
$\fun{\widetilde{E}_{p,n}}{K}$, $\fun{\widetilde{E}_{q,n}}{L}$, and $\fun{\widetilde{E}_{t,n}}{M}$.
Lemma \ref{lem:paper-concrete-finite-particle-coefficient-uniqueness} then equates their unique degree-$r$
coefficients and proves \eqref{eq:paper-overlap-associativity}.
\end{proof}

For every nonzero finite-dimensional complex subspace \(\sphilbfrak{f}\subset\sphilbfrak{h}\), define \[\begin{aligned}
\opspbddlin{\fun{\spfockgamma_{\txtbsn}}{\sphilbfrak{f}}}^{\gamma}
&=
\set{\bigoplus_{m\in\monnat}T_m}
{T_m\in\opspbddlin{\sphilbfrak{f}^{\otimes_{\txtsym}m}},\
\sup_{m\in\monnat}\norm{T_m}<\infty}.
\end{aligned}\] The block map is \begin{equation}
\begin{aligned}
\diag_{\sphilbfrak{f}}
\colon
\oa{M}(\oa{K}(\sphilbfrak{f})^\gamma)
\to
\opspbddlin{\fun{\spfockgamma_{\txtbsn}}{\sphilbfrak{f}}}^\gamma;
\quad
\fun{\diag_{\sphilbfrak{f}}}{\seq{K_m}{m\in\monnat}}
&=
\bigoplus_{m\in\monnat}\fun{\pi_{\sphilbfrak{f},m}}{K_m}.
\end{aligned}
\label{eq:paper-main-finite-mode-product-isomorphism}
\end{equation}

\begin{prop}[Block-product representation of the finite-mode algebra]\label{prop:finite-mode-product}
Let $\sphilbfrak{f}\subset\sphilbfrak{h}$ satisfy $1\leq\dim_{\fldcmp}\sphilbfrak{f}<\infty$.
Use the block representations $\pi_{\sphilbfrak{f},k}$ defined by
\eqref{eq:paper-main-finite-mode-block-representation}.
Then the map $\diag_{\sphilbfrak{f}}$ in
\eqref{eq:paper-main-finite-mode-product-isomorphism} is an isometric $\ast$-isomorphism.
In particular,
the centers of $\oa{M}(\oa{K}(\sphilbfrak{f})^{\gamma})$ and
$\opspbddlin{\fun{\spfockgamma_{\txtbsn}}{\sphilbfrak{f}}}^{\gamma}$ are isomorphic to
$\fun{\lpseq^\infty}{\monnat}$.
\end{prop}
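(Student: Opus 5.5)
The plan is to reduce everything to the blockwise $\ast$-isomorphisms $\pi_{\sphilbfrak{f},m}$ of Theorem \ref{thm:paper-fock-intrinsic-block-realization}. For each $m\in\monnat$ that theorem identifies $e_{\sphilbfrak{f},m}\oa{K}(\sphilbfrak{f})^{\gamma}e_{\sphilbfrak{f},m}$ with $\fun{\schattencls}{\sphilbfrak{f}^{\otimes_{\txtsym}m}}$. Because $\sphilbfrak{f}$ is finite dimensional, this compact-operator algebra is all of $\opspbddlin{\sphilbfrak{f}^{\otimes_{\txtsym}m}}$. Thus $\pi_{\sphilbfrak{f},m}$ is a $\ast$-isomorphism of the $m$th corner onto the full matrix algebra $\opspbddlin{\sphilbfrak{f}^{\otimes_{\txtsym}m}}$. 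Being an injective $\ast$-homomorphism between $\oacstar$-algebras, it is isometric.

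Next I would check the properties of $\diag_{\sphilbfrak{f}}$ coordinatewise. For a bounded family $\seq{K_m}{m\in\monnat}$, the norms satisfy
$\norm{\bigoplus_m\fun{\pi_{\sphilbfrak{f},m}}{K_m}}=\sup_m\norm{\fun{\pi_{\sphilbfrak{f},m}}{K_m}}=\sup_m\norm{K_m}$. Hence $\diag_{\sphilbfrak{f}}$ is well defined with values in $\opspbddlin{\fun{\spfockgamma_{\txtbsn}}{\sphilbfrak{f}}}^{\gamma}$ and is isometric. Linearity, multiplicativity and adjoint preservation hold because the operations on $\oa{M}(\oa{K}(\sphilbfrak{f})^{\gamma})$ in \eqref{eq:paper-main-finite-mode-algebra} and on block-diagonal operators are both coordinatewise, and each $\pi_{\sphilbfrak{f},m}$ is a $\ast$-homomorphism.

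For surjectivity, take $\bigoplus_m T_m$ with $\sup_m\norm{T_m}<\infty$ and set $K_m=\pi_{\sphilbfrak{f},m}^{-1}(T_m)$. Isometry gives $\sup_m\norm{K_m}=\sup_m\norm{T_m}<\infty$, so $\seq{K_m}{m\in\monnat}$ lies in the finite-mode algebra and maps to the given operator. The one point that needs care, and the main potential obstacle, is that the finite-mode algebra is the bounded product and not the $c_0$-sum. Without the decay condition of \eqref{eq:paper-main-intrinsic-particle-block-decomposition}, no convergence issue arises.

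For the centers, the isomorphism maps center onto center, so it suffices to compute the center of $\opspbddlin{\fun{\spfockgamma_{\txtbsn}}{\sphilbfrak{f}}}^{\gamma}$. An element $\bigoplus_m Z_m$ is central exactly when each $Z_m$ commutes with all of $\opspbddlin{\sphilbfrak{f}^{\otimes_{\txtsym}m}}$. To see this, test against elements supported in a single block. Each block is a full matrix algebra with trivial center, which forces $Z_m=z_m 1$, and boundedness gives $\seq{z_m}{m\in\monnat}\in\fun{\lpseq^\infty}{\monnat}$. Conversely, every such sequence gives a central element. The map $z\mapsto\bigoplus_m z_m1$ is then a $\ast$-isomorphism from $\fun{\lpseq^\infty}{\monnat}$ onto the center. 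Pulling back by $\diag_{\sphilbfrak{f}}$ identifies the center of $\oa{M}(\oa{K}(\sphilbfrak{f})^{\gamma})$ with the sequences $\seq{z_me_{\sphilbfrak{f},m}}{m\in\monnat}$.
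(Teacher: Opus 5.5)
Your proposal is correct and follows the paper's proof. Each $\pi_{\mathfrak{f},k}$ is a $\ast$-isomorphism onto the full matrix algebra of the finite-dimensional $k$-particle space, the supremum norm makes $\diag_{\mathfrak{f}}$ isometric, surjectivity comes from inverting blockwise, and the scalar centers of the full matrix blocks give $\ell^\infty(\mathbb{N})$. The differences are cosmetic: you cite Theorem \ref{thm:paper-fock-intrinsic-block-realization} rather than the two results it rests on, and you spell out the single-block test for central elements that the paper states in one line.
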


\begin{proof}
Proposition \ref{prop:paper-particle-block-independence} and Theorem
\ref{thm:abstract-minimal-compact-ideal} show, for every $k\in\monnat$, that
$\pi_{\sphilbfrak{f},k}$ is a $\ast$-isomorphism onto the full operator algebra of the finite-dimensional
$k$-particle space.
Taking their bounded product gives the map $\diag_{\sphilbfrak{f}}$ in
\eqref{eq:paper-main-finite-mode-product-isomorphism}, and
$$\begin{aligned}
\norm{\fun{\diag_{\sphilbfrak{f}}}{K}}
&=
\sup_{k\in\monnat}\norm{\fun{\pi_{\sphilbfrak{f},k}}{K_k}}
=
\sup_{k\in\monnat}\norm{K_k}
=
\norm{K}_{\oa{M}(\oa{K}(\sphilbfrak{f})^{\gamma})}.
\end{aligned}$$
Surjectivity follows coordinatewise from $K_k=\fun{\pi_{\sphilbfrak{f},k}^{-1}}{A_k}$.
Finally, the center of
each full matrix block is scalar, so the center of the bounded product is $\fun{\lpseq^\infty}{\monnat}$.
\end{proof}

This proposition concerns the algebra associated with the finite-dimensional subspace \(\sphilbfrak{f}\).

\section{Spatial localization in the standard Fock realization}\label{sec:concrete-spatial-particle-cutoff-representations}

In this section, the concrete Fock representation embeds finite-mode gauge-invariant operators and specializes these embeddings to inclusions of finite graph regions.

\subsection{Comparison representation of the bounded inverse limit}\label{sec:concrete-fock-comparison}

In this subsection, the bounded inverse limit is represented by compatible operators on the Fock particle sectors, providing the comparison map used by the finite-mode embeddings.

The realization of the bounded inverse limit by compatible particle-sector operators is precisely \cite[Theorem 3.5]{DetlevBuchholz002}; its graph formulation is \cite[Proposition 2.4]{DeuchertLampartLemm001}. We therefore only record the comparison map needed below. Use the isomorphisms \(\varpi_n\) in \eqref{eq:paper-main-fock-cutoff-identification} and the connecting maps \(\kappa_n\) in \eqref{eq:paper-main-fock-connecting-map}. Define \[\begin{aligned}
\oa{D}_{\txtfock}
&=
\set{T\in\opspbddlin{\fun{\spfockgamma_{\txtbsn}}{\sphilbfrak{h}}}}
{\parbox{15em}
{$T\sphilbfrak{h}^{\otimes_{\txtsym}n}\subset\sphilbfrak{h}^{\otimes_{\txtsym}n}$,\\
$\fnrestr{T}{\sphilbfrak{h}^{\otimes_{\txtsym}n}}
\in\fun{\varpi_n}{\oa{B}_n(\sphilbfrak{h})}\rbk{n\in\monnat}$,\\
$\fun{\kappa_n}{\fnrestr{T}{\sphilbfrak{h}^{\otimes_{\txtsym}n}}}
=\fnrestr{T}{\sphilbfrak{h}^{\otimes_{\txtsym}\rbk{n-1}}}\rbk{n\in\semigrposint}$}}.
\end{aligned}\] Define \[\begin{aligned}
\jmath
\colon
\oa{D}_{\txtfock}
\to
\fun{\oa{B}}{\sphilbfrak{h}};\quad
\fun{\jmath}{T}
&=
\seq{\fun{\varpi_n^{-1}}{\fnrestr{T}{\sphilbfrak{h}^{\otimes_{\txtsym}n}}}}{n\in\monnat}.
\end{aligned}\] The comparison representation satisfies \[\begin{aligned}
\fun{\oarepn_{\txtfock}\circ\jmath}{T}
&=T
&&\rbk{T\in\oa{D}_{\txtfock}}.
\end{aligned}\] Thus \(\jmath\) is the inverse of the established Fock realization, transported through the new intrinsic quotient isomorphisms \(\varpi_n\) in \eqref{eq:paper-main-fock-cutoff-identification}; no independent inverse-limit verification is required here.

\subsection{Finite-mode embeddings and particle-degree coefficients}\label{sec:paper-concrete-finite-mode-embeddings}

In this subsection, the Fock decomposition associated with a finite-dimensional one-particle subspace embeds its gauge-invariant operators and identifies their particle-degree coefficients in the global system.

Fix a nonzero finite-dimensional subspace \(\sphilbfrak{f}\subset\sphilbfrak{h}\) and a bounded block sequence \(D=\seq{D_k}{k\in\monnat}\in\opspbddlin{\fun{\spfockgamma_{\txtbsn}}{\sphilbfrak{f}}}^{\gamma}\). Let \(P_n\colon\sphilbfrak{h}^{\otimes n}\to\sphilbfrak{h}^{\otimes_{\txtsym}n}\) be the specialization \(r=n\) of \eqref{eq:paper-main-symmetrizing-projection}. Use the particle-space and Fock-space decomposition unitaries \(W_{\sphilbfrak{f},n}\) and \(U_{\sphilbfrak{f}}\) defined in \eqref{eq:paper-main-particle-fock-decomposition-unitary} and \eqref{eq:paper-main-fock-decomposition-unitary}. For \(0\leq j\leq m\), define \begin{equation}
\begin{aligned}
E_{\sphilbfrak{f},j,m}
\colon
\fun{\schattencls}{\sphilbfrak{f}^{\otimes_{\txtsym}j}}
\to
\opspbddlin{\sphilbfrak{f}^{\otimes_{\txtsym}m}};\quad
\fun{E_{\sphilbfrak{f},j,m}}{K}
&=
\fnrestr{
\sum_{\substack{S\subset\setone{1,\ldots,m}\\\abscard{S}=j}}
K_{\sphilbfrak{f},S}}
{\sphilbfrak{f}^{\otimes_{\txtsym}m}},
\end{aligned}
\label{eq:paper-finite-mode-local-coefficient-embedding}
\end{equation} where \(K_{\sphilbfrak{f},S}\) is the specialization to \(\sphilbfrak{f}\) of the tensor-factor insertion in \eqref{eq:paper-main-tensor-factor-insertion}. Using the maps \(E_{\sphilbfrak{f},j,m}\) in \eqref{eq:paper-finite-mode-local-coefficient-embedding}, define the particle-degree coefficients on \(\sphilbfrak{f}\) recursively by \[\begin{aligned}
c_0
&=D_0,
&
c_m
&=D_m-\sum_{j=0}^{m-1}\fun{E_{\sphilbfrak{f},j,m}}{c_j}
&&\rbk{m\geq1}.
\end{aligned}\] Because \(\sphilbfrak{f}^{\otimes_{\txtsym}m}\) is finite-dimensional, every \(c_m\) is compact. Denote its zero extension to \(\sphilbfrak{h}^{\otimes_{\txtsym}m}\) also by \(c_m\). Under \(W_{\sphilbfrak{f},n}\), let \(T_n\) denote the \(n\)-particle restriction of \(T=\faadj{U_{\sphilbfrak{f}}}\rbk{D\otimes1}U_{\sphilbfrak{f}}\). On the block with exactly \(k\) particles in \(\sphilbfrak{f}\), the maps in \eqref{eq:paper-finite-mode-local-coefficient-embedding} satisfy \[\begin{aligned}
\fnrestr{\fun{\widetilde{E}_{m,n}}{c_m}}
{\sphilbfrak{f}^{\otimes_{\txtsym}k}\otimes
\rbk{\sphilbfrak{f}^\perp}^{\otimes_{\txtsym}\rbk{n-k}}}
&=
\begin{cases}
\fun{E_{\sphilbfrak{f},m,k}}{c_m}\otimes1,&m\leq k,\\
0,&m>k.
\end{cases}
\end{aligned}\] The recursion and the map \(\varpi_n\) in \eqref{eq:paper-main-fock-cutoff-identification} give \[\begin{aligned}
T_n
&=
\bigoplus_{k=0}^{n}\rbk{D_k\otimes1}
=
\sum_{m=0}^{n}\fun{\widetilde{E}_{m,n}}{c_m}
\in\fun{\varpi_n}{\oa{B}_n(\sphilbfrak{h})},
&
\fun{\kappa_n}{T_n}
&=T_{n-1}.
\end{aligned}\] The block norms satisfy \[\begin{aligned}
\sup_{n\in\monnat}\norm{T_n}
&=
\sup_{n\in\monnat}\max_{0\leq k\leq n}\norm{D_k}
=
\sup_{k\in\monnat}\norm{D_k}.
\end{aligned}\] Hence \(\fun{\jmath}{T}\in\fun{\oa{B}}{\sphilbfrak{h}}\). The conjugated algebra satisfies \[\begin{aligned}
\faadj{U_{\sphilbfrak{f}}}
\rbk{\opspbddlin{\fun{\spfockgamma_{\txtbsn}}{\sphilbfrak{f}}}^{\gamma}\otimes1}
U_{\sphilbfrak{f}}
&\subset
\oa{D}_{\txtfock}.
\end{aligned}\] The restriction of \(\jmath\) to this algebra is an isometric \(\ast\)-homomorphism into \(\fun{\oa{B}}{\sphilbfrak{h}}\).

\subsection{Continuum and graph one-particle spaces}\label{sec:paper-continuum-graph-buchholz-realizations}

In this subsection, the finite-mode construction applies to concrete subspaces of both continuum and graph one-particle spaces. We specify the subspace \(\sphilbfrak{k}\) in each case before writing its block algebra and its image in the full Buchholz algebra.

For the continuum case, fix \(d\in\semigrposint\), set \(\sphilbfrak{h}=\fun{\lp^{2}}{\fldreal^{d}}\), and choose a nonempty finite set \(A\subset\ringratint^{d}\). For \(a\in A\), let \(C_a=a+\rightopeninterval{0}{1}^{d}\) and use its indicator function as a one-particle vector. The finite cell subspace and symplectic form are \begin{equation}
\begin{aligned}
\sphilbfrak{k}
&=
\splinspan_{\fldcmp}\set{\fndef{C_a}}{a\in A}
\subset\fun{\lp^{2}}{\fldreal^{d}},
&
\fun{\sigma}{f,g}
&=
\opimag\int_{\fldreal^{d}}\cmpconj{\fun{f}{x}}\fun{g}{x}\opdmsr{x}.
\end{aligned}
\label{eq:paper-continuum-cell-one-particle-space}
\end{equation} The vectors \(\fndef{C_a}\) are orthonormal because the half-open cubes are disjoint and each has unit volume. Thus \(\dim_{\fldcmp}\sphilbfrak{k}=\abscard{A}\), and \(\sphilbfrak{k}^{\perp}\) consists exactly of the \(\lp^{2}\) functions whose integral over each \(C_a\) vanishes. These cell indicators give concrete finite-mode examples, but their span is not dense in \(\fun{\lp^{2}}{\fldreal^{d}}\). The dense finite-mode system used in Appendix \ref{sec:paper-spatial-quasilocal-appendix} instead chooses a full orthonormal basis on each cell.

For the graph case, take the one-particle space of \eqref{eq:paper-main-local-one-particle-subspace} and choose a nonempty finite \(\Lambda\subset\Gamma\). Then the finite-region subspace and symplectic form are \begin{equation}
\begin{aligned}
\sphilbfrak{h}
&=
\fun{\lpseq^{2}}{\Gamma},
&
\sphilbfrak{k}
&=
\sphilbfrak{h}_{\Lambda}
=
\fun{\lpseq^{2}}{\Lambda}
=
\splinspan_{\fldcmp}\set{\delta_x}{x\in\Lambda},
\\
\sphilbfrak{k}^{\perp}
&=
\fun{\lpseq^{2}}{\Gamma\setminus\Lambda},
&
\fun{\sigma}{f,g}
&=
\opimag\sum_{x\in\Gamma}\cmpconj{\fun{f}{x}}\fun{g}{x}.
\end{aligned}
\label{eq:paper-graph-finite-region-one-particle-space}
\end{equation} Here \(\dim_{\fldcmp}\sphilbfrak{k}=\abscard{\Lambda}\).

For either concrete choice of \(\sphilbfrak{k}\), use the regular Fock representation in \eqref{eq:paper-main-regular-fock-representation} on the Fock space of \eqref{eq:paper-main-fock-space-definition}. The finite-mode compact ideal and its multiplier block product have the concrete Fock images \begin{equation}
\begin{aligned}
\fun{\oarepn_{\txtfock,\sphilbfrak{k}}}{\oa{K}(\sphilbfrak{k})^{\gamma}}
&=
\set{\bigoplus_{m\in\monnat}K_m}
{K_m\in\opspbddlin{\sphilbfrak{k}^{\otimes_{\txtsym}m}},
\lim_{m\to\infty}\norm{K_m}=0},
\\
\fun{\diag_{\sphilbfrak{k}}}{\oa{M}(\oa{K}(\sphilbfrak{k})^{\gamma})}
&=
\set{\bigoplus_{m\in\monnat}D_m}
{D_m\in\opspbddlin{\sphilbfrak{k}^{\otimes_{\txtsym}m}},
\sup_{m\in\monnat}\norm{D_m}<\infty}.
\end{aligned}
\label{eq:paper-two-model-finite-mode-block-algebras}
\end{equation} The \(m\)th matrix block acts on \(\sphilbfrak{k}^{\otimes_{\txtsym}m}\) and has matrix size \(\binom{m+\abscard{A}-1}{m}\) in the continuum case or \(\binom{m+\abscard{\Lambda}-1}{m}\) in the graph case. The block norms in the first algebra of \eqref{eq:paper-two-model-finite-mode-block-algebras} tend to zero, while the second algebra contains every bounded block family, as in \eqref{eq:paper-main-fock-intrinsic-block-identification} and \eqref{eq:paper-main-finite-mode-product-isomorphism}.

For either specified one-particle space \(\sphilbfrak{h}\), the Fock image of the full Buchholz algebra consists of operators \(\bigoplus_{n\in\monnat}T_n\) on \(\fun{\spfockgamma_{\txtbsn}}{\sphilbfrak{h}}\), where each \(T_n\in\opspbddlin{\sphilbfrak{h}^{\otimes_{\txtsym}n}}\) satisfies \begin{equation}
\begin{aligned}
T_n
\in
\sum_{m=0}^{n}\fun{\widetilde{E}_{m,n}}{
\fun{\schattencls}{\sphilbfrak{h}^{\otimes_{\txtsym}m}}}
\,\rbk{n\in\monnat},
\quad
\fun{\kappa_n}{T_n}
=
T_{n-1}
\,\rbk{n\in\semigrposint},
\quad
\sup_{n\in\monnat}\norm{T_n}
<
\infty.
\end{aligned}
\label{eq:paper-two-model-full-buchholz-conditions}
\end{equation} These are the exact-particle algebras and connecting maps of \eqref{eq:paper-literature-fock-input}, assembled as in \eqref{eq:paper-main-fock-bounded-inverse-limit-identification}.

For a bounded block family \(\seq{D_m}{m\in\monnat}\) from the second algebra in \eqref{eq:paper-two-model-finite-mode-block-algebras}, the corresponding element of the full abstract Buchholz algebra is \begin{equation}
\begin{aligned}
\fun{\jmath}{
\faadj{U_{\sphilbfrak{k}}}
\rbk{\rbk{\bigoplus_{m\in\monnat}D_m}\otimes
1_{\fun{\spfockgamma_{\txtbsn}}{\sphilbfrak{k}^{\perp}}}}
U_{\sphilbfrak{k}}}
&\in
\fun{\oa{B}}{\sphilbfrak{h}}.
\end{aligned}
\label{eq:paper-two-model-global-finite-mode-element}
\end{equation} On \(\sphilbfrak{h}^{\otimes_{\txtsym}n}\), its Fock representative is, under the unitary \(W_{\sphilbfrak{k},n}\) of \eqref{eq:paper-main-particle-fock-decomposition-unitary}, the direct sum \(\bigoplus_{m=0}^{n}D_m\otimes1_{\rbk{\sphilbfrak{k}^{\perp}}^{\otimes_{\txtsym}\rbk{n-m}}}\). In the graph case, the global local map \eqref{eq:concrete-global-local-inclusion} in Proposition \ref{prop:concrete-spatial-localization-system} below shows that \eqref{eq:paper-two-model-global-finite-mode-element} belongs to the local algebra \(\oa{B}_{\Lambda}\) of \eqref{eq:paper-main-local-buchholz-algebra}.

\subsection{Inclusions of finite graph regions}\label{inclusions-of-finite-graph-regions}

In this subsection, the finite-mode embeddings are specialized to nested finite graph regions and shown to satisfy the spatial compatibility, commutativity, and density conditions.

The underlying concrete inclusions \(\oa{B}_X\subset\oa{B}_Y\) for \(X\subset Y\) and the distinction between local and quasilocal norm approximation are already described in the discussion following \cite[Proposition 2.4 and Example 2.5]{DeuchertLampartLemm001}. The proposition below records the points needed for the abstract construction but not supplied there: the formula in intrinsic block coordinates, functoriality of those coordinates, disjoint-range commutativity, and fixed-cutoff coefficient density.

Proposition \ref{prop:concrete-spatial-localization-system} is the only proposition in this subsection that assumes the finite-region data in \eqref{eq:paper-main-finite-region-directed-set}--\eqref{eq:paper-finite-region-one-particle-projection}.

\begin{prop}[Concrete local subalgebras along finite regions]\label{prop:concrete-spatial-localization-system}
Let $\Lambda,\Lambda''\in\setpowfin{\Gamma}$ be nonempty and satisfy $\Lambda\subset \Lambda''$.
Let
$P_{\sphilbfrak{h}_{\Lambda''},m}\colon \sphilbfrak{h}_{\Lambda''}^{\otimes m}\to
\sphilbfrak{h}_{\Lambda''}^{\otimes_{\txtsym}m}$ be the restriction to
$\sphilbfrak{h}_{\Lambda''}^{\otimes m}$ of
the symmetrizing projection $P_m$ in \eqref{eq:paper-main-symmetrizing-projection}.
For
$0\leq k\leq m$, $\xi\in \sphilbfrak{h}_{\Lambda}^{\otimes_{\txtsym}k}$, and
$\eta\in \sphilbfrak{h}_{\Lambda''\setminus \Lambda}^{\otimes_{\txtsym}\rbk{m-k}}$, define
$$\begin{aligned}
W_{\Lambda,\Lambda'',m}
\colon
\sphilbfrak{h}_{\Lambda''}^{\otimes_{\txtsym}m}
\to
\bigoplus_{k=0}^{m}
\sphilbfrak{h}_{\Lambda}^{\otimes_{\txtsym}k}\otimes
\sphilbfrak{h}_{\Lambda''\setminus \Lambda}^{\otimes_{\txtsym}\rbk{m-k}}.
\end{aligned}$$
Its inverse acts by
$$\begin{aligned}
\fun{\faadj{W_{\Lambda,\Lambda'',m}}}{\xi\otimes\eta}
&=
\binom{m}{k}^{1/2}P_{\sphilbfrak{h}_{\Lambda''},m}\rbk{\xi\otimes\eta}.
\end{aligned}$$
For $D=\seq{D_k}{k\in\monnat}\in\oa{M}(\oa{K}(\sphilbfrak{h}_{\Lambda})^{\gamma})$, the $m$th component of the local inclusion $i_{\Lambda,\Lambda''}$ is
\begin{equation}
\begin{aligned}
\rbk{\fun{i_{\Lambda,\Lambda''}}{D}}_m
&=
\fun{\pi_{\sphilbfrak{h}_{\Lambda''},m}^{-1}}{
\faadj{W_{\Lambda,\Lambda'',m}}
\rbk{
\bigoplus_{k=0}^{m}
\fun{\pi_{\sphilbfrak{h}_{\Lambda},k}}{D_k}\otimes1_{\sphilbfrak{h}_{\Lambda''\setminus \Lambda}^{\otimes_{\txtsym}\rbk{m-k}}}
}
W_{\Lambda,\Lambda'',m}}.
\end{aligned}
\label{eq:concrete-local-inclusion}
\end{equation}
Using the map $\diag_{\sphilbfrak{h}_{\Lambda}}$ in
\eqref{eq:paper-main-finite-mode-product-isomorphism}, the corresponding global map is
\begin{equation}
\begin{aligned}
i_{\Lambda,\Gamma}
\colon
\oa{M}(\oa{K}(\sphilbfrak{h}_{\Lambda})^{\gamma})
\to
\fun{\oa{B}}{\sphilbfrak{h}};\quad
\fun{i_{\Lambda,\Gamma}}{D}
&=
\fun{\jmath}{\faadj{U_{\sphilbfrak{h}_{\Lambda}}}\rbk{\fun{\diag_{\sphilbfrak{h}_{\Lambda}}}{D}\otimes1}U_{\sphilbfrak{h}_{\Lambda}}}.
\end{aligned}
\label{eq:concrete-global-local-inclusion}
\end{equation}
The maps in \eqref{eq:concrete-local-inclusion} and \eqref{eq:concrete-global-local-inclusion} are injective
unital $\ast$-homomorphisms and satisfy
$i_{\Lambda',\Lambda''}\circ i_{\Lambda,\Lambda'}=i_{\Lambda,\Lambda''}$ and
$i_{\Lambda',\Gamma}\circ i_{\Lambda,\Lambda'}=i_{\Lambda,\Gamma}$ whenever
$\Lambda\subset \Lambda'\subset \Lambda''$.
For disjoint finite regions, the global images of these maps commute.
For every $n\in\monnat$, their finite-region ranges are norm dense in $\oa{B}_n(\sphilbfrak{h})$ after applying $p_n$.
If $\sphilbfrak{f}\subset\sphilbfrak{h}_{\Lambda}$, the range of
$E_{m,n}\circ j_{\sphilbfrak{f},m}$ is contained in the range of $p_n\circ i_{\Lambda,\Gamma}$ for $0\leq m\leq n$.
\end{prop}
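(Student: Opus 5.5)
The plan is to transfer everything to the Fock side through the isometric $\ast$-isomorphisms $\diag_{\sphilbfrak{h}_{\Lambda}}$ of Proposition \ref{prop:finite-mode-product} and $\varpi_n$ of Theorem \ref{thm:paper-abstract-concrete-comparison}, and then verify each assertion there. The key observation is that, for $\Lambda\subset\Lambda''$, the Fock decomposition unitaries factor through the intermediate region. Concretely, $U_{\sphilbfrak{h}_{\Lambda}}$ followed on the $\Gamma\setminus\Lambda$ factor by nothing equals $U_{\sphilbfrak{h}_{\Lambda''}}$ followed by $(U^{(\Lambda'')}_{\sphilbfrak{h}_{\Lambda}}\otimes 1)$. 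Here $U^{(\Lambda'')}_{\sphilbfrak{h}_\Lambda}$ denotes the Fock-level decomposition $\fun{\spfockgamma_{\txtbsn}}{\sphilbfrak{h}_{\Lambda''}}\to\fun{\spfockgamma_{\txtbsn}}{\sphilbfrak{h}_{\Lambda}}\otimes\fun{\spfockgamma_{\txtbsn}}{\sphilbfrak{h}_{\Lambda''\setminus\Lambda}}$ built sectorwise from the $W_{\Lambda,\Lambda'',m}$. This is the standard exponential law $\fun{\spfockgamma_{\txtbsn}}{\sphilbfrak{a}\oplus\sphilbfrak{b}}\simeq\fun{\spfockgamma_{\txtbsn}}{\sphilbfrak{a}}\otimes\fun{\spfockgamma_{\txtbsn}}{\sphilbfrak{b}}$. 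One checks its associativity on symmetrized elementary tensors, where the binomial normalizations in \eqref{eq:paper-main-particle-fock-decomposition-unitary} multiply correctly: $\binom{n}{k}^{1/2}\binom{n-k}{l}^{1/2}$ versus $\binom{n}{k+l}^{1/2}\binom{k+l}{k}^{1/2}$, both being multinomial square roots.

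First, \eqref{eq:concrete-local-inclusion} is well defined. The bracketed operator preserves each exact $m$-particle space $\sphilbfrak{h}_{\Lambda''}^{\otimes_{\txtsym}m}$, which is finite dimensional, so $\pi^{-1}_{\sphilbfrak{h}_{\Lambda''},m}$ applies by Theorem \ref{thm:paper-fock-intrinsic-block-realization}. The map $i_{\Lambda,\Lambda''}$ is then a unital $\ast$-homomorphism, being a composition of $\diag$, $D\mapsto D\otimes 1$, unitary conjugation and $\diag^{-1}$. It is injective because the $k=m$ summand recovers $D_m$, as the summand $\sphilbfrak{h}_\Lambda^{\otimes_{\txtsym}m}\otimes\fldcmp$ is nonzero. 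For $i_{\Lambda,\Gamma}$, Section \ref{sec:paper-concrete-finite-mode-embeddings} already shows that $\jmath$ restricted to $\faadj{U_{\sphilbfrak{f}}}(\opspbddlin{\fun{\spfockgamma_{\txtbsn}}{\sphilbfrak{f}}}^{\gamma}\otimes1)U_{\sphilbfrak{f}}$ is an isometric $\ast$-homomorphism, and composition with $\diag_{\sphilbfrak{h}_\Lambda}$ preserves this. Unitality is clear. Injectivity follows from isometry of $\jmath$ on this algebra together with isometry of $D\mapsto D\otimes1$.

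The compatibility identities then follow from the factorization above. Tensoring $(D\otimes 1_{\Lambda'\setminus\Lambda})$ with $1_{\Lambda''\setminus\Lambda'}$ equals $D\otimes 1_{\Lambda''\setminus\Lambda}$ after the associativity unitaries, sector by sector. The same holds for $\Gamma$ in place of $\Lambda''$, now applied at the Fock-operator level before $\jmath$.

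For disjoint $\Lambda,\Lambda'$, set $\Lambda''=\Lambda\cup\Lambda'$, so that $\sphilbfrak{h}_{\Lambda''}=\sphilbfrak{h}_\Lambda\oplus\sphilbfrak{h}_{\Lambda'}$. Using $i_{\Lambda,\Gamma}=i_{\Lambda'',\Gamma}\circ i_{\Lambda,\Lambda''}$, it suffices to see that $D\otimes1$ and $1\otimes D'$ commute on $\fun{\spfockgamma_{\txtbsn}}{\sphilbfrak{h}_\Lambda}\otimes\fun{\spfockgamma_{\txtbsn}}{\sphilbfrak{h}_{\Lambda'}}$. This transfers because $i_{\Lambda'',\Gamma}$ is a homomorphism.

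For density at cutoff $n$, I would use the coefficient decomposition $\oa{B}_n=\bigoplus_{m\le n}E_{m,n}(\oa{C}_m)$ and the fact that $\bigcup_{\sphilbfrak{f}}j_{\sphilbfrak{f},m}(\cdots)$ is dense in $\oa{C}_m$. Every finite-dimensional $\sphilbfrak{f}$ is norm-approximated by subspaces of some $\sphilbfrak{h}_\Lambda$, because $\bigcup_\Lambda\sphilbfrak{h}_\Lambda$ is dense. Correspondingly, compact operators on $\sphilbfrak{h}^{\otimes_{\txtsym}m}$ are approximated by compacts supported on $\sphilbfrak{h}_\Lambda^{\otimes_{\txtsym}m}$, for instance by compressing with $P_\Lambda^{\otimes m}$, which converges strongly to $1$ so that compacts converge in norm. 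Boundedness of $E_{m,n}$ then reduces density to the last assertion, namely that $E_{m,n}(j_{\sphilbfrak{f},m}(K))$ lies in the range of $p_n\circ i_{\Lambda,\Gamma}$ when $\sphilbfrak{f}\subset\sphilbfrak{h}_\Lambda$. To prove it, I would take $c_m=\Theta_m j_{\sphilbfrak{f},m}(K)$, viewed as a compact on $\sphilbfrak{h}_\Lambda^{\otimes_{\txtsym}m}$, and set $c_j=0$ for $j\ne m$. The recursion of Section \ref{sec:paper-concrete-finite-mode-embeddings} is inverted by $D_k=E_{\sphilbfrak{h}_\Lambda,m,k}(c_m)$, which is bounded in $k$ only up to cutoff. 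I therefore choose $D_k=E_{\sphilbfrak{h}_\Lambda,m,k}(c_m)$ for $k\le n$ and $D_k=0$ for $k>n$. This family is bounded, and the block formula there shows $T_n=\widetilde{E}_{m,n}(c_m)$, hence $p_n(i_{\Lambda,\Gamma}(D))=E_{m,n}(j_{\sphilbfrak{f},m}(K))$.

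The main obstacle is this last step. The truncation $D_k=0$ for $k>n$ must not disturb the $n$-th coordinate, which holds because $T_n$ involves only $D_0,\dots,D_n$. In addition, the identification of $\widetilde{E}_{m,n}(c_m)$ on the $k$-block with $E_{\sphilbfrak{h}_\Lambda,m,k}(c_m)\otimes1$, including the binomial normalizations of $W$, must be checked carefully.
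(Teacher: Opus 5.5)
Your proposal is correct and follows essentially the paper's route. The paper's proof identifies the two formulas with the basis-indexed inclusions and then invokes Proposition~\ref{prop:paper-tensor-factor-localization-identities} (compatibility by the same multinomial normalization check, commutation because the operators act on different tensor factors) and Proposition~\ref{prop:paper-basis-local-coordinate-density} (finite-rank coefficient approximation on a common region, with the local blocks above degree $n$ set to zero), which are exactly your steps. The only difference is cosmetic: for disjoint regions you pass through $\Lambda\cup\Lambda'$ and the flip of $\fun{\spfockgamma_{\txtbsn}}{\sphilbfrak{h}_{\Lambda}}\otimes\fun{\spfockgamma_{\txtbsn}}{\sphilbfrak{h}_{\Lambda'}}$, whereas the paper uses the threefold decomposition $\sphilbfrak{h}=\sphilbfrak{h}_{\Lambda}\oplus\sphilbfrak{h}_{\Lambda'}\oplus\sphilbfrak{h}_{\Gamma\setminus(\Lambda\cup\Lambda')}$ directly.
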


The tensor-factor identities and the proof of Proposition \ref{prop:concrete-spatial-localization-system} are given in Appendix \ref{sec:paper-spatial-localization-system}.

\begin{cor}[Fock images of the local and quasilocal algebras]\label{cor:paper-fock-local-algebras}
For every nonempty $\Lambda\in\setpowfin{\Gamma}$, the local algebra \eqref{eq:paper-main-local-buchholz-algebra}
and the quasilocal algebra \eqref{eq:paper-main-quasilocal-algebra} satisfy
\begin{equation}
\begin{aligned}
\fun{\oarepn_{\txtfock}}{\oa{B}_{\Lambda}}
=
\faadj{U_{\sphilbfrak{h}_{\Lambda}}}
\rbk{
\opspbddlin{\fun{\spfockgamma_{\txtbsn}}{\sphilbfrak{h}_{\Lambda}}}^{\gamma}
\otimes1}
U_{\sphilbfrak{h}_{\Lambda}},
\quad
\fun{\oarepn_{\txtfock}}{\oa{B}_{\mathrm{ql}}}
=
\gtclos{
\bigcup_{\Lambda\in\setpowfin{\Gamma}}
\fun{\oarepn_{\txtfock}}{\oa{B}_{\Lambda}}
}^{\norm{\mathord{\cdot}}},
\end{aligned}
\label{eq:paper-main-fock-local-algebra-identification}
\end{equation}
where $U_{\sphilbfrak{h}_{\Lambda}}$ is the Fock decomposition unitary in
\eqref{eq:paper-main-fock-decomposition-unitary}.
\end{cor}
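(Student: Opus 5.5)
The plan is to read off the corollary from the definitions together with Proposition \ref{prop:concrete-spatial-localization-system} and Proposition \ref{prop:finite-mode-product}, with one genuine analytic point: the Fock image of a norm closure.

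\textbf{First identity.} Fix a nonempty $\Lambda\in\setpowfin{\Gamma}$. By \eqref{eq:paper-main-local-buchholz-algebra} and \eqref{eq:concrete-global-local-inclusion},
$$\fun{\oarepn_{\txtfock}}{\oa{B}_{\Lambda}}=\set{\fun{\oarepn_{\txtfock}\circ\jmath}{\faadj{U_{\sphilbfrak{h}_{\Lambda}}}\rbk{\fun{\diag_{\sphilbfrak{h}_{\Lambda}}}{D}\otimes1}U_{\sphilbfrak{h}_{\Lambda}}}}{D\in\oa{M}(\oa{K}(\sphilbfrak{h}_{\Lambda})^{\gamma})}.$$
Section \ref{sec:paper-concrete-finite-mode-embeddings} shows that every conjugated operator $\faadj{U_{\sphilbfrak{h}_{\Lambda}}}(D'\otimes1)U_{\sphilbfrak{h}_{\Lambda}}$ with $D'\in\opspbddlin{\fun{\spfockgamma_{\txtbsn}}{\sphilbfrak{h}_{\Lambda}}}^{\gamma}$ lies in $\oa{D}_{\txtfock}$. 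On $\oa{D}_{\txtfock}$ the identity $\oarepn_{\txtfock}\circ\jmath=\id$ holds, so $\oarepn_{\txtfock}\circ\jmath$ drops out and the set equals
$$\set{\faadj{U_{\sphilbfrak{h}_{\Lambda}}}\rbk{\fun{\diag_{\sphilbfrak{h}_{\Lambda}}}{D}\otimes1}U_{\sphilbfrak{h}_{\Lambda}}}{D\in\oa{M}(\oa{K}(\sphilbfrak{h}_{\Lambda})^{\gamma})}.$$
Proposition \ref{prop:finite-mode-product} says $\diag_{\sphilbfrak{h}_{\Lambda}}$ is surjective onto $\opspbddlin{\fun{\spfockgamma_{\txtbsn}}{\sphilbfrak{h}_{\Lambda}}}^{\gamma}$. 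Hence the set is exactly the conjugated tensor algebra on the right of \eqref{eq:paper-main-fock-local-algebra-identification}. For $\Lambda=\emptyset$ both sides are the scalars, since $\oarepn_{\txtfock}$ is unital.

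\textbf{Second identity.} The point is that $\oarepn_{\txtfock}$ commutes with taking the union and its norm closure. By Corollary \ref{cor:paper-fock-cutoff-realization}, $\oarepn_{\txtfock}$ is a faithful $\ast$-representation of $\fun{\oa{B}}{\sphilbfrak{h}}$, and an injective $\ast$-homomorphism between $\oacstar$-algebras is isometric. The argument then runs in three steps.
\begin{itemize}
\item The image of the union of the $\oa{B}_{\Lambda}$ is the union of their images, since images commute with unions.
\item Continuity of $\oarepn_{\txtfock}$ gives that the image of the closure $\oa{B}_{\mathrm{ql}}$ is contained in the closure of the image of the union.
\item For the reverse inclusion, take $T_j=\fun{\oarepn_{\txtfock}}{A_j}\to T$ with each $A_j$ in the union. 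Isometry makes $(A_j)$ Cauchy in $\fun{\oa{B}}{\sphilbfrak{h}}$, so $A_j\to A\in\oa{B}_{\mathrm{ql}}$, and continuity gives $T=\fun{\oarepn_{\txtfock}}{A}$.
\end{itemize}
This proves the second identity in \eqref{eq:paper-main-fock-local-algebra-identification}.

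\textbf{Main obstacle.} The only step needing care is the isometry in the reverse inclusion. Without faithfulness, and hence isometry, the image of the closure need not be closed.
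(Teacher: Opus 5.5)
Your proposal is correct and follows the paper's own argument. The first identity comes from the identity $\pi_{\mathrm{F}}\circ\jmath=\mathrm{id}$ on $\mathcal{D}_{\mathrm{F}}$, the formula \eqref{eq:concrete-global-local-inclusion} for $i_{\Lambda,\Gamma}$, and the surjectivity of $\operatorname{diag}_{\mathfrak{h}_{\Lambda}}$; the second comes from the isometry of the faithful Fock representation in Corollary~\ref{cor:paper-fock-cutoff-realization}, which carries the norm closure of the local union onto the closure of its image. You only make explicit two points the paper leaves implicit: that the conjugated operators lie in $\mathcal{D}_{\mathrm{F}}$ (Section~\ref{sec:paper-concrete-finite-mode-embeddings}), and the Cauchy-sequence argument for the reverse inclusion.
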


\begin{proof}
The identity $\oarepn_{\txtfock}\circ\jmath=\id$ on $\oa{D}_{\txtfock}$ in Section
\ref{sec:concrete-fock-comparison} and \eqref{eq:concrete-global-local-inclusion} give, for
$D\in\oa{M}(\oa{K}(\sphilbfrak{h}_{\Lambda})^{\gamma})$,
$$\begin{aligned}
\fun{\oarepn_{\txtfock}\circ i_{\Lambda,\Gamma}}{D}
&=
\faadj{U_{\sphilbfrak{h}_{\Lambda}}}
\rbk{\fun{\diag_{\sphilbfrak{h}_{\Lambda}}}{D}\otimes1}
U_{\sphilbfrak{h}_{\Lambda}}.
\end{aligned}$$
Surjectivity of $\diag_{\sphilbfrak{h}_{\Lambda}}$ in Proposition \ref{prop:finite-mode-product} gives the first
identity.
Corollary \ref{cor:paper-fock-cutoff-realization} shows that $\oarepn_{\txtfock}$ is an isometric
$\ast$-homomorphism.
An isometric $\ast$-homomorphism maps the norm closure of
$\bigcup_{\Lambda\in\setpowfin{\Gamma}}\oa{B}_{\Lambda}$ onto the norm closure of the union of the images, which
is the second identity.
\end{proof}

\section{Conclusion}\label{conclusion}

In this section, we summarize the representation-independent algebra, its finite-mode trace construction, and the scope of the concrete Fock and spatial results.

The abstract Buchholz algebra is the bounded inverse limit of particle-cutoff quotients constructed from the universal resolvent algebra, the canonical minimal ideals of resolvent subalgebras associated with finite-dimensional one-particle spaces, and the intrinsic particle-block system. No preferred representation enters this definition. Theorem \ref{thm:paper-particle-coefficient-buchholz-isomorphism} identifies this inverse limit isometrically with its particle-degree coefficient coordinates and gives their product. The cutoff quotients have the finite ideal chains and degree-coefficient subquotients of Theorem \ref{thm:paper-finite-stage-ideal-chain}.

The finite-dimensional matrix corners provide canonical traces, Gibbs states under trace summability, and closed-path expansions. Under the heat-kernel hypotheses of Corollary \ref{cor:paper-trace-summable-heat-kernel}, the Gibbs state restricts to a KMS state for the block dynamics by Proposition \ref{prop:paper-intrinsic-gibbs-kms-state}. Under potential superstability, a summable Gaussian-weighted kinetic heat trace, and sectorwise trace-norm Trotter convergence, Proposition \ref{prop:paper-gibbs-sector-product-formula} gives an all-sector product formula that permits unbounded kinetic generators. Together with potential superstability, polynomial corner growth and a linear kinetic norm bound give the finite-dimensional lattice criterion in Lemma \ref{lem:paper-summable-sector-majorant}. The closed-path identity of Proposition \ref{prop:paper-abstract-path-integral} is a particle-number cutoff limit; an absolutely convergent sum of individual path weights needs a separate absolute majorant. The probability-measure and reflection-positivity conclusions of Proposition \ref{prop:path-weight-probability-criterion} hold under its stated nonnegativity, time-reversal, and positivity assumptions on the path weights.

The comparison with bosonic Fock space uses the established exact-particle algebras, connecting maps, and bounded inverse limit from \cite[Lemmas 3.3 and 3.4 and Theorem 3.5]{DetlevBuchholz002}. It identifies the intrinsic cutoff quotients and particle coefficients with those concrete objects. Section \ref{sec:paper-continuum-graph-buchholz-realizations} gives finite cell subspaces of \(\fun{\lp^{2}}{\fldreal^{d}}\) and finite graph-region subspaces of \(\fun{\lpseq^{2}}{\Gamma}\) for the corresponding finite-mode block algebras and their embeddings in the full Buchholz algebra.

For a countable graph, Proposition \ref{prop:concrete-spatial-localization-system} verifies compatible local embeddings, commutation for disjoint regions, and density at every fixed particle cutoff. Their norm-closed union defines the quasilocal algebra within the full bounded inverse limit.

The closed-path expansions of Section \ref{sec:intrinsic-finite-volume} provide the finite-mode starting point for path integrals on the abstract Buchholz algebra. Whether automorphism groups and equilibrium states of interacting bosons can actually be constructed by path integrals without passing to the Fock representation is the subject of our next study.

\section{AI statement}\label{ai-statement}

Large language models were used to translate the author's draft into English and to proofread the English text. They were also used to adjust the overall organization of the paper, to check cross-references and bibliographic citations, and to check the details of the proofs for errors and inconsistencies. The author reviewed and verified all suggested changes and remains fully responsible for the mathematical content and the text of this paper.

\appendix

\section{Appendix}\label{appendix}

In this appendix, supplementary coefficient and Poisson calculations, cutoff-extension and spatial results, and representation, \(K\)-theory, and Busby-map computations are collected.

\subsection{Supplementary particle-coefficient and Poisson structures}\label{supplementary-particle-coefficient-and-poisson-structures}

In this appendix, unitary-symmetry refinements, the Rees deformation, the algebraic Poisson bracket, and the concrete Poisson nonclosure example are collected.

\subsubsection{Particle-coefficient reconstruction data compatible with unitary symmetries}\label{sec:paper-symmetric-particle-coefficient-reconstruction-data}

In this subsubsection, we extend the particle-coefficient reconstruction data to a unitary symmetry of \(\sphilbfrak{h}\) and identify the induced compatible actions.

Let \(\mathcal{G}\) be a group and let \[\begin{aligned}
u
\colon
\mathcal{G}
\to
\opspbddlin{\sphilbfrak{h}};\quad
\fun{u}{g}
&=
u_g
\end{aligned}\] be a unitary representation. Since \(u_g\) preserves \(\sigma\), universality gives \[\begin{aligned}
\alpha_g
\colon
\oaresolventalgebra(\sphilbfrak{h},\sigma)
\to
\oaresolventalgebra(\sphilbfrak{h},\sigma);\quad
\fun{\alpha_g}{\oaresolvent(\lambda,f)}
&=
\oaresolvent(\lambda,\fun{u_g}{f}).
\end{aligned}\] For every generator, \[\begin{aligned}
\fun{\alpha_g\circ\gamma_\theta}{\oaresolvent(\lambda,f)}
=
\fun{\oaresolvent}{\lambda,\fun{u_g}{\napiernum^{\imunit\theta}f}}
=
\fun{\oaresolvent}{\lambda,\napiernum^{\imunit\theta}\fun{u_g}{f}}
=
\fun{\gamma_\theta\circ\alpha_g}{\oaresolvent(\lambda,f)}.
\end{aligned}\] Thus \(\alpha_g\) preserves the gauge-fixed algebra. Assume the block covariance \begin{equation}
\begin{aligned}
\fun{\alpha_g}{e_{\sphilbfrak{f},m}}
&=
e_{\fun{u_g}{\sphilbfrak{f}},m}.
\end{aligned}
\label{eq:paper-group-block-covariance}
\end{equation} It gives \[\begin{aligned}
\fun{\alpha_g}{Q_{\sphilbfrak{f},>n}}
&=
1-\sum_{m=0}^{n}\fun{\alpha_g}{e_{\sphilbfrak{f},m}}
=
Q_{\fun{u_g}{\sphilbfrak{f}},>n}.
\end{aligned}\] For \(A,C\in\oaresolventalgebra(\sphilbfrak{h},\sigma)^{\gamma}\), \[\begin{aligned}
\fun{\alpha_g}{AQ_{\sphilbfrak{f},>n}C}
&=
\fun{\alpha_g}{A}Q_{\fun{u_g}{\sphilbfrak{f}},>n}\fun{\alpha_g}{C}
\in
\oaideal{I}_{>n}(\sphilbfrak{h}).
\end{aligned}\] Consequently \(\fun{\alpha_g}{\oaideal{I}_{>n}(\sphilbfrak{h})}\subset\oaideal{I}_{>n}(\sphilbfrak{h})\). Applying \(\alpha_g\) to the same inclusion for \(g^{-1}\) gives the reverse inclusion, so \[\begin{aligned}
\fun{\alpha_g}{\oaideal{I}_{>n}(\sphilbfrak{h})}
&=
\oaideal{I}_{>n}(\sphilbfrak{h}).
\end{aligned}\] The induced quotient automorphism is \begin{equation}
\begin{aligned}
\alpha_{g,n}
\colon
\oa{B}_n(\sphilbfrak{h})
\to
\oa{B}_n(\sphilbfrak{h});\quad
\fun{\alpha_{g,n}}{A+\oaideal{I}_{>n}(\sphilbfrak{h})}
&=
\fun{\alpha_g}{A}+\oaideal{I}_{>n}(\sphilbfrak{h}).
\end{aligned}
\label{eq:paper-group-particle-cutoff-automorphism}
\end{equation} For \(A,C\in\oaresolventalgebra(\sphilbfrak{h},\sigma)^{\gamma}\), the following identities prove that this map preserves products and adjoints: \[\begin{aligned}
\fun{\alpha_{g,n}}{
\rbk{A+\oaideal{I}_{>n}(\sphilbfrak{h})}
\rbk{C+\oaideal{I}_{>n}(\sphilbfrak{h})}}
&=
\fun{\alpha_g}{A}\fun{\alpha_g}{C}+\oaideal{I}_{>n}(\sphilbfrak{h}),
\\
\fun{\alpha_{g,n}}{\faadjrbk{A+\oaideal{I}_{>n}(\sphilbfrak{h})}}
&=
\faadj{\fun{\alpha_g}{A}}+\oaideal{I}_{>n}(\sphilbfrak{h}).
\end{aligned}\] For every \(A+\oaideal{I}_{>n}(\sphilbfrak{h})\in\oa{B}_n(\sphilbfrak{h})\), \[\begin{aligned}
\fun{\alpha_{g^{-1},n}\circ\alpha_{g,n}}{A+\oaideal{I}_{>n}(\sphilbfrak{h})}
=
\fun{\alpha_{g^{-1}}\circ\alpha_g}{A}+\oaideal{I}_{>n}(\sphilbfrak{h})
=
A+\oaideal{I}_{>n}(\sphilbfrak{h}).
\end{aligned}\] The reverse composition is the identity by the same calculation. Use the canonical injection \(j_{\sphilbfrak{f},m}\) into the inductive limit from \eqref{eq:paper-main-degree-coefficient-canonical-injection}. For every \(m\in\monnat\), define \[\begin{aligned}
\alpha_{\mathrm{coef},m,g}
\colon
\oa{C}_m(\sphilbfrak{h})
\to
\oa{C}_m(\sphilbfrak{h}).
\end{aligned}\] Its action on the canonical images is \[\begin{aligned}
\fun{\alpha_{\mathrm{coef},m,g} \circ j_{\sphilbfrak{f},m}}{K}
=
\fun{j_{\fun{u_g}{\sphilbfrak{f}},m} \circ \alpha_g}{K}.
\end{aligned}\] For \(\sphilbfrak{f}\subset \sphilbfrak{g}\), covariance of the connecting maps is calculated by \[\begin{aligned}
\fun{j_{\fun{u_g}{\sphilbfrak{f}},\fun{u_g}{\sphilbfrak{g}},m}\circ\alpha_g}{K}
&=
e_{\fun{u_g}{\sphilbfrak{g}},m}
\cdot
\fun{\iota_{\fun{u_g}{\sphilbfrak{f}},\fun{u_g}{\sphilbfrak{g}}}\circ\alpha_g}{K}
\cdot
e_{\fun{u_g}{\sphilbfrak{g}},m}
\\
&=
\fun{\alpha_g}{e_{\sphilbfrak{g},m}\fun{\iota_{\sphilbfrak{f},\sphilbfrak{g}}}{K}e_{\sphilbfrak{g},m}}
=
\fun{\alpha_g\circ j_{\sphilbfrak{f},\sphilbfrak{g},m}}{K}.
\end{aligned}\] Hence \(\alpha_{\mathrm{coef},m,g}\) is well-defined. For \(K\in e_{\sphilbfrak{f},m}\oa{K}(\sphilbfrak{f})^\gamma e_{\sphilbfrak{f},m}\) and \(L\in e_{\sphilbfrak{g},m}\oa{K}(\sphilbfrak{g})^\gamma e_{\sphilbfrak{g},m}\), preservation of products and adjoints by \(\alpha_g\) gives \[\begin{aligned}
\fun{\alpha_{\mathrm{coef},m,g}}{
\fun{j_{\sphilbfrak{f},m}}{K}\fun{j_{\sphilbfrak{g},m}}{L}}
&=
\fun{j_{\fun{u_g}{\sphilbfrak{f}+\sphilbfrak{g}},m} \circ \alpha_{g}}
{\fun{j_{\sphilbfrak{f},\sphilbfrak{f}+\sphilbfrak{g},m}}{K}
\cdot
\fun{j_{\sphilbfrak{g},\sphilbfrak{f}+\sphilbfrak{g},m}}{L}}
\\ 
&=
\fun{j_{\fun{u_g}{\sphilbfrak{f}+\sphilbfrak{g}},m}
\circ \alpha_g
\circ j_{\sphilbfrak{f},\sphilbfrak{f}+\sphilbfrak{g},m}}
{K}
\cdot
\fun{j_{\fun{u_g}{\sphilbfrak{f}+\sphilbfrak{g}},m}
\circ \alpha_g
\circ j_{\sphilbfrak{g},\sphilbfrak{f}+\sphilbfrak{g},m}}
{L},
\\ 
\fun{\alpha_{\mathrm{coef},m,g}}{\faadj{\fun{j_{\sphilbfrak{f},m}}{K}}}
&=
\fun{j_{\fun{u_g}{\sphilbfrak{f}},m}
\circ \alpha_g}
{\faadj{K}}
=
\faadjrbk{\fun{j_{\fun{u_g}{\sphilbfrak{f}},m} \circ \alpha_g}
{K}}.
\end{aligned}\] For every \(K\in e_{\sphilbfrak{f},m}\oa{K}(\sphilbfrak{f})^\gamma e_{\sphilbfrak{f},m}\), \[\begin{aligned}
\fun{\alpha_{\mathrm{coef},m,g^{-1}}\circ\alpha_{\mathrm{coef},m,g}
\circ
j_{\sphilbfrak{f},m}}
{K}
=
\fun{j_{\sphilbfrak{f},m} \circ \alpha_{g^{-1}} \circ \alpha_g}{K}
=
\fun{j_{\sphilbfrak{f},m}}{K}.
\end{aligned}\] The reverse composition is the identity by the same calculation. Consequently, the map \(\alpha_{\mathrm{coef},m,g}\) is a \(\ast\)-automorphism.

\begin{defn}[Particle-coefficient reconstruction data compatible with $\mathcal{G}$]\label{def:group-compatible-particle-coefficient-reconstruction-data}
The abstract particle-coefficient reconstruction data of Definition \ref{def:paper-abstract-particle-coefficient-reconstruction-data} are called compatible with $\mathcal{G}$ if
the intrinsic particle-block system satisfies \eqref{eq:paper-group-block-covariance} and, for every
$g\in\mathcal{G}$ and $0\leq m\leq n$, the quotient automorphism $\alpha_{g,n}$ defined in
\eqref{eq:paper-group-particle-cutoff-automorphism} satisfies
$$\begin{aligned}
\alpha_{g,n}\circ E_{m,n}
&=
E_{m,n}\circ\alpha_{\mathrm{coef},m,g}.
\end{aligned}$$
Here both sides are maps from
$\oa{C}_m(\sphilbfrak{h})$ to $\oa{B}_n(\sphilbfrak{h})$.
\end{defn}

\begin{prop}[Fixed particle-degree coefficients]\label{prop:paper-group-fixed-particle-coefficients}
Assume the compatibility condition in Definition \ref{def:group-compatible-particle-coefficient-reconstruction-data}.
If $A_n=\sum_{m=0}^{n}\fun{E_{m,n}}{a_m}$, then $\fun{\alpha_{g,n}}{A_n}=A_n$ for every $g\in\mathcal{G}$ if
and only if $\fun{\alpha_{\mathrm{coef},m,g}}{a_m}=a_m$ for every $g\in\mathcal{G}$ and $0\leq m\leq n$.
The overlap products are equivariant:
\begin{equation}
\begin{aligned}
\fun{\alpha_{\mathrm{coef},r,g}}{\fun{\mu_{p,q}^{r}}{a_p,b_q}}
&=
\fun{\mu_{p,q}^{r}}
{\fun{\alpha_{\mathrm{coef},p,g}}{a_p},\fun{\alpha_{\mathrm{coef},q,g}}{b_q}}.
\end{aligned}
\label{eq:paper-group-overlap-equivariance}
\end{equation}
Thus the bounded particle-coefficient families with $\mathcal{G}$-fixed coefficients form a unital $\ast$-subalgebra.
\end{prop}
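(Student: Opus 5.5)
The first assertion will follow from the compatibility identity $\alpha_{g,n}\circ E_{m,n}=E_{m,n}\circ\alpha_{\mathrm{coef},m,g}$ and the uniqueness of finite particle-degree coefficients in Lemma \ref{lem:paper-finite-particle-coefficient-uniqueness}. For $A_n=\sum_{m=0}^{n}\fun{E_{m,n}}{a_m}$, compatibility gives
$$\begin{aligned}
\fun{\alpha_{g,n}}{A_n}
&=
\sum_{m=0}^{n}\fun{E_{m,n}}{\fun{\alpha_{\mathrm{coef},m,g}}{a_m}}.
\end{aligned}$$
The coefficients $\fun{\alpha_{\mathrm{coef},m,g}}{a_m}$ lie in $\oa{C}_m(\sphilbfrak{h})$, so uniqueness shows that $\fun{\alpha_{g,n}}{A_n}=A_n$ holds exactly when $\fun{\alpha_{\mathrm{coef},m,g}}{a_m}=a_m$ for every $0\leq m\leq n$. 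Quantifying over $g$ gives the equivalence.

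For the equivariance \eqref{eq:paper-group-overlap-equivariance}, I will fix $p,q$ and an $r$ in the range \eqref{eq:paper-abstract-particle-coefficient-output-degree-range}, and choose $n\geq p+q$ so that every output degree $\max\setone{p,q}\leq s\leq p+q$ occurs in the overlap formula without truncation. Apply the $\ast$-automorphism $\alpha_{g,n}$ to the first identity in \eqref{eq:paper-abstract-particle-coefficient-overlap-products}. Multiplicativity and compatibility give, on the left,
$$\begin{aligned}
\fun{E_{p,n}}{\fun{\alpha_{\mathrm{coef},p,g}}{a_p}}\fun{E_{q,n}}{\fun{\alpha_{\mathrm{coef},q,g}}{b_q}}
&=
\sum_{s}\fun{E_{s,n}}{\fun{\mu_{p,q}^{s}}{\fun{\alpha_{\mathrm{coef},p,g}}{a_p},\fun{\alpha_{\mathrm{coef},q,g}}{b_q}}},
\end{aligned}$$
where the equality is the overlap formula applied to the transformed coefficients. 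On the right they give $\sum_{s}\fun{E_{s,n}}{\fun{\alpha_{\mathrm{coef},s,g}}{\fun{\mu_{p,q}^{s}}{a_p,b_q}}}$. Lemma \ref{lem:paper-finite-particle-coefficient-uniqueness} then compares the degree-$r$ coefficients, which yields \eqref{eq:paper-group-overlap-equivariance}. The only delicate point is this choice of $n$: it must be large enough that degree $r$ is not cut off by $\min\setone{p+q,n}$.

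For the subalgebra statement, define the coefficientwise action $(\alpha_g a)_r=\fun{\alpha_{\mathrm{coef},r,g}}{a_r}$ on $\oa{C}(\sphilbfrak{h})$. Equivariance, summed over the finitely many pairs $(p,q)$ in \eqref{eq:paper-main-particle-coefficient-operations}, gives $\alpha_g(a\star b)=\alpha_g a\star\alpha_g b$. Each $\alpha_{\mathrm{coef},r,g}$ is a $\ast$-automorphism, so $\alpha_g(\faadj{a})=\faadj{(\alpha_g a)}$. The unit $e$ is fixed because $\oa{C}_0(\sphilbfrak{h})=\fldcmp$ and a unital $\ast$-automorphism fixes scalars; the same argument, or the first assertion applied to $1=\fun{E_{0,n}}{1}$ together with $\alpha_{g,n}(1)=1$, gives fixedness of the degree-zero coefficient. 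Hence the fixed families are closed under $\star$, adjoints, and linear combinations, and they contain the unit.

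Boundedness is preserved as well. Compatibility shows that $\fun{\Pi_n}{\alpha_g a}=\fun{\alpha_{g,n}}{\fun{\Pi_n}{a}}$, and $\alpha_{g,n}$ is isometric, so $\norm{\alpha_g a}_{\mathrm{b}}=\norm{a}_{\mathrm{b}}$. Intersecting with $\oa{C}_{\mathrm{b}}(\sphilbfrak{h})$, which is a $\ast$-algebra under these operations by Theorem \ref{thm:paper-particle-coefficient-buchholz-isomorphism}, therefore gives a unital $\ast$-subalgebra.
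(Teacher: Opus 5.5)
Your proposal is correct and follows essentially the same route as the paper. You use the compatibility identity with Lemma~\ref{lem:paper-finite-particle-coefficient-uniqueness} for the fixed-coefficient equivalence, transport the overlap identity by $\alpha_{g,n}$ at a cutoff $n\geq p+q$ and compare degree-$r$ coefficients for the equivariance, and check the fixed condition coefficientwise under $\star$, adjoints and the unit. Your extra paragraph on norm preservation is correct but not needed, since the fixed families are taken inside $\mathcal{C}_{\mathrm{b}}(\mathfrak{h})$, which is already a $\ast$-algebra.
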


\begin{proof}
Equivariance gives
$$\begin{aligned}
\fun{\alpha_{g,n}}{A_n}
&=
\sum_{m=0}^{n}\fun{E_{m,n}}{\fun{\alpha_{\mathrm{coef},m,g}}{a_m}}.
\end{aligned}$$
Lemma \ref{lem:paper-finite-particle-coefficient-uniqueness} proves the equivalence coefficient by coefficient.
Apply $\alpha_{g,n}$ to the overlap-product identity and take $n\geq p+q$.
The first expansion is
$$\begin{aligned}
\sum_{r=\max\setone{p,q}}^{p+q}
\fun{E_{r,n}}{\fun{\alpha_{\mathrm{coef},r,g}}{\fun{\mu_{p,q}^{r}}{a_p,b_q}}}.
\end{aligned}$$
The second expansion is
$$\begin{aligned}
\sum_{r=\max\setone{p,q}}^{p+q}
\fun{E_{r,n}}{
\fun{\mu_{p,q}^{r}}{
\fun{\alpha_{\mathrm{coef},p,g}}{a_p},
\fun{\alpha_{\mathrm{coef},q,g}}{b_q}
}}.
\end{aligned}$$
Finite-coefficient uniqueness gives \eqref{eq:paper-group-overlap-equivariance} for every output degree.
If $a,b\in\oa{C}_{\mathrm{b}}(\sphilbfrak{h})$ have $\mathcal{G}$-fixed coefficients, then, for every
$r\in\monnat$ and $g\in\mathcal{G}$,
$$\begin{aligned}
&\fun{\alpha_{\mathrm{coef},r,g}}{\rbk{a\star b}_r}
=
\sum_{\substack{0\leq p,q\leq r\\p+q\geq r}}
\fun{\alpha_{\mathrm{coef},r,g}}{\fun{\mu_{p,q}^{r}}{a_p,b_q}}
\\
&=
\sum_{\substack{0\leq p,q\leq r\\p+q\geq r}}
\fun{\mu_{p,q}^{r}}
{\fun{\alpha_{\mathrm{coef},p,g}}{a_p},
\fun{\alpha_{\mathrm{coef},q,g}}{b_q}}
=
\rbk{a\star b}_r.
\end{aligned}$$
For adjoints, the coefficientwise adjoint identity in
\eqref{eq:paper-main-particle-coefficient-operations} and the fact that
$\alpha_{\mathrm{coef},r,g}$ is a $\ast$-automorphism give
$$\begin{aligned}
\fun{\alpha_{\mathrm{coef},r,g}}{\rbk{\faadj{a}}_r}
&=
\fun{\alpha_{\mathrm{coef},r,g}}{\faadj{a_r}}
=
\faadj{\fun{\alpha_{\mathrm{coef},r,g}}{a_r}}
=
\faadj{a_r}
=
\rbk{\faadj{a}}_r.
\end{aligned}$$
The unit sequence has degree-zero coefficient $1$ and every positive-degree coefficient equal to $0$.
The degree-zero automorphism fixes $1$, and linearity gives
$\fun{\alpha_{\mathrm{coef},r,g}}{0}=0$ for every $r\geq1$; hence the unit sequence is fixed.
\end{proof}

\subsubsection{Rees deformation and the algebraic Poisson bracket}\label{sec:paper-rees-poisson}

In this subsubsection, we use the output-degree range of the abstract particle-coefficient reconstruction data to construct the Rees deformation and its algebraic Poisson bracket; neither a lattice nor a graph is required.

For input degrees \(p\) and \(q\) and output degree \(r\), define the overlap number by \[\begin{aligned}
k
&=
p+q-r.
\end{aligned}\] The output-degree range \eqref{eq:paper-abstract-particle-coefficient-output-degree-range} is equivalent to \(0\leq k\leq\min\setone{p,q}\). Let \(\oa{C}_{\txtfin}(\sphilbfrak{h})\) be the algebraic direct-sum subspace of the particle-coefficient space \(\oa{C}(\sphilbfrak{h})\) in \eqref{eq:paper-main-particle-coefficient-space}, and define its increasing filtration by \[\begin{aligned}
\oa{C}_{\leq p}(\sphilbfrak{h})
&=
\set{a=\seq{a_m}{m\in\monnat}\in\oa{C}_{\txtfin}(\sphilbfrak{h})}
{a_m=0\text{ for every }m>p}
&&\rbk{p\geq0},
\\
\oa{C}_{\leq p}(\sphilbfrak{h})
&=
0
&&\rbk{p<0}.
\end{aligned}\] Take \(a\in\oa{C}_{\leq p}(\sphilbfrak{h})\) and \(b\in\oa{C}_{\leq q}(\sphilbfrak{h})\). If \(r>p+q\), no pair \(s\leq p\) and \(t\leq q\) satisfies \(s+t\geq r\), and hence \[\begin{aligned}
\rbk{a\star b}_r
&=
\sum_{\substack{0\leq s\leq\min\setone{p,r}\\0\leq t\leq\min\setone{q,r}\\s+t\geq r}}
\fun{\mu_{s,t}^{r}}{a_s,b_t}
=
0.
\end{aligned}\] Thus \(\oa{C}_{\leq p}(\sphilbfrak{h})\oa{C}_{\leq q}(\sphilbfrak{h})\subset
\oa{C}_{\leq p+q}(\sphilbfrak{h})\). For \(r=p+q\), the same index restrictions force \((s,t)=(p,q)\). The symmetry of the highest-degree overlap product gives \[\begin{aligned}
\rbk{a\star b-b\star a}_{p+q}
&=
\fun{\mu_{p,q}^{p+q}}{a_p,b_q}
-
\fun{\mu_{q,p}^{p+q}}{b_q,a_p}
=
0.
\end{aligned}\] The degree filtration therefore satisfies \[\begin{aligned}
\commutator{\oa{C}_{\leq p}(\sphilbfrak{h})}{\oa{C}_{\leq q}(\sphilbfrak{h})}
&\subset
\oa{C}_{\leq p+q-1}(\sphilbfrak{h}).
\end{aligned}\]

Let \(z\) be a central formal variable. The Rees algebra and its fiber at \(z=0\) are \[\begin{aligned}
\fun{\operatorname{Rees}}{\oa{C}_{\txtfin}(\sphilbfrak{h})}
&=
\bigoplus_{p\in\monnat}z^p\oa{C}_{\leq p}(\sphilbfrak{h}),
\\
\setquot{\fun{\operatorname{Rees}}{\oa{C}_{\txtfin}(\sphilbfrak{h})}}
{z\fun{\operatorname{Rees}}{\oa{C}_{\txtfin}(\sphilbfrak{h})}}
&\simeq
\bigoplus_{p\in\monnat}
\setquot{\oa{C}_{\leq p}(\sphilbfrak{h})}{\oa{C}_{\leq p-1}(\sphilbfrak{h})}.
\end{aligned}\] Write \(\fun{\operatorname{gr}}{\oa{C}_{\txtfin}(\sphilbfrak{h})}\) for the graded direct sum representing the fiber at \(z=0\). For \(p\geq0\), define the principal-symbol map by \[\begin{aligned}
\sigma_p
\colon
\oa{C}_{\leq p}(\sphilbfrak{h})
\to
\setquot{\oa{C}_{\leq p}(\sphilbfrak{h})}{\oa{C}_{\leq p-1}(\sphilbfrak{h})};\quad
\fun{\sigma_p}{a}
&=
a+\oa{C}_{\leq p-1}(\sphilbfrak{h}),
\end{aligned}\] and set \(\fun{\sigma_p}{a}=0\) for \(p<0\). Define \[\begin{aligned}
\physpoissonbkt{\mathord{\cdot},\mathord{\cdot}}
\colon
\fun{\operatorname{gr}}{\oa{C}_{\txtfin}(\sphilbfrak{h})}
\times
\fun{\operatorname{gr}}{\oa{C}_{\txtfin}(\sphilbfrak{h})}
\to
\fun{\operatorname{gr}}{\oa{C}_{\txtfin}(\sphilbfrak{h})};\quad
\physpoissonbkt{\fun{\sigma_p}{a},\fun{\sigma_q}{b}}
&=
\frac{1}{\imunit}
\fun{\sigma_{p+q-1}}{\commutator{a}{b}}.
\end{aligned}\]

\begin{prop}[Poisson identities for finitely supported coefficients]\label{prop:paper-graded-poisson}
The graded bracket $\physpoissonbkt{\mathord{\cdot},\mathord{\cdot}}$ is independent of the chosen representatives.
It makes $\fun{\operatorname{gr}}{\oa{C}_{\txtfin}(\sphilbfrak{h})}$ a Poisson algebra whose product is commutative and
whose bracket has degree $-1$.
\end{prop}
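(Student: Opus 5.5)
The plan is to deduce everything from the two filtration properties already obtained in the text. The first is the product inclusion $\oa{C}_{\leq p}(\sphilbfrak{h})\oa{C}_{\leq q}(\sphilbfrak{h})\subset\oa{C}_{\leq p+q}(\sphilbfrak{h})$. The second is the commutator inclusion $\commutator{\oa{C}_{\leq p}(\sphilbfrak{h})}{\oa{C}_{\leq q}(\sphilbfrak{h})}\subset\oa{C}_{\leq p+q-1}(\sphilbfrak{h})$. The algebraic input is that $\star$ restricted to $\oa{C}_{\txtfin}(\sphilbfrak{h})$ is an associative unital algebra. Associativity follows from Theorem \ref{thm:paper-particle-coefficient-buchholz-isomorphism}: its proof of associativity uses only finite-coefficient uniqueness (Lemma \ref{lem:paper-finite-particle-coefficient-uniqueness}) and applies verbatim to all of $\oa{C}(\sphilbfrak{h})$. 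Finitely supported sequences are closed under $\star$ by the product inclusion. With this, the proof is the standard passage from an almost commutative filtered algebra to its associated graded Poisson algebra.

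First I define the graded product by $\fun{\sigma_p}{a}\fun{\sigma_q}{b}=\fun{\sigma_{p+q}}{a\star b}$. This is well defined: changing $a$ by an element of $\oa{C}_{\leq p-1}(\sphilbfrak{h})$ changes $a\star b$ by an element of $\oa{C}_{\leq p+q-1}(\sphilbfrak{h})$. The product is associative and unital by associativity of $\star$ and $e\in\oa{C}_{\leq0}(\sphilbfrak{h})$. It is commutative because $a\star b-b\star a\in\oa{C}_{\leq p+q-1}(\sphilbfrak{h})$. For the bracket, replace $a$ by $a+a'$ with $a'\in\oa{C}_{\leq p-1}(\sphilbfrak{h})$. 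The commutator inclusion gives $\commutator{a'}{b}\in\oa{C}_{\leq p+q-2}(\sphilbfrak{h})$, so $\fun{\sigma_{p+q-1}}{\commutator{a'}{b}}=0$, and the same holds in $b$. In the edge cases $p=0$ or $q=0$ one has $\commutator{a}{b}\in\oa{C}_{\leq q-1}(\sphilbfrak{h})$ or $\oa{C}_{\leq p-1}(\sphilbfrak{h})$ respectively, since $\oa{C}_{\leq0}(\sphilbfrak{h})$ is central. This gives zero via the convention $\sigma_{-1}=0$, or via the lower degree. Extending bilinearly over homogeneous components defines the bracket on all of $\fun{\operatorname{gr}}{\oa{C}_{\txtfin}(\sphilbfrak{h})}$, and it visibly has degree $-1$.

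The Poisson identities then follow from identities in the associative algebra, read off at the correct degree. Antisymmetry and bilinearity are immediate. The Leibniz rule comes from $\commutator{a}{b\star c}=\commutator{a}{b}\star c+b\star\commutator{a}{c}$ with $a,b,c$ of degrees $p,q,t$: apply $\sigma_{p+q+t-1}$ and use that the graded product is computed by $\star$ on representatives. The Jacobi identity comes from the associative Jacobi identity for commutators. Here $\commutator{a}{\commutator{b}{c}}\in\oa{C}_{\leq p+q+t-2}(\sphilbfrak{h})$, and $\fun{\sigma_{p+q+t-2}}{\commutator{a}{\commutator{b}{c}}}=-\physpoissonbkt{\fun{\sigma_p}{a},\physpoissonbkt{\fun{\sigma_q}{b},\fun{\sigma_t}{c}}}$. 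The factor $(1/\imunit)^2=-1$ is uniform across the three terms.

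The only delicate step is this nested bracket in the Jacobi identity. $\commutator{b}{c}$ lies in $\oa{C}_{\leq q+t-1}(\sphilbfrak{h})$, but its principal symbol at that degree may vanish. I must therefore check that the outer bracket is evaluated with $\commutator{b}{c}$ regarded as a degree-$(q+t-1)$ representative. The well-definedness argument above covers this, since it never assumed that a representative's top coefficient is nonzero. The same care handles the degenerate cases where $q+t-1<0$.
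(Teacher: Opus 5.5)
Your proposal is correct and follows the paper's proof. Well-definedness and commutativity come from the commutator filtration $\commutator{\oa{C}_{\leq p}(\sphilbfrak{h})}{\oa{C}_{\leq q}(\sphilbfrak{h})}\subset\oa{C}_{\leq p+q-1}(\sphilbfrak{h})$, the Leibniz rule from $\commutator{a}{b\star c}=\commutator{a}{b}\star c+b\star\commutator{a}{c}$ read in degree $p+q+t-1$, and the Jacobi identity from the associative one read in degree $p+q+t-2$.

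Your additional remarks make explicit what the paper's proof leaves implicit. These are associativity of $\star$ on all of $\oa{C}(\sphilbfrak{h})$, well-definedness of the graded product, and the use of $\commutator{b}{c}$ as a possibly degenerate degree-$(q+t-1)$ representative in the nested bracket. Associativity rests on the evaluation identities of Lemma \ref{lem:paper-particle-coefficient-evaluation-identities} as well as on coefficient uniqueness, not on uniqueness alone.
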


\begin{proof}
Let $a'=a+u$ and $b'=b+v$, where $u\in\oa{C}_{\leq p-1}(\sphilbfrak{h})$ and
$v\in\oa{C}_{\leq q-1}(\sphilbfrak{h})$.
The filtration estimates give
$$\begin{aligned}
\commutator{u}{b},\ \commutator{a}{v}
\in
\oa{C}_{\leq p+q-2}(\sphilbfrak{h}),
\quad
\commutator{u}{v}
\in
\oa{C}_{\leq p+q-3}(\sphilbfrak{h})
\subset
\oa{C}_{\leq p+q-2}(\sphilbfrak{h}).
\end{aligned}$$
Consequently, the two candidate principal symbols differ by
$$\begin{aligned}
&
\frac{1}{\imunit}\fun{\sigma_{p+q-1}}{\commutator{a'}{b'}}
-
\frac{1}{\imunit}\fun{\sigma_{p+q-1}}{\commutator{a}{b}}
=
\frac{1}{\imunit}\fun{\sigma_{p+q-1}}{
\commutator{a'}{b'}-\commutator{a}{b}}
\\ 
&=
\frac{1}{\imunit}
\fun{\sigma_{p+q-1}}{
\commutator{u}{b}
+
\commutator{a}{v}
+
\commutator{u}{v}}
=
\frac{1}{\imunit}
\rbk{
\commutator{u}{b}
+
\commutator{a}{v}
+
\commutator{u}{v}
+
\oa{C}_{\leq p+q-2}(\sphilbfrak{h})}
=
0.
\end{aligned}$$
Thus the bracket is independent of the chosen representatives.
The product of two homogeneous principal symbols satisfies
$$\begin{aligned}
\fun{\sigma_p}{a}\fun{\sigma_q}{b}
-
\fun{\sigma_q}{b}\fun{\sigma_p}{a}
&=
\fun{\sigma_{p+q}}{ab-ba}
=
0,
\end{aligned}$$
so the graded product is commutative.
Antisymmetry follows by the explicit calculation
$$\begin{aligned}
\physpoissonbkt{\fun{\sigma_p}{a},\fun{\sigma_q}{b}}
&=
\frac{1}{\imunit}\fun{\sigma_{p+q-1}}{\commutator{a}{b}}
=
-\physpoissonbkt{\fun{\sigma_q}{b},\fun{\sigma_p}{a}}.
\end{aligned}$$
The commutator identity $\commutator{a}{bc}=\commutator{a}{b}c+b\commutator{a}{c}$ gives
$$\begin{aligned}
\physpoissonbkt{\fun{\sigma_p}{a},\fun{\sigma_{q+t}}{bc}}
&=
\frac{1}{\imunit}
\fun{\sigma_{p+q+t-1}}{\commutator{a}{b}c+b\commutator{a}{c}}
\\
&=
\physpoissonbkt{\fun{\sigma_p}{a},\fun{\sigma_q}{b}}\fun{\sigma_t}{c}
+
\fun{\sigma_q}{b}\physpoissonbkt{\fun{\sigma_p}{a},\fun{\sigma_t}{c}},
\end{aligned}$$
which is the Leibniz rule in the second variable.
Commutativity and antisymmetry give the rule in the first variable.
Finally, applying $\imunit^{-2}\sigma_{p+q+t-2}$ to
$$\begin{aligned}
\commutator{a}{\commutator{b}{c}}
+
\commutator{b}{\commutator{c}{a}}
+
\commutator{c}{\commutator{a}{b}}
&=
0
\end{aligned}$$
gives the Jacobi identity for the Poisson bracket.
\end{proof}

The full coefficient space carries the product topology defined by \[\begin{aligned}
\norm{a}_{[N]}
&=
\max_{0\leq m\leq N}\norm{a_m}
&&\rbk{N\in\monnat}.
\end{aligned}\] For \(z\in\fldcmp\), define the Rees product and the first-overlap bracket by \begin{equation}
\begin{aligned}
\star_z
\colon
\oa{C}(\sphilbfrak{h})\times\oa{C}(\sphilbfrak{h})
\to
\oa{C}(\sphilbfrak{h});\quad
\rbk{a\star_zb}_r
&=
\sum_{\substack{0\leq p,q\leq r\\p+q\geq r}}
z^{p+q-r}\fun{\mu_{p,q}^{r}}{a_p,b_q},
\end{aligned}
\label{eq:paper-completed-rees-product}
\end{equation} where \(0^0=1\), and \begin{equation}
\begin{aligned}
\physpoissonbkt{\mathord{\cdot},\mathord{\cdot}}_{\infty}
\colon
\oa{C}(\sphilbfrak{h})\times\oa{C}(\sphilbfrak{h})
\to
\oa{C}(\sphilbfrak{h});\quad
\rbk{\physpoissonbkt{a,b}_{\infty}}_r
&=
\frac{1}{\imunit}
\sum_{\substack{1\leq p,q\leq r+1\\p+q=r+1}}
\rbk{
\fun{\mu_{p,q}^{r}}{a_p,b_q}
-
\fun{\mu_{q,p}^{r}}{b_q,a_p}}.
\end{aligned}
\label{eq:paper-completed-poisson-bracket}
\end{equation} The sum in each output degree is finite, so both maps are well-defined on the full product.

\begin{prop}[Completed Rees product and Poisson limit]\label{prop:paper-completed-rees-poisson}
The product $\star_z$ is associative for every $z\in\fldcmp$.
For every $z\in\fldreal$ and $a,b\in\oa{C}(\sphilbfrak{h})$, it satisfies
\begin{equation}
\begin{aligned}
\faadjrbk{a\star_z b}
&=
\faadj{b}\star_z\faadj{a}.
\end{aligned}
\label{eq:paper-completed-rees-adjoint}
\end{equation}
The product $\star_0$ is commutative.
The bracket in \eqref{eq:paper-completed-poisson-bracket} is antisymmetric and satisfies the Jacobi and Leibniz
identities for $\star_0$.
For every $N\in\monnat$ and $R\geq0$, there are finite constants $C_{N,R}$ and $C_N^{\mathrm{P}}$ such that
\begin{equation}
\begin{aligned}
\norm{a\star_zb}_{[N]}
&\leq
C_{N,R}\norm{a}_{[N]}\norm{b}_{[N]}
&&\rbk{\abs{z}\leq R},
\\
\norm{\physpoissonbkt{a,b}_{\infty}}_{[N]}
&\leq
C_N^{\mathrm{P}}\norm{a}_{[N]}\norm{b}_{[N]}.
\end{aligned}
\label{eq:paper-completed-rees-seminorm-bounds}
\end{equation}
For real $z$ tending to zero,
\begin{equation}
\begin{aligned}
\lim_{z \to 0}
\frac{a\star_zb-b\star_za}{\imunit z}
=
\physpoissonbkt{a,b}_{\infty}
\end{aligned}
\label{eq:paper-rees-commutator-poisson-limit}
\end{equation}
in the product topology.
\end{prop}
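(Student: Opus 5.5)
The plan is to derive everything about $\star_z$ from the single polynomial identity in $z$ that associativity of $\star$ provides degree by degree. The main tool will be a grading by overlap number. In the full product $\oa{C}(\sphilbfrak{h})$, the degree-$r$ coefficient of any iterated product involves only finitely many input degrees. It therefore suffices to work with coefficients of bounded degree, that is, in $\oa{C}_{\txtfin}(\sphilbfrak{h})$ truncated at degree $N$. On $\oa{C}_{\txtfin}(\sphilbfrak{h})$ the product $\star$ is associative: the triple-product identity is proved in Theorem \ref{thm:paper-particle-coefficient-buchholz-isomorphism} through the evaluations $\Pi_n$ and finite-coefficient uniqueness, and that argument never uses boundedness. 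Writing associativity coefficientwise gives, for homogeneous $a_p,b_q,c_t$ and each output degree $r$,
\[
\sum_{s}\mu^{r}_{s,t}\bigl(\mu^{s}_{p,q}(a_p,b_q),c_t\bigr)=\sum_{s}\mu^{r}_{p,s}\bigl(a_p,\mu^{s}_{q,t}(b_q,c_t)\bigr).
\]

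The key observation is that every term on either side has total overlap $(p+q-s)+(s+t-r)=p+q+t-r$, which does not depend on $s$. Multiplying both sides by $z^{p+q+t-r}$ therefore yields exactly the homogeneous component of $(a\star_z b)\star_z c=a\star_z(b\star_z c)$ in degree $r$. Summing over the finitely many contributing $(p,q,t)$ gives associativity for every $z\in\fldcmp$.

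The adjoint identity \eqref{eq:paper-completed-rees-adjoint} follows in the same way from $\faadjrbk{a\star b}=\faadj{b}\star\faadj{a}$. Finite-coefficient uniqueness turns that identity into $\faadj{\mu^{r}_{p,q}(a_p,b_q)}=\mu^{r}_{q,p}(\faadj{b_q},\faadj{a_p})$. I would verify this by evaluating on homogeneous elements and comparing the terms of fixed $r$, which is legitimate because the $\Pi_n$ decomposition separates output degrees. For real $z$ the power $z^{p+q-r}$ is real and symmetric in $(p,q)$, so the identity survives multiplication by it.

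For $z=0$ only the terms with $r=p+q$ survive, and these are symmetric by \eqref{eq:paper-abstract-particle-coefficient-overlap-products}; hence $\star_0$ is commutative. The seminorm bounds \eqref{eq:paper-completed-rees-seminorm-bounds} are immediate. Each degree $r\le N$ is a finite sum of bounded bilinear maps with $p,q\le r\le N$, so one may take $C_{N,R}=\sum_{r\le N}\sum_{p,q}\norm{\mu^{r}_{p,q}}\max\setone{1,R}^{2N}$, and similarly for $C_N^{\mathrm P}$.

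For the Poisson limit \eqref{eq:paper-rees-commutator-poisson-limit}, I would expand $(a\star_z b-b\star_z a)_r$ by overlap number $k$. The $k=0$ term cancels by the top-degree symmetry. The $k=1$ term divided by $\imunit z$ is exactly $(\physpoissonbkt{a,b}_\infty)_r$. The remaining terms are $O(z)$ uniformly on the finitely many indices involved, which gives convergence in each seminorm $\norm{\cdot}_{[N]}$.

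The step I expect to be the main obstacle is Jacobi and Leibniz for $\physpoissonbkt{\cdot,\cdot}_\infty$. I would treat $\star_z$ as an associative family polynomial in $z$ in each degree, and write $\star_z=\star_0+z\,\beta+O(z^2)$, where $\beta$ collects the $k=1$ overlaps, so that $\physpoissonbkt{a,b}_\infty=\imunit^{-1}(\beta(a,b)-\beta(b,a))$. Expanding associativity to first order in $z$ gives the Hochschild cocycle identity for $\beta$ relative to $\star_0$. Its antisymmetrization yields Leibniz, because $\star_0$ is commutative; this is the standard deformation-quantization argument. For Jacobi, I would expand the $\star_z$-commutator Jacobi identity $[a,[b,c]_z]_z+\text{cyclic}=0$. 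Each commutator is $O(z)$, so the double commutators are $O(z^2)$, and the $z^2$ coefficient is precisely $-\bigl(\physpoissonbkt{a,\physpoissonbkt{b,c}_\infty}_\infty+\text{cyclic}\bigr)$. All of this is coefficientwise polynomial algebra in finitely many degrees, so no convergence issues arise; the care goes into tracking the degree bookkeeping.
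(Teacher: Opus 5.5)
Your proposal is correct. For associativity, the adjoint identity, commutativity of $\star_0$, the seminorm bounds and the limit \eqref{eq:paper-rees-commutator-poisson-limit}, it is the paper's argument in unfolded form. The paper conjugates by the rescaling $(D_z a)_m=z^m a_m$, so that $a\star_z b=D_z^{-1}(D_z a\star D_z b)$ for $z\neq0$, and then reaches $z=0$ by continuity of the coefficientwise polynomials. Your multiplication of the homogeneous identity by $z^{p+q+t-r}$ covers $z=0$ directly. Your remark that the associativity proof in Theorem \ref{thm:paper-particle-coefficient-buchholz-isomorphism} never uses boundedness is exactly what the paper needs implicitly, since $D_z a$ need not lie in $\oa{C}_{\mathrm{b}}(\sphilbfrak{h})$.

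The genuine difference is in the Leibniz and Jacobi identities. The paper writes the exact identities for $P_z(a,b)=(a\star_z b-b\star_z a)/(\imunit z)$ at $z\neq0$ and passes to $z\to0$ with explicit product-topology estimates, introducing a remainder constant $C_N^{\mathrm{E}}$. You instead compare coefficients of polynomials in $z$ degree by degree. Your route is purely algebraic and needs no estimates. The paper's route makes the bracket literally the semiclassical limit of the rescaled commutators and reuses \eqref{eq:paper-completed-rees-seminorm-bounds}.

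A small simplification of your Leibniz step is available. Rather than antisymmetrizing the Hochschild cocycle identity, which needs three instances of it combined, take the $z^1$ coefficient of $[a,b\star_z c]_z=[a,b]_z\star_z c+b\star_z[a,c]_z$. Here $[x,y]_z=x\star_z y-y\star_z x=\imunit z\physpoissonbkt{x,y}_{\infty}+O(z^2)$ degreewise, so this parallels your $z^2$ argument for Jacobi exactly.
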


\begin{proof}
For $z\neq0$, define
$$\begin{aligned}
D_z
\colon
\oa{C}(\sphilbfrak{h})
\to
\oa{C}(\sphilbfrak{h});\quad
\rbk{\fun{D_z}{a}}_m
&=
z^ma_m.
\end{aligned}$$
For the same fixed $z\in\fldcmp\setminus\setone{0}$, define the inverse map by
$$\begin{aligned}
D_z^{-1}
\colon
\oa{C}(\sphilbfrak{h})
\to
\oa{C}(\sphilbfrak{h});\quad
\rbk{\fun{D_z^{-1}}{a}}_m
&=
z^{-m}a_m.
\end{aligned}$$
Coefficientwise substitution in \eqref{eq:paper-completed-rees-product} gives
$$\begin{aligned}
a\star_zb
&=
\fun{D_z^{-1}}{\fun{D_z}{a}\star\fun{D_z}{b}}.
\end{aligned}$$
Theorem \ref{thm:paper-particle-coefficient-buchholz-isomorphism} gives associativity for $z\neq0$.
For fixed output degree $r$, substituting \eqref{eq:paper-completed-rees-product} twice gives
$$\begin{aligned}
\rbk{\rbk{a\star_zb}\star_zc-a\star_z\rbk{b\star_zc}}_r
&=
\sum_{\substack{0\leq s,t\leq r\\s+t\geq r}}
z^{s+t-r}
\rbk{
\fun{\mu_{s,t}^{r}}{\rbk{a\star_zb}_s,c_t}
-
\fun{\mu_{s,t}^{r}}{a_s,\rbk{b\star_zc}_t}
}.
\end{aligned}$$
Each inner coefficient is a finite polynomial in $z$.
The associator vanishes for $z\neq0$, and norm continuity of this polynomial gives
$$\begin{aligned}
\rbk{\rbk{a\star_0b}\star_0c-a\star_0\rbk{b\star_0c}}_r
&=
\lim_{z\to0}
\rbk{\rbk{a\star_zb}\star_zc-a\star_z\rbk{b\star_zc}}_r
=
0.
\end{aligned}$$
The symmetry of the zero-overlap product gives, for every $r\in\monnat$,
$$\begin{aligned}
\rbk{a\star_0b}_r
&=
\sum_{\substack{0\leq p,q\leq r\\p+q=r}}
\fun{\mu_{p,q}^{r}}{a_p,b_q}
=
\sum_{\substack{0\leq p,q\leq r\\p+q=r}}
\fun{\mu_{q,p}^{r}}{b_q,a_p}
=
\rbk{b\star_0a}_r.
\end{aligned}$$
For nonzero real $z$, the map $D_z$ preserves adjoints and hence
$$\begin{aligned}
\faadjrbk{a\star_zb}
&=
\fun{D_z^{-1}}{\faadjrbk{\fun{D_z}{a}\star\fun{D_z}{b}}}
=
\faadj{b}\star_z\faadj{a}.
\end{aligned}$$
The two sides are coefficientwise polynomials in the real variable $z$, so the identity extends to $z=0$.
This proves \eqref{eq:paper-completed-rees-adjoint}.

Write $\norm{\mu_{p,q}^{r}}$ for the bilinear operator norm and set
$$\begin{aligned}
C_{N,R}
&=
\max_{0\leq r\leq N}
\sum_{\substack{0\leq p,q\leq r\\p+q\geq r}}
R^{p+q-r}\norm{\mu_{p,q}^{r}},
\\
C_N^{\mathrm{P}}
&=
\max_{0\leq r\leq N}
\sum_{\substack{1\leq p,q\leq r+1\\p+q=r+1}}
\rbk{\norm{\mu_{p,q}^{r}}+\norm{\mu_{q,p}^{r}}}.
\end{aligned}$$
For $0\leq r\leq N$ and $\abs{z}\leq R$, the triangle inequality gives
$$\begin{aligned}
\norm{\rbk{a\star_zb}_r}
\leq
\sum_{\substack{0\leq p,q\leq r\\p+q\geq r}}
R^{p+q-r}\norm{\mu_{p,q}^{r}}\norm{a_p}\norm{b_q}
\leq
C_{N,R}\norm{a}_{[N]}\norm{b}_{[N]}.
\end{aligned}$$
Taking the maximum over $0\leq r\leq N$ proves
the first estimate in \eqref{eq:paper-completed-rees-seminorm-bounds}.
The bracket is estimated in the same way:
$$\begin{aligned}
\norm{\rbk{\physpoissonbkt{a,b}_{\infty}}_r}
\leq
\sum_{\substack{1\leq p,q\leq r+1\\p+q=r+1}}
\rbk{\norm{\mu_{p,q}^{r}}+\norm{\mu_{q,p}^{r}}}
\norm{a_p}\norm{b_q}
\leq
C_N^{\mathrm{P}}\norm{a}_{[N]}\norm{b}_{[N]}.
\end{aligned}$$
Taking the maximum over $0\leq r\leq N$ proves the second estimate in
\eqref{eq:paper-completed-rees-seminorm-bounds}.

For real $z\neq0$, define
$$\begin{aligned}
P_z
\colon
\oa{C}(\sphilbfrak{h})\times\oa{C}(\sphilbfrak{h})
\to
\oa{C}(\sphilbfrak{h});\quad
\fun{P_z}{a,b}
&=
\frac{a\star_zb-b\star_za}{\imunit z}.
\end{aligned}$$
The zero-overlap terms cancel by symmetry, and the first-overlap terms give
\eqref{eq:paper-completed-poisson-bracket}.
Thus the remainder in degree $r$ is
$$\begin{aligned}
\rbk{\fun{P_z}{a,b}-\physpoissonbkt{a,b}_{\infty}}_r
&=
\frac{1}{\imunit}
\sum_{\substack{0\leq p,q\leq r\\p+q\geq r+2}}
z^{p+q-r-1}
\rbk{
\fun{\mu_{p,q}^{r}}{a_p,b_q}
-
\fun{\mu_{q,p}^{r}}{b_q,a_p}}.
\end{aligned}$$
If $0<\abs{z}\leq1$, every exponent is at least one, and hence
$$\begin{aligned}
\norm{\rbk{\fun{P_z}{a,b}-\physpoissonbkt{a,b}_{\infty}}_r}
&\leq
\abs{z}
\sum_{\substack{0\leq p,q\leq r\\p+q\geq r+2}}
\rbk{\norm{\mu_{p,q}^{r}}+\norm{\mu_{q,p}^{r}}}
\norm{a_p}\norm{b_q}.
\end{aligned}$$
The maximum over $0\leq r\leq N$ tends to zero with $z$, which proves
\eqref{eq:paper-rees-commutator-poisson-limit}.

For $z\neq0$, associativity of $\star_z$ gives
$$\begin{aligned}
&\fun{P_z}{a,b\star_zc}
=
\fun{P_z}{a,b}\star_zc
+
b\star_z\fun{P_z}{a,c},
\\ 
&\fun{P_z}{a,\fun{P_z}{b,c}}
+
\fun{P_z}{b,\fun{P_z}{c,a}}
+
\fun{P_z}{c,\fun{P_z}{a,b}}
=
0.
\end{aligned}$$
Define the constant required for nested limits by
$$\begin{aligned}
C_N^{\mathrm{E}}
&=
\max_{0\leq r\leq N}
\sum_{\substack{0\leq p,q\leq r\\p+q\geq r+2}}
\rbk{\norm{\mu_{p,q}^{r}}+\norm{\mu_{q,p}^{r}}}.
\end{aligned}$$
The coefficient remainder estimate gives, for $0<\abs{z}\leq1$,
$$\begin{aligned}
\norm{\fun{P_z}{a,b}-\physpoissonbkt{a,b}_{\infty}}_{[N]}
&\leq
\abs{z}C_N^{\mathrm{E}}\norm{a}_{[N]}\norm{b}_{[N]},
\\
\norm{\fun{P_z}{a,b}}_{[N]}
&\leq
\rbk{C_N^{\mathrm{P}}+C_N^{\mathrm{E}}}
\norm{a}_{[N]}\norm{b}_{[N]}.
\end{aligned}$$
For every term of positive overlap in \eqref{eq:paper-completed-rees-product},
$\abs{z}^{p+q-r}\leq\abs{z}$, and therefore
$$\begin{aligned}
\norm{b\star_zc-b\star_0c}_{[N]}
&\leq
\abs{z}C_{N,1}\norm{b}_{[N]}\norm{c}_{[N]}.
\end{aligned}$$
The left-hand side of the Leibniz identity converges because
$$\begin{aligned}
&\norm{
\fun{P_z}{a,b\star_zc}
-
\physpoissonbkt{a,b\star_0c}_{\infty}
}_{[N]}
\\
&\leq
\rbk{C_N^{\mathrm{P}}+C_N^{\mathrm{E}}}
\norm{a}_{[N]}
\norm{b\star_zc-b\star_0c}_{[N]}
+
\abs{z}C_N^{\mathrm{E}}
\norm{a}_{[N]}
\norm{b\star_0c}_{[N]}
\xrightarrow{z\to0}
0.
\end{aligned}$$
The first term on the right-hand side converges because
$$\begin{aligned}
&\norm{\fun{P_z}{a,b}\star_zc
-
\physpoissonbkt{a,b}_{\infty}\star_0c
}_{[N]}
\\ 
&\leq
C_{N,1}
\norm{\fun{P_z}{a,b}-\physpoissonbkt{a,b}_{\infty}}_{[N]}
\norm{c}_{[N]}
+
\abs{z}C_{N,1}
\norm{\physpoissonbkt{a,b}_{\infty}}_{[N]}
\norm{c}_{[N]}
\xrightarrow{z \to 0}
0.
\end{aligned}$$
The second term on the right-hand side satisfies
$$\begin{aligned}
&\norm{b\star_z\fun{P_z}{a,c}
-
b\star_0\physpoissonbkt{a,c}_{\infty}}_{[N]}
\\ 
&\leq
C_{N,1}\norm{b}_{[N]}
\norm{\fun{P_z}{a,c}-\physpoissonbkt{a,c}_{\infty}}_{[N]}
+
\abs{z}C_{N,1}\norm{b}_{[N]}
\norm{\physpoissonbkt{a,c}_{\infty}}_{[N]}
\xrightarrow{z \to 0}
0.
\end{aligned}$$
Substitution of these three limits into the $P_z$ Leibniz identity proves the Leibniz rule.

The first cyclic term of the Jacobi identity satisfies
$$\begin{aligned}
&\norm{
\fun{P_z}{a,\fun{P_z}{b,c}}
-
\physpoissonbkt{a,\physpoissonbkt{b,c}_{\infty}}_{\infty}
}_{[N]}
\\ 
&\leq
\rbk{C_N^{\mathrm{P}}+C_N^{\mathrm{E}}}
\norm{a}_{[N]}
\norm{\fun{P_z}{b,c}-\physpoissonbkt{b,c}_{\infty}}_{[N]}
+
\abs{z}C_N^{\mathrm{E}}\norm{a}_{[N]}
\norm{\physpoissonbkt{b,c}_{\infty}}_{[N]}
\xrightarrow{z \to 0}
0.
\end{aligned}$$
The second cyclic term satisfies
$$\begin{aligned}
&\norm{
\fun{P_z}{b,\fun{P_z}{c,a}}
-
\physpoissonbkt{b,\physpoissonbkt{c,a}_{\infty}}_{\infty}
}_{[N]}
\\ 
&\leq
\rbk{C_N^{\mathrm{P}}+C_N^{\mathrm{E}}}
\norm{b}_{[N]}
\norm{\fun{P_z}{c,a}-\physpoissonbkt{c,a}_{\infty}}_{[N]}
+
\abs{z}C_N^{\mathrm{E}}\norm{b}_{[N]}
\norm{\physpoissonbkt{c,a}_{\infty}}_{[N]}
\xrightarrow{z \to 0}
0.
\end{aligned}$$
The third cyclic term satisfies
$$\begin{aligned}
&\norm{
\fun{P_z}{c,\fun{P_z}{a,b}}
-
\physpoissonbkt{c,\physpoissonbkt{a,b}_{\infty}}_{\infty}
}_{[N]}
\\ 
&\leq
\rbk{C_N^{\mathrm{P}}+C_N^{\mathrm{E}}}
\norm{c}_{[N]}
\norm{\fun{P_z}{a,b}-\physpoissonbkt{a,b}_{\infty}}_{[N]}
+
\abs{z}C_N^{\mathrm{E}}\norm{c}_{[N]}
\norm{\physpoissonbkt{a,b}_{\infty}}_{[N]}
\xrightarrow{z \to 0}
0.
\end{aligned}$$
Substitution into the $P_z$ Jacobi identity gives
$$\begin{aligned}
\physpoissonbkt{a,\physpoissonbkt{b,c}_{\infty}}_{\infty}
+
\physpoissonbkt{b,\physpoissonbkt{c,a}_{\infty}}_{\infty}
+
\physpoissonbkt{c,\physpoissonbkt{a,b}_{\infty}}_{\infty}
&=
0.
\end{aligned}$$
Finally, direct substitution in \eqref{eq:paper-completed-poisson-bracket} gives, for every $r\in\monnat$,
$$\begin{aligned}
\rbk{\physpoissonbkt{a,b}_{\infty}}_r
&=
\frac{1}{\imunit}
\sum_{\substack{1\leq p,q\leq r+1\\p+q=r+1}}
\rbk{
\fun{\mu_{p,q}^{r}}{a_p,b_q}
-
\fun{\mu_{q,p}^{r}}{b_q,a_p}
}
=
-\rbk{\physpoissonbkt{b,a}_{\infty}}_r.
\end{aligned}$$
\end{proof}

\begin{prop}[Closedness and boundedness of the Poisson bracket on bounded particle-degree coefficients]\label{prop:paper-bounded-particle-coefficient-poisson-closed-graph}
For every $\ell\in\monnat$, the coordinate evaluation $a\mapsto a_\ell$ on
$\oa{C}_{\mathrm{b}}(\sphilbfrak{h})$ is bounded.

\begin{enumerate}
\item For all $a,b\in\oa{C}_{\mathrm{b}}(\sphilbfrak{h})$, one has
$\physpoissonbkt{a,b}_{\infty}\in\oa{C}_{\mathrm{b}}(\sphilbfrak{h})$.

\item There is $C>0$ such that
$$\begin{aligned}
\norm{\physpoissonbkt{a,b}_{\infty}}_{\mathrm{b}}
&\leq
C\norm{a}_{\mathrm{b}}\norm{b}_{\mathrm{b}}
\end{aligned}$$
for all $a,b\in\oa{C}_{\mathrm{b}}(\sphilbfrak{h})$.
\end{enumerate}
These two statements are equivalent.
\end{prop}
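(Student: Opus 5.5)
The plan has three steps: bound the coordinate evaluations, use them to close the graph of the bracket in each variable, and then pass from separate to joint boundedness. I read the statement as asserting the boundedness of $a\mapsto a_\ell$ together with the \emph{equivalence} of the two items, not that either item holds unconditionally. The appendix announces a concrete Poisson nonclosure example, so the proof should not attempt to establish item 1 outright.

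\emph{Step 1: bounded coordinate evaluations.} I will first show that every $E_{m,n}$ is bounded below. Proposition \ref{prop:paper-particle-coefficient-map-injectivity} and its proof show that $E_{m,m}$ is an injective $\ast$-homomorphism between $\oacstar$-algebras, hence isometric. The connecting identities \eqref{eq:paper-abstract-particle-coefficient-connecting-identities} give $E_{m,m}=r_{m+1}\circ\cdots\circ r_n\circ E_{m,n}$, and the $r_k$ are contractive. Therefore $\norm{a_m}=\norm{\fun{E_{m,m}}{a_m}}\leq\norm{\fun{E_{m,n}}{a_m}}$.

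For $a\in\oa{C}_{\mathrm{b}}(\sphilbfrak{h})$, the relation $\fun{\Pi_\ell}{a}=\fun{p_\ell\circ\Upsilon}{a}$ gives $\norm{\fun{\Pi_\ell}{a}}\leq\norm{a}_{\mathrm{b}}$, and the same holds with $\ell$ replaced by any $m$. I will then use the triangular recursion
$$\fun{E_{\ell,\ell}}{a_\ell}=\fun{\Pi_\ell}{a}-\sum_{m<\ell}\fun{E_{m,\ell}}{a_m}.$$
By induction on $\ell$ this gives $\norm{a_\ell}\leq c_\ell\norm{a}_{\mathrm{b}}$, with $c_0=1$ and $c_\ell=1+\sum_{m<\ell}\norm{E_{m,\ell}}c_m$. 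The constants are finite because each $E_{m,\ell}$ is a bounded linear map by Definition \ref{def:paper-particle-coefficient-embedding-data}.

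\emph{Step 2: item 2 implies item 1.} This direction is immediate: if the inequality holds, then $\norm{\physpoissonbkt{a,b}_{\infty}}_{\mathrm{b}}$ is finite, so $\physpoissonbkt{a,b}_{\infty}\in\oa{C}_{\mathrm{b}}(\sphilbfrak{h})$.

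\emph{Step 3: item 1 implies item 2, by the closed graph theorem.} Assume item 1. Theorem \ref{thm:paper-particle-coefficient-buchholz-isomorphism} makes $\oa{C}_{\mathrm{b}}(\sphilbfrak{h})$ a Banach space. Fix $a$ and consider the linear map $L_a\colon b\mapsto\physpoissonbkt{a,b}_{\infty}$ on $\oa{C}_{\mathrm{b}}(\sphilbfrak{h})$.

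Suppose $b_j\to b$ and $L_ab_j\to c$ in $\norm{\mathord{\cdot}}_{\mathrm{b}}$. By Step 1 every coordinate converges. The degree-$r$ coefficient of the bracket in \eqref{eq:paper-completed-poisson-bracket} involves only finitely many coordinates through the bounded bilinear maps $\mu_{p,q}^{r}$; equivalently, the seminorm bound \eqref{eq:paper-completed-rees-seminorm-bounds} applies. Hence $c_r=(L_ab)_r$ for every $r$, so $c=L_ab$, the graph is closed, and $L_a$ is bounded. By antisymmetry (Proposition \ref{prop:paper-completed-rees-poisson}) the map $a\mapsto\physpoissonbkt{a,b}_{\infty}$ is bounded for each fixed $b$ as well.

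A separately continuous bilinear map on Banach spaces is jointly bounded. To see this, apply the uniform boundedness principle to the family $\set{L_a}{\norm{a}_{\mathrm{b}}\leq1}$: for each fixed $b$ the values $\norm{L_ab}_{\mathrm{b}}=\norm{\physpoissonbkt{b,a}_{\infty}}_{\mathrm{b}}$ are bounded over this family by continuity in the other variable. This yields the constant $C$ in item 2.

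The main obstacle is Step 1, specifically the lower bound for $E_{m,n}$. It uses the essential-ideal and simplicity argument of Proposition \ref{prop:paper-particle-coefficient-map-injectivity} to make $E_{m,m}$ a $\ast$-homomorphism, and hence isometric; this isometry is what lets coordinate convergence be deduced from convergence in $\norm{\mathord{\cdot}}_{\mathrm{b}}$. Once Step 1 is in place, the closed-graph step is routine.
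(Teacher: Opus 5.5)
Your proposal is correct and follows the paper's proof in every essential step. You bound the coordinate evaluations through the isometry of $E_{\ell,\ell}$ and the triangular recursion, note that item 2 implies item 1 trivially, and close the graph of $b\mapsto\physpoissonbkt{a,b}_{\infty}$ by coordinatewise convergence. The only divergence is the passage to a joint bound: the paper applies the closed graph theorem a second time, to $a\mapsto\rbk{b\mapsto\physpoissonbkt{a,b}_{\infty}}$ into the bounded operators on $\oa{C}_{\mathrm{b}}(\sphilbfrak{h})$, whereas you use antisymmetry and the uniform boundedness principle, which is equally valid.
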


\begin{proof}
Fix $\ell\in\monnat$.
Taking every input, output, and cutoff degree in
\eqref{eq:paper-abstract-particle-coefficient-overlap-products} to be $\ell$ gives
$$\begin{aligned}
\fun{E_{\ell,\ell}}{a_\ell}\fun{E_{\ell,\ell}}{b_\ell}
&=
\fun{E_{\ell,\ell}}{a_\ell b_\ell}.
\end{aligned}$$
The map $E_{\ell,\ell}$ is injective and preserves adjoints.
It is an injective $\ast$-homomorphism from the coefficient algebra onto $\Ker r_\ell$, and therefore it is
isometric.
For $\ell=0$, the coordinate evaluation satisfies
$$\begin{aligned}
a_0
&=
\fun{E_{0,0}^{-1}\circ\Pi_0}{a},
\end{aligned}$$
which proves its boundedness.
Assume inductively that $a\mapsto a_k$ is bounded for every $0\leq k<\ell$.
The degree-$\ell$ coefficient is recovered by
$$\begin{aligned}
a_\ell
&=
\fun{E_{\ell,\ell}^{-1}}{
\fun{\Pi_\ell}{a}
-
\sum_{m=0}^{\ell-1}\fun{E_{m,\ell}}{a_m}
}.
\end{aligned}$$
The estimate $\norm{\fun{\Pi_\ell}{a}}\leq\norm{a}_{\mathrm{b}}$ and finiteness of the sum prove that
$a\mapsto a_\ell$ is bounded.
Induction proves the assertion for every $\ell$.

The second condition implies the first.
Assume the first condition.
Suppose that $a_j\to a$, $b_j\to b$, and
$\physpoissonbkt{a_j,b_j}_{\infty}\to c$ in
$\norm{\mathord{\cdot}}_{\mathrm{b}}$ as $j\to\infty$.
For each $\ell\in\monnat$, continuity of the coordinate evaluation and the finite sum in
\eqref{eq:paper-completed-poisson-bracket} give
$$\begin{aligned}
c_\ell
&=
\lim_{j\to\infty}\rbk{\physpoissonbkt{a_j,b_j}_{\infty}}_\ell
=
\rbk{\physpoissonbkt{a,b}_{\infty}}_\ell.
\end{aligned}$$
Equality of every coefficient gives $c=\physpoissonbkt{a,b}_{\infty}$.
Thus the bilinear bracket has a closed graph.

Fix $a\in\oa{C}_{\mathrm{b}}(\sphilbfrak{h})$.
If $b_j\to b$ and $\physpoissonbkt{a,b_j}_{\infty}\to c$, apply the closed-graph conclusion for the bilinear
bracket $\physpoissonbkt{\mathord{\cdot},\mathord{\cdot}}_{\infty}$ to the constant sequence $a_j=a$.
It gives $c=\physpoissonbkt{a,b}_{\infty}$.
Thus $b\mapsto\physpoissonbkt{a,b}_{\infty}$ has a closed graph.
The bounded particle-coefficient algebra is Banach by Theorem
\ref{thm:paper-particle-coefficient-buchholz-isomorphism}.
The closed graph theorem proves that this linear map is bounded.

Consider the linear map
$$\begin{aligned}
\oa{C}_{\mathrm{b}}(\sphilbfrak{h})
\to
\opspbddlin{\oa{C}_{\mathrm{b}}(\sphilbfrak{h}),\oa{C}_{\mathrm{b}}(\sphilbfrak{h})};
\quad
a
\mapsto
\rbk{b\mapsto\physpoissonbkt{a,b}_{\infty}}.
\end{aligned}$$
Since $\oa{C}_{\mathrm{b}}(\sphilbfrak{h})$ is Banach,
$\opspbddlin{\oa{C}_{\mathrm{b}}(\sphilbfrak{h}),\oa{C}_{\mathrm{b}}(\sphilbfrak{h})}$
is Banach in the operator norm.

Suppose that $a_j\to a$ and the operators
$b\mapsto\physpoissonbkt{a_j,b}_{\infty}$ converge to $T$ in operator norm.
For every $b\in\oa{C}_{\mathrm{b}}(\sphilbfrak{h})$,
$$\begin{aligned}
\lim_{j \to \infty}
\physpoissonbkt{a_j,b}_{\infty}
=
\fun{T}{b}.
\end{aligned}$$
Closedness of the graph of the bracket gives
$\fun{T}{b}=\physpoissonbkt{a,b}_{\infty}$.
The operator-valued map therefore has a closed graph and hence is bounded by the closed graph theorem.
There is consequently $C>0$ such that
$$\begin{aligned}
\norm{\physpoissonbkt{a,b}_{\infty}}_{\mathrm{b}}
&\leq
C\norm{a}_{\mathrm{b}}\norm{b}_{\mathrm{b}}.
\end{aligned}$$
\end{proof}

\begin{rem}[Formal semiclassical interpretation and analytic limitations]\label{rem:paper-rees-poisson-topological-scope}
The bracket $\physpoissonbkt{\mathord{\cdot},\mathord{\cdot}}_{\infty}$ defined in
\eqref{eq:paper-completed-poisson-bracket} by the first-overlap terms $p+q-r=1$ defines a Poisson structure
for $\star_0$ on $\oa{C}(\sphilbfrak{h})$ and is continuous in the product topology.
The parameter $z$ records the overlap order: $\star_0$ is the zero-overlap product, while the rescaled
commutator in \eqref{eq:paper-rees-commutator-poisson-limit} extracts the first-overlap term.
In this precise filtered-algebra sense, the particle-overlap filtration carries a formal algebraic semiclassical
Poisson structure.
The word ``semiclassical'' is used here only for this formal interpretation.
No definition of strict deformation quantization is adopted here.
The present construction does not introduce a continuous field of $\oacstar$-algebras, does not equip its fibers
with $z$-dependent $\oacstar$-norms, and does not verify the analytic axioms of such a theory.
In particular, product-topological continuity does not imply that the bracket preserves the bounded
particle-coefficient algebra
$\oa{C}_{\mathrm{b}}(\sphilbfrak{h})$, which realizes the abstract Buchholz algebra with its $\oacstar$-norm.
Proposition \ref{prop:paper-bounded-particle-coefficient-poisson-closed-graph} shows that closure of this subalgebra
under the bracket is equivalent to a bounded bilinear estimate.
After the concrete coefficient maps have been constructed, Theorem
\ref{thm:paper-bounded-particle-coefficient-poisson-nonclosure} gives an explicit counterexample and proves that
$\oa{C}_{\mathrm{b}}(\sphilbfrak{h})$ is not closed under the bracket.
\end{rem}

\subsubsection{Concrete counterexample to bounded Poisson closure}\label{concrete-counterexample-to-bounded-poisson-closure}

In this subsubsection, a concrete coefficient sequence shows that the bounded particle-coefficient algebra is not closed under the first-overlap Poisson bracket.

The counterexample uses the maps \(\Theta_m\) and \(\varpi_n\) in \eqref{eq:paper-main-fock-cutoff-identification} only through the concrete realization in Theorem \ref{thm:paper-concrete-particle-coefficient-reconstruction-data}. None of the representation-independent definitions or proofs depends on this counterexample.

\begin{thm}[Failure of Poisson closure for bounded particle-degree coefficients]\label{thm:paper-bounded-particle-coefficient-poisson-nonclosure}
For every separable infinite-dimensional complex Hilbert space $\sphilbfrak{h}$, there are
$a,b\in\oa{C}_{\mathrm{b}}(\sphilbfrak{h})$ such that
$$\begin{aligned}
\physpoissonbkt{a,b}_{\infty}
&\notin
\oa{C}_{\mathrm{b}}(\sphilbfrak{h}).
\end{aligned}$$
\end{thm}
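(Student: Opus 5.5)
The plan is to build an explicit counterexample in the Fock realization, using the coordinates of Theorem \ref{thm:paper-concrete-particle-coefficient-reconstruction-data}. Through $\Theta_m$, a coefficient $a_m$ is a compact operator on $\sphilbfrak{h}^{\otimes_{\txtsym}m}$, and $\fun{\varpi_n}{\fun{\Pi_n}{a}}=\sum_{m\le n}\fun{\widetilde{E}_{m,n}}{\fun{\Theta_m}{a_m}}$. The natural bounded elements are second quantizations of unitaries that differ from $1$ only on a finite-dimensional subspace.

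Fix orthonormal $f,g\in\sphilbfrak{h}$ and unitaries $u,v$ on $\sphilbfrak{h}$ that act as the identity on $\cbk{f,g}^{\perp}$, and set $X=u-1$ and $Y=v-1$. Both have finite rank. The binomial expansion of $u^{\otimes n}=(1+X)^{\otimes n}$ over subsets of tensor factors is exactly the subset-sum formula \eqref{eq:paper-main-subset-sum-embedding}. Hence the sequence $a$ with $\fun{\Theta_m}{a_m}=P_m X^{\otimes m}P_m$ satisfies $\fun{\varpi_n}{\fun{\Pi_n}{a}}=\fnrestr{u^{\otimes n}}{\sphilbfrak{h}^{\otimes_{\txtsym}n}}$. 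Since this has norm $1$, we get $a\in\oa{C}_{\mathrm{b}}(\sphilbfrak{h})$, and likewise $b$ built from $Y$.

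The key computation uses the dilation $D_z$ from the proof of Proposition \ref{prop:paper-completed-rees-poisson}. Since $D_z a$ has coefficients $z^mX^{\otimes m}$, it is the coefficient sequence of $\Gamma(1+zX)$; here $\Gamma(w)$ for an operator $w$ with $w-1$ of finite rank denotes the compatible family $\fnrestr{w^{\otimes n}}{\sphilbfrak{h}^{\otimes_{\txtsym}n}}$, whose coefficients are $(w-1)^{\otimes m}$. The same subset-sum identity holds for such non-unitary $w$, so at the level of coefficient sequences one has $\Gamma(w)\star\Gamma(w')=\Gamma(ww')$ without any boundedness. This gives
\[
a\star_z b=D_z^{-1}\Gamma\rbk{(1+zX)(1+zY)},
\]
so $\rbk{a\star_zb}_m=(X+Y+zXY)^{\otimes m}$, symmetrized. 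Dividing the commutator by $\imunit z$ and letting $z\to0$ via \eqref{eq:paper-rees-commutator-poisson-limit} yields
\[
\rbk{\physpoissonbkt{a,b}_{\infty}}_m=\frac{1}{\imunit}\sum_{k=0}^{m-1}(X+Y)^{\otimes k}\otimes C\otimes(X+Y)^{\otimes(m-1-k)},
\qquad C=\commutator{X}{Y},
\]
symmetrized. Put $A=1+X+Y=u+v-1$. Resumming with the subset-sum formula once more gives
\[
\fun{\varpi_n}{\fun{\Pi_n}{\physpoissonbkt{a,b}_{\infty}}}=\frac{1}{\imunit}\frac{d}{dt}\Big|_{t=0}\rbk{A+tC}^{\otimes n},
\]
restricted to the symmetric subspace. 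This equals $\imunit^{-1}\sum_{k}A^{\otimes k}\otimes C\otimes A^{\otimes(n-1-k)}$.

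It remains to show that this sequence is unbounded in $n$. If $\xi$ is a unit eigenvector of $A$ with eigenvalue $\lambda$, then $\xi^{\otimes n}$ is symmetric and
\[
\bkt{\xi^{\otimes n}}{\fun{\Pi_n}{\physpoissonbkt{a,b}_{\infty}}\xi^{\otimes n}}
=\frac{1}{\imunit}\,n\lambda^{n-1}\bkt{\xi}{C\xi}.
\]
This is unbounded as soon as $\abs{\lambda}\ge1$ and $\bkt{\xi}{C\xi}\neq0$, which contradicts membership in $\oa{C}_{\mathrm{b}}(\sphilbfrak{h})$. A candidate choice in the basis $(f,g)$ is $u=\diag(1,-1)$ and $v$ the swap of $f$ and $g$. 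Then $A=\begin{pmatrix}0&1\\1&-2\end{pmatrix}$ has eigenvalues $-1\pm\sqrt2$, one of modulus greater than $1$.

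The main obstacle I expect is this last finite-dimensional check. For the choice above, $A$ is real symmetric with real eigenvector $\xi$, while $C$ is a real commutator of symmetric matrices and hence antisymmetric, so $\bkt{\xi}{C\xi}=0$. The candidate therefore fails, and I would perturb it, for instance by taking $u=\diag(1,\napiernum^{\imunit\theta})$ or a swap carrying a complex phase, and verify $\bkt{\xi}{C\xi}\ne0$ directly.

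If no simple eigenvector works, I would instead use $\norm{\sum_k A^{\otimes k}\otimes C\otimes A^{\otimes(n-1-k)}}\ge n\,\abs{\bkt{\zeta}{C\zeta'}}$ for product vectors built from two eigenvectors with eigenvalues of modulus $1$, in the case where $A$ is chosen unitary. Alternatively, Proposition \ref{prop:paper-bounded-particle-coefficient-poisson-closed-graph} lets it suffice to violate the bilinear bound with a family $(u_j,v_j)$, possibly acting on a growing number of modes.
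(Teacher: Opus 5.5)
Everything up to the sector formula for the bracket is correct and matches the paper. You use the same finite-rank perturbations of the identity and the same coefficients, and you obtain the same reconstructed operator $\frac{1}{\mathsf{i}}\sum_{k}A^{\otimes k}\otimes C\otimes A^{\otimes(n-1-k)}$ on $\mathfrak{h}^{\otimes_{\mathrm{s}}n}$. You derive that formula through the dilation $D_z$ and the coefficient identity $\Gamma(w)\star\Gamma(w')=\Gamma(ww')$, where the paper counts subset pairs with $|A\cap B|=1$ directly. Your shortcut is legitimate, because $\Pi_n$ is multiplicative on all of $\mathcal{C}(\mathfrak{h})$ and the cutoff-$n$ coefficients are unique.

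\textbf{The gap is the last step.} You reject your only concrete candidate and replace it with perturbations you never check. As written, no pair $a,b$ is actually shown to have an unbounded bracket.

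The rejection itself is the mistake. The choice $u=\operatorname{diag}(1,-1)$ with $v$ the swap is exactly the paper's counterexample, up to exchanging the roles of $a$ and $b$. The vanishing of $\langle\xi,C\xi\rangle$ only kills the diagonal matrix element; the fix is to estimate the norm of the image of $\xi^{\otimes n}$ instead.

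Take a real unit eigenvector $\xi$ of $A=\begin{pmatrix}0&1\\1&-2\end{pmatrix}$ with eigenvalue $\lambda=-1-\sqrt{2}$. The matrix $C=[u,v]=\begin{pmatrix}0&2\\-2&0\end{pmatrix}$ then satisfies $C\xi\perp\xi$ and $\lVert C\xi\rVert=2$. Hence the $n$ vectors $\xi^{\otimes k}\otimes C\xi\otimes\xi^{\otimes(n-1-k)}$ are pairwise orthogonal. Since $\varpi_n$ is isometric,
\[
\lVert\Pi_n(\{a,b\}_{\infty})\rVert
\geq
|\lambda|^{n-1}\Bigl\lVert\sum_{k=0}^{n-1}\xi^{\otimes k}\otimes C\xi\otimes\xi^{\otimes(n-1-k)}\Bigr\rVert
=
2\sqrt{n}\,(1+\sqrt{2})^{n-1}
\to\infty .
\]
With this one estimate, your argument becomes the paper's proof.

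Your phase perturbation $u=\operatorname{diag}(1,\mathrm{e}^{\mathrm{i}\theta})$ would also work. In that case $\langle\xi,[u,v]\xi\rangle=2\mathrm{i}(1-\mathrm{e}^{\mathrm{i}\theta})|\xi_1|^{2}\operatorname{Im}\lambda$, which is nonzero for the eigenvalue of modulus at least one when $\theta=\pi/2$. But that verification is exactly what your proposal leaves out, and it is not needed.
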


\begin{proof}
Choose orthonormal vectors $e_1,e_2\in\sphilbfrak{h}$ and define
$\sphilbfrak{f}=\splinspan\setone{e_1,e_2}$.
Define self-adjoint unitaries by
$$\begin{aligned}
v,w
\colon
\sphilbfrak{h}
\to
\sphilbfrak{h};\quad
\fnrestr{v}{\sphilbfrak{f}}
&=
\begin{pmatrix}
0&1\\
1&0
\end{pmatrix},
&
\fnrestr{w}{\sphilbfrak{f}}
&=
\begin{pmatrix}
1&0\\
0&-1
\end{pmatrix},
&
\fnrestr{v}{\sphilbfrak{f}^{\perp}}
&=
\fnrestr{w}{\sphilbfrak{f}^{\perp}}
=
1.
\end{aligned}$$
The operators $K=v-1$ and $L=w-1$ have finite rank.
For $m\in\monnat$, define
$$\begin{aligned}
a_m
=
\fun{\Theta_m^{-1}}{K^{\otimes_{\txtsym}m}},
\quad
b_m
=
\fun{\Theta_m^{-1}}{L^{\otimes_{\txtsym}m}}.
\end{aligned}$$
Assemble these coefficients into the sequences
$a = \seq{a_m}{m\in\monnat}$ and $b = \seq{b_m}{m\in\monnat}$,
where $K^{\otimes_{\txtsym}0}=L^{\otimes_{\txtsym}0}=1$.
For each subset $A\subset\setone{1,\ldots,n}$, let $K^{[A]}$ denote $K$ on the tensor factors indexed by
$A$ and the identity on the remaining factors.
The map $\varpi_n$ in
\eqref{eq:paper-main-fock-cutoff-identification} and the subset expansion give
$$\begin{aligned}
\fun{\varpi_n\circ\Pi_n}{a}
&=
\sum_{A\subset\setone{1,\ldots,n}}K^{[A]}
=
\rbk{1+K}^{\otimes_{\txtsym}n}
=
v^{\otimes_{\txtsym}n},
\\
\fun{\varpi_n\circ\Pi_n}{b}
&=
\sum_{B\subset\setone{1,\ldots,n}}L^{[B]}
=
\rbk{1+L}^{\otimes_{\txtsym}n}
=
w^{\otimes_{\txtsym}n}.
\end{aligned}$$
Thus $a,b\in\oa{C}_{\mathrm{b}}(\sphilbfrak{h})$ and
$\norm{a}_{\mathrm{b}}=\norm{b}_{\mathrm{b}}=1$.

Define
$$\begin{aligned}
T
&=
v+w-1,
&
C
&=
\frac{\commutator{v}{w}}{\imunit}.
\end{aligned}$$
The terms in \eqref{eq:paper-completed-poisson-bracket} indexed by subsets
$A,B\subset\setone{1,\ldots,n}$ with $\abscard{A\cap B}=1$ are evaluated on the $n$-particle sector.
Every such pair has a unique representation
$$\begin{aligned}
A
&=
\setone{\ell}\cup I,
&
B
&=
\setone{\ell}\cup J,
\end{aligned}$$
where $\ell\in\setone{1,\ldots,n}$ and
$I,J\subset\setone{1,\ldots,n}\setminus\setone{\ell}$ are disjoint.
The complete subset sum is
$$\begin{aligned}
&\sum_{\substack{A,B\subset\setone{1,\ldots,n}\\\abscard{A\cap B}=1}}
C^{[A\cap B]}K^{[A\setminus B]}L^{[B\setminus A]}
=
\sum_{\ell=1}^{n}
\sum_{\substack{I,J\subset\setone{1,\ldots,n}\setminus\setone{\ell}\\I\cap J=\emptyset}}
C^{[\setone{\ell}]}K^{[I]}L^{[J]}
\\ 
&=
\sum_{\ell=1}^{n}
T^{\otimes\rbk{\ell-1}}
\otimes C\otimes
T^{\otimes\rbk{n-\ell}}.
\end{aligned}$$
For every particle label other than $\ell$, the three choices contribute respectively $1$, $K$, and $L$;
their sum is $1+K+L=T$.
The map $\varpi_n$ in \eqref{eq:paper-main-fock-cutoff-identification} identifies the reconstructed bracket as
\begin{equation}
\begin{aligned}
\fun{\varpi_n\circ\Pi_n}{\physpoissonbkt{a,b}_{\infty}}
&=
\fnrestr{
\sum_{\ell=1}^{n}
T^{\otimes\rbk{\ell-1}}
\otimes C\otimes
T^{\otimes\rbk{n-\ell}}
}{\sphilbfrak{h}^{\otimes_{\txtsym}n}}.
\end{aligned}
\label{eq:paper-pauli-poisson-coordinate-growth}
\end{equation}
On $\sphilbfrak{f}$, direct matrix multiplication gives
$$\begin{aligned}
\fnrestr{T}{\sphilbfrak{f}}
&=
\begin{pmatrix}
0&1\\
1&-2
\end{pmatrix},
&
\fnrestr{C}{\sphilbfrak{f}}
&=
\begin{pmatrix}
0&2\imunit\\
-2\imunit&0
\end{pmatrix}.
\end{aligned}$$
Let $e\in \sphilbfrak{f}$ be a real unit eigenvector of $T$ for $t=-1-\sqrt{2}$.
Direct multiplication of the matrices defining $T$ and $C$ gives
$$\begin{aligned}
Te
&=
te,
&
\bkt{e}{Ce}_{\sphilbfrak{f}}
&=
0,
&
\norm{Ce}_{\sphilbfrak{f}}
&=
2.
\end{aligned}$$
Applying \eqref{eq:paper-pauli-poisson-coordinate-growth}, whose left side uses $\varpi_n$ from
\eqref{eq:paper-main-fock-cutoff-identification}, to $e^{\otimes n}$ gives
$$\begin{aligned}
\fun{\varpi_n\circ\Pi_n}{\physpoissonbkt{a,b}_{\infty}}e^{\otimes n}
&=
t^{n-1}\sum_{\ell=1}^{n}
e^{\otimes\rbk{\ell-1}}\otimes Ce\otimes e^{\otimes\rbk{n-\ell}}.
\end{aligned}$$
The $n$ summands are pairwise orthogonal because $Ce\perp e$.
The isometry of $\varpi_n$ in \eqref{eq:paper-main-fock-cutoff-identification} gives
$$\begin{aligned}
&\norm{\fun{\Pi_n}{\physpoissonbkt{a,b}_{\infty}}}
=
\norm{\fun{\varpi_n\circ\Pi_n}{\physpoissonbkt{a,b}_{\infty}}}
\\
&\geq
\norm{\fun{\varpi_n\circ\Pi_n}{\physpoissonbkt{a,b}_{\infty}}e^{\otimes n}}
=
\abs{t}^{n-1}\sqrt{\sum_{\ell=1}^{n}\norm{Ce}^{2}}
=
2\sqrt{n}\rbk{1+\sqrt{2}}^{n-1}.
\end{aligned}$$
The right-hand side diverges as $n\to\infty$, which proves
$\physpoissonbkt{a,b}_{\infty}\notin\oa{C}_{\mathrm{b}}(\sphilbfrak{h})$.
\end{proof}

\subsection{Particle-cutoff extensions}\label{sec:paper-particle-cutoff-ideal-extension-appendix}

In this appendix, Proposition \ref{prop:busby-reconstruction} reconstructs the essential particle-cutoff extensions in Theorem \ref{thm:paper-finite-stage-ideal-chain} as the pullbacks \eqref{eq:busby-pullback} determined by the Busby maps \eqref{eq:busby-map}.

For an arbitrary essential extension \eqref{eq:paper-main-essential-extension} with \(\oa{J}\neq0\), the multiplier algebra in Definition \ref{def:multiplier-algebra} receives the double-centralizer action \eqref{eq:busby-double-centralizer-action}, and the corona quotient \eqref{eq:busby-corona-quotient-map} converts this action into the Busby map \eqref{eq:busby-map}. In the concrete Fock realization, Theorem \ref{thm:paper-concrete-particle-coefficient-reconstruction-data} identifies the kernel \(\oa{J}_n^{(n)}\) with the compact-operator algebra \(\fun{\schattencls}{\sphilbfrak{h}^{\otimes_{\txtsym}n}}\). Each particle-cutoff extension is therefore an extension by compact operators in that realization. The compact-operator identification is not an assumption of the abstract reconstruction, which requires only the essential extension \eqref{eq:paper-main-essential-extension} and uses no graph or lattice data.

\begin{defn}[Multiplier algebra]\label{def:multiplier-algebra}
Let $\oa{J}$ be a nonzero $\oacstar$-algebra.
A double centralizer of $\oa{J}$ is a pair $\rbk{L,R}$ of bounded linear maps from $\oa{J}$ to $\oa{J}$ such
that
$$\begin{aligned}
\fun{L}{xy}
=
\fun{L}{x}y,
\quad
\fun{R}{xy}
=
x\fun{R}{y},
\quad
x\fun{L}{y}
=
\fun{R}{x}y
\end{aligned}$$
for all $x,y\in\oa{J}$.
The multiplier algebra $\oa{M}(\oa{J})$ is the set of these double centralizers.
Its product is
$$\begin{aligned}
\rbk{L_1,R_1}\rbk{L_2,R_2}
&=
\rbk{L_1\circ L_2,R_2\circ R_1}.
\end{aligned}$$
The adjoint of $\rbk{L,R}$ is $\rbk{L^{\sharp},R^{\sharp}}$, where
$$\begin{aligned}
L^{\sharp}
\colon
\oa{J}
\to
\oa{J};\quad
\fun{L^{\sharp}}{x}
&=
\faadjrbk{\fun{R}{\faadj{x}}}.
\end{aligned}$$
The adjoint right multiplier is
$$\begin{aligned}
R^{\sharp}
\colon
\oa{J}
\to
\oa{J};\quad
\fun{R^{\sharp}}{x}
&=
\faadjrbk{\fun{L}{\faadj{x}}}.
\end{aligned}$$
Addition and scalar multiplication are componentwise, and the norm is
$$\begin{aligned}
\norm{\rbk{L,R}}_{\oa{M}(\oa{J})}
&=
\max\setone{\norm{L},\norm{R}}.
\end{aligned}$$
For $j\in\oa{J}$, define
$$\begin{aligned}
L_j
\colon
\oa{J}
\to
\oa{J};\quad
\fun{L_j}{x}
&=
jx.
\end{aligned}$$
The right multiplication map is
$$\begin{aligned}
R_j
\colon
\oa{J}
\to
\oa{J};\quad
\fun{R_j}{x}
&=
xj.
\end{aligned}$$
The canonical embedding is
\begin{equation}
\begin{aligned}
\lambda_{\oa{J}}
\colon
\oa{J}
\to
\oa{M}(\oa{J});\quad
\fun{\lambda_{\oa{J}}}{j}
&=
\rbk{L_j,R_j}.
\end{aligned}
\label{eq:multiplier-canonical-embedding}
\end{equation}
\end{defn}

\begin{lem}[Multiplier algebra of the block direct sum]\label{lem:paper-intrinsic-block-multiplier-algebra}
Fix a nonzero finite-dimensional complex subspace $\sphilbfrak{k}\subset\sphilbfrak{h}$.
For the block algebra $\oa{K}(\sphilbfrak{k})^{\gamma}$ in
\eqref{eq:paper-main-finite-mode-block-algebra}, the double-centralizer multiplier algebra of Definition
\ref{def:multiplier-algebra} is canonically isometrically $\ast$-isomorphic to the bounded block product in
\eqref{eq:paper-intrinsic-block-multiplier-algebra}.
Under this identification, multiplication and adjoints are coordinatewise.
\end{lem}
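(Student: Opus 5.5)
We identify $\oa{K}(\sphilbfrak{k})^{\gamma}$ with the $c_0$-direct sum $\bigoplus_{n\in\monnat}\oa{A}_n$, where $\oa{A}_n=e_{\sphilbfrak{k},n}\oa{K}(\sphilbfrak{k})^{\gamma}e_{\sphilbfrak{k},n}$. This uses the unique norm-convergent decomposition \eqref{eq:paper-main-finite-mode-block-algebra} and the orthogonality of the block projections: products and adjoints are coordinatewise, and the norm is $\sup_n\norm{A_n}$. Each $\oa{A}_n$ is a unital full matrix algebra with unit $e_{\sphilbfrak{k},n}$.

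The map in one direction is
$$\Phi\colon \oa{M}(\oa{K}(\sphilbfrak{k})^{\gamma})\to\oa{M}(\oa{J}),\qquad \fun{\Phi}{D}=\rbk{L_D,R_D},$$
where $\fun{L_D}{x}=\seq{D_nx_n}{n\in\monnat}$ and $\fun{R_D}{x}=\seq{x_nD_n}{n\in\monnat}$. Here the left-hand algebra is the bounded block product \eqref{eq:paper-intrinsic-block-multiplier-algebra} and $\oa{J}=\oa{K}(\sphilbfrak{k})^{\gamma}$. Since $\norm{D_nx_n}\le\sup_m\norm{D_m}\,\norm{x_n}\to0$, the maps $L_D$ and $R_D$ send $\oa{J}$ into $\oa{J}$. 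The three double-centralizer identities hold coordinatewise. The map $\Phi$ is a $\ast$-homomorphism: composition gives products, and $L_D^{\sharp}=L_{D^{\ast}}$ by a coordinatewise check. Testing $L_D$ on $x=e_{\sphilbfrak{k},n}$ gives $\norm{L_D}\ge\norm{D_n}$, while the reverse bound is immediate. Hence $\norm{\Phi(D)}=\sup_n\norm{D_n}$, and $\Phi$ is isometric.

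For surjectivity, take a double centralizer $(L,R)$ and set $D_n=\fun{L}{e_{\sphilbfrak{k},n}}$. The first step is to show $D_n\in\oa{A}_n$. From $\fun{L}{e_n}=\fun{L}{e_ne_n}=\fun{L}{e_n}e_n$ we get right support in $e_n$. For the left support, compute
$$e_m\fun{L}{e_n}=\fun{R}{e_m}e_n .$$
Symmetrically, $\fun{R}{e_m}=e_m\fun{R}{e_m}$, and $\fun{R}{e_m}e_n=\fun{R}{e_m}e_me_n$ vanishes for $m\ne n$ once $\fun{R}{e_m}=\fun{R}{e_m}e_m$ is shown. That last identity follows from $x\fun{L}{y}=\fun{R}{x}y$ by choosing $y$ in $\oa{A}_n$ and using the approximate unit $E_N=\sum_{n\le N}e_n$. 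So $D_n=e_nD_ne_n$, and $\norm{D_n}\le\norm{L}$, giving $D\in\oa{M}(\oa{K}(\sphilbfrak{k})^{\gamma})$.

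Next, $L=L_D$ on each block: for $x_n\in\oa{A}_n$, $\fun{L}{x_n}=\fun{L}{e_nx_n}=D_nx_n$. By linearity and continuity this extends to finite truncations and then, via $E_N$, to all of $\oa{J}$. Similarly $R=R_{D'}$ for $D'_n=\fun{R}{e_n}$. The identity $e_n\fun{L}{e_n}=\fun{R}{e_n}e_n$ gives $D=D'$. This block-support step is the only real obstacle; everything else is coordinatewise bookkeeping. Canonicity of the identification is clear because $\Phi$ is defined without choices, and it extends $\lambda_{\oa{J}}$ in \eqref{eq:multiplier-canonical-embedding} along the inclusion $\oa{J}\subset\oa{M}(\oa{K}(\sphilbfrak{k})^{\gamma})$.
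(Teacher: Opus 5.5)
Your proposal is correct and follows the paper's proof. Both use the coordinatewise pair $(L_D,R_D)$, compute the norm by evaluating on the corner units $e_n$, and obtain surjectivity from $D_n=L(e_n)$, extended from single blocks to all of $\mathcal{K}(\mathfrak{k})^{\gamma}$ by continuity. The only difference is the block-support step, which is easier than you make it: your own coordinatewise identification makes each $e_n$ central in $\mathcal{K}(\mathfrak{k})^{\gamma}$, so the paper's chain $L(e_n)=L(e_n)e_n=e_nL(e_n)=R(e_n)e_n=R(e_n)$ gives both $D_n\in e_n\mathcal{K}(\mathfrak{k})^{\gamma}e_n$ and $D=D'$ at once. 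By contrast, your derivation of $R(e_m)=R(e_m)e_m$ from $xL(y)=R(x)y$ and $E_N$, read literally, needs $e_mL(e_k)=0$ for $k\neq m$, which is the left support you are trying to prove, unless you fall back on that same centrality.
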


\begin{proof}
Let $A=\seq{A_n}{n\in\monnat}$ be a bounded block family and let
$B=\seq{B_n}{n\in\monnat}\in\oa{K}(\sphilbfrak{k})^{\gamma}$.
The block actions associated with $A$ are
$$\begin{aligned}
L_A
\colon
\oa{K}(\sphilbfrak{k})^{\gamma}
\to
\oa{K}(\sphilbfrak{k})^{\gamma};\quad
\fun{L_A}{B}
&=
\seq{A_nB_n}{n\in\monnat},
\\
R_A
\colon
\oa{K}(\sphilbfrak{k})^{\gamma}
\to
\oa{K}(\sphilbfrak{k})^{\gamma};\quad
\fun{R_A}{B}
&=
\seq{B_nA_n}{n\in\monnat}.
\end{aligned}$$
The component limits as $n\to\infty$ and the operator-norm bounds are
$$\begin{aligned}
\norm{A_nB_n}
&\leq
\rbk{\sup_{k\in\monnat}\norm{A_k}}\norm{B_n}
\to
0,
\\
\norm{B_nA_n}
&\leq
\rbk{\sup_{k\in\monnat}\norm{A_k}}\norm{B_n}
\to
0,
\\
\norm{\fun{L_A}{B}}
&\leq
\rbk{\sup_{n\in\monnat}\norm{A_n}}\norm{B},
\\
\norm{\fun{R_A}{B}}
&\leq
\rbk{\sup_{n\in\monnat}\norm{A_n}}\norm{B}.
\end{aligned}$$
These estimates show that both maps are bounded and preserve the norm-null condition in
\eqref{eq:paper-main-finite-mode-block-algebra}.
Since $\fun{L_A}{e_{\sphilbfrak{k},n}}=\fun{R_A}{e_{\sphilbfrak{k},n}}=A_n$ and
$\norm{e_{\sphilbfrak{k},n}}=1$, evaluation on the corner units gives
$$\begin{aligned}
\norm{L_A}
&=
\norm{R_A}
=
\sup_{n\in\monnat}\norm{A_n}.
\end{aligned}$$
For $B,C\in\oa{K}(\sphilbfrak{k})^{\gamma}$, coordinatewise multiplication gives
$$\begin{aligned}
\rbk{\fun{L_A}{BC}}_n
&=
A_nB_nC_n
=
\rbk{\fun{L_A}{B}C}_n,
\\
\rbk{\fun{R_A}{BC}}_n
&=
B_nC_nA_n
=
\rbk{B\fun{R_A}{C}}_n,
\\
\rbk{B\fun{L_A}{C}}_n
&=
B_nA_nC_n
=
\rbk{\fun{R_A}{B}C}_n.
\end{aligned}$$
These identities show that $\rbk{L_A,R_A}$ is a double centralizer in the sense of Definition
\ref{def:multiplier-algebra}.

Conversely, let $\rbk{L,R}\in\oa{M}(\oa{K}(\sphilbfrak{k})^{\gamma})$ be a double centralizer.
For $B=\seq{B_n}{n\in\monnat}\in\oa{K}(\sphilbfrak{k})^{\gamma}$, the block decomposition
\eqref{eq:paper-main-finite-mode-block-algebra} gives
$e_{\sphilbfrak{k},n}B=B_n=Be_{\sphilbfrak{k},n}$.
This identity makes every $e_{\sphilbfrak{k},n}$ central in $\oa{K}(\sphilbfrak{k})^{\gamma}$.
The double-centralizer identities give
$$\begin{aligned}
\fun{L}{e_{\sphilbfrak{k},n}}
=
\fun{L}{e_{\sphilbfrak{k},n}}e_{\sphilbfrak{k},n}
=
e_{\sphilbfrak{k},n}\fun{L}{e_{\sphilbfrak{k},n}}
=
\fun{R}{e_{\sphilbfrak{k},n}}e_{\sphilbfrak{k},n}
=
\fun{R}{e_{\sphilbfrak{k},n}}
\in
e_{\sphilbfrak{k},n}\oa{K}(\sphilbfrak{k})^{\gamma}e_{\sphilbfrak{k},n}.
\end{aligned}$$
Set $A_n=\fun{L}{e_{\sphilbfrak{k},n}}=\fun{R}{e_{\sphilbfrak{k},n}}$.
Since $\norm{e_{\sphilbfrak{k},n}}=1$, this family satisfies
$$\begin{aligned}
\sup_{n\in\monnat}\norm{A_n}
&\leq
\norm{L}
<
\infty.
\end{aligned}$$
For $B_n\in e_{\sphilbfrak{k},n}\oa{K}(\sphilbfrak{k})^{\gamma}e_{\sphilbfrak{k},n}$, the centralizer identities yield
$$\begin{aligned}
\fun{L}{B_n}
=
\fun{L}{e_{\sphilbfrak{k},n}B_n}
=
A_nB_n,
\quad
\fun{R}{B_n}
=
\fun{R}{B_ne_{\sphilbfrak{k},n}}
=
B_nA_n.
\end{aligned}$$
Continuity and the norm-convergent decomposition in
\eqref{eq:paper-main-finite-mode-block-algebra} extend these identities to every
$B\in\oa{K}(\sphilbfrak{k})^{\gamma}$.
The resulting equalities $L=L_A$ and $R=R_A$ determine the bounded family $A$ uniquely.
The product, adjoint, and norm in Definition \ref{def:multiplier-algebra} now become the coordinatewise
operations and the supremum norm in \eqref{eq:paper-intrinsic-block-multiplier-algebra}.
\end{proof}

\begin{prop}[Universal fixed-point algebra as multiplier operators]
\label{prop:paper-universal-fixed-point-multiplier-action}
Fix a nonzero finite-dimensional complex subspace
$\sphilbfrak{f}\subset\sphilbfrak{h}$.
The fixed-point algebra $\oa{K}(\sphilbfrak{f})^{\gamma}$ is an essential ideal of
$\oaresolventalgebra(\sphilbfrak{f})^{\gamma}$.
For $A\in\oaresolventalgebra(\sphilbfrak{f})^{\gamma}$ and
$K\in\oa{K}(\sphilbfrak{f})^{\gamma}$, set
\begin{equation}
\begin{aligned}
\fun{L_A}{K}
&=
AK,
&
\fun{R_A}{K}
&=
KA,
\\
\lambda_{\sphilbfrak{f}}
\colon
\oaresolventalgebra(\sphilbfrak{f})^{\gamma}
&\to
\oa{M}(\oa{K}(\sphilbfrak{f})^{\gamma});
&
\fun{\lambda_{\sphilbfrak{f}}}{A}
&=
\rbk{L_A,R_A}.
\end{aligned}
\label{eq:paper-universal-fixed-point-multiplier-action}
\end{equation}
Then the map $\lambda_{\sphilbfrak{f}}$ in
\eqref{eq:paper-universal-fixed-point-multiplier-action} is an injective unital $\ast$-homomorphism.
Under the bounded-block-product identification in Lemma
\ref{lem:paper-intrinsic-block-multiplier-algebra}, one has
\begin{equation}
\begin{aligned}
\fun{\lambda_{\sphilbfrak{f}}}{A}
&=
\seq{e_{\sphilbfrak{f},m}Ae_{\sphilbfrak{f},m}}{m\in\monnat}.
\end{aligned}
\label{eq:paper-universal-fixed-point-multiplier-blocks}
\end{equation}
Under the Fock isomorphism $\diag_{\sphilbfrak{f}}$ in
\eqref{eq:paper-main-finite-mode-product-isomorphism}, this family satisfies
\begin{equation}
\begin{aligned}
\fun{\diag_{\sphilbfrak{f}}}{\fun{\lambda_{\sphilbfrak{f}}}{A}}
&=
\bigoplus_{m\in\monnat}
\fnrestr{\fun{\oarepn_{\txtfock,\sphilbfrak{f}}}{A}}
{\sphilbfrak{f}^{\otimes_{\txtsym}m}}
=
\fun{\oarepn_{\txtfock,\sphilbfrak{f}}}{A}.
\end{aligned}
\label{eq:paper-universal-fixed-point-multiplier-fock-action}
\end{equation}
\end{prop}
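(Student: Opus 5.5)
The proof will run through the faithful Fock representation \(\oarepn_{\txtfock,\sphilbfrak{f}}\), which is faithful, regular and irreducible by \cite[Corollary 4.4]{BuchholzGrundling002}. This lets every algebraic identity be checked on Fock space and transported back.

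The first step is to show that \(\oa{K}(\sphilbfrak{f})^{\gamma}\) is an ideal of \(\oaresolventalgebra(\sphilbfrak{f})^{\gamma}\). If \(A\) and \(K\) are gauge invariant, then \(AK\) and \(KA\) are gauge invariant, and they lie in \(\oa{K}(\sphilbfrak{f})\) because that set is an ideal of \(\oaresolventalgebra(\sphilbfrak{f})\) (Theorem \ref{thm:abstract-minimal-compact-ideal}). Closedness is clear.

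For essentiality and for injectivity of \(\lambda_{\sphilbfrak{f}}\), I will use the block projections. Each \(e_{\sphilbfrak{f},m}\) is gauge invariant, so \(Ae_{\sphilbfrak{f},m}\) lies in \(\oa{K}(\sphilbfrak{f})^{\gamma}\). The Fock image of \(A\) commutes with the gauge unitaries \(\fun{\Gamma_{\sphilbfrak{f}}}{\theta}\), by the covariance formula preceding Proposition \ref{prop:paper-particle-block-independence} together with faithfulness. Hence \(\fun{\oarepn_{\txtfock,\sphilbfrak{f}}}{A}\) commutes with every number projection, and therefore \(Ae_{\sphilbfrak{f},m}=e_{\sphilbfrak{f},m}A=e_{\sphilbfrak{f},m}Ae_{\sphilbfrak{f},m}\).

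Suppose \(A\neq0\). Its Fock image is then a nonzero block-diagonal operator, so some block \(e_{\sphilbfrak{f},m}Ae_{\sphilbfrak{f},m}\) is nonzero. This gives \(L_A\neq0\), which proves injectivity. The same argument applied to any nonzero element of an arbitrary nonzero ideal \(\oa{L}\) produces a nonzero element of \(\oa{L}\cap\oa{K}(\sphilbfrak{f})^{\gamma}\), which proves essentiality.

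Unitality and the \(\ast\)-homomorphism property of \(\lambda_{\sphilbfrak{f}}\) follow directly from the definitions:
\begin{itemize}
\item \(L_{AB}=L_A\circ L_B\) and \(R_{AB}=R_B\circ R_A\);
\item \(L_{\faadj{A}}=L_A^{\sharp}\), because \(\faadjrbk{K\,\cdot\,}\) evaluated at \(\faadj{K}\) gives \(\faadjrbk{\faadj{K}A}=\faadj{A}K\);
\item the double-centralizer identities hold by associativity.
\end{itemize}

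For \eqref{eq:paper-universal-fixed-point-multiplier-blocks}, recall from the proof of Lemma \ref{lem:paper-intrinsic-block-multiplier-algebra} that the associated bounded family is \(A_m=\fun{L_A}{e_{\sphilbfrak{f},m}}=Ae_{\sphilbfrak{f},m}\). By the commutation established above, this equals \(e_{\sphilbfrak{f},m}Ae_{\sphilbfrak{f},m}\).

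Finally, \eqref{eq:paper-universal-fixed-point-multiplier-fock-action} follows from the definitions of \(\diag_{\sphilbfrak{f}}\) and \(\pi_{\sphilbfrak{f},m}\). The \(m\)th block is the restriction of \(\fun{\oarepn_{\txtfock,\sphilbfrak{f}}}{e_{\sphilbfrak{f},m}Ae_{\sphilbfrak{f},m}}\) to \(\sphilbfrak{f}^{\otimes_{\txtsym}m}\), which is the restriction of \(\fun{\oarepn_{\txtfock,\sphilbfrak{f}}}{A}\). The direct sum of these restrictions recovers \(\fun{\oarepn_{\txtfock,\sphilbfrak{f}}}{A}\) because that operator is block diagonal.

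The main obstacle is justifying that \(A\) commutes with the \(e_{\sphilbfrak{f},m}\). My plan is to show that \(\fun{\oarepn_{\txtfock,\sphilbfrak{f}}}{A}\) commutes with \(\fun{\Gamma_{\sphilbfrak{f}}}{\theta}\) for all \(\theta\), so it commutes with the spectral projections of the generator \(N_{\sphilbfrak{f}}\). The algebraic identity then follows from faithfulness.
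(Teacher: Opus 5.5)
Your proof is correct. Your treatment of the $\ast$-homomorphism property, of the block formula via $L_A(e_{\mathfrak{f},m})=Ae_{\mathfrak{f},m}$, and of the Fock formula coincides with the paper's. The route differs in two places.

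First, the paper obtains $Ae_{\mathfrak{f},m}=e_{\mathfrak{f},m}A$ without any representation. The unique block decomposition makes $e_{\mathfrak{f},m}$ central in $\mathcal{K}(\mathfrak{f})^{\gamma}$, and both $Ae_{\mathfrak{f},m}$ and $e_{\mathfrak{f},m}A$ lie in that ideal. Hence $Ae_{\mathfrak{f},m}=e_{\mathfrak{f},m}(Ae_{\mathfrak{f},m})$ and $e_{\mathfrak{f},m}A=(e_{\mathfrak{f},m}A)e_{\mathfrak{f},m}$, so both equal $e_{\mathfrak{f},m}Ae_{\mathfrak{f},m}$. The step you flag as the main obstacle is therefore a one-line algebraic identity; your gauge-unitary argument is valid but not needed for it.

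Second, you derive essentiality and injectivity together from the Fock picture: a nonzero gauge-invariant $A$ has a nonzero block, and $Ae_{\mathfrak{f},m}\in\mathcal{L}\cap\mathcal{K}(\mathfrak{f})^{\gamma}$. The paper instead proves essentiality without using the particle blocks.
\begin{itemize}
\item Essentiality of $\mathcal{K}(\mathfrak{f})$ in $\mathcal{R}(\mathfrak{f})$ gives $X\in\mathcal{K}(\mathfrak{f})$ with $AX\neq0$.
\item The gauge average $B_{\gamma}$ of $XX^{*}$ lies in $\mathcal{K}(\mathfrak{f})^{\gamma}$.
\item A state integrated over the gauge orbit shows $AB_{\gamma}\neq0$.
\item Injectivity of $\lambda_{\mathfrak{f}}$ is then read off from essentiality.
\end{itemize}

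Your route is shorter and avoids states, integrals and norm continuity of the gauge action. However, it depends on the concrete facts that $\pi_{\mathrm{F},\mathfrak{f}}$ is faithful and sends the $e_{\mathfrak{f},m}$ to number projections summing strongly to $1$. For a particle-block system known only to satisfy Definition~\ref{def:intrinsic-particle-block-system}, the implication ``$A\neq0$ forces $Ae_{\mathfrak{f},m}\neq0$ for some $m$'' is equivalent to the very essentiality being proved. It says exactly that $A\mathcal{K}(\mathfrak{f})^{\gamma}=0$ implies $A=0$. The paper's argument establishes essentiality intrinsically and uses a representation only for continuity of the gauge action on $\mathcal{K}(\mathfrak{f})$.
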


\begin{proof}
Theorem \ref{thm:abstract-minimal-compact-ideal} shows that
$\oa{K}(\sphilbfrak{f})$ is an essential ideal of
$\oaresolventalgebra(\sphilbfrak{f})$.
Let $0\neq A\in\oaresolventalgebra(\sphilbfrak{f})^{\gamma}$.
If $A\oa{K}(\sphilbfrak{f})=0$, set $D=\faadj{A}A$ and let $\oa{L}_D$ be the closed two-sided ideal generated
by $D$ in $\oaresolventalgebra(\sphilbfrak{f})$.
Then $\oa{L}_D\neq0$ and
$\oa{L}_D\oa{K}(\sphilbfrak{f})=0$.
Essentiality gives a nonzero
$Y\in\oa{L}_D\cap\oa{K}(\sphilbfrak{f})$.
Since $Y\in\oa{L}_D$ and $\faadj{Y}\in\oa{K}(\sphilbfrak{f})$, one has $Y\faadj{Y}=0$.
The $\oacstar$-identity $\norm{Y}^{2}=\norm{Y\faadj{Y}}=0$ contradicts $Y\neq0$.
This contradiction shows that there is $X\in\oa{K}(\sphilbfrak{f})$ such that $AX\neq0$.
The uniqueness of the minimal ideal in Theorem
\ref{thm:abstract-minimal-compact-ideal} makes
$\oa{K}(\sphilbfrak{f})$ gauge invariant.
In a faithful regular irreducible representation,
\eqref{eq:paper-minimal-ideal-representation} realizes this ideal as the compact operators, and the gauge
implementation used in Proposition \ref{prop:paper-particle-block-independence} is strongly continuous.
Conjugation by a strongly continuous unitary group is pointwise norm continuous on the compact operators.
Faithfulness therefore shows that
$\theta\mapsto\fun{\gamma_{\theta}}{B}$ is norm continuous for every
$B\in\oa{K}(\sphilbfrak{f})$.
Set $B=X\faadj{X}$ and define its norm-convergent gauge average by
$$\begin{aligned}
B_{\gamma}
&=
\frac{1}{2\pi}\int_{0}^{2\pi}\fun{\gamma_{\theta}}{B}\opdmsr{\theta}.
\end{aligned}$$
Consequently,
$B_{\gamma}\in\oa{K}(\sphilbfrak{f})^{\gamma}$, and gauge invariance of $A$ gives
$$\begin{aligned}
AB_{\gamma}\faadj{A}
&=
\frac{1}{2\pi}\int_{0}^{2\pi}
\fun{\gamma_{\theta}}{AB\faadj{A}}\opdmsr{\theta}.
\end{aligned}$$
The positive element $AB\faadj{A}=\rbk{AX}\faadjrbk{AX}$ is nonzero.
Choose a state $\oastate[\psi]$ with $\fun{\oastate[\psi]}{AB\faadj{A}}>0$.
Continuity of
$\theta\mapsto\fun{\oastate[\psi]\circ\gamma_{\theta}}{AB\faadj{A}}$ and its nonnegativity give
$$\begin{aligned}
\fun{\oastate[\psi]}{AB_{\gamma}\faadj{A}}
&=
\frac{1}{2\pi}\int_{0}^{2\pi}
\fun{\oastate[\psi]\circ\gamma_{\theta}}{AB\faadj{A}}\opdmsr{\theta}
>
0.
\end{aligned}$$
Thus $AB_{\gamma}\neq0$.
If $\oa{L}$ is a nonzero closed two-sided ideal of
$\oaresolventalgebra(\sphilbfrak{f})^{\gamma}$, choose $0\neq A\in\oa{L}$ and apply the preceding construction.
Then
$$\begin{aligned}
0
\neq
AB_{\gamma}
\in
\oa{L}\cap\oa{K}(\sphilbfrak{f})^{\gamma}.
\end{aligned}$$
Definition \ref{def:paper-essential-ideal-extension} therefore shows that
$\oa{K}(\sphilbfrak{f})^{\gamma}$ is essential in
$\oaresolventalgebra(\sphilbfrak{f})^{\gamma}$.

Because $\oa{K}(\sphilbfrak{f})^{\gamma}$ is an ideal, the two multiplication maps in
\eqref{eq:paper-universal-fixed-point-multiplier-action} take values in that ideal.
For $K_1,K_2\in\oa{K}(\sphilbfrak{f})^{\gamma}$, associativity gives
$$\begin{aligned}
\fun{L_A}{K_1K_2}
&=
AK_1K_2
=
\fun{L_A}{K_1}K_2,
\\
\fun{R_A}{K_1K_2}
&=
K_1K_2A
=
K_1\fun{R_A}{K_2},
\\
K_1\fun{L_A}{K_2}
&=
K_1AK_2
=
\fun{R_A}{K_1}K_2.
\end{aligned}$$
Thus $\rbk{L_A,R_A}$ is a double centralizer in the sense of Definition
\ref{def:multiplier-algebra}.
For $A,C\in\oaresolventalgebra(\sphilbfrak{f})^{\gamma}$ and
$K\in\oa{K}(\sphilbfrak{f})^{\gamma}$, one has
$$\begin{aligned}
\fun{L_A\circ L_C}{K}
&=
A(CK)
=
\fun{L_{AC}}{K},
\\
\fun{R_C\circ R_A}{K}
&=
(KA)C
=
\fun{R_{AC}}{K},
\\
\fun{L_A^{\sharp}}{K}
&=
\faadjrbk{\fun{R_A}{\faadj{K}}}
=
\faadj{A}K
=
\fun{L_{\faadj{A}}}{K},
\\
\fun{R_A^{\sharp}}{K}
&=
\faadjrbk{\fun{L_A}{\faadj{K}}}
=
K\faadj{A}
=
\fun{R_{\faadj{A}}}{K}.
\end{aligned}$$
Together with $L_1=R_1=\id$, these formulas show that the map
$\lambda_{\sphilbfrak{f}}$ defined in
\eqref{eq:paper-universal-fixed-point-multiplier-action} is a unital $\ast$-homomorphism.
If $\fun{\lambda_{\sphilbfrak{f}}}{A}=0$, then
$A\oa{K}(\sphilbfrak{f})^{\gamma}=0$.
The essentiality just proved forces $A=0$, so the map
$\lambda_{\sphilbfrak{f}}$ in \eqref{eq:paper-universal-fixed-point-multiplier-action} is injective.

The projection $e_{\sphilbfrak{f},m}$ is central in
$\oa{K}(\sphilbfrak{f})^{\gamma}$ by the block decomposition
\eqref{eq:paper-main-intrinsic-particle-block-decomposition}.
Indeed, if $K=\sum_{n\in\monnat}K_n$ is that decomposition, then
$$\begin{aligned}
e_{\sphilbfrak{f},m}K
&=
K_m
=
Ke_{\sphilbfrak{f},m}.
\end{aligned}$$
Both $Ae_{\sphilbfrak{f},m}$ and $e_{\sphilbfrak{f},m}A$ belong to this ideal, whence
$$\begin{aligned}
Ae_{\sphilbfrak{f},m}
&=
e_{\sphilbfrak{f},m}Ae_{\sphilbfrak{f},m}
=
e_{\sphilbfrak{f},m}A.
\end{aligned}$$
The construction in Lemma \ref{lem:paper-intrinsic-block-multiplier-algebra} now gives
\eqref{eq:paper-universal-fixed-point-multiplier-blocks}.
Finally,
Theorem \ref{thm:paper-fock-intrinsic-block-realization} sends
$e_{\sphilbfrak{f},m}$ to the orthogonal projection onto
$\sphilbfrak{f}^{\otimes_{\txtsym}m}$.
Since $A$ is gauge invariant, its Fock representative preserves every particle sector.
Applying the map $\diag_{\sphilbfrak{f}}$ defined in
\eqref{eq:paper-main-finite-mode-product-isomorphism} to
\eqref{eq:paper-universal-fixed-point-multiplier-blocks} proves
\eqref{eq:paper-universal-fixed-point-multiplier-fock-action}.
\end{proof}

\begin{rem}[Which multiplier algebra is involved]
The fixed-point resolvent algebra
$\oaresolventalgebra(\sphilbfrak{f})^{\gamma}$ is unital, so its own multiplier algebra is canonically equal
to itself.
Proposition \ref{prop:paper-universal-fixed-point-multiplier-action} instead concerns the multiplier algebra of
its essential ideal $\oa{K}(\sphilbfrak{f})^{\gamma}$.
The map $\lambda_{\sphilbfrak{f}}$ in
\eqref{eq:paper-universal-fixed-point-multiplier-action} identifies the universal fixed-point algebra with a
unital $\oacstar$-subalgebra of the full bounded block product
\eqref{eq:paper-intrinsic-block-multiplier-algebra}; it does not assert that every bounded block family comes
from an element of the universal fixed-point algebra.
\end{rem}

\begin{lem}[Canonical essential ideal of the multiplier algebra]\label{lem:multiplier-canonical-essential-ideal}
The map $\lambda_{\oa{J}}$ in \eqref{eq:multiplier-canonical-embedding} is an isometric
$\ast$-homomorphism, and $\fun{\lambda_{\oa{J}}}{\oa{J}}$ is a closed essential ideal of
$\oa{M}(\oa{J})$.
\end{lem}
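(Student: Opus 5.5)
We verify the claims in the order isometry, $\ast$-homomorphism, closed ideal, essentiality, working directly from Definition \ref{def:multiplier-algebra}.

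\textbf{Double centralizer and $\ast$-homomorphism.} For $j\in\oa{J}$, associativity gives $L_j(xy)=L_j(x)y$, $R_j(xy)=xR_j(y)$ and $xL_j(y)=xjy=R_j(x)y$. Hence $\lambda_{\oa{J}}(j)$ is a double centralizer, and $\lambda_{\oa{J}}$ is plainly linear. For $j,k\in\oa{J}$ we have $L_j\circ L_k=L_{jk}$ and $R_k\circ R_j=R_{jk}$, which gives multiplicativity. For the adjoint, $L_j^{\sharp}(x)=(x^{\ast}j)^{\ast}=j^{\ast}x$ and similarly $R_j^{\sharp}=R_{j^{\ast}}$, so $\lambda_{\oa{J}}$ preserves adjoints.

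\textbf{Isometry.} Submultiplicativity gives $\norm{L_j},\norm{R_j}\leq\norm{j}$. For the reverse inequality with $j\neq0$, test $L_j$ on $x=j^{\ast}/\norm{j}$: the $\oacstar$-identity gives $\norm{jj^{\ast}}/\norm{j}=\norm{j}$, and $R_j$ is handled the same way. So $\lambda_{\oa{J}}$ is isometric, hence injective. Since $\oa{J}$ is complete, the range is closed.

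\textbf{Ideal property.} For $(L,R)\in\oa{M}(\oa{J})$ we need
$$\begin{aligned}
(L,R)(L_j,R_j)&=(L\circ L_j,R_j\circ R)=\lambda_{\oa{J}}(L(j)),\\
(L_j,R_j)(L,R)&=\lambda_{\oa{J}}(R(j)).
\end{aligned}$$
For the first line, $L(jx)=L(j)x$ gives $L\circ L_j=L_{L(j)}$, and the mixed identity $x L(j)=R(x)j$ gives $R_j\circ R=R_{L(j)}$. The second line follows by the symmetric computation, or alternatively by applying the first line to adjoints.

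\textbf{Essentiality (the main obstacle).} We show that any $(L,R)$ annihilating $\lambda_{\oa{J}}(\oa{J})$ is zero, and then reduce the stated form of essentiality to this.

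First, suppose $(L,R)\lambda_{\oa{J}}(j)=0$ for all $j$. By the ideal computation $L(j)=0$ for all $j$, so $L=0$. The mixed identity then gives $R(x)y=xL(y)=0$ for all $x,y$. Taking $y=R(x)^{\ast}$ and using the $\oacstar$-identity yields $R(x)=0$. Hence $(L,R)=0$.

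Now let $\oa{L}$ be a nonzero closed ideal of $\oa{M}(\oa{J})$ and pick $0\neq m\in\oa{L}$. Then $m^{\ast}\in\oa{L}$, and $m^{\ast}$ is nonzero because the adjoint is involutive. Applying the annihilator argument to $m^{\ast}$ gives $j$ with $m^{\ast}\lambda_{\oa{J}}(j)\neq0$. This element lies in $\oa{L}\cap\lambda_{\oa{J}}(\oa{J})$, since both are ideals. That is the essentiality required by Definition \ref{def:paper-essential-ideal-extension}.

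We take as given that $\oa{M}(\oa{J})$ is a $\oacstar$-algebra, since it is standard. If that fact is needed explicitly, it is the main technical point, and it would come from $\norm{L}=\norm{R}$ together with the $\oacstar$-identity $\norm{L^{\sharp}L}=\norm{L}^{2}$, proved by evaluating on elements of $\oa{J}$. The argument above, however, uses only the algebraic identities and the $\oacstar$-identity in $\oa{J}$.
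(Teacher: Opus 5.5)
Your proof is correct and follows the paper's outline: the homomorphism and adjoint identities, the isometry obtained by testing $L_j$ on $j^{\ast}/\|j\|$, and the ideal property via $T\lambda_{\mathcal{J}}(j)=\lambda_{\mathcal{J}}(L(j))$ and $\lambda_{\mathcal{J}}(j)T=\lambda_{\mathcal{J}}(R(j))$. The only real difference is in essentiality.

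The paper takes $0\neq T=(L,R)\in\mathcal{L}$ and notes that $L\neq0$ or $R\neq0$. It then uses whichever of the two products $T\lambda_{\mathcal{J}}(j)$, $\lambda_{\mathcal{J}}(j)T$ is nonzero, so it never has to show that $L=0$ forces $R=0$. You instead prove the stronger one-sided statement that the left annihilator of $\lambda_{\mathcal{J}}(\mathcal{J})$ in $\mathcal{M}(\mathcal{J})$ is zero. This costs the extra step $R(x)R(x)^{\ast}=xL(R(x)^{\ast})=0$, which combines the mixed identity with the $C^{\ast}$-identity.

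Applied to $m$ itself, that one-sided statement already finishes the proof. The element $0\neq m\lambda_{\mathcal{J}}(j)$ lies in $\mathcal{L}$ because $\mathcal{L}$ is an ideal, and it lies in $\lambda_{\mathcal{J}}(\mathcal{J})$ by your computation $m\lambda_{\mathcal{J}}(j)=\lambda_{\mathcal{J}}(L(j))$.

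Your passage to $m^{\ast}$ is therefore unnecessary. It is also the only step that uses more than the algebraic identities and the $C^{\ast}$-identity in $\mathcal{J}$: the claim $m^{\ast}\in\mathcal{L}$ needs closed two-sided ideals of the $C^{\ast}$-algebra $\mathcal{M}(\mathcal{J})$ to be self-adjoint. So your closing remark is inaccurate as written. If you drop that step, the remark becomes true and the argument no longer depends on the $C^{\ast}$-structure of $\mathcal{M}(\mathcal{J})$.
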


\begin{proof}
For $j,k,x\in\oa{J}$, associativity gives
$$\begin{aligned}
\fun{L_j\circ L_k}{x}
&=
j\rbk{kx}
=
\rbk{jk}x
=
\fun{L_{jk}}{x},
\\
\fun{R_k\circ R_j}{x}
&=
\rbk{xj}k
=
x\rbk{jk}
=
\fun{R_{jk}}{x}.
\end{aligned}$$
The multiplier embedding preserves products:
$$\begin{aligned}
\fun{\lambda_{\oa{J}}}{j}\fun{\lambda_{\oa{J}}}{k}
&=
\rbk{L_j\circ L_k,R_k\circ R_j}
=
\rbk{L_{jk},R_{jk}}
=
\fun{\lambda_{\oa{J}}}{jk},
\\
\faadjrbk{\fun{\lambda_{\oa{J}}}{j}}
&=
\fun{\lambda_{\oa{J}}}{\faadj{j}}.
\end{aligned}$$
If $\fun{\lambda_{\oa{J}}}{j}=0$, then $L_j=0$ and
$$\begin{aligned}
0
&=
\fun{L_j}{\faadj{j}}
=
j\faadj{j},
&
\norm{j}^{2}
&=
\norm{j\faadj{j}}
=
0.
\end{aligned}$$
Thus $\lambda_{\oa{J}}$ is injective.
For $j\neq0$, the element $\faadj{j}/\norm{j}$ has norm one, and therefore
$$\begin{aligned}
\norm{j}
&=
\frac{\norm{j\faadj{j}}}{\norm{j}}
\leq
\norm{L_j},
&
\norm{L_j}
&\leq
\norm{j}.
\end{aligned}$$
Hence $\norm{\fun{\lambda_{\oa{J}}}{j}}=\norm{j}$, and the range of $\lambda_{\oa{J}}$ is closed.

Let $T=\rbk{L,R}\in\oa{M}(\oa{J})$ and $j,x\in\oa{J}$.
The double-centralizer identities give
$$\begin{aligned}
T\fun{\lambda_{\oa{J}}}{j}
&=
\rbk{L\circ L_j,R_j\circ R}
=
\fun{\lambda_{\oa{J}}}{\fun{L}{j}},
\\
\fun{\lambda_{\oa{J}}}{j}T
&=
\rbk{L_j\circ L,R\circ R_j}
=
\fun{\lambda_{\oa{J}}}{\fun{R}{j}}.
\end{aligned}$$
Therefore $\fun{\lambda_{\oa{J}}}{\oa{J}}$ is a two-sided ideal.
If a nonzero closed two-sided ideal
$\oa{L}\subset\oa{M}(\oa{J})$ contains a nonzero $T=\rbk{L,R}$, then $L\neq0$ or $R\neq0$.
Choose
$j\in\oa{J}$ with $\fun{L}{j}\neq0$ or $\fun{R}{j}\neq0$, respectively.
One of
$$\begin{aligned}
T\fun{\lambda_{\oa{J}}}{j}
&=
\fun{\lambda_{\oa{J}}}{\fun{L}{j}},
&
\fun{\lambda_{\oa{J}}}{j}T
&=
\fun{\lambda_{\oa{J}}}{\fun{R}{j}}
\end{aligned}$$
is nonzero and belongs to
$\oa{L}\cap\fun{\lambda_{\oa{J}}}{\oa{J}}$.
Thus
$\fun{\lambda_{\oa{J}}}{\oa{J}}$ is essential.
\end{proof}

\begin{defn}[Corona algebra]\label{def:corona-algebra}
Let $\oa{J}$ be a nonzero $\oacstar$-algebra.
Identify $\oa{J}$ with the closed essential ideal
$\fun{\lambda_{\oa{J}}}{\oa{J}}\subset\oa{M}(\oa{J})$ from Lemma
\ref{lem:multiplier-canonical-essential-ideal}.
The corona algebra is
\begin{equation}
\begin{aligned}
\oa{Q}(\oa{J})
&=
\setquot{\oa{M}(\oa{J})}{\fun{\lambda_{\oa{J}}}{\oa{J}}}.
\end{aligned}
\label{eq:busby-corona-algebra}
\end{equation}
The corona quotient map is
\begin{equation}
\begin{aligned}
q_{\oa{J}}
\colon
\oa{M}(\oa{J})
\to
\oa{Q}(\oa{J});\quad
\fun{q_{\oa{J}}}{T}
&=
T+\fun{\lambda_{\oa{J}}}{\oa{J}}.
\end{aligned}
\label{eq:busby-corona-quotient-map}
\end{equation}
\end{defn}

\begin{lem}[Multiplier action of an extension]\label{lem:busby-extension-multiplier-action}
Let \eqref{eq:paper-main-essential-extension} be an extension by a nonzero $\oacstar$-algebra $\oa{J}$, and
identify $\oa{J}$ with $\fun{\iota}{\oa{J}}\subset\oa{E}$.
For $e\in\oa{E}$ and $j\in\oa{J}$, set $\fun{L_e}{j}=ej$ and $\fun{R_e}{j}=je$.
Then
\begin{equation}
\begin{aligned}
\sigma_{\oa{E}}
\colon
\oa{E}
\to
\oa{M}(\oa{J});\quad
\fun{\sigma_{\oa{E}}}{e}
&=
\rbk{L_e,R_e}
\end{aligned}
\label{eq:busby-double-centralizer-action}
\end{equation}
is a $\ast$-homomorphism, and
$\fun{\sigma_{\oa{E}}}{j}=\fun{\lambda_{\oa{J}}}{j}$ for every $j\in\oa{J}$.
\end{lem}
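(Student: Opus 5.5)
The plan is to verify the double-centralizer conditions of Definition \ref{def:multiplier-algebra} directly and then check the homomorphism identities one at a time, as in the proof of Proposition \ref{prop:paper-universal-fixed-point-multiplier-action}.

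First, since $\fun{\iota}{\oa{J}}$ is a closed two-sided ideal of $\oa{E}$, the maps $L_e$ and $R_e$ send $\oa{J}$ into $\oa{J}$. They are linear, with $\norm{L_e},\norm{R_e}\leq\norm{e}$. Associativity in $\oa{E}$ gives, for $x,y\in\oa{J}$,
$$\begin{aligned}
\fun{L_e}{xy}
&=
\fun{L_e}{x}y,
&
\fun{R_e}{xy}
&=
x\fun{R_e}{y},
&
x\fun{L_e}{y}
&=
xey
=
\fun{R_e}{x}y.
\end{aligned}$$
Hence $\rbk{L_e,R_e}\in\oa{M}(\oa{J})$, and $\sigma_{\oa{E}}$ is well defined. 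Linearity of $\sigma_{\oa{E}}$ follows componentwise from bilinearity of the product.

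Next come multiplicativity and adjoints. For $e,f\in\oa{E}$ and $x\in\oa{J}$, one has $L_e\circ L_f=L_{ef}$ and $R_f\circ R_e=R_{ef}$. This matches the multiplier product $\rbk{L_e,R_e}\rbk{L_f,R_f}=\rbk{L_e\circ L_f,R_f\circ R_e}$. For the adjoint, $\fun{L_e^{\sharp}}{x}=\faadjrbk{\faadj{x}e}=\faadj{e}x=\fun{L_{\faadj{e}}}{x}$, and similarly $R_e^{\sharp}=R_{\faadj{e}}$. Thus $\faadjrbk{\fun{\sigma_{\oa{E}}}{e}}=\fun{\sigma_{\oa{E}}}{\faadj{e}}$.

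Finally, for $j\in\oa{J}$, the maps $L_j$ and $R_j$ defined via multiplication in $\oa{E}$ coincide with those in \eqref{eq:multiplier-canonical-embedding}. The reason is that $\iota$ is an injective $\ast$-homomorphism, so products in $\oa{J}$ agree with products in $\oa{E}$ under the identification. This gives $\fun{\sigma_{\oa{E}}}{j}=\fun{\lambda_{\oa{J}}}{j}$.

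There is no genuine obstacle here. The only point needing care is keeping the identification $\oa{J}\cong\fun{\iota}{\oa{J}}$ explicit, so that $L_e$ and $R_e$ take values in $\oa{J}$ itself rather than in $\oa{E}$. Essentiality of the extension is not needed for this lemma.
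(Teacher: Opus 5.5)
Your proposal is correct and follows essentially the same route as the paper: the ideal property puts $L_e$ and $R_e$ in $\oa{J}$, and associativity gives the double-centralizer identities together with $\sigma_{\oa{E}}(ef)=\sigma_{\oa{E}}(e)\sigma_{\oa{E}}(f)$. The $\sharp$-formulas give the adjoint, which you compute more explicitly than the paper does, and for $j\in\oa{J}$ the maps are the canonical multipliers, giving $\sigma_{\oa{E}}(j)=\lambda_{\oa{J}}(j)$.
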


\begin{proof}
The ideal property gives $ej,je\in\oa{J}$ and
$\norm{\fun{L_e}{j}},\norm{\fun{R_e}{j}}\leq\norm{e}\norm{j}$.
For $j,k\in\oa{J}$, associativity yields
$$\begin{aligned}
\fun{L_e}{jk}
&=
\fun{L_e}{j}k,
&
\fun{R_e}{jk}
&=
j\fun{R_e}{k},
&
j\fun{L_e}{k}
&=
\fun{R_e}{j}k.
\end{aligned}$$
Thus $\rbk{L_e,R_e}\in\oa{M}(\oa{J})$ by Definition \ref{def:multiplier-algebra}.
For $e,f\in\oa{E}$, the multiplier operations give
$$\begin{aligned}
\fun{\sigma_{\oa{E}}}{ef}
&=
\rbk{L_e\circ L_f,R_f\circ R_e}
=
\fun{\sigma_{\oa{E}}}{e}\fun{\sigma_{\oa{E}}}{f},
\\
\fun{\sigma_{\oa{E}}}{\faadj{e}}
&=
\faadjrbk{\fun{\sigma_{\oa{E}}}{e}}.
\end{aligned}$$
The formulas defining $L_e$ and $R_e$ are linear in $e$, so $\sigma_{\oa{E}}$ is a
$\ast$-homomorphism.
For $j\in\oa{J}$, the two actions are its canonical left and right multipliers in
\eqref{eq:multiplier-canonical-embedding}, giving
$\fun{\sigma_{\oa{E}}}{j}=\fun{\lambda_{\oa{J}}}{j}$.
\end{proof}

\begin{defn}[Busby map]\label{def:busby-map}
Let \eqref{eq:paper-main-essential-extension} be an essential extension by a nonzero
$\oacstar$-algebra $\oa{J}$.
By Lemma \ref{lem:busby-extension-multiplier-action}, the composite
$q_{\oa{J}}\circ\sigma_{\oa{E}}$ vanishes on $\Ker p=\oa{J}$.
The Busby map is the induced $\ast$-homomorphism
\begin{equation}
\begin{aligned}
\tau_{\oa{E}}
\colon
\oa{A}
\to
\oa{Q}(\oa{J});\quad
\fun{\tau_{\oa{E}}}{\fun{p}{e}}
&=
\fun{q_{\oa{J}}\circ\sigma_{\oa{E}}}{e}
\quad\rbk{e\in\oa{E}}.
\end{aligned}
\label{eq:busby-map}
\end{equation}
\end{defn}

For these extension data, define the pullback \(\oacstar\)-algebra by \begin{equation}
\begin{aligned}
\oa{P}(\tau_{\oa{E}})
&=
\set{\rbk{T,a}\in\oa{M}(\oa{J})\oplus\oa{A}}
{\fun{q_{\oa{J}}}{T}=\fun{\tau_{\oa{E}}}{a}}.
\end{aligned}
\label{eq:busby-pullback}
\end{equation}

\begin{prop}[Reconstruction from the Busby map]\label{prop:busby-reconstruction}
Let the short exact sequence \eqref{eq:paper-main-essential-extension} be an essential extension of $\oa{A}$
by the nonzero $\oacstar$-algebra $\oa{J}$ in the sense of Definition
\ref{def:paper-essential-ideal-extension}, and let
$\tau_{\oa{E}} \colon \oa{A} \to \oa{Q}(\oa{J})$ be its Busby map defined by
\eqref{eq:busby-map}.
Then $\tau_{\oa{E}}$ is injective.
The map into the pullback defined in \eqref{eq:busby-pullback},
\begin{equation}
\begin{aligned}
\Xi_{\oa{E}}
\colon
\oa{E}
\to
\oa{P}(\tau_{\oa{E}});\quad
\fun{\Xi_{\oa{E}}}{e}
&=
\rbk{\fun{\sigma_{\oa{E}}}{e},\fun{p}{e}}
\end{aligned}
\label{eq:busby-reconstruction-isomorphism}
\end{equation}
is a $\ast$-isomorphism.
\end{prop}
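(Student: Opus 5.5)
The plan is the standard Busby argument, organized around Lemma \ref{lem:busby-extension-multiplier-action} and Lemma \ref{lem:multiplier-canonical-essential-ideal}. Throughout I identify $\oa{J}$ with $\fun{\iota}{\oa{J}}=\Ker p$.

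The first and central step is the injectivity of $\sigma_{\oa{E}}$. Suppose $\fun{\sigma_{\oa{E}}}{e}=0$, so that $e\oa{J}=0=\oa{J}e$. The annihilator $\oa{L}=\set{x\in\oa{E}}{x\oa{J}=0=\oa{J}x}$ is a closed two-sided ideal of $\oa{E}$. It meets $\oa{J}$ trivially, because $y\in\oa{L}\cap\oa{J}$ gives $\faadj{y}y=0$. Essentiality of $\oa{J}$ in $\oa{E}$ (Definition \ref{def:paper-essential-ideal-extension}) therefore forces $\oa{L}=0$, and hence $e=0$.

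Next I prove that $\tau_{\oa{E}}$ is injective. If $\fun{\tau_{\oa{E}}}{\fun{p}{e}}=0$, then $\fun{\sigma_{\oa{E}}}{e}=\fun{\lambda_{\oa{J}}}{j}$ for some $j\in\oa{J}$. By Lemma \ref{lem:busby-extension-multiplier-action} this equals $\fun{\sigma_{\oa{E}}}{j}$. Injectivity of $\sigma_{\oa{E}}$ gives $e=j\in\Ker p$, so $\fun{p}{e}=0$.

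For $\Xi_{\oa{E}}$, the definition of $\tau_{\oa{E}}$ shows that it takes values in $\oa{P}(\tau_{\oa{E}})$. It is a $\ast$-homomorphism because $\sigma_{\oa{E}}$ and $p$ are. It is injective already in its first coordinate, by the first step.

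The remaining step is surjectivity. Take $\rbk{T,a}\in\oa{P}(\tau_{\oa{E}})$ and use surjectivity of $p$ to choose $e\in\oa{E}$ with $\fun{p}{e}=a$. Then $\fun{q_{\oa{J}}}{T-\fun{\sigma_{\oa{E}}}{e}}=0$, so $T-\fun{\sigma_{\oa{E}}}{e}=\fun{\lambda_{\oa{J}}}{j}=\fun{\sigma_{\oa{E}}}{j}$ for some $j\in\oa{J}$. Setting $e'=e+j$ gives $\fun{\sigma_{\oa{E}}}{e'}=T$ and $\fun{p}{e'}=a$.

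The inverse of a bijective $\ast$-homomorphism is again a $\ast$-homomorphism, so $\Xi_{\oa{E}}$ is a $\ast$-isomorphism. I expect the only non-formal point to be the annihilator argument in the first step; everything else is diagram chasing.
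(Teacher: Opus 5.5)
Your proposal is correct and follows the paper's proof essentially step by step. Both arguments kill the two-sided annihilator of $\oa{J}$ in $\oa{E}$ by essentiality to get injectivity of $\sigma_{\oa{E}}$, deduce from it the injectivity of $\tau_{\oa{E}}$ and of $\Xi_{\oa{E}}$, and obtain surjectivity by correcting an arbitrary lift $e$ of $a$ by $T-\sigma_{\oa{E}}(e)\in\oa{J}$. The paper additionally spells out closedness of the annihilator and of the pullback, and lift-independence of $e+j$, but these are routine and your argument does not depend on them.
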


\begin{proof}
The factorization in Definition \ref{def:busby-map} makes $\tau_{\oa{E}}$ a well-defined
$\ast$-homomorphism.
Define the two-sided annihilator of $\oa{J}$ in $\oa{E}$ by
$$\begin{aligned}
\oa{J}^{\perp}
&=
\set{e\in\oa{E}}{e\oa{J}=0=\oa{J}e}.
\end{aligned}$$
For $e\in\oa{J}^{\perp}$, $f\in\oa{E}$, and $j\in\oa{J}$, the ideal property of $\oa{J}$ gives
$$\begin{aligned}
\rbk{fe}j
&=
f\rbk{ej}
=
0,
&
j\rbk{fe}
&=
\rbk{jf}e
=
0,
\\
\rbk{ef}j
&=
e\rbk{fj}
=
0,
&
j\rbk{ef}
&=
\rbk{je}f
=
0.
\end{aligned}$$
Since $\faadj{j}\in\oa{J}$,
$$\begin{aligned}
\faadj{e}j
&=
\faadjrbk{\faadj{j}e}
=
0,
&
j\faadj{e}
&=
\faadjrbk{e\faadj{j}}
=
0.
\end{aligned}$$
Thus $\oa{J}^{\perp}$ is a two-sided self-adjoint ideal.
If a sequence $\seq{e_{\ell}}{\ell\in\semigrposint}$ in $\oa{J}^{\perp}$ converges in norm to $e\in\oa{E}$,
then for every $j\in\oa{J}$,
$$\begin{aligned}
0
&\leq
\norm{ej}
=
\norm{\rbk{e-e_{\ell}}j}
\leq
\norm{e-e_{\ell}}\norm{j},
\\
0
&\leq
\norm{je}
=
\norm{j\rbk{e-e_{\ell}}}
\leq
\norm{j}\norm{e-e_{\ell}},
\\
\lim_{\ell\to\infty}\norm{e-e_{\ell}}\norm{j}
&=
\lim_{\ell\to\infty}\norm{j}\norm{e-e_{\ell}}
=
0.
\end{aligned}$$
Hence $\oa{J}^{\perp}$ is closed.
If $x\in\oa{J}\cap\oa{J}^{\perp}$, then $\faadj{x}\in\oa{J}$ and
$$\begin{aligned}
x\faadj{x}
&=
0,
&
\norm{x}^2
&=
\norm{x\faadj{x}}
=
0.
\end{aligned}$$
The ideals have trivial intersection:
$$\begin{aligned}
\oa{J}\cap\oa{J}^{\perp}
&=
0.
\end{aligned}$$
Essentiality of $\oa{J}$ and the closed-ideal property of $\oa{J}^{\perp}$ imply
$$\begin{aligned}
\oa{J}^{\perp}
&=
0.
\end{aligned}$$
The double-centralizer action in \eqref{eq:busby-double-centralizer-action} gives
$$\begin{aligned}
\fun{\sigma_{\oa{E}}}{e}
=
0
\iff
e\oa{J}=0=\oa{J}e
\iff
e\in\oa{J}^{\perp}.
\end{aligned}$$
Thus $\Ker\sigma_{\oa{E}}=\oa{J}^{\perp}=0$, and $\sigma_{\oa{E}}$ is injective.

Let $a\in\oa{A}$ satisfy $\fun{\tau_{\oa{E}}}{a}=0$, and choose $e\in\oa{E}$ with
$\fun{p}{e}=a$.
The Busby-map formula \eqref{eq:busby-map} gives
$$\begin{aligned}
\fun{q_{\oa{J}} \circ \sigma_{\oa{E}}}{e}
&=
\fun{\tau_{\oa{E}}}{a}
=
0,
&
\fun{\sigma_{\oa{E}}}{e}
&\in
\oa{J}.
\end{aligned}$$
Thus there is $j\in\oa{J}$ such that
$$\begin{aligned}
\fun{\sigma_{\oa{E}}}{e}
&=
\fun{\sigma_{\oa{E}}}{j}.
\end{aligned}$$
Injectivity of $\sigma_{\oa{E}}$ gives $e=j$, and exactness gives
$$\begin{aligned}
a
&=
\fun{p}{e}
=
\fun{p}{j}
=
0.
\end{aligned}$$
Hence $\tau_{\oa{E}}$ is injective.

For $\rbk{T,a},\rbk{S,b}\in\oa{P}(\tau_{\oa{E}})$, multiplicativity and adjoint preservation give
$$\begin{aligned}
\fun{q_{\oa{J}}}{TS}
&=
\fun{q_{\oa{J}}}{T}\fun{q_{\oa{J}}}{S}
=
\fun{\tau_{\oa{E}}}{a}\fun{\tau_{\oa{E}}}{b}
=
\fun{\tau_{\oa{E}}}{ab},
\\
\fun{q_{\oa{J}}}{\faadj{T}}
&=
\faadjrbk{\fun{q_{\oa{J}}}{T}}
=
\faadjrbk{\fun{\tau_{\oa{E}}}{a}}
=
\fun{\tau_{\oa{E}}}{\faadj{a}}.
\end{aligned}$$
The continuous linear map
$$\begin{aligned}
F_{\oa{E}}
\colon
\oa{M}(\oa{J})\oplus\oa{A}
\to
\oa{Q}(\oa{J});\quad
\fun{F_{\oa{E}}}{T,a}
&=
\fun{q_{\oa{J}}}{T}-\fun{\tau_{\oa{E}}}{a}
\end{aligned}$$
satisfies
$$\begin{aligned}
\oa{P}(\tau_{\oa{E}})
&=
\Ker F_{\oa{E}}.
\end{aligned}$$
Therefore $\oa{P}(\tau_{\oa{E}})$ is a closed $\ast$-subalgebra of
$\oa{M}(\oa{J})\oplus\oa{A}$ and hence a $\oacstar$-algebra.

For $e\in\oa{E}$, the element $e$ itself is a lift of $\fun{p}{e}$, so
$$\begin{aligned}
\fun{q_{\oa{J}} \circ \sigma_{\oa{E}}}{e}
&=
\fun{\tau_{\oa{E}}}{\fun{p}{e}}.
\end{aligned}$$
Thus the map $\Xi_{\oa{E}}$ in \eqref{eq:busby-reconstruction-isomorphism} takes values in
$\oa{P}(\tau_{\oa{E}})$.
For $e,f\in\oa{E}$, coordinatewise operations for the map $\Xi_{\oa{E}}$ in
\eqref{eq:busby-reconstruction-isomorphism} give
$$\begin{aligned}
\fun{\Xi_{\oa{E}}}{e+f}
&=
\fun{\Xi_{\oa{E}}}{e}+\fun{\Xi_{\oa{E}}}{f},
\\
\fun{\Xi_{\oa{E}}}{ce}
&=
c\fun{\Xi_{\oa{E}}}{e}
&&\rbk{c\in\fldcmp},
\\
\fun{\Xi_{\oa{E}}}{ef}
&=
\rbk{\fun{\sigma_{\oa{E}}}{e}\fun{\sigma_{\oa{E}}}{f},
\fun{p}{e}\fun{p}{f}}
=
\fun{\Xi_{\oa{E}}}{e}\fun{\Xi_{\oa{E}}}{f},
\\
\fun{\Xi_{\oa{E}}}{\faadj{e}}
&=
\rbk{\faadjrbk{\fun{\sigma_{\oa{E}}}{e}},\faadjrbk{\fun{p}{e}}}
=
\faadjrbk{\fun{\Xi_{\oa{E}}}{e}}.
\end{aligned}$$

Let $\rbk{T,a}\in\oa{P}(\tau_{\oa{E}})$, and choose $e\in\oa{E}$ with $\fun{p}{e}=a$.
The pullback equality and the Busby-map formula \eqref{eq:busby-map} give
$$\begin{aligned}
\fun{q_{\oa{J}}}{T-\fun{\sigma_{\oa{E}}}{e}}
&=
\fun{q_{\oa{J}}}{T}-\fun{\tau_{\oa{E}}}{a}
=
0.
\end{aligned}$$
Using the identified copy of $\oa{J}$ in $\oa{M}(\oa{J})$, set
$$\begin{aligned}
j
&=
T-\fun{\sigma_{\oa{E}}}{e}
\in
\oa{J}.
\end{aligned}$$
If $e'\in\oa{E}$ is another lift of $a$, then $k=e'-e\in\oa{J}$ and the element obtained from $e'$ is
$$\begin{aligned}
j'
&=
T-\fun{\sigma_{\oa{E}}}{e'}
=
j-k,
&
e'+j'
&=
e+k+j-k
=
e+j.
\end{aligned}$$
Therefore $e+j$ depends only on $\rbk{T,a}$.
The two coordinates of $\fun{\Xi_{\oa{E}}}{e+j}$, with $\Xi_{\oa{E}}$ defined in
\eqref{eq:busby-reconstruction-isomorphism}, are
$$\begin{aligned}
\fun{\sigma_{\oa{E}}}{e+j}
&=
\fun{\sigma_{\oa{E}}}{e}
+j
=
T,
&
\fun{p}{e+j}
&=
a.
\end{aligned}$$
Hence the map $\Xi_{\oa{E}}$ in \eqref{eq:busby-reconstruction-isomorphism} is surjective.
It is already an injective $\ast$-homomorphism, and its inverse is given by
\begin{equation}
\begin{aligned}
\fun{\Xi_{\oa{E}}^{-1}}{T,a}
&=
e+j,
&
j
&=
T-\fun{\sigma_{\oa{E}}}{e},
&
\fun{p}{e}
&=
a.
\end{aligned}
\label{eq:busby-reconstruction-inverse}
\end{equation}
The inverse formula \eqref{eq:busby-reconstruction-inverse} therefore proves that the map $\Xi_{\oa{E}}$ in
\eqref{eq:busby-reconstruction-isomorphism} is a $\ast$-isomorphism.
\end{proof}

\subsection{Supplementary spatial and quasilocal results}\label{sec:paper-spatial-quasilocal-appendix}

In this appendix, the finite-mode embeddings in \eqref{eq:paper-two-model-global-finite-mode-element} form a directed family for an arbitrary separable infinite-dimensional one-particle space. We describe its norm-closed union through finite-particle coordinates and states, and measure the failure of uniform local approximation using explicit operators and vector states. The graph finite-region construction is recovered only in the final subsection.

Fix a separable infinite-dimensional one-particle space \(\sphilbfrak{h}\) and an orthonormal basis \(\seq{u_x}{x\in X}\) indexed by a countably infinite set \(X\). For finite \(\Lambda\subset X\), set \begin{equation}
\begin{aligned}
\sphilbfrak{h}_{\Lambda}
&=
\splinspan_{\fldcmp}\set{u_x}{x\in\Lambda},
&
\sphilbfrak{h}_{X\setminus\Lambda}
&=
\sphilbfrak{h}_{\Lambda}^{\perp},
&
P_{\Lambda}
&\colon
\sphilbfrak{h}
\to
\sphilbfrak{h};
\quad
\fun{P_{\Lambda}}{f}
=
\sum_{x\in\Lambda}\bkt{u_x}{f}_{\sphilbfrak{h}}u_x.
\end{aligned}
\label{eq:paper-basis-localization-data}
\end{equation} The finite-dimensional subspaces are directed by inclusion and have dense union. For \(f\in\sphilbfrak{h}\), its coordinate support is \[\begin{aligned}
\fun{\supp_{X}}{f}
&=
\set{x\in X}{\bkt{u_x}{f}_{\sphilbfrak{h}}\neq0}.
\end{aligned}\] Finite support and infinite support below refer to this set. For \(d\in\semigrposint\) and \(\sphilbfrak{h}=\fun{\lp^{2}}{\fldreal^{d}}\), take \(X=\ringratint^d\times\monnat\). For each \(a\in\ringratint^d\), set \(C_a=a+\rightopeninterval{0}{1}^{d}\) and choose an orthonormal basis \(\seq{u_{(a,m)}}{m\in\monnat}\) of \(\fun{\lp^{2}}{C_a}\) with \(u_{(a,0)}=\fndef{C_a}\). The union of these bases is an orthonormal basis of \(\fun{\lp^{2}}{\fldreal^{d}}\); each \(\sphilbfrak{h}_{\Lambda}\) is finite-dimensional and supported in finitely many unit cubes. The resulting local algebra depends on this basis and is a finite-mode localization; it is not identified with the algebra associated with the full infinite-dimensional space \(\fun{\lp^{2}}{\Omega}\) of a bounded region \(\Omega\).

\subsubsection{Directed systems of local subalgebras}\label{sec:paper-spatial-localization-system}

In this subsubsection, the finite-dimensional basis subspaces in \eqref{eq:paper-basis-localization-data} yield compatible tensor-factor inclusions with dense finite-particle coordinate ranges. The closure of their union is unchanged when finite index sets are taken along an increasing sequence whose union is \(X\).

The set \(\setpowfin{X}\) is directed because \(\Lambda\cup\Lambda'\) is an upper bound for \(\Lambda,\Lambda'\in\setpowfin{X}\). Set \(\oa{B}_{\emptyset}=\fldcmp 1_{\fun{\oa{B}}{\sphilbfrak{h}}}\); formulas involving \(\sphilbfrak{h}_{\Lambda}\) below concern nonempty \(\Lambda\).

Under the canonical particle-number decompositions, define the concrete tensor-factor inclusions by \begin{equation}
\begin{aligned}
i_{\Lambda,\Lambda'}
\colon
\oa{M}(\oa{K}(\sphilbfrak{h}_{\Lambda})^{\gamma})
\to
\oa{M}(\oa{K}(\sphilbfrak{h}_{\Lambda'})^{\gamma}),
\quad
i_{\Lambda,X}
\colon
\oa{M}(\oa{K}(\sphilbfrak{h}_{\Lambda})^{\gamma})
\to
\fun{\oa{B}}{\sphilbfrak{h}}.
\end{aligned}
\label{eq:paper-global-local-inclusion-map}
\end{equation} Use the block representation \(\pi_{\sphilbfrak{h}_{\Lambda},m}\) defined in \eqref{eq:paper-main-finite-mode-block-representation}. Let \(D=\seq{D_r}{r\in\monnat}\in\oa{M}(\oa{K}(\sphilbfrak{h}_{\Lambda})^{\gamma})\). For \(\emptyset\neq \Lambda\subset \Lambda'\in\setpowfin{X}\) and \(m\in\monnat\), the finite-mode action is \begin{equation}
\begin{aligned}
\fun{\pi_{\sphilbfrak{h}_{\Lambda'},m}}{\rbk{\fun{i_{\Lambda,\Lambda'}}{D}}_m}
&=
\bigoplus_{r=0}^{m}
\fun{\pi_{\sphilbfrak{h}_{\Lambda},r}}{D_r}
\otimes
1_{\sphilbfrak{h}_{\Lambda'\setminus \Lambda}^{\otimes_{\txtsym}\rbk{m-r}}}.
\end{aligned}
\label{eq:paper-finite-region-local-inclusion-block-action}
\end{equation} For \(\emptyset\neq \Lambda\in\setpowfin{X}\) and \(m\in\monnat\), use the coordinate map \(p_m\) in \eqref{eq:paper-main-coordinate-map} and the comparison isomorphism \(\varpi_m\) in \eqref{eq:paper-main-fock-cutoff-identification}. The global action is \begin{equation}
\begin{aligned}
\fun{\varpi_m\circ p_m\circ i_{\Lambda,X}}{D}
&=
\bigoplus_{r=0}^{m}
\fun{\pi_{\sphilbfrak{h}_{\Lambda},r}}{D_r}
\otimes
1_{\sphilbfrak{h}_{X\setminus \Lambda}^{\otimes_{\txtsym}\rbk{m-r}}}.
\end{aligned}
\label{eq:paper-global-local-inclusion-block-actions}
\end{equation} The global inclusion in \eqref{eq:paper-global-local-inclusion-map} is the finite-mode embedding of \eqref{eq:paper-two-model-global-finite-mode-element}, explicitly \begin{equation}
\begin{aligned}
\fun{i_{\Lambda,X}}{D}
&=
\fun{\jmath}{
\faadj{U_{\sphilbfrak{h}_{\Lambda}}}
\rbk{\fun{\diag_{\sphilbfrak{h}_{\Lambda}}}{D}\otimes1}
U_{\sphilbfrak{h}_{\Lambda}}}.
\end{aligned}
\label{eq:paper-basis-global-local-inclusion}
\end{equation} Define the basis-local and quasilocal subalgebras by \begin{equation}
\begin{aligned}
\oa{B}_{\Lambda}
&=
\fun{i_{\Lambda,X}}{\oa{M}(\oa{K}(\sphilbfrak{h}_{\Lambda})^{\gamma})}
\quad\rbk{\Lambda\neq\emptyset},
&
\oa{B}_{\mathrm{ql}}
&=
\gtclos{\bigcup_{\Lambda\in\setpowfin{X}}\oa{B}_{\Lambda}}
^{\norm{\mathord{\cdot}}_{\fun{\oa{B}}{\sphilbfrak{h}}}}.
\end{aligned}
\label{eq:paper-basis-local-quasilocal-algebras}
\end{equation} The symbols \(\oa{B}_{\Lambda}\) and \(\oa{B}_{\mathrm{ql}}\) in this appendix refer to the chosen basis \(\seq{u_x}{x\in X}\); their graph specialization is identified at the end of the appendix.

\begin{prop}[Tensor-factor identities for local inclusions]\label{prop:paper-tensor-factor-localization-identities}
For $\emptyset\neq \Lambda\subset \Lambda'\subset \Lambda''\in\setpowfin{X}$,
\begin{equation}
\begin{aligned}
i_{\Lambda',\Lambda''}\circ i_{\Lambda,\Lambda'}
&=
i_{\Lambda,\Lambda''},
&
i_{\Lambda',X}\circ i_{\Lambda,\Lambda'}
&=
i_{\Lambda,X}.
\end{aligned}
\label{eq:paper-spatial-localization-system-compatibility}
\end{equation}
For $\Lambda,\Lambda'\in\setpowfin{X}\setminus\setone{\emptyset}$ with $\Lambda\cap \Lambda'=\emptyset$,
\begin{equation}
\begin{aligned}
\commutator{
\fun{i_{\Lambda,X}}{\oa{M}(\oa{K}(\sphilbfrak{h}_{\Lambda})^{\gamma})}}
{\fun{i_{\Lambda',X}}{\oa{M}(\oa{K}(\sphilbfrak{h}_{\Lambda'})^{\gamma})}}
&=
0.
\end{aligned}
\label{eq:paper-disjoint-local-algebras-commute}
\end{equation}
For the continuum basis, disjoint index sets need not have disjoint supports in $\fldreal^{d}$;
the commutation follows from orthogonality of the one-particle modes.
\end{prop}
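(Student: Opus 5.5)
The plan is to verify both identities at the level of Fock representatives and then pull back. Corollary~\ref{cor:paper-fock-cutoff-realization} makes $\oarepn_{\txtfock}$ faithful on $\fun{\oa{B}}{\sphilbfrak{h}}$. Proposition~\ref{prop:finite-mode-product} makes each $\diag_{\sphilbfrak{h}_{\Lambda}}$ an isometric $\ast$-isomorphism, and each $\pi_{\sphilbfrak{h}_{\Lambda'},m}$ is an isomorphism. It therefore suffices to compare operators on $\sphilbfrak{h}_{\Lambda''}^{\otimes_{\txtsym}m}$ for the local compatibility, and on $\sphilbfrak{h}^{\otimes_{\txtsym}m}$ for the global statements. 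The tool is the exponential law for bosonic Fock spaces. For orthogonal closed subspaces $\sphilbfrak{a}\perp\sphilbfrak{b}$, the unitaries $W$ of \eqref{eq:paper-main-particle-fock-decomposition-unitary} identify $\fun{\spfockgamma_{\txtbsn}}{\sphilbfrak{a}\oplus\sphilbfrak{b}}$ with $\fun{\spfockgamma_{\txtbsn}}{\sphilbfrak{a}}\otimes\fun{\spfockgamma_{\txtbsn}}{\sphilbfrak{b}}$. Under this identification, the operator in \eqref{eq:paper-finite-region-local-inclusion-block-action} or \eqref{eq:paper-global-local-inclusion-block-actions} is $\diag(D)\otimes1$, restricted to total particle number $m$.

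For compatibility, decompose $\sphilbfrak{h}_{\Lambda''}=\sphilbfrak{h}_{\Lambda}\oplus\sphilbfrak{h}_{\Lambda'\setminus\Lambda}\oplus\sphilbfrak{h}_{\Lambda''\setminus\Lambda'}$. The first step is an associativity identity: splitting off $\sphilbfrak{h}_{\Lambda''\setminus\Lambda'}$ and then splitting $\sphilbfrak{h}_{\Lambda'}$ gives the same unitary onto the threefold tensor product as splitting $\sphilbfrak{h}_{\Lambda''}$ directly. I would prove this on symmetrized product vectors $P_m(\xi\otimes\eta\otimes\zeta)$. There the binomial normalizations multiply, since $\binom{m}{k+l}\binom{k+l}{k}=\binom{m}{k}\binom{m-k}{l}\cdot(\text{trinomial consistency})$, and the iterated symmetrizations collapse because $P_m(P_{k+l}\otimes1)=P_m$. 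With this identity, $i_{\Lambda',\Lambda''}(i_{\Lambda,\Lambda'}(D))$ acts as $D\otimes1\otimes1=D\otimes1$, which is $i_{\Lambda,\Lambda''}(D)$. The global identity $i_{\Lambda',X}\circ i_{\Lambda,\Lambda'}=i_{\Lambda,X}$ follows by the same argument with $\sphilbfrak{h}=\sphilbfrak{h}_{\Lambda}\oplus\sphilbfrak{h}_{\Lambda'\setminus\Lambda}\oplus\sphilbfrak{h}_{X\setminus\Lambda'}$, applied sector by sector through \eqref{eq:paper-global-local-inclusion-block-actions} and using injectivity of $\jmath$, i.e.\ $\oarepn_{\txtfock}\circ\jmath=\id$.

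For commutation, take $\Lambda\cap\Lambda'=\emptyset$. The one-particle spaces $\sphilbfrak{h}_{\Lambda}$ and $\sphilbfrak{h}_{\Lambda'}$ are then orthogonal because the basis is orthonormal; no support condition in $\fldreal^d$ is needed. Write $\sphilbfrak{h}=\sphilbfrak{h}_{\Lambda}\oplus\sphilbfrak{h}_{\Lambda'}\oplus\sphilbfrak{h}_{X\setminus(\Lambda\cup\Lambda')}$. The associativity identity of the first step shows that $\oarepn_{\txtfock}(i_{\Lambda,X}(D))$ corresponds to $\diag(D)\otimes1\otimes1$. For the other operator I would reorder the factors, using that the decomposition unitary commutes with the flip of $\fun{\spfockgamma_{\txtbsn}}{\sphilbfrak{h}_{\Lambda}}\otimes\fun{\spfockgamma_{\txtbsn}}{\sphilbfrak{h}_{\Lambda'}}$. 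This is the symmetry of $P_m$ under permutation of tensor positions. With that, $\oarepn_{\txtfock}(i_{\Lambda',X}(D'))$ corresponds to $1\otimes\diag(D')\otimes1$. These two operators commute, and faithfulness of $\oarepn_{\txtfock}$ transfers the commutation back to $\fun{\oa{B}}{\sphilbfrak{h}}$.

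I expect the main obstacle to be the associativity of the decomposition unitaries, in particular keeping track of the binomial factors and of the orthogonality of the summands indexed by $(k,l)$. I would handle it by showing that both composite maps are isometries which agree on the total set of vectors $P_m(\xi\otimes\eta\otimes\zeta)$ with $\xi,\eta,\zeta$ symmetric in the three subspaces. Checking isometry on inner products reduces to counting the positions at which these vectors are placed. Everything else is a formal transfer through $\pi$, $\diag$, $\jmath$ and $\varpi_m$.
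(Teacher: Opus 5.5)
Your proposal is correct and follows the paper's own proof. Both pass to a threefold orthogonal decomposition, check that the two iterated decomposition unitaries agree via $\binom{m}{k+l}\binom{k+l}{k}=\binom{m}{k}\binom{m-k}{l}=\frac{m!}{k!\,l!\,(m-k-l)!}$, and obtain commutation from the fact that the two operators act on distinct tensor factors; this binomial identity is exact, so your extra ``trinomial consistency'' factor should be dropped. You also spell out two points the paper leaves implicit: the verification on symmetrized product vectors, and the reordering flip needed to put the second operator in the form $1\otimes\diag(D')\otimes1$.
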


\begin{proof}
Fix nonempty $\Lambda,\Lambda',\Lambda''\in\setpowfin{X}$ with $\Lambda\subset \Lambda'\subset \Lambda''$ and
$D\in\oa{M}(\oa{K}(\sphilbfrak{h}_{\Lambda})^{\gamma})$.
With respect to
$$\begin{aligned}
\sphilbfrak{h}_{\Lambda''}
&=
\sphilbfrak{h}_{\Lambda}
\oplus
\sphilbfrak{h}_{\Lambda'\setminus \Lambda}
\oplus
\sphilbfrak{h}_{\Lambda''\setminus \Lambda'},
\end{aligned}$$
both $\fun{i_{\Lambda',\Lambda''}\circ i_{\Lambda,\Lambda'}}{D}$ and $\fun{i_{\Lambda,\Lambda''}}{D}$ act on the $m$-particle space as
$$\begin{aligned}
\bigoplus_{\substack{i,j,k\in\monnat\\i+j+k=m}}
\fun{\pi_{\sphilbfrak{h}_{\Lambda},i}}{D_i}
\otimes1_{\sphilbfrak{h}_{\Lambda'\setminus \Lambda}^{\otimes_{\txtsym}j}}
\otimes1_{\sphilbfrak{h}_{\Lambda''\setminus \Lambda'}^{\otimes_{\txtsym}k}}.
\end{aligned}$$
The two possible iterated decompositions have the same normalization because
$$\begin{aligned}
\binom{m}{i+j}^{1/2}\binom{i+j}{i}^{1/2}
&=
\rbk{\frac{m!}{i!j!k!}}^{1/2}
=
\binom{m}{i}^{1/2}\binom{m-i}{j}^{1/2}.
\end{aligned}$$
This proves $i_{\Lambda',\Lambda''}\circ i_{\Lambda,\Lambda'}=i_{\Lambda,\Lambda''}$.
The same calculation with
$\sphilbfrak{h}=\sphilbfrak{h}_{\Lambda''}\oplus\sphilbfrak{h}_{\Lambda''}^{\perp}$ proves
$i_{\Lambda',X}\circ i_{\Lambda,\Lambda'}=i_{\Lambda,X}$.

Now let $\Lambda,\Lambda'\in\setpowfin{X}$ be disjoint and nonempty, and take
$D\in\oa{M}(\oa{K}(\sphilbfrak{h}_{\Lambda})^{\gamma})$ and
$C\in\oa{M}(\oa{K}(\sphilbfrak{h}_{\Lambda'})^{\gamma})$.
The orthogonal decomposition
$\sphilbfrak{h}=\sphilbfrak{h}_{\Lambda}\oplus\sphilbfrak{h}_{\Lambda'}
\oplus\sphilbfrak{h}_{X\setminus\rbk{\Lambda\cup \Lambda'}}$ and the map $\varpi_m$ in
\eqref{eq:paper-main-fock-cutoff-identification} give, on the $m$-particle block,
\begin{equation}
\begin{aligned}
\fun{\varpi_m\circ p_m\circ i_{\Lambda,X}}{D}
&=
\bigoplus_{\substack{i,j,k\in\monnat\\i+j+k=m}}
\fun{\pi_{\sphilbfrak{h}_{\Lambda},i}}{D_i}
\otimes1_{\sphilbfrak{h}_{\Lambda'}^{\otimes_{\txtsym}j}}
\otimes1_{\sphilbfrak{h}_{X\setminus\rbk{\Lambda\cup \Lambda'}}^{\otimes_{\txtsym}k}},
\\
\fun{\varpi_m\circ p_m\circ i_{\Lambda',X}}{C}
&=
\bigoplus_{\substack{i,j,k\in\monnat\\i+j+k=m}}
1_{\sphilbfrak{h}_{\Lambda}^{\otimes_{\txtsym}i}}
\otimes\fun{\pi_{\sphilbfrak{h}_{\Lambda'},j}}{C_j}
\otimes1_{\sphilbfrak{h}_{X\setminus\rbk{\Lambda\cup \Lambda'}}^{\otimes_{\txtsym}k}}.
\end{aligned}
\label{eq:paper-concrete-disjoint-local-actions}
\end{equation}
The two operators act on different tensor factors in every direct-sum component, so they commute.
This proves \eqref{eq:paper-disjoint-local-algebras-commute}.
\end{proof}

The dense basis span gives coordinate density and compatibility with the particle-degree embeddings.

\begin{prop}[Finite-mode coordinate density]\label{prop:paper-basis-local-coordinate-density}
For every $n\in\monnat$, the finite-mode coordinate ranges are norm dense:
\begin{equation}
\begin{aligned}
\gtclos{
\bigcup_{\substack{\Lambda\in\setpowfin{X}\\\Lambda\neq\emptyset}}
\fun{p_n\circ i_{\Lambda,X}}{\oa{M}(\oa{K}(\sphilbfrak{h}_{\Lambda})^{\gamma})}
}^{\norm{\mathord{\cdot}}_{\oa{B}_n(\sphilbfrak{h})}}
&=
\oa{B}_n(\sphilbfrak{h}).
\end{aligned}
\label{eq:paper-local-coordinate-density}
\end{equation}
The local inclusions also agree with the particle-degree coefficient embeddings.
If $\emptyset\neq\Lambda\in\setpowfin{X}$ and $\sphilbfrak{f}\subset \sphilbfrak{h}_{\Lambda}$ is
finite-dimensional, $0\leq m\leq n$, and
$K\in e_{\sphilbfrak{f},m}\oa{K}(\sphilbfrak{f})^{\gamma}e_{\sphilbfrak{f},m}$, the local particle coefficient
satisfies the embedding compatibility
\begin{equation}
\begin{aligned}
\fun{E_{m,n} \circ j_{\sphilbfrak{f},m}}{K}
&\in
\fun{p_n\circ i_{\Lambda,X}}{\oa{M}(\oa{K}(\sphilbfrak{h}_{\Lambda})^{\gamma})},
\end{aligned}
\label{eq:paper-local-particle-coefficient-embedding-compatibility}
\end{equation}
where $j_{\sphilbfrak{f},m}$ is defined in
\eqref{eq:paper-main-degree-coefficient-canonical-injection}.
\end{prop}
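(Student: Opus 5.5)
We will prove both assertions by passing through the Fock realization, where every object has already been made explicit. In the graph specialization, \(\oa{B}_{\mathrm{ql}}\) is then the abstract local net. Since \(\varpi_n\) is an isometric \(\ast\)-isomorphism by Theorem \ref{thm:paper-abstract-concrete-comparison}, it suffices to prove the density of \(\varpi_n\circ p_n\circ i_{\Lambda,X}\) in \(\fun{\rho_n}{\oaresolventalgebra(\sphilbfrak{h},\sigma)^{\gamma}}\). That algebra equals \(\bigoplus_{m=0}^{n}\fun{\widetilde{E}_{m,n}}{\fun{\schattencls}{\sphilbfrak{h}^{\otimes_{\txtsym}m}}}\) by \eqref{eq:paper-concrete-particle-coefficient-decomposition}. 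Each \(\widetilde{E}_{m,n}\) is bounded, with norm at most \(\binom{n}{m}\) by the triangle inequality on the subset sum, and the sum over \(m\) is finite. Density therefore reduces to showing two things for each fixed \(m\le n\): that \(\bigcup_{\Lambda}\fun{\schattencls}{\sphilbfrak{h}_{\Lambda}^{\otimes_{\txtsym}m}}\), extended by zero, is dense in \(\fun{\schattencls}{\sphilbfrak{h}^{\otimes_{\txtsym}m}}\), and that \(\widetilde{E}_{m,n}\) of such a zero-extended coefficient lies in the range of \(\varpi_n\circ p_n\circ i_{\Lambda,X}\). The density half follows because the \(\sphilbfrak{h}_{\Lambda}\) are directed with dense union. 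Hence the compressions \(P_{\Lambda}^{\otimes m}KP_{\Lambda}^{\otimes m}\) converge in norm to \(K\) for compact \(K\): this holds first for finite-rank \(K\), and then in general by a \(3\varepsilon\)-argument. Finite sums over several \(\Lambda\) are absorbed by taking unions, because the inclusions are compatible by Proposition \ref{prop:paper-tensor-factor-localization-identities}.

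The key step is the containment \eqref{eq:paper-local-particle-coefficient-embedding-compatibility}, and density then follows from it. Take \(\sphilbfrak{f}\subset\sphilbfrak{h}_{\Lambda}\). By Theorem \ref{thm:paper-concrete-particle-coefficient-reconstruction-data} and Proposition \ref{prop:paper-coefficient-connecting-maps}, \(\fun{\varpi_n\circ E_{m,n}\circ j_{\sphilbfrak{f},m}}{K}=\fun{\widetilde{E}_{m,n}}{c}\), where \(c\) is the zero extension of \(\fun{\pi_{\sphilbfrak{f},m}}{K}\) from \(\sphilbfrak{f}^{\otimes_{\txtsym}m}\). Regard \(c\) as an operator on \(\sphilbfrak{h}_{\Lambda}^{\otimes_{\txtsym}m}\). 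Define the block family \(D_k=\fun{E_{\sphilbfrak{h}_{\Lambda},m,k}}{c}\) for \(k\ge m\) and \(D_k=0\) for \(k<m\), where \(E_{\sphilbfrak{h}_{\Lambda},m,k}\) is the finite-mode subset-sum map \eqref{eq:paper-finite-mode-local-coefficient-embedding}. This family is not bounded in \(k\), so \(D\) does not lie in \(\oa{M}\). Only the coordinates \(k\le n\) matter for \(p_n\), however, so we truncate and set \(D_k=0\) for \(k>n\). The truncated family is bounded. We must then check that truncation does not spoil the \(n\)th coordinate: by \eqref{eq:paper-global-local-inclusion-block-actions}, \(\varpi_n\circ p_n\circ i_{\Lambda,X}\) involves only \(D_0,\dots,D_n\). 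It remains to identify \(\bigoplus_{k=0}^{n}D_k\otimes1\) with \(\fun{\widetilde{E}_{m,n}}{c}\) under \(W_{\sphilbfrak{h}_{\Lambda},n}\). This is exactly the block identity displayed in Section \ref{sec:paper-concrete-finite-mode-embeddings}: the restriction of \(\fun{\widetilde{E}_{m,n}}{c}\) to the block with \(k\) particles in \(\sphilbfrak{h}_{\Lambda}\) is \(\fun{E_{\sphilbfrak{h}_{\Lambda},m,k}}{c}\otimes1\), and it vanishes for \(k<m\).

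The main obstacle is verifying that block identity honestly. For each \(m\)-subset \(S\) of the tensor factors, the insertion \(c^{[S]}\) vanishes whenever some factor in \(S\) lies in \(\sphilbfrak{h}_{\Lambda}^{\perp}\), because \(c\) is supported on \(\sphilbfrak{h}_{\Lambda}^{\otimes m}\). On the image of \(\faadj{W}\), the surviving subsets are exactly those contained in the \(k\) slots carrying \(\sphilbfrak{h}_{\Lambda}\)-vectors. Counting these subsets must reproduce the normalization \(\binom{n}{k}^{1/2}\) in \eqref{eq:paper-main-particle-fock-decomposition-unitary}. I would check this on vectors \(P_n(\xi\otimes\eta)\), with \(\xi\) in the \(k\) local factors and \(\eta\) in the \(n-k\) remaining factors. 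Symmetry lets one move the local factors into the first \(k\) positions, and the commutation of \(\sum_S c^{[S]}\) with \(P_n\) gives the claim.

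Having established the containment, the density \eqref{eq:paper-local-coordinate-density} follows. Any element \(\sum_{m=0}^{n}\fun{E_{m,n}}{a_m}\) is approximated by choosing local coefficients \(a_m^{\prime}\) near \(a_m\), all lying in one common \(\sphilbfrak{h}_{\Lambda}\), which is possible by directedness and the density of \(\bigcup_{\Lambda}\fun{\schattencls}{\sphilbfrak{h}_{\Lambda}^{\otimes_{\txtsym}m}}\). Each \(\fun{E_{m,n}}{a_m^{\prime}}\) then lies in the range of \(p_n\circ i_{\Lambda,X}\) by the containment. That range is a subalgebra, since \(p_n\circ i_{\Lambda,X}\) is a \(\ast\)-homomorphism, so it is closed under the finite sum over \(m\).
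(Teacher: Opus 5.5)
Your proposal is correct and follows essentially the same route as the paper. You approximate the finitely many coefficients by compressions into one common $\mathfrak{h}_{\Lambda}$, realize each local coefficient sum as the $n$th coordinate of a bounded local element by prescribing the blocks through degree $n$ with the finite-mode subset-sum formula and setting all higher blocks to zero, and identify that coordinate through the tensor-factor block action. The one addition is your check of the block identity, which the paper takes as established; there the factor $\binom{n}{k}^{1/2}$ appears on both sides and cancels once $\sum_S c^{[S]}$ commutes with $P_n$, so no subset counting is needed.
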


\begin{proof}
Fix $n\in\monnat$, $A_n\in\oa{B}_n(\sphilbfrak{h})$, and $\varepsilon>0$.
Let $\seq{a_m}{0\leq m\leq n}$ be the unique finite particle-degree coefficients of $A_n$ from
Lemma \ref{lem:paper-finite-particle-coefficient-uniqueness}.
Under $\Theta_m$, every $a_m$ is compact on $\sphilbfrak{h}^{\otimes_{\txtsym}m}$.
Finite-rank operators whose range and adjoint range lie in symmetric tensor powers of the algebraic span
of the basis are norm dense among these compact operators.
There is therefore a common nonempty finite $\Lambda\subset X$ and coefficients
$a_{\Lambda,m}$ in the range of $j_{\sphilbfrak{h}_{\Lambda},m}$ such that
\begin{equation}
\begin{aligned}
\norm{A_n-\sum_{m=0}^{n}\fun{E_{m,n}}{a_{\Lambda,m}}}
&<
\varepsilon.
\end{aligned}
\label{eq:paper-basis-local-density-approximation}
\end{equation}
To realize the coefficient sum in \eqref{eq:paper-basis-local-density-approximation} as a coordinate
of a bounded local element, use its finite-mode particle-degree formula to specify the local blocks
through degree $n$ and set all higher local blocks equal to zero.
The resulting finite-mode block family belongs to
$\oa{M}(\oa{K}(\sphilbfrak{h}_{\Lambda})^{\gamma})$.
The tensor-factor action in \eqref{eq:paper-global-local-inclusion-block-actions} then identifies its
$n$th global coordinate with the coefficient sum in \eqref{eq:paper-basis-local-density-approximation}.
The same identification applies to the coefficient $\fun{j_{\sphilbfrak{f},m}}{K}$ whenever
$\sphilbfrak{f}\subset\sphilbfrak{h}_{\Lambda}$, proving
\eqref{eq:paper-local-particle-coefficient-embedding-compatibility}.
The approximation proves
\eqref{eq:paper-local-coordinate-density}.
\end{proof}

The particle-cutoff inverse system alone does not determine the spatial localization maps. Proposition \ref{prop:paper-tensor-factor-localization-identities} proves compatibility and disjoint-range commutativity directly. Proposition \ref{prop:paper-basis-local-coordinate-density} establishes coordinate density and coefficient-reconstruction compatibility. The quasilocal algebra is the algebra \(\oa{B}_{\mathrm{ql}}\) defined in \eqref{eq:paper-basis-local-quasilocal-algebras}.

\begin{prop}[Cofinal invariance of the local closure]\label{prop:paper-cofinal-localization-invariance}
Let $\setfml{D}\subset\setpowfin{X}\setminus\setone{\emptyset}$ be directed by inclusion and assume
that, for every nonempty $\Lambda\in\setpowfin{X}$, there is $\Lambda'\in\setfml{D}$ with $\Lambda\subset \Lambda'$.
The cofinal local closure equals the full local closure:
\begin{equation}
\begin{aligned}
\gtclos{
\bigcup_{\substack{\Lambda\in\setfml{D}\\\Lambda\neq\emptyset}}
\fun{i_{\Lambda,X}}{\oa{M}(\oa{K}(\sphilbfrak{h}_{\Lambda})^{\gamma})}
}^{\norm{\mathord{\cdot}}_{\fun{\oa{B}}{\sphilbfrak{h}}}}
&=
\oa{B}_{\mathrm{ql}}.
\end{aligned}
\label{eq:paper-cofinal-localization-invariance}
\end{equation}
In particular, if $\seq{\Lambda_j}{j\in\monnat}$ is an increasing sequence of nonempty finite subsets of $X$
with $\bigcup_{j\in\monnat}\Lambda_j=X$, then the closure along this sequence agrees with
\eqref{eq:paper-basis-local-quasilocal-algebras}, and the Banach-space quotient of the abstract Buchholz algebra by its
quasilocal subalgebra,
$\setquot{\fun{\oa{B}}{\sphilbfrak{h}}}{\oa{B}_{\mathrm{ql}}}$, does not depend on the exhaustion.
\end{prop}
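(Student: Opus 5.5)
The plan is to prove the two inclusions in \eqref{eq:paper-cofinal-localization-invariance} separately; the exhaustion corollary will follow from the general statement. Write $\oa{B}_{\setfml{D}}$ for the left-hand side. The inclusion $\oa{B}_{\setfml{D}}\subset\oa{B}_{\mathrm{ql}}$ is immediate: every member of $\setfml{D}$ is a nonempty finite subset of $X$, so each local algebra appearing on the left also appears in the union defining $\oa{B}_{\mathrm{ql}}$ in \eqref{eq:paper-basis-local-quasilocal-algebras}. Taking norm closures preserves this inclusion.

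For the reverse inclusion, I will show that every local algebra $\oa{B}_{\Lambda}$ with $\Lambda\in\setpowfin{X}$ lies in $\bigcup_{\Lambda'\in\setfml{D}}\oa{B}_{\Lambda'}$.

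\begin{itemize}
\item For $\Lambda=\emptyset$, the algebra $\oa{B}_{\emptyset}=\fldcmp1$ is contained in any $\oa{B}_{\Lambda'}$. The reason is that $i_{\Lambda',X}$ is unital by Proposition \ref{prop:concrete-spatial-localization-system} and its basis analogue, and $\setfml{D}$ is nonempty because it is cofinal in the nonempty members of $\setpowfin{X}$.
\item For nonempty $\Lambda$, cofinality gives $\Lambda'\in\setfml{D}$ with $\Lambda\subset\Lambda'$. The compatibility identity $i_{\Lambda',X}\circ i_{\Lambda,\Lambda'}=i_{\Lambda,X}$ from Proposition \ref{prop:paper-tensor-factor-localization-identities} then gives
\[
\oa{B}_{\Lambda}=\fun{i_{\Lambda',X}}{\fun{i_{\Lambda,\Lambda'}}{\oa{M}(\oa{K}(\sphilbfrak{h}_{\Lambda})^{\gamma})}}\subset\oa{B}_{\Lambda'}.
\]
\end{itemize}

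Hence the full union of local algebras is contained in the $\setfml{D}$-union, and passing to closures gives $\oa{B}_{\mathrm{ql}}\subset\oa{B}_{\setfml{D}}$. The directedness of $\setfml{D}$ is not actually needed for this equality. It only ensures that the $\setfml{D}$-union is itself a $\ast$-subalgebra: for $\Lambda_1,\Lambda_2\in\setfml{D}$, one upper bound in $\setfml{D}$ contains both local algebras, by the same inclusion argument.

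For the sequence statement, let $\setfml{D}=\set{\Lambda_j}{j\in\monnat}$. This family is totally ordered, hence directed. It is cofinal: a finite $\Lambda\subset X$ has each of its points in some $\Lambda_j$, and because the sequence is increasing, $\Lambda\subset\Lambda_{j}$ for the largest such index. The general statement then identifies the closure along the sequence with $\oa{B}_{\mathrm{ql}}$. The quotient Banach space $\setquot{\fun{\oa{B}}{\sphilbfrak{h}}}{\oa{B}_{\mathrm{ql}}}$ depends only on this closed subspace, so it is independent of the exhaustion.

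The only point requiring care is the empty-set convention. I will treat it as the separate case above, using unitality of the inclusions, rather than through the compatibility identities, which are stated only for nonempty regions.
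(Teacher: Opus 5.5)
Your proof is correct and follows the paper's argument: cofinality together with the compatibility identity $i_{\Lambda',X}\circ i_{\Lambda,\Lambda'}=i_{\Lambda,X}$ gives $\oa{B}_{\Lambda}\subset\oa{B}_{\Lambda'}$, and $\setfml{D}\subset\setpowfin{X}\setminus\setone{\emptyset}$ gives the other inclusion. Two of your steps go beyond what the paper writes out: your separate treatment of $\oa{B}_{\emptyset}=\fldcmp 1$ via unitality of $i_{\Lambda',X}$, and your explicit check that an increasing exhaustion is cofinal, both fill in points the paper's proof leaves implicit.
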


\begin{proof}
Fix a nonempty $\Lambda\in\setpowfin{X}$ and choose $\Lambda'\in\setfml{D}$ with $\Lambda\subset \Lambda'$.
Equation \eqref{eq:paper-spatial-localization-system-compatibility} gives
$$\begin{aligned}
\fun{i_{\Lambda,X}}{\oa{M}(\oa{K}(\sphilbfrak{h}_{\Lambda})^{\gamma})}
&=
\fun{i_{\Lambda',X}\circ i_{\Lambda,\Lambda'}}{\oa{M}(\oa{K}(\sphilbfrak{h}_{\Lambda})^{\gamma})}
\subset
\fun{i_{\Lambda',X}}{\oa{M}(\oa{K}(\sphilbfrak{h}_{\Lambda'})^{\gamma})}.
\end{aligned}$$
Taking unions and norm closures proves that the right-hand side of
\eqref{eq:paper-cofinal-localization-invariance} is contained in its left-hand side.
The inclusion $\setfml{D}\subset\setpowfin{X}\setminus\setone{\emptyset}$ proves the reverse
containment.
\end{proof}

\subsubsection{Spatially local subalgebras}\label{spatially-local-subalgebras}

In this subsubsection, the gauge-fixed resolvent algebras of finite basis subspaces embed compatibly in the full resolvent algebra, and an increasing sequence of finite index sets whose union is \(X\) gives the same norm closure.

For nonempty finite index sets \(\Lambda\subset \Lambda'\), the universal resolvent inclusions restrict to injective unital \(\ast\)-homomorphisms \[\begin{aligned}
\iota_{\Lambda,\Lambda'}^{\gamma}
\colon
\oaresolventalgebra(\sphilbfrak{h}_{\Lambda})^{\gamma}
\to
\oaresolventalgebra(\sphilbfrak{h}_{\Lambda'})^{\gamma};\quad
\fun{\iota_{\Lambda,\Lambda'}^{\gamma}}{A}
&=
\fun{\iota_{\sphilbfrak{h}_{\Lambda},\sphilbfrak{h}_{\Lambda'}}}{A}
\end{aligned}\] for \(A\in\oaresolventalgebra(\sphilbfrak{h}_{\Lambda})^{\gamma}\), where \(\iota_{\sphilbfrak{h}_{\Lambda},\sphilbfrak{h}_{\Lambda'}}\) is the canonical inclusion in \eqref{eq:paper-canonical-resolvent-inclusion}. Its range is gauge invariant because \[\begin{aligned}
\fun{\gamma_{\theta}\circ\iota_{\sphilbfrak{h}_{\Lambda},\sphilbfrak{h}_{\Lambda'}}}{A}
&=
\fun{\iota_{\sphilbfrak{h}_{\Lambda},\sphilbfrak{h}_{\Lambda'}}\circ\gamma_{\theta}}{A}
=
\fun{\iota_{\sphilbfrak{h}_{\Lambda},\sphilbfrak{h}_{\Lambda'}}}{A}.
\end{aligned}\] They satisfy \begin{equation}
\begin{aligned}
\iota_{\Lambda',\Lambda''}^{\gamma}\circ\iota_{\Lambda,\Lambda'}^{\gamma}
&=
\iota_{\Lambda,\Lambda''}^{\gamma}
\end{aligned}
\label{eq:paper-spatial-localization-compatibility}
\end{equation} whenever \(\Lambda,\Lambda',\Lambda''\in\setpowfin{X}\) are nonempty and \(\Lambda\subset \Lambda'\subset \Lambda''\).

The norm-closed local algebra is \begin{equation}
\begin{aligned}
\oaresolventalgebra_{\mathrm{loc}}(X)^{\gamma}
&=
\gtclos{
\bigcup_{\substack{\Lambda\in\setpowfin{X}\\\Lambda\neq\emptyset}}
\oaresolventalgebra(\sphilbfrak{h}_{\Lambda})^{\gamma}
}^{\norm{\mathord{\cdot}}_{\oaresolventalgebra(\sphilbfrak{h},\sigma)}}.
\end{aligned}
\label{eq:paper-local-gauge-resolvent-algebra}
\end{equation} If \(\setfml{D}\subset\setpowfin{X}\setminus\setone{\emptyset}\) is a cofinal directed family, then \[\begin{aligned}
\gtclos{
\bigcup_{\Lambda\in\setfml{D}}
\oaresolventalgebra(\sphilbfrak{h}_{\Lambda})^{\gamma}
}^{\norm{\mathord{\cdot}}_{\oaresolventalgebra(\sphilbfrak{h},\sigma)}}
&=
\oaresolventalgebra_{\mathrm{loc}}(X)^{\gamma}.
\end{aligned}\] To verify the nontrivial inclusion, fix a nonempty \(\Lambda\in\setpowfin{X}\) and choose \(\Lambda'\in\setfml{D}\) with \(\Lambda\subset \Lambda'\). Equation \eqref{eq:paper-spatial-localization-compatibility} gives \[\begin{aligned}
\oaresolventalgebra(\sphilbfrak{h}_{\Lambda})^{\gamma}
&\subset
\oaresolventalgebra(\sphilbfrak{h}_{\Lambda'})^{\gamma}.
\end{aligned}\] The reverse inclusion follows from \(\setfml{D}\subset\set{\Lambda\subset X}{\Lambda\text{ is finite}}\).

\subsubsection{Quasilocal subalgebra}\label{sec:paper-quasilocal-kernel}

In this subsubsection, the basis-local algebras in \eqref{eq:paper-basis-local-quasilocal-algebras} act as tensor-factor operators in the Fock comparison representation, and their local ranges generate the quasilocal algebra by norm closure.

The construction uses the finite-dimensional subspaces and projections in \eqref{eq:paper-basis-localization-data}. For a nonempty \(\Lambda\in\setpowfin{X}\), the external one-particle space is \(\sphilbfrak{h}_{X\setminus\Lambda}\subset\sphilbfrak{h}\). For \(n\in\monnat\), let \(P_n\) be the specialization \(r=n\) of the symmetrizing projection in \eqref{eq:paper-main-symmetrizing-projection}. For \(0\leq k\leq n\), \(\xi\in\sphilbfrak{h}_{\Lambda}^{\otimes_{\txtsym}k}\), and \(\eta\in\sphilbfrak{h}_{X\setminus\Lambda}^{\otimes_{\txtsym}\rbk{n-k}}\), define the unitary particle decomposition by \[\begin{aligned}
W_{\sphilbfrak{h}_{\Lambda},n}
\colon
\sphilbfrak{h}^{\otimes_{\txtsym}n}
\to
\bigoplus_{k=0}^{n}
\sphilbfrak{h}_{\Lambda}^{\otimes_{\txtsym}k}\otimes \sphilbfrak{h}_{X\setminus\Lambda}^{\otimes_{\txtsym}\rbk{n-k}}.
\end{aligned}\] Its inverse acts by \[\begin{aligned}
\fun{\faadj{W_{\sphilbfrak{h}_{\Lambda},n}}}{\xi\otimes\eta}
&=
\binom{n}{k}^{1/2}P_n\rbk{\xi\otimes\eta}.
\end{aligned}\] The global Fock decomposition is \[\begin{aligned}
U_{\Lambda}
\colon
\fun{\spfockgamma_{\txtbsn}}{\sphilbfrak{h}}
\to
\fun{\spfockgamma_{\txtbsn}}{\sphilbfrak{h}_{\Lambda}}\otimes\fun{\spfockgamma_{\txtbsn}}{\sphilbfrak{h}_{X\setminus\Lambda}}.
\end{aligned}\] It is given on particle blocks by \[\begin{aligned}
\fun{U_{\Lambda}}{\bigoplus_{n\in\monnat}\xi_n}
&=
\bigoplus_{n\in\monnat}\fun{W_{\sphilbfrak{h}_{\Lambda},n}}{\xi_n}.
\end{aligned}\] The isomorphism \(\diag_{\sphilbfrak{h}_{\Lambda}}\) in \eqref{eq:paper-main-finite-mode-product-isomorphism} identifies the intrinsic finite-mode algebra with its concrete gauge-fixed Fock block algebra. The global local map is \[\begin{aligned}
i_{\Lambda,X}
\colon
\oa{M}(\oa{K}(\sphilbfrak{h}_{\Lambda})^{\gamma})
\to
\fun{\oa{B}}{\sphilbfrak{h}}.
\end{aligned}\] Equation \eqref{eq:paper-global-local-inclusion-block-actions} and the definition of \(\diag_{\sphilbfrak{h}_{\Lambda}}\) in \eqref{eq:paper-main-finite-mode-product-isomorphism} show that \(\faadj{U_{\Lambda}}\rbk{\fun{\diag_{\sphilbfrak{h}_{\Lambda}}}{D}\otimes1}U_{\Lambda}\) belongs to the range of \(\oarepn_{\txtfock}\). Thus the map \(\diag_{\sphilbfrak{h}_{\Lambda}}\) in \eqref{eq:paper-main-finite-mode-product-isomorphism} gives the global local map by \[\begin{aligned}
\fun{i_{\Lambda,X}}{D}
&=
\oarepn_{\txtfock}^{-1}
\rbk{\faadj{U_{\Lambda}}\rbk{\fun{\diag_{\sphilbfrak{h}_{\Lambda}}}{D}\otimes1}U_{\Lambda}}.
\end{aligned}\] The range of \(i_{\Lambda,X}\) is the local algebra \(\oa{B}_{\Lambda}\) in \eqref{eq:paper-basis-local-quasilocal-algebras}. Fix nonempty \(\Lambda,\Lambda''\in\setpowfin{X}\) with \(\Lambda\subset \Lambda''\) and an integer \(0\leq k\leq n\). Take \(\xi\in\sphilbfrak{h}_{\Lambda}^{\otimes_{\txtsym}k}\) and \(\eta\in\sphilbfrak{h}_{\Lambda''\setminus \Lambda}^{\otimes_{\txtsym}\rbk{n-k}}\). Define the finite-mode decomposition by \[\begin{aligned}
W_{\sphilbfrak{h}_{\Lambda},n}^{\sphilbfrak{h}_{\Lambda''}}
\colon
\sphilbfrak{h}_{\Lambda''}^{\otimes_{\txtsym}n}
\to
\bigoplus_{k=0}^{n}\sphilbfrak{h}_{\Lambda}^{\otimes_{\txtsym}k}
\otimes \sphilbfrak{h}_{\Lambda''\setminus \Lambda}^{\otimes_{\txtsym}\rbk{n-k}}.
\end{aligned}\] Its inverse acts by \[\begin{aligned}
\fun{\faadjrbk{W_{\sphilbfrak{h}_{\Lambda},n}^{\sphilbfrak{h}_{\Lambda''}}}}{\xi\otimes\eta}
&=
\binom{n}{k}^{1/2}P_n\rbk{\xi\otimes\eta}.
\end{aligned}\] Here \(P_n\) restricts to the symmetrizing projection on \(\sphilbfrak{h}_{\Lambda''}^{\otimes n}\). The finite-mode decomposition unitary is \[\begin{aligned}
U_{\Lambda,\Lambda''}
\colon
\fun{\spfockgamma_{\txtbsn}}{\sphilbfrak{h}_{\Lambda''}}
\to
\fun{\spfockgamma_{\txtbsn}}{\sphilbfrak{h}_{\Lambda}}\otimes\fun{\spfockgamma_{\txtbsn}}{\sphilbfrak{h}_{\Lambda''\setminus \Lambda}}.
\end{aligned}\] It acts by \[\begin{aligned}
\fun{U_{\Lambda,\Lambda''}}{\bigoplus_{n\in\monnat}\xi_n}
&=
\bigoplus_{n\in\monnat}\fun{W_{\sphilbfrak{h}_{\Lambda},n}^{\sphilbfrak{h}_{\Lambda''}}}{\xi_n}.
\end{aligned}\] The corresponding local inclusion is \[\begin{aligned}
i_{\Lambda,\Lambda''}
\colon
\oa{M}(\oa{K}(\sphilbfrak{h}_{\Lambda})^{\gamma})
\to
\oa{M}(\oa{K}(\sphilbfrak{h}_{\Lambda''})^{\gamma}).
\end{aligned}\] The maps \(\diag_{\sphilbfrak{h}_{\Lambda}}\) and \(\diag_{\sphilbfrak{h}_{\Lambda''}}\) in \eqref{eq:paper-main-finite-mode-product-isomorphism} give this inclusion by \[\begin{aligned}
\fun{i_{\Lambda,\Lambda''}}{D}
&=
\fun{\diag_{\sphilbfrak{h}_{\Lambda''}}^{-1}}
{\faadj{U_{\Lambda,\Lambda''}}\rbk{\fun{\diag_{\sphilbfrak{h}_{\Lambda}}}{D}\otimes1}U_{\Lambda,\Lambda''}}.
\end{aligned}\] Proposition \ref{prop:paper-tensor-factor-localization-identities} verifies the tensor-factor identities \[\begin{aligned}
i_{\Lambda',X}\circ i_{\Lambda,\Lambda'}
&=
i_{\Lambda,X},
&
i_{\Lambda',\Lambda''}\circ i_{\Lambda,\Lambda'}
&=
i_{\Lambda,\Lambda''}
\end{aligned}\] for \(\Lambda\subset \Lambda'\subset \Lambda''\). For \(D\in\oa{B}_{\Lambda}\), its faithful comparison representative acts as a scalar on the local-vacuum subspace \(\fldcmp\Psi_{\txtfock,\txtvac}\otimes\fun{\spfockgamma_{\txtbsn}}{\sphilbfrak{h}_{X\setminus\Lambda}}\). Explicitly, \begin{equation}
\begin{aligned}
\fnrestr{\fun{\oarepn_{\txtfock}}{D}}
{\fldcmp\Psi_{\txtfock,\txtvac}\otimes\fun{\spfockgamma_{\txtbsn}}{\sphilbfrak{h}_{X\setminus\Lambda}}}
&=
d_D,
&
d_D
&=
\bkt{\Psi_{\txtfock,\txtvac}}{\fun{\oarepn_{\txtfock}}{D}\Psi_{\txtfock,\txtvac}}.
\end{aligned}
\label{eq:paper-local-element-external-only-block}
\end{equation} The norm closure of these ranges is the quasilocal algebra in \eqref{eq:paper-basis-local-quasilocal-algebras}.

\subsubsection{States on the quasilocal subalgebra}\label{states-on-the-quasilocal-subalgebra}

In this subsubsection, the finite-mode block representations in \eqref{eq:paper-main-finite-mode-block-representation} give a compatibility criterion for restrictions of global states to the quasilocal algebra. The image of the local gauge-fixed resolvent algebra also has a uniform particle-cutoff criterion.

For a subset \(S\subset\opspbddlin{\sphilbfrak{H}}\), define its commutant and double commutant by \[\begin{aligned}
\oacommutant{S}
&=
\set{T\in\opspbddlin{\sphilbfrak{H}}}{TA=AT\text{ for every }A\in S},
&
\oadoublecommutant{S}
&=
\oacommutant{\rbk{\oacommutant{S}}}.
\end{aligned}\] A von Neumann algebra on \(\sphilbfrak{H}\) is a unital \(\ast\)-subalgebra of \(\opspbddlin{\sphilbfrak{H}}\) equal to its double commutant. A state on a von Neumann algebra is normal when it preserves suprema of bounded increasing nets of positive elements.

\begin{defn}[Locally normal state]\label{def:paper-locally-normal-state}
A state $\oastate[\psi]\in\oaspstate_{\fun{\oa{B}}{\sphilbfrak{h}}}$ is locally normal if, for every
nonempty $\Lambda\in\setpowfin{X}$, its restriction to $\oa{B}_{\Lambda}$ extends to a normal state on
$$\begin{aligned}
\oadoublecommutant{
\fun{\fnrestr{\oarepn_{\txtfock}}{\oa{B}_{\Lambda}}}{\oa{B}_{\Lambda}}
}.
\end{aligned}$$
\end{defn}

\begin{thm}[Classification of locally normal states]\label{thm:paper-locally-normal-state-classification}
A state $\oastate[\psi]\in\oaspstate_{\fun{\oa{B}}{\sphilbfrak{h}}}$ is locally normal if and only if,
for every nonempty $\Lambda\in\setpowfin{X}$, there is a unique sequence of positive operators
$R_{\Lambda,k}\in\opspbddlin{\sphilbfrak{h}_{\Lambda}^{\otimes_{\txtsym}k}}$ such that
\begin{equation}
\begin{aligned}
\sum_{k\in\monnat}\sqfun{\trace_{\sphilbfrak{h}_{\Lambda}^{\otimes_{\txtsym}k}}}{R_{\Lambda,k}}
&=
1.
\end{aligned}
\label{eq:paper-locally-normal-density-normalization}
\end{equation}
The state evaluation is
\begin{equation}
\begin{aligned}
\fun{\oastate[\psi]\circ i_{\Lambda,X}}{D}
&=
\sum_{k\in\monnat}
\sqfun{\trace_{\sphilbfrak{h}_{\Lambda}^{\otimes_{\txtsym}k}}}
{R_{\Lambda,k}\fun{\pi_{\sphilbfrak{h}_{\Lambda},k}}{D_k}}
\end{aligned}
\label{eq:paper-locally-normal-density-family}
\end{equation}
for $D=\seq{D_k}{k\in\monnat}\in\oa{M}(\oa{K}(\sphilbfrak{h}_{\Lambda})^{\gamma})$, and such that, whenever
nonempty $\Lambda,\Lambda''\in\setpowfin{X}$ with $\Lambda\subset \Lambda''$,
\begin{equation}
\begin{aligned}
\sum_{k\in\monnat}
\sqfun{\trace_{\sphilbfrak{h}_{\Lambda}^{\otimes_{\txtsym}k}}}
{R_{\Lambda,k}\fun{\pi_{\sphilbfrak{h}_{\Lambda},k}}{D_k}}
&=
\sum_{m\in\monnat}
\sqfun{\trace_{\sphilbfrak{h}_{\Lambda''}^{\otimes_{\txtsym}m}}}
{R_{\Lambda'',m}\fun{\pi_{\sphilbfrak{h}_{\Lambda''},m}}{\rbk{\fun{i_{\Lambda,\Lambda''}}{D}}_m}}.
\end{aligned}
\label{eq:paper-locally-normal-density-compatibility}
\end{equation}
Conversely, every family satisfying
\eqref{eq:paper-locally-normal-density-normalization},
\eqref{eq:paper-locally-normal-density-family}, and
\eqref{eq:paper-locally-normal-density-compatibility} determines a unique state on
$\oa{B}_{\mathrm{ql}}$.
This state extends to $\fun{\oa{B}}{\sphilbfrak{h}}$, every such extension is
locally normal, and all locally normal states arise in this way.
\end{thm}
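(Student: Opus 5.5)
The proof transports everything to the Fock side. By Corollary \ref{cor:paper-fock-local-algebras} and Proposition \ref{prop:finite-mode-product}, $\oarepn_{\txtfock}$ maps $\oa{B}_{\Lambda}$ isometrically onto $\faadj{U_{\Lambda}}(\opspbddlin{\fun{\spfockgamma_{\txtbsn}}{\sphilbfrak{h}_{\Lambda}}}^{\gamma}\otimes1)U_{\Lambda}$. This is a von Neumann algebra, being a unitary conjugate of $\bigoplus_k\opspbddlin{\sphilbfrak{h}_{\Lambda}^{\otimes_{\txtsym}k}}\otimes1$, so the double commutant in Definition \ref{def:paper-locally-normal-state} is this algebra itself. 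The abstract condition therefore reduces to a statement about normal states on $\bigoplus_k\opspbddlin{\sphilbfrak{H}_k}$ with $\sphilbfrak{H}_k=\sphilbfrak{h}_{\Lambda}^{\otimes_{\txtsym}k}$ finite dimensional, composed with $i_{\Lambda,X}$ and $\diag_{\sphilbfrak{h}_{\Lambda}}$.

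\emph{Forward direction.} Fix $\Lambda$. A normal state on $\bigoplus_k\opspbddlin{\sphilbfrak{H}_k}$ is given by a density operator commuting with the central block projections modulo off-diagonal terms. Only its diagonal blocks pair with block-diagonal elements, so the state is $D\mapsto\sum_k\trace[R_{\Lambda,k}D_k]$ with $R_{\Lambda,k}\geq0$. Normalization \eqref{eq:paper-locally-normal-density-normalization} comes from evaluating at the unit, equivalently from normality applied to the increasing partial sums of the block projections.

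Uniqueness of $R_{\Lambda,k}$ is obtained by testing on $D$ supported in a single block: $\opspbddlin{\sphilbfrak{H}_k}$ is a full matrix algebra, and the trace pairing on it is nondegenerate. Compatibility \eqref{eq:paper-locally-normal-density-compatibility} is then $\psi\circ i_{\Lambda,X}=\psi\circ i_{\Lambda'',X}\circ i_{\Lambda,\Lambda''}$, which follows from Proposition \ref{prop:paper-tensor-factor-localization-identities}.

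\emph{Converse.} Given such a family, define $\psi_\Lambda$ on $\oa{B}_{\Lambda}$ by \eqref{eq:paper-locally-normal-density-family}. Injectivity of $i_{\Lambda,X}$ makes this well defined, and it is a state because the $R_{\Lambda,k}$ are positive and normalized. Consistency on $\oa{B}_{\Lambda}\subset\oa{B}_{\Lambda''}$ is exactly \eqref{eq:paper-locally-normal-density-compatibility}. Since $\setpowfin{X}$ is directed, the $\psi_\Lambda$ glue to a state on the union $\bigcup_\Lambda\oa{B}_{\Lambda}$. Each $\psi_\Lambda$ has norm one, so the glued functional is contractive, extends uniquely to the norm closure $\oa{B}_{\mathrm{ql}}$, and remains positive with value $1$ at the unit.

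The unital $\oa{B}_{\mathrm{ql}}\subset\fun{\oa{B}}{\sphilbfrak{h}}$ is a $\oacstar$-subalgebra, so Hahn--Banach, in the form of the standard extension of states from a unital subalgebra, gives a state on $\fun{\oa{B}}{\sphilbfrak{h}}$. Every extension restricts on each $\oa{B}_{\Lambda}$ to the trace formula, which is manifestly normal, so every extension is locally normal. Conversely, the forward direction shows that every locally normal state arises from its own family.

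\emph{Main obstacle.} The delicate point is the identification of the double commutant in the Fock representation with the block algebra. One must check that $(\opspbddlin{\fun{\spfockgamma_{\txtbsn}}{\sphilbfrak{h}_{\Lambda}}}^{\gamma}\otimes1)''$ equals $\opspbddlin{\fun{\spfockgamma_{\txtbsn}}{\sphilbfrak{h}_{\Lambda}}}^{\gamma}\otimes1$. For this I would compute the commutant as $Z\bar\otimes\opspbddlin{\fun{\spfockgamma_{\txtbsn}}{\sphilbfrak{h}_{X\setminus\Lambda}}}$, with $Z=\lpseq^\infty$ the center from Proposition \ref{prop:finite-mode-product}, and then apply the commutation theorem for tensor products of von Neumann algebras. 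Normality then means weak-$\ast$ continuity with respect to this identification, and the trace-class density is supplied by the finite dimensionality of each $\sphilbfrak{H}_k$.
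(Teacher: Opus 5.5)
Your proposal is correct and follows essentially the same route as the paper. Both arguments transport to the block product $\prod_k\mathbb{B}(\mathfrak{h}_\Lambda^{\otimes_{\mathrm{s}}k})$, extract the densities blockwise, and get compatibility from $i_{\Lambda'',X}\circ i_{\Lambda,\Lambda''}=i_{\Lambda,X}$. The converse then glues over the directed union, extends by continuity to $\mathcal{B}_{\mathrm{ql}}$, and extends to a state by Hahn--Banach. The differences are minor: you use the density-operator characterization of normal states where the paper uses countable additivity on the central projections $z_{\Lambda,k}$ plus trace duality in each finite block, and you justify the double-commutant identification that the paper only asserts.
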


\begin{proof}
Assume first that $\oastate[\psi]$ is locally normal.
Fix a nonempty finite $\Lambda$.
The finite-mode product representation identifies the local von Neumann algebra with
$$\begin{aligned}
\prod_{k\in\monnat}\opspbddlin{\sphilbfrak{h}_{\Lambda}^{\otimes_{\txtsym}k}}.
\end{aligned}$$
Transport a normal extension of the restriction of $\oastate[\psi]$ through the map
$\diag_{\sphilbfrak{h}_{\Lambda}}$ in \eqref{eq:paper-main-finite-mode-product-isomorphism} and denote it by
$$\begin{aligned}
\oastate[\psi_{\Lambda}]
\colon
\prod_{k\in\monnat}\opspbddlin{\sphilbfrak{h}_{\Lambda}^{\otimes_{\txtsym}k}}
\to
\fldcmp;\quad
\fun{\oastate[\psi_{\Lambda}]}{\fun{\diag_{\sphilbfrak{h}_{\Lambda}}}{D}}
&=
\fun{\oastate[\psi]\circ i_{\Lambda,X}}{D}
&&\rbk{D\in\oa{M}(\oa{K}(\sphilbfrak{h}_{\Lambda})^{\gamma})}.
\end{aligned}$$
Let $z_{\Lambda,k}$ denote the central projection onto the $k$th factor.
Normality gives
$$\begin{aligned}
\sum_{k\in\monnat}\fun{\oastate[\psi_{\Lambda}]}{z_{\Lambda,k}}
&=
\fun{\oastate[\psi_{\Lambda}]}{1}
=
1.
\end{aligned}$$
For $A=\seq{A_j}{j\in\monnat}$, normal monotone convergence and positivity give
$$\begin{aligned}
\fun{\oastate[\psi_{\Lambda}]}{A}
&=
\lim_{N\to\infty}\sum_{k=0}^{N}\fun{\oastate[\psi_{\Lambda}]}{z_{\Lambda,k}A}
=
\sum_{k\in\monnat}\fun{\oastate[\psi_{\Lambda}]}{z_{\Lambda,k}A},
\\
\sum_{k\in\monnat}
\abs{\fun{\oastate[\psi_{\Lambda}]}{z_{\Lambda,k}A}}
&\leq
\norm{A}\sum_{k\in\monnat}\fun{\oastate[\psi_{\Lambda}]}{z_{\Lambda,k}}
=
\norm{A}.
\end{aligned}$$
Trace duality on each finite-dimensional full matrix algebra gives a unique $R_{\Lambda,k}\geq0$ with
$$\begin{aligned}
\fun{\oastate[\psi_{\Lambda}]}{z_{\Lambda,k}A}
&=
\sqfun{\trace_{\sphilbfrak{h}_{\Lambda}^{\otimes_{\txtsym}k}}}{R_{\Lambda,k}A_k},
&
\sqfun{\trace_{\sphilbfrak{h}_{\Lambda}^{\otimes_{\txtsym}k}}}{R_{\Lambda,k}}
&=
\fun{\oastate[\psi_{\Lambda}]}{z_{\Lambda,k}}.
\end{aligned}$$
This proves \eqref{eq:paper-locally-normal-density-normalization} and
\eqref{eq:paper-locally-normal-density-family}; compatibility follows by substituting
$i_{\Lambda'',X}\circ i_{\Lambda,\Lambda''}=i_{\Lambda,X}$ in the two local formulas.

Conversely, fix a family satisfying \eqref{eq:paper-locally-normal-density-compatibility}.
For $A=\seq{A_k}{k\in\monnat}$,
$$\begin{aligned}
\sum_{k\in\monnat}
\abs{\sqfun{\trace_{\sphilbfrak{h}_{\Lambda}^{\otimes_{\txtsym}k}}}{R_{\Lambda,k}A_k}}
&\leq
\sup_{k\in\monnat}\norm{A_k}
\sum_{k\in\monnat}\sqfun{\trace_{\sphilbfrak{h}_{\Lambda}^{\otimes_{\txtsym}k}}}{R_{\Lambda,k}}
=
\norm{A}.
\end{aligned}$$
Let $I$ be a directed set, and let
$\net{A_{\alpha}}{\alpha\in I}$ be a bounded increasing net in the positive cone of the product such that
$A_{\alpha}=\seq{A_{\alpha,k}}{k\in\monnat}$ and
$A=\seq{A_k}{k\in\monnat}$ is its supremum.
The value of the functional at this supremum is
$$\begin{aligned}
\sup_{\alpha\in I}\sum_{k\in\monnat}
\sqfun{\trace_{\sphilbfrak{h}_{\Lambda}^{\otimes_{\txtsym}k}}}{R_{\Lambda,k}A_{\alpha,k}}
&=
\sup_{N\in\monnat}\sup_{\alpha\in I}\sum_{k=0}^{N}
\sqfun{\trace_{\sphilbfrak{h}_{\Lambda}^{\otimes_{\txtsym}k}}}{R_{\Lambda,k}A_{\alpha,k}}
\\
&=
\sup_{N\in\monnat}\sum_{k=0}^{N}
\sqfun{\trace_{\sphilbfrak{h}_{\Lambda}^{\otimes_{\txtsym}k}}}{R_{\Lambda,k}A_k}
=
\sum_{k\in\monnat}
\sqfun{\trace_{\sphilbfrak{h}_{\Lambda}^{\otimes_{\txtsym}k}}}{R_{\Lambda,k}A_k}.
\end{aligned}$$
The resulting normal state is
$$\begin{aligned}
\oastate[\psi_{\Lambda}]
\colon
\prod_{k\in\monnat}\opspbddlin{\sphilbfrak{h}_{\Lambda}^{\otimes_{\txtsym}k}}
\to
\fldcmp;\quad
\fun{\oastate[\psi_{\Lambda}]}{\seq{A_k}{k\in\monnat}}
&=
\sum_{k\in\monnat}
\sqfun{\trace_{\sphilbfrak{h}_{\Lambda}^{\otimes_{\txtsym}k}}}{R_{\Lambda,k}A_k}.
\end{aligned}$$
Using the map $\diag_{\sphilbfrak{h}_{\Lambda}}$ in
\eqref{eq:paper-main-finite-mode-product-isomorphism}, define
$$\begin{aligned}
\oastate[\psi_{\mathrm{loc}}]
\colon
\bigcup_{\substack{\Lambda\in\setpowfin{X}\\\Lambda\neq\emptyset}}\oa{B}_{\Lambda}
\to
\fldcmp;\quad
\fun{\oastate[\psi_{\mathrm{loc}}]}{\fun{i_{\Lambda,X}}{D}}
&=
\fun{\oastate[\psi_{\Lambda}]}{\fun{\diag_{\sphilbfrak{h}_{\Lambda}}}{D}}.
\end{aligned}$$
Equation \eqref{eq:paper-locally-normal-density-compatibility} makes this functional independent of the chosen
region.
It is positive, unital, and has norm one, hence extends uniquely by continuity to
$\oa{B}_{\mathrm{ql}}$.
Denote this extension by
$\oastate[\psi_{\mathrm{ql}}]\colon\oa{B}_{\mathrm{ql}}\to\fldcmp$.
The continuous extension form of the Hahn--Banach theorem gives a norm-preserving extension
$$\begin{aligned}
\oastate[\widetilde{\psi}]
\colon
\fun{\oa{B}}{\sphilbfrak{h}}
\to
\fldcmp;\quad
\fnrestr{\oastate[\widetilde{\psi}]}{\oa{B}_{\mathrm{ql}}}
&=
\oastate[\psi_{\mathrm{ql}}].
\end{aligned}$$
Because $1\in\oa{B}_{\mathrm{ql}}$, one has
$\fun{\oastate[\widetilde{\psi}]}{1}=\norm{\oastate[\widetilde{\psi}]}=1$.
For every self-adjoint $S\in\fun{\oa{B}}{\sphilbfrak{h}}$, the norm-convergent exponential series and
$\norm{\oastate[\widetilde{\psi}]}=1$ give
$$\begin{aligned}
0
&\geq
\lim_{t\downarrow0}
\frac{
\abs{\fun{\oastate[\widetilde{\psi}]}{\napiernum^{\imunit tS}}}^{2}-1
}{t}
=
-2\opimag\fun{\oastate[\widetilde{\psi}]}{S}.
\end{aligned}$$
Replacement of $S$ by $-S$ gives
$$\begin{aligned}
0
&\geq
2\opimag\fun{\oastate[\widetilde{\psi}]}{S},
\end{aligned}$$
which proves $\fun{\oastate[\widetilde{\psi}]}{S}\in\fldreal$.
For a nonzero positive $B\in\fun{\oa{B}}{\sphilbfrak{h}}$, set $C=\norm{B}^{-1}B$.
The spectral inclusion $\opvarspec{C}\subset\closedinterval{0}{1}$ and the continuous functional calculus give
$\norm{1-C}\leq1$.
The reality statement and the norm bound imply
$$\begin{aligned}
\abs{1-\fun{\oastate[\widetilde{\psi}]}{C}}
&=
\abs{\fun{\oastate[\widetilde{\psi}]}{1-C}}
\leq
\norm{1-C}
\leq
1.
\end{aligned}$$
It follows that
$\fun{\oastate[\widetilde{\psi}]}{B}=\norm{B}\fun{\oastate[\widetilde{\psi}]}{C}\geq0$.
The value at $B=0$ is zero, and $\oastate[\widetilde{\psi}]$ is a state.
Every extension has the prescribed normal local restrictions.
Uniqueness of each density and the norm-preserving extension argument prove the classification.
\end{proof}

The union of the finite-mode gauge-fixed resolvent algebras in \eqref{eq:paper-local-gauge-resolvent-algebra} is norm dense in that local algebra by definition. The quotient-coordinate map \(\rho\) of Proposition \ref{prop:bounded-inverse-limit-cstar} sends this union into \(\oa{B}_{\mathrm{ql}}\), since the fixed-point multiplier action in \eqref{eq:paper-universal-fixed-point-multiplier-action} maps each finite-mode algebra into its basis-local algebra in \eqref{eq:paper-basis-local-quasilocal-algebras}. Continuity of \(\rho\) and closedness of \(\oa{B}_{\mathrm{ql}}\) give \(\fun{\rho}{\oaresolventalgebra_{\mathrm{loc}}(X)^\gamma}\subset\oa{B}_{\mathrm{ql}}\).

\begin{prop}[Uniform coefficient criterion for the local resolvent algebra]\label{prop:paper-resolvent-particle-coefficient-uniform-criterion}
For the quotient-coordinate map $\rho$ in Proposition \ref{prop:bounded-inverse-limit-cstar},
the map $\Upsilon$ in \eqref{eq:paper-main-particle-coefficient-isomorphism}, and
$a\in\oa{C}_{\mathrm{b}}(\sphilbfrak{h})$, one has
$\fun{\Upsilon}{a}\in\fun{\rho}{\oaresolventalgebra_{\mathrm{loc}}(X)^\gamma}$ if and only if, for every
$\varepsilon>0$, there is
$P\in\bigcup_{\emptyset\neq\Lambda\in\setpowfin{X}}
\oaresolventalgebra(\sphilbfrak{h}_{\Lambda})^\gamma$ such that
$$\begin{aligned}
\sup_{n\in\monnat}
\norm{\fun{\Pi_n}{a}-\rbk{P+\oaideal{I}_{>n}(\sphilbfrak{h})}}_{\oa{B}_n(\sphilbfrak{h})}
&<
\varepsilon.
\end{aligned}$$
\end{prop}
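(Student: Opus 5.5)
The plan is to reduce the statement to two facts: that the image of the local resolvent algebra is the norm closure of the images of its dense union, and that the supremum norm in $\fun{\oa{B}}{\sphilbfrak{h}}$ is exactly the supremum of the cutoff-coordinate norms. For $P\in\oaresolventalgebra(\sphilbfrak{h}_{\Lambda})^\gamma$ (identified with its image under $\iota_{\sphilbfrak{h}_{\Lambda}}$ in $\oaresolventalgebra(\sphilbfrak{h},\sigma)^\gamma$), the definition of $\rho$ in Proposition \ref{prop:bounded-inverse-limit-cstar} gives $p_n(\rho(P))=P+\oaideal{I}_{>n}(\sphilbfrak{h})$. Theorem \ref{thm:paper-particle-coefficient-buchholz-isomorphism} gives $p_n(\Upsilon(a))=\Pi_n(a)$. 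Hence the displayed supremum is exactly $\norm{\Upsilon(a)-\rho(P)}_{\fun{\oa{B}}{\sphilbfrak{h}}}$, and the criterion reads: $\Upsilon(a)$ lies in the norm closure of $\rho\bigl(\bigcup_{\Lambda}\oaresolventalgebra(\sphilbfrak{h}_{\Lambda})^\gamma\bigr)$.

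It then remains to show that this closure equals $\rho(\oaresolventalgebra_{\mathrm{loc}}(X)^\gamma)$. One inclusion follows from continuity of $\rho$ and density of the union in $\oaresolventalgebra_{\mathrm{loc}}(X)^\gamma$, by \eqref{eq:paper-local-gauge-resolvent-algebra}:
\[
\rho(\oaresolventalgebra_{\mathrm{loc}}(X)^\gamma)\subset\gtclos{\rho\bigl(\textstyle\bigcup_\Lambda\oaresolventalgebra(\sphilbfrak{h}_{\Lambda})^\gamma\bigr)}.
\]
For the reverse inclusion, I need the range of $\rho$ restricted to the $\oacstar$-algebra $\oaresolventalgebra_{\mathrm{loc}}(X)^\gamma$ to be closed. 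The range of a $\ast$-homomorphism between $\oacstar$-algebras is always closed. Alternatively, in the Fock setting Corollary \ref{cor:paper-faithful-quotient-coordinate-embedding} makes $\rho$ isometric, so the closure of the image equals the image of the closure. I would cite the general closed-range fact so the argument does not depend on the concrete realization.

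With this, the \emph{only if} direction is immediate: if $\Upsilon(a)=\rho(A)$ with $A$ in the local algebra, choose $P$ in the union with $\norm{A-P}<\varepsilon$, and use $\norm{\rho(A)-\rho(P)}\leq\norm{A-P}$. For the \emph{if} direction, choose $P_j$ with $\norm{\Upsilon(a)-\rho(P_j)}<1/j$. Then $\Upsilon(a)$ lies in the closure of $\rho$ of the union, which is contained in the closed set $\rho(\oaresolventalgebra_{\mathrm{loc}}(X)^\gamma)$.

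The main subtle point is the closed-range step. If the general theorem on closedness of $\ast$-homomorphism ranges is not to be invoked, I would instead argue via the injectivity of Corollary \ref{cor:paper-faithful-quotient-coordinate-embedding}, which makes $\rho$ isometric. Then $(P_j)$ is Cauchy in $\oaresolventalgebra(\sphilbfrak{h},\sigma)$, its limit $A$ lies in $\oaresolventalgebra_{\mathrm{loc}}(X)^\gamma$, and $\rho(A)=\Upsilon(a)$.
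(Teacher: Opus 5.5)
Your proof is correct, and its skeleton matches the paper's: rewrite the displayed supremum as $\|\Upsilon(a)-\rho(P)\|_{\mathcal{B}(\mathfrak{h})}$, then use density plus contractivity for the ``only if'' direction and an approximating sequence for the ``if'' direction. The two arguments differ in the ``if'' direction.

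The paper takes $P_j$ with error $2^{-j}$. It uses that $\rho$ is isometric (Corollary \ref{cor:paper-faithful-quotient-coordinate-embedding}) to make $(P_j)$ Cauchy in $\mathcal{R}(\mathfrak{h},\sigma)$, and then gets $\rho(\lim_j P_j)=\Upsilon(a)$ by continuity. This is exactly your fallback in the last paragraph.

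Your primary argument instead uses that $\rho$, restricted to the $C^*$-algebra $\mathcal{R}_{\mathrm{loc}}(X)^{\gamma}$, has closed range. Hence $\overline{\rho\bigl(\bigcup_\Lambda \mathcal{R}(\mathfrak{h}_\Lambda)^\gamma\bigr)}=\rho(\mathcal{R}_{\mathrm{loc}}(X)^\gamma)$. You correctly apply the closed-range fact to the norm-closed algebra rather than to the dense union, which is where it is needed.

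This buys generality: your argument never uses $\ker\rho=0$. In the paper, that injectivity rests on the Fock-derived blocks of Proposition \ref{prop:paper-particle-block-independence} and on faithfulness of regular representations. Your version therefore holds for any intrinsic particle-block system carrying reconstruction data. In exchange, the paper's route shows that the approximating local elements $P_j$ themselves converge in $\mathcal{R}_{\mathrm{loc}}(X)^\gamma$, and by injectivity the preimage of $\Upsilon(a)$ is unique.
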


\begin{proof}
If $\fun{\Upsilon}{a}=\fun{\rho}{A}$, density gives $P$ with $\norm{A-P}<\varepsilon$.
Contractivity of every
quotient map yields
$$\begin{aligned}
\sup_{n\in\monnat}
\norm{\fun{\Pi_n}{a}-\rbk{P+\oaideal{I}_{>n}(\sphilbfrak{h})}}
&=
\sup_{n\in\monnat}\norm{\fun{\rho_n}{A-P}}
\leq
\norm{A-P}
<
\varepsilon.
\end{aligned}$$
Conversely choose $P_j$ with error $2^{-j}$.
The isometries $\rho$ and $\Upsilon$ give
$$\begin{aligned}
\norm{P_j-P_k}
=
\norm{\fun{\rho}{P_j-P_k}}
\leq
\norm{\fun{\rho}{P_j}-\fun{\Upsilon}{a}}
+
\norm{\fun{\Upsilon}{a}-\fun{\rho}{P_k}}
<
2^{-j}+2^{-k}.
\end{aligned}$$
Thus $P_j\to A$ in $\oaresolventalgebra_{\mathrm{loc}}(X)^\gamma$, and continuity gives
$$\begin{aligned}
\fun{\rho}{A}
&=
\lim_{j\to\infty}\fun{\rho}{P_j}
=
\fun{\Upsilon}{a}.
\end{aligned}$$
\end{proof}

\subsubsection{Banach-space quotient}\label{banach-space-quotient}

In this subsubsection, distances from the quasilocal algebra have a dual formula in terms of pairs of states with the same local restrictions. A pure state on the Buchholz algebra also fails to factor through any finite particle cutoff.

The quasilocal subalgebra is unital and norm closed. Its Banach-space quotient is \begin{equation}
\begin{aligned}
\oa{T}_{\mathrm{B}}
&=
\setquot{\fun{\oa{B}}{\sphilbfrak{h}}}{\oa{B}_{\mathrm{ql}}}.
\end{aligned}
\label{eq:paper-main-quasilocal-banach-quotient}
\end{equation} For \(A\in\fun{\oa{B}}{\sphilbfrak{h}}\), its quotient class is \begin{equation}
\begin{aligned}
\sqbk{A}
&=
A+\oa{B}_{\mathrm{ql}}.
\end{aligned}
\label{eq:paper-main-quasilocal-banach-class}
\end{equation} The quotient norm is \begin{equation}
\begin{aligned}
\norm{\sqbk{A}}_{\oa{T}_{\mathrm{B}}}
&=
\inf_{D\in\oa{B}_{\mathrm{ql}}}\norm{A-D}.
\end{aligned}
\label{eq:paper-main-quasilocal-banach-norm}
\end{equation} The quotient is a Banach space; no multiplication on it is used.

Recall that a pure state is an extreme point of the convex state space.

\begin{prop}[A pure state beyond every finite particle cutoff]\label{prop:paper-pure-state-beyond-finite-coordinates}
There is a pure state $\oastate[\psi]$ on $\fun{\oa{B}}{\sphilbfrak{h}}$.
Its Gelfand--Naimark--Segal null space is
$$\begin{aligned}
\oa{N}_{\psi}
&=
\set{B\in\fun{\oa{B}}{\sphilbfrak{h}}}{\fun{\oastate[\psi]}{\faadj{B}B}=0}.
\end{aligned}$$
For $B\in\fun{\oa{B}}{\sphilbfrak{h}}$, define its quotient class by
$$\begin{aligned}
\sqbk{B}_{\psi}
&=
B+\oa{N}_{\psi}
\in
\setquot{\fun{\oa{B}}{\sphilbfrak{h}}}{\oa{N}_{\psi}}.
\end{aligned}$$
Let $\sphilbfrak{H}_{\psi}$ be the Hilbert-space completion of
$\setquot{\fun{\oa{B}}{\sphilbfrak{h}}}{\oa{N}_{\psi}}$ for the inner product
$$\begin{aligned}
\bkt{\sqbk{A}_{\psi}}{\sqbk{B}_{\psi}}_{\sphilbfrak{H}_{\psi}}
&=
\fun{\oastate[\psi]}{\faadj{A}B}.
\end{aligned}$$
Its Gelfand--Naimark--Segal representation is
$$\begin{aligned}
\pi_{\psi}
\colon
\fun{\oa{B}}{\sphilbfrak{h}}
\to
\opspbddlin{\sphilbfrak{H}_{\psi}};\quad
\fun{\pi_{\psi}}{A}\sqbk{B}_{\psi}
&=
\sqbk{AB}_{\psi}.
\end{aligned}$$
The corresponding Gelfand--Naimark--Segal cyclic vector is
$\oagnsvector[\Psi]=\sqbk{1}_{\psi}$.
The representation satisfies $\Ker p_N\not\subset\Ker\pi_{\psi}$ for every $N\in\monnat$,
with $p_N$ as in \eqref{eq:paper-main-coordinate-map}.
\end{prop}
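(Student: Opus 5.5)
The plan is to prove a stronger statement, namely that for every $N\in\monnat$ the ideal $\Ker p_N$ is an \emph{essential} closed ideal of $\fun{\oa{B}}{\sphilbfrak{h}}$. A Baire category argument on the primitive ideal space then produces an irreducible representation that annihilates none of these ideals. Once such an irreducible representation is in hand, a vector state in it gives the desired pure state. Existence of pure states is then automatic, since $\fun{\oa{B}}{\sphilbfrak{h}}$ is a nonzero unital $\oacstar$-algebra by Proposition \ref{prop:bounded-inverse-limit-cstar}.

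First, the ideals are decreasing. Since $p_N=r_{N+1}\circ p_{N+1}$, we have $\Ker p_{N+1}\subset\Ker p_N$, and each $\Ker p_N$ is a closed two-sided ideal.

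Next, essentiality. Let $0\neq a\in\fun{\oa{B}}{\sphilbfrak{h}}$. Choose $M$ with $p_M(a)\neq0$ and put $M'=\max\setone{M,N}+1$. Compatibility gives $p_{M'}(a)\neq0$. By Theorem \ref{thm:paper-main-algebraic-result}(3), the ideal $\Ker r_{M'}$ is essential in $\oa{B}_{M'}(\sphilbfrak{h})$. The closed ideal generated by $p_{M'}(a)$ therefore meets it, so some $k\in\Ker r_{M'}$ satisfies $p_{M'}(a)k\neq0$; taking $k=p_{M'}(a)^{\ast}c$ with $c$ chosen via the $\oacstar$-identity works. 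Now lift $k$ as in the proof of Corollary \ref{cor:paper-particle-cutoff-coordinate-surjectivity}: set all coordinates below $M'$ equal to $r_{M'}\circ\cdots(k)=0$, and extend upward with bounded norm. The resulting element $b$ lies in $\Ker p_{M'-1}\subset\Ker p_N$, and $p_{M'}(ab)\neq0$, so $ab\neq0$. Hence no nonzero ideal $\oa{L}$ satisfies $\oa{L}\cap\Ker p_N=0$. Indeed, $\oa{L}\Ker p_N\subset\oa{L}\cap\Ker p_N$, and the preceding argument applied to any $0\neq a\in\oa{L}$ gives $ab\neq0$.

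Then comes the Baire step, which I expect to be the only delicate point. In $\operatorname{Prim}$ of $\fun{\oa{B}}{\sphilbfrak{h}}$ with the hull--kernel topology, the set $O_N=\set{P}{P\not\supset\Ker p_N}$ is open. It is dense exactly because $\Ker p_N$ is essential: a nonempty open set corresponds to a nonzero ideal $\oa{L}$, and $\oa{L}\cap\Ker p_N\neq0$ forces the two open sets to meet. The primitive ideal space of any $\oacstar$-algebra is a Baire space (Dixmier, \emph{C$^{\ast}$-algebras}, 3.4.13). Consequently $\bigcap_N O_N\neq\emptyset$. I would cite this fact rather than reprove it, and double-check that the statement applies without separability.

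Finally, construct the state. Pick $P\in\bigcap_N O_N$ and an irreducible representation $\pi$ with kernel $P$. For any unit vector $\Omega$, the vector state $\psi=\bkt{\Omega}{\pi(\cdot)\Omega}$ is pure, and its GNS representation $\pi_\psi$ is unitarily equivalent to $\pi$, so $\Ker\pi_\psi=P$. Since $P\in O_N$ for every $N$, we get $\Ker p_N\not\subset\Ker\pi_\psi$ for every $N\in\monnat$.
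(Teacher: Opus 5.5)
Your argument is correct, and it takes a genuinely different route from the paper's.

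\textbf{The paper's route.} The paper argues constructively in the basis-local setting of the appendix. Let $\Psi_\Lambda$ be the unit vector carrying one particle in each mode $u_x$, $x\in\Lambda$. The vector states $\tau_\Lambda$ it defines on the $\abscard{\Lambda}$-particle block are pure. They are compatible under the inclusions $i_{\Lambda,\Lambda''}$, because $\Psi_{\Lambda''}=\Psi_\Lambda\otimes\Psi_{\Lambda''\setminus\Lambda}$, so they glue to a pure state $\tau$ on $\oa{B}_{\mathrm{ql}}$. An extreme point of the weak$^{\ast}$-compact face of extensions of $\tau$ (Krein--Milman) gives the required $\psi$. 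For each $N$, the local projection $D_N\in\oa{B}_\Lambda$ with $\abscard{\Lambda}>N$, concentrated in the $\abscard{\Lambda}$th block, satisfies $p_N(D_N)=0$ and $\psi(D_N)=1$.

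\textbf{Your route.} You show instead that each $\Ker p_N$ is essential in $\oa{B}(\sphilbfrak{h})$. This uses only the essentiality of $\Ker r_n$ required in Definition \ref{def:paper-particle-coefficient-embedding-data} and the bounded lifts of Corollary \ref{cor:paper-particle-cutoff-coordinate-surjectivity}. You then use that $\operatorname{Prim}$ of any $\oacstar$-algebra is a Baire space, a fact that needs no separability.

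\textbf{What each buys.}
\begin{itemize}
\item Your argument needs no basis, locality, or Fock data. It therefore holds for the abstract Buchholz algebra under the hypotheses of Theorem \ref{thm:paper-main-algebraic-result}. It also shows that the primitive ideals containing no $\Ker p_N$ are dense in $\operatorname{Prim}$.
\item The paper's argument avoids the Baire theorem and is explicit. It also gives sharper information: the pure state restricts to a known product state on $\oa{B}_{\mathrm{ql}}$, and the failure of factorization through every $p_N$ is witnessed by local elements.
\end{itemize}

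\textbf{Two steps to tighten.}
\begin{itemize}
\item The existence of $k\in\Ker r_{M'}$ with $p_{M'}(a)k\neq0$ needs a justification. The left annihilator $\{y: y\,\Ker r_{M'}=0\}$ is a closed two-sided ideal. It meets $\Ker r_{M'}$ trivially, since $y\in\Ker r_{M'}$ forces $yy^{\ast}=0$. Essentiality then makes it zero. Your parenthetical about $k=p_{M'}(a)^{\ast}c$ does not establish this on its own.
\item Density of $O_N$ uses that primitive ideals are prime. If $P\supset\oa{L}\cap\Ker p_N\supset\oa{L}\,\Ker p_N$, then $P\supset\oa{L}$ or $P\supset\Ker p_N$. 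Since the nonzero ideal $\oa{L}\cap\Ker p_N$ is not contained in every primitive ideal, the two open sets meet.
\end{itemize}
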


\begin{proof}
Fix an order on $X$.
For $\Lambda=\setone{x_1<\cdots<x_{\abscard{\Lambda}}}$ define
$$\begin{aligned}
\Psi_{\Lambda}
&=
\sqrt{\abscard{\Lambda}!}\fun{P_{\abscard{\Lambda}}}
{u_{x_1}\otimes\cdots\otimes u_{x_{\abscard{\Lambda}}}},
&
\norm{\Psi_{\Lambda}}^2
&=
\abscard{\Lambda}!\frac{\abscard{\Lambda}!}{\rbk{\abscard{\Lambda}!}^2}
=
1.
\end{aligned}$$
Define the finite-mode block representation by
$$\begin{aligned}
\oarepn_{\mathrm{block},\Lambda,\abscard{\Lambda}}
\colon
\oa{M}(\oa{K}(\sphilbfrak{h}_{\Lambda})^{\gamma})
\to
\opspbddlin{\sphilbfrak{h}_{\Lambda}^{\otimes_{\txtsym}\abscard{\Lambda}}};\quad
\fun{\oarepn_{\mathrm{block},\Lambda,\abscard{\Lambda}}}{\seq{D_k}{k\in\monnat}}
&=
\fun{\pi_{\sphilbfrak{h}_{\Lambda},\abscard{\Lambda}}}{D_{\abscard{\Lambda}}}.
\end{aligned}$$
Proposition \ref{prop:finite-mode-product} makes this map surjective and gives
$$\begin{aligned}
\setquot{\oa{M}(\oa{K}(\sphilbfrak{h}_{\Lambda})^{\gamma})}{\Ker\oarepn_{\mathrm{block},\Lambda,\abscard{\Lambda}}}
&\simeq
\opspbddlin{\sphilbfrak{h}_{\Lambda}^{\otimes_{\txtsym}\abscard{\Lambda}}}.
\end{aligned}$$
Now define the local vector state by
$$\begin{aligned}
\oastate[\tau_{\Lambda}]
\colon
\oa{M}(\oa{K}(\sphilbfrak{h}_{\Lambda})^{\gamma})
\to
\fldcmp;\quad
\fun{\oastate[\tau_{\Lambda}]}{D}
&=
\bkt{\Psi_{\Lambda}}{
\fun{\oarepn_{\mathrm{block},\Lambda,\abscard{\Lambda}}}{D}\Psi_{\Lambda}
}_{\sphilbfrak{h}_{\Lambda}^{\otimes_{\txtsym}\abscard{\Lambda}}}.
\end{aligned}$$
The vector state of the full matrix algebra induced by $\Psi_{\Lambda}$ is pure.
Its pullback through the quotient map is therefore the pure state $\oastate[\tau_{\Lambda}]$.
If $\Lambda\subset \Lambda''$, then
$$\begin{aligned}
\Psi_{\Lambda''}
&=
\Psi_{\Lambda}\otimes\Psi_{\Lambda''\setminus \Lambda},
&
\oastate[\tau_{\Lambda''}]\circ i_{\Lambda,\Lambda''}
&=
\oastate[\tau_{\Lambda}].
\end{aligned}$$
Hence the local states satisfying
$\oastate[\tau_{\Lambda''}]\circ i_{\Lambda,\Lambda''}=\oastate[\tau_{\Lambda}]$ define a state $\oastate[\tau]$ on
$\oa{B}_{\mathrm{ql}}$.
If $\oastate[\psi_{1}],\oastate[\psi_{2}]\in\oaspstate_{\oa{B}_{\mathrm{ql}}}$ and
$\oastate[\tau]=t\oastate[\psi_{1}]+\rbk{1-t}\oastate[\psi_{2}]$ with $0<t<1$, purity on every local algebra
gives
$$\begin{aligned}
\oastate[\psi_{1}]\circ i_{\Lambda,X}
&=
\oastate[\psi_{2}]\circ i_{\Lambda,X}
=
\oastate[\tau_{\Lambda}].
\end{aligned}$$
Density of the local union gives $\oastate[\psi_{1}]=\oastate[\psi_{2}]=\oastate[\tau]$.
This equality proves that $\oastate[\tau]$ is pure.
The weak-$\ast$ compact convex extension set
$$\begin{aligned}
\fun{\operatorname{Ext}}{\oastate[\tau]}
&=
\set{\oastate[\psi]\in\oaspstate_{\fun{\oa{B}}{\sphilbfrak{h}}}}
{\fnrestr{\oastate[\psi]}{\oa{B}_{\mathrm{ql}}}=\oastate[\tau]}
\end{aligned}$$
is nonempty by the norm-preserving extension and positivity argument in the proof of
Theorem \ref{thm:paper-locally-normal-state-classification}.
Define the restriction map by
$$\begin{aligned}
\operatorname{res}_{\mathrm{ql}}
\colon
\oaspstate_{\fun{\oa{B}}{\sphilbfrak{h}}}
\to
\oaspstate_{\oa{B}_{\mathrm{ql}}};\quad
\fun{\operatorname{res}_{\mathrm{ql}}}{\oastate[\psi]}
&=
\fnrestr{\oastate[\psi]}{\oa{B}_{\mathrm{ql}}}.
\end{aligned}$$
It is weak-$\ast$ continuous, and hence
$$\begin{aligned}
\fun{\operatorname{Ext}}{\oastate[\tau]}
&=
\fun{\operatorname{res}_{\mathrm{ql}}^{-1}}{\setone{\oastate[\tau]}}
\end{aligned}$$
is a nonempty weak-$\ast$ compact convex set by the Banach--Alaoglu theorem.
By the Krein--Milman theorem, it has an extreme point $\oastate[\psi]$.
Any convex decomposition of $\oastate[\psi]$ restricts to a decomposition of the pure state
$\oastate[\tau]$, so both summands remain in this extension set.
Extremality then makes both summands equal to $\oastate[\psi]$; hence $\oastate[\psi]$ is pure.

For $N\in\monnat$, choose $\Lambda$ with $\abscard{\Lambda}>N$.
Let $D_N\in\oa{B}_{\Lambda}$ be the image under $i_{\Lambda,X}$ of the block family whose
$\abscard{\Lambda}$th component is
$\fun{\pi_{\sphilbfrak{h}_{\Lambda},\abscard{\Lambda}}^{-1}}{
P_{\Psi_{\Lambda},\Psi_{\Lambda}}}$ and whose other components vanish.
The block action in \eqref{eq:paper-global-local-inclusion-block-actions} and the local vector-state formula give
$$\begin{aligned}
\fun{p_N}{D_N}
&=
0,
&
\fun{\oastate[\psi]}{D_N}
&=
\fun{\oastate[\tau]}{D_N}
=
1.
\end{aligned}$$
Thus $D_N\in\Ker p_N$ but $\fun{\pi_{\psi}}{D_N}\neq0$.
\end{proof}

\begin{prop}[State oscillation in the Banach quotient]\label{prop:paper-state-oscillation}
Let $\oa{A}$ be a unital $\oacstar$-algebra and let $\oa{C}\subset\oa{A}$ be a unital
$\oacstar$-subalgebra.
For $x=\faadj{x}\in\oa{A}$ and $c\in\oa{C}$, put
$c_{\mathrm{sa}}=\rbk{c+\faadj{c}}/2$.
Then
$$\begin{aligned}
\norm{x-c_{\mathrm{sa}}}
&\leq
\norm{x-c},
&
\fun{\topdist}{x,\oa{C}}
&=
\fun{\topdist}{x,\oa{C}_{\mathrm{sa}}}.
\end{aligned}$$
The state-oscillation formula is
\begin{equation}
\begin{aligned}
2\fun{\topdist}{x,\oa{C}_{\mathrm{sa}}}
&=
\sup\set{
\abs{\fun{\oastate[\psi_{1}]}{x}-\fun{\oastate[\psi_{2}]}{x}}
}{
\oastate[\psi_{1}],\oastate[\psi_{2}]\in\oaspstate_{\oa{A}},
\fnrestr{\oastate[\psi_{1}]}{\oa{C}}
=
\fnrestr{\oastate[\psi_{2}]}{\oa{C}}}.
\end{aligned}
\label{eq:paper-state-oscillation}
\end{equation}
\end{prop}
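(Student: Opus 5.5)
The plan has three parts: reduce to self-adjoint approximants, prove the easy inequality in the oscillation formula, and obtain the reverse inequality by Hahn--Banach in the real Banach space $\oa{A}_{\mathrm{sa}}$.

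\emph{Reduction to self-adjoint approximants.} Since $x=\faadj{x}$, we have $\faadj{\rbk{x-c}}=x-\faadj{c}$, so $\norm{x-\faadj{c}}=\norm{x-c}$. The triangle inequality then gives
$$\norm{x-c_{\mathrm{sa}}}=\tfrac12\norm{\rbk{x-c}+\rbk{x-\faadj{c}}}\leq\norm{x-c}.$$
As $c_{\mathrm{sa}}\in\oa{C}_{\mathrm{sa}}\subset\oa{C}$, taking infima yields $\fun{\topdist}{x,\oa{C}}=\fun{\topdist}{x,\oa{C}_{\mathrm{sa}}}$.

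\emph{Easy inequality.} Let $\oastate[\psi_{1}],\oastate[\psi_{2}]$ be states that agree on $\oa{C}$, and let $c\in\oa{C}_{\mathrm{sa}}$. Then
$$\abs{\fun{\oastate[\psi_{1}]}{x}-\fun{\oastate[\psi_{2}]}{x}}=\abs{\fun{\oastate[\psi_{1}]-\oastate[\psi_{2}]}{x-c}}\leq\norm{\oastate[\psi_{1}]-\oastate[\psi_{2}]}\,\norm{x-c}\leq2\norm{x-c}.$$
Taking the infimum over $c$ shows that the supremum is at most $2\fun{\topdist}{x,\oa{C}_{\mathrm{sa}}}$.

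\emph{Reverse inequality.} Put $d=\fun{\topdist}{x,\oa{C}_{\mathrm{sa}}}$ and assume $d>0$; otherwise the claim is trivial. Work in the real Banach space $\oa{A}_{\mathrm{sa}}$, where $\oa{C}_{\mathrm{sa}}$ is a closed subspace.
\begin{enumerate}
\item By the real Hahn--Banach theorem there is a real-linear functional $\phi$ on $\oa{A}_{\mathrm{sa}}$ with $\norm{\phi}=1$, $\fnrestr{\phi}{\oa{C}_{\mathrm{sa}}}=0$, and $\fun{\phi}{x}=d$.
\item Extend $\phi$ complex-linearly to $\oa{A}$ by $\fun{\phi}{a+\imunit b}=\fun{\phi}{a}+\imunit\fun{\phi}{b}$ for self-adjoint $a,b$. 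The result is a Hermitian functional vanishing on $\oa{C}$. Its norm is at most $2$ by the crude estimate, and possibly larger than $1$.
\item Apply the Jordan decomposition of Hermitian functionals on a $\oacstar$-algebra: $\phi=\phi_{+}-\phi_{-}$ with $\phi_{\pm}\geq0$ and $\norm{\phi_{+}}+\norm{\phi_{-}}=\norm{\phi}$, where $\norm{\phi}$ is computed on all of $\oa{A}$.
\item Since $1\in\oa{C}$, we get $\fun{\phi_{+}}{1}=\fun{\phi_{-}}{1}=:t$. Positivity gives $\norm{\phi_{\pm}}=t$, so $2t=\norm{\phi}$.
\item If $t>0$, set $\oastate[\psi_{1}]=\phi_{+}/t$ and $\oastate[\psi_{2}]=\phi_{-}/t$. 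These are states that agree on $\oa{C}$, because $\phi$ vanishes there. Their difference at $x$ is
$$\fun{\oastate[\psi_{1}]}{x}-\fun{\oastate[\psi_{2}]}{x}=\frac{d}{t}=\frac{2d}{\norm{\phi}}.$$
\end{enumerate}

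The main obstacle is proving $\norm{\phi}=1$ for the complexified functional, since a priori only $\norm{\phi}\leq2$. The key fact is that for a Hermitian functional on a $\oacstar$-algebra, the norm equals the norm of its restriction to the self-adjoint part. To prove it, for $a\in\oa{A}$ choose a unimodular $\lambda$ with $\fun{\phi}{\lambda a}\geq0$. Then
$$\abs{\fun{\phi}{a}}=\fun{\phi}{\lambda a}=\fun{\phi}{\mathrm{Re}\rbk{\lambda a}}\leq\norm{\fun{\phi}{\mathrm{Re}\rbk{\lambda a}}}\leq\norm{\lambda a}=\norm{a},$$
where the second equality uses that $\phi$ is Hermitian and real on self-adjoint elements, and $\norm{\mathrm{Re}\rbk{\lambda a}}\leq\norm{a}$. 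Hence $\norm{\phi}=1$, so $t=\tfrac12>0$ and the difference $2d/\norm{\phi}$ equals $2d$.

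This attains the supremum and proves the state-oscillation formula \eqref{eq:paper-state-oscillation}.
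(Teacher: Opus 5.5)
Your proposal is correct and follows essentially the same route as the paper: real Hahn--Banach on $\mathcal{A}_{\mathrm{sa}}$ relative to the closed subspace $\mathcal{C}_{\mathrm{sa}}$, complexification to a Hermitian functional vanishing on $\mathcal{C}$, Jordan decomposition, and the unit in $\mathcal{C}$ forcing the two parts to have equal mass. You also supply two points the paper's proof passes over without comment: that the complexified functional still has norm one on all of $\mathcal{A}$ (so the supremum is attained with $t=\tfrac12$), and the preliminary inequality $\|x-c_{\mathrm{sa}}\|\leq\|x-c\|$ together with the resulting equality of distances.
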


\begin{proof}
For a real Banach space $E$, a closed subspace $F\subset E$, and $x\in E$, the real Hahn--Banach theorem gives
$$\begin{aligned}
\fun{\topdist}{x,F}
&=
\sup\set{\abs{\fun{f}{x}}}
{f\in\dual{E},\norm{f}\leq1,\fnrestr{f}{F}=0}.
\end{aligned}$$
If $x\in F$, both sides vanish.
Suppose $x\notin F$.
Define
$$\begin{aligned}
g
\colon
\setquot{\splinspan_{\fldreal}\setone{x}+F}{F}
\to
\fldreal;\quad
\fun{g}{\lambda x+F}
&=
\lambda\fun{\topdist}{x,F}.
\end{aligned}$$
For every $\lambda\in\fldreal$,
$$\begin{aligned}
\norm{\lambda x+F}_{\setquot{E}{F}}
&=
\abs{\lambda}\fun{\topdist}{x,F}
=
\abs{\fun{g}{\lambda x+F}},
\end{aligned}$$
so $\norm{g}=1$.
A norm-preserving Hahn--Banach extension of $g$ to $\setquot{E}{F}$, composed with the quotient map,
annihilates $F$, has norm one, and takes the value $\fun{\topdist}{x,F}$ at $x$.
Conversely, if
$\norm{f}\leq1$ and $\fnrestr{f}{F}=0$, then for every $y\in F$,
$$\begin{aligned}
\abs{\fun{f}{x}}
&=
\abs{\fun{f}{x-y}}
\leq
\norm{x-y}.
\end{aligned}$$
Taking the infimum over $y\in F$ proves the dual formula.

Apply it to $E=\oa{A}_{\mathrm{sa}}$ and $F=\oa{C}_{\mathrm{sa}}$, and extend each real functional to a
self-adjoint complex linear functional.
Then
$$\begin{aligned}
\fun{\topdist}{x,\oa{C}_{\mathrm{sa}}}
&=
\sup_{\substack{\fnrestr{\delta}{\oa{C}_{\mathrm{sa}}}=0\\\norm{\delta}\leq1}}
\abs{\fun{\delta}{x}}.
\end{aligned}$$
Every self-adjoint bounded functional has the unique Jordan decomposition
\cite[Theorem 4.3.6]{KadisonRingrose003}
$$\begin{aligned}
\delta
&=
\delta_+-\delta_-,
&
\norm{\delta}
&=
\norm{\delta_+}+\norm{\delta_-}.
\end{aligned}$$
Since $1\in\oa{C}$ and $\fun{\delta}{1}=0$,
$$\begin{aligned}
\fun{\delta_+}{1}
&=
\fun{\delta_-}{1}
=
\frac{\norm{\delta}}2.
\end{aligned}$$
The assertion is immediate for $\delta=0$.
If $0<\norm{\delta}\leq1$, set
$\widehat{\delta}=\delta/\norm{\delta}$.
Its Jordan parts satisfy
$$\begin{aligned}
\fnrestr{\widehat{\delta}_+}{\oa{C}_{\mathrm{sa}}}
&=
\fnrestr{\widehat{\delta}_-}{\oa{C}_{\mathrm{sa}}}.
\end{aligned}$$
Their complex-linear restrictions agree on $\oa{C}$.
Thus $\oastate[\psi_{1}]=2\widehat{\delta}_+$ and $\oastate[\psi_{2}]=2\widehat{\delta}_-$ are states
with equal restrictions, and
$$\begin{aligned}
\abs{\fun{\oastate[\psi_{1}]}{x}-\fun{\oastate[\psi_{2}]}{x}}
&=
2\abs{\fun{\widehat{\delta}}{x}},
\\
2\abs{\fun{\delta}{x}}
&=
\norm{\delta}
\abs{\fun{\oastate[\psi_{1}]}{x}-\fun{\oastate[\psi_{2}]}{x}}
\leq
\abs{\fun{\oastate[\psi_{1}]}{x}-\fun{\oastate[\psi_{2}]}{x}}.
\end{aligned}$$
This gives one inequality.
For the reverse inequality, any two states with equal restrictions satisfy
$$\begin{aligned}
\abs{\fun{\oastate[\psi_{1}]}{x}-\fun{\oastate[\psi_{2}]}{x}}
\leq
\norm{\oastate[\psi_{1}]-\oastate[\psi_{2}]}
\fun{\topdist}{x,\oa{C}_{\mathrm{sa}}}
\leq
2\fun{\topdist}{x,\oa{C}_{\mathrm{sa}}}.
\end{aligned}$$
\end{proof}

Apply Proposition \ref{prop:paper-state-oscillation} to \(\oa{A}=\fun{\oa{B}}{\sphilbfrak{h}}\) and \(\oa{C}=\oa{B}_{\mathrm{ql}}\). For every \(A=\faadj{A}\in\fun{\oa{B}}{\sphilbfrak{h}}\) this gives \[\begin{aligned}
\fun{\topdist}{A,\oa{B}_{\mathrm{ql}}}
&=
\frac12
\sup\set{
\abs{\fun{\oastate[\psi_{1}]}{A}-\fun{\oastate[\psi_{2}]}{A}}
}{
\oastate[\psi_{1}],\oastate[\psi_{2}]\in\oaspstate_{\fun{\oa{B}}{\sphilbfrak{h}}},
\fnrestr{\oastate[\psi_{1}]}{\oa{B}_{\mathrm{ql}}}
=
\fnrestr{\oastate[\psi_{2}]}{\oa{B}_{\mathrm{ql}}}}.
\end{aligned}\] For the coordinate map \(p_N\) in \eqref{eq:paper-main-coordinate-map}, one has \(\fun{p_N}{\oa{B}_{\mathrm{ql}}}=\oa{B}_N\): finite-support local elements have dense \(N\)th coordinate range, and the range of the restricted \(\ast\)-homomorphism is closed. Hence a nonzero bounded functional \(\delta\colon\fun{\oa{B}}{\sphilbfrak{h}}\to\fldcmp\) annihilating \(\oa{B}_{\mathrm{ql}}\) cannot be continuous for any coordinate seminorm \[\begin{aligned}
\fun{\oa{B}}{\sphilbfrak{h}}
\to
\fldrealnn;\quad
A
&\mapsto
\norm{\fun{p_N}{A}}.
\end{aligned}\] Assume that \(\delta\) is continuous for this coordinate seminorm. Then continuity forces \(\fnrestr{\delta}{\Ker p_N}=0\), and \(\delta\) factors through \(\oa{B}_N\). For each \(A\), choose \(D\in\oa{B}_{\mathrm{ql}}\) with \(\fun{p_N}{D}=\fun{p_N}{A}\); then \(\fun{\delta}{A}=\fun{\delta}{D}=0\), a contradiction. Consequently, for every \(N\in\monnat\), \[\begin{aligned}
\fnrestr{\delta}{\Ker p_N}
&\neq
0.
\end{aligned}\]

\subsubsection{One-mode particle-number resolvent and particle-degree coefficients}\label{sec:paper-one-mode-resolvent}

In this subsubsection, the Fock-space annihilation operators of Section \ref{sec:paper-bosonic-fock-representations} produce, for each unit vector \(f\in\sphilbfrak{h}\), a bounded operator with explicit particle-degree coefficients, spectrum, and norm.

Use the finite-particle subspace and the annihilation operators defined in Section \ref{sec:paper-bosonic-fock-representations}. Their canonical commutation relation on the common finite-particle domain is \[\begin{aligned}
\commutator{\fun{a_{\txtfock}}{f}}{\faadj{\fun{a_{\txtfock}}{g}}}
&=
\bkt{f}{g}_{\sphilbfrak{h}}.
\end{aligned}\] For a unit vector \(f\in\sphilbfrak{h}\), define \[\begin{aligned}
Q_f
&=
P_{f,f}
\in
\opspbddlin{\sphilbfrak{h}},
\\
N_f
\colon
\fun{\spfockgamma_{\txtbsn,\txtfin}}{\sphilbfrak{h}}
\to
\fun{\spfockgamma_{\txtbsn,\txtfin}}{\sphilbfrak{h}};\quad
\fun{N_f}{\xi}
&=
\faadj{\fun{a_{\txtfock}}{f}}\fun{a_{\txtfock}}{f}\xi.
\end{aligned}\] Under the orthogonal decomposition \(\sphilbfrak{h}=\fldcmp f\oplus\setone{f}^{\perp}\), \[\begin{aligned}
\fun{\spfockgamma_{\txtbsn}}{\sphilbfrak{h}}
&\simeq
\bigoplus_{k,\ell\in\monnat}
\rbk{\fldcmp f}^{\otimes_{\txtsym}k}
\otimes
\rbk{\setone{f}^{\perp}}^{\otimes_{\txtsym}\ell},
&
\fnrestr{N_f}{
\rbk{\fldcmp f}^{\otimes_{\txtsym}k}
\otimes
\rbk{\setone{f}^{\perp}}^{\otimes_{\txtsym}\ell}
}
&=
k.
\end{aligned}\] The domain of \(N_f\) is \[\begin{aligned}
\dom N_f
&=
\set{\seq{\xi_{k,\ell}}{k,\ell\in\monnat}}
{\begin{gathered}
\xi_{k,\ell}
\in
\rbk{\fldcmp f}^{\otimes_{\txtsym}k}
\otimes
\rbk{\setone{f}^{\perp}}^{\otimes_{\txtsym}\ell},
\\
\sum_{k,\ell\in\monnat}\rbk{1+k^2}\norm{\xi_{k,\ell}}^2
<
\infty
\end{gathered}}.
\end{aligned}\] For \(\xi=\bigoplus_{k,\ell\in\monnat}\xi_{k,\ell}\in\dom N_f\), define the sequence \(\seq{\xi_M}{M\in\monnat}\) by \[\begin{aligned}
\xi_M
&=
\bigoplus_{\substack{k,\ell\in\monnat\\k+\ell\leq M}}\xi_{k,\ell}.
\end{aligned}\] The approximation satisfies \[\begin{aligned}
\norm{\xi-\xi_M}^2
+
\norm{N_f\xi-N_f\xi_M}^2
&=
\sum_{k+\ell>M}\rbk{1+k^2}\norm{\xi_{k,\ell}}^2
\to
0
&&\rbk{M\to\infty}.
\end{aligned}\] Thus the finite-particle operator is essentially self-adjoint; write \(N_f\) also for its nonnegative closure and put \[\begin{aligned}
A_{\txtfock,f}
&=
\rbk{1+N_f}^{-1}
\in
\opspbddlin{\fun{\spfockgamma_{\txtbsn}}{\sphilbfrak{h}}}.
\end{aligned}\]

\begin{prop}[Particle-degree coefficients of the one-mode resolvent]\label{prop:paper-rank-one-particle-coefficient}
The $m$th coefficient of $A_{\txtfock,f}$ is
\begin{equation}
\begin{aligned}
a_m
\colon
\set{f\in\sphilbfrak{h}}{\norm{f}=1}
\to
\fun{\schattencls}{\sphilbfrak{h}^{\otimes_{\txtsym}m}};\quad
\fun{a_m}{f}
&=
\frac{\rbk{-1}^m}{m+1}
P_{f^{\otimes m},f^{\otimes m}}.
\end{aligned}
\label{eq:paper-rank-one-particle-coefficient}
\end{equation}
The coefficient norms sum to
$$\begin{aligned}
\sum_{m=0}^{n}\norm{\fun{\widetilde{E}_{m,n}\circ a_m}{f}}
&=
\frac{2^{n+1}-1}{n+1}.
\end{aligned}$$
The represented resolvent has norm
$$\begin{aligned}
\norm{A_{\txtfock,f}}
&=
1.
\end{aligned}$$
\end{prop}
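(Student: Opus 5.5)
The plan is to compute everything on a single particle sector $\sphilbfrak{h}^{\otimes_{\txtsym}n}$, using the decomposition $\sphilbfrak{h}=\fldcmp f\oplus\setone{f}^{\perp}$. On the summand with exactly $k$ particles in the mode $f$, the restriction of $A_{\txtfock,f}$ equals $1/(1+k)$. This sector decomposition is $W_{\fldcmp f,n}$ of \eqref{eq:paper-main-particle-fock-decomposition-unitary}. Since $\fldcmp f$ is one-dimensional, the restriction to the $n$-particle sector is the block-diagonal operator $\bigoplus_{k=0}^{n}\frac{1}{1+k}\otimes1$, and the coefficients can be read off by the recursion of Section \ref{sec:paper-concrete-finite-mode-embeddings} with $D_k=\frac{1}{1+k}$ on $(\fldcmp f)^{\otimes_{\txtsym}k}\simeq\fldcmp$. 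In particular $A_{\txtfock,f}$ lies in $\oa{D}_{\txtfock}$, its coordinates lie in $\varpi_n(\oa{B}_n(\sphilbfrak{h}))$, and it defines an element of $\fun{\oa{B}}{\sphilbfrak{h}}$ through $\jmath$.

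Next I identify the coefficients. Take $c_m=\gamma_m P_{f^{\otimes m},f^{\otimes m}}$ with scalars $\gamma_m$. On the block with $k$ particles in $f$, the element $\widetilde{E}_{m,n}(c_m)$ acts as $\binom{k}{m}\gamma_m$. This holds because $E_{\fldcmp f,m,k}(P_{f^{\otimes m},f^{\otimes m}})$ is the sum over the $\binom{k}{m}$ subsets, each acting as the identity on $f^{\otimes k}$. The recursion therefore reduces to the Newton identity \eqref{eq:paper-newton-name} with $s(k)=1/(1+k)$, so that $\gamma_m=\Delta^m s(0)$. The standard computation $\sum_{j}(-1)^{m-j}\binom{m}{j}\frac{1}{j+1}=\frac{(-1)^m}{m+1}$, obtained from $\int_0^1(x-1)^m\,dx$, gives \eqref{eq:paper-rank-one-particle-coefficient}. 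Uniqueness then follows from Lemma \ref{lem:paper-concrete-finite-particle-coefficient-uniqueness}.

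For the norm sum, I compute $\norm{\widetilde{E}_{m,n}(P_{f^{\otimes m},f^{\otimes m}})}$. By the block formula this operator is $\bigoplus_k\binom{k}{m}$, so its norm is $\binom{n}{m}$. This uses that the top block $k=n$ is nonzero, which is where $\dim\sphilbfrak{h}\ge1$ enters; the block $k$ is present for every $k\le n$ because $\setone{f}^{\perp}$ is infinite-dimensional. The sum is then $\sum_{m=0}^n\binom{n}{m}\frac{1}{m+1}=\frac{1}{n+1}\sum_{m=0}^{n}\binom{n+1}{m+1}=\frac{2^{n+1}-1}{n+1}$.

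Finally, $\norm{A_{\txtfock,f}}=1$ follows by functional calculus: $N_f\ge0$ and the vacuum is an eigenvector with eigenvalue $0$. The only delicate point is the bookkeeping showing that $\widetilde{E}_{m,n}$ acts by $\binom{k}{m}$ on each block, with the normalization of $W$ not interfering. I will argue it on vectors $P_n(f^{\otimes k}\otimes\eta)$: there $K^{[S]}$ is nonzero only when $S$ hits $f$-factors, and after symmetrization a count of subsets gives the factor $\binom{k}{m}$.
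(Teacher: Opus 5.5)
Your proposal is correct and is essentially the paper's proof. You decompose each $n$-particle sector by the number $k$ of particles in $f$ and use $\widetilde{E}_{m,n}(Q_f^{\otimes m})=\binom{k}{m}$ on that block; note that $Q_f^{[S]}$ survives only when $S$ lies inside the $f$-factors, not merely meets them. You then invoke coefficient uniqueness, read off $\binom{n}{m}$ from the top block, and get $\norm{A_{\txtfock,f}}=1$ from $N_f\geq0$ and the vacuum. The only cosmetic difference is that you solve the triangular recursion for $\Delta^m s(0)$ via $\int_0^1(x-1)^m\,dx$, whereas the paper checks the forward expansion $\frac{1}{k+1}=\sum_{m}\frac{(-1)^m}{m+1}\binom{k}{m}$ via $\int_0^1(1-t)^k\,dt$; the two are equivalent by binomial inversion.
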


\begin{proof}
On the $n$-particle subspace with exactly $k$ particles in the mode $f$,
$$\begin{aligned}
\fun{\widetilde{E}_{m,n}}{Q_f^{\otimes m}}
&=
\sum_{\substack{S\subset\setone{1,\ldots,n}\\\abscard{S}=m}}
\rbk{Q_f^{\otimes m}}^{[S]},
&
\fnrestr{\fun{\widetilde{E}_{m,n}}{Q_f^{\otimes m}}}{N_f=k}
&=
\binom{k}{m}1.
\end{aligned}$$
With $\binom{k}{m}=0$ for $m>k$,
$$\begin{aligned}
\frac{1}{k+1}
&=
\int_0^1\rbk{1-t}^{k}\opdmsr{t}
=
\sum_{m=0}^{k}\frac{\rbk{-1}^m}{m+1}\binom{k}{m}.
\end{aligned}$$
Hence, on every $n$-particle sector,
$$\begin{aligned}
\fnrestr{\rbk{1+N_f}^{-1}}{N_f=k}
&=
\fnrestr{
\sum_{m=0}^{n}\frac{\rbk{-1}^m}{m+1}
\fun{\widetilde{E}_{m,n}}{Q_f^{\otimes m}}
}{N_f=k}.
\end{aligned}$$
The eigenspaces sum to the whole sector, so uniqueness of particle-degree coefficients proves
\eqref{eq:paper-rank-one-particle-coefficient}.
Since
$\norm{\fun{\widetilde{E}_{m,n}}{Q_f^{\otimes m}}}=\binom{n}{m}$,
$$\begin{aligned}
\sum_{m=0}^{n}\frac{1}{m+1}\binom{n}{m}
&=
\frac{1}{n+1}\sum_{m=0}^{n}\binom{n+1}{m+1}
=
\frac{2^{n+1}-1}{n+1}.
\end{aligned}$$
The spectrum and norm of $A_{\txtfock,f}$ are
$$\begin{aligned}
\opvarspec{A_{\txtfock,f}}
&=
\setone{\frac{1}{k+1}:k\in\monnat}\cup\setone{0},
&
\norm{A_{\txtfock,f}}
&=
1.
\end{aligned}$$
\end{proof}

Return the concrete coefficients to the intrinsic blocks by \[\begin{aligned}
\mathfrak{a}
\colon
\set{f\in\sphilbfrak{h}}{\norm{f}=1}
\to
\oa{C}(\sphilbfrak{h});\quad
\fun{\mathfrak{a}}{f}
&=
\seq{\fun{\Theta_m^{-1}\circ a_m}{f}}{m\in\monnat}.
\end{aligned}\] The binomial identity gives \[\begin{aligned}
\fun{\Pi_n\circ\mathfrak{a}}{f}
&=
\fnrestr{A_{\txtfock,f}}{\sphilbfrak{h}^{\otimes_{\txtsym}n}},
&
\sup_{n\in\monnat}\norm{\fun{\Pi_n\circ\mathfrak{a}}{f}}
&=
1.
\end{aligned}\] Thus \(\fun{\mathfrak{a}}{f}\in\oa{C}_{\mathrm{b}}(\sphilbfrak{h})\). The map \(\Upsilon\) in \eqref{eq:paper-main-particle-coefficient-isomorphism} gives \[\begin{aligned}
A_f
&=
\fun{\Upsilon\circ\mathfrak{a}}{f}
\in
\fun{\oa{B}}{\sphilbfrak{h}}.
\end{aligned}\] Its comparison representative is \(A_{\txtfock,f}\).

\subsubsection{Escaping-particle states and local approximation distance}\label{sec:paper-escaping-particle-states}

In this subsubsection, vector states with particles in the orthogonal complements of finite basis subspaces measure the distance of one-mode resolvents from the quasilocal algebra. They also distinguish the two orders of finite-coordinate and quasilocal approximation in Theorem \ref{thm:paper-rank-one-tail-geometry}.

Let the unit vector \(f\in\sphilbfrak{h}\) have infinite coordinate support. Direct the nonempty finite subsets of \(X\) by inclusion, and write \(\Lambda\nearrow X\) for this net direction. For a nonempty \(\Lambda\in\setpowfin{X}\) and \(n\in\monnat\), define \[\begin{aligned}
p_{\Lambda}
&=
\norm{\rbk{1-P_{\Lambda}}f}^2,
&
e_{\Lambda}
&=
\frac{\rbk{1-P_{\Lambda}}f}{\sqrt{p_{\Lambda}}}.
\end{aligned}\] Infinite coordinate support gives \(p_{\Lambda}>0\) for every finite \(\Lambda\), and completeness of the basis gives \(P_{\Lambda}f\to f\) and \(p_{\Lambda}\to0\) as \(\Lambda\nearrow X\). Define the scalar function \[\begin{aligned}
\eta_n
\colon
\set{p\in\fldreal}{0<p\leq1}
\to
\fldreal;\quad
\fun{\eta_n}{p}
&=
\frac{1-\rbk{1-p}^{n+1}}{\rbk{n+1}p}.
\end{aligned}\] Define the vacuum state by \begin{equation}
\begin{aligned}
\oastate[\psi_{\mathrm{vac}}]
\colon
\fun{\oa{B}}{\sphilbfrak{h}}
\to
\fldcmp;\quad
\fun{\oastate[\psi_{\mathrm{vac}}]}{A}
&=
\bkt{\Psi_{\txtfock,\txtvac}}{\fun{\oarepn_{\txtfock}}{A}\Psi_{\txtfock,\txtvac}}_{\fun{\spfockgamma_{\txtbsn}}{\sphilbfrak{h}}}.
\end{aligned}
\label{eq:paper-vacuum-state}
\end{equation} The \(N_f\)-distribution in \(e_{\Lambda}^{\otimes n}\) is binomial, and therefore \begin{equation}
\begin{aligned}
\bkt{e_{\Lambda}^{\otimes n}}{\fun{p_n}{A_f}e_{\Lambda}^{\otimes n}}
&=
\sum_{k=0}^{n}\frac{1}{k+1}\binom{n}{k}p_{\Lambda}^k\rbk{1-p_{\Lambda}}^{n-k}
\\
&=
\int_0^1\rbk{1-p_{\Lambda}+p_{\Lambda} t}^{n}\opdmsr{t}
=
\frac{1-\rbk{1-p_{\Lambda}}^{n+1}}{\rbk{n+1}p_{\Lambda}}
=
\fun{\eta_n}{p_{\Lambda}}.
\end{aligned}
\label{eq:paper-binomial-resolvent-expectation}
\end{equation} Fix \(c>0\) and define \(n_{\Lambda}=\max\setone{0,\lfloor c/p_{\Lambda}\rfloor-1}\) for every nonempty finite \(\Lambda\). Since \(p_{\Lambda}\to0\) as \(\Lambda\nearrow X\), every sufficiently large \(\Lambda\) satisfies \(p_{\Lambda}<\min\setone{c,1}\) and \[\begin{aligned}
0
&\leq
c-\rbk{n_{\Lambda}+1}p_{\Lambda}
<
p_{\Lambda}.
\end{aligned}\] Hence \begin{equation}
\begin{aligned}
\rbk{n_{\Lambda}+1}p_{\Lambda}
&\to
c
&&\rbk{\Lambda\nearrow X}.
\end{aligned}
\label{eq:paper-escaping-particle-scaling}
\end{equation} For these sufficiently large regions, the scalar limit is \[\begin{aligned}
\rbk{n_{\Lambda}+1}\log\rbk{1-p_{\Lambda}}
&=
-\rbk{n_{\Lambda}+1}p_{\Lambda}
\frac{-\log\rbk{1-p_{\Lambda}}}{p_{\Lambda}}
\to
-c,
\end{aligned}\] and the definition of \(\eta_{n_{\Lambda}}\) gives \[\begin{aligned}
\fun{\eta_{n_{\Lambda}}}{p_{\Lambda}}
&\to
\frac{1-\napiernum^{-c}}{c}
&&\rbk{\Lambda\nearrow X}.
\end{aligned}\] Define the associated state net by \begin{equation}
\begin{aligned}
\oastate[\psi_{c,\Lambda}]
\colon
\fun{\oa{B}}{\sphilbfrak{h}}
\to
\fldcmp;\quad
\fun{\oastate[\psi_{c,\Lambda}]}{A}
&=
\bkt{e_{\Lambda}^{\otimes n_{\Lambda}}}{\fun{p_{n_{\Lambda}}}{A}e_{\Lambda}^{\otimes n_{\Lambda}}}.
\end{aligned}
\label{eq:paper-escaping-particle-state}
\end{equation} The Banach--Alaoglu theorem provides a convergent subnet, represented by a directed set \(I\) and a map \(I\ni i\mapsto \Lambda_i\in\setpowfin{X}\setminus\setone{\emptyset}\). The map is eventually cofinal: for every nonempty finite \(\Lambda\subset X\), there is \(i_{\Lambda}\in I\) for which \(\Lambda\subset \Lambda_i\) whenever \(i\geq i_{\Lambda}\). The net \(\net{\oastate[\psi_{c,\Lambda_i}]}{i\in I}\) is weak-\(\ast\) convergent. The limit state is \[\begin{aligned}
\oastate[\psi_c]
\colon
\fun{\oa{B}}{\sphilbfrak{h}}
\to
\fldcmp;\quad
\fun{\oastate[\psi_c]}{A}
&=
\lim_{i\in I}\fun{\oastate[\psi_{c,\Lambda_i}]}{A}.
\end{aligned}\] The state \(\oastate[\psi_c]\) agrees with the vacuum state in \eqref{eq:paper-vacuum-state} on every \(\oa{B}_{\Lambda'}\), because for \(\Lambda\supset \Lambda'\) all particles lie in \(\sphilbfrak{h}_{X\setminus\Lambda'}\) and \eqref{eq:paper-local-element-external-only-block} gives \[\begin{aligned}
\fun{\oastate[\psi_{c,\Lambda}]}{D}
&=
\bkt{\Psi_{\txtfock,\txtvac}}{\fun{\oarepn_{\txtfock}}{D}\Psi_{\txtfock,\txtvac}}
=
\fun{\oastate[\psi_{\mathrm{vac}}]}{D}
\end{aligned}\] for \(D\in\oa{B}_{\Lambda'}\).

For every nonempty \(\Lambda\in\setpowfin{X}\), define \[\begin{aligned}
d_{\Lambda,n}
\colon
\fun{\oa{B}}{\sphilbfrak{h}}
\to
\fldrealnn;\quad
\fun{d_{\Lambda,n}}{A}
&=
\fun{\topdist}{\fun{p_n}{A},\fun{p_n}{\oa{B}_{\Lambda}}}.
\end{aligned}\] If \(P_{\Lambda} f\neq0\), put \(f_{\Lambda}=P_{\Lambda} f/\norm{P_{\Lambda} f}\).

\begin{thm}[Quantitative nonlocality of the one-mode resolvent]\label{thm:paper-rank-one-tail-geometry}
For every infinite-support unit vector $f$,
$$\begin{aligned}
\norm{\sqbk{A_f}}_{\oa{T}_{\mathrm{B}}}
&=
\frac12.
\end{aligned}$$
For every nonempty $\Lambda\in\setpowfin{X}$ with $P_{\Lambda} f\neq0$, the norm of the finite-mode approximation error
is
$$\begin{aligned}
\norm{A_f-A_{f_{\Lambda}}}
&=
1.
\end{aligned}$$
The finite-coordinate minimax value is
$$\begin{aligned}
\sup_{n\in\monnat}
\inf_{\substack{\Lambda\in\setpowfin{X}\\\Lambda\neq\emptyset}}
\fun{d_{\Lambda,n}}{A_f}
&=
0.
\end{aligned}$$
The quasilocal minimax value is
$$\begin{aligned}
\inf_{\substack{\Lambda\in\setpowfin{X}\\\Lambda\neq\emptyset}}
\sup_{n\in\monnat}\fun{d_{\Lambda,n}}{A_f}
&=
\frac12.
\end{aligned}$$
\end{thm}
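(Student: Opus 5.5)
The plan is to handle the four assertions separately. All upper bounds come from the trivial comparison $A_f-\tfrac12 1$, and all lower bounds come from pairs of states that agree on the relevant local algebra, via the oscillation formula of Proposition \ref{prop:paper-state-oscillation}. Throughout, I use that $A_f$ is represented by $A_{\txtfock,f}=\rbk{1+N_f}^{-1}$, with spectrum in $\closedinterval{0}{1}$ by Proposition \ref{prop:paper-rank-one-particle-coefficient}. I also use that the $N_f$-distribution in $e_{\Lambda}^{\otimes n}$ is binomial, which gives the expectation $\fun{\eta_n}{p_{\Lambda}}$ in \eqref{eq:paper-binomial-resolvent-expectation}; from the integral form one has $\fun{\eta_n}{p}\leq 1/\rbk{\rbk{n+1}p}$.

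\emph{Quotient norm.} Since $\opvarspec{A_f}\subset\closedinterval{0}{1}$ and $1\in\oa{B}_{\mathrm{ql}}$, I get $\norm{A_f-\tfrac12 1}\leq\tfrac12$, hence $\norm{\sqbk{A_f}}_{\oa{T}_{\mathrm{B}}}\leq\tfrac12$. For the reverse inequality I apply Proposition \ref{prop:paper-state-oscillation} with $\oa{A}=\fun{\oa{B}}{\sphilbfrak{h}}$ and $\oa{C}=\oa{B}_{\mathrm{ql}}$ to the states $\oastate[\psi_{\mathrm{vac}}]$ and $\oastate[\psi_c]$. These agree on every $\oa{B}_{\Lambda'}$ by \eqref{eq:paper-local-element-external-only-block}, hence by continuity on $\oa{B}_{\mathrm{ql}}$. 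Their values at $A_f$ are $1$ and $\lim_i\fun{\eta_{n_{\Lambda_i}}}{p_{\Lambda_i}}=\rbk{1-\napiernum^{-c}}/c$, the latter by \eqref{eq:paper-escaping-particle-scaling}. Letting $c\to\infty$ yields oscillation $1$ and therefore distance at least $\tfrac12$.

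\emph{Finite-mode error.} Both $A_f$ and $A_{f_{\Lambda}}$ are positive contractions, so $\norm{A_f-A_{f_{\Lambda}}}\leq1$. For the lower bound I evaluate the difference in the vector $e_{\Lambda}^{\otimes n}$. Because $e_{\Lambda}\perp\sphilbfrak{h}_{\Lambda}\ni f_{\Lambda}$, one has $N_{f_{\Lambda}}e_{\Lambda}^{\otimes n}=0$, so the expectation of $A_{f_{\Lambda}}$ equals $1$. The expectation of $A_f$ is $\fun{\eta_n}{p_{\Lambda}}\to0$ as $n\to\infty$. Hence the norm equals $1$.

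\emph{Finite-coordinate minimax.} Fix $n$. The element $A_{f_{\Lambda}}$ lies in $\oa{B}_{\Lambda}$, since its Fock representative is $\faadj{U_{\Lambda}}\rbk{\rbk{1+N_{f_{\Lambda}}}^{-1}\otimes1}U_{\Lambda}$ with a bounded gauge-invariant block family on $\fun{\spfockgamma_{\txtbsn}}{\sphilbfrak{h}_{\Lambda}}$. By \eqref{eq:paper-rank-one-particle-coefficient}, the difference of the $n$th coordinates satisfies
\[
\norm{\fun{p_n}{A_f-A_{f_{\Lambda}}}}\leq\sum_{m=0}^{n}\frac{1}{m+1}\binom{n}{m}\norm{Q_f^{\otimes m}-Q_{f_{\Lambda}}^{\otimes m}}.
\]
This tends to $0$ as $\Lambda\nearrow X$ because $f_{\Lambda}\to f$. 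Hence $\inf_{\Lambda}\fun{d_{\Lambda,n}}{A_f}=0$ for every $n$.

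\emph{Quasilocal minimax.} The bound $\leq\tfrac12$ follows again from the scalar $\tfrac12\in\oa{B}_{\Lambda}$. For the bound $\geq\tfrac12$, fix $\Lambda$ and $n$ and work inside the $\oacstar$-algebra $\oa{B}_n(\sphilbfrak{h})$. Its subalgebra $\fun{p_n}{\oa{B}_{\Lambda}}$ is unital and closed, being the range of a $\ast$-homomorphism, so Proposition \ref{prop:paper-state-oscillation} applies there. Choose a unit vector $g\in\sphilbfrak{h}_{X\setminus\Lambda}$ with $g\perp f$; this is possible since $\sphilbfrak{h}_{X\setminus\Lambda}$ is infinite-dimensional. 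Both $e_{\Lambda}^{\otimes n}$ and $g^{\otimes n}$ lie in the local-vacuum sector $\fldcmp\Psi_{\txtfock,\txtvac}\otimes\fun{\spfockgamma_{\txtbsn}}{\sphilbfrak{h}_{X\setminus\Lambda}}$. Therefore, by \eqref{eq:paper-local-element-external-only-block}, their vector states agree on $\fun{p_n}{\oa{B}_{\Lambda}}$, each giving the value $d_D$. Their values at $\fun{p_n}{A_f}$ are $\fun{\eta_n}{p_{\Lambda}}$ and $1$, so
\[
\fun{d_{\Lambda,n}}{A_f}\geq\tfrac12\rbk{1-\fun{\eta_n}{p_{\Lambda}}}.
\]
Taking $\sup_n$ gives $\tfrac12$ for every $\Lambda$.

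I expect the main care to lie in justifying the state-agreement steps, namely that the escaping vectors see every local element only through its scalar $d_D$ on the local-vacuum sector. The remaining steps are scalar estimates.
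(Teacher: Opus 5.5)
Your proposal is correct and follows essentially the same route as the paper. The upper bounds come from the scalar $\tfrac12$. The quotient-norm lower bound uses escaping-particle states that agree with the vacuum on $\mathcal{B}_{\mathrm{ql}}$, together with the oscillation formula. The finite-mode error is detected by the vector $e_{\Lambda}^{\otimes n}$, and the quasilocal minimax uses a pair of vectors in the local-vacuum sector.

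The deviations are minor. You let $c\to\infty$ in the states $\psi_c$ instead of constructing the single limit state $\psi_\infty$ with $n_\Lambda=\lfloor p_\Lambda^{-2}\rfloor$. You bound the fixed-$n$ coordinate difference through the particle-coefficient expansion rather than the resolvent identity. You also verify explicitly that $A_{f_\Lambda}\in\mathcal{B}_{\Lambda}$, which the paper leaves implicit.
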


\begin{proof}
The map $\Upsilon$ in \eqref{eq:paper-main-particle-coefficient-isomorphism} gives
$A_f=\fun{\Upsilon\circ\mathfrak{a}}{f}$, whose comparison representative is $A_{\txtfock,f}$.
Proposition \ref{prop:paper-rank-one-particle-coefficient} and faithfulness of the comparison representation give
$$\begin{aligned}
\opvarspec{A_f}
&=
\setone{\frac{1}{k+1}:k\in\monnat}\cup\setone{0}.
\end{aligned}$$
The diagonal action of the comparison representative therefore gives
$$\begin{aligned}
\norm{A_f-\frac{1}{2}}
&=
\max\setone{
\sup_{k\in\monnat}\abs{\frac{1}{k+1}-\frac12},
\frac12}
=
\frac12.
\end{aligned}$$
This equality gives the upper tail bound.
Choose
$n_{\Lambda}=\lfloor p_{\Lambda}^{-2}\rfloor$ and $\xi_{\Lambda}=e_{\Lambda}^{\otimes n_{\Lambda}}$.
Define the state net indexed by
$\Lambda\nearrow X$ by
$$\begin{aligned}
\oastate[\psi_{\infty,\Lambda}]
\colon
\fun{\oa{B}}{\sphilbfrak{h}}
\to
\fldcmp;\quad
\fun{\oastate[\psi_{\infty,\Lambda}]}{A}
&=
\bkt{\xi_{\Lambda}}{\fun{p_{n_{\Lambda}}}{A}\xi_{\Lambda}}.
\end{aligned}$$
The Banach--Alaoglu theorem gives a convergent subnet, represented by a directed set $I$ and a map
$I\ni i\mapsto \Lambda_i\in\setpowfin{X}\setminus\setone{\emptyset}$ such that
for every nonempty finite $\Lambda\subset X$, there is $i_{\Lambda}\in I$ for
which $\Lambda\subset \Lambda_i$ whenever $i\geq i_{\Lambda}$.
The net $\net{\oastate[\psi_{\infty,\Lambda_i}]}{i\in I}$ converges in the weak-$\ast$ topology.
Its limit state is
$$\begin{aligned}
\oastate[\psi_\infty]
\colon
\fun{\oa{B}}{\sphilbfrak{h}}
\to
\fldcmp;\quad
\fun{\oastate[\psi_\infty]}{A}
&=
\lim_{i\in I}\fun{\oastate[\psi_{\infty,\Lambda_i}]}{A}.
\end{aligned}$$
Then \eqref{eq:paper-binomial-resolvent-expectation} gives
$$\begin{aligned}
\fun{\oastate[\psi_\infty]}{A_f}
&=
0,
&
\fun{\oastate[\psi_{\mathrm{vac}}]}{A_f}
&=
1.
\end{aligned}$$
For a finite index set $\Lambda'$ and $D\in\oa{B}_{\Lambda'}$, sufficiently large $\Lambda\supset \Lambda'$ satisfy
$\xi_{\Lambda}\in \sphilbfrak{h}_{X\setminus\Lambda'}^{\otimes_{\txtsym}n_{\Lambda}}$.
Equation
\eqref{eq:paper-local-element-external-only-block} gives
$$\begin{aligned}
\fun{\oastate[\psi_{\infty,\Lambda}]}{D}
&=
d_D
=
\fun{\oastate[\psi_{\mathrm{vac}}]}{D}.
\end{aligned}$$
Taking the limit along $i\in I$ in this local equality shows that
$\oastate[\psi_\infty]$ and $\oastate[\psi_{\mathrm{vac}}]$ agree on every local algebra and on
$\oa{B}_{\mathrm{ql}}$.
Equation \eqref{eq:paper-state-oscillation} gives the lower tail bound.

Because $e_{\Lambda}\perp f_{\Lambda}$, for every $n\in\monnat$,
$$\begin{aligned}
\bkt{e_{\Lambda}^{\otimes n}}{\fun{p_n}{A_{f_{\Lambda}}}e_{\Lambda}^{\otimes n}}
&=
1,
&
\bkt{e_{\Lambda}^{\otimes n}}{\fun{p_n}{A_f}e_{\Lambda}^{\otimes n}}
&=
\fun{\eta_n}{p_{\Lambda}}.
\end{aligned}$$
The matrix coefficient $\fun{\eta_n}{p_{\Lambda}}$ tends to zero as $n\to\infty$.
Thus $\norm{A_f-A_{f_{\Lambda}}}\geq1$.
Since both operators are
positive contractions,
$$\begin{aligned}
-1
&\leq
A_f-A_{f_{\Lambda}}
\leq
1,
\end{aligned}$$
which gives the reverse bound.
For fixed $n$,
$$\begin{aligned}
\norm{\fun{p_n}{A_{f_{\Lambda}}-A_f}}
&\leq
\norm{\fnrestr{N_{f_{\Lambda}}-N_f}{\sphilbfrak{h}^{\otimes_{\txtsym}n}}}
\leq
n\norm{Q_{f_{\Lambda}}-Q_f}
\leq
2n\norm{f_{\Lambda}-f}
\to
0
&&\rbk{\Lambda\nearrow X},
\end{aligned}$$
which proves the first minimax identity.

For $D\in\oa{B}_{\Lambda}$, equation \eqref{eq:paper-local-element-external-only-block} gives
$$\begin{aligned}
\fnrestr{\fun{p_n}{D}}{\sphilbfrak{h}_{X\setminus\Lambda}^{\otimes_{\txtsym}n}}
&=
d_D.
\end{aligned}$$
Choose a unit vector $v_{\Lambda}\in \sphilbfrak{h}_{X\setminus\Lambda}\cap\setone{e_{\Lambda}}^{\perp}$.
Then
$$\begin{aligned}
\bkt{v_{\Lambda}^{\otimes n}}{\fun{p_n}{A_f}v_{\Lambda}^{\otimes n}}
&=
1,
&
\bkt{e_{\Lambda}^{\otimes n}}{\fun{p_n}{A_f}e_{\Lambda}^{\otimes n}}
&=
\fun{\eta_n}{p_{\Lambda}}.
\end{aligned}$$
The quotient-norm distance satisfies
$$\begin{aligned}
\norm{\fun{p_n}{A_f}-\fun{p_n}{D}}
&\geq
\max\setone{\abs{1-d_D},\abs{\fun{\eta_n}{p_{\Lambda}}-d_D}}
\geq
\frac{1-\fun{\eta_n}{p_{\Lambda}}}2.
\end{aligned}$$
Taking $n\to\infty$ gives the lower bound $1/2$, and approximation by the scalar $1/2$ gives the upper bound.
\end{proof}

\begin{cor}[Coefficient norms do not detect finite-mode locality]\label{cor:paper-coefficient-norms-do-not-detect-locality}
No condition depending only on $\seq{\norm{a_m}}{m\in\monnat}$ characterizes
$\oa{B}_{\mathrm{ql}}$.
\end{cor}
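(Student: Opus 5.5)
The plan is to exhibit two elements of $\fun{\oa{B}}{\sphilbfrak{h}}$ whose particle-degree coefficient norms agree in every degree, one lying in $\oa{B}_{\mathrm{ql}}$ and the other not. A condition depending only on the sequence $\seq{\norm{a_m}}{m\in\monnat}$ assigns the same truth value to both, so it cannot characterize $\oa{B}_{\mathrm{ql}}$. The natural pair is the one-mode resolvent $A_f$ for an infinite-support unit vector $f$, and $A_g$ for a unit vector $g\in\sphilbfrak{h}_{\Lambda}$ with finite coordinate support, for instance $g=u_x$.

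First, Proposition \ref{prop:paper-rank-one-particle-coefficient} gives the coefficients $\fun{a_m}{f}=\frac{(-1)^m}{m+1}P_{f^{\otimes m},f^{\otimes m}}$. Since $f$ is a unit vector, $P_{f^{\otimes m},f^{\otimes m}}$ is a rank-one projection, so $\norm{\fun{a_m}{f}}=\frac{1}{m+1}$. This value does not depend on $f$. Because $\Theta_m$ is a $\ast$-isomorphism and hence isometric, the abstract coefficients $\fun{\mathfrak{a}}{f}$ and $\fun{\mathfrak{a}}{g}$ therefore have identical norm sequences $\seq{1/(m+1)}{m\in\monnat}$. Both lie in $\oa{C}_{\mathrm{b}}(\sphilbfrak{h})$, with $\sup_n\norm{\Pi_n(\cdot)}=1$.

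Second, I will show that $A_g\in\oa{B}_{\Lambda}\subset\oa{B}_{\mathrm{ql}}$. Write $\sphilbfrak{h}=\fldcmp g\oplus\setone{g}^{\perp}$ with $\fldcmp g\subset\sphilbfrak{h}_{\Lambda}$. Then $(1+N_g)^{-1}$ acts as $D\otimes1$ under $U_{\sphilbfrak{h}_{\Lambda}}$, where $D=\bigoplus_m\fnrestr{(1+N_g)^{-1}}{\sphilbfrak{h}_{\Lambda}^{\otimes_{\txtsym}m}}$ is a bounded gauge-invariant block family on $\fun{\spfockgamma_{\txtbsn}}{\sphilbfrak{h}_{\Lambda}}$. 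By Proposition \ref{prop:finite-mode-product}, $D=\fun{\diag_{\sphilbfrak{h}_{\Lambda}}}{D'}$ for some $D'\in\oa{M}(\oa{K}(\sphilbfrak{h}_{\Lambda})^{\gamma})$. Equation \eqref{eq:paper-basis-global-local-inclusion} then gives $\fun{\oarepn_{\txtfock}}{i_{\Lambda,X}(D')}=A_{\txtfock,g}=\fun{\oarepn_{\txtfock}}{A_g}$, and faithfulness of $\oarepn_{\txtfock}$ (Corollary \ref{cor:paper-fock-cutoff-realization}) yields $A_g=i_{\Lambda,X}(D')$.

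Third, Theorem \ref{thm:paper-rank-one-tail-geometry} gives $\norm{\sqbk{A_f}}_{\oa{T}_{\mathrm{B}}}=\tfrac12\neq0$, so $A_f\notin\oa{B}_{\mathrm{ql}}$. This completes the argument.

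The only step needing care is the second: one must identify $(1+N_g)^{-1}$ as a tensor-factor operator. This holds because $N_g$ counts only particles in the mode $g$, which lies in $\sphilbfrak{h}_{\Lambda}$, so the operator acts trivially on $\fun{\spfockgamma_{\txtbsn}}{\sphilbfrak{h}_{\Lambda}^{\perp}}$. Everything else is quoted directly from the cited results.
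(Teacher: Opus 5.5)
Your proposal is correct and follows the paper's proof. You use the same pair of one-mode resolvents, one with finite and one with infinite coordinate support, whose coefficient norms are both $1/(m+1)$; one lies in $\oa{B}_{\mathrm{ql}}$ and the other has quotient norm $\tfrac12$ by Theorem \ref{thm:paper-rank-one-tail-geometry}. Your second step supplies the justification for $A_g\in\oa{B}_{\Lambda}$ that the paper only asserts: you factor the resolvent as $D\otimes1$ under $U_{\sphilbfrak{h}_{\Lambda}}$, lift $D$ through $\diag_{\sphilbfrak{h}_{\Lambda}}$, and use faithfulness of $\oarepn_{\txtfock}$.
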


\begin{proof}
For a finite-support unit vector $f$ and an infinite-support unit vector $g$,
$$\begin{aligned}
\norm{\fun{a_m}{f}}
&=
\frac1{m+1}
=
\norm{\fun{a_m}{g}},
&
A_f
&\in
\oa{B}_{\mathrm{ql}},
&
\norm{\sqbk{A_g}}_{\oa{T}_{\mathrm{B}}}
&=
\frac12.
\end{aligned}$$
\end{proof}

For a nonempty \(\Lambda\in\setpowfin{X}\), use the projection \(P_{\Lambda}\) of \eqref{eq:paper-basis-localization-data} and the convention \(P_{\Lambda}^{\otimes_{\txtsym}0}=1_{\fldcmp}\), and put \[\begin{aligned}
\oa{C}_{\mathrm{b}}(\Lambda)
&=
\set{b=\seq{b_m}{m\in\monnat}\in\oa{C}_{\mathrm{b}}(\sphilbfrak{h})}
{\fun{\Theta_m}{b_m}=P_{\Lambda}^{\otimes_{\txtsym}m}\fun{\Theta_m}{b_m}P_{\Lambda}^{\otimes_{\txtsym}m}\text{ for every }m}.
\end{aligned}\]

\begin{thm}[Coefficient criterion for quasilocality]\label{thm:paper-particle-coefficient-quasilocality-criterion}
For the map $\Upsilon$ in \eqref{eq:paper-main-particle-coefficient-isomorphism} and every nonempty
$\Lambda\in\setpowfin{X}$, one has
$\fun{\Upsilon}{\oa{C}_{\mathrm{b}}(\Lambda)}=\oa{B}_{\Lambda}$.
Consequently,
$\fun{\Upsilon}{a}\in\oa{B}_{\mathrm{ql}}$ if and only if
\begin{equation}
\begin{aligned}
\lim_{\Lambda\nearrow X}
\inf_{b\in\oa{C}_{\mathrm{b}}(\Lambda)}
\sup_{n\in\monnat}
\norm{\sum_{m=0}^{n}\fun{E_{m,n}}{a_m-b_m}}_{\oa{B}_n}
&=
0,
\end{aligned}
\label{eq:paper-particle-coefficient-quasilocality-criterion}
\end{equation}
where $\Lambda\nearrow X$ is the directed limit over the nonempty finite subsets of $X$.
\end{thm}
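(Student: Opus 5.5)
The plan has two stages. First identify $\Upsilon(\oa{C}_{\mathrm{b}}(\Lambda))$ with $\oa{B}_{\Lambda}$ by comparing particle-degree coefficients in the Fock realization. Then read the quasilocality criterion off the definition of $\oa{B}_{\mathrm{ql}}$ as a norm closure of an increasing net.

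\emph{Inclusion $\oa{B}_{\Lambda}\subset\Upsilon(\oa{C}_{\mathrm{b}}(\Lambda))$.} Take $D\in\oa{M}(\oa{K}(\sphilbfrak{h}_{\Lambda})^{\gamma})$ and transport it by $\diag_{\sphilbfrak{h}_{\Lambda}}$ to a bounded block family $(D_k)$. The recursion of Section \ref{sec:paper-concrete-finite-mode-embeddings} with $\sphilbfrak{f}=\sphilbfrak{h}_{\Lambda}$ produces coefficients $c_m$, and these are operators on $\sphilbfrak{h}_{\Lambda}^{\otimes_{\txtsym}m}$ extended by zero. Hence $c_m=P_{\Lambda}^{\otimes_{\txtsym}m}c_mP_{\Lambda}^{\otimes_{\txtsym}m}$. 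That section also shows $T_n=\sum_{m\le n}\widetilde{E}_{m,n}(c_m)$ for the Fock representative of $i_{\Lambda,X}(D)$. Put $b_m=\Theta_m^{-1}(c_m)$. Then $\varpi_n\circ E_{m,n}=\widetilde{E}_{m,n}\circ\Theta_m$, together with injectivity of $\varpi_n$, gives $\Pi_n(b)=p_n(i_{\Lambda,X}(D))$ for every $n$. So $b\in\oa{C}_{\mathrm{b}}(\Lambda)$ and $\Upsilon(b)=i_{\Lambda,X}(D)$.

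\emph{Reverse inclusion.} Let $b\in\oa{C}_{\mathrm{b}}(\Lambda)$ and set $c_m=\Theta_m(b_m)$, each supported on $\sphilbfrak{h}_{\Lambda}^{\otimes_{\txtsym}m}$. Define $D_k=\sum_{m\le k}E_{\sphilbfrak{h}_{\Lambda},m,k}(c_m)$ on $\sphilbfrak{h}_{\Lambda}^{\otimes_{\txtsym}k}$. I use the block identity from Section \ref{sec:paper-concrete-finite-mode-embeddings}: on the summand with exactly $k$ particles in $\sphilbfrak{h}_{\Lambda}$, the operator $\widetilde{E}_{m,n}(c_m)$ acts as $E_{\sphilbfrak{h}_{\Lambda},m,k}(c_m)\otimes 1$. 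This gives $\varpi_n(\Pi_n(b))\cong\bigoplus_{k\le n}D_k\otimes1$.

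The main point is boundedness of $(D_k)$, and it is free. $D_k\otimes 1$ is a block of $\varpi_n(\Pi_n(b))$, and that block is nonzero because $\sphilbfrak{h}_{X\setminus\Lambda}$ is infinite-dimensional. Hence $\sup_k\norm{D_k}\le\sup_n\norm{\Pi_n(b)}<\infty$. So $D=\diag^{-1}_{\sphilbfrak{h}_{\Lambda}}((D_k))$ lies in $\oa{M}(\oa{K}(\sphilbfrak{h}_{\Lambda})^{\gamma})$. The first inclusion, applied to $D$, recovers coefficients equal to $b$ by uniqueness of the recursion. Therefore $\Upsilon(b)=i_{\Lambda,X}(D)\in\oa{B}_{\Lambda}$.

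\emph{The criterion.} The map $\Upsilon$ is an isometric $\ast$-isomorphism (Theorem \ref{thm:paper-particle-coefficient-buchholz-isomorphism}). So for each $b$ the inner quantity $\sup_n\norm{\sum_{m\le n}E_{m,n}(a_m-b_m)}$ equals $\norm{\Upsilon(a)-\Upsilon(b)}$. Taking the infimum over $b\in\oa{C}_{\mathrm{b}}(\Lambda)$ and using the first part, this infimum is $\operatorname{dist}(\Upsilon(a),\oa{B}_{\Lambda})$.

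By the compatibility identities of Proposition \ref{prop:paper-tensor-factor-localization-identities}, $\Lambda\subset\Lambda'$ implies $\oa{B}_{\Lambda}\subset\oa{B}_{\Lambda'}$. Hence the distance is nonincreasing along $\Lambda\nearrow X$ and the limit exists. The limit is zero exactly when $\Upsilon(a)$ lies in the closure of the directed union. For $\Lambda=\emptyset$, $\oa{B}_{\emptyset}=\fldcmp1$ is contained in every nonempty $\oa{B}_{\Lambda}$, so it does not affect the closure. That closure is $\oa{B}_{\mathrm{ql}}$ by \eqref{eq:paper-basis-local-quasilocal-algebras}.
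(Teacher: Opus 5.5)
Your proposal is correct and follows the paper's proof essentially step for step. You obtain the forward inclusion from the zero-extended finite-mode coefficients $c_m$ with $b_m=\Theta_m^{-1}(c_m)$, and the reverse inclusion by rebuilding the blocks $D_k$ from $\Theta_m(b_m)$ with the bound $\norm{D_k}\leq\norm{\Pi_k(b)}$. You then read off the criterion from the isometry of $\Upsilon$ applied to the increasing union of the $\oa{B}_{\Lambda}$, and your remark that the distance is nonincreasing along $\Lambda\nearrow X$, so that the limit exists, makes explicit a point the paper leaves implicit.
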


\begin{proof}
Take $D=\seq{D_k}{k\in\monnat}\in\oa{M}(\oa{K}(\sphilbfrak{h}_{\Lambda})^{\gamma})$, and let
$\seq{c_m}{m\in\monnat}$ be its finite-mode particle-degree coefficients.
The global inclusion extends each $c_m$ by
zero to $\sphilbfrak{h}^{\otimes_{\txtsym}m}$, so for every $m\in\monnat$,
$$\begin{aligned}
c_m
&=
P_{\Lambda}^{\otimes_{\txtsym}m}c_mP_{\Lambda}^{\otimes_{\txtsym}m}.
\end{aligned}$$
Uniqueness of the particle-coefficient reconstruction gives coefficients $b_m$ of $\fun{i_{\Lambda,X}}{D}$ satisfying
$\fun{\Theta_m}{b_m}=c_m$.
Therefore,
$$\begin{aligned}
\oa{B}_{\Lambda}
&\subset
\fun{\Upsilon}{\oa{C}_{\mathrm{b}}(\Lambda)}.
\end{aligned}$$
Conversely, fix $b\in\oa{C}_{\mathrm{b}}(\Lambda)$, and define
$$\begin{aligned}
K_m
&=
\fnrestr{\fun{\Theta_m}{b_m}}{\sphilbfrak{h}_{\Lambda}^{\otimes_{\txtsym}m}}.
\end{aligned}$$
Then define the map
$$\begin{aligned}
\widetilde{E}_{\sphilbfrak{h}_{\Lambda},m,k}
\colon
\opspbddlin{\sphilbfrak{h}_{\Lambda}^{\otimes_{\txtsym}m}}
\to
\opspbddlin{\sphilbfrak{h}_{\Lambda}^{\otimes_{\txtsym}k}};\quad
\fun{\widetilde{E}_{\sphilbfrak{h}_{\Lambda},m,k}}{K}
&=
\sum_{\substack{S\subset\setone{1,\ldots,k}\\\abscard{S}=m}}K^{[S]}
\end{aligned}$$
and the elements $$\begin{aligned}
D_k
=
\fun{\pi_{\sphilbfrak{h}_{\Lambda},k}^{-1}}{\sum_{m=0}^{k}\fun{\widetilde{E}_{\sphilbfrak{h}_{\Lambda},m,k}}{K_m}}.
\end{aligned}$$
The comparison identity gives
$$\begin{aligned}
\norm{D_k}
&\leq
\norm{\fun{\Pi_k}{b}}
\leq
\norm{b}_{\mathrm{b}},
&
\fun{p_n\circ i_{\Lambda,X}}{\seq{D_k}{k\in\monnat}}
&=
\fun{p_n\circ\Upsilon}{b}.
\end{aligned}$$
Thus $\seq{D_k}{k\in\monnat}\in\oa{M}(\oa{K}(\sphilbfrak{h}_{\Lambda})^{\gamma})$ and the reverse inclusion follows.
Finally, the
particle-coefficient isometry gives
$$\begin{aligned}
\norm{a-b}_{\mathrm{b}}
&=
\sup_{n\in\monnat}\norm{\sum_{m=0}^{n}\fun{E_{m,n}}{a_m-b_m}}_{\oa{B}_n}.
\end{aligned}$$
Taking the distance from the increasing union of the local coefficient spaces proves
\eqref{eq:paper-particle-coefficient-quasilocality-criterion}.
\end{proof}

For \(d\in\semigrposint\) and a sequence \(\seq{f_j}{j\in\monnat}\) of unit vectors in \(\sphilbfrak{h}\), define the map into the Banach-space quotient and its entrywise matrix maps by \[\begin{aligned}
\Theta_{\mathrm{tail}}
\colon
\lpseq^1
\to
\oa{T}_{\mathrm{B}};\quad
\fun{\Theta_{\mathrm{tail}}}{\seq{\alpha_j}{j\in\monnat}}
&=
\sum_{j\in\monnat}\alpha_j\sqbk{A_{f_j}},
\\
\Theta_{\mathrm{tail},d}
\colon
\fun{M_d}{\lpseq^1}
\to
\fun{M_d}{\oa{T}_{\mathrm{B}}};\quad
\fun{\Theta_{\mathrm{tail},d}}{\sqbk{x_{k\ell}}}
&=
\sqbk{\fun{\Theta_{\mathrm{tail}}}{x_{k\ell}}}.
\end{aligned}\] For \(x_{k\ell}=\seq{x_{k\ell,j}}{j\in\monnat}\), equip the domain with \[\begin{aligned}
\norm{\sqbk{x_{k\ell}}}_{\lpseq^1\rbk{\fun{M_d}{\fldcmp}}}
&=
\sum_{j\in\monnat}\norm{\sqbk{x_{k\ell,j}}}_{\fun{M_d}{\fldcmp}},
\end{aligned}\] and the range with the quotient matrix norm \[\begin{aligned}
\norm{\sqbk{\sqbk{A_{k\ell}}}}_{\fun{M_d}{\oa{T}_{\mathrm{B}}}}
&=
\inf_{\sqbk{D_{k\ell}}\in\fun{M_d}{\oa{B}_{\mathrm{ql}}}}
\norm{\sqbk{A_{k\ell}-D_{k\ell}}}_{\fun{M_d}{\fun{\oa{B}}{\sphilbfrak{h}}}}.
\end{aligned}\]

\begin{prop}[Uniform matrix upper bound for the tail-sequence map]\label{prop:paper-tail-sequence-matrix-upper-bound}
For every $d\geq1$ and
$x=\sqbk{x_{k\ell}}_{1\leq k,\ell\leq d}\in\fun{M_d}{\lpseq^1}$, write
$$\begin{aligned}
x_{k\ell}
&=
\seq{x_{k\ell,j}}{j\in\monnat},
&
\Lambda_j
&=
\sqbk{x_{k\ell,j}}_{1\leq k,\ell\leq d}
\in
\fun{M_d}{\fldcmp}.
\end{aligned}$$
The tail map satisfies the estimate
$$\begin{aligned}
\norm{\fun{\Theta_{\mathrm{tail},d}}{x}}
&=
\norm{\sum_{j\in\monnat}\Lambda_j\otimes\sqbk{A_{f_j}}}
\leq
\sum_{j\in\monnat}\norm{\Lambda_j}\norm{A_{f_j}}
=
\norm{x}_{\lpseq^1\rbk{\fun{M_d}{\fldcmp}}}.
\end{aligned}$$
\end{prop}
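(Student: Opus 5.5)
The proof proceeds in three short steps: identify the matrix of classes as a sum of scalar matrices tensored with single classes, bound that sum term by term in the quotient matrix norm, and finally use $\norm{A_{f_j}}=1$ to land on the $\lpseq^1\rbk{\fun{M_d}{\fldcmp}}$ norm.

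\emph{Step 1: convergence and the tensor identity.} Each $A_{f_j}$ has norm $1$ by Proposition \ref{prop:paper-rank-one-particle-coefficient} and faithfulness of the comparison representation. The quotient map is contractive, so $\norm{\sqbk{A_{f_j}}}_{\oa{T}_{\mathrm{B}}}\leq 1$. Hence for every $x_{k\ell}\in\lpseq^1$ the series $\sum_j x_{k\ell,j}\sqbk{A_{f_j}}$ converges absolutely in the Banach space $\oa{T}_{\mathrm{B}}$, and $\Theta_{\mathrm{tail}}$ is well defined. Reading entrywise, the $(k,\ell)$ entry of $\fun{\Theta_{\mathrm{tail},d}}{x}$ is $\sum_j x_{k\ell,j}\sqbk{A_{f_j}}$. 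This is exactly the $(k,\ell)$ entry of $\sum_j\Lambda_j\otimes\sqbk{A_{f_j}}$, where $\Lambda\otimes\sqbk{A}$ denotes the matrix $\sqbk{\lambda_{k\ell}\sqbk{A}}$. So the first equality in the statement is a definition, once one checks that the series also converges in the quotient matrix norm; Step 2 supplies that check.

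\emph{Step 2: the single-term bound.} The key estimate is
\[
\norm{\Lambda\otimes\sqbk{A}}_{\fun{M_d}{\oa{T}_{\mathrm{B}}}}\leq\norm{\Lambda}\,\norm{A}
\]
for $\Lambda\in\fun{M_d}{\fldcmp}$ and $A\in\fun{\oa{B}}{\sphilbfrak{h}}$. To prove it, take the trivial representative $D_{k\ell}=0$ in the infimum defining the quotient matrix norm. This bounds the left side by $\norm{\sqbk{\lambda_{k\ell}A}}_{\fun{M_d}{\fun{\oa{B}}{\sphilbfrak{h}}}}$. Next, $\sqbk{\lambda_{k\ell}A}=\Lambda\otimes A$ in $\fun{M_d}{\fldcmp}\otimes\fun{\oa{B}}{\sphilbfrak{h}}\cong\fun{M_d}{\fun{\oa{B}}{\sphilbfrak{h}}}$. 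The $\oacstar$-norm on this spatial tensor product is a cross norm; concretely, in a faithful representation on $\sphilbfrak{H}$ the operator $\Lambda\otimes A$ on $\fldcmp^d\otimes\sphilbfrak{H}$ has norm $\norm{\Lambda}\norm{A}$. This gives the estimate.

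\emph{Step 3: summation.} The quotient matrix norm is a seminorm (indeed a norm) on $\fun{M_d}{\oa{T}_{\mathrm{B}}}$, because $\fun{M_d}{\oa{B}_{\mathrm{ql}}}$ is a closed subspace. Apply the triangle inequality to the partial sums over $0\leq j\leq J$. By Step 2 these partial sums are Cauchy, with tails controlled by $\sum_{j>J}\norm{\Lambda_j}\to0$, so they converge to $\fun{\Theta_{\mathrm{tail},d}}{x}$ entrywise and in norm. Passing to the limit gives the middle inequality $\leq\sum_j\norm{\Lambda_j}\norm{A_{f_j}}$. The final equality then follows from $\norm{A_{f_j}}=1$ together with the definition of $\norm{x}_{\lpseq^1\rbk{\fun{M_d}{\fldcmp}}}$.

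The only point needing care is Step 2, namely identifying the matrix norm of $\sqbk{\lambda_{k\ell}A}$ with $\norm{\Lambda}\norm{A}$. If one prefers to avoid the cross-norm property, a direct route works: embed $\fun{\oa{B}}{\sphilbfrak{h}}$ faithfully via $\oarepn_{\txtfock}$ (Corollary \ref{cor:paper-fock-cutoff-realization}) and compute the operator norm of $\Lambda\otimes\fun{\oarepn_{\txtfock}}{A}$ on $\fldcmp^d\otimes\fun{\spfockgamma_{\txtbsn}}{\sphilbfrak{h}}$.
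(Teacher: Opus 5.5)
Your proof is correct and takes essentially the same route as the paper: choose the zero representative in the quotient matrix norm, use $\norm{\Lambda_j\otimes A_{f_j}}=\norm{\Lambda_j}\,\norm{A_{f_j}}$, sum with the triangle inequality, and finish with $\norm{A_{f_j}}=1$. Your added check that the partial sums converge in the quotient matrix norm covers a point the paper passes over without comment.
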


\begin{proof}
The definitions of $\Theta_{\mathrm{tail},d}$ and $\Lambda_j$ give
$$\begin{aligned}
\fun{\Theta_{\mathrm{tail},d}}{x}
&=
\sum_{j\in\monnat}\Lambda_j\otimes\sqbk{A_{f_j}}.
\end{aligned}$$
For every $\sqbk{A_{k\ell}}\in\fun{M_d}{\fun{\oa{B}}{\sphilbfrak{h}}}$, choosing the zero representative in
the quotient-matrix norm gives
$$\begin{aligned}
\norm{\sqbk{\sqbk{A_{k\ell}}}}_{\fun{M_d}{\oa{T}_{\mathrm{B}}}}
&\leq
\norm{\sqbk{A_{k\ell}}}_{\fun{M_d}{\fun{\oa{B}}{\sphilbfrak{h}}}}.
\end{aligned}$$
Apply this inequality to every $\Lambda_j\otimes A_{f_j}$.
Since
$\norm{\Lambda_j\otimes A_{f_j}}=\norm{\Lambda_j}\norm{A_{f_j}}$ and $\norm{A_{f_j}}=1$,
$$\begin{aligned}
\norm{\fun{\Theta_{\mathrm{tail},d}}{x}}_{\fun{M_d}{\oa{T}_{\mathrm{B}}}}
\leq
\sum_{j\in\monnat}
\norm{\Lambda_j}_{\fun{M_d}{\fldcmp}}
\norm{\sqbk{A_{f_j}}}_{\oa{T}_{\mathrm{B}}}
\leq
\sum_{j\in\monnat}\norm{\Lambda_j}_{\fun{M_d}{\fldcmp}}
=
\norm{x}_{\lpseq^1\rbk{\fun{M_d}{\fldcmp}}}.
\end{aligned}$$
The series $\sum_{j\in\monnat}\norm{\Lambda_j}_{\fun{M_d}{\fldcmp}}$ converges by the definition of the domain norm.
\end{proof}

\begin{cor}[Failure of a uniform lower bound]\label{cor:paper-tail-sequence-lower-bound-counterexample}
For an arbitrary sequence of unit vectors $\seq{f_j}{j\in\monnat}$, the tail-sequence map need not be
injective and need not have a positive uniform lower bound.
\end{cor}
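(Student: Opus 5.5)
The statement asks only for existence of a bad sequence, so the plan is to exhibit one explicit sequence of unit vectors for which both failures are visible. The simplest choice is a constant sequence, $f_j=f$ for every $j\in\monnat$ with $f$ a fixed infinite-support unit vector. Then $\sqbk{A_{f_j}}=\sqbk{A_f}$ for all $j$, and the scalar tail map becomes
$$\fun{\Theta_{\mathrm{tail}}}{\seq{\alpha_j}{j\in\monnat}}=\rbk{\sum_{j\in\monnat}\alpha_j}\sqbk{A_f}.$$
This map depends only on the sum of the coefficients.

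First I would note non-injectivity. Take $\alpha=\delta_0-\delta_1\in\lpseq^1$. This is a nonzero element with $\norm{\alpha}_{\lpseq^1}=2$, and $\fun{\Theta_{\mathrm{tail}}}{\alpha}=\sqbk{A_f}-\sqbk{A_f}=0$. Hence $\Theta_{\mathrm{tail}}$ is not injective. For the matrix level, apply the same idea entrywise: take $x\in\fun{M_d}{\lpseq^1}$ with $\Lambda_0=-\Lambda_1=E_{11}$ and all other $\Lambda_j=0$. Then $\Theta_{\mathrm{tail},d}(x)=0$ while $\norm{x}_{\lpseq^1(\fun{M_d}{\fldcmp})}=2$.

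Next, failure of a positive uniform lower bound follows at once. Suppose there were $c>0$ with $\norm{\fun{\Theta_{\mathrm{tail},d}}{x}}\geq c\norm{x}$ for all $x$. The element above would give $0\geq 2c$, a contradiction. This holds for every $d\geq1$, and in particular for $d=1$.

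If one wants the example to use pairwise distinct vectors, so the degeneracy is not just repetition, take $f_j=\napiernum^{\imunit\theta_j}f$ with distinct phases $\theta_j$. Since $N_{\napiernum^{\imunit\theta}f}=N_f$, the one-mode resolvent satisfies $A_{\napiernum^{\imunit\theta}f}=A_f$, and the same argument applies unchanged. The only point needing any check is this invariance, which follows directly from the coefficient formula \eqref{eq:paper-rank-one-particle-coefficient}: the rank-one projections $P_{f^{\otimes m},f^{\otimes m}}$ are unchanged by a phase. I expect no real obstacle. At most one would cite Theorem \ref{thm:paper-rank-one-tail-geometry}, which gives $\norm{\sqbk{A_f}}=1/2\neq0$ and so shows the example is not vacuous, and the upper bound in Proposition \ref{prop:paper-tail-sequence-matrix-upper-bound}, which shows the maps are well defined.
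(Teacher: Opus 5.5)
Your proposal is correct and is essentially the paper's argument: the paper takes $f_0=f$, $f_1=\napiernum^{\imunit\theta}f$ and the kernel element $u=(1,-1,0,\ldots)$ with $\norm{u}_{\lpseq^1}=2$. It tacitly relies on the same phase invariance $A_{\napiernum^{\imunit\theta}f}=A_f$ that you verify from \eqref{eq:paper-rank-one-particle-coefficient}, and your constant-sequence variant and entrywise matrix-level element are harmless simplifications or extensions of that idea.
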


\begin{proof}
For an infinite-support unit vector $f$ and $0<\theta<2\pi$, choose
$f_0=f$, $f_1=\napiernum^{\imunit\theta}f$, and $u=(1,-1,0,\ldots)$.
Then
$$\begin{aligned}
\norm{u}_{\lpseq^1}
&=
2,
&
\fun{\Theta_{\mathrm{tail}}}{u}
&=
\sqbk{A_{f_0}}-\sqbk{A_{f_1}}
=
0.
\end{aligned}$$
\end{proof}

\subsubsection{Counterexamples for projectively germ-equivalent vectors}\label{sec:paper-projective-germ-counterexamples}

In this subsubsection, pairs of unit vectors whose components outside a finite basis subspace are proportional can define distinct quotient classes of one-mode resolvents. Asymptotic states also distinguish such classes when the external components agree exactly.

For infinite-support unit vectors \(f,g\in\sphilbfrak{h}\), call \(f\) and \(g\) projectively germ equivalent if there are \(\Lambda_0\in\setpowfin{X}\) and \(\lambda\in\fldcmp\setminus\setone{0}\) such that \[\begin{aligned}
\rbk{1-P_{\Lambda_0}}g
&=
\lambda\rbk{1-P_{\Lambda_0}}f.
\end{aligned}\]

\begin{thm}[Projectively germ-equivalent vectors can define different quotient classes]\label{thm:paper-projective-germ-counterexample}
There are projectively germ-equivalent unit vectors $f,g$ for which
$\sqbk{A_f}\neq\sqbk{A_g}$ in $\oa{T}_{\mathrm{B}}$.
\end{thm}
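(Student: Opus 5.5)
The plan is to build the pair explicitly so that the external parts of $f$ and $g$ point in the same direction but have different weights. Then we separate the quotient classes with the state-oscillation formula of Proposition~\ref{prop:paper-state-oscillation}, using escaping-particle states as in Section~\ref{sec:paper-escaping-particle-states}.

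\emph{Construction.} Fix $x_0\in X$ and choose a unit vector $h\in\sphilbfrak{h}$ with coordinate support exactly $X\setminus\setone{x_0}$, so $h\perp u_{x_0}$. Put
$$f=h,\qquad g=\tfrac{1}{\sqrt2}\rbk{u_{x_0}+h}.$$
Both are unit vectors of infinite support. With $\Lambda_0=\setone{x_0}$ one has $\rbk{1-P_{\Lambda_0}}g=\tfrac1{\sqrt2}\rbk{1-P_{\Lambda_0}}f$, so $f$ and $g$ are projectively germ equivalent with $\lambda=2^{-1/2}$. A phase multiple $g=\napiernum^{\imunit\theta}f$ would not do, because then $A_g=A_f$; the moduli of the external parts must really differ.

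\emph{Separating states.} For nonempty finite $\Lambda\ni x_0$, set $q_\Lambda=\norm{\rbk{1-P_\Lambda}h}^2>0$. Then $q_\Lambda\to0$, and the normalized external components of $f$ and $g$ coincide; call this common vector $e_\Lambda$. Their external weights are $p^f_\Lambda=q_\Lambda$ and $p^g_\Lambda=q_\Lambda/2$.
\begin{itemize}
\item Take $c=1$ and $n_\Lambda$ with $(n_\Lambda+1)q_\Lambda\to1$, and let $\psi_\Lambda$ be the vector state $A\mapsto\bkt{e_\Lambda^{\otimes n_\Lambda}}{\fun{p_{n_\Lambda}}{A}e_\Lambda^{\otimes n_\Lambda}}$.
\item The binomial computation \eqref{eq:paper-binomial-resolvent-expectation} applies verbatim with $p_\Lambda$ replaced by $p^f_\Lambda$, respectively $p^g_\Lambda$. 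This gives $\fun{\psi_\Lambda}{A_f}=\fun{\eta_{n_\Lambda}}{q_\Lambda}\to1-\napiernum^{-1}$.
\item Likewise $\fun{\psi_\Lambda}{A_g}=\fun{\eta_{n_\Lambda}}{q_\Lambda/2}\to2\rbk{1-\napiernum^{-1/2}}$.
\item Let $\psi$ be a weak-$\ast$ limit of a cofinal subnet (Banach--Alaoglu). Since the two scalar limits exist along the whole net, $\fun{\psi}{A_f-A_g}=1-\napiernum^{-1}-2+2\napiernum^{-1/2}$.
\item This number is nonzero, because $t\mapsto(1-\napiernum^{-t})/t$ is strictly decreasing.
\end{itemize}

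\emph{Comparison with the vacuum.} Exactly as for $\psi_c$ in Section~\ref{sec:paper-escaping-particle-states}, equation \eqref{eq:paper-local-element-external-only-block} shows that $\psi$ agrees with $\psi_{\mathrm{vac}}$ on every $\oa{B}_{\Lambda'}$, hence on $\oa{B}_{\mathrm{ql}}$. On the other hand, $\fun{\psi_{\mathrm{vac}}}{A_f}=\fun{\psi_{\mathrm{vac}}}{A_g}=1$, so $\fun{\psi_{\mathrm{vac}}}{A_f-A_g}=0$. Since $A_f-A_g$ is self-adjoint, \eqref{eq:paper-state-oscillation} yields
$$\fun{\topdist}{A_f-A_g,\oa{B}_{\mathrm{ql}}}\geq\tfrac12\abs{\fun{\psi}{A_f-A_g}}>0,$$
and therefore $\sqbk{A_f}\neq\sqbk{A_g}$.

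The main step to check carefully is the binomial evaluation of $\fun{\psi_\Lambda}{A_g}$. One must verify that in $e_\Lambda^{\otimes n}$ the $N_g$-occupation is binomial with parameter $\abs{\bkt{g}{e_\Lambda}}^2=q_\Lambda/2$. This holds because the number of particles in a mode $g$ within a product vector $e^{\otimes n}$ depends only on $\abs{\bkt{g}{e}}^2$. The rest is a direct transcription of the argument already given for $\psi_c$.
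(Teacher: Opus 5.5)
Your proof is correct and follows the paper's argument. The paper takes $f=\sqrt{1-b^2}\,e_0+bu$ and $g=\sqrt{1-\rho^2b^2}\,e_0+\rho bu$, and your pair is the boundary case $b=1$, $\rho=2^{-1/2}$, $e_0=u_{x_0}$, $u=h$, $c=1$; it is separated by the same escaping-particle limit state, with values $\eta(c)$ and $\eta(\rho^2c)$, which agrees with the vacuum on every local algebra. The only difference is cosmetic: you conclude through the distance bound of Proposition~\ref{prop:paper-state-oscillation}, whereas the paper notes directly that $\psi_{f,c}-\psi_{\mathrm{vac}}$ annihilates $\oa{B}_{\mathrm{ql}}$ but not $A_f-A_g$.
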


\begin{proof}
Choose a finite-support unit vector $e_0$ and an infinite-support unit vector $u\perp e_0$.
For
$0<b<1$ and $0<\rho<1$, put
$$\begin{aligned}
f
&=
\sqrt{1-b^2}e_0+bu,
&
g
&=
\sqrt{1-\rho^2b^2}e_0+\rho bu.
\end{aligned}$$
Outside a finite index set containing $\fun{\supp_{X}}{e_0}$, one has $g=\rho f$.
If $p_{\Lambda}=\norm{\rbk{1-P_{\Lambda}}f}^2$, then
$\norm{\rbk{1-P_{\Lambda}}g}^2=\rho^2p_{\Lambda}$, and the normalized vectors obtained from
$\rbk{1-P_{\Lambda}}f$ and $\rbk{1-P_{\Lambda}}g$ are equal.
Define the strictly decreasing function
$$\begin{aligned}
\eta
\colon
\set{c\in\fldreal}{c>0}
\to
\fldreal;\quad
\fun{\eta}{c}
=
\frac{1-\napiernum^{-c}}{c}
=
\int_0^1\napiernum^{-cs}\opdmsr{s}.
\end{aligned}$$
The derivative satisfies
$$\begin{aligned}
\fun{\eta'}{c}
=
-\int_0^1s\napiernum^{-cs}\opdmsr{s}
<
0.
\end{aligned}$$
Fix $c>0$.
For every nonempty finite $\Lambda$ containing $\fun{\supp_{X}}{e_0}$, set
$n_{\Lambda}=\max\setone{0,\lfloor c/p_{\Lambda}\rfloor-1}$.
Let $\oastate[\psi_{f,c,\Lambda}]$ be the state defined by the right-hand side of
\eqref{eq:paper-escaping-particle-state} with the present vectors $f$ and $e_{\Lambda}$.
The Banach--Alaoglu theorem provides a convergent subnet indexed by a directed set $I$.
For each $i\in I$, the associated $\Lambda_i$ is a finite subset of $X$ containing
$\fun{\supp_{X}}{e_0}$, and for every finite $\Lambda\supset\fun{\supp_{X}}{e_0}$, there is
$i_{\Lambda}\in I$ for which $\Lambda\subset \Lambda_i$ whenever $i\geq i_{\Lambda}$.
The net $\net{\oastate[\psi_{f,c,\Lambda_i}]}{i\in I}$ converges in the weak-$\ast$ topology.
Define the limit state by
$$\begin{aligned}
\oastate[\psi_{f,c}]
\colon
\fun{\oa{B}}{\sphilbfrak{h}}
\to
\fldcmp;\quad
\fun{\oastate[\psi_{f,c}]}{A}
&=
\lim_{i\in I}\fun{\oastate[\psi_{f,c,\Lambda_i}]}{A}.
\end{aligned}$$
Equation \eqref{eq:paper-escaping-particle-scaling} applies to the present $p_{\Lambda}$.
The common normalized external vector satisfies
$$\begin{aligned}
\abs{\bkt{f}{e_{\Lambda}}_{\sphilbfrak{h}}}^{2}
=
p_{\Lambda},
\quad
\abs{\bkt{g}{e_{\Lambda}}_{\sphilbfrak{h}}}^{2}
=
\rho^{2}p_{\Lambda}.
\end{aligned}$$
Consequently, the scaling assumption gives
$$\begin{aligned}
\rbk{n_{\Lambda}+1}p_{\Lambda}
\to
c,
\quad
\rbk{n_{\Lambda}+1}\rho^{2}p_{\Lambda}
\to
\rho^{2}c
\quad\rbk{\Lambda\nearrow X}.
\end{aligned}$$
The calculation in \eqref{eq:paper-binomial-resolvent-expectation} gives the evaluations
$$\begin{aligned}
\fun{\oastate[\psi_{f,c}]}{A_f}
&=
\fun{\eta}{c},
&
\fun{\oastate[\psi_{f,c}]}{A_g}
&=
\fun{\eta}{\rho^2c}.
\end{aligned}$$
The strict decrease of $\eta$ gives $\fun{\eta}{c}\neq\fun{\eta}{\rho^2c}$.
For every finite index set $\Lambda'$, equation \eqref{eq:paper-local-element-external-only-block} gives
$$\begin{aligned}
\fnrestr{\rbk{\oastate[\psi_{f,c}]-\oastate[\psi_{\mathrm{vac}}]}}{\oa{B}_{\Lambda'}}
&=
0.
\end{aligned}$$
The bounded difference functional therefore annihilates the norm closure $\oa{B}_{\mathrm{ql}}$, but its value
at $A_f-A_g$ is nonzero.
This proves the assertion.
\end{proof}

\begin{prop}[Reduction by a local unitary when $\abs{\lambda}=1$]\label{prop:paper-unit-modulus-germ-local-unitary-reduction}
Suppose $\emptyset\neq\Lambda_0\in\setpowfin{X}$ and
$$\begin{aligned}
\abs{\lambda}
&=
1,
&
\rbk{1-P_{\Lambda_0}}g
&=
\lambda\rbk{1-P_{\Lambda_0}}f.
\end{aligned}$$
There is a unitary $U_{\txtloc,\Lambda_0}$ on $\sphilbfrak{h}_{\Lambda_0}$ such that the unitary
$$\begin{aligned}
\widetilde{U}_{\Lambda_0}
\colon
\sphilbfrak{h}_{\Lambda_0}\oplus \sphilbfrak{h}_{\Lambda_0}^{\perp}
\to
\sphilbfrak{h}_{\Lambda_0}\oplus \sphilbfrak{h}_{\Lambda_0}^{\perp};\quad
\fun{\widetilde{U}_{\Lambda_0}}{\xi\oplus\eta}
&=
\fun{U_{\txtloc,\Lambda_0}}{\xi}\oplus\eta
\end{aligned}$$
satisfies $g=\lambda\widetilde{U}_{\Lambda_0}f$.
Define
$$\begin{aligned}
D_{\Lambda_0,k}
&=
\fun{\pi_{\sphilbfrak{h}_{\Lambda_0},k}^{-1}}{U_{\txtloc,\Lambda_0}^{\otimes_{\txtsym}k}},
&
\widehat{U}_{\Lambda_0}
&=
\fun{i_{\Lambda_0,X}}{\seq{D_{\Lambda_0,k}}{k\in\monnat}}
\in
\oa{B}_{\Lambda_0}.
\end{aligned}$$
The conjugated operator and its quotient class satisfy
\begin{equation}
\begin{aligned}
A_g
&=
\widehat{U}_{\Lambda_0}A_f\faadj{\widehat{U}_{\Lambda_0}},
&
\sqbk{A_f}=\sqbk{A_g}
&\iff
\commutator{\widehat{U}_{\Lambda_0}}{A_f}\in\oa{B}_{\mathrm{ql}}.
\end{aligned}
\label{eq:paper-unit-modulus-germ-commutator-reduction}
\end{equation}
\end{prop}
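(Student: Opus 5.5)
The plan has three steps: build the local unitary $U_{\txtloc,\Lambda_0}$ on $\sphilbfrak{h}_{\Lambda_0}$, realize its second quantization as a unitary $\widehat{U}_{\Lambda_0}\in\oa{B}_{\Lambda_0}$ that conjugates $A_f$ into $A_g$ in the Fock picture, and then read off the quotient criterion by algebra.

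\textbf{Step 1: the local unitary.} Write $f=f_{\mathrm{in}}\oplus f_{\mathrm{out}}$ and $g=g_{\mathrm{in}}\oplus g_{\mathrm{out}}$ in $\sphilbfrak{h}_{\Lambda_0}\oplus\sphilbfrak{h}_{\Lambda_0}^{\perp}$. The hypothesis gives $g_{\mathrm{out}}=\lambda f_{\mathrm{out}}$ with $\abs{\lambda}=1$. Hence $\norm{g_{\mathrm{out}}}=\norm{f_{\mathrm{out}}}$, and since both vectors are unit, $\norm{g_{\mathrm{in}}}=\norm{f_{\mathrm{in}}}$. Finite dimensionality then gives a unitary $U_{\txtloc,\Lambda_0}$ on $\sphilbfrak{h}_{\Lambda_0}$ with $U_{\txtloc,\Lambda_0}f_{\mathrm{in}}=\bar\lambda g_{\mathrm{in}}$. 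If $f_{\mathrm{in}}=0$, take $U_{\txtloc,\Lambda_0}=1$. With this choice $\lambda\widetilde{U}_{\Lambda_0}f=g_{\mathrm{in}}\oplus\lambda f_{\mathrm{out}}=g$.

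\textbf{Step 2: the local element.} The sequence $D_{\Lambda_0,k}$ is a bounded block family of unitaries, so it lies in $\oa{M}(\oa{K}(\sphilbfrak{h}_{\Lambda_0})^{\gamma})$ by Proposition \ref{prop:finite-mode-product}. Since $i_{\Lambda_0,X}$ is a unital $\ast$-homomorphism (Proposition \ref{prop:concrete-spatial-localization-system}), $\widehat{U}_{\Lambda_0}$ is a unitary in $\oa{B}_{\Lambda_0}$. By Corollary \ref{cor:paper-fock-local-algebras}, its Fock representative is $\faadj{U_{\sphilbfrak{h}_{\Lambda_0}}}\bigl(\Gamma(U_{\txtloc,\Lambda_0})\otimes1\bigr)U_{\sphilbfrak{h}_{\Lambda_0}}$. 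The key identification, and the step needing most care, is that this operator equals the second quantization $\Gamma(\widetilde{U}_{\Lambda_0})$, restricting to $\widetilde{U}_{\Lambda_0}^{\otimes_{\txtsym}n}$ on each $n$-particle space. I would check this on the decomposition unitaries $W_{\sphilbfrak{h}_{\Lambda_0},n}$. The operator $\widetilde{U}_{\Lambda_0}^{\otimes n}$ commutes with $P_n$ and acts as $U_{\txtloc,\Lambda_0}^{\otimes k}\otimes1$ on $\xi\otimes\eta$. The binomial normalizations are the same on both sides, so $W_{\sphilbfrak{h}_{\Lambda_0},n}$ intertwines the two operators.

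\textbf{Step 3: conjugation and the quotient criterion.} For a unitary $V$ on $\sphilbfrak{h}$, second quantization gives $\Gamma(V)a_{\txtfock}(h)\Gamma(V)^{\ast}=a_{\txtfock}(Vh)$. Therefore $\Gamma(V)N_h\Gamma(V)^{\ast}=N_{Vh}$, and $N_{\lambda h}=N_h$ because $\abs{\lambda}=1$. Taking resolvents gives $\Gamma(\widetilde{U}_{\Lambda_0})A_{\txtfock,f}\Gamma(\widetilde{U}_{\Lambda_0})^{\ast}=A_{\txtfock,g}$. Faithfulness of $\oarepn_{\txtfock}$ (Corollary \ref{cor:paper-fock-cutoff-realization}) then gives $A_g=\widehat{U}_{\Lambda_0}A_f\faadj{\widehat{U}_{\Lambda_0}}$. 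Consequently $A_g-A_f=\commutator{\widehat{U}_{\Lambda_0}}{A_f}\faadj{\widehat{U}_{\Lambda_0}}$. Since $\widehat{U}_{\Lambda_0}$ is unitary and lies in the subalgebra $\oa{B}_{\mathrm{ql}}$, right multiplication by it or its adjoint maps $\oa{B}_{\mathrm{ql}}$ onto itself. Hence $A_g-A_f\in\oa{B}_{\mathrm{ql}}$ if and only if $\commutator{\widehat{U}_{\Lambda_0}}{A_f}\in\oa{B}_{\mathrm{ql}}$, which is \eqref{eq:paper-unit-modulus-germ-commutator-reduction}.
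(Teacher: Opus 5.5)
Your proof is correct and follows essentially the same route as the paper. You use the same choice of local unitary carrying $P_{\Lambda_0}f$ to $\bar\lambda P_{\Lambda_0}g$, and you conjugate the one-mode number operator by the second quantization of $\widetilde{U}_{\Lambda_0}$; you check that this second quantization is the Fock image of $\widehat{U}_{\Lambda_0}$, which the paper only asserts. You then conclude, as the paper does, by faithfulness and the identity $A_g-A_f=[\widehat{U}_{\Lambda_0},A_f]\,\widehat{U}_{\Lambda_0}^{\ast}$ together with $\widehat{U}_{\Lambda_0}\in\mathcal{B}_{\mathrm{ql}}$.
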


\begin{proof}
Put $g'=\lambda^{-1}g$.
The external parts of $f$ and $g'$ agree, so
$\norm{P_{\Lambda_0}f}=\norm{P_{\Lambda_0}g'}$.
A unitary $U_{\txtloc,\Lambda_0}$ carrying the first local vector to the second therefore
exists.
If both local vectors vanish, choose $U_{\txtloc,\Lambda_0}=1_{\sphilbfrak{h}_{\Lambda_0}}$; otherwise extend the isometry between their
one-dimensional spans to a unitary of $\sphilbfrak{h}_{\Lambda_0}$.
In both cases $\widetilde{U}_{\Lambda_0}f=g'$ and hence $g=\lambda\widetilde{U}_{\Lambda_0}f$.
On every particle sector,
$$\begin{aligned}
\fnrestr{N_g}{\sphilbfrak{h}^{\otimes_{\txtsym}n}}
=
\widetilde{U}_{\Lambda_0}^{\otimes_{\txtsym}n}
\fnrestr{N_f}{\sphilbfrak{h}^{\otimes_{\txtsym}n}}
\faadjrbk{\widetilde{U}_{\Lambda_0}^{\otimes_{\txtsym}n}},
\quad
A_{\txtfock,g}
=
\fun{\oarepn_{\txtfock}}{\widehat{U}_{\Lambda_0}}A_{\txtfock,f}
\faadjrbk{\fun{\oarepn_{\txtfock}}{\widehat{U}_{\Lambda_0}}}.
\end{aligned}$$
Faithfulness gives the first identity in \eqref{eq:paper-unit-modulus-germ-commutator-reduction}.
Finally,
$$\begin{aligned}
A_g-A_f
&=
\commutator{\widehat{U}_{\Lambda_0}}{A_f}\faadj{\widehat{U}_{\Lambda_0}},
\end{aligned}$$
and $\widehat{U}_{\Lambda_0},\faadj{\widehat{U}_{\Lambda_0}}\in\oa{B}_{\mathrm{ql}}$, which proves the equivalence.
\end{proof}

\begin{cor}[Particle-coefficient criterion when $\abs{\lambda}=1$]\label{cor:paper-unit-modulus-germ-particle-coefficient-criterion}
Under Proposition \ref{prop:paper-unit-modulus-germ-local-unitary-reduction}, define
$$\begin{aligned}
d_m
&=
\fun{\Theta_m^{-1}}{\fun{a_m}{g}-\fun{a_m}{f}},
&
d
&=
\seq{d_m}{m\in\monnat}
\in
\oa{C}_{\mathrm{b}}(\sphilbfrak{h}).
\end{aligned}$$
Membership of the commutator in $\oa{B}_{\mathrm{ql}}$ is equivalent to
$$\begin{aligned}
\commutator{\widehat{U}_{\Lambda_0}}{A_f}\in\oa{B}_{\mathrm{ql}}
&\iff
\lim_{\Lambda\nearrow X}
\inf_{b\in\oa{C}_{\mathrm{b}}(\Lambda)}
\sup_{n\in\monnat}
\norm{\sum_{m=0}^{n}\fun{E_{m,n}}{d_m-b_m}}_{\oa{B}_n}
=
0.
\end{aligned}$$
\end{cor}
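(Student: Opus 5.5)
The plan is to reduce the corollary to Proposition \ref{prop:paper-unit-modulus-germ-local-unitary-reduction} and then to the coefficient criterion of Theorem \ref{thm:paper-particle-coefficient-quasilocality-criterion}, applied to the coefficient sequence $d$. The argument has three steps.

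\emph{Step 1: express the commutator through $A_g-A_f$.} By \eqref{eq:paper-unit-modulus-germ-commutator-reduction},
\[
A_g-A_f=\commutator{\widehat{U}_{\Lambda_0}}{A_f}\faadj{\widehat{U}_{\Lambda_0}},
\qquad
\commutator{\widehat{U}_{\Lambda_0}}{A_f}=\rbk{A_g-A_f}\widehat{U}_{\Lambda_0}.
\]
The element $\widehat{U}_{\Lambda_0}$ is a unitary lying in $\oa{B}_{\Lambda_0}\subset\oa{B}_{\mathrm{ql}}$, and $\oa{B}_{\mathrm{ql}}$ is a closed subalgebra. Hence $\commutator{\widehat{U}_{\Lambda_0}}{A_f}\in\oa{B}_{\mathrm{ql}}$ if and only if $A_g-A_f\in\oa{B}_{\mathrm{ql}}$.

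\emph{Step 2: identify $A_g-A_f$ with $\fun{\Upsilon}{d}$.} Recall $A_f=\fun{\Upsilon\circ\mathfrak{a}}{f}$ and $A_g=\fun{\Upsilon\circ\mathfrak{a}}{g}$. The coefficientwise difference $\fun{\mathfrak{a}}{g}-\fun{\mathfrak{a}}{f}$ has degree-$m$ entry $\fun{\Theta_m^{-1}}{\fun{a_m}{g}-\fun{a_m}{f}}=d_m$, so it equals $d$. The space $\oa{C}_{\mathrm{b}}(\sphilbfrak{h})$ is a vector space and $\Upsilon$ is linear by Theorem \ref{thm:paper-particle-coefficient-buchholz-isomorphism}. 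Therefore $d\in\oa{C}_{\mathrm{b}}(\sphilbfrak{h})$, as the statement asserts, and $\fun{\Upsilon}{d}=A_g-A_f$.

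\emph{Step 3: apply the coefficient criterion.} Theorem \ref{thm:paper-particle-coefficient-quasilocality-criterion} with $a=d$ states that $\fun{\Upsilon}{d}\in\oa{B}_{\mathrm{ql}}$ if and only if
\[
\lim_{\Lambda\nearrow X}\inf_{b\in\oa{C}_{\mathrm{b}}(\Lambda)}\sup_{n\in\monnat}\norm{\sum_{m=0}^{n}\fun{E_{m,n}}{d_m-b_m}}_{\oa{B}_n}=0.
\]
Combining this with Steps 1 and 2 gives the stated equivalence.

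The only point needing care is the limit over $\Lambda\nearrow X$. The quantity under the limit is the distance from $d$ to $\Upsilon^{-1}(\oa{B}_\Lambda)$, measured in $\norm{\mathord{\cdot}}_{\mathrm{b}}$. It is nonincreasing in $\Lambda$ by the compatibility \eqref{eq:paper-spatial-localization-system-compatibility}, so the limit exists and equals the distance from $d$ to $\Upsilon^{-1}(\oa{B}_{\mathrm{ql}})$. This is already used inside Theorem \ref{thm:paper-particle-coefficient-quasilocality-criterion}, so it requires no new argument here.
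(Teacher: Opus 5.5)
Your proposal is correct and follows the paper's proof. Both arguments identify $\fun{\Upsilon}{d}=A_g-A_f=\commutator{\widehat{U}_{\Lambda_0}}{A_f}\faadj{\widehat{U}_{\Lambda_0}}$, use that multiplication by the local unitary $\widehat{U}_{\Lambda_0}\in\oa{B}_{\mathrm{ql}}$ preserves $\oa{B}_{\mathrm{ql}}$, and then apply Theorem \ref{thm:paper-particle-coefficient-quasilocality-criterion} to $d$. The only difference is that you obtain $\fun{\Upsilon}{d}=A_g-A_f$ from linearity of $\Upsilon$ where the paper cites coefficient uniqueness, which amounts to the same thing.
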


\begin{proof}
For the map $\Upsilon$ in \eqref{eq:paper-main-particle-coefficient-isomorphism}, uniqueness of the
particle-coefficient reconstruction gives
$$\begin{aligned}
\fun{\Upsilon}{d}
&=
A_g-A_f
=
\commutator{\widehat{U}_{\Lambda_0}}{A_f}\faadj{\widehat{U}_{\Lambda_0}}.
\end{aligned}$$
Since multiplication by $\widehat{U}_{\Lambda_0}$ preserves $\oa{B}_{\mathrm{ql}}$, Theorem
\ref{thm:paper-particle-coefficient-quasilocality-criterion} applied to $d$ gives the equivalence.
\end{proof}

\begin{thm}[Unit-modulus projective-germ counterexample]\label{thm:paper-unit-modulus-projective-germ-counterexample}
There are infinite-support projectively germ-equivalent unit vectors $f,g$ with $\lambda=1$ but
$\sqbk{A_f}\neq\sqbk{A_g}$.
\end{thm}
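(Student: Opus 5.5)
The plan is to follow the strategy of Theorem \ref{thm:paper-projective-germ-counterexample}. We produce two states on $\fun{\oa{B}}{\sphilbfrak{h}}$ that agree on $\oa{B}_{\mathrm{ql}}$ but whose difference separates $A_f$ from $A_g$. Plain escaping-particle states cannot do this when $\lambda=1$: with exactly equal external parts, the binomial formula \eqref{eq:paper-binomial-resolvent-expectation} gives the same value $\fun{\eta}{c}$ for $A_f$ and $A_g$. So one fixed local particle will be added to detect the local difference.

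\emph{Choice of vectors.} Fix orthonormal finite-support vectors $e_1,e_2$ with supports in a finite $\Lambda_0$. Fix an infinite-support unit vector $u\perp\sphilbfrak{h}_{\Lambda_0}$, and $0<b<1$. Put
$$f=\sqrt{1-b^2}\,e_1+bu,\qquad g=\sqrt{1-b^2}\,e_2+bu.$$
Then $\rbk{1-P_{\Lambda_0}}g=\rbk{1-P_{\Lambda_0}}f$, so $\lambda=1$, and $\bkt{g}{e_1}=0$ while $\bkt{f}{e_1}=\sqrt{1-b^2}\neq0$. For finite $\Lambda\supset\Lambda_0$, keep $p_\Lambda=\norm{\rbk{1-P_\Lambda}f}^2$ and $e_\Lambda$ as in Section \ref{sec:paper-escaping-particle-states}; these are common to $f$ and $g$.

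\emph{States.} For $c>0$ put $n_\Lambda=\max\setone{0,\lfloor c/p_\Lambda\rfloor-1}$ and let $\zeta_\Lambda$ be the normalized symmetrization of $e_1\otimes e_\Lambda^{\otimes n_\Lambda}$ in $\sphilbfrak{h}^{\otimes_{\txtsym}(n_\Lambda+1)}$. Define $\psi_{c,\Lambda}(A)=\bkt{\zeta_\Lambda}{\fun{p_{n_\Lambda+1}}{A}\zeta_\Lambda}$, and pass to a weak-$\ast$ convergent cofinal subnet, as in the proof of Theorem \ref{thm:paper-rank-one-tail-geometry}, to obtain $\psi_c$. Take $D\in\oa{B}_{\Lambda'}$ with $\Lambda'\supset\Lambda_0$, and let $\Lambda\supset\Lambda'$. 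Then $\zeta_\Lambda$ lies in the summand of $W_{\sphilbfrak{h}_{\Lambda'},n}$ with exactly one particle in $\sphilbfrak{h}_{\Lambda'}$. By \eqref{eq:paper-global-local-inclusion-block-actions}, $\psi_{c,\Lambda}(D)=\bkt{e_1}{\fun{\pi_{\sphilbfrak{h}_{\Lambda'},1}}{D_1}e_1}$, which does not depend on $c$. Hence $\delta=\psi_c-\psi_{c'}$ annihilates every local algebra and therefore $\oa{B}_{\mathrm{ql}}$. It remains to show $\fun{\delta}{A_f}\neq\fun{\delta}{A_g}$ for some $c\neq c'$.

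\emph{Evaluation.} For $g$, the vectors $e_1$ and $e_\Lambda$ are orthogonal and $e_1\perp g$. So $Q_g$ has no matrix elements between $e_1$ and $e_\Lambda$, and the $N_g$-distribution in $\zeta_\Lambda$ is binomial$(n_\Lambda,p_\Lambda)$. Thus $\psi_c(A_g)=\fun{\eta}{c}$. For $f$, I will compute the generating function $\bkt{\zeta_\Lambda}{z^{N_f}\zeta_\Lambda}$ as the expectation of the second quantization of $1+(z-1)Q_f$. For the repeated vector $e_\Lambda^{\otimes n}$ together with the single vector $e_1$, this reduces to an explicit $2\times2$ expansion, a rank-one perturbation of the binomial generating function. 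The coupling term $(z-1)\overline{\bkt{f}{e_1}}\bkt{f}{e_\Lambda}$ enters $n$ times with weight $\sqrt{p_\Lambda}$. In the limit $(n+1)p_\Lambda\to c$, I expect
$$\psi_c(A_f)=\int_0^1 \napiernum^{-cs}\,\phi(c,s)\opdmsr{s},$$
where $\phi$ is explicit and depends on $a^2=1-b^2$ and on $cs$ in a non-constant way. One then shows that $c\mapsto\psi_c(A_f)-\fun{\eta}{c}$ is not constant, for instance by comparing its limits as $c\to0$ and $c\to\infty$, or by checking its derivative at $c\to0$. Choosing $c,c'$ with different values of this function gives $\fun{\delta}{A_f-A_g}\neq0$, hence $A_f-A_g\notin\oa{B}_{\mathrm{ql}}$.

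The main obstacle is the exact limiting computation of $\psi_c(A_f)$. The permanent-type cross terms between $e_1$ and $e_\Lambda$ must be tracked correctly, and one must verify that they give a genuinely $c$-dependent correction rather than a constant shift such as the naive guess $a^2\cdot$const. If the computation proves awkward, a fallback is to replace $e_1$ by $k$ local particles and compare $k=1$ with a large $k$, still at fixed $c$. Separately, the statement that $\delta$ annihilates $\oa{B}_{\mathrm{ql}}$ uses only \eqref{eq:paper-global-local-inclusion-block-actions} and continuity.
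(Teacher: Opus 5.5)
\textbf{Verdict.} Your route is correct but differs from the paper's. The one thing you left unproved, the limiting computation of $\psi_c(A_f)$, does go through.

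\textbf{The missing computation.} Put $a=\sqrt{1-b^2}$, $n=n_\Lambda$ and $p=p_\Lambda$. Then $\langle\zeta_\Lambda,z^{N_f}\zeta_\Lambda\rangle$ is $1/n!$ times the permanent of $\bigl(\langle x_i,(1+(z-1)Q_f)x_j\rangle\bigr)_{i,j}$, where $x_1=e_1$ and $x_2=\cdots=x_{n+1}=e_\Lambda$. The entries of this matrix are:
\begin{itemize}
\item $\alpha=1+(z-1)a^2$ at position $(1,1)$;
\item $\beta=(z-1)a\sqrt{p}$ in the rest of the first row and first column;
\item $\gamma=1+(z-1)p$ everywhere else.
\end{itemize}
Hence the generating function equals $\alpha\gamma^{n}+n\beta^{2}\gamma^{n-1}$. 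Integrating over $z\in[0,1]$ and letting $np\to c$ gives
\[
\psi_c(A_f)=\int_0^1\bigl(1-a^2s+a^2cs^2\bigr)\napiernum^{-cs}\,ds
=\eta(c)+a^2\,\frac{1-(1+c+c^2)\napiernum^{-c}}{c^2}.
\]
The correction term tends to $-a^2/2$ as $c\to0$ and to $0$ as $c\to\infty$. So $\psi_c-\psi_{c'}$ separates $A_f$ from $A_g$ for suitable $c\neq c'$.

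Two remarks on this step:
\begin{itemize}
\item The orthogonality $e_2\perp e_1$ is essential. With $g=-ae_1+bu$ the cross term $\beta^2$ would be the same for $f$ and $g$, and nothing would be separated.
\item $\psi_{c,\Lambda}(A_f)$ and $\psi_{c,\Lambda}(A_g)$ converge along the full net, and the local values are eventually constant. So any weak-$\ast$ cluster point will do, and no common subnet is needed.
\end{itemize}

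\textbf{Comparison with the paper.} The paper takes $f=ae_0+bu$ and $g=-ae_0+bu$ with $a\neq b$. It uses the product vectors $(h_\Lambda^{\pm})^{\otimes n_\Lambda}$, where $h_\Lambda^{\pm}=(n_\Lambda+1)^{-1/2}e_0\pm\bigl(n_\Lambda/(n_\Lambda+1)\bigr)^{1/2}u_\Lambda$. Local elements are block diagonal in the local particle number, so they cannot see the relative sign. Meanwhile the binomial parameters of $f$ and $g$ are interchanged between $s_\pm=(a\pm b)^2$. Every evaluation there is an instance of \eqref{eq:paper-binomial-resolvent-expectation}, so no permanent computation is needed.

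Your version varies the scale $c$ instead of a sign and keeps one local particle fixed. The local restriction then sees only the one-particle local block, so it is manifestly independent of $c$, and the argument works for every $0<b<1$. The price is a non-product vector and the resulting rank-one cross term in the generating function.
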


\begin{proof}
Choose a finite-support unit vector $e_0$ and an infinite-support unit vector $u\perp e_0$.
Take $a,b>0$
with $a^2+b^2=1$ and $a\neq b$, and put
$$\begin{aligned}
f
&=
ae_0+bu,
&
g
&=
-ae_0+bu.
\end{aligned}$$
They agree outside $\fun{\supp_{X}}{e_0}$.
For regions $\Lambda$ containing this support, set
$$\begin{aligned}
r_{\Lambda}
&=
\norm{\rbk{1-P_{\Lambda}}u}^2,
&
u_{\Lambda}
&=
\frac{\rbk{1-P_{\Lambda}}u}{\sqrt{r_{\Lambda}}}.
\end{aligned}$$
For every finite $\Lambda$ containing $\fun{\supp_{X}}{e_0}$, define
$n_{\Lambda}=\max\setone{0,\lfloor r_{\Lambda}^{-1}\rfloor-1}$.
The estimate
$$\begin{aligned}
0
&\leq
1-\rbk{n_{\Lambda}+1}r_{\Lambda}
<
r_{\Lambda}
\end{aligned}$$
holds for all sufficiently large $\Lambda$.
This estimate proves $\rbk{n_{\Lambda}+1}r_{\Lambda}\to1$ as $\Lambda\nearrow X$.
Define the unit vectors by
$$\begin{aligned}
h_{\Lambda}^{\pm}
&=
\frac{1}{\sqrt{n_{\Lambda}+1}}e_0
\mathbin{\pm}
\sqrt{\frac{n_{\Lambda}}{n_{\Lambda}+1}}u_{\Lambda}.
\end{aligned}$$
The orthogonality $e_0\perp u_{\Lambda}$ gives
$$\begin{aligned}
\norm{h_{\Lambda}^\pm}^{2}
&=
\frac{1}{n_{\Lambda}+1}\norm{e_0}^{2}
+
\frac{n_{\Lambda}}{n_{\Lambda}+1}\norm{u_{\Lambda}}^{2}
=
1.
\end{aligned}$$
Define the two state nets by
$$\begin{aligned}
\oastate[\psi_{\Lambda}^\pm]
\colon
\fun{\oa{B}}{\sphilbfrak{h}}
\to
\fldcmp;\quad
\fun{\oastate[\psi_{\Lambda}^\pm]}{A}
&=
\bkt{\rbk{h_{\Lambda}^\pm}^{\otimes n_{\Lambda}}}
{\fun{p_{n_{\Lambda}}}{A}\rbk{h_{\Lambda}^\pm}^{\otimes n_{\Lambda}}}.
\end{aligned}$$
For a fixed local index set $\Lambda'$, put $\Lambda''=\Lambda'\cup\fun{\supp_{X}}{e_0}$.
For $\Lambda\supset \Lambda''$,
$e_0\in \sphilbfrak{h}_{\Lambda''}$ and $u_{\Lambda}\in \sphilbfrak{h}_{\Lambda''}^\perp$.
The ambient local-particle decomposition is
$$\begin{aligned}
\sphilbfrak{h}^{\otimes_{\txtsym}n_{\Lambda}}
&\simeq
\bigoplus_{j=0}^{n_{\Lambda}}
\sphilbfrak{h}_{\Lambda''}^{\otimes_{\txtsym}j}
\otimes
\rbk{\sphilbfrak{h}_{\Lambda''}^\perp}^{\otimes_{\txtsym}\rbk{n_{\Lambda}-j}}.
\end{aligned}$$
Under this decomposition,
$$\begin{aligned}
\rbk{h_{\Lambda}^\pm}^{\otimes n_{\Lambda}}
&\simeq
\bigoplus_{j=0}^{n_{\Lambda}}
\binom{n_{\Lambda}}{j}^{1/2}
\rbk{n_{\Lambda}+1}^{-j/2}
\rbk{\frac{n_{\Lambda}}{n_{\Lambda}+1}}^{\rbk{n_{\Lambda}-j}/2}
\rbk{\mathbin{\pm}1}^{n_{\Lambda}-j}
e_0^{\otimes j}\otimes u_{\Lambda}^{\otimes\rbk{n_{\Lambda}-j}}.
\end{aligned}$$
Thus, for $D\in\oa{B}_{\Lambda'}\subset\oa{B}_{\Lambda''}$, both signs have the common expectation
\begin{equation}
\begin{aligned}
\fun{\oastate[\psi_{\pm,\Lambda}]}{D}
&=
\sum_{j=0}^{n_{\Lambda}}\binom{n_{\Lambda}}{j}
\rbk{n_{\Lambda}+1}^{-j}
\rbk{\frac{n_{\Lambda}}{n_{\Lambda}+1}}^{n_{\Lambda}-j}
\bkt{e_0^{\otimes j}}{\fun{p_j}{D}e_0^{\otimes j}}_{\sphilbfrak{h}_{\Lambda''}^{\otimes_{\txtsym}j}}.
\end{aligned}
\label{eq:paper-phase-paired-states-local-agreement}
\end{equation}
Apply the Banach--Alaoglu theorem to the product of the two state spaces.
There is a convergent subnet indexed by a directed set $I$.
For each $i\in I$, the associated $\Lambda_i$ is a finite subset of $X$ containing
$\fun{\supp_{X}}{e_0}$, and for every finite $\Lambda\supset\fun{\supp_{X}}{e_0}$, there is
$i_{\Lambda}\in I$ for which $\Lambda\subset \Lambda_i$ whenever $i\geq i_{\Lambda}$.
There are states $\oastate[\psi_{\txtnonneg}],\oastate[\psi_{\txtneg}]\in\oaspstate_{\fun{\oa{B}}{\sphilbfrak{h}}}$ satisfying
$$\begin{aligned}
\fun{\oastate[\psi_{\pm}]}{A}
&=
\lim_{i\in I}\fun{\oastate[\psi_{\pm,\Lambda_i}]}{A}
&&\rbk{A\in\fun{\oa{B}}{\sphilbfrak{h}}}.
\end{aligned}$$
Equation
\eqref{eq:paper-phase-paired-states-local-agreement} and density of the local union give
$$\begin{aligned}
\fnrestr{\oastate[\psi_{\txtnonneg}]}{\oa{B}_{\mathrm{ql}}}
&=
\fnrestr{\oastate[\psi_{\txtneg}]}{\oa{B}_{\mathrm{ql}}}.
\end{aligned}$$

The four binomial parameters are
$$\begin{aligned}
p_{\pm,f,\Lambda}
&=
\abs{\frac{a}{\sqrt{n_{\Lambda}+1}}
\mathbin{\pm}b\sqrt{\frac{n_{\Lambda}}{n_{\Lambda}+1}}\sqrt{r_{\Lambda}}}^2,
&
p_{\pm,g,\Lambda}
&=
\abs{-\frac{a}{\sqrt{n_{\Lambda}+1}}
\mathbin{\pm}b\sqrt{\frac{n_{\Lambda}}{n_{\Lambda}+1}}\sqrt{r_{\Lambda}}}^2.
\end{aligned}$$
With $s_\pm=\rbk{a\pm b}^2$,
$$\begin{aligned}
\rbk{n_{\Lambda}+1}p_{\pm,f,\Lambda}
&\to
s_\pm,
&
\rbk{n_{\Lambda}+1}p_{\pm,g,\Lambda}
&\to
s_\mp.
\end{aligned}$$
Because $a\neq b$, one has $s_+,s_->0$ and $s_+\neq s_-$.
Hence, for every sufficiently large $\Lambda$,
$$\begin{aligned}
p_{\txtnonneg,f,\Lambda},\,
p_{\txtneg,f,\Lambda},\,
p_{\txtnonneg,g,\Lambda},\,
p_{\txtneg,g,\Lambda}
&>
0.
\end{aligned}$$
All four parameters therefore lie in the domain of $\eta_{n_{\Lambda}}$.
Applying
\eqref{eq:paper-binomial-resolvent-expectation} for these regions and taking the chosen weak-$\ast$ limit yields
$$\begin{aligned}
\fun{\oastate[\psi_{\txtnonneg}]}{A_f}
&=
\fun{\eta}{s_+},
&
\fun{\oastate[\psi_{\txtnonneg}]}{A_g}
&=
\fun{\eta}{s_-},
\\
\fun{\oastate[\psi_{\txtneg}]}{A_f}
&=
\fun{\eta}{s_-},
&
\fun{\oastate[\psi_{\txtneg}]}{A_g}
&=
\fun{\eta}{s_+}.
\end{aligned}$$
Since $\eta$ is strictly decreasing and $s_+\neq s_-$,
$$\begin{aligned}
\fun{\rbk{\oastate[\psi_{\txtnonneg}]-\oastate[\psi_{\txtneg}]}}{A_f-A_g}
&=
2\rbk{\fun{\eta}{s_+}-\fun{\eta}{s_-}}
\neq
0.
\end{aligned}$$
The functional annihilates $\oa{B}_{\mathrm{ql}}$, so $A_f-A_g\notin\oa{B}_{\mathrm{ql}}$.
\end{proof}

\subsubsection{Graph specialization and finite-region verification}\label{sec:paper-graph-spatial-specialization}

In this subsubsection, the abstract finite-mode inclusions are specialized to the graph vertex set \(X=\Gamma\) and verified against Proposition \ref{prop:concrete-spatial-localization-system}.

For the vertex set \(\Gamma\) in \eqref{eq:paper-main-local-one-particle-subspace}, choose \(X=\Gamma\) and the standard basis \(u_x=\delta_x\) of \(\sphilbfrak{h}=\fun{\lpseq^{2}}{\Gamma}\). Equation \eqref{eq:paper-basis-localization-data} recovers the graph subspaces and projections in \eqref{eq:paper-main-local-one-particle-subspace} and \eqref{eq:paper-finite-region-one-particle-projection}. The basis-indexed local algebras then agree with \eqref{eq:paper-main-local-buchholz-algebra}, and the quasilocal algebra agrees with \eqref{eq:paper-main-quasilocal-algebra}.

\begin{proof}[Proof of Proposition \ref{prop:concrete-spatial-localization-system}]
For the standard basis, $\sphilbfrak{h}_{\Lambda''\setminus\Lambda}$ in
\eqref{eq:concrete-local-inclusion} is the orthogonal complement of
$\sphilbfrak{h}_{\Lambda}$ inside $\sphilbfrak{h}_{\Lambda''}$.
Thus the block formula \eqref{eq:concrete-local-inclusion} is the tensor-factor action
\eqref{eq:paper-finite-region-local-inclusion-block-action}, and the global map
\eqref{eq:concrete-global-local-inclusion} is \eqref{eq:paper-basis-global-local-inclusion}.
Proposition \ref{prop:paper-tensor-factor-localization-identities} gives composition compatibility and
commutation for disjoint regions.
Proposition \ref{prop:paper-basis-local-coordinate-density} gives the fixed-cutoff density and
coefficient-embedding compatibility.
\end{proof}

\subsection{Particle-cutoff representations, ordered K-theory, and Busby maps}\label{sec:paper-cutoff-representation-appendix}

In this appendix, the concrete cutoff representations, their ordered K-theory, and the finite-mode Busby maps are computed.

\subsubsection{Representations of a particle-cutoff quotient}\label{representations-of-a-particle-cutoff-quotient}

In this subsubsection, we classify the irreducible representations of each particle-cutoff quotient for every separable infinite-dimensional complex Hilbert space.

For a unital \(\oacstar\)-algebra \(\oa{A}\), an irreducible representation is a nonzero \(\ast\)-homomorphism \(\pi\colon\oa{A}\to\opspbddlin{\sphilbfrak{H}_{\pi}}\) whose only closed \(\fun{\pi}{\oa{A}}\)-invariant subspaces are \(\setone{0}\) and \(\sphilbfrak{H}_{\pi}\). Define \[\begin{aligned}
\fun{\operatorname{Prim}}{\oa{A}}
&=
\set{\Ker\pi}{\pi\text{ is an irreducible representation of $\oa{A}$}}.
\end{aligned}\] An ideal \(P\subset\oa{A}\) is called primitive when \(P\in\fun{\operatorname{Prim}}{\oa{A}}\). For the \(K\)-groups we use the conventions in \cite[Sections 4.1 and 8.1, and Theorem 9.3.1]{BruceBlackadar001}. Explicitly, \(\fun{K_0}{\oa{A}}\) is the Grothendieck group of stable Murray--von Neumann equivalence classes of projections in the matrix algebras \(\fun{M_d}{\oa{A}}\), and \(\fun{K_1}{\oa{A}}\) is the group of stable homotopy classes of unitaries in these matrix algebras, with block direct sum as the group operation. Here two projections \(p\) and \(q\) are Murray--von Neumann equivalent when, after adjoining zero matrix blocks if their matrix sizes differ, there is a matrix \(v\) over \(\oa{A}\) such that \(p=v\faadj{v}\) and \(q=\faadj{v}v\). Define \[\begin{aligned}
\fun{K_0}{\oa{A}}_+
&=
\set{\sqbk{p}\in\fun{K_0}{\oa{A}}}
{p\text{ is a projection in $\fun{M_d}{\oa{A}}$ for some $d\geq1$}},
&
\sqbk{1_{\oa{A}}}
&\in
\fun{K_0}{\oa{A}}_+.
\end{aligned}\]

\begin{cor}[Representations of a particle-cutoff quotient]\label{cor:finite-stage-consequences}
Fix $n\in\monnat$.
For $0\leq j\leq n$, use the maps $\varpi_j$ in
\eqref{eq:paper-main-fock-cutoff-identification} to define
$$\begin{aligned}
\pi_{n,j}
\colon
\oa{B}_n
\to
\opspbddlin{\sphilbfrak{h}^{\otimes_{\txtsym}j}};\quad
\fun{\pi_{n,j}}{A}
&=
\begin{cases}
\fun{\varpi_j\circ r_{j+1}\circ\cdots\circ r_n}{A},&j<n,\\
\fun{\varpi_n}{A},&j=n.
\end{cases}
\end{aligned}$$
These $n+1$ representations exhaust the irreducible representations of
$\oa{B}_n$.
Thus $\fun{\operatorname{Prim}}{\oa{B}_n}$ has $n+1$ points.
The center of $\oa{B}_n$ is $\fldcmp1$, and
$$\begin{aligned}
\fun{K_0}{\oa{B}_n}
&\simeq\ringratint^{n+1},
&
\fun{K_1}{\oa{B}_n}
&=0.
\end{aligned}$$
\end{cor}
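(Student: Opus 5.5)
The plan is to argue by induction on $n$, using the ideal chain of Theorem \ref{thm:paper-finite-stage-ideal-chain} together with the concrete identifications of Theorem \ref{thm:paper-abstract-concrete-comparison} and Theorem \ref{thm:paper-concrete-particle-coefficient-reconstruction-data}.

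\textbf{Irreducibility and inequivalence.} Each $\pi_{n,j}$ factors through the surjection $r_{j+1}\circ\cdots\circ r_n\colon\oa{B}_n\to\oa{B}_j$ followed by the isomorphism $\varpi_j$ onto $\fun{\rho_j}{\oaresolventalgebra(\sphilbfrak{h},\sigma)^{\gamma}}\subset\opspbddlin{\sphilbfrak{h}^{\otimes_{\txtsym}j}}$. By Corollary \ref{cor:paper-fock-degree-tail-ideals} this image contains $\fun{\widetilde{E}_{j,j}}{\fun{\schattencls}{\sphilbfrak{h}^{\otimes_{\txtsym}j}}}=\fun{\schattencls}{\sphilbfrak{h}^{\otimes_{\txtsym}j}}$. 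It is therefore irreducible, since its commutant lies in that of the compacts, which is $\fldcmp$. The kernel of $\pi_{n,j}$ is $\oa{J}_{j+1}^{(n)}$: by Proposition \ref{prop:paper-degree-tail-ideals} it equals $\Ker(r_{j+1}\circ\cdots\circ r_n)$ (interpreted as $0$ when $j=n$). These kernels are distinct, so the $n+1$ representations are pairwise inequivalent.

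\textbf{Exhaustion.} Let $\pi$ be irreducible. Its kernel is a closed ideal, hence equal to some $\oa{J}_{k}^{(n)}$ with $1\le k\le n+1$ by Theorem \ref{thm:paper-finite-stage-ideal-chain}. Set $j=k-1$, so that $\pi$ factors through $\oa{B}_n/\oa{J}_{j+1}^{(n)}\simeq\oa{B}_j$ and defines a faithful irreducible representation of $\oa{B}_j$.
\begin{itemize}
\item Faithfulness forces $\pi$ to be nonzero on the ideal $\oa{J}_j^{(j)}=\Ker r_j$, which is isomorphic to $\fun{\schattencls}{\sphilbfrak{h}^{\otimes_{\txtsym}j}}$ via $\varpi_j$ and $\Theta_j$.
\item The standard fact that an irreducible representation nonvanishing on an ideal $\oa{J}$ restricts to an irreducible representation of $\oa{J}$ applies. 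So $\pi$ restricts to an irreducible representation of the compacts, which is unitarily equivalent to the identity representation.
\item An irreducible representation is determined by its restriction to an ideal on which it is nonzero. Hence $\pi\simeq\varpi_j$, which is $\pi_{n,j}$ on $\oa{B}_n$.
\end{itemize}
This gives $n+1$ primitive ideals.

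\textbf{Center.} Let $z$ be central. Then $\pi_{n,n}(z)$ commutes with the compacts on $\sphilbfrak{h}^{\otimes_{\txtsym}n}$, so it is a scalar $\lambda$. Because $\pi_{n,n}=\varpi_n$ is faithful, $z=\lambda 1$.

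\textbf{$K$-theory.} I will induct using the six-term exact sequence of the extension $0\to\oa{J}_n^{(n)}\to\oa{B}_n\to\oa{B}_{n-1}\to0$, where $\oa{J}_n^{(n)}\simeq\fun{\schattencls}{\sphilbfrak{h}^{\otimes_{\txtsym}n}}$ has $K_0=\ringratint$ and $K_1=0$. The base case is $\oa{B}_0=\fldcmp$, which follows from $\oa{C}_0=\fldcmp$ and the direct sum in Definition \ref{def:paper-particle-coefficient-embedding-data}. The exact sequence reads
\[
0\to\fun{K_1}{\oa{B}_n}\to\fun{K_1}{\oa{B}_{n-1}}=0\xrightarrow{\partial}\ringratint\to\fun{K_0}{\oa{B}_n}\to\fun{K_0}{\oa{B}_{n-1}}\xrightarrow{\partial}0.
\]

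The \textbf{main obstacle} is showing that the index map $\partial\colon\fun{K_1}{\oa{B}_{n-1}}\to\fun{K_0}{\oa{J}_n^{(n)}}$ and the exponential map vanish. The index map vanishes automatically because $\fun{K_1}{\oa{B}_{n-1}}=0$ by induction. The exponential map lands in $\fun{K_1}{\oa{J}_n^{(n)}}=0$. This gives $\fun{K_1}{\oa{B}_n}=0$ and a short exact sequence $0\to\ringratint\to\fun{K_0}{\oa{B}_n}\to\ringratint^{n}\to0$. That sequence splits because $\ringratint^{n}$ is free. Therefore $\fun{K_0}{\oa{B}_n}\simeq\ringratint^{n+1}$.
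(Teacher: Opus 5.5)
Your proof is correct. The center and $K$-theory parts coincide with the paper's: the same six-term sequence, the same vanishing of $K_1$ of the compacts and of $\fun{K_1}{\oa{B}_{n-1}}$, and the same splitting over the free group $\ringratint^{n}$.

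The representation-theoretic part is organized differently.

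\textbf{The paper's route.} The paper inducts on $n$ and splits on whether an irreducible $\pi$ kills the top ideal $\Ker r_n\cong\fun{\schattencls}{\sphilbfrak{h}^{\otimes_{\txtsym}n}}$. If it does, $\pi$ factors through $r_n$ and the induction hypothesis applies. If it does not, the paper constructs the intertwining unitary $U$ by hand from the rank-one operators $P_{x,e}$ and uses essentiality of the compacts in $\opspbddlin{\sphilbfrak{h}^{\otimes_{\txtsym}n}}$. Inequivalence is obtained separately, from the restrictions to $\Ker r_n$ together with induction.

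\textbf{Your route.} You use the complete ideal lattice of Theorem \ref{thm:paper-finite-stage-ideal-chain} to read off $\Ker\pi=\oa{J}_{j+1}^{(n)}$ at once. You then pass to a faithful irreducible representation of $\oa{B}_j$ and invoke the standard uniqueness of irreducible representations nonvanishing on a compact ideal only once. Inequivalence and the count of $\fun{\operatorname{Prim}}{\oa{B}_n}$ follow from the distinctness of the kernels $\oa{J}_{j+1}^{(n)}$.

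Three small points to tidy:
\begin{itemize}
\item Justify distinctness of the kernels by $\oa{J}_k^{(n)}/\oa{J}_{k+1}^{(n)}\simeq\oa{C}_k(\sphilbfrak{h})\neq0$.
\item $\Ker\pi_{n,j}=\oa{J}_{j+1}^{(n)}$ follows from the definition \eqref{eq:paper-main-degree-tail-ideals} and injectivity of $\varpi_j$, not from Proposition \ref{prop:paper-degree-tail-ideals}.
\item For $j=0$, read $\oa{J}_0^{(0)}=\oa{B}_0=\fldcmp$ in place of the undefined $\Ker r_0$.
\end{itemize}

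\textbf{Comparison.} Your route is shorter and identifies the primitive ideals explicitly with the degree tails. However, it leans on Theorem \ref{thm:paper-finite-stage-ideal-chain}, which is itself an induction using essentiality and simplicity of the top kernel. The paper's argument needs only the concrete identification of $\Ker r_n$ with the compacts and is self-contained at the level of representations.
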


\begin{proof}
The assertion is immediate for $n=0$ because $\oa{B}_0=\fldcmp$.
Suppose $n\geq1$, assume the result through
$n-1$, and let
$\pi\colon\oa{B}_n\to\opspbddlin{\sphilbfrak{H}_{\pi}}$ be a nonzero irreducible representation.
Transport it through the comparison isomorphism $\varpi_n$ in
\eqref{eq:paper-main-fock-cutoff-identification} by defining
$$\begin{aligned}
\pi_{\mathrm{c}}
\colon
\fun{\varpi_n}{\oa{B}_n}
\to
\opspbddlin{\sphilbfrak{H}_{\pi}};\quad
\fun{\pi_{\mathrm{c}}}{T}
&=
\fun{\pi\circ\varpi_n^{-1}}{T}.
\end{aligned}$$
For the remainder of this proof and every $0\leq m\leq n$, use the maps $\varpi_m$ in
\eqref{eq:paper-main-fock-cutoff-identification} to identify $\oa{B}_m$ with
$\fun{\varpi_m}{\oa{B}_m}$, write $\pi$ for $\pi_{\mathrm{c}}$, write $r_m$ for
the map $\kappa_m$ in \eqref{eq:paper-main-fock-connecting-map} when $m\geq1$, and use the action defining $\pi_{m,n}$ in
Corollary \ref{cor:finite-stage-consequences}.
Under these stated identifications,
$\Ker r_n$ is the compact-operator ideal
$\fun{\schattencls}{\sphilbfrak{h}^{\otimes_{\txtsym}n}}$.

If $\fun{\pi}{\fun{\schattencls}{\sphilbfrak{h}^{\otimes_{\txtsym}n}}}=0$, the equality
$\Ker r_n=\fun{\schattencls}{\sphilbfrak{h}^{\otimes_{\txtsym}n}}$ and surjectivity of $r_n$ define a unique
irreducible representation
$$\begin{aligned}
\widetilde{\pi}
\colon
\oa{B}_{n-1}
\to
\opspbddlin{\sphilbfrak{H}_{\pi}};\quad
\fun{\widetilde{\pi}}{\fun{r_n}{A}}
&=\fun{\pi}{A}.
\end{aligned}$$
The induction hypothesis makes $\pi$ unitarily equivalent to one of $\pi_{n,j}$, $0\leq j<n$.

Suppose instead that the restriction of $\pi$ to the compact ideal is nonzero.
The closed linear span of the
compact ideal acting on $\sphilbfrak{H}_{\pi}$ is
$$\begin{aligned}
\spclosedlinspan\fun{\pi}{\fun{\schattencls}{\sphilbfrak{h}^{\otimes_{\txtsym}n}}}\sphilbfrak{H}_{\pi}
&=\sphilbfrak{H}_{\pi},
\end{aligned}$$
because the left side is a nonzero invariant closed subspace.
Fix a unit vector
$e\in\sphilbfrak{h}^{\otimes_{\txtsym}n}$.
For every
$x\in\sphilbfrak{h}^{\otimes_{\txtsym}n}$, define
$$\begin{aligned}
K_x
&=
P_{x,e}
\in
\fun{\schattencls}{\sphilbfrak{h}^{\otimes_{\txtsym}n}}.
\end{aligned}$$
Simplicity of the $\oacstar$-algebra
$\fun{\schattencls}{\sphilbfrak{h}^{\otimes_{\txtsym}n}}$ of compact operators gives
$\Ker\fnrestr{\pi}{\fun{\schattencls}{\sphilbfrak{h}^{\otimes_{\txtsym}n}}}=0$.
Choose a unit vector $\eta\in\fun{\pi}{K_e}\sphilbfrak{H}_{\pi}$.
For every $A\in\oa{B}_n$ and $x,y\in\sphilbfrak{h}^{\otimes_{\txtsym}n}$, one has
$\faadj{K_x}K_y=\bkt{x}{y}K_e$ and $AK_x=K_{Ax}$, and hence
$$\begin{aligned}
\bkt{\fun{\pi}{K_x}\eta}{\fun{\pi}{K_y}\eta}_{\sphilbfrak{H}_{\pi}}
=\bkt{x}{y}_{\sphilbfrak{h}^{\otimes_{\txtsym}n}},
\quad
\fun{\pi}{A}\fun{\pi}{K_x}\eta
=\fun{\pi}{AK_x}\eta
=\fun{\pi}{K_{Ax}}\eta.
\end{aligned}$$
The same identity with $\faadj{A}$ proves invariance of the closed span of these vectors, so irreducibility makes that
span all of $\sphilbfrak{H}_{\pi}$.
Thus the rule
$$\begin{aligned}
U
\colon
\sphilbfrak{H}_{\pi}
\to
\sphilbfrak{h}^{\otimes_{\txtsym}n};\quad
\fun{U}{\fun{\pi}{K_x}\eta}
&=x
\end{aligned}$$
extends uniquely to a unitary.
For $x,y\in\sphilbfrak{h}^{\otimes_{\txtsym}n}$,
$$\begin{aligned}
\fun{U\circ\pi}{K_y}\faadj{U}x
&=\fun{U\circ\pi}{K_yK_x}\eta
=\bkt{e}{x}\fun{U\circ\pi}{K_y}\eta
=\bkt{e}{x}y
=\fun{K_y}{x}.
\end{aligned}$$
Hence $U\fun{\pi}{K_y}\faadj{U}=K_y$.
Taking products and adjoints gives
$$\begin{aligned}
U\fun{\pi}{K_x\faadj{K_y}}\faadj{U}
&=K_x\faadj{K_y}.
\end{aligned}$$
The linear span of $K_x\faadj{K_y}=P_{x,y}$ is the finite-rank operators, which are norm dense in
the compact operators.
Therefore
$$\begin{aligned}
U\fun{\pi}{K}\faadj{U}
&=K
\end{aligned}$$
for every compact $K$.
For $A\in\oa{B}_n$ and compact $K$,
$$\begin{aligned}
\rbk{U\fun{\pi}{A}\faadj{U}-A}K
&=U\fun{\pi}{AK}\faadj{U}-AK
=0.
\end{aligned}$$
The compact operators form an essential ideal in
$\opspbddlin{\sphilbfrak{h}^{\otimes_{\txtsym}n}}$, hence
$U\fun{\pi}{A}\faadj{U}=A$.
Thus $\pi$ is equivalent to $\pi_{n,n}$.

For $j<n$, $\pi_{n,j}$ factors through $r_n$, whereas the defining representation satisfies
$$\begin{aligned}
\fnrestr{\pi_{n,j}}{\Ker r_n}
&=0,
&
\fnrestr{\pi_{n,n}}{\Ker r_n}
&=\fnrestr{\pi_{n,n}}{\fun{\schattencls}{\sphilbfrak{h}^{\otimes_{\txtsym}n}}}
\neq0.
\end{aligned}$$
For $0\leq i<j<n$,
$$\begin{aligned}
\pi_{i,n}
&=\pi_{i,n-1}\circ r_n,
&
\pi_{n,j}
&=\pi_{j,n-1}\circ r_n.
\end{aligned}$$
Surjectivity of $r_n$ and the induction hypothesis prove pairwise inequivalence and exhaustiveness.

If $z\in \fun{\oacenter}{\oa{B}_n}$, then in the faithful defining representation
$$\begin{aligned}
z
&\in\oacommutant{\fun{\schattencls}{\sphilbfrak{h}^{\otimes_{\txtsym}n}}}
=\fldcmp1.
\end{aligned}$$
Thus $\fun{\oacenter}{\oa{B}_n}=\fldcmp1$.

For $n\geq1$, apply the six-term exact sequence of
\cite[Theorem 9.3.1]{BruceBlackadar001} to
$$\begin{aligned}
0
\to\fun{\schattencls}{\sphilbfrak{h}^{\otimes_{\txtsym}n}}
\to\oa{B}_n
\to\oa{B}_{n-1}
\to0.
\end{aligned}$$
For this extension the cyclic exact diagram is
\begin{equation}
\begin{array}{ccccc}
\fun{K_0}{\fun{\schattencls}{\sphilbfrak{h}^{\otimes_{\txtsym}n}}}
&\to&\fun{K_0}{\oa{B}_n}&\to&\fun{K_0}{\oa{B}_{n-1}}\\
\uparrow\partial_1&&&&\downarrow\partial_0\\
\fun{K_1}{\oa{B}_{n-1}}&\leftarrow&\fun{K_1}{\oa{B}_n}&\leftarrow&
\fun{K_1}{\fun{\schattencls}{\sphilbfrak{h}^{\otimes_{\txtsym}n}}}.
\end{array}
\label{eq:paper-six-term-k-theory-sequence}
\end{equation}
Exactness means that at each of the six groups in \eqref{eq:paper-six-term-k-theory-sequence} the image of the incoming map is the kernel of the
outgoing map.
The compact-operator groups are
$\fun{K_0}{\fun{\schattencls}{\sphilbfrak{h}^{\otimes_{\txtsym}n}}}=\ringratint$, with the rank-one projection
as generator, and
$\fun{K_1}{\fun{\schattencls}{\sphilbfrak{h}^{\otimes_{\txtsym}n}}}=0$.
The induction hypothesis gives $\fun{K_1}{\oa{B}_{n-1}}=0$.
Hence the domain of $\partial_1$ and the group at the
lower right are zero.
Exactness at the remaining four positions gives
$$\begin{aligned}
0
\to\fun{K_1}{\oa{B}_n}
\to\fun{K_1}{\oa{B}_{n-1}}=0,
\quad
0
\to\ringratint
\to\fun{K_0}{\oa{B}_n}
\to\ringratint^n
\to0.
\end{aligned}$$
Exactness yields $\fun{K_1}{\oa{B}_n}=0$.
Since $\ringratint^n$ is free abelian, the second sequence splits, giving
$\fun{K_0}{\oa{B}_n}\simeq\ringratint^{n+1}$.
\end{proof}

A representation \(\pi\) of \(\oa{B}\) is said to factor through \(p_n\) if \(\pi=\pi_{n}\circ p_n\) for a representation \(\pi_{n}\) of \(\oa{B}_n\).

\begin{prop}[Global irreducible representations through finite coordinates]\label{prop:global-finite-coordinate-irreducible-representations}
For $j\in\monnat$, use the map $\varpi_j$ in \eqref{eq:paper-main-fock-cutoff-identification} to define
$$\begin{aligned}
\widehat{\pi}_j
\colon
\oa{B}
\to
\opspbddlin{\sphilbfrak{h}^{\otimes_{\txtsym}j}};\quad
\fun{\widehat{\pi}_j}{\seq{A_m}{m\in\monnat}}
&=\fun{\varpi_j}{A_j}.
\end{aligned}$$
Then $\widehat{\pi}_j$ is irreducible and
$$\begin{aligned}
\oadoublecommutant{\fun{\widehat{\pi}_j}{\oa{B}}}
&=\opspbddlin{\sphilbfrak{h}^{\otimes_{\txtsym}j}}.
\end{aligned}$$
For a primitive ideal $P$ of $\oa{B}$, the following two statements are equivalent:
\begin{enumerate}
\item $\Ker p_n\subset P$ for some $n\in\monnat$;
\item $P=\Ker\widehat{\pi}_j$ for some $j\in\monnat$.
\end{enumerate}
If $n$ is fixed in the assertion $\Ker p_n\subset P$, the integer $j$ is unique in $0\leq j\leq n$.
Hence the primitive
ideals of irreducible representations factoring through finite coordinates are exactly
$\seq{\Ker\widehat{\pi}_j}{j\in\monnat}$, and all generated von Neumann algebras are type $\txtI$ factors.
\end{prop}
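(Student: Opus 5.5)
The plan is to reduce everything to the finite-stage classification in Corollary \ref{cor:finite-stage-consequences}, using the surjectivity of the coordinate maps from Corollary \ref{cor:paper-particle-cutoff-coordinate-surjectivity}.

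\textbf{Irreducibility.} First, $\widehat{\pi}_j=\pi_{j,j}\circ p_j$, since $\pi_{j,j}=\varpi_j$. The coordinate map $p_j$ is surjective by Corollary \ref{cor:paper-particle-cutoff-coordinate-surjectivity}, and $\varpi_j$ is an isomorphism onto the concrete exact-particle algebra. Hence $\fun{\widehat{\pi}_j}{\oa{B}}=\fun{\varpi_j}{\oa{B}_j}$.

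For $j\geq1$, Theorem \ref{thm:paper-concrete-particle-coefficient-reconstruction-data} shows that this image contains $\fun{\schattencls}{\sphilbfrak{h}^{\otimes_{\txtsym}j}}$, the image of $\Ker r_j$. For $j=0$, the space $\sphilbfrak{h}^{\otimes_{\txtsym}0}=\fldcmp$ is trivial. In both cases the commutant is $\fldcmp1$, so $\widehat{\pi}_j$ is irreducible and the bicommutant is $\opspbddlin{\sphilbfrak{h}^{\otimes_{\txtsym}j}}$. This is a type $\txtI$ factor.

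\textbf{(2)$\Rightarrow$(1).} If $P=\Ker\widehat{\pi}_j$, then $\Ker p_j\subset P$ by the factorization $\widehat{\pi}_j=\pi_{j,j}\circ p_j$, so one takes $n=j$.

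\textbf{(1)$\Rightarrow$(2).} Let $P=\Ker\pi$ with $\pi$ irreducible and $\Ker p_n\subset P$.
\begin{enumerate}
\item Since $p_n$ is surjective, $\pi$ factors as $\pi=\pi'\circ p_n$ for a representation $\pi'$ of $\oa{B}_n$.
\item $\pi'$ is irreducible, because it has the same image as $\pi$.
\item Corollary \ref{cor:finite-stage-consequences} makes $\pi'$ unitarily equivalent to some $\pi_{n,j}$ with $0\leq j\leq n$.
\item The compatibility $r_{m}\circ p_m=p_{m-1}$ gives $\pi_{n,j}\circ p_n=\varpi_j\circ p_j=\widehat{\pi}_j$.
\item Therefore $P=\Ker\pi=\Ker\widehat{\pi}_j$.
\end{enumerate}

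\textbf{Uniqueness of $j$ for fixed $n$.} Suppose $\Ker\widehat{\pi}_i=\Ker\widehat{\pi}_j$ with $i,j\leq n$. Then $\Ker\pi_{n,i}=\Ker\pi_{n,j}$ after pulling back through the surjection $p_n$.

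The kernels of $\pi_{n,i}$ are distinct for distinct $i$, because $\fun{\operatorname{Prim}}{\oa{B}_n}$ has exactly $n+1$ points and these come from the $n+1$ inequivalent representations. This can be checked concretely with the chain of Theorem \ref{thm:paper-finite-stage-ideal-chain}. For $j<n$,
\[
\Ker\pi_{n,j}=\oa{J}_{j+1}^{(n)}.
\]
For $j=n$, $\Ker\pi_{n,n}=0=\oa{J}_{n+1}^{(n)}$, since $\varpi_n$ is injective. These ideals are pairwise distinct.

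This step is the main obstacle. The point to verify is that $\Ker\pi_{n,j}$ is exactly $\Ker(r_{j+1}\circ\cdots\circ r_n)$, which uses that $\varpi_j$ is injective. That is immediate from Theorem \ref{thm:paper-abstract-concrete-comparison}.

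The concluding "hence" clause then follows from the equivalence of (1) and (2) together with the bicommutant computation above.
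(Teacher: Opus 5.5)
Your proposal is correct and follows the same route as the paper. You factor $\widehat{\pi}_j=\pi_{j,j}\circ p_j$ using the surjectivity of $p_j$ from Corollary \ref{cor:paper-particle-cutoff-coordinate-surjectivity}, and you reduce both the equivalence of the two conditions and the uniqueness of $j$ to the finite-stage classification in Corollary \ref{cor:finite-stage-consequences}; your compact-operator argument for irreducibility and your identification $\Ker\pi_{n,j}=\oa{J}_{j+1}^{(n)}$ only spell out what the paper takes directly from that corollary.
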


\begin{proof}
The coordinate map $p_j$ defined in \eqref{eq:paper-main-coordinate-map} is surjective by Corollary
\ref{cor:paper-particle-cutoff-coordinate-surjectivity}.
Corollary \ref{cor:finite-stage-consequences} gives
$$\begin{aligned}
\widehat{\pi}_j
&=\pi_{j,j}\circ p_j,
&
\oacommutant{\fun{\widehat{\pi}_j}{\oa{B}}}
&=\fldcmp1.
\end{aligned}$$
Taking commutants in the equality $\oacommutant{\fun{\widehat{\pi}_j}{\oa{B}}}=\fldcmp1$ proves
$\oadoublecommutant{\fun{\widehat{\pi}_j}{\oa{B}}}=\opspbddlin{\sphilbfrak{h}^{\otimes_{\txtsym}j}}$.
If $\Ker p_n\subset P$, then
$\setquot{P}{\Ker p_n}$ is primitive in $\oa{B}_n$.
Corollary \ref{cor:finite-stage-consequences} gives a unique
$0\leq j\leq n$ such that
$$\begin{aligned}
P
&=\Ker\rbk{\pi_{n,j}\circ p_n}
=\Ker\widehat{\pi}_j.
\end{aligned}$$
Conversely, $\widehat{\pi}_j=\pi_{j,j}\circ p_j$ gives
$\Ker p_j\subset\Ker\widehat{\pi}_j$.
The double-commutant computation proves the type $\txtI$ assertion.
\end{proof}

\begin{thm}[Ordered $K_0$ of a particle-cutoff quotient]\label{thm:finite-stage-ordered-k0}
For $n\in\monnat$, give $\ringratint^{n+1}$ the lexicographic positive cone
$$\begin{aligned}
\rbk{\ringratint^{n+1}}_+
&=
\setone{0}\cup
\set{\rbk{z_0,\ldots,z_n}\in\ringratint^{n+1}}
{\text{the first nonzero coordinate is positive}}.
\end{aligned}$$
The group isomorphism in Corollary \ref{cor:finite-stage-consequences} can be chosen as
\begin{equation}
\begin{aligned}
\rbk{\fun{K_0}{\oa{B}_n},\fun{K_0}{\oa{B}_n}_+,\sqbk{1_{\oa{B}_n}}}
&\simeq
\rbk{\ringratint^{n+1},\rbk{\ringratint^{n+1}}_+,\rbk{1,0,\ldots,0}}.
\end{aligned}
\label{eq:finite-stage-ordered-k0}
\end{equation}
\end{thm}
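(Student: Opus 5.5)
We will prove \eqref{eq:finite-stage-ordered-k0} by induction on $n$, using the extensions $0\to\fun{\schattencls}{\sphilbfrak{h}^{\otimes_{\txtsym}n}}\to\oa{B}_n\xrightarrow{r_n}\oa{B}_{n-1}\to0$ already exploited in Corollary \ref{cor:finite-stage-consequences}. For $n=0$, $\oa{B}_0=\fldcmp$ and the claim is the standard ordered $K_0$ of $\fldcmp$. For the inductive step we fix the isomorphism coordinate by coordinate: the first $n$ coordinates are pulled back along $(r_n)_\ast\colon\fun{K_0}{\oa{B}_n}\to\fun{K_0}{\oa{B}_{n-1}}\simeq\ringratint^n$. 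The last coordinate is the class of a rank-one projection in $\Ker r_n\simeq\fun{\schattencls}{\sphilbfrak{h}^{\otimes_{\txtsym}n}}$, which is injective into $\fun{K_0}{\oa{B}_n}$ by the six-term sequence. We then choose a splitting compatible with positivity. Concretely, $\fun{K_0}{\oa{B}_{n-1}}_+$ is generated as a monoid by explicit projections lifting to $\oa{B}_n$. The vacuum-type projections $P^{(k)}=\fun{\widetilde{E}_{k,n}}{\cdot}$-built elements, e.g.\ the images of $e_{\sphilbfrak{f},k}$-type finite-mode projections, are natural candidates. We define the splitting by sending the $k$th standard generator of $\ringratint^n$ to the class of such a lifted projection. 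The unit $1_{\oa{B}_n}$ maps to $1_{\oa{B}_{n-1}}$, so by induction its image is $(1,0,\ldots,0)$ after the splitting is arranged with $[1_{\oa{B}_n}]$ as the lift of the first generator.

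The key step is the identification of the positive cone. For the inclusion $\fun{K_0}{\oa{B}_n}_+\subset(\ringratint^{n+1})_+$, we take a projection $p\in\fun{M_d}{\oa{B}_n}$. If $(r_n)_\ast[p]\neq0$, the induction hypothesis shows that its first nonzero coordinate among the first $n$ is positive, because $r_n(p)$ is a nonzero projection. If $r_n(p)=0$, then $p\in\fun{M_d}{\Ker r_n}$ is a finite-rank projection, so $[p]$ is a nonnegative multiple of the rank-one class. For the reverse inclusion, every lexicographically positive vector must be realized by a projection. When the last coordinate alone is nonzero, a finite-rank projection suffices. Otherwise, by induction, we realize the first $n$ coordinates by a projection $q\in\fun{M_d}{\oa{B}_{n-1}}$ and lift it to a projection $\widetilde q\in\fun{M_d}{\oa{B}_n}$. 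We then adjust the last coordinate by an arbitrary integer $z_n$, by adding a finite-rank subprojection orthogonal to $\widetilde q$ or by subtracting one lying under $\widetilde q$.

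The main obstacle is precisely this last step. Lifting $q$ to a projection uses $\fun{K_1}{\oa{B}_{n-1}}=0$, so $\partial_0[q]=0$, and a $K_0$-lift modulo $\fun{K_0}{\schattencls}$ exists. What is actually needed is that the lifted projection $\widetilde q$ is infinite-rank in the Fock representative on $\sphilbfrak{h}^{\otimes_{\txtsym}n}$. This allows arbitrarily large finite-rank subprojections to be removed without leaving the class. We would argue it as follows: since $q\neq0$ is a nonzero element of $\oa{B}_{n-1}$, its lift restricted via $\varpi_n$ is not compact, since compacts are exactly $\Ker r_n$. A noncompact projection has infinite rank, so for any $N$ there are rank-$N$ subprojections $e\le\widetilde q$, which are compact. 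Then $\widetilde q-e$ and $\widetilde q+e'$, with $e'\perp\widetilde q$ finite rank (possible since $1-\widetilde q$ is noncompact or we enlarge $d$), realize every integer shift of the last coordinate. For the lift itself, we would instead directly write down projections: taking $q$ as a direct sum of block projections built from $\widetilde{E}_{k,n-1}$ of rank-one coefficients and reusing the same coefficients in $\oa{B}_n$ via Theorem \ref{thm:paper-particle-coefficient-buchholz-isomorphism}. This avoids the abstract lifting problem, and the ordered isomorphism then follows.
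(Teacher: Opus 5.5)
Your inductive route through $0\to\mathcal{K}(\mathfrak{h}^{\otimes_{\mathrm{s}}n})\to\mathcal{B}_n\to\mathcal{B}_{n-1}\to0$ differs from the paper's, whose proof does not use this extension. The paper shows that $\mathcal{B}_n$ is the closure of finite-dimensional subalgebras $\mathcal{A}_{X,n}\cong\bigoplus_{k=0}^{n}\mathbb{B}(\mathfrak{f}_X^{\otimes_{\mathrm{s}}k})$. It defines Newton-rank coordinates by $r_k=\sum_{m\le k}z_m d_{X,k-m}$ and checks that they do not change when $X$ grows. It then reads off the cone: the first nonzero $z_j$ is a block rank, and the block ranks required for a lexicographically positive vector are polynomials in $|X|$ with positive leading coefficient. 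Several parts of your argument are fine: the six-term bookkeeping ($\iota_*$ injective, $(r_n)_*$ onto), the order-unit normalisation, and shifting the last coordinate by removing or adjoining finite-rank projections under an infinite-rank lift. You also need no ``positivity-compatible'' splitting: in your coordinates the cone is $\{x:(r_n)_*x>_{\mathrm{lex}}0\}\cup\{k[e]:k\ge0\}$ for every splitting. Two steps, however, do not go through as written.

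First, your proof of $K_0(\mathcal{B}_n)_+\subset(\mathbb{Z}^{n+1})_+$ skips the case $(r_n)_*[p]=0$ with $r_n(p)\neq0$. Excluding it requires that a nonzero projection over $\mathcal{B}_{n-1}$ has nonzero class, and the lexicographic shape of $K_0(\mathcal{B}_{n-1})_+$ does not give this. Add it to the induction hypothesis; it propagates. If $[p]=0$, then $[r_n(p)]=0$, so $r_n(p)=0$. Hence $p$ has finite rank $k$ and $[p]=k[e]$, and injectivity of $\iota_*$ forces $k=0$.

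Second, and more seriously, you do not actually supply the projection lift needed for the reverse inclusion. A $K_0$-preimage of $[q]$ is not yet known to be a projection class, since that is what is being proved. (Also, $\partial_0[q]=0$ because $K_1(\mathcal{K})=0$, not because $K_1(\mathcal{B}_{n-1})=0$.) Your concrete recipe of reusing the coefficients of $q$ in $\mathcal{B}_n$ fails because $s_n$ is not multiplicative. Take $n=2$ and $q=P_{\xi,\xi}\in\mathcal{K}(\mathfrak{h})\subset\mathcal{B}_1$. Then $\widetilde{E}_{1,2}(P_{\xi,\xi})$ is the one-mode number operator on $\mathfrak{h}^{\otimes_{\mathrm{s}}2}$, with eigenvalue $2$ on $\xi\otimes\xi$. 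A projection lift needs the top-degree correction $-\widetilde{E}_{2,2}(P_{\xi\otimes\xi,\xi\otimes\xi})$.

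The standard repair is functional calculus. Let $a=a^*\in M_d(\mathcal{B}_n)$ be any lift of $q$. Then $a^2-a\in M_d(\mathcal{K}(\mathfrak{h}^{\otimes_{\mathrm{s}}n}))$. In the faithful representation on $(\mathfrak{h}^{\otimes_{\mathrm{s}}n})^{\oplus d}$, the essential spectrum of $a$ therefore lies in $\{0,1\}$, and $\sigma(a)\cap(0,1)$ is countable. For $t\in(0,1)\setminus\sigma(a)$, the element $\widetilde{q}=\chi_{(t,\infty)}(a)$ is a continuous function of $a$. It is thus a projection in $M_d(\mathcal{B}_n)$ with $r_n(\widetilde{q})=q$, and it has infinite rank when $q\neq0$.

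With these repairs your argument is a valid alternative. It uses only that $\mathcal{B}_n$ is an extension of $\mathcal{B}_{n-1}$ by the compacts with $K_1(\mathcal{B}_{n-1})=0$, and needs neither the AF structure nor the finite-mode subalgebras. Its last coordinate, however, depends on a chosen splitting. The paper's route gets the group isomorphism independently of the six-term sequence and gives canonical coordinates tied to particle degree. It also reads the cone directly from block ranks in finite-dimensional approximants, so no lifting or finiteness issue arises.
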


\begin{proof}
Choose an orthonormal basis $\seq{\delta_x}{x\in I}$ indexed by a countably infinite set $I$ and identify
$\sphilbfrak{h}$ with $\fun{\lpseq^{2}}{I}$ through this basis.
For a nonempty finite $X\subset I$, put
$$\begin{aligned}
\sphilbfrak{f}_X
&=\splinspan_{\fldcmp}\set{\delta_x}{x\in X},
&
d_{X,k}
&=\dim_{\fldcmp}\sphilbfrak{f}_X^{\otimes_{\txtsym}k}
=\binom{\abscard{X}+k-1}{k}
&&\rbk{0\leq k\leq n}.
\end{aligned}$$
Using the decomposition unitary $W_{\sphilbfrak{f},n}$ defined in
\eqref{eq:paper-main-particle-fock-decomposition-unitary}, specialized to $\sphilbfrak{f}=\sphilbfrak{f}_X$, and the map $\varpi_n$
in \eqref{eq:paper-main-fock-cutoff-identification}, define
$$\begin{aligned}
L_{X,n}
&\colon
\opspbddlin{\fun{\spfockgamma_{\txtbsn}}{\sphilbfrak{f}_X}}^{\gamma}
\to
\oa{B}_n;
\quad
\fun{\varpi_n\circ L_{X,n}}{D}
&=
\faadj{W_{\sphilbfrak{f}_X,n}}
\rbk{\bigoplus_{k=0}^{n}
\fun{\pi_{\sphilbfrak{f}_X,k}}{D_k}\otimes
1_{\rbk{\sphilbfrak{f}_X^{\perp}}^{\otimes_{\txtsym}\rbk{n-k}}}}
W_{\sphilbfrak{f}_X,n}.
\end{aligned}$$
Define the finite-dimensional local cutoff algebra by
$$\begin{aligned}
\oa{A}_{X,n}
&=
\fun{L_{X,n}}{\opspbddlin{\fun{\spfockgamma_{\txtbsn}}{\sphilbfrak{f}_X}}^{\gamma}}
\subset
\oa{B}_n.
\end{aligned}$$
For $0\leq k\leq n$, its block representation is
$$\begin{aligned}
\chi_{X,k}
\colon
\oa{A}_{X,n}
\to
\opspbddlin{\sphilbfrak{f}_X^{\otimes_{\txtsym}k}};\quad
\fun{\chi_{X,k}}{\fun{L_{X,n}}{D}}
&=
\fun{\pi_{\sphilbfrak{f}_X,k}}{D_k}
\end{aligned}$$
for $D=\seq{D_j}{j\in\monnat}\in\opspbddlin{\fun{\spfockgamma_{\txtbsn}}{\sphilbfrak{f}_X}}^{\gamma}$.
This formula is well-defined because equality in $\oa{A}_{X,n}$ implies equality of every represented block
with index $0\leq k\leq n$.
For $s\in\semigrposint$, define the matrix amplification by
$$\begin{aligned}
\chi_{X,k}^{(s)}
\colon
\fun{M_s}{\oa{A}_{X,n}}
\to
\opspbddlin{\rbk{\sphilbfrak{f}_X^{\otimes_{\txtsym}k}}^{\oplus s}};\quad
\fun{\chi_{X,k}^{(s)}}{\sqbk{A_{ab}}_{1\leq a,b\leq s}}
&=
\sqbk{\fun{\chi_{X,k}}{A_{ab}}}_{1\leq a,b\leq s}.
\end{aligned}$$
The corresponding block-rank map is
$$\begin{aligned}
r_{X,k}^{(s)}
\colon
\set{p\in\fun{M_s}{\oa{A}_{X,n}}}{p=p^2=\faadj{p}}
\to
\monnat;\quad
\fun{r_{X,k}^{(s)}}{p}
&=
\fun{\larank}{\fun{\chi_{X,k}^{(s)}}{p}}.
\end{aligned}$$
For a projection $p\in\fun{M_s}{\oa{A}_{X,n}}$, define integers
$z_0(p),\ldots,z_n(p)$ recursively by
\begin{equation}
\begin{aligned}
\fun{r_{X,k}^{(s)}}{p}
&=\sum_{m=0}^{k}z_m(p)d_{X,k-m}
&&\rbk{0\leq k\leq n}.
\end{aligned}
\label{eq:particle-coefficient-rank-coordinates}
\end{equation}
They are unique because $d_{X,0}=1$.

If $X\subset Y$, then $\oa{A}_{X,n}\subset\oa{A}_{Y,n}$.
Define the inclusion and its matrix amplification by
$$\begin{aligned}
\iota_{X,Y,n}
&\colon
\oa{A}_{X,n}
\to
\oa{A}_{Y,n};
&
\fun{\iota_{X,Y,n}}{A}
&=
A,
\\
\iota_{s,X,Y,n}
&\colon
\fun{M_s}{\oa{A}_{X,n}}
\to
\fun{M_s}{\oa{A}_{Y,n}};
&
\fun{\iota_{s,X,Y,n}}{\sqbk{A_{ab}}_{1\leq a,b\leq s}}
&=
\sqbk{\fun{\iota_{X,Y,n}}{A_{ab}}}_{1\leq a,b\leq s}.
\end{aligned}$$
For a projection $p\in\fun{M_s}{\oa{A}_{X,n}}$, define
$$\begin{aligned}
r_{s,Y,k}
\colon
\set{p\in\fun{M_s}{\oa{A}_{X,n}}}{p=p^2=\faadj{p}}
\to
\monnat.
\end{aligned}$$
For $0\leq k\leq n$, it is given by
$$\begin{aligned}
\fun{r_{s,Y,k}}{p}
&=
\fun{\larank}{
\fun{\chi_{s,Y,k}\circ\iota_{s,X,Y,n}}{p}}.
\end{aligned}$$
The symmetric-tensor decomposition
$\sphilbfrak{f}_Y=\sphilbfrak{f}_X\oplus \sphilbfrak{f}_{Y\setminus X}$ gives
$$\begin{aligned}
\fun{r_{s,Y,k}}{p}
&=\sum_{j=0}^{k}\fun{r_{s,X,j}}{p}d_{Y\setminus X,k-j}.
\end{aligned}$$
The dimension generating functions are
$$\begin{aligned}
\sum_{k\in\monnat}d_{X,k}t^k
&=\rbk{1-t}^{-\abscard{X}},
&
\sum_{k\in\monnat}d_{Y\setminus X,k}t^k
&=\rbk{1-t}^{-\abscard{Y\setminus X}}.
\end{aligned}$$
Substitution of \eqref{eq:particle-coefficient-rank-coordinates} yields
$$\begin{aligned}
\sum_{k=0}^{n}\fun{r_{s,Y,k}}{p}t^k
&=
\rbk{\sum_{m=0}^{n}z_m(p)t^m}
\rbk{1-t}^{-\abscard{Y}}
\pmod{t^{n+1}}.
\end{aligned}$$
Thus $z_0(p),\ldots,z_n(p)$ are unchanged when the support is enlarged from $X$ to $Y$.

Each particle-degree coefficient is compact.
Approximating its finitely many matrix entries by finite-rank
operators and then approximating their range and adjoint-range vectors by finite-support vectors gives
$$\begin{aligned}
\oa{B}_n
&=
\gtclos{\bigcup_{\substack{X\subset I\\0<\abscard{X}<\infty}}\oa{A}_{X,n}}^{\norm{\mathord{\cdot}}_{\oa{B}_n}}.
\end{aligned}$$
Thus $\oa{B}_n$ is the inductive limit of the finite-dimensional unital algebras $\oa{A}_{X,n}$.
Ordered $K_0$, including the distinguished order unit, commutes with this inductive limit.

For every finite nonempty $X$, the block representations give a $\ast$-isomorphism
$$\begin{aligned}
\oa{A}_{X,n}
&\simeq
\bigoplus_{k=0}^{n}\opspbddlin{\sphilbfrak{f}_X^{\otimes_{\txtsym}k}}.
\end{aligned}$$
Hence the block-rank vector identifies $\fun{K_0}{\oa{A}_{X,n}}$ with $\ringratint^{n+1}$.
The triangular relation \eqref{eq:particle-coefficient-rank-coordinates} has diagonal entries
$d_{X,0}=1$, so the Newton-rank vector
$$\begin{aligned}
\rbk{z_0,\ldots,z_n}
&\in
\ringratint^{n+1}
\end{aligned}$$
is another integral coordinate system on $\fun{K_0}{\oa{A}_{X,n}}$.
The equality of $z_0(p),\ldots,z_n(p)$ under the inclusion $X\subset Y$ shows that every connecting map
$\fun{K_0}{\iota_{X,Y,n}}$ is the identity in these coordinates.
Therefore the inductive-limit map
$$\begin{aligned}
\operatorname{NewtRank}_n
\colon
\fun{K_0}{\oa{B}_n}
\to
\ringratint^{n+1}
\end{aligned}$$
is a group isomorphism.

It remains to identify the positive cone.
If a nonzero positive class is represented by a projection
$p\in\fun{M_s}{\oa{A}_{X,n}}$ and $z_j(p)$ is its first nonzero Newton-rank coordinate, then
$$\begin{aligned}
z_j(p)
&=
\fun{r_{s,X,j}}{p}
>
0.
\end{aligned}$$
Conversely, suppose $z_j>0$ is the first nonzero coordinate of
$z=(z_0,\ldots,z_n)$.
For every $k\geq j$, the required block rank is
$$\begin{aligned}
\sum_{m=0}^{k}z_m d_{X,k-m}
&=
z_j\binom{\abscard{X}+k-j-1}{k-j}
+
\sum_{m=j+1}^{k}z_m\binom{\abscard{X}+k-m-1}{k-m}.
\end{aligned}$$
This polynomial in $\abscard{X}$ has degree $k-j$ and positive leading coefficient.
Since only finitely many
$k\leq n$ occur, all the required block ranks are nonnegative for sufficiently large $X$.
After a matrix amplification, projections of precisely these ranks exist in every full matrix block of
$\oa{A}_{X,n}$.
Their direct sum represents $z$ by
\eqref{eq:particle-coefficient-rank-coordinates}.
Thus the positive cone is exactly the lexicographic cone.
Finally, the unit has block ranks $d_{X,k}$, so its Newton-rank vector is $(1,0,\ldots,0)$.
This proves \eqref{eq:finite-stage-ordered-k0}.
\end{proof}

\subsubsection{Concrete Busby map of a particle-cutoff extension}\label{concrete-busby-map-of-a-particle-cutoff-extension}

In this subsubsection, we compute the Busby map of the represented particle-cutoff extension on concrete particle blocks.

Fix \(n\in\semigrposint\). For every \(0\leq m\leq n\), use the map \(\varpi_m\) in \eqref{eq:paper-main-fock-cutoff-identification} to identify the abstract quotient with its represented image, and use \(\oa{B}_m\) in this subsection for \[\begin{aligned}
\oa{B}_m
&=
\fun{\varpi_m}{\oa{B}_m(\sphilbfrak{h})}
=
\fun{\rho_m}{\oaresolventalgebra(\sphilbfrak{h},\sigma)^{\gamma}}.
\end{aligned}\] For \(1\leq m\leq n\), the connecting map \(\kappa_m\) defined in \eqref{eq:paper-main-fock-connecting-map} maps \(\oa{B}_m\) to \(\oa{B}_{m-1}\) in these represented algebras. Define \[\begin{aligned}
\oa{Q}_n
&=
\setquot{\opspbddlin{\sphilbfrak{h}^{\otimes_{\txtsym}n}}}
{\fun{\schattencls}{\sphilbfrak{h}^{\otimes_{\txtsym}n}}},
\end{aligned}\] and define the quotient map by \[\begin{aligned}
q_n
\colon
\opspbddlin{\sphilbfrak{h}^{\otimes_{\txtsym}n}}
\to
\oa{Q}_n;\quad
\fun{q_n}{T}
&=T+\fun{\schattencls}{\sphilbfrak{h}^{\otimes_{\txtsym}n}}.
\end{aligned}\] Thus \(\oa{Q}_n\) is, by definition, the Calkin \(\oacstar\)-algebra of \(\sphilbfrak{h}^{\otimes_{\txtsym}n}\). The concrete particle-coefficient decomposition in \eqref{eq:paper-concrete-particle-coefficient-decomposition} gives, for every \(a\in\oa{B}_{n-1}\), unique operators \[\begin{aligned}
a
=
\sum_{m=0}^{n-1}\fun{\widetilde{E}_{m,n-1}}{a_m},
\quad
a_m
\in
\fun{\schattencls}{\sphilbfrak{h}^{\otimes_{\txtsym}m}}
\quad\rbk{0\leq m\leq n-1}.
\end{aligned}\] Define the concrete coordinate map \[\begin{aligned}
\widetilde{\Pi}_{n-1}
\colon
\bigoplus_{m=0}^{n-1}
\fun{\schattencls}{\sphilbfrak{h}^{\otimes_{\txtsym}m}}
\to
\oa{B}_{n-1};\quad
\fun{\widetilde{\Pi}_{n-1}}{\rbk{a_0,\ldots,a_{n-1}}}
&=
\sum_{m=0}^{n-1}\fun{\widetilde{E}_{m,n-1}}{a_m}.
\end{aligned}\] The subset-sum formula and the maximum norm on the finite direct sum give \[\begin{aligned}
\norm{
\fun{\widetilde{\Pi}_{n-1}}{\rbk{a_0,\ldots,a_{n-1}}}}
\leq
\sum_{m=0}^{n-1}
\binom{n-1}{m}\norm{a_m}
\leq
2^{n-1}
\max_{0\leq m\leq n-1}\norm{a_m}.
\end{aligned}\] The direct-sum identity in \eqref{eq:paper-concrete-particle-coefficient-decomposition} therefore makes \(\widetilde{\Pi}_{n-1}\) a bounded bijection. Its inverse has closed graph. Indeed, let \(\seq{a_j}{j\in\semigrposint}\) be a sequence in \(\oa{B}_{n-1}\) and let \(a\in\oa{B}_{n-1}\) and \(c\) belong to the finite direct sum. If \[\begin{aligned}
\lim_{j\to\infty}a_j
&=
a,
&
\lim_{j\to\infty}\fun{\widetilde{\Pi}_{n-1}^{-1}}{a_j}
&=
c,
\end{aligned}\] continuity of \(\widetilde{\Pi}_{n-1}\) gives \[\begin{aligned}
\fun{\widetilde{\Pi}_{n-1}}{c}
&=
\lim_{j\to\infty}
\fun{\widetilde{\Pi}_{n-1}}{
\fun{\widetilde{\Pi}_{n-1}^{-1}}{a_j}}
=
\lim_{j\to\infty}a_j
=
a.
\end{aligned}\] Injectivity gives \(c=\fun{\widetilde{\Pi}_{n-1}^{-1}}{a}\). The closed graph theorem therefore gives \[\begin{aligned}
\norm{\widetilde{\Pi}_{n-1}^{-1}}
&<
\infty.
\end{aligned}\] Define the bounded linear map \[\begin{aligned}
s_n
\colon
\oa{B}_{n-1}
\to
\oa{B}_n;\quad
\fun{s_n}{a}
&=\sum_{m=0}^{n-1}\fun{\widetilde{E}_{m,n}}{a_m}.
\end{aligned}\] For \(0\leq m\leq n\) and \(K\in\fun{\schattencls}{\sphilbfrak{h}^{\otimes_{\txtsym}m}}\), the subset-sum formula gives \[\begin{aligned}
\norm{\fun{\widetilde{E}_{m,n}}{K}}
&\leq
\binom{n}{m}\norm{K}.
\end{aligned}\] The section map obeys the norm estimate \[\begin{aligned}
\norm{\fun{s_n}{a}}
\leq
\sum_{m=0}^{n-1}
\binom{n}{m}\norm{a_m}
\leq
\rbk{2^n-1}
\norm{\fun{\widetilde{\Pi}_{n-1}^{-1}}{a}}
\leq
\rbk{2^n-1}
\norm{\widetilde{\Pi}_{n-1}^{-1}}
\norm{a}.
\end{aligned}\] Thus \(s_n\) is bounded. It is unital and adjoint preserving, but need not be multiplicative. Equations \eqref{eq:paper-concrete-particle-coefficient-decomposition} and \eqref{eq:paper-concrete-deletion-actions} give the short exact sequence \[\begin{aligned}
0
&\to
\fun{\schattencls}{\sphilbfrak{h}^{\otimes_{\txtsym}n}}
\to
\oa{B}_n
\to
\oa{B}_{n-1}
\to
0.
\end{aligned}\] The quotient map in this sequence is \(\kappa_n\), and the inclusion map is \[\begin{aligned}
\iota_n
\colon
\fun{\schattencls}{\sphilbfrak{h}^{\otimes_{\txtsym}n}}
\to
\oa{B}_n;\quad
\fun{\iota_n}{K}
&=
\fun{\widetilde{E}_{n,n}}{K}
=
K.
\end{aligned}\] Definition \ref{def:busby-map} applied to this essential extension gives \[\begin{aligned}
\tau_n
\colon
\oa{B}_{n-1}
\to
\oa{Q}_n;\quad
\fun{\tau_n}{a}
&=\fun{q_n\circ s_n}{a}.
\end{aligned}\] Two injective unital \(\ast\)-homomorphisms \(\phi,\psi\colon\oa{B}_{n-1}\to\oa{Q}_n\) are unitarily equivalent if a unitary \(V\in\oa{Q}_n\) satisfies \(\fun{\phi}{a}=V\fun{\psi}{a}\faadj{V}\) for every \(a\in\oa{B}_{n-1}\). In this paper, the extension invariant associated with the short exact sequence means the class of \(\tau_n\) in \[\begin{aligned}
\setquot{
\set{\phi\colon\oa{B}_{n-1}\to\oa{Q}_n}
{\phi\text{ is an injective unital $\ast$-homomorphism}}}
{\text{unitary equivalence}}.
\end{aligned}\]

\begin{cor}[Concrete pullback of the particle-cutoff extension]\label{cor:particle-cutoff-busby-pullback}
The map $\tau_n$ is an injective unital $\ast$-homomorphism, and
\begin{equation}
\begin{aligned}
\oa{B}_n
&\simeq
\set{\rbk{T,a}\in
\opspbddlin{\sphilbfrak{h}^{\otimes_{\txtsym}n}}\oplus\oa{B}_{n-1}}
{\fun{q_n}{T}=\fun{\tau_n}{a}}.
\end{aligned}
\label{eq:particle-cutoff-busby-pullback}
\end{equation}
\end{cor}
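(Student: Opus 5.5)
The plan is to reduce the corollary to the abstract reconstruction result, Proposition \ref{prop:busby-reconstruction}. Two identifications then do the work. First, the concrete multiplier algebra of $\fun{\schattencls}{\sphilbfrak{h}^{\otimes_{\txtsym}n}}$ is $\opspbddlin{\sphilbfrak{h}^{\otimes_{\txtsym}n}}$. Second, the abstract Busby map of Definition \ref{def:busby-map} agrees with the concrete formula $\tau_n=q_n\circ s_n$.

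\emph{Step 1: essentiality.} The sequence $0\to\fun{\schattencls}{\sphilbfrak{h}^{\otimes_{\txtsym}n}}\to\oa{B}_n\xrightarrow{\kappa_n}\oa{B}_{n-1}\to0$ is short exact by \eqref{eq:paper-concrete-particle-coefficient-decomposition} and \eqref{eq:paper-concrete-deletion-actions}. It is essential by the argument in the proof of Theorem \ref{thm:paper-concrete-particle-coefficient-reconstruction-data}: every nonzero ideal of the concrete algebra meets the compact operators through a rank-one operator $P_{A\xi,A\xi}$.

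\emph{Step 2: multiplier algebra.} Next I identify $\oa{M}(\fun{\schattencls}{\sphilbfrak{H}})$ with $\opspbddlin{\sphilbfrak{H}}$, where $\sphilbfrak{H}=\sphilbfrak{h}^{\otimes_{\txtsym}n}$, via $T\mapsto(L_T,R_T)$. Injectivity holds because the compacts act nondegenerately. Surjectivity is the standard construction: for a double centralizer $(L,R)$, fix a unit vector $e$ and set $T x=L(P_{x,e})e$. One checks that $L=L_T$ and $R=R_T$, using $P_{x,e}$, the identities of Definition \ref{def:multiplier-algebra}, and density of finite-rank operators. Under this identification $\lambda_{\oa{J}}$ is the inclusion of compacts, so the corona algebra of Definition \ref{def:corona-algebra} is the Calkin algebra $\oa{Q}_n$ with quotient map $q_n$. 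The extension action $\sigma_{\oa{E}}$ of Lemma \ref{lem:busby-extension-multiplier-action} becomes the identity inclusion $\oa{B}_n\subset\opspbddlin{\sphilbfrak{H}}$. This uses $\oa{B}_n=\fun{\rho_n}{\oaresolventalgebra(\sphilbfrak{h},\sigma)^{\gamma}}$, acting on $\sphilbfrak{H}$, with $\iota_n(K)=K$.

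\emph{Step 3: the Busby map.} By \eqref{eq:paper-concrete-deletion-actions}, $s_n(a)\in\oa{B}_n$ satisfies $\kappa_n(s_n(a))=\sum_{m<n}\widetilde{E}_{m,n-1}(a_m)=a$, so $s_n(a)$ is a lift of $a$. Definition \ref{def:busby-map} therefore gives $\tau_{\oa{E}}(a)=q_n(\sigma_{\oa{E}}(s_n(a)))=q_n(s_n(a))$, which is $\tau_n$. Well-definedness independent of the lift is already part of that definition, because two lifts differ by an element of $\Ker\kappa_n$, which is compact. The map $\tau_n$ is unital since $s_n(1)=\widetilde{E}_{0,n}(1)=1$. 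It is a $\ast$-homomorphism because $s_n(ab)-s_n(a)s_n(b)\in\Ker\kappa_n$ is compact. Injectivity and the pullback isomorphism \eqref{eq:particle-cutoff-busby-pullback} then follow directly from Proposition \ref{prop:busby-reconstruction}, with $\Xi(e)=(e,\kappa_n(e))$.

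The main obstacle is Step 2, the identification of the multiplier algebra with $\opspbddlin{\sphilbfrak{H}}$. It is standard but is not proved earlier in the paper. The key point to verify carefully is that the operator $T$ built from $L$ is bounded and independent of the choice of $e$. Boundedness follows from $\norm{L(P_{x,e})e}\le\norm{L}\norm{x}$. Independence of $e$ follows from $L(P_{x,e})=L(P_{x,e'})P_{e',e}$, which gives $L(P_{x,e})e=L(P_{x,e'})e'$.
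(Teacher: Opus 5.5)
Your proof is correct, but it takes a different route from the paper.

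\textbf{The paper's route.} The paper's proof never invokes the abstract Busby machinery. It takes $\tau_n=q_n\circ s_n$ as given and checks everything inside $\mathbb{B}(\mathfrak{h}^{\otimes_{\mathrm{s}}n})$, using only two facts: $\kappa_n\circ s_n=\mathrm{id}$ and $\operatorname{Ker}\kappa_n=\mathbb{K}(\mathfrak{h}^{\otimes_{\mathrm{s}}n})\subset\mathcal{B}_n$.
\begin{itemize}
\item Multiplicativity comes from compactness of the defect $s_n(a)s_n(b)-s_n(ab)$.
\item Injectivity holds because $\tau_n(a)=0$ makes $s_n(a)$ compact, so $a=\kappa_n(s_n(a))=0$.
\item The pullback isomorphism is $T\mapsto(T,\kappa_n(T))$, with inverse $(T,a)\mapsto T$. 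Surjectivity holds because $q_n(T)=\tau_n(a)$ forces $T-s_n(a)$ to be compact, so $T\in\mathcal{B}_n$ and $\kappa_n(T)=a$.
\end{itemize}
Neither essentiality nor any multiplier algebra is needed. For a concrete algebra containing the compacts, essentiality is automatic anyway.

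\textbf{Your route.} You identify $\mathcal{M}(\mathbb{K}(\mathcal{H}))$ with $\mathbb{B}(\mathcal{H})$. You then show that the abstract Busby map of Definition~\ref{def:busby-map} equals $q_n\circ s_n$, using $s_n(a)$ as the lift. Injectivity and the pullback then follow from Proposition~\ref{prop:busby-reconstruction}.

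\textbf{Comparison.} Your route costs a proof of the standard multiplier identification, which the paper never supplies. In exchange, it justifies the paper's unproved sentence that Definition~\ref{def:busby-map} applied to this extension gives $\tau_n=q_n\circ s_n$. The concrete $\tau_n$ then becomes literally the abstract Busby invariant, and the corollary is an instance of the general reconstruction theorem rather than a parallel computation.

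Your sketch of the identification is sound: $x\mapsto L(P_{x,e})e$ is bounded, $L(P_{x,y})=L(P_{x,e})P_{e,y}$, and $R=R_T$ follows from $K_1L(K_2)=R(K_1)K_2$. Independence of $e$ is not actually needed, since one fixed unit vector suffices.
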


\begin{proof}
For $a\in\oa{B}_{n-1}$ with the concrete particle-degree coefficients from
\eqref{eq:paper-concrete-particle-coefficient-decomposition},
equation \eqref{eq:paper-concrete-deletion-actions} gives
$$\begin{aligned}
\fun{\kappa_n\circ s_n}{a}
=
\sum_{m=0}^{n-1}
\fun{\kappa_n\circ\widetilde{E}_{m,n}}{a_m}
=
\sum_{m=0}^{n-1}
\fun{\widetilde{E}_{m,n-1}}{a_m}
=
a.
\end{aligned}$$
For $a,b\in\oa{B}_{n-1}$, set
$D_{a,b}=\fun{s_n}{a}\fun{s_n}{b}-\fun{s_n}{ab}$.
The multiplicativity defect satisfies
$$\begin{aligned}
\fun{\kappa_n}{D_{a,b}}
&=ab-ab=0.
\end{aligned}$$
The kernel of $\kappa_n$ is the compact ideal, so $D_{a,b}$ is compact and
$$\begin{aligned}
\fun{\tau_n}{ab}
&=\fun{q_n}{D_{a,b}+\fun{s_n}{ab}}
=\fun{\tau_n}{a}\fun{\tau_n}{b}.
\end{aligned}$$
The unit has the coefficient tuple $\rbk{1,0,\ldots,0}$, while the coefficients of $\faadj{a}$ are
$\rbk{\faadj{a_0},\ldots,\faadj{a_{n-1}}}$.
The section is unital and adjoint preserving:
$$\begin{aligned}
\fun{s_n}{1}
&=
\fun{\widetilde{E}_{0,n}}{1}
=
1,
\\
\fun{s_n}{\faadj{a}}
&=
\sum_{m=0}^{n-1}
\fun{\widetilde{E}_{m,n}}{\faadj{a_m}}
=
\sum_{m=0}^{n-1}
\faadjrbk{\fun{\widetilde{E}_{m,n}}{a_m}}
=
\faadjrbk{\fun{s_n}{a}}.
\end{aligned}$$
If $\fun{\tau_n}{a}=0$, then $\fun{s_n}{a}$ is compact and hence has zero image under $\kappa_n$.
Since $\fun{\kappa_n\circ s_n}{a}=a$, we obtain $a=0$.

Define the comparison map by
\begin{equation}
\begin{aligned}
\Xi_n
\colon
\oa{B}_n
\to
\set{\rbk{T,a}\in
\opspbddlin{\sphilbfrak{h}^{\otimes_{\txtsym}n}}\oplus\oa{B}_{n-1}}
{\fun{q_n}{T}=\fun{\tau_n}{a}};\quad
\fun{\Xi_n}{T}
&=\rbk{T,\fun{\kappa_n}{T}}.
\end{aligned}
\label{eq:particle-cutoff-pullback-comparison-map}
\end{equation}
For $T\in\oa{B}_n$,
$$\begin{aligned}
\fun{\kappa_n}{T-\fun{s_n\circ\kappa_n}{T}}
&=
\fun{\kappa_n}{T}
-\fun{\kappa_n\circ s_n\circ\kappa_n}{T}
=
0,
\\
T-\fun{s_n\circ\kappa_n}{T}
&\in
\Ker\kappa_n
=
\fun{\schattencls}{\sphilbfrak{h}^{\otimes_{\txtsym}n}},
\\
\fun{q_n}{T}
&=
\fun{q_n\circ s_n\circ\kappa_n}{T}
=
\fun{\tau_n\circ\kappa_n}{T}.
\end{aligned}$$
Thus the map $\Xi_n$ in \eqref{eq:particle-cutoff-pullback-comparison-map} takes values in the pullback in
\eqref{eq:particle-cutoff-busby-pullback}.
Conversely, if $\fun{q_n}{T}=\fun{\tau_n}{a}$, then $T-\fun{s_n}{a}$ is compact and
$$\begin{aligned}
T
&=\rbk{T-\fun{s_n}{a}}+\fun{s_n}{a}
\in\fun{\schattencls}{\sphilbfrak{h}^{\otimes_{\txtsym}n}}+\oa{B}_n
=\oa{B}_n.
\end{aligned}$$
The compact-difference calculation shows that the first-coordinate projection of the pullback takes values in
$\oa{B}_n$, and
$$\begin{aligned}
\fun{\kappa_n}{T}
&=\fun{\kappa_n}{T-\fun{s_n}{a}}+\fun{\kappa_n\circ s_n}{a}
=0+a=a.
\end{aligned}$$
Consequently the map $\Xi_n$ in \eqref{eq:particle-cutoff-pullback-comparison-map} is bijective, with inverse
\begin{equation}
\begin{aligned}
\fun{\Xi_n^{-1}}{T,a}
&=
T.
\end{aligned}
\label{eq:particle-cutoff-pullback-comparison-inverse}
\end{equation}
The map $\Xi_n^{-1}$ in \eqref{eq:particle-cutoff-pullback-comparison-inverse} preserves the algebraic
operations coordinatewise, so \eqref{eq:particle-cutoff-pullback-comparison-map} gives the isomorphism
\eqref{eq:particle-cutoff-busby-pullback}.
\end{proof}

Define the bilinear multiplicative defect by \[\begin{aligned}
\Omega_n
\colon
\oa{B}_{n-1}\times\oa{B}_{n-1}
\to
\fun{\schattencls}{\sphilbfrak{h}^{\otimes_{\txtsym}n}};\quad
\fun{\Omega_n}{a,b}
&=\fun{s_n}{a}\fun{s_n}{b}-\fun{s_n}{ab}.
\end{aligned}\] For \(a,b,c\in\oa{B}_{n-1}\), associativity has the two expansions \[\begin{aligned}
\rbk{\fun{s_n}{a}\fun{s_n}{b}}\fun{s_n}{c}
&=\rbk{\fun{s_n}{ab}+\fun{\Omega_n}{a,b}}\fun{s_n}{c},\\
\fun{s_n}{a}\rbk{\fun{s_n}{b}\fun{s_n}{c}}
&=\fun{s_n}{a}\rbk{\fun{s_n}{bc}+\fun{\Omega_n}{b,c}}.
\end{aligned}\] Subtracting \(\fun{s_n}{abc}\) gives \[\begin{aligned}
\fun{\Omega_n}{ab,c}+\fun{\Omega_n}{a,b}\fun{s_n}{c}
&=\fun{\Omega_n}{a,bc}+\fun{s_n}{a}\fun{\Omega_n}{b,c}.
\end{aligned}\] Fix \(0\leq p,q\leq n-1\), \(K\in\fun{\schattencls}{\sphilbfrak{h}^{\otimes_{\txtsym}p}}\), and \(L\in\fun{\schattencls}{\sphilbfrak{h}^{\otimes_{\txtsym}q}}\). Define \[\begin{aligned}
a
=
\fun{\widetilde{E}_{p,n-1}}{K},
\quad
b
=
\fun{\widetilde{E}_{q,n-1}}{L}.
\end{aligned}\] The definition of \(s_n\) and the overlap formula \eqref{eq:paper-exact-overlap-product} give \[\begin{aligned}
\fun{s_n}{a}
=
\fun{\widetilde{E}_{p,n}}{K},
\quad
\fun{s_n}{b}
=
\fun{\widetilde{E}_{q,n}}{L},
\quad
\fun{s_n}{a}\fun{s_n}{b}
=
\sum_{r=\max\setone{p,q}}^{\min\setone{n,p+q}}
\fun{\widetilde{E}_{r,n}\circ\widetilde{\mu}_{p,q}^{r}}{K,L}.
\end{aligned}\] The product in \(\oa{B}_{n-1}\) and the uniqueness of its concrete particle-degree coefficients give \[\begin{aligned}
ab
&=
\sum_{r=\max\setone{p,q}}^{\min\setone{n-1,p+q}}
\fun{\widetilde{E}_{r,n-1}\circ\widetilde{\mu}_{p,q}^{r}}{K,L},
\\
\fun{s_n}{ab}
&=
\sum_{r=\max\setone{p,q}}^{\min\setone{n-1,p+q}}
\fun{\widetilde{E}_{r,n}\circ\widetilde{\mu}_{p,q}^{r}}{K,L}.
\end{aligned}\] Subtracting the two finite sums yields \[\begin{aligned}
\fun{\Omega_n}{a,b}
&=
\begin{cases}
\fun{\widetilde{E}_{n,n}}{
\fun{\widetilde{\mu}_{p,q}^{n}}{K,L}},
&
p+q\geq n,
\\
0,
&
p+q<n.
\end{cases}
\end{aligned}\]

\subsubsection{Finite-mode inclusions and compatibility with the Busby map}\label{finite-mode-inclusions-and-compatibility-with-the-busby-map}

In this subsubsection, finite-mode inclusions intertwine the represented quotient maps and Busby maps; the basic realization requires no graph.

Only Proposition \ref{prop:graph-local-busby-naturality} uses the finite-region data in \eqref{eq:paper-main-finite-region-directed-set}--\eqref{eq:paper-finite-region-one-particle-projection}.

\begin{prop}[Finite-mode inclusion and Busby compatibility]\label{prop:concrete-busby-realization}
Fix a nonzero finite-dimensional complex subspace $\sphilbfrak{f}\subset\sphilbfrak{h}$.
Define
$$\begin{aligned}
L_{\sphilbfrak{f},n}
\colon
\opspbddlin{\fun{\spfockgamma_{\txtbsn}}{\sphilbfrak{f}}}^{\gamma}
\to
\oa{B}_n;\quad
\fun{L_{\sphilbfrak{f},n}}{D}
&=
\faadj{W_{\sphilbfrak{f},n}}
\rbk{\bigoplus_{k=0}^{n}
\fun{\pi_{\sphilbfrak{f},k}}{D_k}\otimes1_{\rbk{\sphilbfrak{f}^\perp}^{\otimes_{\txtsym}\rbk{n-k}}}}
W_{\sphilbfrak{f},n}.
\end{aligned}$$
For $D=\seq{D_k}{k\in\monnat}\in\opspbddlin{\fun{\spfockgamma_{\txtbsn}}{\sphilbfrak{f}}}^{\gamma}$, let
$\seq{c_m}{m\in\monnat}$ be its particle-degree coefficients from Section
\ref{sec:paper-concrete-finite-mode-embeddings}, extended by zero to
$\sphilbfrak{h}^{\otimes_{\txtsym}m}$.
Then
\begin{equation}
\begin{aligned}
\fun{L_{\sphilbfrak{f},n}}{D}-\fun{s_n\circ L_{\sphilbfrak{f},n-1}}{D}
&=\fun{\widetilde{E}_{n,n}}{c_n}
\in\fun{\schattencls}{\sphilbfrak{h}^{\otimes_{\txtsym}n}}.
\end{aligned}
\label{eq:local-busby-compact-difference}
\end{equation}
The Busby representatives consequently satisfy
\begin{equation}
\begin{aligned}
\fun{q_n\circ L_{\sphilbfrak{f},n}}{D}
&=\fun{\tau_n\circ L_{\sphilbfrak{f},n-1}}{D}.
\end{aligned}
\label{eq:local-busby-realization}
\end{equation}
\end{prop}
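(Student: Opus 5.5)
The plan is to compute both sides of \eqref{eq:local-busby-compact-difference} through the concrete particle-degree expansion of Section \ref{sec:paper-concrete-finite-mode-embeddings}. That section shows, for the operator $T=\faadj{U_{\sphilbfrak{f}}}\rbk{D\otimes1}U_{\sphilbfrak{f}}$, that its $n$-particle restriction is
\[
T_n=\faadj{W_{\sphilbfrak{f},n}}\rbk{\bigoplus_{k=0}^{n}D_k\otimes1}W_{\sphilbfrak{f},n}=\sum_{m=0}^{n}\fun{\widetilde{E}_{m,n}}{c_m}.
\]
Here the $c_m$ are produced by the recursion $c_m=D_m-\sum_{j<m}\fun{E_{\sphilbfrak{f},j,m}}{c_j}$ and extended by zero. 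The recursion depends only on $D$ and not on $n$, and $\fun{\pi_{\sphilbfrak{f},k}}{D_k}$ is just $D_k$ under the Fock block identification. So the first step is to read off the two formulas
\[
\fun{L_{\sphilbfrak{f},n}}{D}=\sum_{m=0}^{n}\fun{\widetilde{E}_{m,n}}{c_m},
\qquad
\fun{L_{\sphilbfrak{f},n-1}}{D}=\sum_{m=0}^{n-1}\fun{\widetilde{E}_{m,n-1}}{c_m},
\]
with the same coefficient sequence in both.

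Second, I apply the section $s_n$. By definition, $s_n$ uses the unique concrete coefficients of its argument in $\oa{B}_{n-1}$, given by \eqref{eq:paper-concrete-particle-coefficient-decomposition} and Lemma \ref{lem:paper-concrete-finite-particle-coefficient-uniqueness}. The second formula above already exhibits these coefficients: each $c_m$ with $m\leq n-1$ is compact, since it acts on the finite-dimensional space $\sphilbfrak{f}^{\otimes_{\txtsym}m}$ and is extended by zero. Uniqueness therefore identifies the coefficients of $\fun{L_{\sphilbfrak{f},n-1}}{D}$ with $\rbk{c_0,\ldots,c_{n-1}}$, and so
\[
\fun{s_n\circ L_{\sphilbfrak{f},n-1}}{D}=\sum_{m=0}^{n-1}\fun{\widetilde{E}_{m,n}}{c_m}.
\]
Subtracting this from the first formula leaves exactly $\fun{\widetilde{E}_{n,n}}{c_n}$. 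This operator equals $c_n$ and is compact (indeed finite rank) on $\sphilbfrak{h}^{\otimes_{\txtsym}n}$, which proves \eqref{eq:local-busby-compact-difference}.

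Third, applying $q_n$ kills the compact difference. Since $\tau_n=q_n\circ s_n$ by the computation following Corollary \ref{cor:particle-cutoff-busby-pullback}, this gives \eqref{eq:local-busby-realization}.

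The main obstacle is the first step, namely justifying that the block identity $T_n=\sum_{m\leq n}\fun{\widetilde{E}_{m,n}}{c_m}$ holds with $n$-independent coefficients. This rests on the restriction formula for $\fun{\widetilde{E}_{m,n}}{c_m}$ on the block $\sphilbfrak{f}^{\otimes_{\txtsym}k}\otimes\rbk{\sphilbfrak{f}^\perp}^{\otimes_{\txtsym}\rbk{n-k}}$, which is $\fun{E_{\sphilbfrak{f},m,k}}{c_m}\otimes1$ when $m\leq k$ and $0$ otherwise. I will cite this formula from Section \ref{sec:paper-concrete-finite-mode-embeddings} rather than rederive it. Together with the recursion, it gives $\sum_{m\leq k}\fun{E_{\sphilbfrak{f},m,k}}{c_m}=D_k$ on each block. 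That identity involves only $k$, so the same $c_m$ serve for every cutoff $n\geq k$.
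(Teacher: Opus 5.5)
Your proposal is correct and follows the paper's proof step for step. Both arguments read off $L_{\mathfrak{f},n}(D)=\sum_{m=0}^{n}\widetilde{E}_{m,n}(c_m)$ and its $(n-1)$ analogue from Section~\ref{sec:paper-concrete-finite-mode-embeddings}, apply $s_n$ to obtain $\sum_{m=0}^{n-1}\widetilde{E}_{m,n}(c_m)$, subtract to leave the finite-rank operator $\widetilde{E}_{n,n}(c_n)=c_n$, and apply $q_n$ using $\tau_n=q_n\circ s_n$. Your explicit appeal to Lemma~\ref{lem:paper-concrete-finite-particle-coefficient-uniqueness}, to justify that $s_n$ acts on exactly the coefficients $(c_0,\ldots,c_{n-1})$, is a small but worthwhile sharpening of the paper's one-line ``applying the definition of $s_n$''.
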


\begin{proof}
The particle-coefficient expansion established in Section \ref{sec:paper-concrete-finite-mode-embeddings} reads
$$\begin{aligned}
\fun{L_{\sphilbfrak{f},n}}{D}
&=\sum_{m=0}^{n}\fun{\widetilde{E}_{m,n}}{c_m},
&
\fun{L_{\sphilbfrak{f},n-1}}{D}
&=\sum_{m=0}^{n-1}\fun{\widetilde{E}_{m,n-1}}{c_m}.
\end{aligned}$$
Applying the definition of $s_n$ to the particle-coefficient expansion of
$\fun{L_{\sphilbfrak{f},n-1}}{D}$ gives
$$\begin{aligned}
\fun{s_n\circ L_{\sphilbfrak{f},n-1}}{D}
&=\sum_{m=0}^{n-1}\fun{\widetilde{E}_{m,n}}{c_m}.
\end{aligned}$$
Subtraction proves \eqref{eq:local-busby-compact-difference}.
Since $c_n$ is the zero extension of an operator
on the finite-dimensional space $\sphilbfrak{f}^{\otimes_{\txtsym}n}$, it has finite rank and is compact.
Applying $q_n$
and using $\tau_n=q_n\circ s_n$ proves \eqref{eq:local-busby-realization}.
\end{proof}

\begin{prop}[Compatibility with inclusions of finite graph regions]\label{prop:graph-local-busby-naturality}
Assume the finite-region data in
\eqref{eq:paper-main-finite-region-directed-set}--\eqref{eq:paper-finite-region-one-particle-projection}, and fix
nonempty finite regions
$\Lambda\subset \Lambda''\subset\Gamma$.
For the map $i_{\Lambda,\Lambda''}$ in \eqref{eq:concrete-local-inclusion},
\begin{equation}
\begin{aligned}
L_{\sphilbfrak{h}_{\Lambda''},n}\circ i_{\Lambda,\Lambda''}
&=L_{\sphilbfrak{h}_{\Lambda},n}.
\end{aligned}
\label{eq:local-busby-naturality}
\end{equation}
\end{prop}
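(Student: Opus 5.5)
The identity \eqref{eq:local-busby-naturality} will be checked sector by sector on $\sphilbfrak{h}^{\otimes_{\txtsym}n}$, using the explicit tensor-factor formulas for both sides. Fix $D=\seq{D_k}{k\in\monnat}$. In what follows, the elements of $\opspbddlin{\fun{\spfockgamma_{\txtbsn}}{\sphilbfrak{h}_{\Lambda}}}^{\gamma}$ are identified with $\oa{M}(\oa{K}(\sphilbfrak{h}_{\Lambda})^{\gamma})$ through $\diag_{\sphilbfrak{h}_{\Lambda}}$ of Proposition \ref{prop:finite-mode-product}, and the same is done for $\Lambda''$. With this identification the composite in \eqref{eq:local-busby-naturality} is meaningful.

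The first step is to unfold the left side. By \eqref{eq:concrete-local-inclusion}, the $m$th block of $\fun{i_{\Lambda,\Lambda''}}{D}$, for $m\leq n$, is
\[
\faadj{W_{\Lambda,\Lambda'',m}}\rbk{\bigoplus_{k=0}^{m}\fun{\pi_{\sphilbfrak{h}_{\Lambda},k}}{D_k}\otimes1}W_{\Lambda,\Lambda'',m}.
\]
Inserting this block into the definition of $L_{\sphilbfrak{h}_{\Lambda''},n}$ gives $\faadj{W_{\sphilbfrak{h}_{\Lambda''},n}}\rbk{\bigoplus_{m}(\cdots)\otimes1}W_{\sphilbfrak{h}_{\Lambda''},n}$. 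This operator acts on $\sphilbfrak{h}^{\otimes_{\txtsym}n}$ through the three-fold orthogonal decomposition
\[
\sphilbfrak{h}=\sphilbfrak{h}_{\Lambda}\oplus\sphilbfrak{h}_{\Lambda''\setminus\Lambda}\oplus\sphilbfrak{h}_{\Gamma\setminus\Lambda''}.
\]
On the summand with $i$, $j$, $k$ particles in the three pieces, it acts as $\fun{\pi_{\sphilbfrak{h}_{\Lambda},i}}{D_i}\otimes1\otimes1$. The decomposition unitary on $\sphilbfrak{h}^{\otimes_{\txtsym}n}$ is the composite $(W_{\Lambda,\Lambda'',i+j}\otimes1)W_{\sphilbfrak{h}_{\Lambda''},n}$.

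The second step is the analogous unfolding of the right side. $\fun{L_{\sphilbfrak{h}_{\Lambda},n}}{D}$ acts as $\fun{\pi_{\sphilbfrak{h}_{\Lambda},i}}{D_i}\otimes1$ relative to $\sphilbfrak{h}=\sphilbfrak{h}_{\Lambda}\oplus\sphilbfrak{h}_{\Lambda}^{\perp}$. Its decomposition unitary is $W_{\sphilbfrak{h}_{\Lambda},n}$, followed by $1\otimes W$ for the splitting $\sphilbfrak{h}_{\Lambda}^{\perp}=\sphilbfrak{h}_{\Lambda''\setminus\Lambda}\oplus\sphilbfrak{h}_{\Gamma\setminus\Lambda''}$. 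On the second factor this operator acts as the identity, so refining that factor does not change it.

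The one genuine point is that the two iterated decomposition unitaries coincide. Both send an elementary symmetric tensor $P_n(\xi\otimes\eta\otimes\zeta)$, with $\xi,\eta,\zeta$ in the three symmetric powers, to $\xi\otimes\eta\otimes\zeta$ up to normalization. The normalizations agree by the multinomial identity
\[
\binom{n}{i+j}^{1/2}\binom{i+j}{i}^{1/2}=\binom{n}{i}^{1/2}\binom{n-i}{j}^{1/2},
\]
exactly as in the proof of Proposition \ref{prop:paper-tensor-factor-localization-identities}. This is the step I would check most carefully, confirming the convention $\faadj{W}(\xi\otimes\eta)=\binom{n}{k}^{1/2}P_n(\xi\otimes\eta)$ at each level. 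Once the unitaries agree, both sides equal the same direct sum conjugated by the same unitary, which gives \eqref{eq:local-busby-naturality}. As a cross-check, one can apply $\varpi_n^{-1}$ and invoke the identity $i_{\Lambda'',\Gamma}\circ i_{\Lambda,\Lambda''}=i_{\Lambda,\Gamma}$ from Proposition \ref{prop:concrete-spatial-localization-system}, together with $\varpi_n\circ p_n\circ i_{\Lambda,\Gamma}=L_{\sphilbfrak{h}_{\Lambda},n}\circ\diag_{\sphilbfrak{h}_{\Lambda}}$ from \eqref{eq:paper-global-local-inclusion-block-actions}.
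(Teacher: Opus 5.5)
Your proposal is correct and follows essentially the paper's proof. The paper conjugates both sides by the single three-fold decomposition unitary for $\mathfrak{h}=\mathfrak{h}_{\Lambda}\oplus\mathfrak{h}_{\Lambda''\setminus\Lambda}\oplus\mathfrak{h}_{\Lambda''}^{\perp}$, whose inverse has normalization $\sqrt{n!/(i!\,j!\,k!)}$, and observes that both sides become $\bigoplus_{i+j+k=n}\pi_{\mathfrak{h}_{\Lambda},i}(D_i)\otimes 1\otimes 1$. Your multinomial check that the two iterated decomposition unitaries coincide, and your identification of the domains through $\diag$, make explicit two points that the paper leaves implicit.
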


\begin{proof}
Use the orthogonal decomposition
$$\begin{aligned}
\sphilbfrak{h}
&=\sphilbfrak{h}_{\Lambda}\oplus \sphilbfrak{h}_{\Lambda''\setminus \Lambda}\oplus \sphilbfrak{h}_{\Lambda''}^\perp.
\end{aligned}$$
For $i,j,k\in\monnat$ satisfying $i+j+k=n$, fix
$\xi\in\sphilbfrak{h}_{\Lambda}^{\otimes_{\txtsym}i}$,
$\eta\in\sphilbfrak{h}_{\Lambda''\setminus \Lambda}^{\otimes_{\txtsym}j}$, and
$\zeta\in\rbk{\sphilbfrak{h}_{\Lambda''}^\perp}^{\otimes_{\txtsym}k}$.
The decomposition unitary and its action on the algebraic direct sum are
$$\begin{aligned}
W_{\Lambda,\Lambda'',n}^{(3)}
\colon
\sphilbfrak{h}^{\otimes_{\txtsym}n}
\to
\bigoplus_{\substack{i,j,k\in\monnat\\i+j+k=n}}
\sphilbfrak{h}_{\Lambda}^{\otimes_{\txtsym}i}\otimes \sphilbfrak{h}_{\Lambda''\setminus \Lambda}^{\otimes_{\txtsym}j}
\otimes\rbk{\sphilbfrak{h}_{\Lambda''}^\perp}^{\otimes_{\txtsym}k}.
\end{aligned}$$
Its inverse acts by
$$\begin{aligned}
\fun{\faadjrbk{W_{\Lambda,\Lambda'',n}^{(3)}}}{\xi\otimes\eta\otimes\zeta}
&=\sqrt{\frac{n!}{i!j!k!}}P_n\rbk{\xi\otimes\eta\otimes\zeta}.
\end{aligned}$$
Equation \eqref{eq:concrete-local-inclusion} says that
$\fun{i_{\Lambda,\Lambda''}}{D}$ acts as the identity on $\sphilbfrak{h}_{\Lambda''\setminus \Lambda}$.
Therefore
$$\begin{aligned}
W_{\Lambda,\Lambda'',n}^{(3)}\fun{L_{\sphilbfrak{h}_{\Lambda''},n}\circ i_{\Lambda,\Lambda''}}{D}\faadjrbk{W_{\Lambda,\Lambda'',n}^{(3)}}
&=
\bigoplus_{\substack{i,j,k\in\monnat\\i+j+k=n}}
\fun{\pi_{\sphilbfrak{h}_{\Lambda},i}}{D_i}\otimes1_{\sphilbfrak{h}_{\Lambda''\setminus \Lambda}^{\otimes_{\txtsym}j}}
\otimes1_{\rbk{\sphilbfrak{h}_{\Lambda''}^\perp}^{\otimes_{\txtsym}k}},\\
W_{\Lambda,\Lambda'',n}^{(3)}\fun{L_{\sphilbfrak{h}_{\Lambda},n}}{D}\faadjrbk{W_{\Lambda,\Lambda'',n}^{(3)}}
&=
\bigoplus_{\substack{i,j,k\in\monnat\\i+j+k=n}}
\fun{\pi_{\sphilbfrak{h}_{\Lambda},i}}{D_i}\otimes1_{\sphilbfrak{h}_{\Lambda''\setminus \Lambda}^{\otimes_{\txtsym}j}}
\otimes1_{\rbk{\sphilbfrak{h}_{\Lambda''}^\perp}^{\otimes_{\txtsym}k}}.
\end{aligned}$$
The right sides coincide, and $W_{\Lambda,\Lambda'',n}^{(3)}$ is unitary.
This proves
\eqref{eq:local-busby-naturality}.
\end{proof}

Equation \eqref{eq:local-busby-realization} determines the finite-mode restriction for a general \(\sphilbfrak{h}\), while \eqref{eq:local-busby-naturality} adds compatibility with inclusions of finite graph regions.

\bibliography{myref.bib}

\end{document}